\documentclass{article}
\usepackage{graphicx} %
\usepackage{amsthm,amsfonts,amsmath,amssymb}
\usepackage{thmtools}
\usepackage{dsfont}
\usepackage{bbm,bm}
\usepackage{float}
\usepackage{xcolor}
\usepackage{graphicx} 
\usepackage{tikz}
\usepackage{ragged2e}
\usepackage{xspace}
\usepackage{mdframed}
\usepackage{tabularx}
\usepackage{booktabs}
\usepackage[
a4paper,
top=1in,
bottom=1in,
left=1in,
right=1in]{geometry}
\usepackage{comment}
\usepackage{svg}

\usepackage[T1]{fontenc}

\let\epsilon\varepsilon

\usepackage{algorithm}
\usepackage{algpseudocode}
\usepackage{thmtools}

\definecolor{toc}{RGB}{13,55,174}
\definecolor{myPurple}{RGB}{175,0,124}

\usepackage[hidelinks]{hyperref}	
\hypersetup{
    colorlinks=true,
    citecolor=myPurple,
    linkcolor=toc,
    urlcolor=toc
}

\usepackage[shortlabels]{enumitem}		%
\usepackage{fontawesome5}   %
\usepackage{adjustbox}
\allowdisplaybreaks			%
\usepackage{subcaption}     %
\usepackage{mathtools}
\usepackage{relsize}
\usepackage{nicefrac}

\usepackage[symbol*]{footmisc}

\usepackage{thm-restate}
\newtheorem{theorem}{Theorem}[section]
\newtheorem{lemma}[theorem]{Lemma}
\newtheorem{corollary}[theorem]{Corollary}
\newtheorem{claim}[theorem]{Claim}

\newtheorem{property}[theorem]{Property}
\newtheorem{remark}[theorem]{Remark}
\newtheorem{definition}[theorem]{Definition}

\newtheorem{observation}[theorem]{Observation}

\newcommand{\poly}{\textnormal{poly}}

\newcommand{\OPT}{\mathsf{OPT}}
\newcommand{\opt}{\mathsf{opt}}

\newcommand{\cost}{\textnormal{cost}}

\newcommand{\reals}{\mathbb R}

\newcommand{\eps}{\epsilon}

\renewcommand{\AA}{\mathbb{A}}
\newcommand{\BB}{\mathbb{B}}

\newcommand{\FF}{\mathbb{F}}

\renewcommand{\SS}{\mathbb{S}}

\newcommand{\ind}{\mathds{1}}

\def\<{\langle}
\def\>{\rangle}

\providecommand*{\Paren}[1]{\left(#1\right)}

\newcommand{\Set}[1]{\left\{#1\right\}}

\newcommand{\specialevent}{\textsc{freezing}}
\newcommand{\Dgood}{D^\star_{\mathrm{good}}}
\newcommand{\klast}{k^{\mathrm{last}}}

\newcommand{\kfirst}{k^{\mathrm{first}}}
\newcommand{\dsum}{d_{\mathrm{sum}}}

\newcommand{\completesol}{\textrm{CompleteSolution}\xspace}

\newcommand{\completesequence}{\textrm{CompleteSequence}\xspace}

\newcommand{\clcost}{\text{cost}}

\newcommand{\clients}{D}

\newcommand{\facilities}{\FF}
\newcommand{\sopt}{\mathsf{opt}}

\newcommand{\Maxdist}{M}

\newcommand{\Sf}{\SS_{\mathrm{free}}}
\newcommand{\Sr}{\SS_{\mathrm{reg}}}

\newcommand{\ones}[1]{\text{Ones}(#1)}
\newcommand{\hg}{\hat{g}}

\newcommand{\Mrho}{2-(\gamma-1)^2}

\newcommand{\Mcoeff}{\gamma}

\newcommand{\Mcoeffpone}{1+\gamma}
\newcommand{\Mcoeffmone}{\varepsilon}

\newcommand{\Mtone}{\tau}
\newcommand{\MRtone}{\bar\tau}
\newcommand{\Mttwo}{1-\theta}
\newcommand{\MRttwo}{\theta}
\newcommand{\Mfrac}{\frac{\varepsilon}{\tau}}

\title{A Better-than-$2$ Approximation for k-Median\\ Part I: The Power of General Bids}
\author{
Rares-Darius Buhai\thanks{EPFL, Lausanne and supported by the Swiss State Secretariat for Education, Research and Innovation (SERI) under contract number MB22.00054.}
\and
Vincent Cohen-Addad\thanks{Google Research, New York NY.}
\and
Henry Fleischmann\thanks{Computer Science Department, Carnegie Mellon
  University, Pittsburgh PA. Work supported by a National Science Foundation Graduate
Research Fellowship Program under Grant No(s).\ DGE2140739 and DGE2631988.}
\and
Anupam Gupta\thanks{Computer Science Department, Courant Institute of
  Mathematical Sciences, New York University, New York, and Google Research. Work supported in part by NSF awards CCF-2422926 and CCF-2608359.}
\and 
Antoni Jub\'es Monforte\thanks{EPFL, Lausanne}
\and 
Madhusudhan Reddy Pittu\thanks{Computer Science Department, Courant Institute of
  Mathematical Sciences, New York University, New York, and  H.\ Milton
  Stewart School of Industrial and Systems Engineering, Georgia Tech,
  Atlanta GA.}
\and 
Miltiadis Stouras\footnotemark[1]
\and 
Ola Svensson\footnotemark[1]
\and 
Lukas Vogl\footnotemark[1]
}
\date{}

\begin{document}
\pagenumbering{gobble}

\maketitle

\newcommand{\pos}[1]{\left[#1\right]^+} %
\newcommand{\Ball}[2]{B(#1,#2)}         %
\newcommand{\dist}{d}                   %

\begin{abstract}
The Greedy algorithm of Jain, Mahdian, Markakis, Saberi, and Vazirani was the best Lagrangian-multiplier-preserving (LMP) approximation algorithm for facility location, and thus the engine behind the best $k$-median guarantees, for over two decades. Its $2$-LMP guarantee was only recently broken by a small amount ($(2-\eta)$-LMP where $\eta \approx 2.25 \cdot 10^{-7}$) through local-search post-processing. That analysis relies on the structure of worst-case instances, and it appears difficult to obtain substantial further improvements along those lines.

We take a different approach and change the algorithm itself: we introduce \emph{general bidding functions}, replacing the classical linear bid with a richer family of rules and prove a polynomial-time $1.95$-LMP approximation, a large improvement over the long-standing factor of~$2$.

We further show that these generalized bidding rules can be made smooth (log-adaptive), building on the recent framework developed for the classical Greedy algorithm by Cohen-Addad, Grandoni, Lee, Schwiegelshohn, and Svensson. This yields a polynomial-time $1.95$-approximation that opens $k+O_\varepsilon(\log n)$ centers. Combined with a $(1.89+\varepsilon)$-approximation algorithm for stable instances in our companion work \cite{stableinstances}, we achieve a better-than-2 approximation for the $k$-median problem in polynomial-time.
\end{abstract}

\newpage

\tableofcontents

\newpage

\pagenumbering{arabic}

\section{Introduction}
\label{sec:introduction}

The metric $k$-median problem is the following classical clustering problem: Given a set $\clients$ of $n$ clients, a set $\facilities$ of facilities, metric distances $\dist(\cdot, \cdot)$ on $\clients \cup \facilities$, and an integer parameter $k\ge 1$, the goal is to choose a set $\SS \subseteq \facilities$ of $k$ open facilities minimizing
\[
    \clcost(\SS):=\sum_{j\in \clients} d(j,\SS)= \sum_{j\in \clients} \min_{i\in \SS} \dist(i,j)\,,
\]
where we use $d(j,\SS)$ as a shorthand for $\min_{i\in \SS} \dist(i,j)$. We further denote by $\sopt$ the cost of an optimal solution. 
Metric $k$-median is well-known to be hard to approximate within a factor better than $1+\nicefrac{2}{e}$~\cite{GuK99,JMS02,JainMMSV03}, and achieving matching algorithmic guarantees has been the subject of intense study over the last two decades. 

Most work has focused on two algorithmic paradigms. On the one hand, the basic local-search algorithm of Arya, Garg, Khandekar, Meyerson, Munagala, and Pandit (swap $O(1)$ centers while improving the solution) gives a $(3+\eps)$-approximation, and for a long time this was the best known polynomial-time guarantee \cite{AryaGKMMP04}. 
On the other hand, the influential primal-dual/Lagrangian framework initiated by Jain and Vazirani~\cite{JaV01} and sharpened by Jain, Mahdian, Markakis, Saberi and Vazirani~\cite{JMS02,JainMMSV03} reduces $k$-median to the task of devising Lagrangian-multiplier-preserving (LMP) approximation algorithms for uncapacitated facility location. 

Here, one first considers facility location with a uniform opening cost $f$, where a solution $\SS\subseteq \facilities$ pays $\clcost(\SS)+f|\SS|$.
Since a higher opening cost $f$ leads to fewer open facilities and a lower cost leads to more, one can binary search on $f$ to find two consecutive solutions, one opening fewer than $k$ facilities and one opening more, whose convex combination opens exactly $k$ facilities in a fractional sense.
If the facility location algorithm used is an $\alpha_{\mathrm{LMP}}$-LMP approximation, this already gives a solution of cost at most $\alpha_{\mathrm{LMP}}\cdot\opt$ that opens $k$ centers \emph{in expectation}.
A second step, bi-point rounding, is then needed to obtain a feasible $k$-median solution with at most $k$ open centers, and this incurs an additional multiplicative factor of $2$, as shown by~\cite{JaV01}.
The overall approximation factor is therefore $2\alpha_{\mathrm{LMP}}$. In this language, the original Jain, Vazirani framework yields a factor $6$ algorithm, while the Greedy algorithm of~\cite{JMS02,JainMMSV03} improves the LMP stage to $2$ and hence gives a factor $4$ algorithm \cite{JaV01,JMS02,JainMMSV03}. 

Li and Svensson~\cite{LiS16} improved the bi-point rounding step by allowing the algorithm to open $k+O(1)$ centers. They also showed that any $\alpha$-approximation opening $k+c$ centers can be transformed into an $(\alpha+\eps)$-approximation opening exactly $k$ centers in time $n^{O(c/\eps)}$. This sparked a sequence of refinements of bi-point rounding algorithms opening $k+ O(1)$ centers~\cite{ByrkaPRST17,GowdaPST23}; combined with the small but important improvement to a $(2-\eta)$-LMP approximation, where $\eta = 2.25 \cdot 10^{-7}$ by Cohen-Addad, Grandoni, Lee, and Schwiegelshohn \cite{Cohen-AddadLS23}, this line eventually led to a $2.613$-approximation \cite{Cohen-AddadLS23,GowdaPST23}.

Recently, Cohen-Addad, Grandoni, Lee, Schwiegelshohn, and Svensson~\cite{MR4928455-Cohen25} achieved a significantly improved $(2+\eps)$-approximation for metric $k$-median (for every $\eps >0$). 
Their result has two ingredients. First, they give a ``smooth'' (so called log-adaptive) implementation of the Greedy algorithm that opens at most $k+O(\log n/\eps^2)$ centers while keeping the approximation ratio at $(2+\eps)$.
Second, they complement this with a polynomial-time $(2+\eps)$-approximation for so-called $(\zeta/\log n)$-stable instances, for every $\eps,\zeta>0$, which is exactly the ingredient needed to turn the quasi-polynomial algorithm into a polynomial-time one. In a separate development, Byrka, Guo, Hu, Li, Wan and Wang \cite{byrka2026kclusteringiterativerandomizedrounding} came up with a novel approach based on an iterative rounding procedure for the standard LP relaxation to reproduce the $(2+\varepsilon)$-approximation ratio.
Thus, by now the gap between what we know how to achieve with $k$ centers in expectation and with at most $k$ centers is essentially closed at factor~$2$. This raises the question whether the factor~$2$ is a natural barrier or if it is possible to close the gap between $2$ and $1+\nicefrac{2}{e}$? We answer the first part of the question and show that there is a $1.95$-approximation algorithm for $k$-median.

\subsection{Our Contribution}
Our main contribution is an improved approximation algorithm for $k$-median with an approximation ratio below $2$:
\begin{theorem}
  \label{thm:kmed-approx}
  There is a randomized polynomial-time $1.95$-approximation algorithm for $k$-median. 
\end{theorem}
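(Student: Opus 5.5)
The plan follows the Cohen-Addad--Grandoni--Lee--Schwiegelshohn--Svensson template: split every instance into a \emph{stable} case and a \emph{non-stable} case and run the better of two algorithms. First I would establish the LMP engine. I replace the linear bid $[\alpha_j - d(i,j)]^+$ in the JMMSV Greedy algorithm by a general, nondecreasing bidding function of $\alpha_j$ and $d(i,j)$. I then bound the ratio through a factor-revealing program. Its constraints are the usual triangle-inequality and no-overbidding constraints at each time a client freezes, now expressed through the bid function. The aim is to show that a suitable parametrized bid family (with parameters such as $\gamma,\tau,\theta$ from the preamble macros) gives value at most $1.95$. The analysis fixes an optimal facility $i^\star$ with its star of clients ordered by freezing time. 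It shows $\sum_j \alpha_j \le 1.95\,(f+\sum_j d(i^\star,j))$, charging each client to the moment it gets frozen. This yields a polynomial-time $1.95$-LMP algorithm for facility location.

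Second, I would make this algorithm \emph{log-adaptive}, as in the CGLSS framework. Opening costs are raised continuously, and the solution changes by only $O_\varepsilon(1)$ facilities per multiplicative step in $f$. Binary search over $f$ then gives adjacent solutions whose sizes differ by $O_\varepsilon(\log n)$, instead of needing bi-point rounding. The outcome is a $1.95$-approximation opening $k+O_\varepsilon(\log n)$ centers.

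Third, I would remove the extra centers. If the instance is not $(\zeta/\log n)$-stable, then deleting $O_\varepsilon(\log n)$ centers from a near-optimal $(k+O(\log n))$-solution costs only a $(1+O(\zeta))$ factor, by the definition of stability applied repeatedly or by a greedy-deletion argument. This gives $1.95(1+O(\zeta))$ with exactly $k$ centers; the bids are tuned to leave slack below $1.95$, so the total stays within $1.95$. If the instance is stable, I invoke the companion $(1.89+\varepsilon)$-approximation for stable instances~\cite{stableinstances}. Returning the cheaper of the two outputs proves the theorem. Randomness enters through the stable-instance algorithm and possibly through a random threshold in the log-adaptive step.

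The main obstacle is the first step. One must find a bid function family for which the factor-revealing program has value below $2$, and then prove this bound analytically, or by a finite certified computation over a discretization, for all star sizes. The difficulty is that nonlinear bids break the clean algebraic dual-fitting of JMMSV. I would handle it by splitting clients into ``free'' and ``regular'' regimes (cf.\ $\Sf,\Sr$), bounding each with piecewise-linear inequalities, and checking the resulting finite LP numerically. A secondary difficulty is proving that the log-adaptive smoothing preserves the general-bid LMP guarantee up to $1+\varepsilon$.
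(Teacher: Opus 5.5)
Your three-step architecture (an LMP engine, a log-adaptive version opening $k+O(\log n)$ centers, and a stable/non-stable split that uses the companion $1.89$ algorithm) matches the paper, but two of the steps fail as you set them up. \textbf{First, the LMP analysis cannot charge $\sum_j\alpha_j$.} Put your target in LMP form: the algorithm runs with opening cost $\hat f=1.95f$, so the target reads $\sum_j\alpha_j-\hat f\le 1.95\sum_j d(i^\star,j)$. This is false for the very bids that work. Take $\beta(\alpha,d)=[\alpha-\gamma d]^+$ with any $\gamma\in[1,2)$ and the star of Figure~\ref{fig:alpha-hard-example}: $n-1$ clients sit on $i^\star$ with $\hat f=2(n-1)$, and one client is at distance $1$ from $i^\star$ and from a second facility of cost $(2-\gamma)(1-\frac1n)$ that opens first. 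Then $\sum_j\alpha_j-\hat f=\alpha_n=2-\frac{2-\gamma}{n}\to 2$, while $\sum_j d(i^\star,j)=1$. With non-linear bids the payments no longer telescope to $\alpha_j$, and the whole gain lives in that gap: client $n$ actually pays only $\beta(\alpha_n,1)+1\le 3-\gamma$. The missing idea is to charge the true cost $\cost_j=\beta(\alpha^\star_j,d(i_0,j))+\sum_\ell\beta(d(i_\ell,j),d(i_{\ell+1},j))+d(i_t,j)$. You then bound it via superadditivity by $\sum_k\beta(r_{j,k},r_{j,k+1})+r_{j,n+1}$ and make that the objective of the factor-revealing program. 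That program is a maximum over $2^{\Theta(n^2)}$ LPs, one per pattern of positive bids, and it must be bounded for \emph{every} $n$. A numerical check of finitely many LPs does not give this by itself. The paper instead constructs, for every $n$ and every pattern, an explicit dual solution of value $2-(\gamma-1)^2$, driven by the first-positive-bid times with advice found by an intermediate-value argument, and then takes $\gamma=1.224$. (Also, $\Sf,\Sr$ are the free and regular \emph{facilities} of the merging procedure, not client regimes.)

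\textbf{Second, the passage to exactly $k$ centers is wrong.} Non-stability, even in the extreme form $\sopt_{k-c}=\sopt_k$, does not let you prune $c$ centers from a \emph{given} $(k+c)$-center solution at a $(1+O(\zeta))$ loss. For example, take $k=3c$ and $2c$ far-apart clusters of $m$ clients at distance $1$ from a hub, where each half of a cluster is also at distance $1.9$ from its own side facility (facilities only at hubs and side positions). Then $\sopt_{k-c}=\sopt_k=2cm$, and the $4c$ side facilities cost $1.9\,\sopt_k$, yet every $k$ of them cost at least $2.4\,\sopt_k$. Your dichotomy also misses instances that are unstable at $k$ but stable at some $k'<k$. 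The correct step is as follows. With $c=O(\log n/\eps^2)$, run the pseudo-approximation with budget $k-c$; it returns at most $k$ centers of cost at most $(1.9499+\eps)\sopt_{k-c}$. If $\sopt_{k-c}\le(1+\eps)\sopt_k$, you are done. Otherwise, telescoping $\sopt_{k-c}/\sopt_k$ gives some $k'\in(k-c,k]$ with $\sopt_{k'}\le(1+\eps)\sopt_k$ and $\sopt_{k'-1}\ge(1+\Omega(\eps/c))\sopt_{k'}$. So run the algorithm of Theorem~\ref{thm:stable-result} for every $k'\in[k-c,k]$; this is polynomial, since the stability parameter is $\Omega(\eps/c)$ and the running time is $n^{\poly(1/\eps)}$. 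Returning the cheapest output gives $1.95$ for small $\eps$.

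Your step~2 is also misdescribed. It is the dual time $\theta$, not $f$, that moves geometrically. Greedy outputs are not stable under small changes of $f$, which is why the exact-$k$ solution comes from merging two runs at $\eta$-close opening costs, at the price of $O(1)$ free facilities per $\theta$-phase. Finally, transferring the factor-revealing bound to the log-adaptive algorithm needs the paper's two-stage openings with differently scaled metrics. These force any two candidate facilities to be either $\delta\theta$-close or $\theta/\gamma$-far.
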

To explain how we obtain this result, we first outline the two-pronged
framework from \cite{MR4928455-Cohen25} giving the $(2+\eps)$-approximation;
this framework was first used by \cite{LiS16}. Given a metric space
$(D \cup \FF, d)$, define $\sopt_\ell$ to be the optimal cost of
$\ell$-median on this instance. Naturally, fewer centers result in a
larger cost, and hence $\sopt_k \leq \sopt_{k-c}$ for any $c \geq 0$.
The framework is the following:
\begin{enumerate}[label=(\roman*)]
\item \label{outline:1} Use a greedy algorithm (or some other
  approach) to give a $(2+\eps)$-pseudo-approximation for $k$-median:
  i.e., instead of returning a solution $\SS$ with $k$ centers having
  cost $(2+\eps)\cdot \sopt_k$, return a pseudo-solution---a set
  $\SS_1$ with $k$ centers, which has a cost at most
  $(2+\eps) \cdot \sopt_{k - c}$, where $c = O((\log n)/\poly(\eps))$. Equivalently, we are interested in an approximation algorithm that finds a solution $\SS_2$ with $k + c$ centers and cost at most $(2 + \eps) \sopt_k$. Indeed, invoking such an algorithm with $k^\prime = k-c$ will return a pseudo-approximation.

\item \label{outline:2} Next, observe that if
  $\sopt_{k - c} \leq (1+\eps) \, \sopt_k$, then the solution $\SS_1$
  has cost $(2+O(\eps)) \, \sopt_k$, and we have the desired
  approximation. If not, then reducing the number of centers from $k$
  to $k-c$ causes the optimal value to increase by at least a factor
  of $(1+\eps)$. In other words, there exists $k' \in [k-c, k]$ such
  that (a)~$\sopt_{k'} \leq (1+\eps)\sopt_k$, and moreover,
  (b)~$\sopt_{k'-1} \geq (1+\Omega(\eps/c))\sopt_{k'}$. I.e., the
  optimal cost increases by at least a $(1+\beta)$-factor when one
  reduces the number of available centers by $1$---from $k'$ to
  $k'-1$. (Here $\beta = \Omega(\eps/c)$.)  Such an instance of
  $k'$-median is called a \emph{$\beta$-stable} instance, and stable
  instances can allow for better-than-worst-case
  approximations given their structural properties~\cite{ABS10, Cohen-AddadS17}.
  
\item \label{outline:3} The final ingredient is a
  $(2+\eps)$-approximation algorithm for $\beta$-stable instances
  running in time $\exp((\beta \cdot \poly(\eps))^{-1})\,
  \poly(n)$. Using this, one gets a $k'$-median solution $\SS_2$
  (which for $k' \leq k$ is also a $k$-median solution) with cost
  $(2+\eps)\,\sopt_{k'} \leq (2+O(\eps))\,\sopt_{k}$ in time
  $n^{\poly(1/\eps)}$.
\end{enumerate}
Note that step~\ref{outline:2} is completely generic; however, both
the other steps~\ref{outline:1} and \ref{outline:3} lose a factor of
$(2+\eps)$. To do better, we need to improve both these steps.
In this paper we improve upon step \ref{outline:1} and show that we can find a $(1.9499+\eps)$-approximation that opens $k + c$ centers.
\begin{theorem}[See Theorem~\ref{thm:k-median} for full version]
\label{thm:k-median-informal}
For any $\epsilon > 0$, there exists a polynomial-time algorithm for $k$-median that outputs a solution with at most $k + O(\log n / \eps^2)$ centers and cost at most $(1.9499+\eps) \cdot \sopt$.
\end{theorem}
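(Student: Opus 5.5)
The plan has two layers. The first is a new LMP algorithm for facility location with uniform opening cost $f$. The second makes that algorithm smooth so that the Lagrangian binary search can be carried out with only $O(\log n/\eps^2)$ extra centers, following the log-adaptive framework of \cite{MR4928455-Cohen25}.

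\textbf{LMP algorithm via general bids.} Start from the JMMSV Greedy/dual-fitting algorithm. Client budgets $\alpha_j$ grow uniformly, and a facility $i$ opens once the total bid of the clients reaches $f$. The only change is the bid itself. In place of the linear contribution $[\alpha_j - d(i,j)]^+$, a client contributes $g(\alpha_j, d(i,j), d(j,\SS))$ for a richer monotone family of functions $g$. These may depend on the client's current connection distance, and may pay more or less than linearly as $\alpha_j$ grows. The LMP guarantee $\cost(\SS) + 1.95\, f|\SS| \le 1.95\sum_j \alpha_j$ is then proved in the usual way. We show that the scaled duals $\alpha/1.95$ are feasible for the facility-location dual, that is, for every star $(i, C)$,
\[
\sum_{j\in C}\alpha_j \le 1.95\Big(f + \sum_{j\in C} d(i,j)\Big).
\]
Following JMS, this reduces to a factor-revealing program. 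Sort the clients of a star by their freezing time and encode metric and triangle-inequality constraints together with the "no facility could have opened earlier" constraints, now written in terms of $g$. We then bound the value of this program by $1.95$ for a specific choice of $g$. This is the main obstacle. With linear bids the worst case is exactly $2$. The new bids must penalize that worst-case configuration (many clients at distance near zero whose $\alpha_j$ is driven up by far-away openings) without creating new bad configurations. I would first pick a piecewise-linear $g$ whose parameters are tuned numerically. Then I would prove the bound by reducing the program to finitely many structural cases, using symmetry and the monotonicity of $g$ to collapse the clients into a few classes. The remaining low-dimensional inequalities can be verified analytically or by a certified computer search. Polynomial running time follows because events, namely openings and freezings, occur at most $|D|+|\FF|$ times and can be simulated exactly.

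\textbf{Smoothing and $k+O(\log n/\eps^2)$ centers.} Next, make the procedure log-adaptive as in \cite{MR4928455-Cohen25}. Run the general-bid algorithm with opening costs that change multiplicatively in steps of $(1+\eps)$, and with randomized or discretized thresholds. The goal is for the opened set to evolve nearly monotonically as $f$ decreases, so that adjacent solutions differ by few facilities. The step here is to check that the properties used in their analysis still hold for the new bids: monotonicity of $g$ in $\alpha_j$, and the fact that a client's contribution changes only when its connection improves. This yields a family of solutions whose sizes cross $k$ with increments of $O(\log n/\eps^2)$, each with cost at most $(1.9499+\eps)$ times the corresponding Lagrangian bound. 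Picking the solution just above $k$ gives at most $k+O(\log n/\eps^2)$ centers and cost at most
\[
(1.9499+\eps)\Big(\sum_j\alpha_j - fk\Big) \le (1.9499+\eps)\,\sopt,
\]
which is Theorem~\ref{thm:k-median-informal}.
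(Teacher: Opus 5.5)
Your algorithmic change is the right one, but the analysis you attach to it cannot reach $1.95$. You charge the cost to $\sum_j\alpha_j$. With the run at opening cost $\hat f=1.95f$ (which the LMP form needs), you then require $\sum_{j\in C}\alpha_j\le \hat f+1.95\sum_{j\in C}d(i,j)$ for every star. This inequality fails for the paper's bid $[\alpha-\gamma d]^+$, by the example of Figure~\ref{fig:alpha-hard-example}.

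Put $n-1$ clients on a facility $F$ of cost $2(n-1)$, and one client at distance $1$ from both $F$ and a second facility $F'$, with $d(F,F')=2$. Price $F'$ so that this client alone makes it tight at the same moment $\alpha^\star=2-\Theta(1/n)$ as $F$. If $F'$ opens first, the client freezes at $\alpha^\star$, its bid to $F$ drops to $0$, and the co-located clients climb to $\alpha=2$. So $\sum_{j\in C}\alpha_j-\hat f=\alpha^\star\to 2$ while $\sum_{j\in C}d(F,j)=1$.

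This is not an artifact of the particular bid. Your inequality on one-client stars already forces a lone client at distance $1$ to pay for an arbitrarily cheap facility by time $1.95+o(1)$, so it has a positive bid at $\alpha^\star$. The same construction therefore works for any monotone, budget-feasible $g$ under which frozen clients stop paying toward facilities that do not improve their connection. The $\alpha$-based factor-revealing program has value approaching $2$, and your final step $(1.9499+\eps)(\sum_j\alpha_j-fk)\le(1.9499+\eps)\sopt$ rests on the same false dual feasibility.

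The gain lies in the actual payments: the isolated client pays only $[\alpha^\star-\gamma]^+ + 1\approx 3-\gamma$. That is why the paper's program uses the objective $\sum_j\bigl(\sum_{k\ge j}\beta(r_{j,k},r_{j,k+1})+r_{j,n+1}\bigr)-f$, which upper-bounds the true cost via superadditivity of $\beta$. This objective is nonlinear. The paper fixes the sign pattern $b_{i,j}$ of each bid, giving $2^{\Theta(n^2)}$ LPs, and bounds each by $2-(\gamma-1)^2$ with an adaptive dual certificate whose existence comes from an intermediate-value argument.

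Your smoothing step is also mischaracterized. Nothing makes the opened set change by only a few facilities between adjacent values of $f$. For greedy-type algorithms an arbitrarily small change of $f$ can cascade and alter the opened set by an unbounded amount, which is what forced bi-point rounding classically. In \cite{MR4928455-Cohen25} and here, it is the time parameter $\theta$, not $f$, that grows in $(1+\eps^2)$-steps, so each run has only $O(\log n/\eps^2)$ phases. Bisection on $\hat f$ to exponentially small precision $\eta$ yields two runs opening fewer and more than $k$ facilities. A merging procedure aligns them phase by phase, adding $O(1)$ unpaid free facilities per phase. It ends with exactly $k$ paid facilities, to which the LMP bound of Theorem~\ref{theorem:alg-lmp-robust} applies.

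Transferring the $1.95$ bound to this variant is not a routine monotonicity check either. Because $\theta$ jumps, clients near a facility must raise their values inside a phase, and this can break the facility and triangle constraints of the factor-revealing program. The paper handles this with two stages:
\begin{itemize}
\item Stage~1 opens facilities in the $(1-\delta)$-scaled metric.
\item Stage~2 opens facilities with raised bids $\tau_j$ in the $(1-\delta/2)$-scaled metric, subject to no overbidding and $\tau_j\le\tau_k+d(j,k)$.
\end{itemize}
These constraints give Lemma~\ref{lem:structure}: candidates at the same step are within $\delta\theta$ or at least $\theta/\gamma$ apart. From it follows the freezing Lemma~\ref{lem:freezing}, and only then does an execution yield a feasible point of the factor-revealing program.
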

In a companion paper \cite{stableinstances} we address \ref{outline:3} and prove the following theorem for $\beta$-stable instances.

\begin{theorem}[Stable Instances \cite{stableinstances}]
    \label{thm:stable-result}
  Let $(D \cup \FF, d)$ be a $\beta$-stable instance of $k$-median and
  $\eps \in (0, \nicefrac{1}{200})$. There exists an algorithm running in
  $\exp\{O(\beta^{-1} \poly(1/\eps))\} n^{\poly(1/\eps)}$ time which with
  probability $1- \eps$ outputs $\SS \subseteq \FF$ with $|\SS| = k$
  such that
  \[ \cost(\SS) \leq (1.89 +
    O(\eps))\sopt_k. \]
\end{theorem}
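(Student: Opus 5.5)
The statement to prove is Theorem~\ref{thm:stable-result}. It comes from the companion paper, but I will outline how I would prove it.

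\textbf{Step 1: structure of stable instances.} First I would exploit $\beta$-stability structurally. Fix an optimal solution $\OPT=\{o_1,\dots,o_k\}$ with clusters $C_1,\dots,C_k$. Stability says that deleting any single center $o_i$ and reassigning its clients raises the cost by at least $\beta\,\sopt_k$. So every cluster is ``expensive to merge'': for each $i$,
\[
\sum_{j\in C_i}\bigl(d(j,\OPT\setminus\{o_i\})-d(j,o_i)\bigr)\ge \beta\,\sopt_k .
\]
Using this, I would show that only $O(\beta^{-1}\poly(1/\eps))$ clusters can be ``cheap''. Here a cheap cluster is one that is close to another optimal center relative to its average radius, or one whose core lies near other clusters. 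All remaining clusters are well separated: their cores, meaning the clients within $O(1/\eps)$ times the average cluster radius, are far from every other optimal center.

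\textbf{Step 2: enumerate the exceptional clusters.} Next I would guess the $O(\beta^{-1}\poly(1/\eps))$ exceptional clusters. Guessing their centers, or leader clients together with distance scales, costs $n^{\poly(1/\eps)}$ per cluster. To bound the dependence by $\exp(O(\beta^{-1}\poly(1/\eps)))$ instead, I would first sample a small set of candidate facilities by $D^2$-style sampling and then enumerate subsets of that set. These guesses are opened directly, and each guessed center costs at most $(1+\eps)$ times its optimal center.

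\textbf{Step 3: handle the well-separated clusters.} For these clusters I would combine an LP with local search or a greedy procedure. Solve the standard $k$-median LP restricted to the remaining clients. Then round it using the separation property: every well-separated cluster core forms an essentially isolated ball, so the LP must place roughly one unit of facility mass there. Opening one facility per core at random according to the LP mass, as in dependent rounding on disjoint balls, gives expected connection cost close to the fractional cost for core clients. The non-core clients are charged via the triangle inequality to their own cluster's opened facility or to a neighbor's. The target constant $1.89$ should come from optimizing the core radius threshold against this fallback charge, similar to the $1+2/e$-type analyses but with a weaker backup bound. The success probability $1-\eps$ is obtained by repeating and applying Markov's inequality.

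\textbf{Main obstacle.} I expect the hardest step to be the charging argument for clients outside the cores, and for clients whose own cluster's random facility fails to open. Getting a constant below $2$, rather than the trivial $3$ or $2$, requires a careful case analysis that trades off the structure implied by stability against the rounding distribution. I would attack it first by writing the cost as a function of a client's distance profile and finding the worst case with a factor-revealing LP.
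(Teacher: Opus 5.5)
The paper does not prove this theorem; it imports it from the companion work. Your outline therefore has to stand on its own, and as written it does not establish the statement.

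Step 1 is the standard counting argument and is fine. Write $\bar r_i=\frac{1}{|C_i|}\sum_{j\in C_i}d(j,o_i)$. If $d(o_i,o_{i'})\le L\bar r_i$, then deleting $o_i$ costs at most $L|C_i|\bar r_i$. Stability then forces $|C_i|\bar r_i\ge\beta\sopt_k/L$, so there are at most $L/\beta$ such clusters.

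The whole content of the theorem, however, is a constant below $2$, and nothing in your outline produces one. You defer the charging argument and only conjecture that $1.89$ ``should come from'' an optimization you never set up.

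The mechanism of Step 3 also rests on an unsupported claim. That the LP ``must place roughly one unit of facility mass'' in each separated core does not follow from $\beta$-stability, which constrains only integral $(k-1)$-center solutions. The theorem is applied with $\beta=\Omega(\eps/c)$ and $c=O(\log n/\poly(\eps))$, so $\beta\to 0$. In that regime stability gives you the counting bound above and nothing about fractional solutions, whose value can lie strictly below $\sopt_k$ and which may split mass across regions freely. Without that property your per-core rounding has no analysis. You also never say which clients pay more than their optimal cost, or why their total loss stays below a factor $2$.

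Step 2 has a concrete failure, and it is load-bearing.
\begin{itemize}
\item Within the $\exp(O(\beta^{-1}\poly(1/\eps)))$ budget you cannot guess each exceptional $o_i$ among all facilities. That would cost $n^{O(\beta^{-1}\poly(1/\eps))}$, which is quasi-polynomial when $\beta^{-1}=\Theta(\log n\cdot\poly(1/\eps))$.
\item A facility $f$ nearest to a sampled client $c\in C_i$ only guarantees $\sum_{j\in C_i}d(j,f)\le|C_i|\bar r_i+2|C_i|\,d(c,o_i)$. This is at most $(1+\eps)|C_i|\bar r_i$ only if $d(c,o_i)\le\eps\bar r_i/2$.
\item Such clients need not exist. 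Take $C_i$ to be a star: all clients at distance $\bar r_i$ from $o_i$, pairwise $2\bar r_i$ apart, each with a private facility right next to it. The facility nearest to any sampled client of $C_i$ is then that client's private facility, which costs about $2|C_i|\bar r_i$ on $C_i$.
\item In general nothing forces $o_i$, or any other $(1+\eps)$-approximate center, into your candidate set.
\item Nothing in Step 1 prevents the exceptional clusters from carrying a constant fraction of $\sopt_k$. Losing a factor $2$ on them already rules out any bound below $2$.
\end{itemize}

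The missing idea is exactly this: recover good centers for the non-separated clusters within the time budget, while rounding the separated ones with loss strictly below $2$. Your proposal names this as the obstacle but does not supply it.
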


Similar to previous work, we use the LMP-framework by \cite{JaV01}. First, we give an improved LMP approximation algorithm for the uncapacitated facility location problem. Here, we take a different route than \cite{Cohen-AddadLS23} who augment the Greedy algorithm with local-search post-processing to arrive at their $(2-\eta)$-LMP approximation. Instead, we revisit the greedy algorithm itself and modify the bidding rule. 
In our unified notation, the classical choice is the linear bid $\beta(\alpha,d)=[\alpha-d]^+$. We generalize this to broader bidding functions $\beta(\alpha,d)$ and in particular, bidding functions of the form $\beta(\alpha,d)=[\alpha-\gamma\,d]^+$, parametrized by $\gamma > 1$.  
A similar approach has been applied to improve the approximation ratio for the $k$-means problem that asks to minimize the sum of the squared distances (as opposed to the sum of distances) \cite{CharikarCGGLW25,CharikarCGGLW26}. They employ a dual fitting analysis while we, on the other hand, study how the choice of $\beta(\alpha,d)$ changes the factor-revealing program. Indeed, their kind of dual fitting analysis doesn't work for $k$-median; there exists an example (Figure~\ref{fig:alpha-hard-example}) where bounding the algorithm's cost by the sum of dual variables $\sum_{j \in C} \alpha_j$ results in an LMP ratio approaching $2$ (both for our variant and their variant of the greedy algorithm).

Intuitively, our bids allow clients to discount distant facilities more aggressively, which already breaks the tight examples behind the classical factor-$2$ analysis. This added flexibility comes, however, at a price: the classical Greedy analysis is governed by a single factor-revealing LP, whereas general bids lead to genuinely non-linear programs, and even piecewise-linear bids lead to a family of LPs rather than one LP. This viewpoint appears surprisingly powerful even for very simple families of bids. We show that Algorithm~\ref{alg:dual_generalized} with bidding function $\beta(\alpha,d)=[\alpha-1.224\,d]^+$ is a $1.9499$-LMP approximation algorithm.

\begin{theorem}[See Theorem~\ref{thm:1.95-lmp} for full version]
\label{thm:1.95-lmp-informal}
There exists a polynomial-time $1.9499$-LMP approximation algorithm for uncapacitated facility location.
\end{theorem}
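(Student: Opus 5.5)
We run the Greedy algorithm of Jain et al.\ (Algorithm~\ref{alg:dual_generalized}) with the generalized bid $\beta(\alpha,d)=[\alpha-\gamma d]^+$, $\gamma=1.224$, and prove the bound by a factor-revealing program. Each client $j$ has a variable $\alpha_j$ that increases uniformly until $j$ is frozen. A facility $i$ opens once the total bid $\sum_j [\alpha_j-\gamma d(i,j)]^+$ reaches $f$. Clients already connected to an open facility keep offering only what they would save by switching. Every step of this process can be computed in polynomial time by event-driven simulation, as in the classical analysis. This settles running time. The substance is the LMP inequality
\[
\cost(\SS)+\rho f|\SS|\le \rho\,\bigl(\cost(\SS^*)+f|\SS^*|\bigr),\qquad \rho=1.9499,
\]
for every solution $\SS^*$.

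\textbf{Step 1 (charging).} First we charge the algorithm's cost to clients. Each client $j$ gets a share $c_j$ of the opening costs, namely its contribution to paying for facilities, plus its own connection cost. We need $\sum_j c_j$ to dominate $\cost(\SS)+\rho f|\SS|$. In the classical analysis, connection cost plus contribution is at most $\alpha_j$. With $\gamma>1$ this changes. A client pays only $\alpha_j-\gamma d$ toward the facility it connects to, which is less than $\alpha_j-d$. So the slack $(\gamma-1)d$ is what lets opening costs be scaled by $\rho<2$. We would define $c_j$ to capture exactly this.

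\textbf{Step 2 (reduction to a star).} Next we decompose by facilities of $\SS^*$. As in \cite{JainMMSV03}, it suffices to show that for every facility $i$ and every set of $k$ clients $1,\dots,k$ with distances $d_\ell=d(i,\ell)$, we have $\sum_\ell c_\ell\le \rho(f+\sum_\ell d_\ell)$. We then collect the constraints the algorithm imposes. Order the clients by freezing time, and let $r_{\ell,m}$ denote the distance from client $\ell$ to its connected facility at the time client $m$ freezes. The constraints are:
\begin{itemize}
\item monotonicity, $r_{\ell,m}\ge r_{\ell,m+1}$;
\item the triangle inequality, $\alpha_m\le r_{\ell,m}+d_\ell+d_m$ (with the constant adjusted by the bid);
\item facility $i$ not being overpaid:
\[
\sum_{\ell<m}\bigl[\min(r_{\ell,m}/\gamma\text{-type savings})\bigr]^+ +\sum_{\ell\ge m}[\alpha_m-\gamma d_\ell]^+\le f.
\]
\end{itemize}
Because the bid is piecewise linear with a kink at $\alpha=\gamma d$, this is not one LP. For each fixed pattern of which terms $[\cdot]^+$ are active, we obtain an LP, and the program is the maximum over this family of LPs.

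\textbf{Step 3 (bounding the program).} Finally we bound the value of this family by $\rho$ uniformly in $k$. We would first use symmetry and scaling to normalize $f+\sum d_\ell=1$. Then we construct feasible dual solutions to each LP in the family, parametrized continuously as $k\to\infty$, following the Jain et al.\ technique of bounding a factor-revealing LP through explicit duals. Numerical solution for moderate $k$ would guide the choice of duals and confirm that $\gamma=1.224$ is near optimal. The constant $1.9499$ suggests the final certificate is numerical. We expect this step to be the main obstacle: handling all activation patterns at once and proving that the bound holds for every $k$, not just small $k$. To do this we would try to show that worst cases have a restricted structure, such as each activation being ``all active up to an index''. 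That would reduce the problem to a one-parameter family of LPs, where a limiting continuous argument or interval arithmetic on a finite discretization can certify the bound $\rho$.
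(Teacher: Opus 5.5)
There is a genuine gap. Your Steps~1--2 follow the paper's outline: run Algorithm~\ref{alg:dual_generalized} with bid $[\alpha-1.224\,d]^+$, charge each client its actual bids plus final connection cost instead of $\alpha_j$, reduce to a star, and note that the kink of $[\cdot]^+$ in the objective turns the program into a maximum over exponentially many LPs. But Step~3, which is the entire substance of the theorem, is only a plan. Nothing in the proposal shows that every LP in the family has value at most $1.9499$ for every $n$.

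Your heuristics also do not lead there. The constant is not numerical. The paper proves the analytic bound $2-(\gamma-1)^2$ for all $\gamma\in[1,1.224]$, which is tight by Figure~\ref{fig:greedy-hard-example-gamma}. A discretization with interval arithmetic cannot by itself certify all $n$ and all $2^{\Theta(n^2)}$ activation patterns.

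The missing idea is an explicit \emph{adaptive} dual certificate. The paper does not prove a structural property of worst-case activation patterns. Instead it observes that, in the dual of each inner LP, the right-hand side of the $(r_{j,i})$ constraint is at most $0$, $1$, or $2-\gamma$ according as $i<f_j$, $i=f_j$, or $i>f_j$, where $f_j$ is the first index with $b_{i,j}=1$. So a certificate need only depend on the vector $(f_j)_j$. The construction then proceeds as follows:
\begin{itemize}
\item a profile $\hat g$ is built backwards through suffix sums $R_i$, with thresholds $t_1,t_2$ and an advised threshold $t_3^\star$ carrying fractional memberships;
\item clients are split into groups $A,B,C,D$ with explicit $g,s,t$;
\item the client-credit and pool constraints are verified;
\item a continuity and intermediate-value argument shows that advice with $\sum_i\hat g_i=1$ exists.
\end{itemize}
The value $1.224$ comes from the inequality $\frac{1-\tau-x}{\tau+x}\ge[1-x/\theta]^+$, with $\tau=2\epsilon+\epsilon^2$, $\theta=\epsilon+\epsilon^2$, $\epsilon=\gamma-1$, used for the pool constraints. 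None of this is anticipated in your plan.

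Step~2 also does not yet specify a correct program.
\begin{itemize}
\item You never write the objective. It must be $\sum_j\bigl(\sum_{k\ge j}\beta(r_{j,k},r_{j,k+1})+r_{j,n+1}\bigr)-\hat f$, obtained from the actual cost via superadditivity of $\beta$. It is exactly because the intermediate $r_{j,k}$ enter the objective that a family of LPs arises; your ``slack $(\gamma-1)d$'' only describes the first connection.
\item The facility constraint should read $\sum_{\ell<m}[r_{\ell,m}-\gamma d_\ell]^+ +\sum_{\ell\ge m}[\alpha_m-\gamma d_\ell]^+\le\hat f$ with $\hat f=\rho f$. Your algorithm uses threshold $f$, which does not yield the LMP inequality you state.
\item The triangle constraint $\alpha_m\le r_{\ell,m}+d_\ell+d_m$ needs no adjustment for the bid, since freezing is triggered by distances, not bids.
\item You omit the metric inequalities $r_{j,\ell}\le r_{k,\ell}+d_j+d_k$ among connection distances.
\end{itemize}
The last omission is not harmless. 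After the paper's reformulation with the variables $m_i$, these inequalities produce the dual variables $t_{i,j}$ with $j<i$, and the certificate puts positive weight on them (e.g.\ $t_{f_j,j}=1-R_{j-1}$ for group $D$ with $f_j>j$). Without them you would need a different certificate, and it is not clear the bound survives.
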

\begin{remark}
\label{rem:AI-LMP}
    Building on the results described above, GPT-6 Astra managed to improve the LMP approximation ratio to 1.844. The bidding function used to achieve this ratio is
    \begin{equation}\label{eq:bid}
  \beta(\alpha,d)=\frac1{10^8}\sum_{s=1}^{13}W_s[\alpha-\gamma_s d]^+,
\end{equation}
where the multipliers $\gamma_s$ and weights $W_s$ are
specified in Table~\ref{tab:bid}.
Furthermore, GPT-6 Astra managed to prove that for any admissible bidding function, as in Definition~\ref{def:admissible-bidding}, the supremum of the factor revealing program used in our analysis over all $n \geq 1$ has value at least $1.8$.
Therefore, our approach cannot immediately yield an approximation ratio below $1.8$.
Lean proofs for these claims can be found at \url{https://github.com/epfl-thl2/project-k-median}.
\end{remark}

\begin{table}[H]
\centering
\renewcommand{\arraystretch}{1.1}
\setlength{\tabcolsep}{12pt}
\begin{tabular}{rrr@{\qquad}rrr}
\toprule
$s$ & $\gamma_s$ & $W_s$ & $s$ & $\gamma_s$ & $W_s$\\
\midrule
1 & $137/125$ & $66{,}678{,}856$ & 8 & $9/5$ & $659{,}515$\\
2 & $29/20$ & $616{,}823$ & 9 & $37/20$ & $2{,}091{,}320$\\
3 & $3/2$ & $12{,}834{,}612$ & 10 & $19/10$ & $2{,}245{,}119$\\
4 & $29/18$ & $4{,}845{,}692$ & 11 & $39/20$ & $2{,}936{,}362$\\
5 & $33/20$ & $1{,}922{,}894$ & 12 & $2$ & $766{,}770$\\
6 & $17/10$ & $70{,}695$ & 13 & $41/20$ & $2{,}724{,}139$\\
7 & $7/4$ & $1{,}607{,}203$ &  &  & \\
\bottomrule
\end{tabular}
\caption{The bidding function. Each coefficient of $[x-\gamma_s y]^+$ is
$W_s/10^8$.}
\label{tab:bid}
\end{table}

Our second contribution is to make these generalized bids \emph{smooth} (more specifically, low in terms of adaptivity). We adapt and generalize the recent log-adaptive framework of \cite{MR4928455-Cohen25} to our broader class of bidding functions. A key conceptual point is that we do \emph{not} redo the (in our case, complex) approximation analysis of the log-adaptive algorithm from scratch. Instead, we prove a black-box transfer statement: if the factor-revealing mathematical program with a bidding function $\beta = \pos{a - \gamma d}$ has value at most $\rho$, then the corresponding smooth version opens at most $k+O(\log n/\eps^2)$ centers while losing only an additive $\eps$ in the approximation guarantee. In other words, the smooth analysis inherits the bound from the original factor-revealing program almost for free. Combined with Theorem~\ref{thm:1.95-lmp-informal}, this yields Theorem~\ref{thm:k-median-informal}.

\subsection{Further Related Work}

\paragraph{Facility location and dual fitting.}
The connection between $k$-median and facility location dates back to the primal-dual schema of Jain and Vazirani~\cite{JaV01} and the dual-fitting framework of Jain, Mahdian, Markakis, Saberi, and Vazirani~\cite{JMS02,JainMMSV03}. These ideas have been highly influential for facility location itself, leading to improved approximation algorithms culminating in the $1.488$-approximation of Li~\cite{Li13}; see also~\cite{CharikarGTS99,ChudakS03, ByrkaA10}. The factor-revealing LP technique introduced in~\cite{JainMMSV03} is a key tool underlying both the classical Greedy analysis and our generalization to broader bidding functions.

\paragraph{Local search.}
The $(3+\eps)$-approximation of Arya et al.~\cite{AryaGKMMP04} via multi-swap local search remained the best polynomial-time guarantee for metric $k$-median for over a decade. Cohen-Addad, Gupta, Hu, Oh, and Saulpic~\cite{Cohen-AddadGHOS22} introduced a non-oblivious local-search approach that beats the classical factor~$3$, though it does not yet improve the best overall guarantee for the problem.

\paragraph{Bi-point rounding.}
As discussed in the introduction, converting an LMP solution into an exact-$k$ solution via bi-point rounding has been a major bottleneck. The line of work by Li and Svensson~\cite{LiS16}, Byrka et al.~\cite{ByrkaPRST17}, and Gowda et al.~\cite{GowdaPST23} progressively refined this step, eventually reaching a $2.613$-approximation in combination with the LMP improvement of~\cite{Cohen-AddadLS23}. It is known that no LMP-based bi-point rounding approach can breach the $(1+2/e)\sqrt{\phi} \approx 2.208$ barrier~\cite{GowdaPST23}, which further motivates the smooth/log-adaptive approach that avoids bi-point rounding altogether.

\paragraph{Euclidean $k$-Median.}
A different approach to the rounding loss was used by Ahmadian, Norouzi-Fard, Svensson, and Ward~\cite{AhmadianNSW20} and further developed by Cohen-Addad, Esfandiari, Mirrokni, and Narayanan~\cite{Cohen-AddadEMN22} and Grandoni, Ostrovsky, Rabani, Schulman and Venkat~\cite{GrandoniORSV22}, who improved guarantees for $k$-means and Euclidean $k$-median by showing how to run the simpler primal-dual algorithm~\cite{JaV01} in a ``smooth'' way so as to respect the hard cardinality constraint without paying the bi-point rounding loss.

\subsection{Outline}

The remainder of the paper is organized as follows. We start with some additional definitions in Section~\ref{sec:prelim}. We then, in Section~\ref{sec:greedy_algorithm_general_bids}, present the Greedy algorithm with general bidding functions and use simple examples to illustrate why modified bids can avoid the classical factor-$2$ lower-bound constructions. In Section~\ref{sec:approx_guarantee}, we formulate the corresponding factor-revealing mathematical program (for general bidding functions), and give an outline of the proof of our $1.9499$-LMP guarantee for the bid $\beta(\alpha,d)=[\alpha-1.224\,d]^+$.

In Section~\ref{sec:logad} we show how to transfer any upper bound on the factor-revealing program to a log-adaptive version of the algorithm, and we state the resulting polynomial-time $k$-median guarantee that opens at most $k+O(\log n)$ centers (which then follows from adapting the recipe of~\cite{MR4928455-Cohen25}). Finally, in Section~\ref{sec:195-dual-fitting} we give the full, more technically involved, dual-fitting proof of the $1.9499$ bound. 

\subsection*{AI Acknowledgment and Concurrent Work}
The mathematical proofs presented in this work (apart from Remark \ref{rem:AI-LMP}) were developed solely by the authors without the use of AI. 
GPT-5.6 Sol and GPT-6 Astra were used for proof-checking, to assist the writing process and to obtain feedback on the presentation. In particular, AI has been used to reformulate paragraphs in Sections~\ref{sec:introduction},~\ref{sec:prelim}, and in the introductory parts of the other sections. The lean certificates mentioned in Remark~\ref{rem:AI-LMP} and the certificate for Theorem~\ref{thm:1.95-frp} were obtained by GPT-6 Astra.  

In concurrent work announced on October 6, 2026, OpenAI presented
two different better-than-two approximations for general 
$k$-median instances~\cite{OAI:The-Approximation-Threshold-for-Metric-k-Median-September-24-2026, OAI:Single-Exponential-Recovery-and-Bounded-Price-Strictness-for-Metric-k-Median-September-24-2026}. The first extends the iterative rounding-based $(2+\varepsilon)$-approximation of~\cite{byrka2026kclusteringiterativerandomizedrounding} to achieve an optimal $(1 + 2/e + \eps)$-approximation. The other instead builds on the stable instance/pseudo-approximation framework of~\cite{MR4928455-Cohen25} giving a $(2 -\sigma)$-approximation for some absolute constant $\sigma > 0$, thus sharing many similarities with this work and its companion paper. 
Finally, through personal communication, we learned that Jaroslaw Byrka, Yuhao Guo, Yang Hu, Shi Li, Yongho Shin, Phanu Vajanopath, Chengzhang Wan and Zaixuan Wang have independently achieved a $2-\sigma$-approximation algorithm for a small but constant $\sigma >0$. Their approach further develops the iterative randomized rounding framework of~\cite{byrka2026kclusteringiterativerandomizedrounding}, providing an alternative route to a better-than-$2$ approximation.

\section{Preliminaries}
\label{sec:prelim}
We are given a metric $d(\cdot,\cdot)$ on the union of a set of clients $\clients$ and a set of facilities $\facilities$.
For a set of facilities (also referred to as centers) $\SS \subseteq \facilities$ and a client $j \in \clients$, we write
\[
    \dist(j,\SS) := \min_{i \in \SS} \dist(i,j)\,,
\]
with the convention $\dist(j,\emptyset)=+\infty$. For a facility $i \in \facilities$ and radius $r \ge 0$, we let
\[
    \Ball{i}{r} := \{j \in \clients : \dist(i,j) \le r\}
\]
be the ball around $i$ of radius $r$.
We also use the shorthand $[x]^+ := \max\{x,0\}$ for $x \in \mathbb{R}$.

In general, we will denote sets of clients by capital letters such as $A, B$, and sets of facilities by blackboard-bold capital letters such as $\AA, \BB$.

\paragraph{Facility location and $k$-median.}
In the \emph{metric uncapacitated facility location} problem (UFL) with uniform facility opening cost $f \ge 0$, the goal is to choose a set $\SS \subseteq \facilities$ minimizing
\[
    \cost_f(\SS) := \sum_{j \in \clients} \dist(j,\SS) + f\cdot |\SS|\,.
\]
We denote by $\OPT_f$ the optimum value of this facility-location instance.
We omit the subscript when it is clear from context.

In the \emph{metric $k$-median} problem, we are given in addition an integer $k \ge 1$, and the goal is to choose a set $\SS \subseteq \facilities$ of at most $k$ facilities minimizing the connection cost
\[
    \clcost(\SS) := \sum_{j \in \clients} \dist(j,\SS)\,.
\]
We denote by $\opt_k$ the optimum value of the $k$-median instance. Whenever the value of $k$ is clear from context, we simply write $\opt$.

The standard Lagrangian viewpoint relates the two problems by regarding the cardinality constraint $|\SS| \le k$ as a penalty term $f \cdot |\SS|$. A facility-location algorithm therefore yields a family of solutions parameterized by the opening cost $f$, and one tries to recover a near-optimal $k$-median solution from this family.

\begin{definition}[LMP approximation ratio]
    \label{def:lmp}
    A facility-location algorithm $\mathcal{A}$ is a \emph{$\rho$-Lagrangian-multiplier-preserving approximation} (or \emph{$\rho$-LMP approximation}) if for every instance with uniform opening cost $f$, the algorithm outputs a set $\SS \subseteq \facilities$ satisfying
    \[
        \sum_{j \in \clients} \dist(j,\SS) \le \rho\bigl(\OPT_f - f\cdot |\SS|\bigr)\,.
    \]
\end{definition}
\paragraph{}
Intuitively, a $\rho$-LMP guarantee says that after fully paying the facility opening costs the algorithm is still a $\rho$-approximation in the connectivity cost.

\vspace{1em}
By a standard reduction as in Lemma 2.1 in \cite{MR4928455-Cohen25}, we may assume that all distances are polynomially bounded and bounded away from zero; concretely, we assume
\begin{equation}
\label{eq:dist-bound}
    1 \le \dist(i,j) \le \poly(|\clients|)
    \qquad \forall i \in \facilities,\ j \in \clients\,.
\end{equation}
We also denote $\Maxdist = \max_{i \in \facilities, j \in \clients} d(i,j)$.

\section{Greedy Algorithm with General Bids}
\label{sec:greedy_algorithm_general_bids}

In this section we revisit the Greedy algorithm of \cite{JMS02, JainMMSV03} and present its generalized version with more flexible bidding functions. The classical Greedy algorithm maintains a set $A$ of active clients, a set $S$ of opened facilities, and for each client $j$ a total initial budget $\alpha_j$ that grows uniformly while $j$ remains active. As time increases, one of two events can occur. Either some unopened facility receives enough total bids to pay for its opening cost, in which case the facility is opened, or some active client reaches an already opened facility and becomes inactive. In the standard Greedy algorithm, an active client $j$ first reserves $d(i,j)$ units of its budget to pay for connecting to a candidate facility $i$, and then contributes all of its remaining budget, namely $[\alpha_j-d(i,j)]^+$, towards opening $i$. Likewise, an inactive client that is currently connected at cost $d(j,\SS)$ contributes the full improvement $[d(j,\SS)-d(i,j)]^+$ that opening $i$ would offer.

It is therefore natural to ask what happens if clients are not forced to bid their entire remaining budget to a candidate facility, but are instead allowed to contribute only a fraction of it.
Intuitively, one would like clients to support nearby facilities more aggressively and discount their bids to facilities that are farther away.

Our goal is to understand the power of the bidding function in the Greedy algorithm. For a client with value $\alpha$, we denote its bid towards a facility at distance $d$ by $\beta(\alpha,d)$, and for an inactive client we use the same function with $\alpha=d(j,\SS)$, namely $\beta(d(j,\SS),d)$. We give the pseudocode of the Generalized Greedy algorithm in Algorithm~\ref{alg:dual_generalized}. Of course, the bidding function $\beta(\alpha,d)$ must
satisfy some natural properties, e.g. the bids should be non-negative and should not surpass
the remaining budget. We outline these properties in Section~\ref{sec:natural-properties-bidding} and from this point onwards, whenever we refer to the Generalized Greedy algorithm (Algorithm~\ref{alg:dual_generalized}), we assume that its bidding function satisfies these properties.

\begin{algorithm}[H]
\caption{Greedy Algorithm with a Generalized Bidding Strategy}
\label{alg:dual_generalized}
\begin{algorithmic}[h]
\State \textbf{Initialization:} Set $A=D$ (set of active clients), $\SS=\emptyset$ (set of opened facilities), $\alpha=0$ (global time).
\State For each client $j \in D$, let $\alpha_j=\alpha$ while $j \in A$, and freeze $\alpha_j$ when $j$ first becomes inactive.
\State For any client $j$ and facility $i$, define the bid:
\[
\text{bid}(j, i) = 
\begin{cases} 
\beta(\alpha_j, d(i,j)) & \text{if } j \in A \\
\beta(d(j,\SS), d(i,j)) & \text{if } j \notin A\,.
\end{cases}
\]
\While{$A \ne \emptyset$}
    \State Increase $\alpha$ uniformly until one of the following events occurs:
    \State \textbf{Event 1:} An unopened facility $i \in \FF$ is fully paid for: $\sum_{j \in D} \text{bid}(j, i) = f$
    \State \indent Open facility $i$: $\SS \gets \SS \cup \{i\}$
    \State \indent Mark all clients $j \in A$ such that $\alpha_j \geq \dist(j,\SS)$ as served (i.e., remove them from $A$).

    \State \textbf{Event 2:} An active client $j \in A$ reaches an open facility $i \in \SS$: $\alpha_j = d(i,j)$
    \State \indent Mark $j$ as served (i.e., remove $j$ from $A$).
    
    \Statex \textit{(If multiple events occur simultaneously, process them in an arbitrary order)}
\EndWhile
\end{algorithmic}
\end{algorithm}

\subsection{Motivation for General Bidding Functions}

In this section we recall the classic factor-$2$ lower bound for the standard Greedy algorithm
of \cite{JMS02, JainMMSV03} and show how a slightly different bidding function can help us break the barrier of $2$. The examples we present do not have uniform opening costs for the facilities, i.e. every facility has a different opening cost $f_i$ and it opens when
the total bids towards it are equal to $f_i$. This is done purely for presentational 
simplicity as similar examples can be obtained for uniform opening costs. 

In these examples, the quantity that ``reveals'' the LMP factor of the algorithm is the following. Let $i$ be an optimal facility with opening cost $f$, and $d_1, \dots, d_n$ 
be the distances between the facility and the clients connected to it in the optimal solution. Furthermore, let $\cost_j$ 
be the total cost attributed to client $j$: its final connection cost plus all its bids towards the facility opening costs. We are interested in bounding the following ratio:
\[\frac{\sum_{j = 1}^n \cost_j - f}{\sum_{j = 1}^n d_j}\,.\]
\noindent The above quantity is a crucial ingredient in the analysis of the algorithm and is essentially equal to the final LMP approximation. In fact, when $\cost_j=\alpha_j$ this is equal to the objective of the Factor Revealing program developed
in \cite{JMS02, JainMMSV03} to analyze the standard Greedy algorithm.

\begin{figure}[h]
\centering
\begin{tikzpicture}[x=1cm,y=1cm,every node/.style={font=\small}]
    \def\boxsize{0.21}
    \def\dotr{0.065}
    \def\solodotr{0.075}
    \def\seglen{2.35}
    \def\basey{0.105}
    \def\clienty{0.74}
    \def\labely{0.95}
    \def\fx{0.105}
    \def\sx{\fx+\seglen}
    \def\fpx{\fx+2*\seglen}

    \filldraw[black] ({\fx-\boxsize/2},{\basey-\boxsize/2}) rectangle ({\fx+\boxsize/2},{\basey+\boxsize/2});
    \filldraw[black] ({\fpx-\boxsize/2},{\basey-\boxsize/2}) rectangle ({\fpx+\boxsize/2},{\basey+\boxsize/2});

    \filldraw[black] ({\fx-0.21},{\clienty}) circle (\dotr);
    \filldraw[black] ({\fx-0.07},{\clienty}) circle (\dotr);
    \filldraw[black] ({\fx+0.07},{\clienty}) circle (\dotr);
    \filldraw[black] ({\fx+0.21},{\clienty}) circle (\dotr);
    \node[anchor=south] at ({\fx},{\labely}) {$n-1$ clients};

    \draw[dotted, line width = 0.73pt] ({\fx-0.21},{\clienty-\dotr}) -- ({\fx},{\basey});
    \draw[dotted, line width = 0.73pt] ({\fx-0.07},{\clienty-\dotr}) -- ({\fx},{\basey});
    \draw[dotted, line width = 0.73pt] ({\fx+0.07},{\clienty-\dotr}) -- ({\fx},{\basey});
    \draw[dotted, line width = 0.73pt] ({\fx+0.21},{\clienty-\dotr}) -- ({\fx},{\basey});
    \node[anchor=east] at ({\fx-0.32},{0.5*(\clienty+\basey)}) {$0$};

    \node[anchor=north] at ({\fx},{\basey-\boxsize/2-0.08}) {$f =  2(n-1)$};

    \filldraw[black] ({\sx},{\basey}) circle (\solodotr);
    \node[anchor=south] at ({\sx},{\basey+0.24}) {$n$-th client};

    \node[anchor=north] at ({\fpx},{\basey-\boxsize/2-0.08}) {$f' = 1-\frac1n$};

    \draw[dashed] ({\fx+\boxsize/2},{\basey}) -- node[above] {$1$} ({\sx-\solodotr-0.04},{\basey});
    \draw[dashed] ({\sx+\solodotr+0.04},{\basey}) -- node[above] {$1$} ({\fpx-\boxsize/2},{\basey});
\end{tikzpicture}
\caption{A $2$ lower bound for standard Greedy with bidding function: $[\alpha-d]^+$.}
\label{fig:greedy-hard-example}
\end{figure}

In the classical tight example, depicted in  Figure~\ref{fig:greedy-hard-example}, 
we have two facilities at distance $2$ from each other with opening costs $f$ and $f'$.
The instance consists of $n-1$ clients that are co-located with the facility $f$ (i.e. at distance $0$) and one client that is at distance $1$ from both facilities.
On a high level, in this example the standard Greedy algorithm might open the facility $f'$
using only the bid of the separated client, which then forces the $n-1$ clients
to raise to a higher level in order to pay for the opening of $f$.

Indeed, when the active clients reach value $\alpha = 2-\frac1n$, the separated client bids exactly $1-\frac1n$ to both facilities, while each of the $n-1$ co-located clients bids $2-\frac1n$ to $f$ and $0$ to $f'$. Therefore, both facilities receive bids that are exactly equal to their opening costs. If Greedy chooses to open $f'$ first, then only the $n$-th client becomes inactive. The other $n-1$ clients are still at distance $2$ from the open facility, so they remain active (as their budget cannot yet pay for their connection cost) and keep increasing their values until $\alpha=2$, when they can either open facility $f$ by themselves or simply connect to $f'$.
In the resulting execution, the final client values are
\[
    \alpha_n = 2-\frac1n 
    \qquad\text{and}\qquad
    \alpha_j = 2 \quad \text{for every } j \leq n-1\,.
\]

\noindent At the same time, the optimal solution opens only $f$, paying opening cost $2(n-1)$ and connection cost $1$ for the $n$-th client. As a result, the factor of the algorithm for this example is

$$
\frac{\sum_{j = 1}^n \alpha_j - f}{\sum_{j=1}^n d_j} = \frac{2(n-1) + 2-\frac1n - 2(n-1)}{1} = 2-\frac1n\,.
$$

\noindent This same example can be generalized to any similar algorithm whose clients end up paying $\cost_j = \alpha_j$, as long as $f=2(n-1)$ and changing appropriately the value of $f'$. 

Consider now a slightly different bidding rule, namely $\beta(\alpha,d)=[\alpha-(1+\eps)d]^+$, for some $0<\eps< 1$.
The main difference is that the facility $f'$ can no longer be opened at time $\alpha = 2-\frac1n$, because the isolated client does not contribute its full remaining budget but contributes only
\[
    \left[2-\frac1n-(1+\eps)\right]^+
    =
    \left[1-\eps-\frac1n\right]^+
    <
    1-\frac1n\,.
\]
On the other hand, the facility $f$ can open as soon as $(n-1)\alpha + [\alpha-(1+\eps)]^+ = 2(n-1)$, that is, at time $\alpha = 2-\frac{1-\eps}{n}$. At this time, the facility $f'$ still does not receive enough bids, as the isolated client's contribution is
\[
    [\alpha-(1+\eps)]^+ = \frac{(n-1)(1-\eps)}{n} < 1-\frac1n\,.
\]
\noindent As a result, Greedy opens $f$ first and terminates as all the clients connect to $f$. 
Therefore, the final client values for this execution are 
\[
    \cost_j \le \alpha_j = 2-\frac{1-\eps}{n}, \quad \text{for every } j \leq n\,.
\]
The final factor for this example is thus
$$
\frac{\sum_{j = 1}^n \cost_j - f}{\sum_{j = 1}^n d_j} \le  n\left(2-\frac{1-\eps}{n}\right)-2(n-1)=1+\eps\,.
$$

\noindent Hence, even this simple modification already improves the first example from $2-\frac1n$ to $1+\eps$. In fact, for any $\gamma \in (1,2)$, one can follow the same analysis and get that the bidding function $[\alpha-\gamma d]^+$ gives an upper bound of $\gamma$ on this example, which is strictly better than standard Greedy for sufficiently large $n$.

Of course, the previous example (Figure~\ref{fig:greedy-hard-example}) is tailored
to the standard Greedy algorithm, or equivalently to $\gamma = 1$, as the ``bad event'' (i.e. 
that $f'$ can be opened before $f$) only happens for $\gamma = 1$. Following the same rationale,
this example can be generalized so that $f'$ and $f$ become tight exactly at the same time for any $\gamma \in (1,1.224]$. We present this generalized example in Figure~\ref{fig:greedy-hard-example-gamma}.
However, even in this case the resulting ratio is $2-(1-\gamma)^2$ which is strictly better
than $2$ for any $\gamma \in (1,1.224]$. In fact, we later prove that the aforementioned ratio
is exactly tight for $\gamma \in (1,1.224]$; this proves that Algorithm~\ref{alg:dual_generalized} with the bidding function $\beta_{1.224}(\alpha,d) = [\alpha - 1.224 d]^+$ is a $1.9499$-LMP approximation.

\begin{figure}[h]
\centering
\begin{tikzpicture}[x=1cm,y=1cm,every node/.style={font=\small}]
    \def\boxsize{0.21}
    \def\dotr{0.065}
    \def\solodotr{0.075}
    \def\seglen{2.35}
    \def\basey{0.105}
    \def\clienty{0.74}
    \def\labely{0.95}
    \def\fx{0.105}
    \def\sx{\fx+\seglen}
    \def\fpx{\fx+2*\seglen}

    \filldraw[black] ({\fx-\boxsize/2},{\basey-\boxsize/2}) rectangle ({\fx+\boxsize/2},{\basey+\boxsize/2});
    \filldraw[black] ({\fpx-\boxsize/2},{\basey-\boxsize/2}) rectangle ({\fpx+\boxsize/2},{\basey+\boxsize/2});

    \filldraw[black] ({\fx-0.21},{\clienty}) circle (\dotr);
    \filldraw[black] ({\fx-0.07},{\clienty}) circle (\dotr);
    \filldraw[black] ({\fx+0.07},{\clienty}) circle (\dotr);
    \filldraw[black] ({\fx+0.21},{\clienty}) circle (\dotr);
    \node[anchor=south] at ({\fx},{\labely}) {$n-1$ clients};

    \draw[dotted, line width = 0.73pt] ({\fx-0.21},{\clienty-\dotr}) -- ({\fx},{\basey});
    \draw[dotted, line width = 0.73pt] ({\fx-0.07},{\clienty-\dotr}) -- ({\fx},{\basey});
    \draw[dotted, line width = 0.73pt] ({\fx+0.07},{\clienty-\dotr}) -- ({\fx},{\basey});
    \draw[dotted, line width = 0.73pt] ({\fx+0.21},{\clienty-\dotr}) -- ({\fx},{\basey});
    \node[anchor=east] at ({\fx-0.32},{0.5*(\clienty+\basey)}) {$0$};

    \node[anchor=north] at ({\fx},{\basey-\boxsize/2-0.08}) {$f = (n-1)(\gamma+1)$};

    \filldraw[black] ({\sx},{\basey}) circle (\solodotr);
    \node[anchor=south] at ({\sx},{\basey+0.24}) {$n$-th client};

    \node[anchor=north] at ({\fpx},{\basey-\boxsize/2-0.08}) {$f' = 1-\frac1n+\gamma(1-\gamma)$};

    \draw[dashed] ({\fx+\boxsize/2},{\basey}) -- node[above] {$1$} ({\sx-\solodotr-0.04},{\basey});
    \draw[dashed] ({\sx+\solodotr+0.04},{\basey}) -- node[above] {$\gamma$} ({\fpx-\boxsize/2},{\basey});
\end{tikzpicture}
\caption{A $2-(\gamma-1)^2$ lower bound for Greedy with the bidding rule $\beta_{\gamma}(\alpha,d)=[\alpha-\gamma d]^+$.}
\label{fig:greedy-hard-example-gamma}
\end{figure}

\vspace{1em}

We show that the example of Figure~\ref{fig:greedy-hard-example-gamma} gives indeed a lower bound of $2-(1-\gamma)^2-\frac1n$ for the bidding functions $\beta_\gamma(\alpha,d) = [\alpha - \gamma d]^+$ with $\gamma \in (1,1.224]$.

\noindent Let us define $\tau = 1+\gamma-\frac1n.$ At time $\alpha = \tau$, the co-located
clients do not yet bid towards $f'$ because $\tau - \gamma(1+\gamma) = 1-\gamma^2 -\frac1n < 0$.
At the same time, the isolated client bids $[\tau-\gamma^2]^+ = 1+\gamma-\gamma^2-\frac1n$ towards facility $f'$ and $[\tau-\gamma]^+ = 1-\frac1n$ towards facility $f$. Notice that at this time, both facilities can be paid for since
$$
[\tau-\gamma^2]^+ = f' \quad\text{and}\quad (n-1)\tau + [\tau-\gamma]^+ = f\,.
$$
\noindent Suppose Greedy breaks the tie by opening $f'$ first. Then the $n$-th client becomes inactive, while the other $n-1$ clients remain active because they are still at distance $1+\gamma > \tau$ from the opened facility. Moreover, the isolated client, now inactive, does not contribute to facility $f$ anymore, because its current distance to $f'$ is $\gamma$
and its bid under the bidding function is $[\gamma - 1\cdot\gamma]^+ = 0$. 
Therefore only the $n-1$ co-located clients continue to bid for $f$ and they open $f$ by time
\[
    \frac{f}{n-1} = 1+\gamma\,.
\]
At time $\alpha = 1+\gamma$, the co-located clients can either open $f$ or connect to $f'$.
Suppose that they connect to $f'$; notice that opening $f$ would result in a smaller final cost
because the isolated client would get its connection distance dropped to $1$ without extra bidding.
At the end of the execution, the $n$-th client has paid $\tau-\gamma^2$ in its first bid and
$\gamma$ for its connection cost. The $n-1$ co-located clients together pay exactly the opening cost of $f$, and therefore the factor of this example is
$$
\frac{\sum_{j = 1}^n \cost_j - f}{\sum_{j = 1}^n d_j} = \tau - \gamma^2 + \gamma = 2-(1-\gamma)^2 -\frac1n\,.
$$

Here, it is important that we use $\cost_j$ instead of the upper bound $\cost_j \le \alpha_j$. In particular, there is an example where the ratio 
$$
\frac{\sum_{j = 1}^n \alpha_j - f}{\sum_{j = 1}^n d_j},
$$
is approaching $2$; it is the same example as in Figure~\ref{fig:greedy-hard-example} with $f^\prime$ adjusted to $f^\prime=(2-\gamma)(1-\frac1n)$.

\begin{figure}[h]
\centering
\begin{tikzpicture}[x=1cm,y=1cm,every node/.style={font=\small}]
    \def\boxsize{0.21}
    \def\dotr{0.065}
    \def\solodotr{0.075}
    \def\seglen{2.35}
    \def\basey{0.105}
    \def\clienty{0.74}
    \def\labely{0.95}
    \def\fx{0.105}
    \def\sx{\fx+\seglen}
    \def\fpx{\fx+2*\seglen}

    \filldraw[black] ({\fx-\boxsize/2},{\basey-\boxsize/2}) rectangle ({\fx+\boxsize/2},{\basey+\boxsize/2});
    \filldraw[black] ({\fpx-\boxsize/2},{\basey-\boxsize/2}) rectangle ({\fpx+\boxsize/2},{\basey+\boxsize/2});

    \filldraw[black] ({\fx-0.21},{\clienty}) circle (\dotr);
    \filldraw[black] ({\fx-0.07},{\clienty}) circle (\dotr);
    \filldraw[black] ({\fx+0.07},{\clienty}) circle (\dotr);
    \filldraw[black] ({\fx+0.21},{\clienty}) circle (\dotr);
    \node[anchor=south] at ({\fx},{\labely}) {$n-1$ clients};

    \draw[dotted, line width = 0.73pt] ({\fx-0.21},{\clienty-\dotr}) -- ({\fx},{\basey});
    \draw[dotted, line width = 0.73pt] ({\fx-0.07},{\clienty-\dotr}) -- ({\fx},{\basey});
    \draw[dotted, line width = 0.73pt] ({\fx+0.07},{\clienty-\dotr}) -- ({\fx},{\basey});
    \draw[dotted, line width = 0.73pt] ({\fx+0.21},{\clienty-\dotr}) -- ({\fx},{\basey});
    \node[anchor=east] at ({\fx-0.32},{0.5*(\clienty+\basey)}) {$0$};

    \node[anchor=north] at ({\fx},{\basey-\boxsize/2-0.08}) {$f = 2(n-1)$};

    \filldraw[black] ({\sx},{\basey}) circle (\solodotr);
    \node[anchor=south] at ({\sx},{\basey+0.24}) {$n$-th client};

    \node[anchor=north] at ({\fpx},{\basey-\boxsize/2-0.08}) {$f' = (2-\gamma)(1-\frac1n)$};

    \draw[dashed] ({\fx+\boxsize/2},{\basey}) -- node[above] {$1$} ({\sx-\solodotr-0.04},{\basey});
    \draw[dashed] ({\sx+\solodotr+0.04},{\basey}) -- node[above] {$1$} ({\fpx-\boxsize/2},{\basey});
\end{tikzpicture}
\caption{A $2$ lower bound when replacing $\cost_j$ with $\alpha_j$ for bidding functions: $[\alpha-\gamma d]^+$.}
\label{fig:alpha-hard-example}
\end{figure}
Indeed, consider an execution of the greedy algorithm with $\beta(\alpha,d) = [\alpha-\gamma d]^+$ for the example in Figure~\ref{fig:alpha-hard-example}. When $\alpha =2 - \frac{2-\gamma}{n}$ both facilities are tight:
$$
[\alpha-\gamma]^+ = f' \quad\text{and}\quad (n-1)\alpha + [\alpha-\gamma]^+ = f\,.
$$
Again, we assume that $f^\prime$ is opened first. The bid from client $n$ towards facility $f$ drops to $\beta(1,1) = 0$. Thus the remaining active clients $j \leq n - 1$ increase their budget to $\alpha_j = 2$ at which point they can either open $f$ or connect to facility $f^\prime$.
In the latter case the ratio becomes,
$$
\frac{\sum_{j = 1}^n \alpha_j - f}{\sum_{j = 1}^n d_j} = \alpha_n = 2 - \frac{2-\gamma}{n}.
$$
However, for client $n$ it holds that 
$$
\cost_n = \beta(\alpha,1) + 1 = 2 -\frac{2-\gamma}{n} - \gamma + 1 \le 2 - (\gamma - 1).
$$
For all other $j \le n-1$, however, the cost remains unchanged, $\cost_j = \alpha_j$.
Thus, if we consider $\cost_j$ instead, one achieves a ratio of
$$
\frac{\sum_{j = 1}^n \cost_j - f}{\sum_{j = 1}^n d_j} = \cost_n \le 2 - (\gamma - 1).
$$

\subsection{Properties of Bidding Functions}\label{sec:natural-properties-bidding}

We now state a number of natural structural properties that the bidding functions must satisfy. 
These are precisely the properties used later in the analysis, and they capture the basic 
intuition that bids should behave smoothly, monotonically, and consistently under rescaling.

\begin{definition}[Admissible bidding functions]
\label{def:admissible-bidding}
A bidding function $\beta : \reals_{\ge 0}^2 \to \reals_{\ge 0}$ must satisfy the following properties.
\begin{enumerate}
    \item \emph{Non-negative:} $\beta(\alpha,d)\ge 0$ for every $\alpha,d \ge 0$.
    \item \emph{Budget-feasible:} $\beta(\alpha,d) \leq [\alpha-d]^+$ for every $\alpha,d \ge 0$.
    \item \emph{Continuous:} $\beta$ is continuous in both arguments.
    \item \emph{Monotonic:} $\beta$ is non-decreasing in its first argument and non-increasing in its second argument.
    \item \emph{Scale equivariant:} for every $\alpha>0$ and $d\ge 0$, it holds that 
    $$ \beta(\alpha,d)=\alpha \cdot \beta(1,d/\alpha)$$
    \item \emph{Superadditive:} whenever $x_1 \ge x_2 \ge x_3 \ge 0$,
    $$
        \beta(x_1,x_2) + \beta(x_2,x_3) \le \beta(x_1,x_3).
    $$
\end{enumerate}
\end{definition}

The first four properties are immediate from the intended interpretation of a bid. A client should never make a negative contribution or contribute more than its remaining budget after paying for its distance. Bids should also not change discontinuously under an infinitesimal perturbation of either the client's value or the distance to the facility. Additionally, the monotonicity guarantees that a client with larger value should not bid less while a farther facility should also not receive a larger bid.

Scale equivariance expresses that the bidding rule should be invariant under a uniform rescaling of the metric and of the opening costs. This is natural both algorithmically and analytically: if all distances and opening costs are multiplied by the same factor, then the algorithm should simply scale its bids by the same factor and preserve the same approximation guarantee.
Finally, superadditivity formalizes the idea that reconnecting through an intermediate facility should not cost more than connecting directly to the final facility. If a client with current value $x_1$ first freezes at distance $x_2$ and later improves its connection to distance $x_3$, then the total amount it bids along this two-step process, which is $\beta(x_1, x_2) + \beta(x_2, x_3)$, should be at most the amount it would bid if it connected directly from $x_1$ to $x_3$, i.e. $\beta(x_1,x_3)$.

\paragraph{Cost of the algorithm.} Having formally defined Algorithm~\ref{alg:dual_generalized}
and the properties of the bidding function, we can now calculate the total cost in an execution of the algorithm. Fix an execution of Algorithm~\ref{alg:dual_generalized}, and let $i_0,i_1,\ldots,i_t$ be the successive facilities that strictly improve the connection of a client $j$, ordered by time of appearance. Thus
\[
    d(i_0,j) > d(i_1,j) > \cdots > d(i_t,j)\,,
\]
where $i_0$ is the facility to which $j$ first connects, and $i_t$ is the facility serving $j$ at termination. If $j$ becomes inactive when its value is $\alpha_j^\star$, its total cost is
\[
    \cost_j(\SS^\star)
    =
    \beta(\alpha_j^\star,d(i_0,j))
    +
    \sum_{\ell=0}^{t-1}\beta(d(i_\ell,j),d(i_{\ell+1},j))
    +
    d(i_t,j)\,.
\]
The first term corresponds to the amount paid when the client first becomes inactive, the intermediate terms account for later improvements of its connection as new facilities open, and the final term is the remaining connection cost to the facility that serves the client in the final solution. The total cost of the algorithm is then
\[
    \cost(\SS^\star) = \sum_{j\in D}\cost_j(\SS^\star)\,.
\]

For the standard Greedy, which corresponds to the bid function $\beta(\alpha,d)=[\alpha-d]^+$, the summation of subsequent bids in a client's cost telescopes exactly. Indeed, every term is of the form ``old distance minus new distance'', and therefore
\[
    \cost_j(\SS^\star)
    =
    (\alpha_j^\star-d(i_0,j))
    +
    \sum_{\ell=0}^{t-1}\bigl(d(i_\ell,j)-d(i_{\ell+1},j)\bigr)
    +
    d(i_t,j)
    =
    \alpha_j^\star\,.
\]
This is the reason why the classical Greedy analysis is especially clean: the total cost of the algorithm is simply
\[
    \cost(\SS^\star)=\sum_{j\in D}\alpha_j^\star\,.
\]
For general bidding functions, this exact telescoping identity no longer holds, and this introduces one of the main difficulties in the analysis: even for $\beta_\gamma(\alpha,d)=[\alpha-\gamma d]^+$, the approximation ratio of the algorithm can no longer be upper bounded by the value of the same linear program, but instead must be upper bounded by the maximum of a family of $2^{\Theta(n^2)}$ linear programs.

\section{Approximation Guarantees}
\label{sec:approx_guarantee}
In this section we develop the approximation analysis for Algorithm~\ref{alg:dual_generalized}. In the classical Greedy analysis of \cite{JMS02,JainMMSV03}, the guarantee is captured by a factor-revealing linear program. For a general bidding function $\beta$, the same high-level approach still applies, but the resulting optimization problem is no longer linear in general. We therefore begin by formulating a factor-revealing mathematical program whose optimum controls the worst-case LMP ratio of the algorithm.

This framework will be used as the starting point for a rigorous proof for the one parameter family of bidding functions $\beta(\alpha,d)=[\alpha-\gamma d]^+$, yielding our provable $1.9499$-LMP guarantee. We first state the general factor-revealing program and its implication for the approximation ratio.

\begin{definition}[Factor-revealing Program]
\label{def:frp}
\text{Fix $n\ge 1$.}
\begin{align}
\max \quad & \sum_{j=1}^n \Paren{\sum_{k=j}^{n} \beta(r_{j,k}, r_{j,k+1}) + r_{j,n+1}} - f
\label{eq:fr-obj}
\\
\text{s.t.}\quad 
& \sum_{i=1}^{n} d_i \;=\; 1
\qquad && && \text{(normalization)}
\label{eq:fr-norm}
\\
& r_{j,k+1} \;\le\; r_{j,k}
\qquad && \forall\, 1\le j \leq k \leq n && \text{(monotonicity of $r$)}
\label{eq:fr-r-mono}
\\
& r_{k,k} \;\le\; r_{j,k} + d_k + d_j
\qquad && \forall\, 1\le j<k\le n && \text{(triangle constraint)}
\label{eq:fr-triangle}
\\
& r_{j, \ell} \leq r_{k, \ell} + d_k + d_j && \forall\, 1 \leq j \neq k < \ell \leq n + 1 && \text{(metric inequalities of $r$)}
\label{eq:fr-r-metric}
\\
& \sum_{j=1}^{k-1} \beta(r_{j,k},d_j)
  \;+\;
  \sum_{j=k}^{n} \beta(r_{k,k},d_j)
  \;\le\; f
\qquad && \forall\, 1\le k\le n && \text{(facility constraint)}
\label{eq:fr-facility}
\\
& d_j, f, r_{j,k} \;\ge\; 0
\qquad && \forall\, j,k \text{ in range.} && \text{(nonnegativity)}
\label{eq:fr-nonneg}
\end{align}
\end{definition}

As in the classical analysis of \cite{JMS02,JainMMSV03}, the analysis can be reduced to a worst-case local configuration around a single facility of an optimal solution. Fix such a facility $i$ and index the clients $D^\prime$ assigned to it. 
To motivate the definition of the factor-revealing program in Definition~\ref{def:frp}, consider an execution of Algorithm \ref{alg:dual_generalized} where we order the clients in $D^\prime$ from $1$ to $n=|D^\prime|$ in increasing order of their final $\alpha$-values, $\alpha^\star_1 \le \alpha^\star_2 \le \dots \le \alpha^\star_n$. Equivalently, one can order the clients according to the order of events in which they become inactive.
For the sake of the presentation we assume that no two clients become inactive in the same event.

We set $d_j = d(i,j)$ to be the distance of client $j \in D^\prime$ to the fixed facility $i$. Note that the program is scale-invariant and we can normalize the distances by imposing $\eqref{eq:fr-norm}$. 
For two clients $j < k$, set $r_{j,k}$ as the connection distance to the closest open facility of $j$ at the time of the algorithm just before $k$ becomes inactive. Moreover, $r_{j,j} = \alpha^\star_j$ will be equal to the final $\alpha$-value of a client $j$. 
The first term of the objective 
\begin{equation*}
    \sum_{j=1}^n \Paren{\sum_{k=j}^{n} \beta(r_{j,k}, r_{j,k+1}) + r_{j,n+1}}
\end{equation*}
is an upper bound to the cost of Algorithm \ref{alg:dual_generalized} for clients in $D^\prime$.
For each client $j$, the sum 
\[
     \beta(r_{j,j}, r_{j,j+1}) + \sum_{k=j+1}^{n} \beta(r_{j,k}, r_{j,k+1})+ r_{j,n+1}
\]
upper bounds
\[
    \cost_j(\SS^\star)
    =
    \beta(\alpha_j^\star,d(i_0,j))
    +
    \sum_{\ell=0}^{t-1}\beta(d(i_\ell,j),d(i_{\ell+1},j))
    +
    d(i_t,j)\,.
\]
To verify this claim, we repeatedly use the 
\emph{superadditivity} of the bidding function as per Definition~\ref{def:admissible-bidding} to go from the exact cost to the upper bound. 
This property implies that, for any two times $p<q$, we can bound $\sum_{\ell=p}^{q-1}\beta(d(i_\ell,j),d(i_{\ell+1},j))$ by $\beta(d(i_p,j),d(i_{q},j))$, and thus $\beta(r_{j,k},r_{j,k+1})$ upper bounds all contributions done by the clients for the corresponding distance changes between $r_{j,k}$ and $r_{j,k+1}$.
Moreover, $r_{j,n+1}$ is equal to the connection cost of that client at the end of the algorithm, $d(i_t,j)$.

Next, we will argue that $(r, d, f)$ is indeed feasible for the factor revealing program (Definition \ref{def:frp}).  The monotonicity constraint $\eqref{eq:fr-r-mono}$ follows from the monotonicity of $d(j,\SS)$. For the triangle constraint \eqref{eq:fr-triangle} note that the algorithm never increases the $\alpha$-value of an active client $j$ past $d(j,\SS)$. Now, just before $k$ becomes inactive, let $i_0 \in \SS$ be the facility client $j$ is connected to, so $d(j,i_0)=r_{j,k}$. Then,
\begin{equation*}
    r_{k,k} = \alpha^\star_k \leq d(k,\SS) \le d(k,i_0) \le d(k,i) + d(j,i) + d(j,i_0) = r_{j,k} + d_j + d_k\,.
\end{equation*}
Similarly, for the metric inequality constraint \eqref{eq:fr-r-metric}, let $i_1$ and $i_2$ be the facilities $j$ and $k$ are connected to when $\ell > j,k$ becomes inactive, respectively. Then
\begin{equation*}
    r_{j,\ell} = d(j, i_1) \le d(j,i_2) \le d(j,i) + d(k,i) + d(k,i_2) = r_{k,\ell} + d_k + d_j\,.
\end{equation*}
Next, we consider the facility constraint \eqref{eq:fr-facility} and the time in the algorithm when client $k$ becomes inactive. 
Again, by the simplifying assumption, no other client becomes inactive together with $k$. Then all clients $j < k$ are already inactive and their bids are equal to $ \beta(d(j,\SS), d(j,i)) = \beta(r_{j,k}, d_j)$. Similarly, all clients $j \ge k$ are still active and bid with the current $\alpha$-value which is equal to $\alpha^\star_k$ as $k$ becomes inactive at that time. Thus, all clients $j \ge k$ bid $\beta(\alpha^\star_k, d(i,j)) = \beta(r_{k,k}, d_j)$. The inequality holds because the sum of bids towards a facility never exceeds the opening cost.

An upper bound on the optimum of the factor-revealing program immediately yields an LMP-approximation guarantee for Algorithm~\ref{alg:dual_generalized}. The following theorem states this reduction for our generalized bidding setting. 

\begin{restatable}{theorem}{theoremfacilitylocation}
\label{theorem:facility-location}
    Let $\rho \geq 1$ be an upper bound on the optimal value of the factor-revealing program in Definition~\ref{def:frp} over all $n \geq 1$.
    Then, Algorithm \ref{alg:dual_generalized} run with facility cost $\hat{f} = \rho \cdot f$ is a $\rho$-LMP approximation algorithm for UFL instances with facility cost $f$.
\end{restatable}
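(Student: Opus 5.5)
We follow the classical dual-fitting reduction of \cite{JMS02,JainMMSV03}. Run Algorithm~\ref{alg:dual_generalized} with opening cost $\hat f=\rho f$ and let $\SS^\star$ be its output. The goal is
\[
\sum_{j}\dist(j,\SS^\star)\le \rho\bigl(\OPT_f-f|\SS^\star|\bigr).
\]
Rewriting, it suffices to show
\[
\sum_j \dist(j,\SS^\star)+\rho f|\SS^\star|\le \rho\,\OPT_f .
\]
The left-hand side is at most $\sum_j \cost_j(\SS^\star)$, where $\cost_j$ is defined with respect to the run at cost $\hat f$. The reason is that every facility in $\SS^\star$ was opened exactly when its bids summed to $\hat f$. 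Each such bid is charged once to the bidding client as part of its $\cost_j$: either the first term $\beta(\alpha_j^\star,d(i_0,j))$, or a reconnection term $\beta(d(i_\ell,j),d(i_{\ell+1},j))$. Clients whose bid is positive do switch to the new facility, since $\beta(x,d)>0$ forces $d<x$ by budget feasibility. Active clients bidding at the opening moment become inactive and connect to it. Hence $\sum_j\cost_j(\SS^\star)=\sum_j\dist(j,\SS^\star)+\hat f|\SS^\star|$; I would verify this accounting carefully, including simultaneous events.

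Next, fix an optimal UFL solution $\SS^{\mathrm{opt}}$ with connection cost $C$ and $|\SS^{\mathrm{opt}}|=m$. Partition the clients into stars $D'_i$ around each $i\in\SS^{\mathrm{opt}}$. It then suffices to prove, for every star,
\[
\sum_{j\in D'_i}\cost_j(\SS^\star)\le \hat f+\rho\sum_{j\in D'_i}d(i,j).
\]
Summing over the stars gives $\sum_j\cost_j\le \rho f m+\rho C=\rho\,\OPT_f$, as required.

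For a single star, order $D'_i$ by the time clients become inactive and define $d_j$ and $r_{j,k}$ as in the discussion after Definition~\ref{def:frp}. That discussion already shows that $(r,d,\hat f)$ satisfies constraints \eqref{eq:fr-r-mono}--\eqref{eq:fr-facility}, with $\hat f$ in place of $f$. It also shows that the objective's first term upper-bounds $\sum_{j\in D'_i}\cost_j$, via superadditivity and monotonicity of $\beta$. Normalization \eqref{eq:fr-norm} is obtained by rescaling everything by $\sum_j d_j$, which is legitimate by scale equivariance of $\beta$. The degenerate case $\sum_j d_j=0$ is handled by a limiting argument, or directly: then the facility constraint shows that no client's value exceeds what pays $\hat f$. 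The optimum of the program is at most $\rho$, so $\sum_j\cost_j-\hat f\le\rho\sum_j d_j$, which is exactly the per-star bound.

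The main obstacle is making the reduction to the program rigorous when several clients become inactive in the same event; the paper's motivation explicitly assumes this does not happen. I would handle it by breaking ties: order clients inactivated by the same event arbitrarily, and check that the facility constraint \eqref{eq:fr-facility} still holds. A client $j<k$ made inactive in the same event has $r_{j,k}\le\alpha$, and monotonicity gives $\beta(r_{j,k},d_j)\le\beta(\alpha,d_j)$, so the sum of bids only decreases compared to the real bid sum at that time, which is at most $\hat f$. A second point to check is that the bids to $i$ never exceed $\hat f$ even if $i\in\SS^\star$ is already open. After $i$ opens, inactive clients bid $\beta(x,d)$ with $x\le d$, which equals $0$, so the constraint holds trivially.
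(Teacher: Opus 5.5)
Your overall plan matches the paper's proof of Theorem~\ref{theorem:facility-location}:
\begin{itemize}
\item the cost accounting $\sum_j \cost_j=\sum_j d(j,\SS^\star)+\hat f|\SS^\star|$;
\item a per-star bound $\sum_{j\in D'}\cost_j\le \hat f+\rho\sum_{j\in D'}d(i,j)$, obtained by embedding the run into Definition~\ref{def:frp} and rescaling;
\item summation over the stars of an optimal solution.
\end{itemize}
The degenerate case $\sum_j d_j=0$ cannot occur in the paper because of \eqref{eq:dist-bound}, and your perturbation argument handles it anyway. The gap is in the step you yourself call the main obstacle: clients that become inactive in the same event.

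You only verify \eqref{eq:fr-facility} there, using $r_{j,k}\le\alpha$. That is what you get if $r_{j,k}$ is $j$'s connection distance after the event, $d(j,\SS\cup\{i\})$. With that choice the triangle constraint \eqref{eq:fr-triangle}, $r_{k,k}\le r_{j,k}+d_j+d_k$, breaks. The main-text justification $\alpha_k^\star\le d(k,\SS)$ holds only for the facilities open \emph{before} the event, and the newly opened facility can be much closer to $j$ and $k$ than $\alpha$.

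Concretely, place two clients at distance $1$ from each of two facilities $i,i^\star$ (a $4$-cycle metric), with $i^\star$ the center of the optimal star.
\begin{itemize}
\item Both facilities become tight simultaneously when $2\beta(\alpha,1)=\hat f$. Budget feasibility then gives $\alpha\ge 1+\hat f/2$.
\item If $i$ is opened, both clients are inactivated in that event, so $r_{2,2}=\alpha$, $r_{1,2}=1$, and $d_1=d_2=1$.
\item Then \eqref{eq:fr-triangle} requires $\alpha\le 3$, which fails for $\hat f>4$.
\end{itemize}
The other literal reading, the pre-event distance $d(j,\SS^{(k)})\ge\alpha$, is no better. It violates \eqref{eq:fr-r-mono} at $r_{j,j+1}\le r_{j,j}$ (the distance may even be $+\infty$ if nothing is open yet), and it can overstate the bids in \eqref{eq:fr-facility}.

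The fix, which is what the paper does, is to set $r_{j,k}=\alpha_j^\star$ ($=\alpha=r_{k,k}$) whenever $j<k$ are inactivated in the same event. Otherwise take $r_{j,k}=d(j,\SS^{(k)})$, with $\SS^{(k)}$ the set open just before that event. Then each constraint goes through:
\begin{itemize}
\item \eqref{eq:fr-triangle} is trivial for same-event pairs.
\item \eqref{eq:fr-facility} follows exactly as you argued; the terms are now the actual pre-event bids.
\item \eqref{eq:fr-r-mono} holds because $\alpha_j^\star\ge d(j,\SS)$ once $j$ is inactive.
\item \eqref{eq:fr-r-metric} needs a short case check. If $k,\ell$ are inactivated together and $j$ earlier, then $r_{j,\ell}\le r_{j,j}=\alpha_j^\star\le\alpha_k^\star=r_{k,\ell}$. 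If $j,\ell$ are inactivated together and $k$ earlier, then $r_{k,\ell}=r_{k,j}$ and $r_{j,\ell}=r_{j,j}$, so the claim reduces to \eqref{eq:fr-triangle}.
\end{itemize}
The superadditivity bound on $\cost_j$ is unaffected, since the extra same-event terms are $\beta(\alpha,\alpha)=0$.
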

The proof of Theorem \ref{theorem:facility-location} is similar to the proof of Lemma 3 in \cite{JMS02} and relies on the observations made above. The proof is left for the Appendix~\ref{sec:proof-lmp-approx-ratio-greedy}. 

Given Theorem~\ref{theorem:facility-location}, to upper bound the LMP-approximation ratio of Algorithm \ref{alg:dual_generalized} it is enough to upper bound the optimum of the factor-revealing program in Definition~\ref{def:frp} over all $n \geq 1$. For a general bidding function, obtaining such an upper bound appears difficult analytically, and is non-trivial even computationally.

We now turn to the analysis of the LMP-approximation ratio for a simple generalization of the linear bidding function in \cite{JMS02,JainMMSV03} which changes the slope of $d$:
\begin{equation*}    
    \beta_{\Mcoeff}(\alpha, d) = [\alpha - \Mcoeff \cdot d]^+\,.
\end{equation*}
For a small value of $\gamma$, the algorithms created by this family of bidding functions can be well analyzed, as the following theorem shows.
\begin{restatable}{theorem}{thmoneninefivelmp}\label{thm:1.95-lmp}
    Algorithm~\ref{alg:dual_generalized} run with facility cost $\hat{f} = 1.9499 \cdot f$ and the bidding function $\beta(\alpha,d) = [\alpha - 1.224\cdot d]^{+}$ is a $1.9499$-LMP approximation algorithm for UFL instances with facility cost $f$.
\end{restatable}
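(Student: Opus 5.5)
By Theorem~\ref{theorem:facility-location}, it suffices to show that for $\beta=\beta_\gamma$ with $\gamma=1.224$, the factor-revealing program of Definition~\ref{def:frp} has optimal value at most $2-(\gamma-1)^2\le 1.9499$ for every $n\ge 1$. One first checks that $\beta_\gamma$ is admissible. Non-negativity, continuity, monotonicity and scale equivariance are immediate. Budget feasibility holds because $\gamma\ge1$. Superadditivity holds because for $x_1\ge x_2\ge x_3$ we have $[x_1-\gamma x_2]^++[x_2-\gamma x_3]^+\le[x_1-\gamma x_3]^+$: if both terms are positive, their sum is $x_1-\gamma x_3-(\gamma-1)x_2\le x_1-\gamma x_3$, and otherwise monotonicity in the second argument handles it. The remaining and main work is to upper bound the program. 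I would do this by dual fitting: for each $n$, exhibit explicit nonnegative multipliers on the facility constraints \eqref{eq:fr-facility}, the triangle constraints \eqref{eq:fr-triangle} and the metric constraints \eqref{eq:fr-r-metric} whose combination dominates the objective minus $\rho\sum_i d_i$.

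Because $[\cdot]^+$ makes the program non-linear, I would first remove the nonlinearity. In the objective, each term $\beta(r_{j,k},r_{j,k+1})$ becomes a fresh variable $b_{j,k}\ge0$ with $b_{j,k}\ge r_{j,k}-\gamma r_{j,k+1}$. This is a relaxation direction we cannot use, since we maximise and would want an upper bound on $b_{j,k}$. So instead I would case on the sign pattern: on each pattern the objective is linear, which gives the family of LPs mentioned earlier. In the facility constraints the bids appear on the $\le f$ side, so $\beta(x,d_j)\ge x-\gamma d_j$ lets us replace them by the linear lower bound. This yields a valid relaxation for the terms we keep, and we may also drop terms. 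Combined with monotonicity, the objective per client telescopes to at most $r_{j,j}-(\gamma-1)\sum_{k>j:\text{active}}r_{j,k+1}$ plus the final distance. The gain over classical Greedy is thus the saved term $(\gamma-1)\,r_{j,k+1}$ whenever a client's connection genuinely improves.

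The key steps are as follows. First, I would simplify the worst case: by monotonicity and the triangle constraints, one may assume each client's $r_{j,\cdot}$ jumps at most once in a relevant way, that is, a single improvement step dominates, by superadditivity. Second, I would adapt the classical JMMSV dual, which averages the facility constraints with weights $1/n$ over $k$ together with triangle constraints, to obtain $\sum_j r_{j,j}-f\le 2\sum d_j$ with slack. Third, I would add extra multiples of the facility constraint with the $\gamma$-discounted bids, splitting clients according to whether $r_{j,k}\ge\gamma d_j$, to extract a saving of $(\gamma-1)^2$. The tight instance of Figure~\ref{fig:greedy-hard-example-gamma} guides which constraints must be tight, and hence the choice of multipliers. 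I would parametrise the multipliers by $\gamma$ and by the fraction $\theta$ of clients in an ``inactive-far'' regime, then verify nonnegativity of the resulting coefficient inequalities. These reduce to a quadratic in $\gamma$ that holds exactly when $\gamma\le 1.224$, matching the regime in which the lower-bound example is tight.

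I expect the main obstacle to be the third step: handling clients whose bid $[r_{j,k}-\gamma d_j]^+$ is zero, where the linear lower bound is useless and the facility constraint gives no information. There I would try a case split on a threshold $\tau$ of $r_{j,k}/d_j$ relative to $\gamma$. Clients below the threshold are charged directly to $d_j$ via the triangle constraint $r_{k,k}\le r_{j,k}+d_j+d_k$, which gives factor about $1+\gamma$ times small quantities. Clients above the threshold are charged through the facility constraint. I would then optimise the split point, aiming for the worst case to be exactly the two-facility example, giving $2-(\gamma-1)^2$. If a clean analytic certificate proves elusive, the fallback is to fix the multiplier structure and verify the finite set of resulting coefficient inequalities numerically and rigorously, uniformly in $n$ via a limiting continuous version.
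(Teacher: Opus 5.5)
\textbf{There is a genuine gap, and it sits in your Steps 1--2.} The reduction to Theorem~\ref{theorem:facility-location} and the admissibility check are fine.

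\textbf{The relaxation discards the key information.} Your per-client bound $\cost_j\le r_{j,j}-(\gamma-1)\sum_{k:\,b_{j,k}=1}r_{j,k+1}$ (plus final distance) keeps the base $r_{j,j}=\alpha_j$. It therefore forgets that every term before the first positive bid $f_j$ vanishes. The useful relaxation is instead $\cost_j\le r_{j,f_j}-(\gamma-1)r_{j,f_j+1}$, and $r_{j,f_j}$ can be much smaller than $r_{j,j}$.

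\textbf{Step 2 is false.} The claim $\sum_j r_{j,j}-f\le 2\sum_j d_j$ fails once the facility constraints use the discounted bids $[\cdot-\gamma d]^+$, because these constraints are weaker than the classical ones. Concretely, take $n>1/(\gamma-1)$ and set:
\begin{itemize}
\item $d_1=\dots=d_{n-1}=0$, $d_n=1$, and $f=(n-1)\gamma$;
\item $r_{j,k}=\gamma$ for $j\le k<n$, and $r_{j,n}=\gamma-\frac1n$ for $j<n$;
\item $r_{n,n}=\gamma+1-\frac1n$, and $r_{j,n+1}=0$ for all $j$.
\end{itemize}
Every constraint of Definition~\ref{def:frp} holds. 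The facility constraint is tight at every $k$, and the triangle constraint is tight at $k=n$. Each client's only positive bid is at $k=n$, where the saving term $(\gamma-1)r_{j,n+1}$ is zero. Hence your relaxed objective equals $\sum_j r_{j,j}-f=\gamma+1-\frac1n>2$, while the true objective is only $\gamma$.

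So no choice of multipliers can certify $2-(\gamma-1)^2$ for your relaxed program. The JMMSV certificate cannot be transferred and then improved by adding extra facility multiples; this is exactly the phenomenon of Figure~\ref{fig:alpha-hard-example}.

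\textbf{The single-jump reduction is invalid.} Your claim that one may assume a single jump ``by superadditivity'' is not a valid primal reduction. Changing the $r_{j,k}$ also changes the facility, triangle and metric constraints. Superadditivity can only be used on the objective side.

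\textbf{What the paper does instead.} It keeps the exact linear objective for each sign pattern $b$ and passes to the dual (Definition~\ref{def:frdual}). It then strengthens the right-hand side of the $(r_{j,i})$ dual constraint to $0$, $1$, $2-\gamma$ for $i<f_j$, $i=f_j$, $i>f_j$ respectively (Observation~\ref{obs:rhs-rij-upperbound}). This encodes precisely the information you dropped.

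Even with this, no single dual solution works for all patterns. The certificate is built adaptively from the $f_j$'s, using groups $A,B,C,D$ with thresholds $t_1,t_2,t_3^\star$ and a profile $\hat g$ satisfying $\sum_i\hat g_i=1$. Existence of suitable advice, namely the threshold and the fractional memberships, requires an intermediate-value argument along a continuous path of advice.

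\textbf{What is missing from your plan.} It never exhibits any multipliers. Its numerical fallback cannot yield a bound uniform in $n$ without such a structural construction.
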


To prove this result, Theorem~\ref{theorem:facility-location}  needs an upper bound on the factor-revealing program. This is what the following technical theorem describes.
\begin{restatable}{theorem}{thmupperboundfrp}\label{thm:1.95-frp}
    For values of $\gamma\in [1,1.224]$, the factor-revealing program from Definition~\ref{def:frp} with $\beta_\gamma(\alpha,d)=[\alpha-\gamma\cdot d]^+$ is upper-bounded by $\Mrho$ for all $n \geq 1$.
\end{restatable}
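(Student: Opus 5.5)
The plan is a dual-fitting argument in the spirit of the factor-revealing LP analysis of \cite{JMS02,JainMMSV03}. The aim is to bound the objective \eqref{eq:fr-obj} by $(\Mrho)\sum_j d_j$ using nonnegative combinations of the constraints. Since $\beta_\gamma$ is piecewise linear, I would fix the sign pattern of every argument $r_{j,k}-\gamma r_{j,k+1}$ and $r-\gamma d_j$. This leaves one LP per pattern, and the goal is a single parametrized family of dual certificates covering all patterns at once.

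\emph{Step 1: per-client cost bounds.} The first step is to replace the telescoping identity of the classical case by two inequalities valid for every client $j$, writing $c_j$ for its objective term. Superadditivity (Definition~\ref{def:admissible-bidding}) gives
\[
c_j\le \beta_\gamma(r_{j,j},r_{j,n+1})+r_{j,n+1}\le \max\bigl\{r_{j,j}-(\gamma-1)r_{j,n+1},\; r_{j,n+1}\bigr\}.
\]
Budget-feasibility gives $c_j\le r_{j,j}$. The gain over the classical analysis comes from the term $-(\gamma-1)r_{j,n+1}$ for clients that are still far away at the end. The tight example of Figure~\ref{fig:greedy-hard-example-gamma} realizes exactly $c_n=\tau-\gamma^2+\gamma$, so the certificate must make this bound tight.

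\emph{Step 2: bounding $f$ from below.} Next I would combine the facility constraints \eqref{eq:fr-facility} with weights over $k$, as in the JMS analysis that averages over the last clients. Each inactive client's bid $\beta_\gamma(r_{j,k},d_j)$ is then controlled by the metric constraints \eqref{eq:fr-r-metric} and \eqref{eq:fr-r-mono}. These relate $r_{j,k}$ and $r_{j,n+1}$ to $r_{k,k}$ plus $d_j+d_k$. The triangle constraint \eqref{eq:fr-triangle} gives $r_{k,k}\le r_{j,k}+d_j+d_k$, hence $\beta_\gamma(r_{j,k},d_j)\ge r_{k,k}-(1+\gamma)d_j-d_k$ whenever the bid is positive.

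The key case split is on whether client $j$'s final distance $r_{j,n+1}$ is small or large compared with $r_{j,j}$.
\begin{itemize}
\item If it is large, Step~1 yields the discount $(\gamma-1)r_{j,n+1}$.
\item If it is small, $j$ contributes a large inactive bid $\beta_\gamma(r_{j,k},d_j)$ to the facility constraint, which pays for its own $\alpha$.
\end{itemize}
Balancing the two regimes with a threshold chosen as a function of $\gamma$ should produce the loss $2-(\gamma-1)^2$. This matches the tie between $f$ and $f'$ at $\tau=1+\gamma$ in the example.

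\emph{Step 3: reducing to few clients.} I would then show the worst case collapses to a few representative clients: one "late" group paying essentially $f$, and one separated client. To do this I would use convexity/averaging (symmetrizing clients with equal $\alpha$) to merge clients. After this reduction, verify that the remaining low-dimensional program has value $\Mrho$ for $\gamma\in[1,1.224]$. The restriction $\gamma\le 1.224$ should enter exactly where the alternative configuration (the Figure~\ref{fig:alpha-hard-example}-type instance, with value about $2-(\gamma-1)$ using $\alpha$ but smaller with true costs) must stay below $\Mrho$.

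\emph{Main obstacle.} I expect the main difficulty to be Step~2, namely that the objective no longer telescopes. Intermediate reconnections of a client can each be charged only through $[\,\cdot\,]^+$ terms, so a single dual certificate must handle all $2^{\Theta(n^2)}$ sign patterns. My first attempt would be to upper bound each objective term by the two-term bound of Step~1, which removes all intermediate $r_{j,k}$ from the objective. The intermediate values would then appear only in the facility constraints, where monotonicity of $\beta_\gamma$ lets me replace $r_{j,k}$ by the extreme admissible value. If this loses too much, I would keep one intermediate breakpoint per client, namely the $r_{j,k}$ at the time the last facility constraint is applied.
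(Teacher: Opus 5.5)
\textbf{Step~1 is provably too lossy.} Since it is your primary plan for handling the non-telescoping objective, this is a genuine gap. In Definition~\ref{def:frp} the variables $r_{j,n+1}$ are never bounded from below. They appear only on the smaller side of \eqref{eq:fr-r-mono} and in \eqref{eq:fr-r-metric}, which holds trivially when all $r_{\cdot,n+1}$ vanish. So resetting $r_{j,n+1}=0$ for every $j$ keeps any feasible point feasible. After the reset, your per-client bound $\max\{r_{j,j}-(\gamma-1)r_{j,n+1},\,r_{j,n+1}\}$ is just $r_{j,j}$, and you would have to bound $\sup(\sum_j r_{j,j}-f)$.

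That supremum tends to $2$. The execution of Figure~\ref{fig:alpha-hard-example} gives a feasible point:
\begin{itemize}
\item client $1$ (isolated): $d_1=1$, $r_{1,1}=2-\frac{2-\gamma}{n}$, and $r_{1,k}=1$ for $2\le k\le n$;
\item clients $2,\dots,n$ (co-located): $d_j=0$ and $r_{j,k}=2$ for $2\le j\le k\le n$;
\item $f=2(n-1)$.
\end{itemize}
All facility, triangle and metric constraints hold, the facility ones with equality. After the reset your relaxed objective equals $2-\frac{2-\gamma}{n}$. This exceeds $2-(\gamma-1)^2$ once $n>(2-\gamma)/(\gamma-1)^2$, e.g.\ $n\ge 16$ at $\gamma=1.224$. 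So no nonnegative combination of constraints can certify your Step~1 bound below $2$; this is exactly the paper's warning that charging to $\sum_j\alpha_j$ gives ratio approaching $2$.

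The fix is to attach the $(\gamma-1)$-discount to a distance the constraints \emph{do} bound from below. The paper keeps, for each client, its first positive-bid time $f_j$. It uses $c_j\le r_{j,f_j}-(\gamma-1)r_{j,f_j+1}$ when $f_j<n$, and $c_j\le r_{j,n}$ otherwise. Here $r_{j,f_j+1}$ is lower-bounded through \eqref{eq:fr-triangle} with $k=f_j+1$. In the dual this becomes the right-hand-side pattern $0,\dots,0,1,2-\gamma,\dots$ of the $(r_{j,i})$ constraints. Your fallback breakpoint (``the $r_{j,k}$ at the time the last facility constraint is applied'') is not this choice, and you do not check whether it suffices.

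\textbf{Steps~2--3 do not yet amount to a proof.} The ``balance two regimes'' argument is heuristic, and the symmetrization in Step~3 goes the wrong way. The objective is a sum of positive parts of linear forms, hence convex, and the feasible region is convex because the facility constraints are sublevel sets of convex functions. Averaging feasible points can therefore only \emph{decrease} the objective. Unlike in the linear JMS program, you cannot assume a symmetric few-client worst case.

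What is missing is an explicit certificate that adapts to the bid pattern. In the paper, for each fixed $b$ the dual is built from a profile $\hat g$ with groups $A,B,C,D$ cut out by thresholds $t_1,t_2,t_3^\star$. The requirement $\sum_i\hat g_i=1$ is met by choosing advice and applying an intermediate-value argument along a continuous path of advice. The restriction $\gamma\le 1.224$ comes from the inequality $\frac{1-\tau-x}{\tau+x}\ge[1-x/\theta]^+$, with $\tau=2(\gamma-1)+(\gamma-1)^2$ and $\theta=\gamma(\gamma-1)$, which is needed for the pool constraints $(m_i)$. It does not come from keeping the Figure~\ref{fig:alpha-hard-example} configuration below $2-(\gamma-1)^2$.
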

Note that this constant makes the program tight, given the lower bound from Section~\ref{sec:greedy_algorithm_general_bids}.
Let us outline the proof of Theorem~\ref{thm:1.95-frp}; the full proof is deferred to Section~\ref{sec:195-dual-fitting}. Moreover, a lean certificate for Theorem~\ref{thm:1.95-frp} can be found at \url{https://github.com/epfl-thl2/project-k-median}.
First, we substitute the bidding function into the factor revealing objective function from Definition~\ref{def:frp}, obtaining:
\begin{equation*}
    \sum_{j=1}^{n} \left(\sum_{i=j}^{n} \beta(r_{j,i}, r_{j,i+1})  + r_{j,n+1} \right) - f
    = \sum_{j=1}^{n} \left(\sum_{i=j}^{n} [r_{j,i} - \Mcoeff \cdot r_{j,i+1}]^+  + r_{j,n+1} \right) - f\,.
\end{equation*}

The main difference between this objective function and that of the original program of \cite{JMS02,JainMMSV03} is that the latter can be reduced to a single linear program, while ours cannot. 
This makes it difficult to directly analyze the program through its dual, as it cannot be easily worked with.
Thus, we reduce our program to a family of \emph{exponentially many} linear programs, to then upper bound each one in a similar way as they will all share some structure which we can exploit.

First, the constraints can be easily linearized, just as in the original analysis in \cite{JMS02,JainMMSV03}. The only nonlinear terms in the constraints are $\beta(r_{j,k},d_j)$ or $\beta(r_{k,k},d_j)$. Introducing an auxiliary variable using its epigraph does the trick.

The problem is in the objective, as we mentioned earlier. To linearize it, we introduce new binary variables $b_{i,j} \in \{0, 1\}$ indicating whether each bid $\beta(r_{j,i}, r_{j,i+1})$ in the objective is positive; hence, $\beta(r_{j,i}, r_{j,i+1})=b_{i,j} \cdot [r_{j,i} - \Mcoeff \cdot r_{j,i+1}] $.
Hiding the constraints, this transforms the program into:
\begin{equation*}
    \max_{\substack{\\ \forall b_{i,j}\in\{0,1\} \\ \forall i,j:\  1\leq j\leq i \leq n}} \max_{r,f,d} \quad 
            \sum_{j=1}^{n} \left(\sum_{i=j}^{n} b_{i,j} \cdot (r_{j,i} -\Mcoeff \cdot r_{j,i+1})  + r_{j,n+1}\right) - f\,.
\end{equation*}
\noindent This program will now be interpreted as a maximization of an LP program value over all possible choices of the binary variables $b_{i,j}$.
As we are interested in finding an upper bound to the whole program, we define and work with the dual of each inner linear program. After eliminating auxiliary dual variables, we obtain the following family of dual programs, which we denote all together as the dual of the whole program. 
\begin{restatable}[Factor-revealing dual]{definition}{deffrdual}
\label{def:frdual}
    Fix $n\geq 1$.
    \begin{align*}
    \smash{\max_{\substack{\forall b_{i,j}\in\{0,1\} \\ \forall i,j: 1\leq j\leq i \leq n}}}\qquad 
    & \min \quad \rho \\
    &\begin{aligned}
        &\text{\emph{subject to:}} 
        &&  \sum_{i=1}^{n} \hat{g}_i \leq 1 && \text{($f$)} \\
        & \forall \ 1 \le j, i \le n: 
        && \hat{g}_i \geq g_{i,j} && \text{($\gamma_{i,j}$ and $\beta_{i,j}$)}\\
        & \forall \ 1\le j \le i \le n: 
        && \sum_{q=j+1}^n g_{j,q} + \sum_{p = j}^{i} \left(g_{p,j} + t_{p,j} - s_{p,j}\right) \geq b_{i,j} - (\Mcoeff-1) \sum_{p=j}^{i-1} b_{p,j}
        && \text{($r_{j,i}$)}\\
        & \forall \ 1\leq i \leq n: 
        &&\sum_{p=1}^i \sum_{j=1}^p s_{p,j} - \sum_{p=1}^i \sum_{j=1}^p t_{p,j} \ge 0 && \text{($m_i$)}\\
        & \forall \ 1\leq j\leq n:
        && \rho \geq \ \Mcoeff\cdot \sum_{i=1}^n g_{i,j} + \sum_{i=j}^n s_{i,j} + \sum_{i=j}^n t_{i,j} && \text{($d_j$)} \\
        & \forall \ 1 \le i,j \le n: && \hat{g}_{i}, \ g_{i,j}, \ t_{i,j}, \ s_{i,j} \geq 0 ,\ s_{i,i}=0
    \end{aligned}
    \end{align*}
\end{restatable}
Appendix~\ref{sec:appendix-deriving-dual} gives the derivation.
\begin{restatable}{lemma}{lemmamathematicalprogram}\label{lemma:mathematical-program-1.99}
    The optimum value of the factor-revealing dual in Definition~\ref{def:frdual} upper bounds the optimum value of the factor-revealing program in Definition~\ref{def:frp} for the bidding function $\beta_{\Mcoeff}(\alpha, d) = [\alpha - \Mcoeff \cdot d]^+$.
\end{restatable}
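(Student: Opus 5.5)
We fix $n$ and a feasible solution $(r,d,f)$ of the program in Definition~\ref{def:frp} with $\beta=\beta_\gamma$. The plan is to exhibit one choice of binary pattern $b$ for which the inner linear program is a valid relaxation, and then apply weak LP duality.

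\textbf{Step 1: pick the pattern and linearize.} We set $b_{i,j}=1$ exactly when $r_{j,i}-\gamma r_{j,i+1}>0$. With this choice, every bid in the objective equals $b_{i,j}(r_{j,i}-\gamma r_{j,i+1})$, so the objective is unchanged.

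Next we replace each bid $\beta(r_{j,k},d_j)$ and $\beta(r_{k,k},d_j)$ appearing in the facility constraint \eqref{eq:fr-facility} by its epigraph variable. Concretely, we introduce auxiliary variables $x\ge 0$ with $x\ge r-\gamma d$, and require that the sum of the $x$'s is at most $f$. Since $[\,\cdot\,]^+$ is the smallest such $x$, the original point extends to a feasible point of this linear system with the same objective value.

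Therefore the optimum of the factor-revealing program at $n$ is at most
\[
\max_b \mathrm{LP}_b ,
\]
where $\mathrm{LP}_b$ is the linear program with the objective fixed by $b$ and the linearized constraints \eqref{eq:fr-norm}--\eqref{eq:fr-nonneg}. The monotonicity, triangle and metric constraints are already linear.

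\textbf{Step 2: dualize each $\mathrm{LP}_b$.} For a fixed $b$, we write the Lagrangian dual of $\mathrm{LP}_b$. The dual variables are assigned as follows:
\begin{itemize}
\item $\rho$ for the normalization \eqref{eq:fr-norm};
\item variables for the monotonicity constraints (later absorbed into $t,s$);
\item variables for the triangle and metric constraints;
\item the facility multipliers, which yield $\hat g_i$ together with the epigraph multipliers $g_{i,j}$.
\end{itemize}
Weak duality gives $\mathrm{LP}_b\le$ the value of any feasible dual.

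We then check that the dual constraints reduce to those of Definition~\ref{def:frdual}, matched to primal variables as labeled there.
\begin{itemize}
\item The column of $f$ gives $\sum_i \hat g_i\le 1$, because $f$ has coefficient $-1$ in the objective.
\item The epigraph variables force $g_{i,j}\le \hat g_i$.
\item The column of $r_{j,i}$ carries objective coefficient $b_{i,j}-\gamma b_{i-1,j}$. After absorbing the monotonicity multipliers, this telescopes into the cumulative form $b_{i,j}-(\gamma-1)\sum_{p<i}b_{p,j}$, which is why the constraint appears summed over $p=j,\dots,i$.
\item The column of $d_j$ collects $\gamma g$ together with the triangle/metric multipliers $s,t$, and is bounded by $\rho$.
\end{itemize}
Any sign slack, namely the constraint $(m_i)$ and the condition $s_{i,i}=0$, should come from the dual variables of the constraint on $r_{j,n+1}$ and from the index ranges of \eqref{eq:fr-r-metric}.

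The dual minimizes $\rho$. Hence for this $b$ the optimum of the factor-revealing program is at most $\min\rho$, which in turn is at most $\max_b\min\rho$. This is the claim.

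\textbf{Main obstacle.} The main difficulty is the bookkeeping in Step 2. We must show that eliminating the auxiliary dual variables (monotonicity and epigraph multipliers) and aggregating the triangle/metric multipliers into $s_{p,j},t_{p,j}$ gives exactly the stated inequalities, rather than a weaker form. We must also check that every transformation goes in the upper-bounding direction, so that replacing equalities by inequalities only relaxes the dual.

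I would first do this for $n=2$ explicitly to fix the signs. I would then do the general telescoping substitution: set the monotonicity multiplier of $(j,i)$ equal to the partial sum of the $r$-column residuals. This is the one step that should be carried out carefully, as in Appendix~\ref{sec:appendix-deriving-dual}.
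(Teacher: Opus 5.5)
\textbf{There is a genuine gap in Step~2.} Your Step~1 is fine. Step~2, however, claims that the dual of your $\mathrm{LP}_b$ reduces to the inner program of Definition~\ref{def:frdual} for the \emph{same} $b$, and this is false because of the variables $r_{j,n+1}$.

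The objective of Definition~\ref{def:frp} contains the terms $+r_{j,n+1}$ in addition to the bids. Moreover, $r_{j,n+1}$ appears in $r_{j,n+1}\le r_{j,n}$ and in \eqref{eq:fr-r-metric} with $\ell=n+1$. Definition~\ref{def:frdual} has no dual constraint for this variable. In your dual, the column of $r_{j,n+1}$ has right-hand side $1-\gamma b_{n,j}$, which equals $1$ when $b_{n,j}=0$. This must be covered by the multiplier of $r_{j,n+1}\le r_{j,n}$, or by $\ell=n+1$ metric multipliers paid for in the $(d)$ constraints. After telescoping, it raises the right-hand side of $(r_{j,n})$. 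So your dual is more constrained than Definition~\ref{def:frdual} at the same $b$, not equal to it.

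A concrete failure occurs for $n=1$. The point $d_1=1$, $r_{1,1}=r_{1,2}=\gamma$, $f=0$ is feasible with value $\beta_\gamma(\gamma,\gamma)+\gamma=\gamma$, and your rule gives $b_{1,1}=0$. But for $b_{1,1}=0$ the inner program of Definition~\ref{def:frdual} is feasible with all variables zero; indeed $(m_1)$ with $s_{1,1}=0$ forces $t_{1,1}=0$. Its value is therefore $0<\gamma$, so ``for this $b$ the optimum is at most $\min\rho$'' fails.

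The missing step is the paper's choice: take $b_{n,j}=1$ for all $j$, use the indicator only for $i<n$, and set $r_{j,n+1}:=0$. This is justified by $[r_{j,n}-\gamma r_{j,n+1}]^+ + r_{j,n+1}\le r_{j,n}$, valid for $\gamma\ge 1$ and $0\le r_{j,n+1}\le r_{j,n}$. The replacement can only increase the objective, and it removes $r_{\cdot,n+1}$ and every constraint containing it from the LP.

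\textbf{Where $(m_i)$ and $s_{i,i}=0$ come from.} They do not come from $r_{j,n+1}$, as you suggest. In the paper, $s$ and $t$ are the multipliers of a reformulation (Definition~\ref{def:intermediate-frp}) with an auxiliary variable $m_i$:
\begin{itemize}
\item $r_{j,i}\le m_i+d_j$ for $j\le i$, with multiplier $t_{i,j}$;
\item $m_i\le r_{k,i}+d_k$ for $k<i$ only, with multiplier $s_{i,k}$, hence $s_{i,i}=0$;
\item $m_i\ge m_{i+1}$, whose eliminated multipliers make $(m_i)$ cumulative.
\end{itemize}
Every feasible point extends to this reformulation via $m_i=\min_{k<i}(r_{k,i}+d_k)$ for $i\ge 2$. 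After that, strong duality and Lemma~\ref{lem:fr-intermediate-eq-dual} give the claim.

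If you keep the original three-index inequalities with multipliers $\lambda_{j,k,\ell}$, mind the direction. Weak duality needs a map \emph{from} feasible solutions of Definition~\ref{def:frdual} \emph{to} your dual. That is, you must disaggregate $(s,t)$ into $\lambda$'s: match deposits at times $p\le i$ with withdrawals at time $i$, and cover the shift from $p$ to $i$ by extra multipliers on $r_{k,q+1}\le r_{k,q}$ for $p\le q<i$. Aggregating your $\lambda$'s into $(s,t)$, as you propose, only proves the inequality in the wrong direction.
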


The remaining goal is to find a feasible solution of value at most $\rho$ for each inner linear program. There appears to be no single dual solution that can work for all choices of $b_{i,j}\in\{0,1\}$, since the constraints depend on these values. To work around this issue, we need to design an adaptive dual solution that depends on the values of $b_{i,j}$.

Our construction adapts specifically to the \emph{first nonzero bidding time} $f_j$, i.e. the first time $i$ that has $b_{i,j}=1$, while ignoring the rest of values $b_{i,j}$ after that time instance. This simplifies the dual as we can upper bound the right hand side of the $(r_{j,i})$ constraint as: $0$ for $i<f_j$, $1$ for $i=f_j$ and $2-\gamma$ for $i>f_j$; reducing the complexity of finding a solution a lot.

Given these values of $f_j$, a central obstacle is finding a correct profile to the values of $\hg_j$ that appear in the dual. We create a group of indices $D$ that all have $\hg_j=0$ for all $j\in D$, and all indices out of this group having a different kind of behavior with $\hg_j>0$. The group $D$ is defined through a threshold $t_3^\star$, such that all $j$ with $f_j>t_3^\star$ and some with $f_j=t_3^\star$ are in $D$. We group the values of $t_3^\star$ and the memberships in $D$ of elements with $f_j=t_3^\star$ together in some advice variables $\Gamma$. Then, we run our procedure with the generated $D$ and if at the end we indeed have $\sum_{i=1}^n \hg_i = 1$ then we prove that we get a correct solution. This is summarized in the following lemma.

\begin{lemma}\label{lem:informal-general-case-procedure}
    For an assignment of the $b_{i,j}\in\{0,1\}$ values, correct advice variables $\Gamma$ and a value of $\Mcoeff\in(1,1.224]$, there exists a procedure that generates a $\Mrho$ feasible solution for the corresponding inner program of the factor-revealing dual from Definition~\ref{def:frdual}.
\end{lemma}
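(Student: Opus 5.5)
The plan is to construct the dual solution explicitly, in the form the preceding discussion suggests. First simplify the $b$-dependence. For each $j$ let $f_j$ be the first index $i\ge j$ with $b_{i,j}=1$, and set $f_j=n+1$ if there is none. The right-hand side of the $(r_{j,i})$ constraint is at most $0$ for $i<f_j$, exactly $1$ for $i=f_j$, and at most $1-(\Mcoeff-1)=2-\Mcoeff$ for $i>f_j$, since $\sum_{p=j}^{i-1}b_{p,j}\ge 1$. It therefore suffices to satisfy the relaxed system with these right-hand sides, which depends only on the vector $(f_j)_j$. The advice $\Gamma$ fixes the threshold $t_3^\star$ and decides which $j$ with $f_j=t_3^\star$ belong to the zero group $D$. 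Every $j$ with $f_j>t_3^\star$ is also placed in $D$.

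Next I would define the variables. For $j\in D$ set $\hg_j=0$, hence $g_{j,\cdot}=0$. The $(r_{j,i})$ constraints for these $j$ must then be covered by the $t$ and $s$ variables. I would place a single $t_{f_j,j}$ mass of $1$, and further $t$-mass of $2-\Mcoeff$ spread over later indices. The cumulative constraint $(m_i)$ requires $\sum s\ge\sum t$ in every prefix, so this $t$-mass must be paid for by $s$-mass on earlier coordinates $(p,j')$ with $p\le i$ and $p\ne j'$. For $j\notin D$, where $f_j\le t_3^\star$, I would use the classical JMMSV-style profile: a nonincreasing $\hg_j$ with $g_{j,q}=\hg_j$ for $q>j$, together with $g_{p,j}$ for $p\le j$. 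The $g$ sums then make $\sum_{q>j}g_{j,q}+\sum_{p\le i}g_{p,j}$ reach $1$ by time $f_j$. The $t,s$ variables transfer the leftover slack of these nonzero-$\hg$ indices to the indices in $D$. The profile for $\hg$ would be chosen, for example, as piecewise linear in the rank of $f_j$, and calibrated so that the $(d_j)$ constraint evaluates to exactly $\Mcoeff\sum_i g_{i,j}+\sum s_{\cdot,j}+\sum t_{\cdot,j}\le 2-(\Mcoeff-1)^2$. The worst case should reproduce the tight example, where one client pays $\Mcoeff$ via $g$ and $2-\Mcoeff-\Mcoeff(\Mcoeff-1)$-type terms via $t$.

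The verification then has four parts, in order. The $(r_{j,i})$ constraints follow by cases $i<f_j$, $i=f_j$, $i>f_j$, separately for $j\in D$ and $j\notin D$. The prefix constraints $(m_i)$ follow from showing that the $s$-mass created at index $p$ is at least the $t$-mass needed up to $p$; this uses that elements of $D$ have $f_j\ge t_3^\star$. The $(d_j)$ constraints are a one-variable inequality in $\Mcoeff$ that must hold on $(1,1.224]$. Finally, $\sum\hg_i\le 1$ is exactly the correctness condition on $\Gamma$ that the lemma assumes.

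I expect the $(d_j)$ check to be the main obstacle, in combination with $(m_i)$. The $s$-mass used to fund $t$ for $D$ is charged to the $(d_{j'})$ constraints of other clients, and the budget $2-(\Mcoeff-1)^2$ leaves only $\Mcoeff-(\Mcoeff-1)^2$ beyond the $\Mcoeff\hg$ term. The restriction $\Mcoeff\le 1.224$ should enter precisely here: I expect it to be the root of a polynomial inequality, something like $(\Mcoeff-1)(\ldots)\le 1-(\Mcoeff-1)^2$, that ensures the transfer fits. My first attempt would be to make the $s$ transfer proportional to $\hg_{j'}$ and tune a single free slope parameter until the inequality is tight at the extremal $f$-configuration.
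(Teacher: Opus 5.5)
Your reduction to the first-bid times $f_j$, and the idea of a zero group $D$ fixed by threshold advice, match the paper's skeleton. But the proposal stops where the lemma's content begins. The statement is an explicit dual certificate plus its verification, and you leave the certificate unspecified: $\hat g$ is ``piecewise linear in the rank of $f_j$'' and the transfers are ``tuned''. So none of your four checks can be carried out, and ``correct advice'' has no content.

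In the paper the profile is forced by the credit bookkeeping. Write $\varepsilon=\gamma-1$, $\theta=\varepsilon+\varepsilon^2$, $\tau=2\varepsilon+\varepsilon^2$ and $R_i=\sum_{\ell>i}\hat g_\ell$. The profile is built backwards from $t_3^\star$ by $\hat g_i=(1-\mu_i^D)\frac{R_i+\tau}{n-i}$. Then a late non-$D$ client holds credit $R_{j-1}+\tau$: initial credit $(n-j)\hat g_j=R_j+\tau$ plus $\hat g_j$, with its later column credit passed on to the pool. A single withdrawal $1-\tau-R_{j-1}$ brings it to exactly $1$ at $f_j$. Below the largest index $t_1$ with $R_{t_1-1}\ge 1-\tau$, after a fractional $A/B$ split at $t_1$, the profile switches to $\hat g_i=(1-\mu_i^D)\frac{1-[\theta-R_{f_i}]^+}{n-i-(\varepsilon/\tau)z}$, where $z$ is a normalization term coming from the $D$-clients. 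Correct advice means this recursion closes with $R_0=1$ exactly. The equality, not merely the $(f)$ constraint, is used in the verification, for instance in the total-generation identity behind the pool checks at $i=t_3^\star$ and $i=n$.

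The pieces you do specify are partly wrong, and you put the difficulty in the wrong place. Setting $\hat g_j=0$ kills only the row $g_{j,\cdot}$. The column credits $g_{p,j}\le\hat g_p$ for $p\ge j$ remain available; note also that client $j$'s credit comes from $g_{p,j}$ with $p\ge j$, not $p\le j$. The paper's $D$-clients take $g_{p,j}=\hat g_p$, up to a $\chi_j$ correction below $t_1$, and borrow only $t_{f_j,j}=1-R_{j-1}$ from the pool. Covering $D$ by $t$ alone makes every $D$-client borrow a full unit with no identified lender. For example, when all $f_j=n$ the paper's certificate puts every client in $D$ with $\hat g_n=1$ and no pool traffic at all. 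Moreover, since $(r_{j,i})$ is cumulative, once credit $1$ is reached at $f_j$ no further $t$-mass of $2-\gamma$ is needed; adding it only worsens $(d_j)$ and $(m_i)$.

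Finally, $\gamma\le 1.224$ does not enter through $(d_j)$. There every group's charge is at most $\gamma+1-\theta=2-\varepsilon^2$ for every $\gamma$, by $\gamma\varepsilon=\theta$. It enters in the pool constraints $(m_i)$ for $i\ge t_3^\star$, where one must show that the deposits of the $B/C$ clients cover the withdrawals of the $D$-clients. This uses the product formula $R_{i-1}+\tau=(R_i+\tau)\bigl(1+\frac{1-\mu_i^D}{n-i}\bigr)$ together with $\frac{1-\tau-x}{\tau+x}\ge[1-x/\theta]^+$ on $[0,1-\tau]$. Already at $x=0$ this needs $\tau\le\frac12$, i.e.\ $\gamma\le\sqrt{3/2}$. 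Your sketch dismisses this step in one sentence, but it is the heart of the paper's proof.
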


The second part of the proof relies on proving that there do exist correct advice variables $\Gamma$ such that through them, the final check $\sum_1^n \hg_i = 1$ follows. This follows from a continuity argument. We make the memberships in $D$ for the clients with $f_j=t_3^\star$ fractional, so that we can continuously interpolate between all the different possible advice variables. For example, suppose that two clients $p<q$ have $f_p=f_q=t_3^\star<n$, and assume that both of them are in $D$. Increasing by one the value of $t_3^\star$ changes their assignments discretely. Instead, we can first decrease the fractional membership $\mu_q^\star$ continuously from $1$ to $0$,
and then decrease $\mu_p^\star$ from $1$ to $0$. This shows how we can continuously interpolate between all choices.

We prove that the resulting value of $\sum_1^n \hg_i$ also varies continuously, with two endpoints that satisfy that the quantity is at least $1$ on one and at most $1$ on the other. Then, applying the intermediate value theorem alone does the trick. This is summarized by the following lemma.

\begin{lemma}\label{lem:informal-good-tau-existence}
    For any assignment of the $b_{i,j}\in\{0,1\}$ values with $\Mcoeff\in(1,1.224]$, there always exist correct advice variables $\Gamma$ with respect to the procedure from Lemma~\ref{lem:informal-general-case-procedure}.
\end{lemma}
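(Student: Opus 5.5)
The plan is to prove Lemma~\ref{lem:informal-good-tau-existence} by a continuity argument (intermediate value theorem) over a one-parameter family of fractional advice.

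\textbf{Step 1: fix the combinatorial data.} Fix the assignment $b_{i,j}$ and hence the first nonzero bidding times $f_j$. Advice $\Gamma$ consists of a threshold $t_3^\star\in\{1,\dots,n+1\}$ together with memberships $\mu_j\in\{0,1\}$ for the clients with $f_j=t_3^\star$. I would relax these memberships to $\mu_j\in[0,1]$ and extend the procedure of Lemma~\ref{lem:informal-general-case-procedure} to fractional memberships. Concretely, a fractionally-in-$D$ client gets a $\mu_j$-weighted interpolation of its two dual behaviours: the $\hat g_j=0$ behaviour and the $\hat g_j>0$ behaviour.

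\textbf{Step 2: build a path.} Order all advice configurations linearly: decrease $t_3^\star$ from $n+1$, where every client is in $D$, down to $1$. At each threshold value, turn off the memberships of the clients with $f_j=t_3^\star$ one at a time, in decreasing index order, each continuously from $1$ to $0$. The end state of one threshold coincides with the start state of the next, so we obtain a continuous path $\lambda\mapsto\Gamma(\lambda)$ with $\lambda\in[0,1]$. Every discrete advice choice lies on this path.

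\textbf{Step 3: continuity.} Let $\Phi(\lambda)=\sum_i\hat g_i$ denote the output of the procedure along the path. I would verify that $\Phi$ is continuous. Each quantity the procedure defines ($g_{i,j}$, $s$, $t$, $\hat g$) should be an explicit expression: sums and products of the $\mu_j$ with constants depending on $\gamma$ and the $f_j$, possibly with maxima or minima. Such expressions are continuous in $\mu$. At the junction points the two adjacent formulas must agree; this is exactly why the ordering was chosen so that a client fully in $D$ at threshold $t$ behaves identically to one "above threshold".

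\textbf{Step 4: endpoints.} Check the endpoints:
\begin{itemize}
\item At $\lambda=0$, all $\hat g=0$, so $\Phi(0)=0\le 1$.
\item At $\lambda=1$, $D=\emptyset$ and every $\hat g_j$ takes its positive profile. I would show that $\Phi(1)\ge1$ using the bound $\gamma\le1.224$. If this fails, I would scale the construction so that the endpoint value dominates $1$. Alternatively, I would argue that if $\Phi(1)<1$, the excess slack in constraint $(f)$ already yields feasibility directly.
\end{itemize}
The intermediate value theorem then gives a $\lambda^\star$ with $\Phi(\lambda^\star)=1$, i.e.\ correct (possibly fractional) advice. The procedure of Lemma~\ref{lem:informal-general-case-procedure} must therefore be stated to accept fractional memberships, which I would confirm.

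\textbf{Main obstacle.} I expect the hardest part to be Step 3 together with the endpoint bound $\Phi(1)\ge1$. Continuity across threshold changes requires the procedure's case analysis (on $f_j$ relative to $t_3^\star$) to glue seamlessly. The endpoint bound requires a genuine inequality in $\gamma$. I would first attempt it on the simplest profile, with all $f_j=j$, and then argue monotonicity in the $f_j$.
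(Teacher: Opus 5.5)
Your overall strategy matches the paper's: fractional advice along a path, continuity of $R_0=\sum_i\hat g_i$, and the intermediate value theorem between an all-in-$D$ endpoint with $R_0=0$ and an endpoint with $R_0\ge1$. But Step~4 has a genuine gap. The upper endpoint cannot come from an inequality in $\gamma$, and your fallbacks fail.

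Take $b_{i,j}=0$ for all $i<n$, so $f_j=n$ for every $j$. For $t_3^\star<n$, every client has $f_j>t_3^\star$ and is fully in $D$. For $t_3^\star=n$, admissibility forces $\mu_j^\star=1$ on $\mathrm{Ones}(n)$, i.e.\ on everyone. Hence every $\hat g^{(2)}_i$ with $i<n$ carries the factor $1-\mu_i^D=0$, and the Phase-2 rule $(1-\mu_i^D)(R_i+\tau)/(n-i)$ is undefined at $i=n$. Unless $\hat g_n$ is a free parameter, $R_0=0$ for every threshold and every membership vector.

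In particular there is no ``$D=\emptyset$'' endpoint, since client $n$ always has $f_n=n$ and stays in $D$. Rescaling a zero profile does nothing. Slack in $(f)$ cannot give feasibility either: with $g\equiv0$, each $(r_{j,n})$ forces $\sum_{p\ge j}(t_{p,j}-s_{p,j})\ge1$, and summing over $j$ contradicts $(m_n)$.

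The missing idea is the extra advice scalar $\hat g_n^\star\in[0,1]$, present only when $t_3^\star=n$, which stands in for the undefined $\hat g_n$. The last segment of the path raises it from $0$ to $1$. Then $R_0(\Gamma_1)\ge\hat g_n^\star=1$ is trivial (Observation~\ref{obs:oracle-R0-gamma-1}), and $\gamma\le1.224$ plays no role there.

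Step~3 is also understated. The thresholds $t_1=\max\{i:R^{(2)}_{i-1}\ge\bar\tau\}$ and $t_2$ are integer-valued and jump along the path. At $t_1$ the rule for $\hat g_i$ switches from the Phase-2 recursion to the Phase-1 formula. So ``sums, products, maxima and minima of the $\mu_j$'' does not describe the profile, and continuity has to be engineered in three ways:
\begin{itemize}
\item the fractional split of client $t_1$ (Definition~\ref{def:t1-boundary-split}), whose factor $\alpha$ tends to $0$ exactly when $t_1$ jumps down;
\item writing $z$ as a sum over all indices, so that it no longer depends on $t_2$;
\item monotonicity of each $R^{(2)}_i$ in the path parameter (memberships only decrease), which controls the one-sided limits at the jump points.
\end{itemize}

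Finally, your sweep runs the wrong way for this procedure. $D$ must contain every client with $f_j>t_3^\star$, so clients leave $D$ in increasing order of $f_j$: $t_3^\star$ increases from $1$ to $n$, with larger indices switched off first inside $\mathrm{Ones}(t_3^\star)$. Your intermediate states keep the clients with small $f_j$ in $D$, which is not admissible advice, and $t_3^\star=n+1$ is not an allowed value.
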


These are simplified versions of the lemmas that we use in the full proof; see Section~\ref{sec:195-dual-fitting} for the full statements and proofs. We can now combine these lemmas to achieve the final result:

\begin{proof}[Proof sketch of Theorem~\ref{thm:1.95-lmp}]
    The combination of Lemma \ref{lem:informal-general-case-procedure} and Lemma \ref{lem:informal-good-tau-existence} implies that there always exists a solution of cost at most $\Mrho$ to each of the dual inner programs. This gives an upper bound of $\Mrho$ on the general dual program. Using this together with Lemma~\ref{lemma:mathematical-program-1.99} that states that this dual program is an upper bound on the factor revealing, it proves the theorem, given that $2-(1.224-1)^2\leq 1.9499$.
\end{proof}

\section{Log-Adaptive Algorithm}
\label{sec:logad}

In this section we present a \textit{smooth} version (Algorithm \ref{alg:log-adaptive-greedy}) of our greedy algorithm with general bids (Algorithm \ref{alg:dual_generalized}), which we call the log-adaptive algorithm.
Thus, we follow the recipe of~\cite{MR4928455-Cohen25} for obtaining an approximation algorithm for $k$-median that opens $k + O(\log n)$ centers: they first develop a log-adaptive version of the Greedy algorithm, and then combine it with a merging procedure that walks between solutions to find one that opens $k + O(\log n)$ centers.
We also follow these two steps, although our strategy for analyzing the log-adaptive algorithm is substantially different from theirs.
We then state our guarantees for $k$-median in Subsection~\ref{subsec:k-median}; the merging procedure is deferred to Section~\ref{sec:merging}.

The analysis of~\cite{MR4928455-Cohen25} derives the guarantees of their log-adaptive algorithm from scratch, by redoing the analysis of~\cite{JMS02,JainMMSV03} in the log-adaptive context.
In contrast, our goal is to connect the approximation ratio of the log-adaptive algorithm to the factor-revealing program in Definition~\ref{def:frp}.
Thus, we show that the log-adaptive algorithm induces a feasible solution to the factor-revealing program, and transfer the upper bound on the value of that program to an upper bound on the cost of our algorithm.

To make the Greedy algorithm smoother, \cite{MR4928455-Cohen25} increase the $\alpha$-values in geometrically spaced steps, i.e., the time parameter $\theta$ takes values in $\{1,(1+\eps^2),(1+\eps^2)^2,(1+\eps^2)^3, \dots\}$.
With this strategy, however, it is possible for clients to significantly overbid for facilities when $\theta$ jumps to $(1+\eps^2)\theta$.
To fix this issue, \cite{MR4928455-Cohen25} scale down the metric $d$ by a $1-\delta$ factor, which leads to a multiplicative $(1-O(\delta))$ loss in the LMP-approximation ratio.
This still only resolves the issue for clients that are far from facilities, so they choose to allow clients that are close to facilities to raise their $\alpha$-value in-between steps.
We also apply these fixes, but in addition we will need stronger structural properties in our analysis, and thus our algorithm will alternate between two stages of opening facilities which use \emph{different} scalings of the metric.
This gives us a clusterability property that we prove in Lemma~\ref{lem:structure}.

Below, we define two notions of openable facilities, which we call 1-openable (Definition \ref{def:openable-mod-1}) and 2-openable (Definition \ref{def:openable-mod-2}), used in Stage 1 and Stage 2 of the algorithm, respectively.
Similarly to Algorithm \ref{alg:dual_generalized}, we maintain a set of active clients $A$ and a set of open facilities $\SS$, initialized to $A=D$ and $\SS = \emptyset$, respectively.
Moreover, the state of the algorithm can be identified with the triple $(\SS,A,\theta)$ where $\theta$ is the time parameter.
\begin{definition}[1-openable facility]
\label{def:openable-mod-1}
Fix a state $(\SS,A,\theta)$ where $\SS\subseteq \FF$ is the set of opened facilities, $A\subseteq D$ are the active clients, $I:=D\setminus A$ are the inactive clients, and
    $\alpha_j=\theta$ for all $j\in A$.
A facility $i\in \FF\setminus \SS$ is 1-\emph{openable} w.r.t.\ $(\SS,A,\theta)$ if it 
satisfies:
\[
\sum_{j\in A}\beta(\theta, (1-\delta)\dist(i,j))
\;+\;
\sum_{j\in I}\beta((1-\delta)\dist(j,\SS), (1-\delta)\dist(i,j))
\;\ge\; f\,.
\]
\end{definition}

\begin{definition}[2-openable facility]
    \label{def:openable-mod-2}
    Fix a state $(\SS,A,\theta)$ where $\SS\subseteq \FF$ is the set of opened facilities,
    $A\subseteq D$ are the active clients, $I:=D\setminus A$ are the inactive clients, and
    $\alpha_j=\theta$ for all $j\in A$.
    A facility $i\in \FF\setminus \SS$ is 2-\emph{openable} w.r.t.\ $(\SS,A,\theta)$ if there exist bids $(\tau_j)_{j\in A}$
    satisfying:
    \begin{enumerate}
        \item[\textbf{(i)}] (\textbf{Only nearby clients may increase})
              \[
                  \tau_j=\alpha_j \qquad \forall j\in A\setminus \Ball{i}{\epsilon\theta}\,.
              \]
        \item[\textbf{(ii)}] (\textbf{Nearby clients increase only slightly})
              \[
                  \alpha_j \le \tau_j \leq \min\Bigl\{(1-\delta)\dist(j,\SS),\, (1+\epsilon^2) \theta \Bigr\}
                  \qquad \forall j\in A\cap \Ball{i}{\epsilon\theta}\,.
              \]
        \item[\textbf{(iii)}] (\textbf{Facility $i$ is paid for in the $(1-\delta/2)$-scaled metric})
              \[
                  \sum_{j\in A}\beta(\tau_j, (1- \delta/2)\dist(i,j))
                  \;+\;
                  \sum_{j\in I}\beta((1-\delta)\dist(j,\SS), (1- \delta/2)\dist(i,j))
                  \;\ge\; f\,.
              \]
        \item[\textbf{(iv)}] (\textbf{No-overbidding after updating to $\tau$})
              \[
                  \forall i_0\in \FF:\qquad
                  \sum_{j\in A}\beta(\tau_j,\dist(i_0,j))
                  \;+\;
                  \sum_{j\in I}\beta((1-\delta)  \dist(j,\SS), \dist(i_0,j))
                  \;\le\;  f\,.
              \]     
        \item[\textbf{(v)}] (\textbf{Triangle inequality})
              \[
                  \tau_j \le \tau_k + \dist(j,k)
                  \qquad \forall j,k\in A\,.
              \]
    \end{enumerate}
\end{definition}
Here, 1-openable facilities are facilities that are paid for with the current value of $\theta$, when the existing active clients contribute enough in the scaled metric.
Thus, Stage~1 captures facilities that can be opened immediately. 
On the other hand, 2-openable facilities are facilities where we have to raise the bids of clients that are close to the facility in order to pay for the opening cost: clients in $B(i,\epsilon \theta)$ may increase their bid from $\theta$ to intermediate values $\tau_j$.
The no-overbidding constraint ensures that we do not overbid towards another facility when raising the $\alpha$-values to $\tau$.
Furthermore, the triangle constraints are necessary because the factor-revealing program requires triangle inequalities on the $\alpha$-values, see constraint \eqref{eq:fr-triangle}. %

The log-adaptive algorithm (Algorithm \ref{alg:log-adaptive-greedy}) has a three-stage structure within each phase. Stage~1 greedily opens 1-openable facilities. Stage~2 opens 2-openable facilities, potentially raising the $\alpha$-values of some clients. Then, once there are no more 2-openable facilities, we proceed to Stage~3, increasing $\theta$ to $(1+\eps^2)\theta$.

\begin{algorithm}[H]
    \caption{Log-Adaptive Algorithm}
    \label{alg:log-adaptive-greedy}
    \textbf{Initialization:} Set $\SS = \emptyset$, $A = D$, $\theta = 1$, and
    $\alpha_j = \theta$ for every $j \in A = D$.

    \textbf{Repeat the following while} $A \neq \emptyset$:
              \begin{enumerate}
                  \item While there is an unopened facility $i$ that is 1-openable (chosen arbitrarily if there are many):
                        \begin{quote}
                            Open $i$ (i.e., add it to $\SS$) and mark all
                            clients $j \in A$ such that $\alpha_j \geq (1-\delta)\dist(j,\SS)$ as served
                            (i.e., remove them from $A$).
                        \end{quote}
                  \item While there is an unopened facility $i$ that is 2-openable (chosen arbitrarily if there are many):
                        \begin{quote}
                            Calculate $(\tau_j)_{j\in A}$ so that $i$ is openable with bids $(\tau_j)_{j\in A}$.
                            Open $i$ (i.e., add it to $\SS$), let $\alpha_j \leftarrow \tau_j$ for every $j \in A$
                            (note that only the values of the clients in $A \cap B(i,\eps\theta)$ change), and mark all
                            clients $j \in A$ such that $\alpha_j \geq (1-\delta)\dist(j,\SS)$ as served
                            (i.e., remove them from $A$).
                        \end{quote}
                  \item Perform the following step and proceed to the next phase:
                        \begin{quote}
                            For every remaining $j \in A$, let
                            \[
                                \alpha_j \leftarrow \min\bigl((1+\eps^2)\theta,\,(1-\delta)\dist(j,\SS)\bigr)\,,
                            \]
                            remove $j$ from $A$ when $\alpha_j = (1-\delta)\dist(j,\SS)$, and update
                            $\theta \leftarrow (1+\eps^2)\theta$.
                        \end{quote}
              \end{enumerate}
    \textbf{Output $\SS$.}
\end{algorithm}

\begin{remark}
In Stage 2, if a client $j \in A$ has $\tau_j > \theta$, then it is $\epsilon \theta$-close to the facility that is opened at that step and it immediately becomes inactive.
Therefore, it is valid to assume in Definition~\ref{def:openable-mod-2} that $\alpha_j = \theta$ for all $j \in A$.
\end{remark}
In this section, we will prove the following main theorem that states that the log-adaptive algorithm (Algorithm \ref{alg:log-adaptive-greedy}) is indeed an LMP-approximation algorithm with factor $(1+2\delta)\rho$, where $\rho$ is an upper bound on the factor revealing program. In other words, the connection cost of the algorithm is bounded by $(\OPT - |\SS^\star| \cdot f)$ times the approximation factor $(1+2\delta)\rho$.
\begin{theorem}
\label{theorem:alg-lmp}
    Let $\gamma \geq 1$ and let $\beta(\alpha,d) = [\alpha - \gamma \cdot d]^+$.\footnote{For simplicity, we specialized the results of this section to the family of bidding functions that we study, and similarly in Appendix~\ref{sec:merging} and Appendix~\ref{sec:logad-robust}.
    However, the results are adaptable to any bidding function that satisfies Definition~\ref{def:admissible-bidding} and, in addition, satisfies the following properties: Let $\hat\beta(x) = \beta(1,x)$ for all $0 \leq x \leq 1$. We require that $\hat\beta(0)>0$. In addition, we require that (1) $\hat\beta$ is $L_1$-Lipschitz for some $L_1 >0$, and (2) $\hat\beta(x) - \hat\beta(y) \geq L_2(y-x)$ for some $L_2 > 0$ and all $0 \leq x \leq y \leq 1$ for which $\hat\beta(y) > 0$.} Let $\rho \geq 1.01 \gamma$ be an upper bound on the optimal value of the factor-revealing program in Definition~\ref{def:frp} over all $n \geq 1$.
    For any $\delta > 0$ small enough compared to $1/\gamma$ as well as any $\eps > 0$ small enough compared to $1/\gamma$ and $\delta$,
    Algorithm~\ref{alg:log-adaptive-greedy} run with facility cost $\hat{f} = \rho \cdot f$ is a $(1+2\delta)\rho$-LMP approximation for UFL instances with facility cost $f$.
\end{theorem}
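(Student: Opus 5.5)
We follow the template of the proof of Theorem~\ref{theorem:facility-location}. The difference is that we must manufacture, from an execution of Algorithm~\ref{alg:log-adaptive-greedy}, a feasible point of the factor-revealing program (Definition~\ref{def:frp}) in a suitably scaled metric. The first step is a per-client cost accounting. For each client $j$ let $\alpha_j^\star$ be its frozen value and $i_0,\dots,i_t$ its successive improving facilities. Its cost is
\[
\beta(\alpha_j^\star,(1-\delta)d(i_0,j))+\sum_\ell \beta((1-\delta)d(i_\ell,j),(1-\delta)d(i_{\ell+1},j))+d(i_t,j).
\]
The bids that actually paid for facilities are made in the $(1-\delta)$-metric in Stage~1 and in the $(1-\delta/2)$-metric in Stage~2. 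Because $\beta$ is non-increasing in its second argument, the Stage~2 bids are at most the corresponding $(1-\delta)$-metric bids, so the expression above upper bounds the total paid. We then charge the whole execution: the opening cost $\hat f|\SS^\star|$ is at most the total bids, so
\[
\sum_j d(j,\SS^\star)\le \sum_j \cost_j-\hat f|\SS^\star|.
\]
It remains to show that for every optimal facility $i$ with client set $D'$,
\[
\sum_{j\in D'}\cost_j-\hat f\le (1+2\delta)\rho\sum_{j\in D'} d(i,j).
\]
Summing this over the optimal facilities and using $\hat f=\rho f$ gives the LMP bound.

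Fix such an $i$ and order $D'$ by freezing time, breaking ties within a phase consistently with the order of events. In the metric $d'=(1-\delta)d$, set $d_j=(1-\delta)d(i,j)$ and $r_{j,j}=\alpha_j^\star$. For $j<k$, set $r_{j,k}=(1-\delta)d(j,\SS)$ just before $k$ freezes, and set $r_{j,n+1}=(1-\delta)d(j,\SS^\star)$. Monotonicity \eqref{eq:fr-r-mono} holds trivially. The metric inequalities \eqref{eq:fr-r-metric} hold in the scaled metric exactly as in the greedy case. The triangle constraint \eqref{eq:fr-triangle} uses the invariant $\alpha_k\le(1-\delta)d(k,\SS)$ maintained by every stage, together with the triangle inequality. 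The objective is then a valid upper bound on the scaled costs, except for the final connection term $d(i_t,j)=r_{j,n+1}/(1-\delta)$; this costs a factor $1/(1-\delta)\le 1+2\delta$ in the final ratio. By scale equivariance, the program's value bounds $\sum\cost_j-\hat f$ relative to $\sum_j d_j=(1-\delta)\sum d(i,j)$.

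The main obstacle is the facility constraint \eqref{eq:fr-facility}:
\[
\sum_{j<k}\beta(r_{j,k},d_j)+\sum_{j\ge k}\beta(r_{k,k},d_j)\le \hat f.
\]
In the continuous algorithm this held because bids never exceed $f$. Here $\theta$ jumps geometrically, and clients with $j\ge k$ may not have reached $r_{k,k}$ at the moment $k$ froze. We handle this by cases on how $k$ froze.

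\emph{If $k$ froze in Stage~3}, or with $\alpha_k^\star=\theta$, we argue that before the jump no facility, $i$ in particular, was 1-openable. The bids in the $(1-\delta)$-metric at value $\theta$ are therefore $<\hat f$. Raising to $(1+\eps^2)\theta$ increases each bid by at most $\eps^2\theta$. Distinguishing whether the bid is positive or zero, and using the gap between $\beta(\cdot,(1-\delta)d)$ and $\beta(\cdot,d)$, which is at least of order $\delta\gamma d$ for positive bids, this increase is absorbed whenever $\eps^2\ll\delta$ and $d_j$ is not tiny. Clients with $d_j\lesssim\eps\theta$ are the problematic close ones. For them we would use that Stage~2 terminated: no facility was 2-openable. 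So the natural $\tau$, which raises close clients of $i$ to $\min\{(1-\delta)d(j,\SS),(1+\eps^2)\theta\}$, must violate some condition of Definition~\ref{def:openable-mod-2}. Conditions (i), (ii) and (v) hold by construction; (v) holds since $\alpha$-values differ by $O(\eps\theta)$ within the ball, but this needs care. Condition (iv) is the no-overbidding invariant, which we maintain inductively. Hence (iii) fails, which is exactly the needed facility inequality up to replacing the $(1-\delta/2)$-metric by the $(1-\delta)$-metric.

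\emph{If $k$ froze in Stage~2 with $\tau_k>\theta$}, we directly use condition (iv) at $i_0=i$. This gives the bound in the unscaled metric, which dominates the scaled one by monotonicity of $\beta$.

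Matching the two distinct scalings, $(1-\delta)$ versus $(1-\delta/2)$, and the extra bids of the close clients is where we expect the real difficulty. The assumptions $\rho\ge1.01\gamma$ and $\eps\ll\delta\ll1/\gamma$ are to be spent precisely here.
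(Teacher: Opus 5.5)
Your argument breaks at the facility constraint \eqref{eq:fr-facility}, in two separate ways. First, setting $d_j=(1-\delta)d(i,j)$ commits you to bounding cluster bids in the $(1-\delta)$-\emph{scaled} metric, and no invariant of the algorithm gives that. In the Stage~2 case you invoke (iv), which is an unscaled bound, and say it ``dominates the scaled one by monotonicity''. This is backwards: $\beta$ is non-increasing in its second argument, so $\beta(\alpha,(1-\delta)d)\ge\beta(\alpha,d)$, and the unscaled bound is the weaker one. In the Stage~3 case, ``not 1-openable at $\theta$'' is a $(1-\delta)$-metric bound at value $\theta$. It leaves no slack in that same metric to absorb the raise to $(1+\eps^2)\theta$; the $\gamma\delta d$ gap you cite helps only if the target inequality is unscaled. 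Likewise, failure of (iii) gives only a $(1-\delta/2)$-metric bound. The scaled constraint can in fact be false, e.g.\ for a facility opened in Stage~1 right after a jump whose $(1-\delta)$-metric bids exceed $\hat f$.

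The fix, as in the paper, is to keep $d_j=d(i,j)$ unscaled. The triangle and metric constraints still hold, via $(1-\delta)x\le x$ and condition (v) for clients freezing in the same Stage-2 step. The $(1-\delta)$ loss is then charged globally through $(1-\delta)C_{\mathrm{SOL}}+\rho F_{\mathrm{SOL}}\le\rho\OPT$. What you then need is the unscaled no-overbidding invariant $\sum_{j\in A}\beta(\theta,d(i,j))+\sum_{j\in I}\beta((1-\delta)d(j,\SS),d(i,j))\le\hat f$ at every step. In its Stage~3 step, (iv) for the raised bids is not the inductive hypothesis; the paper gets it by taking the violated facility with the smallest tightening threshold. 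The raised bids must also be capped at $(1+\eps)\theta-d(j,i)$, or (v) fails between a raised client just inside $B(i,\eps\theta)$ and an unraised one just outside.

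Second, and this is the missing idea: even unscaled, the Stage~2 case $\alpha_k^\star=\tau_k>\theta$ does not follow from (iv) if you build the program from the whole cluster $D'$. The constraint at $k$ charges $\beta(\alpha_k^\star,d_j)$ to every later $j\in D'$. But cluster clients outside $B(i_0,\eps\theta)$ keep $\tau_j=\theta$ and stay active, and (iv) only controls $\beta(\theta,d_j)$ for them. This really fails. Take $m$ cluster clients at a tiny distance $x$ from $i$, and a cluster client $k$ within $\eps\theta$ of a far facility $i_0$ that Stage~2 opens with $\tau_k=(1+\eps^2)\theta$. Choose $m(\theta-\gamma(1-\delta)x)<\hat f<m((1+\eps^2)\theta-\gamma x)$, which is possible when $\gamma\delta x<\eps^2\theta$. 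Then $i$ is not 1-openable and (iv) holds at $i$, yet the constraint at $k$ requires $m((1+\eps^2)\theta-\gamma x)\le\hat f$.

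The paper avoids this by applying the program only to $\Dgood=\{j\in D':\alpha_j^\star\ge\rho\,d(j,i)\}$; every other client costs at most $\alpha_j^\star<\rho\,d(j,i)$ by superadditivity. It then proves two lemmas:
\begin{itemize}
\item a structure lemma: after a maximal Stage~1, any two Stage-2 candidates are within $\delta\theta$ or at least $\theta/\gamma$ apart, using the gap between the $(1-\delta)$- and $(1-\delta/2)$-scalings;
\item a freezing lemma: if $k\in\Dgood$ freezes with $\alpha_k^\star>\theta$ while $i$ is a candidate, then $\rho\ge1.01\gamma$ forces $d(i,i_0)\le\delta\theta$, and all active clients of $\Dgood$ freeze in that same step.
\end{itemize}
A violated constraint would make $i$ a candidate, so (iv) at $i$ with $\tau_j=\alpha_j^\star\ge\alpha_k^\star$ then yields the constraint. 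This is exactly where $\rho\ge1.01\gamma$ is used, which your sketch never does; in the example above, $k\notin\Dgood$.

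Finally, for $j$ freezing in the same atomic step as $k$, you must set $r_{j,k}=\alpha_j^\star$ rather than $(1-\delta)d(j,\SS)$. Otherwise $r_{j,j}\ge r_{j,j+1}$ fails.
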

This theorem is analogous to Theorem~\ref{theorem:facility-location}, with an additional loss of $2\delta$ because of the log-adaptivity.
We will prove the theorem in the following sections. The proof is concluded in Section~\ref{subsec:lmp-proof}.
We remark that, if $f=0$, then Stage 1 opens all facilities and the resulting
solution is optimal; therefore, for the rest of this section we assume $f>0$.

\subsection{Preliminaries}
Throughout this section we use two small absolute constants $\delta,\eps > 0$.
We write
\[
    1/\gamma \gg \delta \gg \eps > 0\,,
\]
where ``$\gg$'' means that each quantity is chosen sufficiently small as a function of the preceding one.

We define an \emph{atomic step} in the algorithm as the execution of one iteration in Stage 1, the execution of one iteration in Stage 2, or the execution of Stage 3.
When we say something holds at ``every step'' in the algorithm, we mean before and after every atomic step.

For a client $j \in D$, we denote by $\alpha_j^\star$ its final $\alpha_j$ value.
For the purposes of the analysis, we order the clients from $1$ to $n$ by the order of the atomic steps in which they become inactive. 
In case two clients become inactive in the same atomic step, we order them non-decreasingly by their $\alpha^\star$ value.
We denote by $\SS^{(k)}$ the set of open facilities at the beginning of the atomic step in which client $k$ becomes inactive.
We denote by $\SS^{(n+1)} = \SS^\star$ the set of open facilities at the end of the algorithm.
For $1 \le j \leq k \le n$, we set 
\[
    r_{jk} = 
    \begin{cases}
        \alpha_j^\star &
        \begin{array}[t]{@{}l@{}}
            \text{if $j$ is active at the beginning of the atomic step}\\
            \text{in which $k$ becomes inactive}\,,
        \end{array}
        \\[2mm]
       (1-\delta)~d(j, \SS^{(k)}) & \text{otherwise}\,.
    \end{cases}
\]
In other words, if $j \le k$ become inactive in the same atomic step, then $ r_{jk} = \alpha^\star_j$ and otherwise $ r_{jk} =  (1-\delta)~d(j, \SS^{(k)})$.
We also set $r_{j,n+1} = (1-\delta)d(j, \SS^\star)$. 

\begin{observation}
\label{obs:r-noninc}
We have $r_{jk} \geq r_{jk+1}$ for all $1 \leq j \leq k \leq n$.
\end{observation}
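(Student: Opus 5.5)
The plan is a direct case analysis on the definition of $r_{jk}$. It uses only two monotonicity facts about Algorithm~\ref{alg:log-adaptive-greedy}.

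The first fact is that the set of open facilities only grows: no step ever closes a facility. Because clients are indexed by the order of the atomic steps in which they become inactive, we get $\SS^{(k)} \subseteq \SS^{(k+1)} \subseteq \SS^\star$. Hence $d(j,\SS^{(k)}) \ge d(j,\SS^{(k+1)}) \ge d(j,\SS^\star)$ for every client $j$.

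The second fact concerns the moment a client leaves $A$. I will check that a client $j$ is removed from $A$ only when $\alpha_j \ge (1-\delta)\dist(j,\SS)$ for the current $\SS$. This is explicit in Stages~1 and~2. In Stage~3 the removal happens exactly when $\alpha_j = (1-\delta)\dist(j,\SS)$. The value $\alpha_j$ is frozen from then on, so $\alpha_j^\star \ge (1-\delta)\dist(j,\SS')$, where $\SS'$ is the facility set right after the atomic step in which $j$ became inactive. Combined with the first fact, $\alpha_j^\star \ge (1-\delta)\dist(j,\SS'')$ for every later facility set $\SS''$.

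Fix $1\le j\le k\le n$ and treat $r_{j,n+1}=(1-\delta)d(j,\SS^\star)$ as the case $k=n$ with $\SS^{(n+1)}=\SS^\star$. Since $j$ becomes inactive no later than the step in which $k$ does, $j$'s status falls into one of three cases.

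\textbf{Case (a).} $j$ is still active at the beginning of the atomic step in which $k+1$ becomes inactive (possible only when $k<n$). Then $j$ was also active at the beginning of the earlier or equal step in which $k$ becomes inactive. So both quantities equal $\alpha_j^\star$, and the inequality holds with equality.

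\textbf{Case (b).} $j$ is active at the beginning of $k$'s step but not at the beginning of $k+1$'s step (for $k=n$: simply active at the beginning of $k$'s step). Then $r_{jk}=\alpha_j^\star$, and $j$ became inactive in some step that finishes before $\SS^{(k+1)}$ (respectively $\SS^\star$) is reached. By the second fact, $\alpha_j^\star \ge (1-\delta)\dist(j,\SS^{(k+1)}) = r_{j,k+1}$.

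\textbf{Case (c).} $j$ is already inactive at the beginning of $k$'s step. Then it is also inactive at the beginning of $k+1$'s step, so both quantities are $(1-\delta)$-scaled distances. The first fact gives the inequality.

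The only point needing care, which I expect to be the main (mild) obstacle, is the second fact: the removal rule in each stage must indeed guarantee $\alpha_j \ge (1-\delta)\dist(j,\SS)$ at the moment of removal. In particular, in Stage~2 the update $\alpha_j\leftarrow\tau_j$ happens before the removal check, so that check uses the updated $\alpha_j$ and the updated $\SS$. There is also the tie-breaking subtlety that $j$ and $k$ may become inactive in the same atomic step; this is covered by Case~(b), since $r_{jk}=\alpha_j^\star$ by definition in that situation.
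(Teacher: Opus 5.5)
Your proof is correct and follows the same route as the paper. Both arguments rest on two facts: the open-facility sets are monotone, $\SS^{(k)}\subseteq\SS^{(k+1)}\subseteq\SS^\star$, and $\alpha_j^\star \ge (1-\delta)d(j,\SS)$ for the facility set $\SS$ right after the atomic step in which $j$ becomes inactive; your three-case analysis just makes explicit the case split that the paper's short proof leaves implicit.
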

\begin{proof}
    We have trivially that $\alpha_j^\star \leq \alpha_j^\star$ and $(1-\delta) d(j, \SS^{(k)}) \geq (1-\delta) d(j, \SS^{(k')})$ for $k < k'$ as $\SS^{(k)}$ is monotone.
    Moreover, $\alpha_j^\star \geq (1-\delta)d(j, \SS)$ for the set $\SS$ of open facilities after the atomic step when $j$ becomes inactive, so the claim follows.
\end{proof}

Our analysis will also use the notion of candidate facilities which are facilities that are paid for as in the definition of 2-openable. 
\begin{definition}
We say that a facility $i$ is a candidate at some point of the algorithm if it satisfies
\begin{align*}
    \sum_{j\in A}\beta((1+\eps^2) \theta, (1- \delta/2)\dist(i,j))
    \;+\;
    \sum_{j\in I}\beta((1-\delta)\dist(j,\SS), (1-\delta/2)\dist(i,j))
    \;\ge\; \hat f\,.
\end{align*}
\end{definition}
In other words, a candidate has the possibility to be paid for by increasing all the active clients to the maximum of that stage. In particular, all facilities that are opened in Stage 2 will be candidate facilities at that point of the algorithm. 

\subsection{Algorithm Properties}
First, we prove that at every step of the algorithm the clients do not overbid toward any facility.
\begin{lemma}[No overbidding]
    \label{lem:overbidding}
    At any step of the algorithm, for every facility $i \in F$, 
    \[
        \sum_{j \in A} \beta(\theta, d(i,j)) + \sum_{j \in I} \beta((1-\delta)d(j, \SS), d(i,j)) \leq \hat f\,.
    \]
\end{lemma}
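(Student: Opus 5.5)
We argue by induction over the atomic steps of Algorithm~\ref{alg:log-adaptive-greedy}, showing that the invariant
\[
\Phi_i := \sum_{j \in A} \beta(\theta, d(i,j)) + \sum_{j \in I} \beta((1-\delta)d(j, \SS), d(i,j)) \le \hat f
\]
holds for every facility $i$ after each step. Initially $\SS=\emptyset$, $A=D$ and $\theta=1$. By \eqref{eq:dist-bound} every distance is at least $1$, so by budget-feasibility every bid $\beta(1,d(i,j))\le[1-d(i,j)]^+=0$, and the invariant holds. Two monotonicity facts are used throughout. First, once $\theta$ is fixed, opening a facility can only decrease $d(j,\SS)$, and by monotonicity of $\beta$ in its first argument this only decreases the inactive bids. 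Second, a client that becomes inactive at value $\alpha_j$ satisfies $\alpha_j \ge (1-\delta)d(j,\SS)$. Its new bid $\beta((1-\delta)d(j,\SS),d(i,j))$ is therefore at most $\beta(\alpha_j,d(i,j))$.

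\textbf{Stage~1 and Stage~2 steps.} We handle these first. In a Stage~1 step, $\theta$ is unchanged. Active clients that remain active keep the bid $\beta(\theta,\cdot)$, and clients becoming inactive have $\alpha_j=\theta\ge(1-\delta)d(j,\SS)$, so their bids do not increase. Existing inactive clients also only decrease their bids. Hence $\Phi_i$ is non-increasing, and the invariant persists. In a Stage~2 step, condition~(iv) of Definition~\ref{def:openable-mod-2} directly gives
\[
\sum_{j\in A}\beta(\tau_j,d(i_0,j))+\sum_{j\in I}\beta((1-\delta)d(j,\SS),d(i_0,j))\le \hat f
\]
for every $i_0$, using the old $\SS$. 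After opening, clients with $\tau_j>\theta$ become inactive, as noted in the Remark, and $(1-\delta)d(j,\SS_{\text{new}})\le \tau_j$. Clients with $\tau_j=\theta$ that stay active bid $\beta(\theta,\cdot)$, and inactive bids again only drop. So the new $\Phi_{i_0}$ is bounded by the left-hand side of~(iv), which is at most $\hat f$.

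\textbf{Stage~3 steps (main obstacle).} Here $\theta$ jumps to $(1+\eps^2)\theta$, which can increase bids, so a different argument is needed. At this point no facility is 1-openable, i.e., in the $(1-\delta)$-scaled metric every facility $i$ satisfies
\[
\sum_{j\in A}\beta(\theta,(1-\delta)d(i,j))+\sum_{j\in I}\beta((1-\delta)d(j,\SS),(1-\delta)d(i,j))<\hat f.
\]
A client that remains active ends with value at most $(1+\eps^2)\theta$. A client that becomes inactive ends with value $(1-\delta)d(j,\SS)\le (1+\eps^2)\theta$, so its bid is also at most $\beta((1+\eps^2)\theta,d(i,j))$. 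For the specific bid $\beta(\alpha,d)=[\alpha-\gamma d]^+$, the plan is to compare termwise:
\[
\beta((1+\eps^2)\theta,d)\le\beta(\theta,(1-\delta)d)
\quad\Longleftrightarrow\quad
(1+\eps^2)\theta-\gamma d\le \theta-\gamma(1-\delta)d
\quad\Longleftrightarrow\quad
\eps^2\theta\le\gamma\delta d,
\]
in the regime where the left-hand bid is positive.

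If the left-hand bid is positive, then $d<(1+\eps^2)\theta/\gamma$, so large $d$ is not the issue. The problem is $d$ much smaller than $\theta$: such a client is close to $i$, namely $d(i,j)\le \eps\theta$ whenever $\eps\ll\delta$. This is the case Stage~2 must cover. Since no facility is 2-openable, I would argue by contradiction. If $\Phi_i>\hat f$ after the jump, define $\tau_j=\min\{(1+\eps^2)\theta,(1-\delta)d(j,\SS)\}$ for $j\in A\cap B(i,\eps\theta)$ and $\tau_j=\theta$ otherwise. Termwise, the far clients are then dominated in the $(1-\delta)$-metric and the near clients in the $(1-\delta/2)$-metric, so these $\tau_j$ satisfy~(i), (ii), (iii) and~(v). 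Violation of~(iv) would then be witnessed by some facility $i_0$.

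To close the contradiction, I would pick an extremal violating facility, for instance one maximizing the post-jump $\Phi$. I would then show that this facility itself is 2-openable with the stated $\tau$, using the $\delta/2$ slack between the $(1-\delta/2)$ and unscaled metrics to absorb the $\eps^2$ overshoot. This extremal choice, together with checking the triangle condition~(v) for the truncated $\tau$, is the delicate step. The remaining cases are routine inequalities in $\gamma,\delta,\eps$.
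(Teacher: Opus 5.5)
Your initial step and your Stage~1 and Stage~2 arguments are fine. For Stage~2 you route through condition~(iv); the paper simply notes that every newly inactive client has $(1-\delta)d(j,\SS)\le\theta$, so its bid can only drop. Both work.

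The genuine gap is in Stage~3. The bids you propose do not witness 2-openability, and the extremal choice you suggest does not repair this.

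\emph{Condition~(v) fails for your $\tau$.} Take a client $j\in B(i,\eps\theta)$ with $\tau_j=(1+\eps^2)\theta$ and a client $k\notin B(i,\eps\theta)$ with $\tau_k=\theta$, both near the boundary of the ball. They can satisfy $d(j,k)<\eps^2\theta$, so $\tau_j>\tau_k+d(j,k)$. Any bid profile equal to $(1+\eps^2)\theta$ inside the ball and $\theta$ outside has this defect. The paper additionally caps near bids by $(1+\eps)\theta-d(j,i)$, which tapers to $\theta$ at the boundary. This gives $\tau_j\le(1+\eps)\theta-d(k,i)+d(j,k)\le\theta+d(j,k)$. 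Condition (iii) survives the cap because $(1+\eps)\theta-d-\gamma(1-\delta/2)d\ge(1+\eps^2)\theta-\gamma d$ for $d\le\eps\theta$ when $\eps\ll\delta$.

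\emph{Condition~(iv) fails for your $\tau$ already at $i_0=i$.} This is the more serious problem. Suppose all clients active at the start of Stage~3 lie in $B(i,\eps\theta)$. Then the left-hand side of (iv) at $i$ is exactly the post-jump $\Phi_i>\hat f$. Picking $i$ to maximize the post-jump $\Phi$ does not help: take $i$ to be the only violated facility. The $\delta/2$ slack is also irrelevant here, since (iv) is stated in the unscaled metric.

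\emph{The missing idea is to raise bids only up to the moment of tightness.}
\begin{itemize}
\item For each violated facility $i_0$ (none of which is open), continuity gives $\tau^{(i_0)}\in[\theta,(1+\eps^2)\theta]$ such that the sum equals exactly $\hat f$ when clients active at the start of Stage~3 use values $\min\{\tau^{(i_0)},(1-\delta)d(j,\SS)\}$.
\item Let $i$ minimize $\tau^{(i)}$, i.e., $i$ is the first facility to become tight.
\item Set $\tau_j=\min\{\tau^{(i)},(1-\delta)d(j,\SS),(1+\eps)\theta-d(j,i)\}$ on $B(i,\eps\theta)$ and $\tau_j=\theta$ elsewhere.
\item Condition (iv) holds at violated $i_0$ by minimality of $\tau^{(i)}$ and monotonicity of $\beta$. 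It holds at the other facilities because $\tau_j$ never exceeds the post-jump values.
\item Condition (iii) follows from the equality at $\tau^{(i)}$ together with the $\delta/2$ slack.
\end{itemize}
Hence $i$ was 2-openable at the end of Stage~2, contradicting termination of Stage~2.
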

\begin{proof}
The inequality holds at the beginning of the algorithm because $\alpha_j = 1 \le d(i,j)$ for every client $j$ by assumption \eqref{eq:dist-bound}.
We will show that invariant holds after every atomic step of the algorithm.
The atomic step might happen in each of the three stages.

\paragraph{Stage 1 and Stage 2}
Opening up a facility $i_0$ decreases the left hand side since $d(j,\SS)$ is monotonically decreasing with $\SS$.
The only other change to the left hand side that can happen is that a client $j$ becomes inactive.
In that case, however, either $\theta = \alpha_j^\star \ge (1-\delta)d(j,\SS\cup \{i_0\})$ or $j \in B(i_0,\epsilon \theta)$ and $\epsilon \theta \ge d(j,i_0) \ge  (1-\delta)d(j,\SS\cup \{i_0\})$.
Thus, the left hand side decreases by the monotonicity of $\beta$.

\paragraph{Stage 3}
Let $A, I$ be the sets of active and inactive clients after raising all $\alpha_j$ to $\min\bigl((1+\eps^2)\theta,\,(1-\delta)\dist(j,\SS)\bigr)$ for all $j$ that are active at the beginning of this step.
Assume for contradiction that after raising $\theta$ to $(1+\eps^2)\theta$ the invariant is violated for some facility, and let $\FF_0$ be the set of all such facilities, such that for all $i \in \FF_0$
\[
    \sum_{j \in A} \beta((1+\epsilon^2)\theta, d(i,j)) + \sum_{j \in I} \beta((1-\delta)d(j, \SS), d(i,j)) > \hat f\,.
\]
Note that none of the facilities $i \in \FF_0$ are open because otherwise there would be no client with a positive bid to $i$ (for clients $j \in A$ we would have $(1+\eps^2)\theta < (1-\delta)d(j,\SS) \leq (1-\delta)d(j, i)$, and for clients $j \in I$ we would have $(1-\delta)d(j,\SS) \leq (1-\delta)d(j,i) \leq d(j, i)$).

Also let $A' \supseteq A$ and $I' \subseteq I$ be the active and inactive sets at the beginning of Stage 3, respectively.
For each facility $i_0 \in \FF_0$ compute $\theta \le \tau^{(i_0)} \le (1+\eps^2)\theta$ such that for $\alpha^\prime_{i_0,j} = \min\bigl(\tau^{(i_0)},\,(1-\delta)\dist(j,\SS)\bigr)$ we have
\[
    \sum_{j \in A'} \beta(\alpha^\prime_{i_0,j}, d(i_0,j)) + \sum_{j \in I'} \beta((1-\delta)d(j, \SS), d(i_0,j)) = \hat f\,.
\]
Let us prove that such $\tau^{(i_0)}$ exists: We know by the invariant before this atomic step that
\[\sum_{j \in A'} \beta(\theta, d(i_0,j)) + \sum_{j \in I'} \beta((1-\delta)d(j, \SS), d(i_0,j)) \leq \hat f,\]
so it also holds for $\alpha^\prime_{i_0,j}$ with the choice $\tau^{(i_0)} = \theta$.
We also show that the invariant is violated for $\alpha^\prime_{i_0,j}$ with the choice $\tau^{(i_0)} = (1+\eps^2)\theta$, and then the existence of $\tau^{(i_0)}$ follows by continuity.
Recall that
\[\sum_{j \in A} \beta((1+\epsilon^2)\theta, d(i_0,j)) + \sum_{j \in I} \beta((1-\delta)d(j, \SS), d(i_0,j)) > \hat f.\]
Then, for $j \in A$ we have $(1+\eps^2)\theta < (1-\delta)d(j,\SS)$ and so $\alpha^\prime_{i_0,j} = (1+\eps^2)\theta$, and for $j \in A' \cap I$ we have $(1+\eps^2)\theta \geq (1-\delta) d(j,\SS)$ and so $\alpha^\prime_{i_0,j} = (1-\delta)d(j,\SS)$, so also
\[\sum_{j \in A'} \beta(\alpha^\prime_{i_0,j}, d(i_0,j)) + \sum_{j \in I'} \beta((1-\delta)d(j, \SS), d(i_0,j)) > \hat f.\]
Therefore $\tau^{(i_0)}$ exists by continuity.

Now select the facility $i \in \FF_0$ that minimizes $\tau^{(i)}$. This implies that $\alpha^\prime_{i,j} \leq \alpha^\prime_{i_0,j}$ for all $i_0 \in \FF_0$ and $j \in D$, so
\begin{equation}
    \label{eq:minimal-facility}
    \sum_{j \in A'} \beta(\alpha^\prime_{i,j}, d(i_0,j)) + \sum_{j \in I'} \beta((1-\delta)d(j, \SS), d(i_0,j)) \le \hat f\,.
\end{equation}
From now on, let $\alpha_j^\prime = \alpha_{i,j}^\prime$.
We will show that $i$ was 2-openable at the end of Stage 2 with bids $\tau_j = \min\bigl(\tau^{(i)},\,(1-\delta)\dist(j,\SS), (1+\epsilon) \theta - d(j,i)\bigr)$ for all $j \in A' \cap B(i,\epsilon \theta)$ and $\tau_j = \theta$ for all $j \in A'\setminus B(i,\epsilon \theta)$.
Let us check the conditions one by one,
\begin{enumerate}
        \item[\textbf{(i)} - \textbf{(ii)}] Clearly holds by the definition of $\tau$.
        \item[\textbf{(iii)}]
        We have that 
        \begin{align*}
            \hat f & = \sum_{j \in A'} \beta(\alpha_j^\prime, d(i,j)) + \sum_{j \in I'} \beta((1-\delta)d(j, \SS), d(i,j)) \\ 
            & \le \sum_{j \in A'} \beta(\tau_j, (1- \delta/2)\dist(i,j)) + \sum_{j \in I'} \beta((1-\delta)d(j, \SS), (1- \delta/2)\dist(i,j))\,.
        \end{align*}
        Let us verify that the inequality holds. First, consider $j \in B(i,\epsilon \theta)$. In the case where $\tau_j = \alpha^\prime_j$ we have by the monotonicity of $\beta$ that $\beta(\alpha_j^\prime, d(i,j)) \le \beta(\tau_j,  (1-\delta/2) d(i,j))$. In the other case when $\tau_j = (1+\epsilon) \theta - d(i,j)$ observe that
        \begin{align*}
            \beta(\alpha_j^\prime, d(i,j)) & \le \pos{(1+\epsilon^2)\theta - \gamma  d(i,j)} \\ & \le \pos{(1+\epsilon)\theta - d(i,j) - \gamma (1-\delta/2) d(i,j)}
            \\ & = \beta(\tau_j,  (1-\delta/2) d(i,j))
        \end{align*}
        because using that $\gamma \geq \gamma (1-\delta/2) + \eps$ for $\delta \gg \eps$ we have
    \begin{align*}
        (1+\epsilon) \theta - d(i, j)-(\gamma (1-\delta/2)) d(i, j)
        &= (1+\epsilon) \theta - (1-\eps)d(i, j)-(\gamma (1-\delta/2) + \eps) d(i, j)\\
        &\geq (1+\eps^2) \theta - (\gamma (1-\delta/2) + \eps)d(i,j)\\
        &\geq (1+\eps^2)\theta - \gamma d(i, j)\\
        &\geq \alpha_j' - \gamma d(i, j)\,.
    \end{align*}
        For $j \in A' \setminus B(i,\epsilon \theta)$, using that $\eps^2 \theta \le \epsilon \gamma \delta /2~\theta \le \gamma \delta/2~ d(i,j)$,
        \begin{align*}
            \beta(\alpha_j^\prime, d(i,j)) = \pos{\alpha_j^\prime - \gamma  d(i,j)} \le \pos{\theta - \gamma (1-\delta/2) d(i,j)} = \beta(\tau_j,  (1-\delta/2) d(i,j))\,.
        \end{align*}
        \item[\textbf{(iv)}] Assume that the no-overbidding inequality is violated for some facility $i_0 \in \FF_0$,
        \begin{align*}
            \hat f & < \sum_{j\in A'}\beta(\tau_j,\dist(i_0,j))
                  \;+\;
                  \sum_{j\in I'}\beta((1-\delta)  \dist(j,\SS), \dist(i_0,j))
                  \\ & \le \sum_{j\in A'} \beta(\alpha^\prime_j,\dist(i_0,j))
                  \;+\;
                  \sum_{j\in I'}\beta((1-\delta)  \dist(j,\SS), \dist(i_0,j))\,.
        \end{align*}
        The last inequality holds by the monotonicity of $\beta$.
        We arrive at a contradiction to the minimal choice of $i$. In particular, a contradiction to (\ref{eq:minimal-facility}). 
        
        On the other hand, for $i_0 \in \FF \setminus \FF_0$, the claim follows because we have 
        \[
\sum_{j \in A} \beta((1+\epsilon^2)\theta, d(i_0,j)) + \sum_{j \in I} \beta((1-\delta)d(j, \SS), d(i_0,j)) \leq \hat f
    \]
        and we have $(1+\epsilon^2)\theta \geq \tau_j$ and for $j \in A' \cap I$ we have $(1-\delta)d(j, \SS) \geq \tau_j$ so also
        \[
\sum_{j \in A'} \beta(\tau_j, d(i_0,j)) + \sum_{j \in I'} \beta((1-\delta)d(j, \SS), d(i_0,j)) \leq \hat f\,.
    \]
        \item[\textbf{(v)}] We have to show that
              \[
                  \tau_j \le \tau_k + \dist(j,k)
                  \qquad \forall j,k\in A'
              \]
              which is clearly true if $\tau_j = \theta$ or $\tau_k = \tau^{(i)}$. Moreover, if $\tau_k = (1+\epsilon)\theta - d(i,k)$ then
              \[
                \tau_j \leq (1+\epsilon) \theta - d(i, j) \leq (1+\epsilon) \theta - (d(i, k) - d(j, k)) = \tau_k + d(j,k)\,.
            \]
            From now on we assume that $j \in B(i,\epsilon \theta)$. 
            Consider the two remaining cases where either $\tau_k =\theta$ and $k \in A' \setminus B(i,\epsilon \theta)$ or $\tau_k = (1-\delta)d(k,\SS) = (1-\delta)d(k,i_0)$ for some open facility $i_0$. In the first case,
              \[
                \tau_j \le (1+\epsilon) \theta - d(j,i) \le (1+\epsilon) \theta - (d(k,i) - d(j,k)) \le \theta + d(j,k) = \tau_k + d(j,k)\,.
              \]
              In the second case, 
              \[
                \tau_j \le (1-\delta)d(j,\SS) \le (1-\delta)d(j,i_0) \le (1-\delta) (d(i_0,k) + d(j,k)) \le \tau_k + d(j,k)\,.
              \]
    \end{enumerate}
\end{proof}

Next, we will show that without 1-openable facilities, which is the case after finishing Stage 1, all facilities that are candidates in Stage 2 are either far apart or close together. This structural property will be crucial to prove feasibility for the factor revealing program.
\begin{lemma}
\label{lem:structure}
    Suppose two facilities $i_1$ and $i_2$ are candidates at the same step during Stage 2.
    Then either
    $d(i_1,i_2)\le \delta \theta$ or $d(i_1,i_2)\ge \theta / \gamma$.
\end{lemma}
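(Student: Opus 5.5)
The plan is to argue by contradiction. Suppose that at some step of Stage~2 two facilities $i_1,i_2$ are both candidates and $\delta\theta < D := d(i_1,i_2) < \theta/\gamma$. I will show that one of them is then $1$-openable with respect to the current state $(\SS,A,\theta)$, where every active client has $\alpha_j=\theta$.

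\textbf{No $1$-openable facility exists during Stage~2.} Stage~1 terminates only when no facility is $1$-openable. Every later atomic step only opens facilities and deactivates clients. Opening a facility decreases each $d(j,\SS)$. A deactivated client $j$ satisfies $\alpha_j\ge(1-\delta)d(j,\SS)$, so switching its bid from $\beta(\theta,\cdot)$ to $\beta((1-\delta)d(j,\SS),\cdot)$ does not increase it, by monotonicity of $\beta$. Hence the left-hand side of Definition~\ref{def:openable-mod-1} never increases during Stage~2, and $1$-openability of $i_1$ or $i_2$ is a genuine contradiction.

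\textbf{Per-client comparison.} For a fixed facility $i\in\{i_1,i_2\}$, I compare each client's term in the candidate inequality, which uses $(1+\eps^2)\theta$ and the metric scaled by $1-\delta/2$, with its term in the $1$-openable inequality, which uses $\theta$ and the metric scaled by $1-\delta$.

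Inactive clients are harmless. Their first argument is the same in both inequalities and the distance shrinks from $(1-\delta/2)d$ to $(1-\delta)d$, so by monotonicity their $1$-openable bid is at least their candidate bid.

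For an active client $j$ at distance $d=d(i,j)$, write the \emph{gain} as $\beta(\theta,(1-\delta)d)-\beta((1+\eps^2)\theta,(1-\delta/2)d)$. When the candidate bid is positive, the gain equals $\gamma\delta d/2-\eps^2\theta$. In every case the gain is at least $-\eps^2\theta$.

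\begin{itemize}
\item If $d>\eps\theta$, the gain is nonnegative, using $\eps\ll\delta$ (exactly as in the Stage~3 part of Lemma~\ref{lem:overbidding}).
\item Only active clients in $\Ball{i}{\eps\theta}$ can lose, and each loses at most $\eps^2\theta$.
\end{itemize}

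\textbf{Choosing which facility to open.} Let $n_1$ and $n_2$ be the numbers of active clients in $\Ball{i_1}{\eps\theta}$ and $\Ball{i_2}{\eps\theta}$. By symmetry assume $n_1\le n_2$; I will show $i_1$ is $1$-openable.

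\begin{itemize}
\item If $n_1=0$, every active client has nonnegative gain. Summing over all clients, the $1$-openable sum for $i_1$ is at least the candidate sum, which is at least $\hat f$, so $i_1$ is $1$-openable.
\item Otherwise $n_2\ge n_1\ge 1$. The two balls are disjoint because $D>\delta\theta>2\eps\theta$. Each active client $j\in\Ball{i_2}{\eps\theta}$ has $d(i_1,j)\in[D-\eps\theta,\,D+\eps\theta]$, so its distance to $i_1$ is between about $\delta\theta$ and $\theta/\gamma+\eps\theta$. I claim its gain is at least $c\,\delta^2\theta$ for some absolute $c>0$.
\begin{itemize}
\item If its candidate bid to $i_1$ is positive, the gain is $\gamma\delta d/2-\eps^2\theta\ge\gamma\delta(\delta-\eps)\theta/2-\eps^2\theta=\Omega(\delta^2\theta)$.
\item If its candidate bid is zero, the gain is $\theta-\gamma(1-\delta)d\ge\theta-\gamma(1-\delta)(\theta/\gamma+\eps\theta)=\delta\theta-\gamma(1-\delta)\eps\theta$, which is $\Omega(\delta\theta)$ since $\eps\ll\delta,1/\gamma$.
\end{itemize}
Since all other active clients have nonnegative gain, the total gain is at least $n_2c\delta^2\theta-n_1\eps^2\theta\ge n_1(c\delta^2-\eps^2)\theta>0$. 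So $i_1$ is again $1$-openable, a contradiction.
\end{itemize}

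\textbf{Main obstacle.} The delicate step is the lower bound on the gain of the clients near $i_2$, uniformly over the whole window $D\in(\delta\theta,\theta/\gamma)$. Near the lower end the gain comes from the metric rescaling and is only $\Omega(\delta^2\theta)$. Near the upper end the candidate bid may already vanish, and positivity relies on $\theta>\gamma(1-\delta)(\theta/\gamma+\eps\theta)$. I expect both regimes to go through as in the case split above, but the constants must be checked so that the hierarchy $1/\gamma\gg\delta\gg\eps$ absorbs the $\eps^2\theta$ losses and the $\pm\eps\theta$ slack in distances.
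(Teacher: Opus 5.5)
Your proposal follows the paper's proof essentially step for step. You assume $\delta\theta<d(i_1,i_2)<\theta/\gamma$, pick the facility with fewer active clients within $\eps\theta$, and show that its $1$-openable sum $Q$ dominates its candidate sum $P\ge\hat f$, using the same three-way split of active clients. This contradicts the termination of Stage~1. Your order-$\delta^2\theta$ gain for clients near $i_2$ is just a stronger form of the paper's $\eps^2\theta$ bound.

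Two small points need patching.

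\textbf{Clients deactivated in Stage~2 with $\alpha_j=\tau_j>\theta$.} For such a client, the inequality $\alpha_j\ge(1-\delta)d(j,\SS)$ does not justify replacing $\beta(\theta,\cdot)$ by $\beta((1-\delta)d(j,\SS),\cdot)$. What you need is $(1-\delta)d(j,\SS)\le\theta$. This does hold: such a $j$ lies in $B(i,\eps\theta)$ of the facility $i$ just opened, so $d(j,\SS)\le\eps\theta$. The paper uses exactly this observation.

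\textbf{The case $i_1\in\SS$.} $1$-openability is defined only for unopened facilities, so this case must be dismissed separately. There every term of $Q$ vanishes, so $Q=0<\hat f$, exactly as in the paper.
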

\begin{proof}
If $i_1 = i_2$ then the claim is immediate so we assume $i_1 \neq i_2$.
Let $N_1 = A \cap B(i_1, \epsilon \theta)$ and $N_2 = A \cap B(i_2, \epsilon \theta)$.
Let us assume without loss of generality that $|N_1| \le |N_2|$.
Let us also assume for the sake of contradiction that $\delta \theta < d(i_1, i_2) < \theta/\gamma$, which implies that $N_1$ and $N_2$ are disjoint as $\delta \gg \eps$.

Let 
\[P = \sum_{j\in A}\beta((1+\eps^2) \theta, (1- \delta/2)\dist(i_1,j)) +
\sum_{j\in I}\beta((1-\delta)\dist(j,\SS), (1-\delta/2)\dist(i_1,j))\]
and 
\[Q = \sum_{j\in A}\beta(\theta, (1-\delta)\dist(i_1,j)) +\sum_{j\in I}\beta((1-\delta)\dist(j,\SS), (1-\delta)\dist(i_1,j))\,.\]
Because $i_1$ is a candidate, $P \geq \hat f$.
We also claim that $Q < \hat f$.
If $i_1 \not\in \SS$, we claim that this is true because otherwise facility $i_1$ would have been 1-openable in Stage 1.
Let $\SS' \subseteq \SS$ be the set of open facilities at the beginning of Stage 2, and let $A', I'$ be the set of active and inactive clients at the beginning of Stage 2.
Note that $A' \supseteq A$ and $I' \subseteq I$.
For $j \in I'$ we use $(1-\delta) d(j,\SS) \leq (1-\delta) d(j, \SS')$, and for $j \in A' \cap I$ we use $(1-\delta) d(j,\SS) \leq \theta$.
The last inequality is immediate if $\alpha_j^\star = \theta$, and if $\alpha_j^\star > \theta$ then $j$ must have been $\eps \theta$-close to some facility in $\SS$ and hence $d(j,\SS) \leq \eps\theta$.
Then
\begin{align*}
Q &= \sum_{j\in A}\beta(\theta, (1- \delta)\dist(i_1,j)) +
\sum_{j\in I}\beta((1-\delta)\dist(j,\SS), (1-\delta)\dist(i_1,j))\\
&\leq \sum_{j\in A'}\beta(\theta, (1- \delta)\dist(i_1,j)) +
\sum_{j\in I'}\beta((1-\delta)\dist(j,\SS'), (1-\delta)\dist(i_1,j))\,,
\end{align*}
so the last term would have been at least $\hat f$ if $Q \geq \hat f$, and then facility $i_1$ would have been 1-openable in Stage 1.
On the other hand, if $i_1 \in \SS$, the claim is true simply because $\theta < (1-\delta) d(i_1, j)$ for $j \in A$ and $d(j, \SS) \leq d(i_1, j)$ for $j \in I$, so the left-hand side would be zero.
Hence $Q < \hat f$.

We will prove that $Q \geq P$, which is impossible because $Q < \hat f \leq P$, and thus we reach the desired contradiction.
Note that the inactive term of $P$ is trivially upper bounded by the inactive term of $Q$, so it suffices to prove that this is also true for their active terms.
We consider three cases:
\begin{itemize}
    \item \textbf{Case $j \in N_1$.} 
    Then, because $d(i_1, j) \leq \eps \theta$, we have that $\beta((1+\eps^2)\theta, (1-\delta/2)d(i_1,j))$ and $\beta(\theta,(1-\delta)d(i_1,j))$ are both positive and
    \begin{align*}
    &[\theta - \gamma(1-\delta)d(i_1,j)]^+ - [(1+\eps^2)\theta - \gamma(1-\delta/2)d(i_1,j)]^+\\
    =& -\eps^2 \theta + (\gamma \delta / 2) d(i_1, j)
    \geq -\eps^2 \theta\,.
    \end{align*}
    \item \textbf{Case $j \in N_2$.}
    Then $\delta \theta - \eps \theta \leq d(i_1, j) \leq \theta / \gamma + \eps \theta$, and we have again that $\beta((1+\eps^2)\theta, (1-\delta/2)d(i_1,j))$ and $\beta(\theta,(1-\delta)d(i_1,j))$ are both positive and
    \begin{align*}
    &[\theta - \gamma(1-\delta)d(i_1,j)]^+ - [(1+\eps^2)\theta - \gamma(1-\delta/2)d(i_1,j)]^+\\
    =& -\eps^2 \theta + (\gamma \delta / 2) d(i_1, j)
    \geq -\eps^2 \theta + (\gamma \delta / 2)(\delta-\eps)\theta
    \geq \eps^2 \theta\,,
    \end{align*}
    where we used $\delta \gg \eps$.
    \item \textbf{Case $j \in A \setminus (N_1 \cup N_2)$.}
    Then $d(i_1, j) \geq \eps \theta$.
    First, we claim that $\beta((1+\eps^2)\theta, (1-\delta/2)d(i_1,j))$ is positive only if $\beta(\theta,(1-\delta)d(i_1,j))$ is positive:
    Suppose
    \[(1+\eps^2)\theta - \gamma(1-\delta/2)d(i_1, j) \geq 0\,.\]
    Then
    \begin{align*}
    \theta - \gamma (1-\delta) d(i_1, j)
    \geq \frac{\gamma(1-\delta/2)}{1+\eps^2}d(i_1,j)-\gamma(1-\delta)d(i_1,j) \geq 0\,,
    \end{align*}
    where we used $\delta \gg \eps$.
    So it suffices to consider the case when both are positive.
    In that case,
    \begin{align*}
    &[\theta - \gamma(1-\delta)d(i_1,j)]^+ - [(1+\eps^2)\theta - \gamma(1-\delta/2)d(i_1,j)]^+\\
    =& -\eps^2 \theta + (\gamma \delta / 2) d(i_1, j)
    \geq -\eps^2 \theta + (\gamma \delta / 2) \eps \theta
    \geq 0\,,
    \end{align*}
    where we used again $\delta \gg \eps$.
\end{itemize}
Summing up these contributions over all $j \in A$, and using that $|N_1| \leq |N_2|$, we get that $Q \geq P$, finishing the proof.
\end{proof}

Finally, we prove a structural that is rather technical but is essential in the proof that our solution is feasible for the factor-revealing program:
For any fixed facility $i^\star$, if one considers the subset of clients that are not too far from $i^\star$ when they become inactive, then if any of these clients $j$ is active in Stage $2$ with $\tau_j > \theta$, then all other active clients in this subset also become inactive in the same atomic step.

\begin{lemma}
    \label{lem:freezing}
    Assume $\rho \geq 1.01 \gamma$.
    Consider a facility $i^\star$ and a subset of clients $D^\star \subseteq [n]$ such that for all $j \in D^\star$ it holds that $\alpha_j^\star \geq \rho \cdot d(j, i^\star)$.
    Suppose that a client $j \in D^\star$ becomes inactive in Stage 2 and has $\alpha_j^\star > \theta$ at that step.
    Suppose also that facility $i^\star$ was a candidate at that step.
    Then, for the facility $i_0$ that was opened at that step, we have $d(i^\star, i_0) \leq \delta \theta$.
    Furthermore, all other clients in $D^\star$ that were active before this atomic step become inactive in this atomic step.
\end{lemma}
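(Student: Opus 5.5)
We plan to combine the clusterability dichotomy of Lemma~\ref{lem:structure} with the hypothesis $\alpha_j^\star \ge \rho\, d(j,i^\star)$ and the slack $\rho \ge 1.01\gamma$.

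\textbf{Localizing $j$ near both facilities.} Let $j\in D^\star$ be the client of the statement, inactivated in Stage~2 at time $\theta$ with $\alpha_j^\star=\tau_j>\theta$. By condition (i) of Definition~\ref{def:openable-mod-2}, we have $j\in B(i_0,\eps\theta)$. Condition (ii) gives $\alpha_j^\star\le(1+\eps^2)\theta$. The hypothesis then yields
\[
d(j,i^\star)\le \alpha_j^\star/\rho \le (1+\eps^2)\theta/(1.01\gamma),
\]
and the triangle inequality gives
\[
d(i_0,i^\star)\le \eps\theta+(1+\eps^2)\theta/(1.01\gamma) < \theta/\gamma
\]
once $\eps$ is small compared to $1/\gamma$. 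The opened facility $i_0$ is 2-openable, hence a candidate, since $\tau\le(1+\eps^2)\theta$ and $\beta$ is monotone. By assumption $i^\star$ is also a candidate at this step. Lemma~\ref{lem:structure} therefore forces $d(i^\star,i_0)\le\delta\theta$, which proves the first claim. If $i^\star=i_0$ the bound is trivial.

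\textbf{Freezing the remaining clients.} Now let $k\in D^\star$ be active before this step. Then $\alpha_k^\star\ge\theta$, and we also have $\alpha_k^\star\ge \rho\, d(k,i^\star)$. We want to show that $k$ becomes inactive in this step, i.e., that $\tau_k\ge(1-\delta)d(k,\SS\cup\{i_0\})$. We will bound
\[
d(k,i_0)\le d(k,i^\star)+\delta\theta .
\]
The difficulty is that $\alpha_k^\star$ is $k$'s final value, which could be large. So we first argue that $\alpha_k^\star$ cannot exceed roughly $(1-\delta)d(k,\SS\cup\{i_0\})$ in a way that defeats this bound. Suppose $k$ stays active after the step. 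Its value then keeps growing only while it remains below $(1-\delta)d(k,\SS')$ for the current open set $\SS'\ni i_0$. Hence
\[
\alpha_k^\star\le(1-\delta)d(k,i_0)\le (1-\delta)\bigl(d(k,i^\star)+\delta\theta\bigr)\le(1-\delta)\bigl(\alpha_k^\star/\rho+\delta\theta\bigr).
\]
Since $\rho\ge 1.01\gamma\ge 1.01$, this rearranges to $\alpha_k^\star(1-(1-\delta)/\rho)\le \delta\theta$, so $\alpha_k^\star = O(\delta)\theta<\theta$ for $\delta$ small. This contradicts $\alpha_k^\star\ge\theta$, so $k$ must become inactive in this same atomic step.

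\textbf{Main obstacle.} The delicate point is the last step. It needs $k$'s eventual value to be controlled by $(1-\delta)d(k,i_0)$, including the Stage~3 rule $\alpha_k\leftarrow\min((1+\eps^2)\theta,(1-\delta)d(k,\SS))$, and it needs a careful treatment of what "inactive in this atomic step" means for clients with $\tau_k=\theta$ versus $\tau_k>\theta$. I expect to handle this by noting that after $i_0$ opens, any later value of $k$ is capped by $(1-\delta)d(k,i_0)$ at every stage, by Stage~1/2 removal and the Stage~3 minimum. Combined with $\alpha_k^\star\ge\theta$, this yields the removal condition $\theta\ge(1-\delta)d(k,\SS\cup\{i_0\})$ already at this step, so $k$ is marked served now.
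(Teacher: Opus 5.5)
Your proposal is correct and follows essentially the same route as the paper. Both facilities are candidates, so Lemma~\ref{lem:structure} together with $d(j,i_0)\le\eps\theta$ and $d(j,i^\star)\le(1+\eps^2)\theta/\rho$ rules out the far case. A client $k\in D^\star$ that survives the step is then contradicted by comparing an upper bound on $\alpha_k^\star$ in terms of $d(k,i_0)$ with the lower bound $\rho\, d(k,i^\star)\ge\rho\,(d(k,i_0)-\delta\theta)$.

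The differences are cosmetic. You use the slightly sharper cap $\alpha_k^\star\le(1-\delta)d(k,i_0)$, which is valid because of condition (ii) and the minimum in Stage~3; the paper uses $(1+\eps^2)(1-\delta)d(k,i_0)$. You also absorb the $\delta\theta$ term via $\theta\le\alpha_k^\star$, whereas the paper uses $\theta<(1-\delta)d(k,i_0)$.
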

\begin{proof}
    Both $i^\star$ and $i_0$ are candidates, so by Lemma~\ref{lem:structure} we have that either $d(i^\star,i_0) \leq \delta \theta$ or $d(i^\star,i_0) \geq \theta/\gamma$.
    Note that, because $\alpha_j^\star > \theta$, we have $d(j, i_0) \leq \epsilon \theta$.
    In Stage 2 we have $\alpha^\star_j \leq (1+\eps^2) \theta$, so combining with $\alpha_j^\star \geq \rho \cdot d(j, i^\star)$ we get that $(1+\eps^2) \theta \geq \rho \cdot d(j, i^\star)$.

    If $d(i^\star, i_0) \geq \theta/\gamma$, we have that $d(j, i^\star) \geq d(i^\star, i_0) - d(j, i_0) \geq \theta/\gamma - \epsilon\theta \geq 0.999 \cdot \theta/\gamma$ (using $1/\gamma \gg \epsilon$), which violates the assumption that $d(j, i^\star) \leq (1+\eps^2) \theta/\rho$.

    Hence, it must be the case that $d(i^\star, i_0) \leq \delta \theta$.
    Suppose some $k \in D^\star$ that is active before this atomic step does not become inactive at this atomic step. 
    This implies that $\tau_k < (1-\delta) d(k, i_0)$, but we must also have $\tau_k = \theta$ because otherwise $k$ would have been $\epsilon \theta$-close to $i_0$ and it would have become inactive at this atomic step, so $\theta < (1-\delta) d(k, i_0)$.
    However, after $i_0$ is opened, this client will become inactive at the latest when its $\theta$ value increases to at least $(1-\delta) d(k, i_0)$.
    Then, because $\theta$ increases in multiples of $1+\eps^2$, we have that $\alpha^\star_k \leq (1+\eps^2)(1-\delta)d(k, i_0)$.
    On the other hand,
    \[\alpha^\star_k \geq \rho \cdot d(k, i^\star) \geq \rho \cdot (d(k, i_0) - d(i^\star, i_0)) \geq \rho \cdot d(k, i_0) - \rho \delta \theta \geq \rho \Paren{1 - \delta(1-\delta)} d(k, i_0)\,.\]
    Putting the two together, we get that
    \[\rho \Paren{1 - \delta(1-\delta)} d(k, i_0) \leq (1+\eps^2)(1-\delta)d(k, i_0)\,,\]
    which is a contradiction for $\rho \geq 1.01$ for $\delta \gg \eps > 0$ small enough.
\end{proof}

\subsection{Algorithm Cost}

We now bound the cost of the algorithm in terms of the variables $r$.

\begin{lemma}
\label{lemma:alg-cost}
    The cost (with respect to opening cost $\hat{f}$) of a solution output by our algorithm is at most
    \[\sum_{j=1}^n \Paren{\sum_{i=j}^{n} \beta(r_{j,i}, r_{j,i+1}) + r_{j,n+1}} + \delta \cdot \sum_{j=1}^n d(j,\SS^\star)\,.\]
\end{lemma}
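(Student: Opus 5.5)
We bound the cost client by client, following the cost decomposition for Algorithm~\ref{alg:dual_generalized} from Section~\ref{sec:natural-properties-bidding}. The cost with opening cost $\hat f$ is $\sum_{j} d(j,\SS^\star) + \hat f|\SS^\star|$. First we show that every opened facility is fully paid by client bids, measured in a suitable scaled metric. Then we charge each client's bids to the telescoping-type sum $\sum_{i\ge j}\beta(r_{j,i},r_{j,i+1})$. Finally we account for the connection term.

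\textbf{Step 1: each opening is paid.} Consider a facility opened in Stage~1 or Stage~2. By Definition~\ref{def:openable-mod-1} or Definition~\ref{def:openable-mod-2}(iii), we have $\hat f \le \sum_{j\in A}\beta(\alpha_j^{\text{new}}, c\,d(i,j)) + \sum_{j\in I}\beta((1-\delta)d(j,\SS), c\,d(i,j))$, where $c\in\{1-\delta,1-\delta/2\}$ and $\alpha^{\text{new}}$ denotes $\theta$ or $\tau$ respectively.

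Because $\beta$ is non-increasing in its second argument, we replace $c\,d(i,j)$ by the smaller quantity $(1-\delta)d(i,j)$. For clients $j$ that actually improve, $(1-\delta)d(i,j) = (1-\delta)d(j,\SS\cup\{i\})$. Clients whose connection does not improve can be dropped: their bid $\beta(x,(1-\delta)d(i,j))$ with $(1-\delta)d(i,j)\ge (1-\delta)d(j,\SS)\ge x$ is zero for inactive clients. For active clients that remain active, the bid towards $i$ is charged at their current $\alpha$.

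Summing over all openings, $\hat f|\SS^\star|$ is at most the total bids. Each bid is a term $\beta(x,y)$ in which $x$ is the client's value ($\alpha_j$ or $(1-\delta)\cdot$ its old distance) and $y$ is $(1-\delta)\cdot$ its new distance. An active client $j$ becomes inactive in the very step where its bid is positive, because $\alpha_j\ge\gamma(1-\delta)d(i,j)\ge (1-\delta)d(i,j)$. So each client makes at most one \emph{active} bid, $\beta(\alpha_j^\star, (1-\delta)d(j,\SS^{\text{after}}))$, followed by inactive bids.

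\textbf{Step 2: grouping the bids per client.} Fix $j$ and consider the successive values $r_{j,j}\ge r_{j,j+1}\ge\cdots\ge r_{j,n+1}$ (Observation~\ref{obs:r-noninc}). Each inactive bid occurs in some atomic step, and between the steps where clients $k$ and $k+1$ become inactive the scaled distance moves from $r_{j,k}$ to $r_{j,k+1}$, possibly through several intermediate facilities.

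By superadditivity (Definition~\ref{def:admissible-bidding}), all bids of $j$ in the window between $k$ and $k+1$ sum to at most $\beta(r_{j,k},r_{j,k+1})$. The first window starts from $\alpha_j^\star=r_{j,j}$ and covers the active bid together with any same-step improvements. Hence the total bids of $j$ are at most $\sum_{i=j}^n\beta(r_{j,i},r_{j,i+1})$.

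We must check that atomic steps align with the indices $k$. Bids made in steps where no client becomes inactive are absorbed into the next window. When clients become inactive in the same atomic step, $r_{j,k}=\alpha_j^\star$ for each of them; this is consistent because those bids are made at value $\alpha_j^\star$.

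\textbf{Step 3: connection cost.} The connection cost is $d(j,\SS^\star)=(1-\delta)d(j,\SS^\star)+\delta d(j,\SS^\star) = r_{j,n+1}+\delta\, d(j,\SS^\star)$, which gives exactly the additional term $\delta\sum_j d(j,\SS^\star)$. Summing Steps~1--3 proves the lemma.

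\textbf{Main obstacle.} The delicate part is Step~1/2 bookkeeping for the active bid. In Stage~2, $\tau_j$ may exceed $\theta$, and the paying inequality uses scaling $(1-\delta/2)$ rather than $(1-\delta)$; monotonicity handles the scaling, and $\tau_j=\alpha_j^\star$ handles the value. A second subtlety is whether an active client's bid towards the opened facility can be positive while the client stays active. The positivity check above rules this out, since $\gamma\ge1$.
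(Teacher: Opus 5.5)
Your proposal is correct and follows essentially the same route as the paper. Both arguments bound each opening by the Stage~1/Stage~2 payment inequality, show that clients still active after an opening bid zero (using $\gamma\ge 1$), and chain each client's remaining bids between consecutive inactivation steps via superadditivity. Both then split the connection cost as $d(j,\SS^\star)=r_{j,n+1}+\delta\,d(j,\SS^\star)$. The paper makes your ``windows'' precise with auxiliary values $R_{jk}$ indexed by facility openings and the identity $r_{j,k}=R_{j,s(k)}$. This includes the Stage~3 case where $\alpha_j^\star=(1-\delta)d(j,\SS^{(k)})$, which you handle only informally but correctly.
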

\begin{proof}
    The $j$-th client has connection cost $d(j, \SS^\star)$.
    We also need to calculate the contribution of $j$ to the opening of each facility in $\SS^\star$.
    It costs $\hat{f}$ to open a facility, and we know that at the atomic step when it was opened we had
    \[
\sum_{j\in A}\beta(\theta, (1-\delta)\dist(i,j))
\;+\;
\sum_{j\in I}\beta((1-\delta)\dist(j,\SS), (1-\delta)\dist(i,j))
\;\ge\; \hat f
\]
    if it was opened in Stage 1 (by Definition~\ref{def:openable-mod-1})
    or 
  \[
      \sum_{j\in A}\beta(\tau_j, (1- \delta/2)\dist(i,j))
      \;+\;
      \sum_{j\in I}\beta((1-\delta)\dist(j,\SS), (1- \delta/2)\dist(i,j))
      \;\ge\; \hat f.
  \]
  if it was opened in Stage 2 (by Definition~\ref{def:openable-mod-2}).
  Therefore, it will suffice to bound these left-hand sides.
    
    Also let $\SS_0 \subsetneq \SS_1 \subsetneq \ldots \subsetneq \SS_m = \SS^\star$ with $m = |\SS^\star|$ be the sets of opened facilities during the execution of the algorithm.
    Let $s(j)$ be the index $k$ such that $\SS_k = \SS^{(j)}$ and $s(n+1) = m$.
    For $1 \le j \leq n$ and $0 \leq k \le m - 1$ such that $s(j) \leq k$, we set 
    \[
    R_{jk} =
    \begin{cases}
        \alpha^\star_j &
        \begin{array}[t]{@{}l@{}}
            \text{if $j$ is active at the beginning of the atomic step}\\
            \text{in which facility $\SS_{k+1}\setminus \SS_k$ is opened}\,,
        \end{array}
        \\[2mm]
        (1-\delta)d(j,\SS_k) & \text{otherwise}\,.
    \end{cases}
    \]
    We also define $R_{j,m}= (1-\delta)d(j, \SS^\star)$ for all $j$.
    We note that $r_{j,k} = R_{j, s(k)}$ for all valid $j,k$: The only potential difficulty is if $k$ becomes inactive in Stage 3, because then it is possible for $j$ to be active w.r.t. $\SS^{(k)}$ at the beginning of the atomic step in which $k$ becomes inactive but for it to be inactive w.r.t. $\SS^{(k)}$ at the beginning of the atomic step in which the next facility is opened.
    However, in this one special case, we have in fact $\alpha^\star_j = (1-\delta)d(j,  \SS^{(k)})$, so the difference in definitions does not matter and we still have $r_{j,k} = R_{j, s(k)}$.

    \paragraph{Stage $1$}
    Suppose that $i$ is opened in Stage $1$ with respect to a set of open facilities $\SS$ and with active and inactive sets $A$ and $I$, respectively.
    Then, if $j \in A$ then the contribution is $\beta(\theta, (1-\delta)d(i,j))$ and if $j \in I$ then the contribution is $\beta((1-\delta)d(j,\SS), (1-\delta)d(i, j))$.
    
    We note that if $i$ is opened before the atomic step in which $j$ becomes inactive, we have that the contribution is zero, because $j$ is active after the opening of $i$ and hence $\theta < (1-\delta)d(j,\SS \cup \{i\})$.

    We now bound the contribution of client $j$ to the opening of the facility $\SS_{k+1} \setminus \SS_{k}$, for $k \geq s(j)$, by $\beta(R_{j,k}, R_{j,k+1})$.
    This follows by the definition of $R$.
    Specifically, if $j \in A$, then $R_{j,s(j)} = \alpha_j^\star = \theta$ because $j$ must become inactive in this atomic step, matching the above.
    If $j \in I$ but $k = s(j)$, which can happen if $j$ becomes inactive in Stage 3, we have $R_{j,s(j)} = \alpha_j^\star = (1-\delta)d(j, \SS)$, again matching the above.
    For $k > s(j)$ we have by definition $R_{j, k} = (1-\delta)d(j, \SS)$, again matching the above.
    Finally, for all $k \geq s(j)$, $R_{j,k+1} = (1-\delta) d(j, \SS_{k+1})$, where $d(j, \SS_{k+1})$ is a lower bound on the distance from $j$ to the facility, so together the above suffice to upper bound the contribution by $\beta(R_{j,k}, R_{j,k+1})$.

    \paragraph{Stage $2$.}
    Suppose that $i$ is opened in Stage $2$ with respect to a set of open facilities $\SS$ and with active and inactive sets $A$ and $I$, respectively.
    If $j \in A$ then the contribution is $\beta(\tau_j, (1-\delta/2)d(j,i))$ and if $j \in I$ then the contribution is $\beta((1-\delta)d(j, \SS), (1-\delta/2) d(j, i))$.

    Suppose $i$ is opened before the atomic step in which $j$ becomes inactive.
    If $\tau_j > \theta$ then $d(j, i) \leq \epsilon \theta$ and $j$ in fact becomes inactive in this step, so we must have $\tau_j = \theta$.
    But then, because $j$ is active after the opening of $i$, $\theta < (1-\delta) d(j, \SS \cup\{i\}) \leq (1-\delta) d(j, i)$, which implies that $\beta(\tau_j, (1-\delta/2)d(j,i)) = 0$.

    We now bound the contribution of client $j$ to the opening of the facility $\SS_{k+1} \setminus \SS_{k}$, for $k \geq s(j)$, by $\beta(R_{j,k}, R_{j,k+1})$.
    This follows by the definition of $R$.
    Specifically, if $j \in A$, then $R_{j,s(j)} = \alpha_j^\star = \tau_j$ because $j$ must become inactive in this atomic step, matching the above.
    If $j \in I$ but $k = s(j)$, which can happen if $j$ becomes inactive in Stage 3, we have $R_{j,s(j)} = \alpha_j^\star = (1-\delta)d(j, \SS)$, again matching the above.
    For $k > s(j)$ we have by definition $R_{j, k} = (1-\delta)d(j, \SS)$, matching the above.
    Finally, for all $k \geq s(j)$, $R_{j,k+1} = (1-\delta) d(j, \SS_{k+1})$, where $(1-\delta)d(j, \SS_{k+1})$ is a lower bound on $(1-\delta/2)$ times the distance from $j$ to the facility, so together the above suffice to upper bound the contribution by $\beta(R_{j,k}, R_{j,k+1})$.
    
    \paragraph{Putting things together.}
    The above gives that the total opening costs are upper bounded by
    \begin{align*}
    \sum_{j=1}^n \sum_{k=s(j)}^{m-1} \beta(R_{j,k}, R_{j, k+1})
    &= \sum_{j=1}^n \Paren{\sum_{i=j}^n \sum_{k=s(i)}^{s(i+1)-1} \beta(R_{j,k}, R_{j, k+1})}\,.
    \end{align*}
    Then, because $r_{j,i} = R_{j,s(i)}$ for all valid $j,i$, and because $R_{j,k} \geq R_{j,k+1}$ for all valid $j,k$ (by the same reason as in Observation~\ref{obs:r-noninc}), we have by the superadditivity of $\beta$ that furthermore
    \[\sum_{j=1}^n \Paren{\sum_{i=j}^n \sum_{k=s(i)}^{s(i+1)-1} \beta(R_{j,k}, R_{j, k+1})} \leq \sum_{j=1}^n \sum_{i=j}^n \beta(r_{j,i}, r_{j,i+1})\,.\]
    Using that $r_{j,n+1} = (1-\delta)d(j, \SS^\star)$, we also have that the connection cost satisfies $d(j, \SS^\star) = r_{j,n+1} + \delta d(j, \SS^\star)$.
    Then the total cost is upper bounded by
    \[\sum_{j=1}^n \Paren{\sum_{i=j}^{n} \beta(r_{j,i}, r_{j,i+1}) + r_{j,n+1}} + \delta \cdot \sum_{j=1}^n d(j,\SS^\star)\,.\]
\end{proof}

\subsection{Factor-Revealing Program Feasibility}
\label{sec:fr-feasiblity}
In this subsection we consider some $i^\star \in \FF$ and a subset of clients $\Dgood \subseteq D$ such that, for all $j \in \Dgood$, $\alpha_j^\star \geq \rho \cdot d(j, i^\star)$.
Let $q(i) = \min \Set{i' \mid i' > i, i' \in \Dgood \cup \{n+1\}}$.
We will assume throughout that $\hat{f} > 0$ and $\rho \geq 1.01 \gamma$.

We will show that $\{r_{jk} \mid j \leq k, j \in \Dgood, k \in \Dgood \cup \{n + 1\}\}, \{d(j, i^\star) \mid j \in \Dgood\}, \hat f$ is feasible for the factor revealing program in Definition~\ref{def:frp}, except for the normalization constraint (\ref{eq:fr-norm}).

First we will prove that the facility constraints (\ref{eq:fr-facility}) hold.
\begin{lemma}
\label{lem:fr-facility}
    Consider the central facility $i^\star$ and let $k \in \Dgood$. Then,
    \[
        \sum_{\substack{j \ge k\\ j \in \Dgood}} \beta(\alpha^\star_k, d(i^\star,j)) + \sum_{\substack{j < k\\ j \in \Dgood}} \beta(r_{jk}, d(i^\star,j)) \le \hat f\,.
    \]
\end{lemma}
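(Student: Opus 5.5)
The plan is to deduce the inequality from the no-overbidding invariant (Lemma~\ref{lem:overbidding}), applied to the facility $i^\star$ at the state just before the atomic step in which $k$ becomes inactive. At that moment there are two kinds of clients in $\Dgood$. A client $j<k$ that was already inactive bids $\beta((1-\delta)d(j,\SS^{(k)}),d(i^\star,j))=\beta(r_{jk},d(i^\star,j))$. Every other client $j$ (with $j\ge k$, or $j<k$ and becoming inactive in the same step) is active and bids $\beta(\theta,d(i^\star,j))$. Lemma~\ref{lem:overbidding} bounds the total of these bids by $\hat f$. Since all bids are nonnegative, restricting the sum to $\Dgood$ keeps the bound. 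The remaining work is to pass from $\theta$ to $\alpha_k^\star$, and from $\theta$ to $r_{jk}=\alpha^\star_j$ for same-step clients $j<k$.

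\textbf{Case analysis on the step in which $k$ becomes inactive.}
\begin{enumerate}
\item \emph{Stage 1, or Stage 3.} Here $\alpha_k^\star\le\theta'$, where $\theta'$ is the value of $\theta$ after the step. In Stage 3 I will apply Lemma~\ref{lem:overbidding} to the state after the step instead.
\begin{itemize}
\item Active clients then bid $\beta(\theta',\cdot)$, and this dominates $\beta(\alpha_k^\star,\cdot)$ by monotonicity.
\item Clients that froze in this step have $r_{jk}=\alpha^\star_j$ and bid $\beta((1-\delta)d(j,\SS),\cdot)$. Since $\alpha^\star_j=(1-\delta)d(j,\SS)$ in Stage 3, these match exactly.
\item For clients $j\ge k$ that froze in this step, the ordering gives $\alpha^\star_j\ge\alpha^\star_k$, so their post-step bid $\beta(\alpha^\star_j,\cdot)$ is at least $\beta(\alpha^\star_k,\cdot)$, which is what we need.
\end{itemize}
In Stage 1 every active client has $\alpha=\theta$, so the pre-step invariant suffices directly.

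\item \emph{Stage 2 with $\tau$ values above $\theta$.} This is the case where clients in $\Dgood$ may have $\alpha^\star>\theta$. Here I will use condition (iv) of Definition~\ref{def:openable-mod-2}: after the update to $\tau$, the bids toward every facility, $i^\star$ included, are at most $\hat f$. I then need every relevant $j\ge k$ in $\Dgood$ to have $\tau_j\ge\alpha^\star_k$ whenever $\alpha^\star_k=\tau_k>\theta$.
\begin{itemize}
\item Clients in $\Dgood$ that stay active have $\tau_j=\theta$, and these would break this. Lemma~\ref{lem:freezing} (with $D^\star=\Dgood$) rules them out: if some client of $\Dgood$ has $\alpha^\star>\theta$ in that step, then all other active clients of $\Dgood$ freeze in the same step.
\item So all $j\ge k$ in $\Dgood$ freeze in this step with $\alpha^\star_j=\tau_j\ge\alpha^\star_k$ by the ordering, and same-step $j<k$ contribute $\beta(\tau_j,\cdot)=\beta(r_{jk},\cdot)$.
\end{itemize}
Lemma~\ref{lem:freezing} requires $i^\star$ to be a candidate at that step. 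If $i^\star$ is not a candidate, then the bids toward $i^\star$ even with all active clients raised to $(1+\eps^2)\theta$ are below $\hat f$ in the $(1-\delta/2)$-metric. Monotonicity in the distance argument then gives the bound in the true metric, with $\alpha^\star_k\le(1+\eps^2)\theta$ covering everything.
\end{enumerate}

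\textbf{Main obstacle.} I expect the delicate point to be this Stage~2 bookkeeping: matching the definition of $r_{jk}$ for same-step clients, and justifying the use of the post-update invariant (iv) or the candidate condition. The other cases reduce to monotonicity of $\beta$.
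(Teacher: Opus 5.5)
Your proposal is correct and follows the paper's proof essentially step for step. You use the pre-step no-overbidding invariant for Stage~1 and for Stage~2 when $\alpha_k^\star=\theta$ (where same-step $j<k$ are forced to have $\alpha_j^\star=\theta$), the post-step invariant for Stage~3, and, for Stage~2 with $\alpha_k^\star>\theta$, Lemma~\ref{lem:freezing} together with condition (iv) of 2-openability. Your split on whether $i^\star$ is a candidate is just the contrapositive of the paper's argument: the paper assumes the sum exceeds $\hat f$, deduces that $i^\star$ is a candidate, and reaches a contradiction through the same freezing-plus-(iv) bound.
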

\begin{proof}
    Consider the phase when $k$ goes inactive. 
    $k$ goes inactive either in Stage 1, Stage 2 or Stage 3.
    \paragraph{Stage 1}
    Consider the state $(\SS=\SS^{(k)},A,\theta)$ of the algorithm just before the atomic step in which $k$ becomes inactive.
    If $k$ goes inactive in stage 1 then $\alpha^\star_k = \theta$ and by Lemma \ref{lem:overbidding},
    \begin{align*}
        \hat f & \ge \sum_{j \in A} \beta(\theta, d(i^\star,j)) + \sum_{j \in I} \beta((1-\delta)d(j, \SS), d(i^\star,j)) 
        \\ & = \sum_{j \ge k} \beta(\alpha^\star_k, d(i^\star,j)) + \sum_{j < k} \beta(r_{jk}, d(i^\star,j)) 
        \\ & \ge \sum_{\substack{j \ge k\\ j \in \Dgood}} \beta(\alpha^\star_k, d(i^\star,j)) + \sum_{\substack{j < k\\ j \in \Dgood}} \beta(r_{jk}, d(i^\star,j))\,,
    \end{align*}
    where we use that if $j < k$ but $j \in A$ then we have $r_{jk} = \alpha_j^\star = \theta$.
    \paragraph{Stage 2}
    Again, consider the state $(\SS^{(k)},A,\theta)$ of the algorithm just before the atomic step in which $k$ becomes inactive.

    The case $\alpha_k^\star = \theta$ is easy:
    Note that $r_{jk} = \alpha_j^\star = \theta$ for all $j < k$ with $j \in A$, as $\alpha_j^\star \geq \theta$ and $\alpha_j^\star \leq \alpha_k^\star$ (because if $j < k$ and $j \in A$ then $j$ must become inactive in the same atomic step as $k$).
    Then by Lemma \ref{lem:overbidding},
    \begin{align*}
        \hat f & \ge \sum_{j \in A} \beta(\theta, d(i^\star,j)) + \sum_{j \in I} \beta((1-\delta)d(j, \SS), d(i^\star,j)) 
        \\ & = \sum_{j \ge k} \beta(\alpha^\star_k, d(i^\star,j)) + \sum_{j < k} \beta(r_{jk}, d(i^\star,j)) 
        \\ & \ge \sum_{\substack{j \ge k\\ j \in \Dgood}} \beta(\alpha^\star_k, d(i^\star,j)) + \sum_{\substack{j < k\\ j \in \Dgood}} \beta(r_{jk}, d(i^\star,j))\,,
    \end{align*}
    where we use that if $j < k$ but $j \in A$ then we have $r_{jk} = \alpha_j^\star = \theta$.
    
    Else, $\alpha_k^\star > \theta$.
    Assume for contradiction,
    \[
        \sum_{\substack{j \ge k\\ j \in \Dgood}} \beta(\alpha^\star_k, d(i^\star,j)) + \sum_{\substack{j < k\\ j \in \Dgood}} \beta(r_{jk}, d(i^\star,j)) >  \hat f\,.
    \]
    Observe that $\alpha^\star_k \leq (1+\epsilon^2)\theta$. Moreover, for $j < k$ and $j \in A$ we have $r_{jk} = \alpha^\star_j \leq (1+\epsilon^2)\theta$. Then, $i^\star$ is a candidate in Stage 2 since
    \begin{align*}
        & \sum_{j\in A}\beta((1+\epsilon^2)\theta, (1- \delta/2)d(i^\star,j)) + \sum_{j\in I}\beta((1-\delta)\dist(j,\SS^{(k)}), (1-\delta/2)d(i^\star,j)) \\
        \ge & \sum_{j \ge k} \beta(\alpha^\star_k, d(i^\star,j)) + \sum_{j < k} \beta(r_{jk}, d(i^\star,j)) \\
        \ge & \sum_{\substack{j \ge k\\ j \in \Dgood}} \beta(\alpha^\star_k, d(i^\star,j)) + \sum_{\substack{j < k\\ j \in \Dgood}} \beta(r_{jk}, d(i^\star,j)) > \hat f\,.
    \end{align*}
    Here, the first inequality holds by the monotonicity of $\beta$.
    So $k \in \Dgood$ becomes inactive in Stage 2 while $i^\star$ is a candidate. Thus, by Lemma \ref{lem:freezing} we know that all clients in $A \cap \Dgood$ become inactive in the same atomic step as $k$ and $\alpha^\star_j = \tau_j$ for all $j \in A\cap \Dgood$. In particular, by property (iv) in $2$-openable,
    \begin{align*}
        \hat f & \ge \sum_{j \in A} \beta(\tau_j, d(i^\star,j)) + \sum_{j \in I} \beta((1-\delta)d(j, \SS^{(k)}), d(i^\star,j)) \\
        & \ge \sum_{j \in A \cap \Dgood} \beta(\alpha^\star_j, d(i^\star,j)) + \sum_{j \in I \cap \Dgood} \beta((1-\delta)d(j, \SS^{(k)}), d(i^\star,j)) \\
        & \ge \sum_{\substack{j \ge k\\ j \in \Dgood}} \beta(\alpha^\star_k, d(i^\star,j)) + \sum_{\substack{j < k\\ j \in \Dgood}} \beta(r_{jk}, d(i^\star,j))\,.
    \end{align*}
    For the last inequality we used that $\beta$ is monotone together with $\alpha^\star_j \ge \alpha^\star_k$ for all $j \ge k, j \in A$ since both $k$ and $j$ become inactive in the same atomic step and are hence ordered by $\alpha^\star$. Moreover, for $j < k$, we set $r_{jk} = \alpha^\star_j$ if $j \in A$ and $r_{jk} = (1-\delta)d(j,\SS^{(k)})$ if $j \in I$.
    \paragraph{Stage 3}
    In this case $k$ goes inactive when we increase the $\alpha$-values in Stage 3. Let $(\SS,A,\theta)$ be the state of the algorithm at the end of Stage 3. By Lemma \ref{lem:overbidding},
    \[
        \hat f \ge \sum_{j \in A} \beta(\theta, d(i^\star,j)) + \sum_{j \in I} \beta((1-\delta)d(j, \SS), d(i^\star,j))  \ge \sum_{\substack{j \ge k\\ j \in \Dgood}} \beta(\alpha^\star_k,  d(i^\star,j)) + \sum_{\substack{j < k\\ j \in \Dgood}} \beta(r_{jk}, d(i^\star,j))\,.
    \]
    Let us verify that the last inequality holds. First consider some $j \ge k$. For $j \in A$ note that $\theta \ge \alpha^\star_k$. Otherwise, $j \in I$ and $j$ became inactive in Stage 3 as well. In that case $(1-\delta)d(j,\SS) = \alpha^\star_j \ge \alpha^\star_k$ because $j$ and $k$ become inactive in the same atomic step and are hence ordered by $\alpha^\star$. Thus, we can use the monotonicity of $\beta$. Next, if $j < k$ then $j \in I$ since $k \in I$.
    If $j$ became inactive in the same atomic step as $k$, we have $r_{jk} = \alpha_j^\star = (1-\delta)d(j, \SS)$.
    Else, by definition $r_{jk} = (1-\delta)d(j, \SS)$. Here, $\SS = \SS^{(k)}$ since we do not open any new facilities in Stage 3. 
\end{proof}

\begin{lemma}
\label{lem:fr-r-mono}
    For all $1 \le j \leq k \le n$ we have $r_{jk} \ge r_{j k+1}$.
\end{lemma}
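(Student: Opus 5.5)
This is essentially Observation~\ref{obs:r-noninc}, restated for the log-adaptive setting, where $r_{j,n+1}=(1-\delta)d(j,\SS^\star)$. I will argue directly from the definition of $r_{jk}$, splitting on which branch of that definition applies to $r_{jk}$ and to $r_{j,k+1}$. Fix $1\le j\le k\le n$.

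\textbf{Both values take the distance form.} Suppose $r_{jk}=(1-\delta)d(j,\SS^{(k)})$ and $r_{j,k+1}=(1-\delta)d(j,\SS^{(k+1)})$; this includes $k+1=n+1$, where $\SS^{(n+1)}=\SS^\star$. Since facilities are only ever added, $\SS^{(k)}\subseteq\SS^{(k+1)}$. Hence $d(j,\cdot)$ is non-increasing along this chain and the inequality follows.

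\textbf{Both values equal $\alpha_j^\star$.} Then the inequality holds with equality.

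\textbf{$r_{jk}=\alpha_j^\star$ and $r_{j,k+1}=(1-\delta)d(j,\SS^{(k+1)})$.} Client $j$ is inactive by the end of the atomic step in which it becomes inactive. Let $\SS'$ be the set of open facilities right after that step; the stopping rule of every stage gives $\alpha_j^\star\ge(1-\delta)d(j,\SS')$. In Stage~1 and Stage~2 this is the rule ``$\alpha_j\ge(1-\delta)d(j,\SS)$''. In Stage~3 a client is removed exactly when $\alpha_j=(1-\delta)d(j,\SS)$. Clients are ordered by the atomic step in which they become inactive, and $k+1>j$. So $\SS^{(k+1)}$ is the set at the beginning of an atomic step that is no earlier than the one right after $j$'s step, which gives $\SS'\subseteq\SS^{(k+1)}$. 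Therefore $r_{j,k+1}\le(1-\delta)d(j,\SS')\le\alpha_j^\star=r_{jk}$.

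\textbf{$r_{jk}=(1-\delta)d(j,\SS^{(k)})$ and $r_{j,k+1}=\alpha_j^\star$.} I expect this to be the only subtle case, and I will show it cannot occur. The first value means $j$ is already inactive at the beginning of $k$'s atomic step. The second would require $j$ to be active at the beginning of $k+1$'s atomic step, which is the same step or a later one. But inactivity is permanent, so $j$ stays inactive, a contradiction.

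With the four cases covered, $r_{jk}\ge r_{j,k+1}$ for all $1\le j\le k\le n$.
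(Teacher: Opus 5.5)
Your proof is correct and follows the paper's argument. The paper splits on whether $j$ becomes inactive strictly before $k$, together with $k$ but strictly before $k+1$, or together with both; these are exactly your three possible branch combinations, your fourth being the impossible one, and both arguments rest on $\SS^{(k)}\subseteq\SS^{(k+1)}$ and on $\alpha_j^\star\ge(1-\delta)d(j,\SS)$ for the facility set right after $j$'s atomic step (which you justify somewhat more explicitly from the stage-by-stage stopping rules).
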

\begin{proof}
    We consider different cases depending on when $j$ becomes inactive.
    First, consider the case where $j$ becomes inactive strictly before $k$.
    Then $r_{jk} = (1-\delta)d(j,\SS^{(k)}) \ge (1-\delta)d(j,\SS^{(k+1)}) = r_{jk+1}$ since $d(j, \SS)$ is monotonically decreasing with $\SS$ and $\SS^{(k)} \subseteq \SS^{(k+1)}$.
    Next, consider the case where $j$ becomes inactive together with $k$ but strictly before $k+1$ then $r_{jk} = \alpha^\star_j \ge (1-\delta)d(j,\SS^{(k+1)}) = r_{jk+1}$.
    In the remaining case when $j$, $k$ and $k+1$ become inactive at the same time we have $r_{jk} = r_{jk+1} = \alpha^\star_j$.  
\end{proof}

\begin{lemma}
\label{lem:fr-triangle}
    For all $1 \le j < k \le n$ we have $\alpha^\star_k \le r_{jk} + d_j + d_k$.
\end{lemma}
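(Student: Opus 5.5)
The plan is to compare $\alpha^\star_k$ directly with $r_{jk}$ through a triangle inequality through $i^\star$, writing $d_j = d(j,i^\star)$ and $d_k = d(k,i^\star)$ so that $d(j,k)\le d_j+d_k$. I split into two cases according to the definition of $r_{jk}$: either $j$ became inactive in a strictly earlier atomic step than $k$, or $j$ and $k$ become inactive in the same atomic step.

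\emph{Case 1: $j$ is inactive at the beginning of $k$'s atomic step.} Here $r_{jk}=(1-\delta)d(j,\SS^{(k)})$. The key intermediate claim is
\[
\alpha^\star_k\le(1-\delta)\,d(k,\SS^{(k)}).
\]
It holds by checking the three stages. In Stage 1, $\alpha^\star_k=\theta$, and $k$ was active at the start of the step, so $\theta<(1-\delta)d(k,\SS^{(k)})$. In Stage 2, either $\alpha^\star_k=\tau_k=\theta$ (same argument), or $\tau_k\le(1-\delta)d(k,\SS^{(k)})$ by property (ii) of Definition~\ref{def:openable-mod-2}. In Stage 3, $\alpha^\star_k=\min((1+\eps^2)\theta,(1-\delta)d(k,\SS))$ and no facility is opened, so $\SS=\SS^{(k)}$.

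Now let $i_0\in\SS^{(k)}$ be the facility closest to $j$. Then
\[
\alpha^\star_k\le(1-\delta)d(k,i_0)\le(1-\delta)\bigl(d(j,i_0)+d_j+d_k\bigr)\le r_{jk}+d_j+d_k .
\]

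\emph{Case 2: $j$ and $k$ become inactive in the same atomic step.} Here $r_{jk}=\alpha^\star_j$, so it suffices to show $\alpha^\star_k\le\alpha^\star_j+d(j,k)$.
\begin{itemize}
\item In Stage 1 both values equal $\theta$.
\item In Stage 2, $\alpha^\star_j=\tau_j$ and $\alpha^\star_k=\tau_k$, so property (v) of 2-openability gives exactly $\tau_k\le\tau_j+d(j,k)$.
\item In Stage 3, if $\alpha^\star_j=(1+\eps^2)\theta$ then $\alpha^\star_k\le\alpha^\star_j$. Otherwise $\alpha^\star_j=(1-\delta)d(j,\SS)$, and then $\alpha^\star_k\le(1-\delta)d(k,\SS)\le(1-\delta)(d(j,\SS)+d(j,k))\le\alpha^\star_j+d(j,k)$.
\end{itemize}
In each subcase we finish with $d(j,k)\le d_j+d_k$.

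The main obstacle is the intermediate claim in Case 1: that an active client's final value never exceeds $(1-\delta)$ times its distance to the facilities open at the start of its freezing step. Stage 2 requires care because values jump to $\tau_k$ rather than increasing continuously; this is exactly what the cap $(1-\delta)d(j,\SS)$ in property (ii) is designed to handle. The remaining subtlety is the ordering convention within an atomic step. It is harmless here, since the argument in Case 2 is symmetric in $j$ and $k$ and does not use $\alpha^\star_j\le\alpha^\star_k$.
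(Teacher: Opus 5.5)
Your proposal is correct and follows the paper's proof essentially step for step. The paper also splits on whether $j$ froze strictly before $k$ or in the same atomic step. In the first case it bounds $\alpha^\star_k\le(1-\delta)d(k,\SS^{(k)})$ stage by stage and applies the triangle inequality through $i^\star$ and $j$'s closest open facility; in the second it uses property (v) of Definition~\ref{def:openable-mod-2} for Stage~2 and the direct distance argument for Stage~3. Your added justifications, that active clients satisfy $\theta<(1-\delta)d(k,\SS^{(k)})$ and that property (ii) covers the case $\tau_k>\theta$, only make explicit what the paper asserts directly.
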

\begin{proof}
    First, consider the case when client $j$ becomes inactive strictly before client $k$.
    Then $r_{jk} = (1-\delta)d(j,\SS^{(k)}) = (1-\delta)d(j,i_0)$ for some facility $i_0 \in \SS^{(k)}$.
     The client $k$ becomes inactive either in Stage 1 with $\alpha_k^\star = \theta \leq (1-\delta) d(k,\SS^{(k)}) \leq (1-\delta)d(k, i_0)$, in Stage 2 with $\alpha_k^\star = \tau_k \leq (1-\delta)d(k,\SS^{(k)}) \leq (1-\delta) d(k, i_0)$ or in Stage 3 with $\alpha_k^\star = (1-\delta)d(k,\SS) \leq (1-\delta) d(k, i_0)$.
    Then, by triangle inequality,
    \[
        \alpha^\star_k \le (1-\delta)d(k,i_0) \le (1-\delta)(d(k, i^\star) + d(i^\star, j) + d(j, i_0)) \le r_{jk} + d_j + d_k\,.
    \]
    In the remaining case $k$ and $j$ become inactive at the same time and $r_{jk} = \alpha^\star_j$.
    If $k$ and $j$ become inactive in Stage 1 then $\alpha^\star_k = \alpha^\star_j = \theta$.
    If $k$ and $j$ become inactive in Stage 2 then $\alpha^\star_k \le \alpha^\star_j + d(k,j) \le \alpha^\star_j + d_k + d_j$ by condition $(v)$ in Definition~\ref{def:openable-mod-2}.
    Otherwise $k$ and $j$ become inactive in Stage 3 and 
    \[
        \alpha^\star_k = (1-\delta)d(k,\SS^{(k)}) \le (1-\delta)d(k,i_0) \le (1-\delta)(d(k,i^\star) + d(i^\star,j) + d(j,i_0)) \le \alpha^\star_j + d_k + d_j
    \]
    where $i_0$ is the facility that $j$ connects to, $\alpha^\star_j = (1-\delta)d(j,\SS^{(k)}) = (1-\delta)d(j,i_0)$. 
\end{proof}

\begin{lemma}
\label{lem:fr-r-metric}
    For all $1 \le j,\ell < k \le n + 1$ we have $r_{jk} \le r_{\ell k} + d_j + d_\ell$.
\end{lemma}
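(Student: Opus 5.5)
We need $r_{jk} \le r_{\ell k} + d_j + d_\ell$ for $j,\ell < k$, where $d_j = d(j,i^\star)$. The plan is a case analysis on the status of $j$ and $\ell$ at the beginning of the atomic step in which $k$ becomes inactive (for $k=n+1$, at termination). Each $r$-value is either $\alpha^\star$ (if the client is still active then) or $(1-\delta)d(\cdot,\SS^{(k)})$ (if already inactive). In each case I will compare against a facility realizing $\ell$'s distance and use the triangle inequality through $i^\star$.

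\textbf{Case 1: $\ell$ is inactive before the step.} Then $r_{\ell k} = (1-\delta)d(\ell,i_0)$ for some $i_0\in\SS^{(k)}$. If $j$ is also inactive, then
\[
r_{jk}=(1-\delta)d(j,\SS^{(k)})\le (1-\delta)\bigl(d(j,i^\star)+d(i^\star,\ell)+d(\ell,i_0)\bigr)\le r_{\ell k}+d_j+d_\ell .
\]
If $j$ is active at the beginning of the step, then $r_{jk}=\alpha^\star_j$. In every stage, $\alpha^\star_j\le(1-\delta)d(j,\SS^{(k)})$: in Stage~1 it equals $\theta$, in Stage~2 it equals $\tau_j$ by (ii), and in Stage~3 it is the minimum defining the raise. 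This is exactly the argument used in Lemma~\ref{lem:fr-triangle}, so the same chain applies.

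\textbf{Case 2: $\ell$ is active, so $r_{\ell k}=\alpha^\star_\ell$.} If $j$ is also active, both become inactive in this same atomic step. In Stage~1 their values coincide ($=\theta$). In Stage~2, condition (v) of Definition~\ref{def:openable-mod-2} gives $\tau_j\le\tau_\ell+d(j,\ell)\le \tau_\ell+d_j+d_\ell$. In Stage~3 the values are $\min\bigl((1+\eps^2)\theta,(1-\delta)d(\cdot,\SS)\bigr)$. This map is $1$-Lipschitz in the client, since $(1-\delta)d(\cdot,\SS)$ is, and it is only truncated by a common constant; hence again $\alpha^\star_j\le\alpha^\star_\ell+d(j,\ell)$.

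\textbf{The main obstacle: $j$ inactive and $\ell$ active.} Here we must show $(1-\delta)d(j,\SS^{(k)})\le \alpha^\star_\ell+d_j+d_\ell$. The natural route is to let $i_1$ be the facility that $\ell$ is closest to once it freezes. In Stages~1 and~3 we have $\alpha^\star_\ell\ge(1-\delta)d(\ell,\SS')$, where $\SS'$ is the set after the step. The difficulty is that $\SS'$ may contain the facility opened in that very step, which is not in $\SS^{(k)}$. I would handle this via the definition of $r_{jk}$ for $k=n+1$ and for $j$ frozen by the same step; that case is absorbed into Case~2. For genuinely earlier $j$, I would instead argue as in Lemma~\ref{lem:fr-triangle}, applying the bound with roles swapped: since $\ell$ is still active at the start of the step, $\alpha^\star_\ell\ge\theta$. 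Moreover $j$'s frozen value is small relative to $\ell$'s: in Stages~1 and~3, $\ell$ active means $\theta<(1-\delta)d(\ell,\SS^{(k)})$. If needed, I would use the freezing/structure Lemmas~\ref{lem:freezing} and~\ref{lem:structure} for Stage~2, in order to relate the newly opened facility to $\SS^{(k)}$. I expect this mixed case, particularly when $\ell$ freezes in Stage~2 with $\tau_\ell>\theta$, to be where the real work lies. If the direct bound fails, I would aim to show it is only needed for $j,\ell\in\Dgood$, where $\alpha^\star\ge\rho d$ gives the extra slack.
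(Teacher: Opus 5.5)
Your handling of three of the four configurations is correct. These are: both $j,\ell$ already inactive (triangle inequality through $i^\star$); $j$ active and $\ell$ inactive (via $\alpha^\star_j\le(1-\delta)d(j,\SS^{(k)})$, as in Lemma~\ref{lem:fr-triangle}); and both active (equal values in Stage~1, condition~(v) in Stage~2, the truncated $1$-Lipschitz distance in Stage~3). The gap is the case you flag yourself: $j$ already inactive and $\ell$ still active at the start of the step in which $k$ becomes inactive. There you give no argument, only a list of possible tools (Lemmas~\ref{lem:structure} and~\ref{lem:freezing}, or extra slack from $\Dgood$). None of them is needed. Retreating to $\Dgood$ would not prove the lemma as stated, which is claimed for all clients. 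Your observation that $\ell$ active forces $\theta<(1-\delta)d(\ell,\SS^{(k)})$ bounds the wrong side. What is needed is an \emph{upper} bound on $(1-\delta)d(j,\SS^{(k)})$.

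The missing idea is monotonicity in time, and it gives $r_{jk}\le r_{\ell k}$ without using the $d_j+d_\ell$ slack. Let $\theta_0$ be the value of $\theta$ in the (strictly earlier) atomic step where $j$ froze. Note that $\ell$ was active then and froze later, so $\alpha^\star_\ell$ is at least the value of $\theta$ at the start of $\ell$'s freezing step. That is at least the value of $\theta$ right after $j$'s step.

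\emph{Stage~1 or Stage~3.} Here $r_{jk}=(1-\delta)d(j,\SS^{(k)})\le\alpha^\star_j$. Also $\alpha^\star_j$ is at most the value of $\theta$ right after $j$'s step: it equals $\theta_0$ in Stage~1 and is at most $(1+\eps^2)\theta_0$ in Stage~3. So $r_{jk}\le\alpha^\star_\ell$.

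\emph{Stage~2, opening $i_0$.} This is the delicate subcase, rather than $\tau_\ell>\theta$ as you suggest. The naive bound $r_{jk}\le\alpha^\star_j=\tau_j$ is useless, since $\tau_j$ may exceed $\alpha^\star_\ell=\theta_0$. Instead, $(1-\delta)d(j,i_0)\le\theta_0$ holds in both sub-cases. Either $\tau_j>\theta_0$, in which case $d(j,i_0)\le\eps\theta_0$. Or $\tau_j=\theta_0$ and $j$ froze because of $i_0$. Hence $r_{jk}\le(1-\delta)d(j,i_0)\le\theta_0\le\alpha^\star_\ell$.

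The paper writes this Stage~2 step using condition~(v) at $j$'s freezing step: $(1-\delta)d(j,i_0)\le\tau_j\le\tau_\ell+d(j,\ell)$, where $\tau_\ell=\theta_0\le\alpha^\star_\ell$ because $\ell$ did not freeze then. With this step the lemma follows by elementary means.
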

\begin{proof}
    First, consider the case where $j$ becomes inactive strictly before $k$. Then $r_{jk} = (1-\delta)d(j,\SS^{(k)}) = (1-\delta)d(j,i_1)$ for some open facility $i_1 \in \SS^{(k)}$.
    If $\ell$ becomes inactive strictly before $k$, then $r_{\ell k} = (1-\delta)d(\ell,\SS^{(k)}) = (1-\delta)d(\ell,i_2)$ for some open facility $i_2 \in \SS^{(k)}$, and because $i_1$ minimizes $(1-\delta)d(j,\SS^{(k)})$, 
    \[
        r_{jk} = (1-\delta)d(j,i_1) \le (1-\delta)d(j,i_2) \le (1-\delta)(d(j,i^\star) + d(i^\star,\ell) + d(\ell,i_2)) \le r_{\ell k} + d_j + d_\ell\,.
    \]
    Now if $\ell$ becomes inactive with $k$ then $r_{\ell k} = \alpha^\star_\ell$. 
    Assume $j$ becomes inactive in Stage 2 when opening a facility $i_0$ with bids $(\tau_j)_{j \in A}$. By $(v)$ in Definition~\ref{def:openable-mod-2} we have
    \[
        r_{jk} = (1-\delta)d(j,\SS^{(k)}) \le (1-\delta)d(j,i_0) \le \tau_j \le \tau_\ell + d(j,\ell) \le \alpha^\star_\ell + d_j + d_\ell\,,
    \]
    where $(1-\delta) d(j, i_0) \leq \tau_j$ because if $\tau_j > \theta$ then we must have $d(j, i_0) \leq \epsilon \theta$ and if $\tau_j = \theta$ then we must have $(1-\delta) d(j, i_0) \leq \theta$ because $j$ was active but became inactive after $i_0$ was opened.
    
    Otherwise $j$ becomes inactive in Stage 1 or Stage 3 and
    \[
        r_{jk} = (1-\delta)d(j,\SS^{(k)}) \le \alpha^\star_j \le \alpha^\star_\ell = r_{\ell k}
    \]
    since $\ell$ becomes inactive strictly after $j$: this implies that in Stage 1 $\alpha^\star_j$ is equal to the $\theta$ at that stage and in Stage 3 $\alpha^\star_j$ is equal to at most $(1+\eps^2)\theta$ for the $\theta$ at that stage, but then $\alpha^\star_\ell$ is in both cases at least as large.
    
    Next, consider the case where $j$ becomes inactive with $k$. Then $r_{jk} = \alpha^\star_j$. If $\ell \ge j$ then $j$, $\ell$ and $k$ become inactive in the same atomic step and $r_{\ell k} = \alpha^\star_\ell$. By the ordering of $\alpha^\star$ we know that 
    \[
        r_{jk} = \alpha^\star_j \le \alpha^\star_\ell = r_{\ell k}\,.
    \]
    Thus, we can assume that $\ell < j$. Observe that $r_{\ell k} = r_{\ell j}$ since $k$ and $j$ become inactive in the same atomic step. Notice that now the proof follows from Lemma $\ref{lem:fr-triangle}$ as $r_{jk} = \alpha^\star_{j}$ and $r_{\ell k} = r_{\ell j}$.

    Finally, if $k = n + 1$, we have $r_{jk} = (1-\delta)d(j, \SS^\star)$ and $r_{\ell k} = (1-\delta) d(\ell, \SS^\star)$, and then 
    \[r_{jk} \leq (1-\delta)(d(j, i^\star) + d(\ell, i^\star) + d(\ell, \SS^\star)) \leq r_{\ell k} + d_j + d_\ell\,.\]
\end{proof}

\subsection{Proof of Theorem~\ref{theorem:alg-lmp}}
\label{subsec:lmp-proof}
Before we prove the theorem, we will need the following gluing result that bounds the cost of the algorithm in terms of the upper bound on the factor-revealing program.

\begin{lemma}
\label{lem:dgood-cost}
    Let $\rho \geq 1.01 \gamma$ be an upper bound on the optimal value of the factor-revealing program in Definition~\ref{def:frp}.
    For some $i^\star \in \FF$, consider any subset of clients $\Dgood \subseteq D$ such that, for all $j \in \Dgood$, $\alpha_j^\star \geq \rho \cdot d(j, i^\star)$.
    Then
    \[\sum_{j \in \Dgood} \Paren{\sum_{i=j}^n \beta(r_{j,i}, r_{j,i+1}) + r_{j,n+1}} \leq \hat{f} + \rho \cdot \sum_{j \in \Dgood} d(j, i^\star)\,.\]
\end{lemma}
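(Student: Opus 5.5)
The plan is to reduce the claim to a single application of the factor-revealing bound, using the feasibility results of Section~\ref{sec:fr-feasiblity} after compressing the index set to $\Dgood$. If $\Dgood=\emptyset$ the claim is trivial since $\hat f\ge 0$, so assume $\Dgood=\{j_1<j_2<\dots<j_m\}$ and set $j_{m+1}:=n+1$, so that $q(j_t)=j_{t+1}$.

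\emph{Step 1: collapse the inner sums.} Fix $j=j_t\in\Dgood$ and split the range $i=j,\dots,n$ into the consecutive blocks $[j_s,j_{s+1}-1]$ for $s\ge t$. By Lemma~\ref{lem:fr-r-mono}, $r_{j,i}\ge r_{j,i+1}$ throughout, so superadditivity of $\beta$ (Definition~\ref{def:admissible-bidding}) applies within each block. It gives
\[
\sum_{i=j_s}^{j_{s+1}-1}\beta(r_{j,i},r_{j,i+1})\le \beta\bigl(r_{j,j_s},r_{j,j_{s+1}}\bigr).
\]
Since $\beta$ is nonnegative, this monotone chaining is safe. Hence the left-hand side of the lemma is at most
\[
\sum_{t=1}^m\Paren{\sum_{s=t}^{m}\beta\bigl(r_{j_t,j_s},r_{j_t,j_{s+1}}\bigr)+r_{j_t,n+1}}.
\]

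\emph{Step 2: build a feasible point of the program with $m$ clients.} Define $\tilde r_{t,s}:=r_{j_t,j_s}$ for $t\le s\le m+1$, $\tilde d_t:=d(j_t,i^\star)$ and $\tilde f:=\hat f$. Note that $\tilde r_{t,t}=r_{j_t j_t}=\alpha^\star_{j_t}$ by definition of $r$. I then check each constraint of Definition~\ref{def:frp} except normalization, by restricting the earlier lemmas (which hold for all indices) to $\Dgood\cup\{n+1\}$:
\begin{itemize}
\item monotonicity \eqref{eq:fr-r-mono} follows from Lemma~\ref{lem:fr-r-mono} by transitivity across the skipped indices;
\item the triangle constraint \eqref{eq:fr-triangle} is Lemma~\ref{lem:fr-triangle};
\item the metric inequalities \eqref{eq:fr-r-metric}, including the column $\ell=m+1$ corresponding to $n+1$, are Lemma~\ref{lem:fr-r-metric};
\item the facility constraints \eqref{eq:fr-facility} are exactly Lemma~\ref{lem:fr-facility}, which sums only over $\Dgood$;
\item nonnegativity is clear.
\end{itemize}
The hypothesis $\alpha^\star_j\ge\rho\, d(j,i^\star)$ together with $\rho\ge1.01\gamma$ is what those lemmas (through Lemma~\ref{lem:freezing}) require.

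\emph{Step 3: normalize.} Let $\Delta:=\sum_t\tilde d_t$, which is at least $1$ by \eqref{eq:dist-bound}. Divide all of $\tilde r,\tilde d,\tilde f$ by $\Delta$. Scale equivariance gives $\beta(\lambda a,\lambda b)=\lambda\beta(a,b)$, so every constraint is positively homogeneous and stays satisfied, normalization \eqref{eq:fr-norm} now holds, and the objective scales by $1/\Delta$. Since $\rho$ upper bounds the program's value for $n=m$, the scaled objective is at most $\rho$. Multiplying back by $\Delta$ yields
\[
\sum_t\Paren{\sum_{s\ge t}\beta\bigl(\tilde r_{t,s},\tilde r_{t,s+1}\bigr)+\tilde r_{t,m+1}}-\hat f\le\rho\sum_{j\in\Dgood}d(j,i^\star).
\]
Combined with Step~1, this is the claim.

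The main obstacle is bookkeeping in Step~2: verifying that the relabeled indices match the program's conventions. In particular, the diagonal must equal $\alpha^\star$, the ``$j\ge k$'' facility term must use $r_{kk}$, and the last column must be $n+1$. This is also where I must confirm that the earlier lemmas were stated for arbitrary index pairs, not only consecutive ones, so that restricting to the subsequence $\Dgood\cup\{n+1\}$ is legitimate.
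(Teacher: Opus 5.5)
Your proposal is correct and follows essentially the same route as the paper. The paper also reindexes the $r$-variables restricted to $\Dgood\cup\{n+1\}$, gets feasibility (except normalization) from Lemmas~\ref{lem:fr-facility}, \ref{lem:fr-r-mono}, \ref{lem:fr-triangle} and \ref{lem:fr-r-metric}, normalizes by $\sum_{j\in\Dgood} d(j,i^\star)$ using scale-invariance, and uses superadditivity along the monotone chains to pass between the full sum and the compressed terms $\beta(r_{j,i},r_{j,q(i)})$; the only difference is that you do this collapse first, whereas the paper does it at the end.
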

\begin{proof}
If $\Dgood = \emptyset$, the claim is trivial, so assume $\Dgood \neq \emptyset$.
We know that the variables $\{r_{jk} \mid j \leq k, j \in \Dgood, k \in \Dgood \cup \{n + 1\}\}$, $\{d(j, i^\star) \mid j \in \Dgood\}$, $\hat f$ (see Section~\ref{sec:fr-feasiblity}), with the clients reindexed to $1, \ldots, |\Dgood|$  in the same order as the original indices and $n+1$ reindexed to $|\Dgood|+1$, satisfy the constraints of the factor-revealing program in Definition~\ref{def:frp} (note that we identify $r_{jj} = \alpha^\star_j$), except for the normalization constraint (\ref{eq:fr-norm}): this follows by Lemma~\ref{lem:fr-facility}, Lemma~\ref{lem:fr-r-mono}, Lemma~\ref{lem:fr-triangle}, Lemma~\ref{lem:fr-r-metric}, and because all these variables are non-negative.

Let $\dsum = \sum_{j \in \Dgood} d(j, i^\star)$, which we know is strictly positive because $d(j, i) \geq 1$ for all $j \in D$, $i \in \FF$ by assumption \eqref{eq:dist-bound}.
Then, because the factor-revealing program is scale-invariant (except for the normalization constraint), it follows that $\{r_{jk} / \dsum \mid j \leq k, j \in \Dgood, k \in \Dgood \cup \{n + 1\}\}$, $\{d(j, i^\star) / \dsum \mid j \in \Dgood\}$, $\hat f / \dsum$ satisfy all the constraints of the factor-revealing program in Definition~\ref{def:frp}.

Let $q(i) = \min \Set{i' \mid i' > i, i' \in \Dgood \cup \{n+1\}}$.
Then the upper bound $\rho$ on the value of the factor revealing program implies that
\[\sum_{j \in \Dgood} \Paren{\sum_{\substack{i \in \Dgood\\i \geq j}} \beta\Paren{\frac{r_{j,i}}{\dsum}, \frac{r_{j,q(i)}}{\dsum}} + \frac{r_{j,n+1}}{\dsum} } \leq \frac{\hat{f}}{\dsum} + \rho\,,\]
so multiplying by $\dsum$, 
\[\sum_{j \in \Dgood} \Paren{\sum_{\substack{i \in \Dgood\\i \geq j}} \beta\Paren{r_{j,i}, r_{j,q(i)}} + r_{j,n+1}} \leq \hat{f} + \rho \sum_{j \in \Dgood} d(j, i^\star)\,.\]
Finally, because $r_{j,i} \geq r_{j,i+1}$ for all valid $j,i$, by the superadditivity of $\beta$ the above also implies that
\[\sum_{j \in \Dgood} \Paren{\sum_{i=j}^n \beta(r_{j,i}, r_{j,i+1}) + r_{j,n+1}} \leq \hat{f} + \rho \cdot \sum_{j \in \Dgood} d(j, i^\star)\,.\]
\end{proof}

We are ready to prove the main theorem of this section.

\begin{proof}[Proof of Theorem~\ref{theorem:alg-lmp}]
    Consider an optimal solution of cost $\OPT$ to the UFL problem instance with facility cost $f = \hat{f}/\rho$, and let $k^\star$ denote the number of facilities opened in this solution.
    Let $\{i_1^\star, \ldots, i_{k^\star}^\star\}$ be the facilities opened by the optimal solution, and let $\{D_1^\star, \ldots, D_{k^\star}^\star\}$ be such that $D_\ell^\star$ is the set of clients connected to facility $i_\ell^\star$ in the optimal solution.
    
    Let $C_{\mathrm{SOL}}$ and $F_{\mathrm{SOL}}$ be the connection cost and opening cost (with respect to $f$) of our algorithm, and let $C_{\OPT}$ and $F_{\OPT}$ be defined analogously for $\OPT$.
    By Lemma~\ref{lemma:alg-cost}, which bounds the cost with respect to $\hat{f}$, and using that $f = \frac{\hat{f}}{\rho}$, we know that 
    \[C_{\mathrm{SOL}} + \rho \cdot F_{\mathrm{SOL}} \leq \sum_{j=1}^n \Paren{\sum_{i=j}^{n} \beta(r_{j,i}, r_{j,i+1}) + r_{j,n+1}} + \delta \cdot \sum_{j=1}^n d(j,\SS^\star)\,.\]
    Note that $\sum_{j=1}^n d(j, \SS^\star) = C_{\mathrm{SOL}}$, so we can rewrite
    \[(1-\delta) \cdot C_{\mathrm{SOL}} + \rho \cdot F_{\mathrm{SOL}} \leq \sum_{j=1}^n \Paren{\sum_{i=j}^{n} \beta(r_{j,i}, r_{j,i+1}) + r_{j,n+1}}\,.\]

    Fix now some $\ell \in \{1, \ldots, k^\star\}$ and set $i^\star = i^\star_\ell$ and $D^\star = D^\star_{\ell}$.
    Define the subset of $D^\star$
    \[\Dgood = \Set{j \in D^\star \Bigg| \alpha_j^\star \geq \rho \cdot d(j, i^\star)}\,.\]
    For the clients in $\Dgood$, we have by Lemma~\ref{lem:dgood-cost} that
    \[\sum_{j \in \Dgood} \Paren{\sum_{i=j}^n \beta(r_{j,i}, r_{j,i+1}) + r_{j,n+1}} \leq \hat{f} + \rho \cdot \sum_{j \in \Dgood} d(j, i^\star)\,.\]
    Furthermore, for the clients $j \in D^\star \setminus \Dgood$, because $r_{j,j} = \alpha_j^\star$, $r_{j,n+1} = \beta(r_{j,n+1}, 0)$, and $r_{j,i} \geq r_{j,i+1}$ for all valid $j,i$, we have by the superadditivity of $\beta$ that
    \[\sum_{j \in D^\star \setminus \Dgood} \Paren{\sum_{i=j}^n \beta(r_{j,i}, r_{j,i+1}) + r_{j,n+1}} \leq \sum_{j \in D^\star \setminus \Dgood} \alpha^\star_j \leq \rho \cdot \sum_{j \in D^\star \setminus \Dgood} d(j, i^\star)\,.\]
    Overall, then,
    \[\sum_{j \in D^\star} \Paren{\sum_{i=j}^n \beta(r_{j,i}, r_{j,i+1}) + r_{j,n+1}} \leq \hat{f} + \rho \cdot \sum_{j \in D^\star} d(j, i^\star)\,.\]
    
    Note that $D_1^\star, \ldots, D^\star_{k^\star}$ partition $[n]$. Then, summing the above over all $(i_{\ell}^\star, D^\star_{\ell})$,
    \begin{align*}
    \sum_{j=1}^n \Paren{\sum_{i=j}^{n} \beta(r_{j,i}, r_{j,i+1}) + r_{j,n+1}}
    &\leq k^\star \cdot \hat{f} + \rho \cdot C_{\OPT}\\
    &= \rho \cdot F_{\OPT} + \rho \cdot C_{\OPT}
    = \rho \cdot \OPT\,.
    \end{align*}
    Hence, 
    \[(1-\delta) \cdot C_{\mathrm{SOL}} + \rho \cdot F_{\mathrm{SOL}} \leq \rho \cdot \OPT\,,\]
    so 
    \[C_{\mathrm{SOL}} \leq \frac{\rho}{1-\delta} \Paren{\OPT - F_{\mathrm{SOL}}} \leq (1+2\delta) \rho \Paren{\OPT - F_{\mathrm{SOL}}}\,.\]
    Then the conclusion follows because $F_{\mathrm{SOL}} = |S^\star| \cdot f$.
\end{proof}

\subsection{$k$-Median Algorithm Opening $k + O(\log n)$ Centers}
\label{subsec:k-median}

By the techniques developed in~\cite{MR4928455-Cohen25}, we can expand the ideas in this section to obtain a polynomial-time approximation algorithm for $k$-median that opens at most $k + O(\log n)$ centers.
As in their paper, this requires us to develop a robust version of the log-adaptive algorithm, and to slightly adapt their merging procedure to our setting.
We state below our main result for $k$-median.

\begin{theorem}
\label{thm:k-median}
Let $\gamma \geq 1$ and let $\beta(\alpha,d) = [\alpha - \gamma \cdot d]^+$. Let $\rho \geq 1.01 \gamma$ be an upper bound on the optimal value of the factor-revealing program in Definition~\ref{def:frp} over all $n \geq 1$.
For any $\delta > 0$ small enough, there exists a polynomial-time approximation algorithm for the $k$-median problem that outputs a solution with at most $k + O(\log n / \delta^2)$ centers and cost at most $(1+\delta)\rho \cdot \sopt$.
\end{theorem}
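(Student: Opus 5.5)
We will follow the recipe of \cite{MR4928455-Cohen25}: combine Theorem~\ref{theorem:alg-lmp} with a search over the opening cost $f$ and a merging procedure. Run Algorithm~\ref{alg:log-adaptive-greedy} with facility cost $\hat f=\rho f$ for $f$ ranging over a polynomially fine geometric grid between $0$ and $\poly(n)\cdot\Maxdist$, which is polynomial by \eqref{eq:dist-bound}. For $f$ small, Stage~1 opens essentially all facilities, so the output has at least $k$ centers. For $f$ huge, a single facility is opened. Binary search then yields two adjacent values $f_1<f_2$ whose outputs satisfy $|\SS_1|\ge k\ge|\SS_2|$. Each output is a $(1+2\delta)\rho$-LMP solution with respect to its own $f$. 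We also have $\OPT_f\le\opt_k+kf$, so every output $\SS$ satisfies
\[
\cost(\SS)\le(1+2\delta)\rho\bigl(\opt_k+(k-|\SS|)f\bigr).
\]

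The discrete jump from $\SS_1$ to $\SS_2$ is the obstacle. We will not use bi-point rounding. Instead, we will develop a \emph{robust} version of the log-adaptive algorithm. Two executions with nearby opening costs $f$ and $f(1+\eta)$ should differ only in a controlled way. The idea is to couple runs at $f$ and $f'$ close to $f$: in each phase $\theta$, the two runs open facility sets whose symmetric difference per phase is small. There are only $O(\log_{1+\eps^2}\Maxdist)=O(\log n/\eps^2)$ phases, which explains the $O(\log n/\delta^2)$ extra centers once we take $\eps$ as a function of $\delta$. We will take this robustness statement from \cite{MR4928455-Cohen25}, adapted to our two-stage phases. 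The clusterability of Lemma~\ref{lem:structure} should play the role of their separation property. It implies that candidates in Stage~2 are either $\delta\theta$-close or $\theta/\gamma$-far, so a perturbation of $f$ changes openings only within tight groups. We will also need the Lipschitz and slope properties of $\hat\beta(x)=[1-\gamma x]^+$, which hold with $L_1=L_2=\gamma$ and $\hat\beta(0)=1>0$. These ensure that the bids change only proportionally when $f$ changes.

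With robustness in hand, the merging procedure walks from $\SS_1$ to $\SS_2$ along a sequence of intermediate hybrid solutions. Each hybrid follows the run at $f_1$ up to some phase and the run at $f_2$ afterwards. Every hybrid is still feasible for the factor-revealing analysis, because Lemmas~\ref{lem:fr-facility}--\ref{lem:fr-r-metric} only use phase-local invariants. Consecutive hybrids differ by at most $O(\log n/\delta^2)$ centers. Some hybrid therefore has between $k$ and $k+O(\log n/\delta^2)$ centers. For that hybrid, the LMP inequality with the term $(k-|\SS|)f\le0$ gives cost at most $(1+2\delta)\rho\,\opt_k$; rescaling $\delta$ yields the claimed $(1+\delta)\rho$.

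The main obstacle is the robustness and merging step. Here we must verify that the hybrids still satisfy the no-overbidding invariant of Lemma~\ref{lem:overbidding} and the freezing property of Lemma~\ref{lem:freezing}, since these were proved only for genuine executions. We will first try to prove a relaxed no-overbidding bound of $(1+O(\delta))\hat f$ for hybrids, and absorb this loss into $\delta$ via scale equivariance. If that fails, we will shrink the grid spacing of $f$ so that adjacent runs coincide on all but $O(1)$ openings per phase.
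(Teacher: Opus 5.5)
Your plan has a genuine gap at exactly the step you flag, and the fallback you sketch cannot close it. The robustness claim is false. It says that executions at nearby opening costs open facility sets with small symmetric difference per phase, or coincide up to $O(1)$ openings per phase once the grid is fine enough. An execution of Algorithm~\ref{alg:log-adaptive-greedy} is piecewise constant in $\hat f$ and jumps at values where the order in which facilities become openable changes. At such a jump, one different early opening can cascade through every later phase, and no grid spacing removes the jump. This discontinuity is what bi-point rounding, and later the machinery of \cite{MR4928455-Cohen25}, exist to handle.

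Note that if your claim held, you would not need hybrids at all. $\SS_1$ has at least $k$ centers and cost at most $(1+2\delta)\rho\,\opt_k$ by your displayed inequality. It would also have at most $|\SS_2|+O(\log n/\eps^2)\le k+O(\log n/\eps^2)$ centers. Lemma~\ref{lem:structure} does not help here: it concerns two candidates at a single step and is used only for the freezing argument, not for stability in $\hat f$. Likewise, ``consecutive hybrids differ by $O(\log n/\delta^2)$ centers'' has no support. One phase can open arbitrarily many facilities, so moving the graft point from phase $p$ to $p+1$ can change the count by an unbounded amount.

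The grafted hybrids are also not executions, and the factor-revealing analysis does not transfer to them. Lemmas~\ref{lem:fr-facility}--\ref{lem:fr-r-metric} are not phase-local. They rest on the no-overbidding invariant of Lemma~\ref{lem:overbidding}, proved by induction from the start of the run using maximality of Stage~2 in the Stage~3 case. They also rest on Lemmas~\ref{lem:structure} and~\ref{lem:freezing}, which use maximality of Stage~1. In a graft, the $f_2$-suffix facilities were paid for relative to a different set of open facilities. Clients left unserved by the $f_1$-prefix keep raising $\theta$ with no maximality forcing openings, so not even a $(1+O(\delta))\hat f$ no-overbidding bound is available.

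What the paper uses instead, following \cite{MR4928455-Cohen25}, has three parts.
\begin{itemize}
\item \emph{Free facilities.} These are copies $\tilde i$ at distance offset $u(\tilde i)$ that pay no opening cost. The LMP bound of Theorem~\ref{theorem:alg-lmp-robust} charges only $|\Sr^\star|$, so the output has exactly $k$ regular facilities plus $O(\log n/\eps^2)$ free ones.
\item \emph{A merging procedure.} It maintains two $\eta$-valid solutions with $\eta$-close parameters whose regular counts sandwich $k$. It extends their common prefix by one sequence per step using \textrm{CompleteSequence}, \textrm{CompleteSolution} and a binary search on a single $u(\tilde i)$, so continuity is manufactured rather than assumed. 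It opens at most three free facilities per step over $O(\log n/\eps^2)$ steps.
\item \emph{A robust analysis.} This uses $(\nu,\eta)$-openability with respect to supersets, and Lemmas~\ref{lem:pay-to-open-1} and~\ref{lem:pay-to-open-2} with Claim~\ref{claim:param-ident} to preserve maximality under the $\eta$-perturbation. The proof of Theorem~\ref{theorem:alg-lmp-robust} then embeds such non-genuine solutions into the factor-revealing program at an extra $6\delta$ loss, using the \specialevent\ event, the modified variables $r^\star$ and Lemma~\ref{lem:superset-far}.
\end{itemize}
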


As these steps are very similar to the steps in~\cite{MR4928455-Cohen25}, we defer the description and analysis of the algorithm in Theorem~\ref{thm:k-median} to Section~\ref{sec:merging}.

\section{Upper Bound on the Factor-Revealing Program}\label{sec:195-dual-fitting}

We prove Theorem~\ref{thm:1.95-frp} by constructing, for every fixed bidding matrix $b$, a feasible solution of value at most $2-(\gamma-1)^2$ to the corresponding inner dual program.
By Lemma~\ref{lemma:mathematical-program-1.99}, these certificates give the required upper bound on the factor-revealing program.

We interpret the dual as a credit-allocation process involving clients and a common pool. The variables $g_{i,j}$ generate credit, while $s_{i,j}$ and $t_{i,j}$ describe transfers to and from the pool. The client constraints require sufficient credit at each iteration, and the pool constraints require that cumulative withdrawals never exceed cumulative deposits. The objective bounds the charge associated with each client. This interpretation guides the assignment rules below.

The construction begins with a nonnegative profile $\hat g$ that bounds the entries of each row of $g$. We seek a profile satisfying
\begin{equation*}
    R_0 := \sum_{i=1}^n \hat g_i = 1.
\end{equation*}
Its recursive definition depends on boundary data supplied as advice $\Gamma$. Our proof separates two tasks: showing that successful advice produces a feasible certificate of the desired value, and showing that successful advice exists. For the second task, fractional boundary memberships allow us to interpolate between successive choices. We prove that the resulting value of $R_0$ varies continuously along a path whose endpoint values lie on opposite sides of $1$.

Section~\ref{sec:proof-thm-195} states the two main lemmas and derives the theorem from them. Section~\ref{sec:general-case-procedure} defines the construction, and Section~\ref{sec:general-case-proof} proves feasibility whenever it succeeds.
Finally, Section~\ref{sec:good-aux-existence} establishes the existence of successful advice.

\subsection{Proving a Tight Upper Bound}\label{sec:proof-thm-195}

The goal is to design feasible solutions of value $\Mrho$ for every inner linear program in the factor-revealing dual as defined in Definition~\ref{def:frdual}, since the upper-bound relation between the factor revealing programs explained in Lemma~\ref{lemma:mathematical-program-1.99} makes it sufficient. Let us restate the program here for convenience.
\deffrdual*

\smallskip First, we need to observe that we do not need to treat all the factor-revealing cases, such is the $n=1$ case which is trivially upper bounded.
\begin{observation}
\label{obs:n-not-zero}
For $n=1$ the value of the program in Lemma~\ref{lemma:mathematical-program-1.99} is upper bounded by $\Mcoeff$.
For the rest of the proof we assume that $n \geq 2$.
\end{observation}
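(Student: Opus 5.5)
For $n=1$ I would bound the factor-revealing dual of Definition~\ref{def:frdual} directly: for each of the two choices $b_{1,1}\in\{0,1\}$, I exhibit a feasible solution of the inner minimisation with $\rho=\Mcoeff$. Each inner minimum is then at most $\Mcoeff$, so the outer maximum over $b$ is at most $\Mcoeff$. By Lemma~\ref{lemma:mathematical-program-1.99}, this also bounds the primal program for $n=1$.

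\textbf{Specialising the constraints.} With $n=1$ the only variables are $\hg_1$, $g_{1,1}$, $t_{1,1}$ and $s_{1,1}$, and $s_{1,1}=0$ is forced. The constraints reduce as follows.
\begin{itemize}
\item The $(f)$ constraint becomes $\hg_1\le 1$.
\item The constraint $\hg_1\ge g_{1,1}$ remains as stated.
\item In the $(r_{1,1})$ constraint the sum $\sum_{q=2}^{1}$ is empty and the $b$-sum $\sum_{p=1}^{0}$ on the right is empty. It reads $g_{1,1}+t_{1,1}-s_{1,1}\ge b_{1,1}$.
\item The $(m_1)$ constraint reads $s_{1,1}-t_{1,1}\ge 0$. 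Together with $s_{1,1}=0$ and nonnegativity, it forces $t_{1,1}=0$.
\item The $(d_1)$ constraint reads $\rho\ge \Mcoeff\, g_{1,1}+s_{1,1}+t_{1,1}$.
\end{itemize}

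\textbf{The certificate.} I take $\hg_1=g_{1,1}=1$ and $s_{1,1}=t_{1,1}=0$. Then $(f)$ holds with equality and $\hg_1\ge g_{1,1}$ holds. The $(r_{1,1})$ constraint becomes $1\ge b_{1,1}$, which is true for both values of $b_{1,1}$. The $(m_1)$ constraint is $0\ge 0$. Finally, $(d_1)$ is satisfied with $\rho=\Mcoeff$. So every inner program has value at most $\Mcoeff$, and hence so does the dual program for $n=1$.

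\textbf{Consistency with Theorem~\ref{thm:1.95-frp}.} I would also check that this bound is compatible with the target $\Mrho$. We have $\Mcoeff\le \Mrho$ iff $\Mcoeff^2-\Mcoeff-1\le 0$, i.e.\ iff $\Mcoeff\le(1+\sqrt5)/2$, which holds on $[1,1.224]$. This justifies restricting to $n\ge 2$ in the rest of the proof.

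There is no real obstacle here. The only care needed is reading the empty index ranges correctly, and noting that the pool constraint $(m_1)$ kills $t_{1,1}$, so all coverage must come from $g_{1,1}$, which is what fixes the value at exactly $\Mcoeff$.
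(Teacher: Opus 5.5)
Your proof is correct: the certificate $\hat{g}_1=g_{1,1}=1$, $s_{1,1}=t_{1,1}=0$, $\rho=\gamma$ is feasible for both values of $b_{1,1}$, and exhibiting one dual certificate per choice of $b$ is exactly the strategy the paper uses for $n\ge 2$ (the paper gives no separate argument for $n=1$ and simply calls the case trivial). Equivalently, you can check it directly in the primal of Definition~\ref{def:frp}: for $n=1$ the facility constraint gives $f\ge[r_{1,1}-\gamma]^+$, and since $[r_{1,1}-\gamma r_{1,2}]^+ + r_{1,2}\le r_{1,1}$ when $0\le r_{1,2}\le r_{1,1}$, the objective is at most $r_{1,1}-[r_{1,1}-\gamma]^+=\min(r_{1,1},\gamma)\le\gamma$.
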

Next, to simplify our arguments, there are some strategies that are dominant over others, so we can assume w.l.o.g. the following observation.
\begin{observation}
\label{obs:b1}
We assume without loss of generality that $b_{n,j} = 1$ for all $j = 1, \ldots, n$ because it makes the constraints in Lemma~\ref{lemma:mathematical-program-1.99} stronger.
\end{observation}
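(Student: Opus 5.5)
The plan is to argue directly on the factor-revealing dual of Definition~\ref{def:frdual}: for a fixed $j$, the variable $b_{n,j}$ enters only one constraint, and it enters monotonically on the right-hand side of a $\geq$ constraint. Raising it to $1$ therefore shrinks the feasible region of the inner minimization, so the inner optimum can only grow. Since the outer operation is a maximum over $b$, it suffices to bound the inner programs with $b_{n,j}=1$.

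\textbf{Locating $b_{n,j}$.} The binary variables appear only in the constraints labelled $(r_{j,i})$, $1\le j\le i\le n$. Their right-hand side is $b_{i,j}-(\Mcoeff-1)\sum_{p=j}^{i-1}b_{p,j}$. The inner sum runs only up to $p=i-1\le n-1$, so $b_{n,j}$ never occurs in it. Consequently $b_{n,j}$ appears exactly once, namely as the leading term $b_{i,j}$ in the constraint $(r_{j,n})$. No other constraint involves it: neither $(f)$, $(\gamma_{i,j})$, $(m_i)$, $(d_j)$, nor the nonnegativity constraints.

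\textbf{Monotonicity and conclusion.} Fix all entries of $b$ except $b_{n,j}$. Replacing $b_{n,j}=0$ by $b_{n,j}=1$ raises the right-hand side of $(r_{j,n})$ by $1$ and leaves every other constraint unchanged. Hence every $(\rho,\hg,g,s,t)$ feasible for the modified inner program is also feasible for the original one. Because the inner program minimizes $\rho$, its value for the modified $b$ is at least its value for the original $b$. Doing this for each $j=1,\dots,n$ in turn gives a matrix $b'$ with $b'_{n,j}=1$ for all $j$ and inner value no smaller than that of $b$. The outer maximum over $b$ is therefore attained, or at least upper bounded, by matrices with $b_{n,j}=1$ for all $j$. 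To establish the bound $\Mrho$ in Theorem~\ref{thm:1.95-frp}, it is thus enough to construct dual certificates only for such $b$.

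As a sanity check on the primal side, $b_{n,j}$ multiplies $(r_{j,n}-\Mcoeff r_{j,n+1})$ in the objective. With $b_{n,j}=1$ this term can be negative, which only lowers the primal objective rather than overestimating it, so the reformulation remains an upper bound in either case. I do not expect a real obstacle here. The only care needed is the index bookkeeping confirming that $b_{n,j}$ appears in no sum $\sum_{p=j}^{i-1}b_{p,j}$, which follows since $i\le n$.
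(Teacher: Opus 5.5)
Your argument is correct and is the paper's one-line justification (``it makes the constraints stronger'') made explicit. Since $b_{n,j}$ appears only as the leading right-hand-side term of $(r_{j,n})$, setting it to $1$ shrinks the feasible set of the inner minimization and can only raise its value, which justifies restricting the outer maximum. Your closing primal ``sanity check'' is not needed, and its logic runs the wrong way: a lower primal objective would hurt an upper bound, not support it. In any case the primal maximizer can take $r_{j,n+1}=0$, as the paper does in Lemma~\ref{lemma:generalized-factor-revealing-problem-metric-rewrite}, so the dual argument stands on its own.
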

Then, constrained to these cases, the proof follows equally as it was explained in Section~\ref{sec:approx_guarantee}. 
In Section~\ref{sec:general-case-procedure}, we define a procedure that adaptively computes a feasible solution based on some input. More specifically, the procedure takes as input the values of $b_{i,j}$ and some advice variables $\Gamma$ and returns, for some choices of $\Gamma$, an empty output ($\perp$), and for others, a feasible solution of value $\Mrho$ to the dual program in Definition~\ref{def:frdual} for the $b_{i,j}$ values.
We say that the advice is ``correct'' if the first event does not happen. This is seen in the following lemma:

\begin{restatable}{lemma}{generalcaseprocedure}\label{lem:general-case-procedure}
    For $n \geq 2$, an assignment of the $b_{i,j}\in\{0,1\}$ values with $b_{n,j} = 1$ for all $j = 1, \ldots, n$, a correct advice $\Gamma$ and a value of $\Mcoeff\in(1,1.224]$, the procedure described in Section~\ref{sec:general-case-procedure} generates a $\Mrho$ feasible solution for the corresponding inner program of the factor-revealing dual from Definition~\ref{def:frdual}.
\end{restatable}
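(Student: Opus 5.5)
The plan is to make the procedure of Section~\ref{sec:general-case-procedure} explicit as a credit-allocation scheme, and then verify the constraints of Definition~\ref{def:frdual} one family at a time. The objective value is fixed at $\rho = 2-(\gamma-1)^2$ throughout.

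\textbf{Step 1: reduce the right-hand sides.} For each client $j$, let $f_j$ be its first bidding time, i.e.\ the smallest $i$ with $b_{i,j}=1$. This is well defined because of Observation~\ref{obs:b1}. Since $\gamma>1$, the right-hand side of the $(r_{j,i})$ constraint is
\begin{itemize}
\item at most $0$ for $i<f_j$,
\item exactly $1$ for $i=f_j$,
\item at most $2-\gamma$ for $i>f_j$, because at least one earlier $b_{p,j}$ equals $1$.
\end{itemize}
It therefore suffices to satisfy the stronger, $b$-independent system with these right-hand sides. After this reduction the certificate only needs to depend on the vector $(f_j)_j$ and on the advice $\Gamma$.

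\textbf{Step 2: build the profile $\hat g$ and the credits $g$.} The advice fixes the threshold $t_3^\star$ and the (fractional) memberships in the zero-group $D$. On $D$ we set $\hat g_j=0$. For the remaining indices we define $\hat g_j$ by the recursion of the procedure, taking $g_{i,j}=\hat g_i$ wherever possible. The aim is that the left-hand side $\sum_{q>j} g_{j,q}+\sum_{p\le i} g_{p,j}$ of the $(r_{j,i})$ constraint has two properties:
\begin{itemize}
\item it reaches $1$ by time $f_j$ with a little help from the pool;
\item it then stays at least $2-\gamma$ afterwards. This should be automatic once it has reached $1$, since the sum is monotone in $i$.
\end{itemize}

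\textbf{Step 3: route the pool transfers.} A client $j$ in $D$, or any client whose accumulated credit is short at time $f_j$, withdraws the deficit through $t_{f_j,j}$. A client whose credit exceeds what it needs deposits the surplus through $s_{p,j}$ at the earliest $p>j$; $s_{j,j}=0$ is forced. We must then check the $(m_i)$ prefix constraints: cumulative deposits must dominate cumulative withdrawals. The threshold structure of $D$, namely that large $f_j$ means $j\in D$, is exactly what orders withdrawals after deposits.

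\textbf{Step 4: the remaining constraints and the main obstacle.}
\begin{itemize}
\item The $(f)$ constraint holds by the definition of correct advice: $\sum_i \hat g_i = 1$.
\item The $(\gamma_{i,j})$ constraints $g_{i,j}\le \hat g_i$ hold by construction.
\end{itemize}
The main obstacle is the $(d_j)$ constraint, $\gamma\sum_i g_{i,j}+\sum_i s_{i,j}+\sum_i t_{i,j}\le 2-(\gamma-1)^2$. This is where the recursion for $\hat g$ must be tuned. I would choose the recursion so that this constraint is tight for every $j\notin D$, i.e.\ $\gamma$ times the total generated credit plus the transfers equals $\rho$ exactly. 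Because a deposit and a withdrawal for the same client are never both needed, each client pays for at most one of them.

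For $j\in D$, only the withdrawal $t$ and the terms $g_{i,j}$ for $i\notin D$ contribute. The bound should then follow from two facts: $1-\sum g\le 1$, and $\gamma\cdot(\text{credit})$ plus the deficit is at most $1+(\gamma-1)\cdot(\text{credit})$. This stays below $\rho$ as long as $\gamma\le 1.224$, where $2-(\gamma-1)^2\ge 1+(\gamma-1)\cdot\frac{1}{\gamma-1}\cdots$. I expect the constant $1.224$ to enter precisely in a case inequality of this kind, namely a quadratic in $\gamma$ comparing the tight profile value against the $D$-client charge. So I would first verify the $(d_j)$ constraint in the two extreme configurations, all $f_j=j$ and all $f_j=n$, and read off the recursion from them.
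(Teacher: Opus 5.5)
Your proposal does not yet prove the lemma, and it places the difficulty in the wrong spot. The statement is about the specific procedure of Section~\ref{sec:general-case-procedure}, so a proof must write out its assignment and check every constraint against it: the groups $A,B,C,D$, the thresholds $t_1,t_2$, the split at $t_1$, the corrections $\delta_{i,j},\sigma_{i,j}$, and the reduction of the $D$-columns. You only describe how you would tune a recursion.

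Write $\epsilon=\gamma-1$, $\theta=\epsilon+\epsilon^2$, $\tau=2\epsilon+\epsilon^2$. Once the assignment is fixed, the $(d_j)$ constraints are short algebra valid for every $\gamma>1$. They use essentially only $\gamma\epsilon=\theta$ and $\gamma+1-\theta=2-(\gamma-1)^2$; for instance a $B$-client is charged at most $\gamma+(R_j+\epsilon)+(1-\tau-R_j)=\gamma+1-\theta$. The constant $1.224$ does not enter there.

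It enters through the pool constraints $(m_i)$, which you dismiss by saying the threshold structure orders withdrawals after deposits. That is false. $B$/$C$-clients withdraw $[1-\tau-R_{j-1}]^+$ at $f_j\le t_3^\star$, and $D$-clients withdraw $1-R_{j-1}$ at $f_j\ge t_3^\star$, so a quantitative covering argument is needed. For $t_1<i<t_3^\star$ the paper uses the product formula of Claim~\ref{claim:phase-2-ri-product}. For $i\ge t_3^\star$ it sums all client credits and reduces to $\sum_j(\mu_j^B+\mu_j^C)\ge\sum_j\mu_j^D[1-R_{j-1}/\theta]^+$. That step rests on $\frac{1-\tau-x}{\tau+x}\ge[1-x/\theta]^+$ (Property~\ref{property:function-tau}). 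Already at $x=0$ this requires $\tau\le\tfrac12$, and it is the real bottleneck fixing $\gamma\le 1.224$.

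Second, several structural claims in your plan are false in a way that matters.

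\textbf{Monotonicity of credit.} The left-hand side of $(r_{j,i})$ is not monotone in $i$, because it subtracts the deposits $s_{p,j}$. The certificate \emph{must} let it fall after $f_j$. In the procedure, $B$/$C$-clients deposit an extra $\epsilon$ at $f_j+1$, which is what finances the $D$-withdrawals. $A$-clients cut their generation by $\delta_{i,j}$. So these credits drop from $1$ towards $2-\gamma$.

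Suppose instead every credit stayed at least $1$ after $f_j$. For the pattern $f_j=j$, your solution would then be dual feasible for the program with objective $\sum_j r_{j,j}-f$. It would therefore certify $\sum_j\alpha_j-f\le\rho\sum_j d_j$ on all feasible points of Definition~\ref{def:frp}. The execution in Figure~\ref{fig:alpha-hard-example} gives a feasible point with ratio $2-\frac{2-\gamma}{n}$, so $\rho<2$ would be impossible.

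\textbf{Deposits and withdrawals.} Your claim that a client never needs both is also false for the procedure. Every $B$/$C$-client deposits $\hat g_p$ at each $p>j$ and withdraws at $f_j$, and both amounts are charged in $(d_j)$.

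\textbf{$D$-client estimate.} This estimate forgets that $(d_j)$ charges the whole column $\sum_{i=1}^n g_{i,j}$. That includes the entries $i<j$, which finance the initial credits $\sum_{q>i}g_{i,q}$ of earlier clients. With $g_{i,j}=\hat g_i$, a $D$-client pays $\gamma+1-R_{j-1}$, which exceeds $\rho=\gamma+1-\theta$ whenever $R_{j-1}<\theta$. This is why the procedure lowers a $D$-column on the indices $i\le t_1$ by a total of $\chi_j\epsilon$. It compensates through the $z$-term in the denominators $n-i-\frac{\epsilon}{\tau}z$.
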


To be able to use this previous lemma, we will also prove that there always exists at least one correct set of advice variables:

\begin{restatable}{lemma}{goodtauexistence}\label{lem:good-tau-existence}
    For $n \geq 2$, any assignment of the $b_{i,j}\in\{0,1\}$ with $b_{n,j} = 1$ for all $j = 1, \ldots, n$ and a value of $\gamma \in (1,1.224]$, there always exists a correct advice $\Gamma$ with respect to the procedure in Section~\ref{sec:general-case-procedure}.
\end{restatable}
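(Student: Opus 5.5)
The plan is a continuity argument along a one-parameter path of \emph{fractional} advice, followed by the intermediate value theorem, as outlined informally before Lemma~\ref{lem:informal-good-tau-existence}. First, relax the advice $\Gamma$: it consists of a threshold $t_3^\star\in\{1,\dots,n\}$ together with memberships $\mu_j^\star\in\{0,1\}$ for the clients with $f_j=t_3^\star$. Replace each membership by a fractional value $\mu_j^\star\in[0,1]$. The first step is to extend the procedure of Section~\ref{sec:general-case-procedure} to fractional memberships. Each quantity that depends on $j\in D$ (in particular $\hg_j$, and the resulting $g,s,t$ entries) is taken as the convex combination, weighted by $\mu_j^\star$, of the values it would have with $j\in D$ and with $j\notin D$. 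One then checks that the output at $\mu_j^\star\in\{0,1\}$ coincides with the discrete procedure. I would also verify that Lemma~\ref{lem:general-case-procedure} still applies to fractional advice whenever $R_0=\sum_i\hg_i=1$; the only change is that its feasibility inequalities become convex combinations of inequalities that hold at the endpoints.

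Second, order all advice lexicographically into a path. Start at $t_3^\star=1$ with every eligible client in $D$, so that $\hg$ is as small as possible. For each threshold, lower the memberships $\mu_q^\star$ of the clients with $f_q=t_3^\star$ from $1$ to $0$, one at a time and in decreasing order of index. Then increment $t_3^\star$; this is exactly the handoff that makes consecutive discrete choices coincide, as in the $p<q$ example in Section~\ref{sec:approx_guarantee}. This gives a parametrization $\lambda\in[0,1]\mapsto\Gamma(\lambda)$, and I will show that $R_0(\lambda)$ is continuous. Within a segment, the recursive definition of $\hg$ uses only finitely many arithmetic operations together with $\min$ and $\max$ in the $\mu$'s, and no divisions by quantities that can vanish, so continuity follows by induction along the recursion. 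At the junctions, I will check that the configuration reached at the end of one segment equals the start of the next.

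Third, evaluate the endpoints. At one end, $D$ contains every client that is allowed in it, so most $\hg_j=0$. Here I expect $R_0\le 1$, by directly bounding the few nonzero $\hg_j$, which comes down to showing that the recursion's increments are at most $1/n$-scale quantities summing to at most $1$. At the other end, $D=\emptyset$, and every client gets a positive profile. There I expect $R_0\ge1$, because the $(r_{j,i})$ constraint at $i=f_j$ with right-hand side $1$ forces enough mass, and $b_{n,j}=1$ guarantees $f_j\le n$ for every $j$. The intermediate value theorem then gives $\lambda^\star$ with $R_0(\lambda^\star)=1$, and $\Gamma(\lambda^\star)$ is correct advice.

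The main obstacle I expect is continuity at the junctions and at points where the recursion switches branches, for instance when some $\hg$ value hits a $\max\{0,\cdot\}$ or when membership in $D$ affects later indices. If full continuity fails, I would fall back to showing that $R_0$ is monotone along the path with jumps in only one direction, which combined with the endpoint inequalities still yields a point with $R_0=1$. Failing that, I would absorb any slack by scaling $\hg$ down, which preserves feasibility of the constraint $\sum_i\hg_i\le1$.
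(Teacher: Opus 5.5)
Your overall plan matches the paper's: the same starting advice, the same path that lowers the memberships on $\text{Ones}(t_3^\star)$ from the largest index down before incrementing $t_3^\star$, and the intermediate value theorem at the end. However, the two steps that carry the weight are wrong or missing.

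First, the upper endpoint does not exist as you describe it. Since $b_{n,n}=1$, client $n$ always has $f_n=n$. So ``$D=\emptyset$'' means $t_3^\star=n$ with $\mu_n^\star<1$, and the first Phase-2 step $(1-\mu_n^D)\frac{R_n^{(2)}+\tau}{n-n}$ divides by zero. This is why Definition~\ref{def:auxiliary-input} carries the extra scalar $\hat g_n^\star\in[0,1]$ and Property~\ref{property:admissible-frontier-input} forces $\mu_j^\star=1$ on $\text{Ones}(n)$. Your advice family omits this scalar, and then correct advice need not exist. For $n=2$ and $f_1=1$, every advice with $t_3^\star=1$ gives $R_0=R_0^{(2)}=(1-\mu_1^\star)\tau\le\tau<\bar\tau<1$, so your family contains no correct advice for this instance. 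Your reason for expecting $R_0\ge 1$ is also circular. $R_0$ is the output of the recursion in Algorithm~\ref{alg:general-case-procedure-hgi}, which enforces no $(r_{j,i})$ constraint; those constraints are verified in Lemma~\ref{lem:general-case-procedure} only after $R_0=1$ is known.

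The paper closes the path with a final segment at $t_3^\star=n$, with $\mu_j^\star=1$ on $\text{Ones}(n)$, that sweeps $\hat g_n^\star$ from $0$ to $1$. This segment glues to the end of the $t_3^\star=n-1$ segment because both produce the same $\mu^D$ and $\hat g_n=0$, and it ends with $R_0\ge\hat g_n^\star=1$ for free. At the other end, $R_0=0$ exactly, since every $\hat g_i$ vanishes. You also need the observation that $R_0=1$ forces $t_1\neq\perp$ (otherwise $R_0=R_0^{(2)}<\bar\tau$), so that the intermediate-value point is genuinely correct advice.

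Second, the continuity step, which is the technical core of the paper's argument (Lemma~\ref{lem:oracle-continuous-section}), rests on a false premise: the recursion is not a composition of continuous operations. The thresholds $t_1=\max\{i:R^{(2)}_{i-1}\ge\bar\tau\}$ and $t_2=\max\{i:R^{(2)}_{i-1}\ge\theta\}$ are discrete. Below $t_1$ the profile switches to the Phase-1 formula, and the split at $t_1$ divides by $\hat g^{(2)}_{t_1}$. You must show that $R_0$ does not jump when $R^{(2)}_{t_1-1}$ crosses $\bar\tau$ and $t_1$ drops to the previous index with $\mu^D<1$. The paper does this with four ingredients.
\begin{enumerate}
\item The product formula $R^{(2)}_i+\tau=(\hat g_n+\tau)\prod_{j=i+1}^{n-1}\bigl(1+\frac{1-\mu_j^D}{n-j}\bigr)$ shows that each $R^{(2)}_i$ is continuous and non-decreasing along the path, so $t_1$ can only move down as $\lambda$ decreases.
\item The boundary split of Definition~\ref{def:t1-boundary-split} makes the old boundary client entirely $B$ ($\alpha=1$) and the new one entirely $A$ ($\alpha\to 0$, $\hat g^B\to 0$) at the moment of the switch.
\item Rewriting $z$ as $\sum_{m=1}^n[1-R^{(2)}_{m-1}/\theta]^+\mu_m^D$ makes $z$ continuous across jumps of $t_2$.
\item The cases $t_1=\perp$ and $t_1=n$ are treated separately.
\end{enumerate}

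Your fallbacks do not replace this. A non-decreasing function can only jump upward and so can skip the value $1$; monotonicity plus the endpoint inequalities therefore gives nothing. Scaling $\hat g$ down is not allowed either. The procedure returns $\perp$ unless $R_0=1$, and feasibility uses $R_0=1$ as an identity, for example $\sum_i g^B_{i,j}=1$ in Lemma~\ref{lem:gen-B} and $\sum_{i<t_1}\hat g_i+\hat g^A_{t_1}=\tau$ in Lemma~\ref{lem:gen-D}. Scaling also destroys recursive relations such as $(n-j)\hat g_j=(1-\mu_j^D)(R_j+\tau)$, on which the credit formulas rest.
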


Thus, if we combine together these two lemmas we can conclude that there always exists a solution.
Now the theorem's proof is direct.

\thmupperboundfrp*

\begin{proof}[Proof of Theorem~\ref{thm:1.95-frp}]
    Using Observation~\ref{obs:n-not-zero}, for $n=1$ we have a bound of $\Mrho$ since $\gamma\leq \Mrho$ for our range. From Observation~\ref{obs:b1}, we see that we can focus on only the cases with $b_{n,j}=1$ for all $j$. Then, together with Lemma~\ref{lem:general-case-procedure} and Lemma~\ref{lem:good-tau-existence}, we get directly that since there exists a $\Mrho$ feasible solution for each of the inner linear programs, it implies the whole program is upper-bounded by $\Mrho$. Thus, combining this with Lemma~\ref{lemma:mathematical-program-1.99} proves the theorem for $\gamma\neq 1$. The proof of the end-point $\gamma=1$ reduces to the analysis of the original greedy, whose LMP ratio is bounded by 2 \cite{JainMMSV03}.
\end{proof}

What is left is to prove the two final lemmas which we do in the following sections, Lemma~\ref{lem:general-case-procedure} in Section~\ref{sec:general-case-proof} and Lemma~\ref{lem:good-tau-existence} in Section~\ref{sec:good-aux-existence}. The definition of the procedure follows in the next section.

\subsection{Constructing a dual certificate from advice}\label{sec:general-case-procedure}

For the definition of the general case's procedure, we need to introduce some notation and definitions. We assume that we are given some input data, which consists of the lower triangular bidding matrix $\Set{b_{i,j} \mid i \geq j \in \{1, 2, ..., n\}}$, and some advice variables $\Gamma$. What follows is the description of the procedure.

First, let us define the \emph{first nonzero bidding time}, which is the only information out of the values of $b_{i,j}$ to make our algorithm adaptive.
\begin{definition}
    For each client $j\in\{1,\dots,n\}$, define its \emph{first nonzero bidding time}
    \begin{equation*}
        f_j := \min\{i\in\{j,\dots,n\} : b_{i,j}=1\}\,.
    \end{equation*}
\end{definition}
This quantity is well defined by Observation~\ref{obs:b1}. Thus, the algorithm will not change with respect to how $b_{i,j}$ column looks after the row $f_j$. Using this definition, we can take the following simplifying bounds on the right hand side of the $(r_{j,i})$ constraint:
\begin{observation}\label{obs:rhs-rij-upperbound}
    The right hand side of the $(r_{j,i})$ constraint is upper bounded by:
    \begin{equation*}
        b_{i,j} - (\Mcoeff-1) \sum_{p=j}^{i-1} b_{p,j} \leq \begin{cases}
            0  & \text{for }i<f_j\\
            1 & \text{for }i=f_j\\
            2-\gamma &\text{for }i>f_j
        \end{cases}
    \end{equation*}
\end{observation}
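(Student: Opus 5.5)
The bound follows from the definition of $f_j$ and the fact that each $b_{p,j}\in\{0,1\}$. I split according to the position of $i$ relative to $f_j$.

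\textbf{Case $i<f_j$.} Here $j\le i<f_j$. By minimality of $f_j$ we have $b_{p,j}=0$ for every $p\in\{j,\dots,i\}$. So $b_{i,j}=0$ and the sum $\sum_{p=j}^{i-1}b_{p,j}$ vanishes, and the right-hand side equals $0$.

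\textbf{Case $i=f_j$.} We have $b_{i,j}=1$, and again $b_{p,j}=0$ for all $j\le p\le i-1$. The expression therefore equals exactly $1$.

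\textbf{Case $i>f_j$.} The index $p=f_j$ lies in the range $\{j,\dots,i-1\}$ and contributes $b_{f_j,j}=1$. All terms are nonnegative, so
\[
\sum_{p=j}^{i-1}b_{p,j}\;\ge\;1 .
\]
We use $\gamma\ge 1$ (indeed $\gamma\in(1,1.224]$), so the coefficient $\gamma-1$ is nonnegative, together with $b_{i,j}\le 1$. This gives
\[
b_{i,j}-(\gamma-1)\sum_{p=j}^{i-1}b_{p,j}\;\le\;1-(\gamma-1)\;=\;2-\gamma .
\]

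Together the three cases give the claimed bound. The only points needing care are that $f_j$ is well defined, which holds because $b_{n,j}=1$ by Observation~\ref{obs:b1}, and the sign of $\gamma-1$ in the third case; neither is a real obstacle.
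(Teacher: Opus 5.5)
Your proof is correct and is essentially the paper's argument. The paper treats this as immediate from the minimality of $f_j$, from $b_{f_j,j}=1$ (well defined via Observation~\ref{obs:b1}), from $b_{i,j}\le 1$, and from $\gamma\ge 1$; it invokes the same facts again in the proof of Lemma~\ref{lem:rij-constraints}, and these are exactly the ingredients of your three cases.
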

and it will be using this bound that we will prove our results. Then, to ease the notation, we need to define the set of clients $j$ with the same value of $f_j$:
\begin{definition}
    For each $i\in\{1,\dots,n\}$, write
    \begin{equation*}
        \ones{i} := \{j : f_j=i\}\,.
    \end{equation*}
    Thus $\ones{i}$ is the set of clients whose first non-zero bid appears at time $i$, and by lower triangularity we have $\ones{i}\subseteq\{1,\dots,i\}$.
\end{definition}

Our procedure first creates a profile for $\hg_i$ depending on these values of $f_j$. It works by first finding a specific set of clients (which we call $D$ later on) that satisfy $\hg_i=0$ for all $i$ in these sets, except a degenerate case for $j=n$ (and $t_3^\star=n$). To define this set, we predefine a threshold $t_3^\star$ so that all $j$ that has $f_j>t_3^\star$ is in this set, including some other $j$ with $f_j=t_3^\star$. This variable and the ones that define the cases with $f_j=t_3^\star$ is what we call the advice variables and we define them as the following:

\begin{definition}[Advice Variables]\label{def:auxiliary-input}
    For a fixed bidding matrix $\{b_{i,j}\mid i\ge j\}$, an \emph{advice} for the procedure consists of
    \begin{equation*}
        t_3^\star \in \{1,\dots,n\}\,, \qquad
        \{\mu_j^\star\}_{j\in\ones{t_3^\star}}\,,
    \end{equation*}
    and, in the special case $t_3^\star=n$, an additional scalar
    \begin{equation*}
        \hg_n^\star \in [0,1]\,.
    \end{equation*}
\end{definition}

The $\{\mu_j^\star\}_{j\in\ones{t_3^\star}}$ are the (fractional) variables that denote the (fractional) membership to this group $D$. They must satisfy the following properties, necessary for the construction to work and which will be used in the proof of correctness in Section~\ref{sec:general-case-proof}.

\begin{property}[Admissible frontier input]\label{property:admissible-frontier-input}
    Fix $t_3^\star\in\{1,\dots,n\}$ and write $\ones{t_3^\star}=\{j_1<\dots<j_k\}$. We call a family $\{\mu_j^\star\}_{j\in\ones{t_3^\star}}$ \emph{admissible} if it is non-increasing in $j$, has at most one fractional entry, and, when $t_3^\star=n$, satisfies $\mu_j^\star=1$ for all $j\in\ones{n}$.

    Equivalently, there exist $r\in\{0,\dots,k\}$ and $\lambda\in[0,1]$ such that
    \begin{equation*}
        \mu_{j_p}^\star=
        \begin{cases}
            1, & p\le r\,,\\
            \lambda, & p=r+1\le k\,,\\
            0, & p>r+1\,.
        \end{cases}
    \end{equation*}
    where $r=k$ in the case of $t_3^\star=n$. 
\end{property}

To give some intuition on the procedure, given the value of the advice variables, we will iteratively construct the values of the profile $\hg_i$, from which we will get some thresholds $t_1$ and $t_2$, apart from the one we get as input $t_3^\star$. We separate and treat the clients $j$ differently according to their position with respect to two thresholds $t_1$ and $t_2$, generated by the procedure itself, and the input one $t_3^\star$. Putting it very simply, each client $j$ will belong to one of the following four groups:
\begin{itemize}
    \item Group $A$: clients with $j < t_1$ and $f_j< t_3^\star$.
    \item Group $B$: clients with $t_1 < j \le t_2$ and $f_j< t_3^\star$.
    \item Group $C$: clients with $t_2 < j \le t_3^\star$ and $f_j< t_3^\star$.
    \item Group $D$: clients with $f_j > t_3^\star$.
\end{itemize}
Notice that we did not specify what happens to clients with $f_j = t_3^\star$ or to the $j=t_1$ client; their behaviors relate to the values of $\mu_j^\star$, where they might belong to more than one group at the same time. To generalize this concept of groups and work with different memberships at the same time, we will use the following notation.
\begin{definition}[Membership weights]
    For each client $j\in\{1,\dots,n\}$, the construction produces four auxiliary scalars
    \begin{equation*}
        \mu_j^A,\mu_j^B,\mu_j^C,\mu_j^D\in[0,1]\,,
    \end{equation*}
    called the \emph{membership weights} of client $j$ in the classes $A,B,C,D$. For every client $j$, they satisfy
    \begin{equation*}
        \mu_j^A+\mu_j^B+\mu_j^C+\mu_j^D=1\,.
    \end{equation*}
\end{definition}
Here, we choose $\mu_j^D=\mu_j^\star$ for $f_j=t_3^\star$ so that it is part of the input. The rest of the mass is assigned to $A$ (if $j<t_1$), to $B$ (if $t_1<j\le t_2$), and/or to $C$ (if $t_2<j\le t_3^\star$). In particular, if $t_3^\star=n$, we will have $\mu_j^D=1$ for all clients with $f_j=t_3^\star$, meaning that all those clients belong to group $D$.

Through these inputs, the procedure described in Algorithm~\ref{alg:general-case-procedure-hgi} in the end generates a feasible solution for the dual of the factor-revealing LP using the function $\beta_{\Mcoeff}(\alpha,d) = [\alpha - \Mcoeff d]^+$ if the advice is correct. It first generates the $\hg_i$ values in Algorithm~\ref{alg:general-case-procedure-hgi} and then the $g_{i,j}, \bar s_{i,j}, t_{i,j}$ values follow from their definition in Table~\ref{tab:general-case-procedure}. The matrix $\bar s$ in the table is a preliminary assignment that allow for $\bar{s}_{i,i}>0$ to simplify the argument, whose diagonal mass is shifted one step forward at the end of the proof of Lemma~\ref{lem:general-case-procedure} to obtain $s$. If the advice is not correct, the procedure can fail and return an empty output ($\perp$).

There will be specific constants that are necessary. We will use the following definitions:
\begin{definition}
    We define the following shorthand variables which will help in our analysis, all a function of the value of $\Mcoeff$:
    \begin{align*}
        \Mcoeffmone \coloneqq \Mcoeff - 1\,,
        \qquad
        \MRttwo \coloneqq \Mcoeffmone + \Mcoeffmone^2\,,
        \qquad
        \Mtone \coloneqq 2\cdot \Mcoeffmone + \Mcoeffmone^2\,,
        \qquad
        \MRtone \coloneqq 1-\Mtone
    \end{align*}
\end{definition}
Here, $\Mcoeffmone$ is just the extra slope in the bidding function compared the original $\gamma=1$, and it will be the extra slack that one can use after $f_j$ (see Observation~\ref{obs:rhs-rij-upperbound} where $2-\gamma = 1-\Mcoeffmone$). The constant $\MRttwo$ is important because it is exactly the constant that defines the threshold $t_2$, with $R_{t_2}\leq \MRttwo \leq R_{t_2-1}$, where $R_i$ will be defined later and is simply just the suffix sum of the values of $\hg_i$. Finally, the constant $\MRtone$ is the equivalent one but for $t_1$, and $\Mtone$ is just the complementary to it.

From these constants and the choice of $\gamma$, the following technical properties follow, which will be necessary for the proof, but which need not much explanation.
\begin{property}\label{property:coefficients}
    For $\Mcoeff\in(1,1.224]$, the following properties hold:
    \begin{equation*}
        \Mcoeff \Mcoeffmone = \MRttwo, \qquad \Mcoeff + 1 - \MRttwo = \Mrho, \qquad \Mfrac \leq 1\,.
    \end{equation*}
\end{property}
\begin{property}\label{property:thresholds}
    For $\Mcoeff \in (1,1.224]$, we have $\Mtone \in [0,0.5)$ and thus $\Mcoeffmone \leq \MRttwo \leq \Mtone < 0.5 < \MRtone$.
\end{property}
\begin{property}\label{property:function-tau}
    For $\Mcoeff \in (1,1.224]$, we have that for $x\in[0,1-\Mtone]$, the following holds:
    \begin{equation*}    
        \frac{1-\Mtone-x}{\Mtone + x} \geq \left[1-\frac{x}{\MRttwo}\right]^+\,.
    \end{equation*}
\end{property}
Do note that this last technical property is the bottleneck of our technical analysis and thus it is the bottleneck for the choice of $\Mcoeff$ and the approximation factor $\Mrho$ that we can achieve through our technique.

We will also use the suffix sums to express:
\begin{equation*}
    R_i \coloneqq \sum_{\ell=i+1}^n \hg_\ell, \qquad\text{and}\qquad R_i^{(2)} \coloneqq \sum_{\ell=i+1}^n \hg_\ell^{(2)}\,.
\end{equation*}
In particular, $R_n=R_n^{(2)}=0$, and $R_{i-1}=R_i+\hg_i$, $R_{i-1}^{(2)}=R_i^{(2)}+\hg_i^{(2)}$.

Given the advice, we first construct an auxiliary profile $\hat g^{(2)}$ backwards, which can be seen in Algorithm~\ref{alg:general-case-procedure-hgi}. Its suffix sums determine the thresholds $t_1$ and $t_2$. We then replace the early part of this profile, using a fractional split at $t_1$ to make the transition continuous. Once the final profile is fixed, we assign the group memberships and the remaining dual variables.

\begin{algorithm}[H]
\caption{Construction of the profile $\hg$}\label{alg:general-case-procedure-hgi}
\begin{algorithmic}[1]
    \Require $\{b_{i,j} : i\ge j\in\{1,\dots,n\}\}$, an index $t_3^\star\in\{1,\dots,n\}$, an admissible frontier input $\{\mu_j^\star : j\in\ones{t_3^\star}\}$, and $\hg_n^\star\in[0,1]$ when $t_3^\star=n$

    \State Set
    $\mu_j^D=
    \begin{cases}
        1, & f_j>t_3^\star\,,\\
        \mu_j^\star, & j\in\ones{t_3^\star}\,,\\
        0, & f_j<t_3^\star\,.
    \end{cases}$

    \State Set $R_{t_3^\star}^{(2)}\gets 0$ and $\hg_i^{(2)}\gets R_i^{(2)}\gets 0$ for every $i>t_3^\star$

    \For{each $t_3^\star \ge i \ge 1$} \label{line:phase-2-loop}
        \State $\hg_i^{(2)}\gets
        \begin{cases}
            \hg_n^\star, & \text{if $i=t_3^\star=n$}\,,\\
            (1-\mu_i^D)\dfrac{R_i^{(2)}+\Mtone}{n-i}\,, & \text{otherwise}
        \end{cases}$
        \State $R_{i-1}^{(2)}\gets R_i^{(2)}+\hg_i^{(2)}$.
    \EndFor

    \State $t_1\gets \max\{i : R_{i-1}^{(2)}\ge \MRtone\}$ \label{line:R_i-1-smaller-0.79-condition}
    \State Set $(\hg_i,R_i)\gets (\hg_i^{(2)},R_i^{(2)})$ for every $i>t_1$ (or all $i$ if undefined) and $R_{t_1}\gets R_{t_1}^{(2)}$ (if defined).
    \If{$t_1$ does not exist} \label{line:t_1-perp-condition}
        \State \Return $\perp$
    \EndIf

    \State $t_2\gets \max\{i : R_{i-1}^{(2)}\ge \MRttwo\}$
    \State $z\gets \sum_{\ell=t_2+1}^{n}\left[1-\dfrac{R_{\ell-1}^{(2)}}{\MRttwo}\right]^+\mu_\ell^D$

    \State Define $\alpha$, $\hg_{t_1}^A$, $\hg_{t_1}^B$, $\hg_{t_1}$, and $R_{t_1-1}$ according to Definition~\ref{def:t1-boundary-split}

    \For{each $t_1 > i \geq 1$}
        \State $\hg_i \gets (1-\mu_i^D)\cdot\dfrac{1-[\MRttwo-R_{f_i}]^+}{n-i-\Mfrac\cdot z}$
        \State $R_{i-1} \gets R_i + \hg_i$
    \EndFor

    \If{$R_0 \neq 1$}
        \State \Return $\perp$
    \EndIf
\end{algorithmic}
\end{algorithm}

\begin{definition}[Boundary Split at $t_1$]\label{def:t1-boundary-split}
Assume that $t_1$ has been defined by Algorithm~\ref{alg:general-case-procedure-hgi}.
If $t_1 < n$, define
\begin{equation*}
    \alpha \coloneqq \frac{\MRtone - R_{t_1}^{(2)}}{\hg_{t_1}^{(2)}}\,,
    \qquad
    \hg_{t_1}^B \coloneqq \MRtone - R_{t_1}^{(2)}\,,
    \qquad
    \hg_{t_1}^A \coloneqq
    (1-\alpha)(1-\mu_{t_1}^D)\cdot
    \frac{1-[\MRttwo-R_{f_{t_1}}]^+}
         {n-t_1+\alpha(1-\mu_{t_1}^D)-\Mfrac\cdot z}\,.
\end{equation*}
If $t_1=n$, define instead
\begin{equation*}
    \alpha \coloneqq 1\,,
    \qquad
    \hg_n^A \coloneqq 0\,,
    \qquad
    \hg_n^B \coloneqq \hg_n^{(2)}=\hg_n^\star\,.
\end{equation*}
In both cases, we finally set
\begin{equation*}
    \hg_{t_1} \coloneqq \hg_{t_1}^A+\hg_{t_1}^B\,,
    \qquad
    R_{t_1-1} \coloneqq R_{t_1}+\hg_{t_1}\,.
\end{equation*}
\end{definition}
It is important to note the following about $\alpha$:
\begin{property}
We have $0 \leq \alpha \leq 1$.
\end{property}

With these values, we set the memberships as follows:
\begin{equation}\label{eq:membership-assignment}
    \begin{aligned}
        &\mu_j^A=
        \begin{cases}
            1-\mu_j^D, & j<t_1\,,\\
            (1-\alpha)(1-\mu_{t_1}^D), & j=t_1\,,\\
            0, & j>t_1\,,
        \end{cases}
        \qquad
        &&\mu_j^B=
        \begin{cases}
            \alpha(1-\mu_{t_1}^D), & j=t_1\,,\\
            1-\mu_j^D, & t_1<j\le t_2\,,\\
            0, & \text{otherwise}\,,
        \end{cases}
        \\
        &\mu_j^C=
        \begin{cases}
            1-\mu_j^D, & t_2<j\le t_3^\star\,,\\
            0, & \text{otherwise}\,,
        \end{cases} 
        \qquad
        &&\mu_j^D=
        \begin{cases}
            1, & f_j>t_3^\star,\\
            \mu_j^\star, & j\in\ones{t_3^\star}\,,\\
            0, & f_j<t_3^\star\,.
        \end{cases}
    \end{aligned}
\end{equation}
In particular, the only non-integral memberships are the boundary input value on $\ones{t_3^\star}$ and the split of the client $t_1$ between $A$ and $B$.

Once $\hat g$ and $R$ are fixed, we abbreviate
\begin{equation*}
    \chi_j := \left[1-\frac{R_{j-1}}{\theta}\right]^+\,.
\end{equation*}

The matrices $g_{i,j}$, $\bar s_{i,j}$, and $t_{i,j}$ are then assigned column-wise according to Table~\ref{tab:general-case-procedure}. Every entry not listed there is set to $0$. We call $\bar s$ the \emph{preliminary sent-credit matrix}; it is convenient for the credit calculations, but it may have nonzero diagonal entries. The final sent-credit matrix is obtained from it in Lemma~\ref{lem:diagonal-repair}. The variables $\delta_{i,j}$ and $\sigma_{i,j}$ are auxiliary variables that need to satisfy Property~\ref{property:delta} and Property~\ref{property:sigma}, respectively, and whose existence will be proved in the proof of Lemma~\ref{lem:general-case-procedure}.

\begin{table}[H]
\centering
\small
\caption{Column-wise assignment of $(g_{i,j},\bar s_{i,j},t_{i,j})$ once $\hat g$ and $R$ are fixed.}
\label{tab:general-case-procedure}
\renewcommand{\arraystretch}{1.25}
\begin{tabularx}{\textwidth}{@{}>{\RaggedRight\arraybackslash}p{0.14\textwidth}>{\RaggedRight\arraybackslash}p{0.27\textwidth}>{\RaggedRight\arraybackslash}p{0.28\textwidth}>{\RaggedRight\arraybackslash}X@{}}
\toprule
Group Membership & Generated credit $g_{i,j}$ & Pool transfer $\bar s_{i,j}$ & $t$-entry and extra update \\
\midrule
$j\in A$ &
$g_{i,j}^A = \hg_i-\delta_{i,j}$ &
$\bar s_{i,j}^A = \hg_i-\sigma_{i,j}$ for $i\ge j$ &
$t_{f_j,j}^A = 0$ \\
\addlinespace[0.4em]
$j\in B$ &
$g_{i,j}^B = \hg_i$ &
\begin{tabular}{@{}l@{}}
    $\bar s_{i,j}^B = \hg_i$ for $i>j$ \\
    $\bar s_{f_j+1,j}^B=\hg_{f_j+1} + \Mcoeffmone$
\end{tabular}
& $t_{f_{j},j}^B= [1-R_{j-1} - \Mtone]^+$
\\
\addlinespace[0.4em]
$j\in C$ &
$g_{i,j}^C = \hg_i$ &
\begin{tabular}{@{}l@{}}
    $\bar s_{i,j}^C = \hg_i$ for $i>j$ \\
    $\bar s_{f_j+1,j}^C=\hg_{f_j+1} + \Mcoeffmone$
\end{tabular}
&$t_{f_{j},j}^C=1-R_{j-1} - \Mtone$
\\
\addlinespace[0.4em]
$j\in D$ &
$g_{i,j}^D = \begin{cases}
    \hg_i - \chi_j\Mfrac\hg_i, & i<t_1,\\
    \hg_i - \chi_j\Mfrac\hg_{t_1}^A, & i=t_1,\\
    \hg_i, & i>t_1
\end{cases}$ &
$\bar s_{i,j}^D = 0$ for $i>j$ &
$t_{f_{j},j}^D=1-R_{j-1}$
\\
\bottomrule
\end{tabularx}
\end{table}

For the cases that the client is fractionally assigned to more than one group, the generated $g_{i,j}$, $\bar s_{i,j}$, and $t_{i,j}$ values are the corresponding convex combinations of the values generated by each group:
\begin{equation}\label{formula:fractional-assignments-convex-combination}
    \begin{aligned}
        g_{i,j} &= \mu_j^A g_{i,j}^A + \mu_j^B g_{i,j}^B + \mu_j^C g_{i,j}^C + \mu_j^D g_{i,j}^D\,,\\
        \bar s_{i,j} &= \mu_j^A \bar s_{i,j}^A + \mu_j^B \bar s_{i,j}^B + \mu_j^C \bar s_{i,j}^C + \mu_j^D \bar s_{i,j}^D\,,\\
        t_{i,j} &= \mu_j^A t_{i,j}^A + \mu_j^B t_{i,j}^B + \mu_j^C t_{i,j}^C + \mu_j^D t_{i,j}^D\,.
    \end{aligned}
\end{equation}

As stated before, there are two variables $\delta_{i,j}$ and $\sigma_{i,j}$ that need to satisfy some properties:

\begin{property}\label{property:delta}
    The values of $\delta_{i,j}$ for $\mu_j^A>0$ need to satisfy the following properties:
    \begin{itemize}
        \item $\forall \ i, j$: $0 \le \delta_{i,j} \le \hg_i$.
        \item For $i\leq f_j$: $\delta_{i,j}=0$.
        \item $\sum_{i=f_j+1}^{t_3^\star} \delta_{i,j} = \Mcoeffmone \cdot \left[1 -\frac{[\MRttwo-R_{f_j}]^+}{\MRttwo}\right]$.
    \end{itemize}
\end{property}

\begin{property}\label{property:sigma}
    The values of $\sigma_{i,j}$ for $\mu_j^A>0$ need to satisfy the following properties:
    \begin{itemize}
        \item $\forall \ i, j$: $0 \le \sigma_{i,j} \le \hg_i$.
        \item For $i < j$ or $i > f_j$: $\sigma_{i,j}=0$.
        \item $\sum_{i=j}^{f_j} \sigma_{i,j} = [\MRttwo-R_{f_j}]^+$.
    \end{itemize}
\end{property}

\subsection{Feasibility with correct advice}\label{sec:general-case-proof}

This section is the proof of Lemma~\ref{lem:general-case-procedure}. 
Fix a bidding matrix $b$ and admissible advice $\Gamma$ for which the procedure succeeds. We prove that its output yields a feasible dual solution of value $\Mrho$. The existence of such advice is established in Section~\ref{sec:good-aux-existence}.

This implies that for this choice of auxiliary variables, Algorithm~\ref{alg:general-case-procedure-hgi} finishes without returning $\perp$. 
We first analyze the preliminary sent-credit matrix $\bar s$ from Table~\ref{tab:general-case-procedure}, and at the end shift its diagonal mass to obtain a feasible solution with $\rho=\Mrho$.

For the first step, we need to prove that the algorithm is valid and can return a solution. The main problem is proving that the properties for $\delta$ and $\sigma$ can be satisfied. We thus need to prove the following two lemmas:
\begin{lemma}\label{lem:delta-existence}
It is possible to assign values $\delta_{i,j}$ that satisfy Property~\ref{property:delta}.
\end{lemma}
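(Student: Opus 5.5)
The plan is to treat Property~\ref{property:delta} as a simple capacity problem for each fixed client $j$ with $\mu_j^A>0$. We must distribute a prescribed total mass
\[
T_j := \Mcoeffmone\left[1-\frac{[\MRttwo-R_{f_j}]^+}{\MRttwo}\right]
\]
over the indices $i\in\{f_j+1,\dots,t_3^\star\}$, with capacity $\hg_i$ at index $i$, and set $\delta_{i,j}=0$ everywhere else. This is possible if and only if the available capacity is at least $T_j$:
\[
C_j := \sum_{i=f_j+1}^{t_3^\star}\hg_i \;\ge\; T_j .
\]
In that case a greedy filling works: scan $i=f_j+1,f_j+2,\dots$ and set $\delta_{i,j}=\min\{\hg_i,\ T_j-\sum_{i'<i}\delta_{i',j}\}$. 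This gives $0\le\delta_{i,j}\le\hg_i$, zeros for $i\le f_j$ and for $i>t_3^\star$, and the exact required sum. Nonnegativity of $\hg$ follows from Algorithm~\ref{alg:general-case-procedure-hgi}, since every assigned value is a product of nonnegative factors; for the denominators this relies on $\Mfrac\le 1$ from Property~\ref{property:coefficients}.

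\textbf{Step 1: the mass beyond $t_3^\star$ is zero.} First I identify $C_j$ with the suffix sum $R_{f_j}$. If $j$ has $\mu_j^A>0$, then $\mu_j^D<1$, so $f_j\le t_3^\star$. When $t_3^\star=n$ we have $\mu_j^D=1$ on $\ones{n}$, which forces $f_j<n$. When $t_3^\star<n$, the algorithm sets $\hg^{(2)}_i=0$ for all $i>t_3^\star$, and hence $R^{(2)}_{i-1}=0$ for $i>t_3^\star+1$. Since $\MRtone>0$, this gives $t_1\le t_3^\star$, and therefore $\hg_i=\hg_i^{(2)}=0$ for $i>t_3^\star$. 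In every case this yields
\[
C_j=\sum_{i=f_j+1}^{n}\hg_i - R_{t_3^\star}=R_{f_j}-R_{t_3^\star}=R_{f_j},
\]
using $R_{t_3^\star}=0$ (trivially so if $t_3^\star=n$).

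\textbf{Step 2: compare $T_j$ with $R_{f_j}$.} I split into two cases according to the size of $R_{f_j}$.
\begin{itemize}
\item If $R_{f_j}\ge\MRttwo$, then $T_j=\Mcoeffmone$. By Property~\ref{property:thresholds}, $\Mcoeffmone\le\MRttwo\le R_{f_j}=C_j$.
\item If $R_{f_j}<\MRttwo$, then $T_j=\Mcoeffmone R_{f_j}/\MRttwo = R_{f_j}/\Mcoeff$, using $\MRttwo=\Mcoeff\Mcoeffmone$ from Property~\ref{property:coefficients}. Since $\Mcoeff>1$, this is at most $R_{f_j}=C_j$.
\end{itemize}
The degenerate case $f_j=t_3^\star<n$ is included: there the range of summation is empty, but $R_{f_j}=0$ gives $T_j=0$, which is consistent.

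\textbf{Main obstacle.} The only step needing care is Step 1. One must check that the index range $\{f_j+1,\dots,t_3^\star\}$ really captures all of $R_{f_j}$, that is, that no positive $\hg$ mass lies beyond $t_3^\star$. This requires tracking how $t_1$ relates to $t_3^\star$, and confirming that the boundary modification at $t_1$ (Definition~\ref{def:t1-boundary-split}) only touches indices at most $t_3^\star$. Once this is established, the capacity inequality follows from the two constant identities above, and the greedy assignment completes the argument.
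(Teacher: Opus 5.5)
Your proof is correct and follows the paper's argument. Both reduce Property~\ref{property:delta} to the capacity inequality $\sum_{i=f_j+1}^{t_3^\star}\hat g_i=R_{f_j}-R_{t_3^\star}=R_{f_j}\ge T_j$, check it in the two cases $R_{f_j}\ge\theta$ and $R_{f_j}<\theta$, and then fill the $\delta_{i,j}$ greedily.

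There is one small off-by-one slip in Step~1. Knowing $R^{(2)}_{i-1}=0$ only for $i>t_3^\star+1$ gives $t_1\le t_3^\star+1$, not $t_1\le t_3^\star$. To get $t_1\le t_3^\star$ you also need the explicit initialization $R^{(2)}_{t_3^\star}=0$, i.e.\ the case $i=t_3^\star+1$.
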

\begin{lemma}\label{lem:sigma-existence}
It is possible to assign values $\sigma_{i,j}$ that satisfy Property~\ref{property:sigma}.
\end{lemma}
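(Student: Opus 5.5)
The plan is to reduce the existence of $\sigma$ to a single inequality on suffix sums, and then fill the required mass greedily. Fix a client $j$ with $\mu_j^A>0$. By the membership assignment \eqref{eq:membership-assignment}, this forces $j\le t_1$.

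\textbf{Step 1: the trivial case.} If $[\MRttwo-R_{f_j}]^+=0$, set $\sigma_{\cdot,j}\equiv 0$, and all three conditions of Property~\ref{property:sigma} hold.

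\textbf{Step 2: reduce to a capacity inequality.} Otherwise the target mass is $\MRttwo-R_{f_j}>0$. The variable $\sigma_{i,j}$ may be positive only for $j\le i\le f_j$, and then only up to $\hg_i$. The total capacity available is therefore
\[
\sum_{i=j}^{f_j}\hg_i \;=\; R_{j-1}-R_{f_j}.
\]
Note that $j\le f_j$ holds by lower triangularity. So it suffices to show that every $\hg_i\ge 0$ and that
\[
R_{j-1}\ge \MRttwo .
\]
Once both hold, define $\sigma_{i,j}$ greedily for $i=j,j+1,\dots,f_j$: take $\sigma_{i,j}=\min\{\hg_i,\ \text{remaining mass}\}$, and set $\sigma_{i,j}=0$ outside this range. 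This satisfies all three bullets of Property~\ref{property:sigma}.

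\textbf{Step 3: nonnegativity of $\hg$.} For $i>t_1$, $\hg_i=\hg_i^{(2)}$ is a product of nonnegative factors, since $R^{(2)}\ge 0$ by backward induction. The term $\hg_{t_1}^B=\MRtone-R_{t_1}^{(2)}$ is nonnegative because $t_1$ is the largest index with $R^{(2)}_{t_1-1}\ge\MRtone$; hence $R^{(2)}_{t_1}<\MRtone$ when $t_1<n$.

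For $\hg_{t_1}^A$ and $\hg_i$ with $i<t_1$, the numerators $1-[\MRttwo-R_{f_i}]^+$ are at least $1-\MRttwo>0$ by Property~\ref{property:thresholds}. The remaining point is positivity of the denominators $n-i-\Mfrac z$ and $n-t_1+\alpha(1-\mu^D_{t_1})-\Mfrac z$. Here I would bound $z$ using $\Mfrac\le 1$ (Property~\ref{property:coefficients}): the summands of $z$ are at most $1$ and are nonzero only for $\ell$ with $R^{(2)}_{\ell-1}<\MRttwo$, i.e. $\ell>t_2\ge t_1$. Hence $z\le n-t_2\le n-t_1$, with strictness to be checked in the $i=t_1$ boundary case. \textbf{I expect this denominator bookkeeping to be the only delicate point.}

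\textbf{Step 4: the key inequality.} Since $\hg\ge 0$, $R$ is nonincreasing in its index, so $R_{j-1}\ge R_{t_1-1}$ for $j\le t_1$. I would then show $R_{t_1-1}\ge\MRtone$ in two cases.
\begin{itemize}
\item If $t_1<n$, then $R_{t_1}=R^{(2)}_{t_1}$, so
\[
R_{t_1-1}=R_{t_1}+\hg^A_{t_1}+\hg^B_{t_1}\ge R^{(2)}_{t_1}+\MRtone-R^{(2)}_{t_1}=\MRtone .
\]
\item If $t_1=n$, then $R_{n-1}=\hg_n^\star=R^{(2)}_{n-1}\ge\MRtone$ by the definition of $t_1$.
\end{itemize}
Finally, Property~\ref{property:thresholds} gives $\MRtone>0.5>\MRttwo$. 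Therefore $R_{j-1}\ge\MRttwo$, which completes the argument.
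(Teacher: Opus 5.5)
Your proposal is correct and is essentially the paper's proof. You reduce to the capacity bound $R_{j-1}-R_{f_j}\ge[\theta-R_{f_j}]^+$, obtain it from $R_{j-1}\ge\bar\tau\ge\theta$ for $j\le t_1$, and fill $\sigma_{\cdot,j}$ greedily; your Steps~3--4 re-derive what the paper takes from Observation~\ref{obs:thresholds} and Lemma~\ref{lem:nonneg}. The boundary point you flag closes immediately: $\varepsilon/\tau=1/(2+\varepsilon)<1$ and $z\le n-t_1$, so the $t_1$-denominator is at least $(1-\varepsilon/\tau)(n-t_1)>0$ whenever $t_1<n$.
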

To prove them, we need some observation around the procedure. Algorithm~\ref{alg:general-case-procedure-hgi} computes a profile for $\hg_i$ and computes a pair of thresholds $t_1$ and $t_2$ using the input variables. For such thresholds, we can observe the following:
\begin{observation}
\label{obs:thresholds}
We have $R_{t_3^\star} = 0$, $R_{t_2 - 1} \geq \MRttwo$, $R_{t_1 - 1} \geq \MRtone$, and $R_0 = 1$. Also, if $t_2 \neq n$ then $R_{t_2} < \MRttwo$, and if $t_1 \neq n$ then $R_{t_1} < \MRtone$.
Also, we have that $R_j$ is non-increasing in $j$.
\end{observation}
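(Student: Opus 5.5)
The observation collects facts that should follow directly from how Algorithm~\ref{alg:general-case-procedure-hgi} builds the suffix sums. I will verify them one at a time, using throughout that the procedure did not return $\perp$, so $t_1$ exists and $R_0=1$.

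\textbf{Step 1: $R_{t_3^\star}=0$.} The algorithm sets $\hg_i^{(2)}=0$ for all $i>t_3^\star$, so $R^{(2)}_{t_3^\star}=0$. Since $t_1\le t_3^\star$ (argued in Step 2), we have $\hg_i=\hg_i^{(2)}$ for $i>t_1$. Hence $R_{t_3^\star}=R^{(2)}_{t_3^\star}=0$.

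\textbf{Step 2: the threshold inequalities.} By definition $t_1=\max\{i: R^{(2)}_{i-1}\ge \MRtone\}$. Because $R^{(2)}_i=0<\MRtone$ for $i\ge t_3^\star$, the maximizer satisfies $t_1\le t_3^\star$. Maximality gives $R^{(2)}_{t_1}<\MRtone$ whenever $t_1\neq n$.

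By Definition~\ref{def:t1-boundary-split}, $R_{t_1}=R^{(2)}_{t_1}$, so $R_{t_1}<\MRtone$. In the case $t_1<n$, the boundary piece is $\hg^B_{t_1}=\MRtone-R^{(2)}_{t_1}$. Together with $\hg^A_{t_1}\ge 0$ this gives $R_{t_1-1}=R_{t_1}+\hg^A_{t_1}+\hg^B_{t_1}\ge \MRtone$. In the case $t_1=n$ we get $R_{n-1}=\hg_n^\star=R^{(2)}_{n-1}\ge\MRtone$ directly.

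For $t_2$, Property~\ref{property:thresholds} gives $\MRttwo<\MRtone$. Therefore $R^{(2)}_{t_1-1}\ge\MRtone>\MRttwo$, which shows $t_2\ge t_1$, and maximality gives $R^{(2)}_{t_2}<\MRttwo$ when $t_2\ne n$. Two cases remain:
\begin{itemize}
\item If $t_2>t_1$, both $R_{t_2-1}$ and $R_{t_2}$ coincide with their $R^{(2)}$ counterparts, and the claims follow.
\item If $t_2=t_1$, we use $R_{t_1-1}\ge\MRtone>\MRttwo$ and $R_{t_1}=R^{(2)}_{t_1}<\MRttwo$.
\end{itemize}

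\textbf{Step 3: monotonicity.} It suffices to show every $\hg_i\ge 0$. For $i>t_1$, the quantity $\hg_i^{(2)}$ is a product of the nonnegative factors $1-\mu^D_i$ and $R^{(2)}_i+\Mtone$ divided by $n-i>0$; the case $i=t_3^\star=n$ gives $\hg^\star_n\ge 0$. This induction runs backwards from $R^{(2)}_{t_3^\star}=0$.

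For $i\le t_1$, the numerators $1-[\MRttwo-R_{f_i}]^+\ge 1-\MRttwo>0$ are positive. The denominators $n-i-\Mfrac z$ (and the corresponding one for $\hg^A_{t_1}$) need positivity. This is the main obstacle. I would bound $z\le\sum_{\ell>t_2}\mu^D_\ell\le n-t_2$ and use $\Mfrac\le 1$ from Property~\ref{property:coefficients}, which gives $n-i-\Mfrac z\ge t_2-i\ge t_1-i>0$ for $i<t_1$.

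For $\hg^A_{t_1}$, the denominator is $n-t_1+\alpha(1-\mu^D_{t_1})-\Mfrac z\ge t_2-t_1+\alpha(1-\mu^D_{t_1})$. If this vanishes, then either $\alpha=1$ or $\mu^D_{t_1}=1$, so the prefactor $(1-\alpha)(1-\mu^D_{t_1})$ also vanishes, and I would set $\hg^A_{t_1}=0$ by convention. We also need $\alpha\in[0,1]$, which follows from $R^{(2)}_{t_1}<\MRtone\le R^{(2)}_{t_1-1}$. Finally $R_0=1$ holds by the success of the procedure, and $R_j$ is non-increasing in $j$ since $R_{j-1}-R_j=\hg_j\ge 0$.
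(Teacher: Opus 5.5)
Your checks of $R_{t_3^\star}=0$, of the threshold inequalities, and of $R_0=1$ are correct. They are exactly the ``directly from the definitions'' argument the paper intends. Your positivity argument for $i<t_1$, via $z\le n-t_2$ and $\varepsilon/\tau\le 1$, is also the one used in Lemma~\ref{lem:nonneg}.

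The step that fails is your treatment of a possibly vanishing denominator of $\hat g^A_{t_1}$. The lower bound $t_2-t_1+\alpha(1-\mu^D_{t_1})$ vanishes only when $t_2=t_1$ and $\alpha(1-\mu^D_{t_1})=0$, i.e.\ when $\alpha=0$ or $\mu^D_{t_1}=1$, not when $\alpha=1$. In the branch $\alpha=0$, $\mu^D_{t_1}<1$, the prefactor $(1-\alpha)(1-\mu^D_{t_1})=1-\mu^D_{t_1}$ is positive. So you would face a genuine division by zero, not a $0/0$. Declaring $\hat g^A_{t_1}=0$ ``by convention'' would then change the procedure, and hence $R_{t_1-1}$ and $R_0$, rather than analyse it.

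The fix is to show that this case cannot occur, and you already wrote down the fact you need. For $t_1<n$, the inequalities $R^{(2)}_{t_1}<\bar\tau\le R^{(2)}_{t_1-1}$ give $\hat g^{(2)}_{t_1}>0$.
\begin{itemize}
\item Since $\hat g^{(2)}_{t_1}=(1-\mu^D_{t_1})\frac{R^{(2)}_{t_1}+\tau}{n-t_1}$, positivity forces $\mu^D_{t_1}<1$.
\item It also gives $\alpha=(\bar\tau-R^{(2)}_{t_1})/\hat g^{(2)}_{t_1}\in(0,1]$, so $\alpha$ is strictly positive, not merely in $[0,1]$.
\end{itemize}
Hence $\alpha(1-\mu^D_{t_1})>0$, and the denominator of $\hat g^A_{t_1}$ is strictly positive.

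A more direct route: $\varepsilon/\tau=1/(2+\varepsilon)<1$, so $\frac{\varepsilon}{\tau}z<n-t_1$ whenever $t_1<n$. If $z=0$ this is $0<n-t_1$; if $z>0$ then $\frac{\varepsilon}{\tau}z<z\le n-t_2\le n-t_1$. With either replacement no convention is needed, and the rest of your proof stands.
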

\begin{observation}
We have $\hg_{t_1} > 0$ and $\hg_{t_2} > 0$.
\end{observation}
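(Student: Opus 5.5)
The claim should follow directly from the definitions of $t_1$ and $t_2$ in Algorithm~\ref{alg:general-case-procedure-hgi}. Both are maximal indices at which a suffix sum of the auxiliary profile $\hg^{(2)}$ crosses a positive threshold, so the auxiliary entry at that index must carry positive mass. It then remains to check that the final entry $\hg_i$ is at least the relevant positive quantity. I use $\MRtone>0.5>0$ and $\MRttwo=\Mcoeffmone+\Mcoeffmone^2>0$ (since $\Mcoeff>1$), which hold by Property~\ref{property:thresholds}.

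\textbf{Step 1: $\hg_{t_1}>0$.} Since the procedure did not return $\perp$, $t_1$ exists and $R^{(2)}_{t_1-1}\ge\MRtone$. I split into two cases.
\begin{itemize}
\item If $t_1<n$, maximality of $t_1$ gives $R^{(2)}_{t_1}<\MRtone$. By Definition~\ref{def:t1-boundary-split}, $\hg^B_{t_1}=\MRtone-R^{(2)}_{t_1}>0$. Next I show $\hg^A_{t_1}\ge 0$. Its factors $(1-\alpha)$, $(1-\mu^D_{t_1})$ and $1-[\MRttwo-R_{f_{t_1}}]^+$ are nonnegative, using $0\le\alpha\le1$, $\mu^D\le 1$ and $\MRttwo<1$. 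Its denominator $n-t_1+\alpha(1-\mu^D_{t_1})-\Mfrac z$ is positive whenever $\hg^A_{t_1}\neq 0$, because the procedure is assumed to succeed. (If some care is needed here, I would bound $z\le n-t_2\le n-t_1$ and use $\Mfrac\le1$ from Property~\ref{property:coefficients}.) Hence $\hg_{t_1}=\hg^A_{t_1}+\hg^B_{t_1}\ge\hg^B_{t_1}>0$.
\item If $t_1=n$, then $\hg_n=\hg^B_n=\hg^{(2)}_n=R^{(2)}_{n-1}\ge\MRtone>0$.
\end{itemize}

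\textbf{Step 2: $\hg_{t_2}>0$.} Since $\MRttwo\le\MRtone$, every index $i$ with $R^{(2)}_{i-1}\ge\MRtone$ also satisfies $R^{(2)}_{i-1}\ge\MRttwo$, so $t_2$ exists and $t_2\ge t_1$.
\begin{itemize}
\item If $t_2=t_1$, Step 1 gives the claim.
\item If $t_2>t_1$, Algorithm~\ref{alg:general-case-procedure-hgi} sets $\hg_{t_2}=\hg^{(2)}_{t_2}=R^{(2)}_{t_2-1}-R^{(2)}_{t_2}$.
  \begin{itemize}
  \item If $t_2<n$, maximality gives $R^{(2)}_{t_2}<\MRttwo\le R^{(2)}_{t_2-1}$, so this difference is positive.
  \item If $t_2=n$, then $\hg_n=R^{(2)}_{n-1}\ge\MRttwo>0$.
  \end{itemize}
\end{itemize}

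The only mildly delicate point is the sign of the denominator in $\hg^A_{t_1}$. Even there, nonnegativity suffices, because positivity of $\hg_{t_1}$ already comes from $\hg^B_{t_1}$.
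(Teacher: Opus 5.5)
Your argument is correct and is essentially the paper's. The paper reads off $\hg_t=R_{t-1}-R_t>0$ from $R_{t_1-1}\ge\MRtone>R_{t_1}$ and $R_{t_2-1}\ge\MRttwo>R_{t_2}$ (Observation~\ref{obs:thresholds}) and checks $t=n$ directly. You unpack the same fact via $\hg_{t_1}=\hg^A_{t_1}+\hg^B_{t_1}$ with $\hg^B_{t_1}>0$ and $\hg^A_{t_1}\ge 0$, which is exactly what underlies the paper's bound $R_{t_1-1}\ge\MRtone$.

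The one soft spot is justifying the denominator of $\hg^A_{t_1}$ by ``the procedure succeeds''. Your parenthetical bound is the real argument and should replace it. Since $\Mfrac=1/(2+\Mcoeffmone)<1$, $z\le n-t_2\le n-t_1$ and $t_1<n$, it even gives a strictly positive denominator, at least $(n-t_1)(1-\Mfrac)>0$.
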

The former observation follows trivially from the definitions in Algorithm~\ref{alg:general-case-procedure-hgi} and, for the latter, the claims for $t_1 < n$ and $t_2 < n$ follow trivially from Observation~\ref{obs:thresholds} and the claim for $t_1 = n$ and the claim for $t_2 = n$ are trivially checked. With this, we are able to prove the lemmas:
\begin{proof}[Proof of Lemmas~\ref{lem:delta-existence} and \ref{lem:sigma-existence}]
For $\delta$, it suffices to prove that, for all $1 \leq j \leq t_1$,
\[\sum_{i=f_j+1}^{t_3^\star} \hg_i \geq \Mcoeffmone \cdot \left[1 -\frac{[\MRttwo-R_{f_j}]^+}{\MRttwo}\right]\,.\]
Note that the left-hand side is equal to $R_{f_j} - R_{t_3^\star} = R_{f_j}$.
Then we consider two cases.
First, if $R_{f_j} \geq \MRttwo$, we need to prove that $R_{f_j} \geq \Mcoeffmone$, which is trivially true since $\Mcoeffmone\leq \MRttwo$ from Property~\ref{property:thresholds}.
Second, if $R_{f_j} < \MRttwo$, we need to prove that $R_{f_j} \geq \Mcoeffmone \cdot \frac{R_{f_j}}{\MRttwo}$, which is again trivially true.
Therefore one can choose $\delta_{i,j}$ greedily to satisfy the desired properties.

For $\sigma$, it suffices to prove that, for all $1 \leq j \leq t_1$,
\[\sum_{i=j}^{f_j} \hg_i \geq \left[\MRttwo - R_{f_j}\right]^+\,.\]
Note that the left-hand side is equal to $R_{j - 1} - R_{f_j}$.
Then note that, because $j \leq t_1$, we have $R_{j - 1} \geq \MRtone$ from the observations, so $R_{j-1} - R_{f_j} \geq \MRtone - R_{f_j}\geq \MRttwo - R_{f_j}$, and the claim follows trivially as the left-hand side is nonnegative.
Therefore one can choose $\sigma_{i,j}$ greedily to satisfy the desired properties.
\end{proof}

We now prove that the preliminary assignment satisfies every constraint of the factor-revealing dual except possibly the diagonal convention $\bar s_{i,i}=0$. 

Non-negativity of the variables is satisfied in our algorithm, but we defer the proof in Appendix~\ref{sec:dual-deferred-proofs}.
\begin{restatable}{lemma}{lemnonneg}\label{lem:nonneg}
We have $\hg_i, g_{i,j}, t_{i,j}, \bar s_{i,j} \geq 0$ for all valid $i, j$.
\end{restatable}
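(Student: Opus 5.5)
The plan is to verify non-negativity variable by variable, in the order the quantities are constructed: first $\hg$ (and $R$), then $g$, then $t$, then $\bar s$. Throughout we use that all membership weights $\mu_j^A,\mu_j^B,\mu_j^C,\mu_j^D$ lie in $[0,1]$. Since each entry is a convex combination of the group-wise values as in \eqref{formula:fractional-assignments-convex-combination}, it suffices to show nonnegativity of each group-wise entry whenever the corresponding membership is positive.

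\textbf{The profile $\hg$.} For $i>t_1$ we have $\hg_i=\hg_i^{(2)}$, which is either $\hg_n^\star\in[0,1]$ or $(1-\mu_i^D)(R_i^{(2)}+\Mtone)/(n-i)$. This is nonnegative by backward induction, since $R_i^{(2)}\ge 0$, $\Mtone\ge 0$ (Property~\ref{property:thresholds}) and $n-i>0$. At $t_1$, we get $\hg^B_{t_1}=\MRtone-R^{(2)}_{t_1}>0$ because $R_{t_1}<\MRtone$ when $t_1<n$. For $\hg^A_{t_1}$ and for $\hg_i$ with $i<t_1$, the numerator $1-[\MRttwo-R_{f_i}]^+\ge 1-\MRttwo>0$, so the real issue is the sign of the denominator $n-i-\Mfrac z$ (respectively with the extra $\alpha(1-\mu^D_{t_1})$ term). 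Here I would bound $z$: each summand is at most $\mu^D_\ell\le 1$, and only indices $\ell>t_2$ contribute, so $z\le n-t_2$. Combined with $\Mfrac\le 1$ (Property~\ref{property:coefficients}) and $t_2\ge t_1>i$, this gives $n-i-\Mfrac z\ge t_2-i>0$. I also need $t_2\ge t_1$, which follows because $\MRttwo\le \MRtone$, so the maximizing index for the weaker threshold is at least as large. Finally, $0\le\alpha\le 1$ is the stated property.

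\textbf{The entries $g$ and $t$.} For groups $B$ and $C$, $g$ equals $\hg_i$, so it is nonnegative. For $A$, the bound $\delta_{i,j}\le \hg_i$ from Property~\ref{property:delta} gives $g^A\ge 0$ (using Lemma~\ref{lem:delta-existence}). For $D$, we have $\chi_j\in[0,1]$ and $\Mfrac\le 1$, so $\hg_i-\chi_j\Mfrac\hg_i\ge0$. At $i=t_1$ we use $\hg^A_{t_1}\le\hg_{t_1}$. For $t$: $t^A=0$ and $t^B$ is a positive part, so both are fine. For $t^D=1-R_{j-1}$ we use $R_{j-1}\le R_0=1$ and monotonicity of $R$ (Observation~\ref{obs:thresholds}). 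For $t^C=1-R_{j-1}-\Mtone$ with $j>t_2$, we get $R_{j-1}\le R_{t_2}<\MRttwo\le \Mtone$, hence $1-R_{j-1}-\Mtone\ge 1-2\Mtone>0$ since $\Mtone<1/2$. The edge case $t_2=n$ makes group $C$ empty.

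\textbf{The entries $\bar s$.} $\bar s^B,\bar s^C$ are $\hg_i$ or $\hg_{f_j+1}+\Mcoeffmone$, so they are nonnegative. $\bar s^D=0$, and $\bar s^A=\hg_i-\sigma_{i,j}\ge0$ by Property~\ref{property:sigma} (using Lemma~\ref{lem:sigma-existence}).

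The main obstacle I expect is the denominator bound. It relies on the exact range of indices contributing to $z$ and on the relation between $t_1$ and $t_2$, including the degenerate cases $t_1=n$ or $t_2=n$ and the extra $+\alpha(1-\mu^D_{t_1})$ term in $\hg^A_{t_1}$, which only helps. If $z\le n-t_2$ turns out to be too crude, I would instead use $\chi_\ell=[1-R_{\ell-1}/\MRttwo]^+$, which is $0$ for $\ell\le t_2$, to sharpen the count.
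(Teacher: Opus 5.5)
Your proposal is correct and takes essentially the same route as the paper. You check $\hg$ phase by phase, using $z\le n-t_2$ together with $\Mfrac\le 1$ for the Phase~1 and boundary denominators, and then check $g,t,\bar s$ group by group via Properties~\ref{property:delta} and~\ref{property:sigma}, with fractional clients handled by convexity. The only minor deviation is your bound for $t^C$: you use $j>t_2$ to get $1-2\Mtone>0$, whereas the paper uses $R_{j-1}\le\MRtone$ for $j>t_1$; both are valid.
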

To work with the rest of the constraints, we first present some intuition about the problem.
We look at the program above from Definition~\ref{def:frdual} as a process consisting of $n$ "clients" and a central pool, each having some credit that is generated and transferred around through some specific mechanism.
This credit evolves in this mechanism through $n$ iterations, where in each iteration each client has a constraint on the minimal amount of credit that they must have at that point.
With these constraints set, the goal is to minimize an objective function that depends on how the credit is generated and transferred.

Let us first define what we mean by credit for the clients and for the central pool:
\begin{restatable}{definition}{defcreditij}\label{def:credit-ij}
    The amount of credit a client $j$ has at iteration $i$ for $j \leq i$ is:
    \begin{equation*}
        \text{credit}_{i,j} = \sum_{j'=j+1}^n g_{j,j'} + \sum_{i' = j}^{i} \left(g_{i',j} + t_{i',j} - \bar s_{i',j}\right)\,.
    \end{equation*}
    Note that this is the left-hand side of the $(r_{j,i})$ constraint.
\end{restatable}
\begin{definition}\label{def:pool-credit}
    The amount of credit that the central pool has at iteration $i$ is:
    \begin{equation*}
        \text{pool\_credit}_i = \sum_{i'=1}^i \sum_{j=1}^{i'} \left(\bar s_{i',j} - t_{i',j}\right)\,.
    \end{equation*}
    Note that this is the left-hand side of the $(m_i)$ constraint.
\end{definition}  
Every time step, these quantities change by the following amount:
\begin{observation}
    The difference in credit between two different timesteps $i-1$ and $i$ for a client $j<i$ for $i \geq 2$ is:
    \begin{equation*}
        \Delta \text{credit}_{i,j} = \text{credit}_{i,j} - \text{credit}_{i-1,j} = g_{i,j} - \bar s_{i,j} + t_{i,j}\,.
    \end{equation*}
\end{observation}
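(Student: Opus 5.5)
The observation follows directly from Definition~\ref{def:credit-ij} by subtracting the two expressions. I will write them out for a client $j$ and consecutive iterations $i-1$ and $i$. Both are well defined exactly when $j\le i-1$, i.e.\ $j<i$; this is why the statement requires $j<i$, and $i\ge 2$ is then automatic.

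The first summand $\sum_{j'=j+1}^n g_{j,j'}$ depends only on $j$, not on $i$, so it cancels in the difference. For the second summand, note that
\[
\sum_{i'=j}^{i}\bigl(g_{i',j}+t_{i',j}-\bar s_{i',j}\bigr)
-\sum_{i'=j}^{i-1}\bigl(g_{i',j}+t_{i',j}-\bar s_{i',j}\bigr)
\]
is a telescoping difference of partial sums over the same summand. Only the term $i'=i$ survives, giving $g_{i,j}+t_{i,j}-\bar s_{i,j}$, which is the claimed expression.

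There is no real obstacle here. The only point to check is the index range. The lower summation limit $j$ is the same for both iterations, and the condition $j<i$ guarantees that $\text{credit}_{i-1,j}$ is defined. Hence the difference is a single term, and no boundary effect from the diagonal entry $\bar s_{j,j}$ appears. That entry is contained in both sums and cancels.
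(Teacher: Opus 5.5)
Your proof is correct and is the same immediate argument the paper relies on; it states this observation without proof, as a direct consequence of Definition~\ref{def:credit-ij}. The term $\sum_{j'=j+1}^n g_{j,j'}$ does not depend on $i$ and cancels, and the two partial sums differ only in the $i'=i$ summand. Your index remark is also right: $j<i$ is exactly what makes $\text{credit}_{i-1,j}$ defined, and $i\ge 2$ then follows automatically.
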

Similarly, we can make the following observation for $\text{pool\_credit}_i$:
\begin{observation}
    The difference in the central pool credit between timesteps $i-1$ and $i$ for $i \geq 1$ is:
    \begin{equation*}
        \Delta\text{pool\_credit}_i = \text{pool\_credit}_i - \text{pool\_credit}_{i-1} = \sum_{j=1}^{i} \left(\bar s_{i,j} - t_{i,j}\right)\,.
    \end{equation*}
\end{observation}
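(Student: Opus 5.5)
The plan is to compute the difference of the two partial sums in Definition~\ref{def:pool-credit} directly, by viewing both quantities as sums over the same triangular index region and isolating the terms that appear in one but not the other.

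\textbf{Step 1: the case $i \ge 2$.} By definition,
\[
\text{pool\_credit}_i=\sum_{i'=1}^{i}\sum_{j=1}^{i'}\left(\bar s_{i',j}-t_{i',j}\right), \qquad \text{pool\_credit}_{i-1}=\sum_{i'=1}^{i-1}\sum_{j=1}^{i'}\left(\bar s_{i',j}-t_{i',j}\right).
\]
The outer sums range over $i'\in\{1,\dots,i\}$ and $i'\in\{1,\dots,i-1\}$ respectively. The inner sum for a fixed $i'$ does not depend on $i$, since its range $1\le j\le i'$ involves only $i'$.

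Hence every summand with $i'\le i-1$ appears identically in both expressions and cancels in the difference. The only surviving term is the outer summand with $i'=i$, which is $\sum_{j=1}^{i}(\bar s_{i,j}-t_{i,j})$. This is exactly the claimed formula.

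\textbf{Step 2: the boundary case $i=1$.} Here the claim involves $\text{pool\_credit}_0$, which Definition~\ref{def:pool-credit} does not define explicitly. I would adopt the natural convention that $\text{pool\_credit}_0$ is the empty sum, equal to $0$; this is consistent with the defining formula taken at $i=0$. Then
\[
\Delta\text{pool\_credit}_1=\text{pool\_credit}_1=\bar s_{1,1}-t_{1,1},
\]
which again matches the formula.

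\textbf{Remarks.} No property of the dual variables is used: not nonnegativity, not the table assignments, and not the diagonal convention on $\bar s$. The identity is purely a telescoping of partial sums. The only point requiring any care is stating the $i=0$ convention explicitly; there is no genuine obstacle.
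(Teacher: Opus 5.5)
Your proposal is correct and matches the paper, which states this observation without proof as an immediate consequence of Definition~\ref{def:pool-credit}: the two partial sums share all outer terms with $i' \le i-1$, so only the $i'=i$ term survives. Your explicit convention $\text{pool\_credit}_0 = 0$ (the empty sum) is the one the paper tacitly uses to include the case $i=1$.
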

We interpret $\bar s_{i,j}$ as credit stored by client $j$ in the central pool at timestep $i$, and $t_{i,j}$ as credit taken by client $j$ from the central pool at timestep $i$. On the other hand, the variables $g_{i,j}$ correspond to the generated credit, the amount $\sum_{j'=j}^n g_{j,j'}$ being an initial amount given to $j$, and the terms $g_{i,j}$ for $j<i$ being the credit at timestep $i$ that goes to client $j$. All the generated credit comes just from the $g_{i,j}$ variables.

The procedure defined in Section~\ref{sec:general-case-procedure} is one way to assign values to the variables $g_{i,j}$, $\bar s_{i,j}$ and $t_{i,j}$ in a way that bounds the objective value to $\Mrho$. It works by separating clients into different groups $A$, $B$, $C$ and $D$ and treating each group differently. There are two exceptions, two clients that might be part of a group and part of another. We define the memberships through $\mu_j^X$ to define how much one belongs to group $X$. 
From Equation~\ref{eq:membership-assignment}, these groups are created based on the thresholds $t_1$, $t_2$ and $t_3^\star$ for which we can observe the following:
\begin{observation}\label{obs:mu-sum}
If $j < t_1$, then $\mu_j^A + \mu_j^D = 1$.
If $j = t_1$, then $\mu_j^A + \mu_j^B + \mu_j^D = 1$.
If $t_1 < j \leq t_2$, then $\mu_j^B + \mu_j^D = 1$.
If $t_2 < j \leq t_3^\star$, then $\mu_j^C + \mu_j^D = 1$.
If $j > t_3^\star$, then $\mu_j^D = 1$.
\end{observation}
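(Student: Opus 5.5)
The statement is Observation~\ref{obs:mu-sum}. It follows by unfolding the membership assignment \eqref{eq:membership-assignment} and checking, range by range of $j$, which of $\mu_j^A,\mu_j^B,\mu_j^C$ can be nonzero. Two facts are needed first: the order $t_1\le t_2\le t_3^\star$, so the ranges in the statement are disjoint and exhaust $\{1,\dots,n\}$, and $\mu_j^D=1$ for $j>t_3^\star$.

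\textbf{Ordering of thresholds.} The thresholds are defined from the suffix sums $R^{(2)}$:
\[
t_1=\max\{i : R^{(2)}_{i-1}\ge\MRtone\}, \qquad t_2=\max\{i : R^{(2)}_{i-1}\ge\MRttwo\}.
\]
Since $\hg^{(2)}_i\ge 0$, the sequence $R^{(2)}_i$ is non-increasing in $i$. By Property~\ref{property:thresholds}, $\MRttwo<\MRtone$. Hence the index $t_1$ also satisfies the condition defining $t_2$, which gives $t_1\le t_2$.

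For $i>t_3^\star$ we have $R^{(2)}_{i-1}\le R^{(2)}_{t_3^\star}=0<\MRttwo$. So $t_2\le t_3^\star$, and similarly $t_1\le t_3^\star$.

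\textbf{Clients beyond $t_3^\star$.} Let $j>t_3^\star$. Lower triangularity gives $f_j\ge j>t_3^\star$, so $\mu_j^D=1$. Every case of $\mu^A,\mu^B,\mu^C$ in \eqref{eq:membership-assignment} requires $j\le t_3^\star$, so all three vanish. This is the last claim.

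\textbf{Remaining ranges.} Each of the other ranges is read off directly from \eqref{eq:membership-assignment}:
\begin{itemize}
\item For $j<t_1$: $\mu_j^A=1-\mu_j^D$ and $\mu_j^B=\mu_j^C=0$. Here $\mu^C$ vanishes because $j<t_1\le t_2$ rules out $t_2<j$.
\item For $j=t_1$: $\mu_j^A+\mu_j^B=(1-\alpha)(1-\mu_{t_1}^D)+\alpha(1-\mu_{t_1}^D)=1-\mu_{t_1}^D$, and $\mu_j^C=0$ since $t_1\le t_2$.
\item For $t_1<j\le t_2$: $\mu_j^B=1-\mu_j^D$ and the other two vanish.
\item For $t_2<j\le t_3^\star$: $\mu_j^C=1-\mu_j^D$ and the other two vanish.
\end{itemize}

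\textbf{Where care is needed.} The only subtle point is that the piecewise cases of $\mu^B$ and $\mu^C$ never overlap, and this is exactly the ordering $t_1\le t_2\le t_3^\star$ established above. One degenerate case should be checked separately: $t_1=n$, where $\alpha=1$. Then the split at $t_1$ puts all of $1-\mu^D_{t_1}$ into $B$, and the sum $\mu^A_{t_1}+\mu^B_{t_1}+\mu^D_{t_1}=1$ is still consistent.
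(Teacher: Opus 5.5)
Your proof is correct and takes essentially the paper's route: the paper states this observation without proof, reading it directly off the membership assignment \eqref{eq:membership-assignment}. The ordering $t_1\le t_2\le t_3^\star$ and the fact $f_j\ge j$, which you establish explicitly, are exactly what that reading relies on, and the paper uses them elsewhere without spelling them out.
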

More specifically, for clients in group $D$, we have that:
\begin{observation}
If $j<n$, we have $\hg_j = 0$ if and only if $\mu_j^D = 1$.
\end{observation}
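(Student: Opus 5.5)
The plan is to prove both directions of the equivalence directly from the recursive definition in Algorithm~\ref{alg:general-case-procedure-hgi}, splitting according to whether $i>t_1$, $i=t_1$, or $i<t_1$. In every branch of the construction, $\hg_i$ is a product of a factor $(1-\mu_i^D)$, or $(1-\alpha)(1-\mu_{t_1}^D)$ together with $\alpha(1-\mu_{t_1}^D)$ at the split, and a remaining factor. The claim therefore reduces to showing that, for $i<n$, this remaining factor is strictly positive. The direction $\mu_i^D=1\Rightarrow \hg_i=0$ is immediate in every branch, since $i<n$ excludes the special case $i=t_3^\star=n$ in which $\hg_n^\star$ is prescribed by the advice.

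For the converse, assume $\mu_i^D<1$ and $i<n$. In the phase-2 region $i>t_1$ we have $\hg_i=(1-\mu_i^D)(R_i^{(2)}+\tau)/(n-i)$. This is positive because $R_i^{(2)}\ge 0$ by induction, since all earlier terms are nonnegative, because $\tau>0$ for $\gamma>1$, and because $n-i\ge1$.

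In the region $i<t_1$ the remaining factor is $(1-[\theta-R_{f_i}]^+)/(n-i-\frac{\varepsilon}{\tau}z)$. The numerator is at least $1-\theta>0$ by Property~\ref{property:thresholds}. For the denominator, note that $z\le \sum_{\ell>t_2}[1-R^{(2)}_{\ell-1}/\theta]^+\mu_\ell^D$ has at most $n-t_2$ nonzero terms, each at most $1$. Moreover, the terms with $\ell\le t_3^\star$ and $\ell< n$ are controlled because $R^{(2)}_{\ell-1}\ge \tau>\theta$ whenever $\ell\le t_3^\star$ and phase~2 produced a positive increment. Combining this with $\frac{\varepsilon}{\tau}\le 1$ (Property~\ref{property:coefficients}) and $t_2\ge t_1>i$, we get $n-i-\frac{\varepsilon}{\tau}z\ge n-i-(n-t_2)>0$.

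At $i=t_1<n$, the $B$-part $\hg_{t_1}^B=\bar\tau-R_{t_1}^{(2)}$ is strictly positive by the choice of $t_1$, since Observation~\ref{obs:thresholds} gives $R_{t_1}<\bar\tau$. This part alone gives $\hg_{t_1}>0$ whenever it is present. Note, however, that $\alpha$ already carries the factor $(1-\mu_{t_1}^D)$ through $\hg_{t_1}^{(2)}$, so if $\mu_{t_1}^D=1$ then $\hg^{(2)}_{t_1}=0$, which would contradict $R^{(2)}_{t_1-1}\ge\bar\tau>R^{(2)}_{t_1}$. Hence at $t_1$ we always have $\mu^D_{t_1}<1$ and $\hg_{t_1}>0$, which is consistent with the equivalence.

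The main obstacle is the positivity of the denominator $n-i-\frac{\varepsilon}{\tau}z$ for $i<t_1$, including the analogous denominator appearing in $\hg_{t_1}^A$. The plan is to bound $z\le n-t_2$ crudely and then use $\frac{\varepsilon}{\tau}\le1$ with $i<t_2$. If this fails when $t_2=t_1$, the fallback is to observe that $[1-R^{(2)}_{\ell-1}/\theta]^+=0$ for $\ell\le t_2$ and that the term $\ell=t_2+1$ still has $R^{(2)}_{t_2}$ near $\theta$. This gives a strict inequality $z<n-t_2$ unless all terms equal $1$, and that case forces $R^{(2)}_{t_2}=0$, which contradicts $R^{(2)}_{t_2}\ge\tau\,\hg$-mass from the positive $\hg^{(2)}_{t_2}$.
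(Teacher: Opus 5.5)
Your argument is correct and is essentially the paper's. The paper reads off from the assignment formulas that, for $j<n$, each $\hg_j$ is $(1-\mu_j^D)$ times a positive cofactor (with $\hg^B_{t_1}>0$ at the split), and that $\hg_j=0$ and $\mu_j^D=1$ for $j>t_3^\star$; you make the positivity of those cofactors explicit.

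The bound $z\le n-t_2$ with $i<t_1\le t_2$ already suffices, and $\frac{\varepsilon}{\tau}=\frac{1}{2+\varepsilon}<1$ handles the $\hg^A_{t_1}$ denominator. You should therefore drop two things. First, drop the side claim that $R^{(2)}_{\ell-1}\ge\tau$ after a positive increment, which is false: for instance $\ell=t_3^\star\le n-2$ with $\mu_\ell^D<1$ gives $R^{(2)}_{\ell-1}\le\tau/2$. Second, drop the garbled fallback paragraph, which is never needed.
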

\begin{proof}
For $j<t_3^\star$ and $j=t_3^*<n$, it follows by the assignment formulas, and for $j>t_3^\star$, all $\hg_j=0$ and all $j$ have $\mu_j^D=1$; proving the statement.
\end{proof}
This last proof is important because it implies that at such a step $j$, since $g_{j,k}\leq \hg_j$ by the $(\gamma_{i,j}\text{ and }\beta_{i,j})$ constraint, we have that no new credit is generated at that iteration. This constraint states that per each iteration $i$, there is a maximum amount of credit $\hg_i$ that every client can generate, and it is a constraint that our algorithm does indeed follow:
\begin{lemma}\label{lem:gamma-beta-constraints}
    The $(\gamma_{i,j} / \beta_{i,j})$ constraints are satisfied for all $1 \leq i,j \leq n$, i.e., $g_{i,j} \leq \hat{g}_i$ for all $i,j$.
\end{lemma}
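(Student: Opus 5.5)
The plan is to verify $g_{i,j}\le \hg_i$ separately for each group's contribution in Table~\ref{tab:general-case-procedure}, and then to pass to fractional memberships by convexity. By \eqref{formula:fractional-assignments-convex-combination}, $g_{i,j}$ is a convex combination of $g^A_{i,j},g^B_{i,j},g^C_{i,j},g^D_{i,j}$ with weights $\mu_j^A,\dots,\mu_j^D$ that sum to $1$ (Observation~\ref{obs:mu-sum}). It therefore suffices to show $g^X_{i,j}\le \hg_i$ for every group $X$ with $\mu_j^X>0$. Only the entries that actually appear in the dual matter, namely $g_{i,j}$ with $i\neq j$ in the credit expressions. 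The constraint must nevertheless hold for all $i,j$, so the argument below treats every $i$ uniformly.

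The groups are handled one at a time.
\begin{itemize}
\item For $B$ and $C$ we have $g^B_{i,j}=g^C_{i,j}=\hg_i$, so the bound holds with equality.
\item For $A$ we have $g^A_{i,j}=\hg_i-\delta_{i,j}$, and Property~\ref{property:delta} gives $\delta_{i,j}\ge 0$, hence $g^A_{i,j}\le\hg_i$. Lemma~\ref{lem:delta-existence} guarantees that such $\delta$ exists.
\item For $D$ with $i>t_1$ we have $g^D_{i,j}=\hg_i$. For $i<t_1$ we have $g^D_{i,j}=\hg_i(1-\chi_j\Mfrac)$, and for $i=t_1$ we have $g^D_{i,j}=\hg_{t_1}-\chi_j\Mfrac\hg^A_{t_1}$. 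Both are at most $\hg_i$ as soon as $\chi_j\ge0$, $\Mfrac\ge0$, $\hg_i\ge0$ and $\hg^A_{t_1}\ge 0$.
\end{itemize}
The inequality $\chi_j\ge 0$ holds by definition of the positive part, and $\Mfrac=\eps/\tau>0$ for $\gamma>1$. Nonnegativity of $\hg_i$ and $\hg^A_{t_1}$ is exactly what Lemma~\ref{lem:nonneg} supplies, so I would invoke it here (it is stated earlier, with proof deferred).

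The only point needing care is the boundary term $\hg^A_{t_1}$. For $t_1=n$ it is defined to be $0$, so nothing needs checking. For $t_1<n$ its sign depends on three factors:
\begin{itemize}
\item the numerator factor $1-[\MRttwo-R_{f_{t_1}}]^+$, which is positive because $\MRttwo<1$ by Property~\ref{property:thresholds};
\item the factor $(1-\alpha)(1-\mu^D_{t_1})$, which is nonnegative since $\alpha\in[0,1]$;
\item the denominator $n-t_1+\alpha(1-\mu^D_{t_1})-\Mfrac z$.
\end{itemize}
Positivity of the denominator is what I expect to be the main obstacle. To settle it, I would use $\Mfrac\le1$ (Property~\ref{property:coefficients}) together with the bound $z\le |\{\ell>t_2:\mu^D_\ell>0,\ R_{\ell-1}<\MRttwo\}|$, which holds because each summand of $z$ is at most $\mu^D_\ell\le1$. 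Since $t_1\le t_2$, this gives $z\le n-t_2\le n-t_1$, and the denominator is positive except in degenerate equality cases. If that becomes delicate, I would rely on Lemma~\ref{lem:nonneg} having already established $\hg^A_{t_1}\ge0$, and note that $\chi_j\Mfrac\hg^A_{t_1}\ge0$ is all that is needed.

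Combining these cases gives $g^X_{i,j}\le\hg_i$ for every group, and the convex combination then yields $g_{i,j}\le\hg_i$ for all $i,j$, which is the claimed $(\gamma_{i,j}/\beta_{i,j})$ constraint.
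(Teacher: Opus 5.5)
Your proposal is correct and matches the paper's proof: check $g^X_{i,j}\le\hg_i$ directly from each group's assignment in Table~\ref{tab:general-case-procedure}, then pass to fractional clients by convexity. You rightly make explicit that the $D$-column needs $\hg_i\ge0$ and $\hg^A_{t_1}\ge0$, which Lemma~\ref{lem:nonneg} supplies; the denominator of $\hg^A_{t_1}$ is in fact strictly positive, since $\Mfrac=1/(2+\eps)<1$, $z\le n-t_1$ and $t_1<n$.
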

\begin{proof}
    The claim follows trivially from the assignments of $g_{i,j}$ in Table~\ref{tab:general-case-procedure} for integral clients, and the rest is proven by convexity.
\end{proof}
This constraint is strong when combined with the $(f)$ one, which constrains the total maximum amount of generated credit, capping it at $1$ unit in total throughout all iterations; which we satisfy:
\begin{lemma}\label{lem:f-constraint}
    The $(f)$ constraint is satisfied, i.e., $\sum_{i=1}^n \hg_i \leq 1$. 
\end{lemma}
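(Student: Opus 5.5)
The plan is to read the claim off the way Algorithm~\ref{alg:general-case-procedure-hgi} terminates. We are in the setting of Lemma~\ref{lem:general-case-procedure}, so the advice is correct and the procedure did not return $\perp$. Its last check therefore passed, which gives $R_0 = 1$. By definition of the suffix sums, $R_0 = \sum_{\ell=1}^n \hg_\ell$. Hence $\sum_{i=1}^n \hg_i = 1 \le 1$, and the $(f)$ constraint holds, in fact with equality.

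The only thing to verify is that the $\hg_i$ used in the dual are exactly the quantities whose suffix sums the algorithm tracks. I will check this index range by index range.

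\textbf{For $i > t_1$.} The algorithm sets $\hg_i = \hg_i^{(2)}$ and $R_i = R_i^{(2)}$. The $R^{(2)}$ recursion $R^{(2)}_{i-1} = R^{(2)}_i + \hg^{(2)}_i$ then gives $R_{i-1} = R_i + \hg_i$ on this range. For $i > t_3^\star$ we have $\hg_i = 0$, so these indices contribute nothing.

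\textbf{For $i = t_1$.} Definition~\ref{def:t1-boundary-split} sets $\hg_{t_1} = \hg_{t_1}^A + \hg_{t_1}^B$ and $R_{t_1-1} = R_{t_1} + \hg_{t_1}$. This is consistent with the recursion.

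\textbf{For $i < t_1$.} The final loop sets $R_{i-1} \gets R_i + \hg_i$ explicitly.

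Telescoping from $R_n = 0$ gives $R_0 = \sum_{i=1}^n \hg_i$. The same chain shows $R_{t_3^\star} = 0$, consistent with Observation~\ref{obs:thresholds}.

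There is essentially no obstacle. The one point needing care is the boundary split at $t_1$: the value $R_{t_1}$ is inherited from the auxiliary profile, while $\hg_{t_1}$ is replaced. So one must confirm that $R_{t_1-1}$ is recomputed from the new $\hg_{t_1}$ rather than taken as $R^{(2)}_{t_1-1}$, and Definition~\ref{def:t1-boundary-split} does exactly this. Nonnegativity of the $\hg_i$, from Lemma~\ref{lem:nonneg}, is not needed for this inequality.
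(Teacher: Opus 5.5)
Your proposal is correct and follows the paper's proof, which simply notes that $\sum_{i=1}^n \hg_i = R_0 = 1$ because, with correct advice, the procedure does not return $\perp$. Your index-by-index check that the recursively computed $R_i$ coincide with the suffix sums, including at the $t_1$ boundary split, is a faithful elaboration of that same one-line argument.
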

\begin{proof}
The claim follows trivially from the fact that $\sum_{i=1}^n \hg_i = R_0 = 1$.
\end{proof}

For the credit in our procedure, we can exactly pinpoint the amount of credit that each client has at every point, which has a different behavior for the different groups.

\begin{restatable}{proposition}{obscreditijvalues-early}\label{prop:credit-ij-values-early}
    For $j \leq i < f_j$, the credit that client $j$ has at iteration $i$ is:
    \begin{equation*}
        \text{\emph{credit}}_{i,j} = 
        \ \begin{cases}
            1 - [\MRttwo - R_{f_j}]^+ + \sum_{i'=j}^i \sigma_{i',j}, 
            & \text{if $j \in A$}\,,\\
            \Mtone + \min\{R_{j-1},\MRtone\}, 
            & \text{if $j \in B$}\,,\\
            \Mtone + R_{j-1}, 
            & \text{if $j \in C$}\,,\\
            R_{j-1} - R_i,
            & \text{if $j \in D$}\,,\\
        \end{cases}
    \end{equation*}
    where the expression for a fractional split client $j$ is the corresponding convex combination of the expressions for the integral cases, i.e.,
    \begin{equation*}
        \text{\emph{credit}}_{i,j}
        = \mu_j^A \text{\emph{credit}}_{i,j}^A
        + \mu_j^B \text{\emph{credit}}_{i,j}^B
        + \mu_j^C \text{\emph{credit}}_{i,j}^C
        + \mu_j^D \text{\emph{credit}}_{i,j}^D\,.
    \end{equation*}
    where $\text{\emph{credit}}_{i,j}^A$, $\text{\emph{credit}}_{i,j}^B$, $\text{\emph{credit}}_{i,j}^C$, and $\text{\emph{credit}}_{i,j}^D$ are the expressions defined in the first four cases.
\end{restatable}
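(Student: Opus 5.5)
The plan is a direct computation of $\text{credit}_{i,j}$ from Definition~\ref{def:credit-ij}, done separately for each integral group. Because every assignment in Table~\ref{tab:general-case-procedure} is a convex combination with weights $\mu_j^X$, and $\text{credit}_{i,j}$ is linear in $(g,\bar s,t)$, the fractional case then follows immediately. For a fixed group I split the credit into two parts. The first is the initial endowment $\sum_{j'=j+1}^n g_{j,j'}$, which is the row $j$ of $g$ and is controlled by the constraint $g_{j,j'}\le\hg_j$ and by the recursive formula for $\hg_j$. The second is the running sum $\sum_{i'=j}^{i}(g_{i',j}+t_{i',j}-\bar s_{i',j})$ over the column of $j$. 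For $i<f_j$ no $t$-entry appears, since $t$ is only placed at row $f_j$, and for groups $B$ and $C$ the extra $\Mcoeffmone$ in $\bar s$ only appears at row $f_j+1$. So in the range $j\le i<f_j$ only the generic entries matter.

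\textbf{Column part.} For $B$ and $C$ we have $g_{i',j}=\hg_{i'}$ and $\bar s_{i',j}=\hg_{i'}$ for $i'>j$, so every term cancels except the diagonal one, $g_{j,j}-\bar s_{j,j}$. For $A$ we get $\sum_{i'}(\hg_{i'}-\delta_{i',j}-\hg_{i'}+\sigma_{i',j})$. Property~\ref{property:delta} gives $\delta_{i',j}=0$ for $i'\le f_j$, so this equals $\sum_{i'=j}^{i}\sigma_{i',j}$, which is the stated term. For $D$ we have $\bar s=0$ and $g_{i',j}=\hg_{i'}$, up to the $\chi_j$ correction at rows $\le t_1$. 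The column sum is therefore $\sum_{i'=j}^{i}\hg_{i'}=R_{j-1}-R_i$, less correction terms.

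\textbf{Row part.} This is where the specific formulas for $\hg_j$ enter. For a $D$-client with $j<n$ we have $\hg_j=0$, so every $g_{j,j'}$ vanishes and the row part is zero; that gives $R_{j-1}-R_i$. The $\chi_j$ correction must be checked to contribute nothing, or to be absorbed: $\chi_j=[1-R_{j-1}/\theta]^+$ is zero whenever $R_{j-1}\ge\MRttwo$, which holds for $j\le t_2$, and the corrections only touch rows $i\le t_1$. For the later $D$-clients the rows $\le t_1$ lie before $j$ in the column and so are excluded from the sum. For a $C$-client we have $\hg_j=(R_j+\Mtone)/(n-j)$. The row sum together with the diagonal $g_{j,j}$ consists of $n-j+1$ entries equal to $\hg_j$, which gives $(n-j)\hg_j+\hg_j$. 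Subtracting the diagonal $\bar s_{j,j}=\hg_j$ leaves $(n-j)\hg_j=R_j+\Mtone$. I need to reconcile this with $\Mtone+R_{j-1}$ by tracking exactly which diagonal entries are set; I expect the intended bookkeeping to be that $g_{j,j}$ is counted and $\bar s_{j,j}$ is not, which gives $R_j+\hg_j+\Mtone=R_{j-1}+\Mtone$. $B$-clients with $j>t_1$ work the same way, and since $R_{j-1}<\MRtone$ there, $\min\{R_{j-1},\MRtone\}=R_{j-1}$. The $B$-part of $t_1$ uses $\hg^B_{t_1}=\MRtone-R^{(2)}_{t_1}$, so the min is attained at $\MRtone$.

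\textbf{Group $A$ (main obstacle).} The row sum is $\sum_{j'}g_{j,j'}$ taken over a row whose entries differ by type: $A$-columns contribute $\hg_j-\delta_{j,j'}$, $D$-columns contribute $\hg_j-\chi_{j'}\Mfrac\hg_j$, and the rest contribute $\hg_j$. I must show this totals $1-[\MRttwo-R_{f_j}]^+$. The denominator $n-j-\Mfrac z$ in the definition of $\hg_j$ is designed for this, because $z=\sum_\ell\chi_\ell\mu^D_\ell$ over $\ell>t_2$ is exactly the total $D$-deduction divided by $\Mfrac\hg_j$. The $\delta_{j,j'}$ terms vanish since $j\le f_{j'}$ holds for $j'>j$ whenever $\delta$ is supported after $f_{j'}$; this needs checking. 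With those in place, $(n-j-\Mfrac z)\hg_j$ equals the numerator, and the fractional-split client $t_1$ is handled by the modified denominator with the $+\alpha(1-\mu^D_{t_1})$ term in Definition~\ref{def:t1-boundary-split}. Confirming that the $D$-deductions in row $j$ sum precisely to $\Mfrac z\hg_j$, and handling the diagonal entry consistently, is the step most likely to need care.
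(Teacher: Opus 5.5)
Your integral computations follow the paper's route: the row part $\sum_{j'>j}g_{j,j'}$ plus the column part. The diagonal question you leave open is settled by Table~\ref{tab:general-case-procedure}. There $\bar s^A_{i,j}$ is set for $i\ge j$, while $\bar s^B_{i,j}$ and $\bar s^C_{i,j}$ are set only for $i>j$. So for $B$ and $C$ the column part is exactly $g_{j,j}=\hat g_j$, and the total is $(R_j+\tau)+\hat g_j=\tau+R_{j-1}$.

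The genuine gap is the fractional case, which does not follow ``immediately'' from linearity of $\text{credit}_{i,j}$ in $(g,\bar s,t)$. Equation~\eqref{formula:fractional-assignments-convex-combination} makes only column $j$ a convex combination. The row part $\sum_{j'>j}g_{j,j'}$ lies in other columns and is governed by $\hat g_j$, which is defined by the profile construction, not by the table. For $j\neq t_1$ you must additionally use that $\hat g_j$ carries the factor $(1-\mu^D_j)$. Only then is the row part equal to $\mu^X_j$ times the integral-$X$ row value, with $0$ for the $D$-share.

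At the boundary client $j=t_1$ even this fails, and this is the step your proposal does not handle. Here $\sum_{j'>t_1}g_{t_1,j'}=(n-t_1-\tfrac{\varepsilon}{\tau}z)\hat g^A_{t_1}+(n-t_1)\hat g^B_{t_1}$. The $+\alpha(1-\mu^D_{t_1})$ term in the denominator of $\hat g^A_{t_1}$ does not make the first term equal the numerator, contrary to what you suggest. Instead it gives $(n-t_1-\tfrac{\varepsilon}{\tau}z)\hat g^A_{t_1}=\mu^A_{t_1}(1-[\theta-R_{f_{t_1}}]^+)-\mu^B_{t_1}\hat g^A_{t_1}$, and separately $(n-t_1)\hat g^B_{t_1}=\mu^B_{t_1}(R_{t_1}+\tau)$.

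The deficit $-\mu^B_{t_1}\hat g^A_{t_1}$ has to be cancelled against the $B$-share of the column. That share contains the full diagonal $\mu^B_{t_1}g^B_{t_1,t_1}=\mu^B_{t_1}(\hat g^A_{t_1}+\hat g^B_{t_1})$ and no diagonal $\bar s$. Only after this cancellation does the $B$-share become $\mu^B_{t_1}(R_{t_1}+\tau+\hat g^B_{t_1})=\mu^B_{t_1}(\bar\tau+\tau)=\mu^B_{t_1}$. This is exactly why the statement has $\min\{R_{j-1},\bar\tau\}$ rather than $R_{j-1}$.

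A naive linearity argument using integral row values would give the $B$-share as $\mu^B_{t_1}(\tau+R_{t_1-1})$, overcounting by $\mu^B_{t_1}\hat g^A_{t_1}$. You state the right target value but do not derive it.

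Conversely, the step you flag as the main obstacle is routine: the $D$-deductions in row $j$ sum to $\tfrac{\varepsilon}{\tau}z\hat g_j$ because $\chi_{j'}=0$ for $j'\le t_2$ and $R_i=R^{(2)}_i$ for $i\ge t_1$. The part that needs an actual argument is the explicit boundary computation, which the paper carries out in Lemma~\ref{lem:credit-fractional-before}.
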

\begin{proof}
The four integral cases are exactly Lemmas \ref{lem:credit-A-before}, \ref{lem:credit-B-before}, \ref{lem:credit-C-before}, and \ref{lem:credit-D-before}. The proof for the fractional cases being a convex combination is in Lemma \ref{lem:credit-fractional-before}. All these lemmas and their proofs are deferred to Appendix~\ref{sec:dual-deferred-proofs}.
\end{proof}

\begin{restatable}{proposition}{obscreditijvalues}\label{prop:credit-ij-values}
    For $i \geq f_j$, the credit that client $j$ has at iteration $i$ is:
    \begin{equation*}
        \text{\emph{credit}}_{i,j} = 
        \ \begin{cases}
            1 - \sum_{i'=f_j+1}^i \delta_{i',j},
            & \text{if $j \in A$}\,,\\
            1 - \ind[i > f_j] \cdot \Mcoeffmone,
            & \text{if $j \in B$}\,,\\
            1 - \ind[i > f_j] \cdot \Mcoeffmone,
            & \text{if $j \in C$}\,,\\
            1,
            & \text{if $j \in D$}\,,\\
        \end{cases}
    \end{equation*}
    where the expression for a fractional split client $j$ is the corresponding convex combination of the expressions for the integral cases, i.e.,
    \begin{equation*}
        \text{\emph{credit}}_{i,j}
        = \mu_j^A \text{\emph{credit}}_{i,j}^A
        + \mu_j^B \text{\emph{credit}}_{i,j}^B
        + \mu_j^C \text{\emph{credit}}_{i,j}^C
        + \mu_j^D \text{\emph{credit}}_{i,j}^D\,.
    \end{equation*}
    where $\text{\emph{credit}}_{i,j}^A$, $\text{\emph{credit}}_{i,j}^B$, $\text{\emph{credit}}_{i,j}^C$, and $\text{\emph{credit}}_{i,j}^D$ are the expressions defined in the first four cases.
\end{restatable}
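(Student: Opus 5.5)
We prove the proposition by a direct telescoping computation, treating each group separately and then the fractional clients. By Definition~\ref{def:credit-ij}, for $i \ge f_j$ we can write
\[
\text{credit}_{i,j} = \text{credit}_{f_j-1,j} + \sum_{i'=f_j}^{i}\bigl(g_{i',j}+t_{i',j}-\bar s_{i',j}\bigr),
\]
where $\text{credit}_{f_j-1,j}$ is given by Proposition~\ref{prop:credit-ij-values-early} when $f_j>j$. When $f_j=j$ we instead start from the initial endowment $\sum_{j'>j} g_{j,j'}$ together with the diagonal terms. The key point is that every column assignment in Table~\ref{tab:general-case-procedure} is affine in the group-specific values, and the credit is linear in $(g,t,\bar s)$. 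So once the four integral formulas are established, the statement for fractional clients follows immediately from \eqref{formula:fractional-assignments-convex-combination}, since $\mu^A_j+\mu^B_j+\mu^C_j+\mu^D_j=1$ and all the initial endowments $\sum_{j'} g_{j,j'}$ are shared across groups.

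\emph{Group D.} Here $\bar s=0$ and $t_{f_j,j}=1-R_{j-1}$. Before $f_j$ the credit is $R_{j-1}-R_{f_j-1}$. At step $f_j$ it gains $g_{f_j,j}+1-R_{j-1}$. For $i>f_j$ it gains $g_{i,j}$, but since $f_j\ge t_3^\star$ we have $\hat g_i=0$ for all $i>t_3^\star$, so these gains vanish (up to the boundary client $f_j=t_3^\star$). Summing gives $R_{j-1}-R_{f_j}+1-R_{j-1}+R_{f_j}-R_i=1$. The $\chi_j$ corrections in $g^D$ only affect indices $i\le t_1<f_j$, so they are already absorbed in the early credit.

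\emph{Groups B and C.} At step $f_j$ the client gains $\hat g_{f_j}$ from $g$, loses $\hat g_{f_j}$ through $\bar s$ (when $f_j>j$), and gains $t_{f_j,j}$. Starting from $\tau+\min\{R_{j-1},\bar\tau\}$ (respectively $\tau+R_{j-1}$), adding $t^B=[1-R_{j-1}-\tau]^+$ (respectively $t^C=1-R_{j-1}-\tau$) yields exactly $1$. This uses $R_{j-1}\le\bar\tau$ exactly when the positive part is active. At step $f_j+1$ the extra withdrawal $\varepsilon$ cancels against $g=\hat g$, giving $1-\varepsilon$. After that, $g$ and $\bar s$ cancel step by step. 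The case $f_j=j$ needs the diagonal $\bar s_{j,j}$ convention, which should be consistent because $\bar s_{i,j}=\hat g_i$ is only for $i>j$.

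\emph{Group A.} Here $t=0$, $g_{i,j}=\hat g_i-\delta_{i,j}$ and $\bar s_{i,j}=\hat g_i-\sigma_{i,j}$, so the net change per step is $\sigma_{i,j}-\delta_{i,j}$. Using Property~\ref{property:delta} ($\delta=0$ for $i\le f_j$) and Property~\ref{property:sigma} ($\sigma=0$ for $i>f_j$, with $\sum_{i=j}^{f_j}\sigma_{i,j}=[\theta-R_{f_j}]^+$), we get
\[
\text{credit}_{f_j,j}=1-[\theta-R_{f_j}]^++[\theta-R_{f_j}]^+=1,
\]
and afterwards the credit is $1-\sum_{i'=f_j+1}^{i}\delta_{i',j}$.

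The main obstacle is bookkeeping at the boundaries. These are the diagonal term $i'=j=f_j$ (where the early-credit proposition does not apply and the initial endowment must be computed directly), the client $t_1$ with its modified $g^D$ at $i=t_1$, and verifying that $\hat g_i=0$ for $i>t_3^\star$ kills the remaining $g$ terms for group $D$. I would handle $f_j=j$ first by computing $\sum_{j'>j}g_{j,j'}$ explicitly, and then check that it reproduces the value the early formula would give at $i=j-1$.
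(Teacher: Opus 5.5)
Your integral-case computations are sound. Telescoping from $\text{credit}_{f_j-1,j}$ given by Proposition~\ref{prop:credit-ij-values-early}, with a direct endowment computation when $f_j=j$, is a legitimate alternative to the paper's direct summation in Lemmas~\ref{lem:credit-A-after}--\ref{lem:credit-D-after}.

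\textbf{The gap is the fractional case.} You say it follows ``immediately'' from \eqref{formula:fractional-assignments-convex-combination}. That formula only says that \emph{column} $j$ of $(g,\bar s,t)$ is a convex combination over $j$'s memberships. The endowment $\sum_{j'>j}g_{j,j'}$ lives in \emph{row} $j$: it is a single number proportional to $\hat g_j$.

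The four integral formulas, by contrast, rest on four different endowments: $1-[\theta-R_{f_j}]^+$ for $A$, $R_j+\tau$ for $B$ and $C$, and $0$ for $D$. So you must prove that the actual endowment equals $\sum_X\mu_j^X E^X$. It is not ``shared across groups'' in any sense that makes this automatic.

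For $j\neq t_1$ the identity does hold, but only because $\hat g_j$ is proportional to $1-\mu_j^D$. For instance, $(n-j-\frac{\varepsilon}{\tau}z)\hat g_j=\mu_j^A(1-[\theta-R_{f_j}]^+)$ for $j<t_1$, and $(n-j)\hat g_j=(1-\mu_j^D)(R_j+\tau)$ for $t_1<j\le t_3^\star$. This has to be checked, not assumed.

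At $j=t_1<n$ the identity fails. Definition~\ref{def:t1-boundary-split} and the formula for $\hat g^{(2)}_{t_1}$ give $(n-t_1)\hat g^B_{t_1}=\mu^B_{t_1}(R_{t_1}+\tau)$. The $\alpha(1-\mu^D_{t_1})$ term in the denominator of $\hat g^A_{t_1}$ gives $(n-t_1-\frac{\varepsilon}{\tau}z)\hat g^A_{t_1}=\mu^A_{t_1}(1-[\theta-R_{f_{t_1}}]^+)-\mu^B_{t_1}\hat g^A_{t_1}$. So the endowment carries a stray $-\mu^B_{t_1}\hat g^A_{t_1}$.

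That term cancels only because of two facts together:
\begin{itemize}
\item the $B$-part of the column sum, $\sum_{i'=t_1}^{i}\hat g_{i'}=R_{t_1-1}-R_i$, also contains $\hat g^A_{t_1}$;
\item $R_{t_1}+\tau+\hat g^B_{t_1}=1$.
\end{itemize}
This cross-term bookkeeping is the real content of Lemma~\ref{lem:credit-fractional-after}, and your argument does not contain it.

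\textbf{Partial repair within your framework.} If $f_j>j$, start from the \emph{fractional} statement of Proposition~\ref{prop:credit-ij-values-early}; Lemma~\ref{lem:credit-fractional-before} has already absorbed the $t_1$ cross term there. The increments $g_{i',j}+t_{i',j}-\bar s_{i',j}$ for $i'\ge f_j$ are genuine column-$j$ entries, hence convex combinations. Each group's early value plus its own increment then gives its late value.

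A fractional client with $f_j=j$, however, has no early credit to start from. An example is $t_1$ split between $A$ and $B$ with $f_{t_1}=t_1$. There you must carry out the endowment computation above explicitly, cross term included.

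A smaller point: for $B$ and $C$ with $f_j=j$, the endowment $R_j+\tau$ is not the early formula evaluated at $i=j-1$, which would give $\tau+R_{j-1}$. The diagonal term $g_{j,j}=\hat g_j$ supplies the difference. So the consistency check in your last paragraph should compare ``endowment plus diagonal step'' with the early value at $i=j$, not the endowment alone.
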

\begin{proof}
The four integral cases are exactly Lemmas \ref{lem:credit-A-after}, \ref{lem:credit-B-after}, \ref{lem:credit-C-after}, and \ref{lem:credit-D-after}. The proof for the fractional cases being a convex combination is in Lemma \ref{lem:credit-fractional-after}. All these lemmas and their proofs are deferred to Appendix~\ref{sec:dual-deferred-proofs}.
\end{proof}
From where we can observe that:
\begin{restatable}{observation}{obscreditnonnegative}\label{obs:credit-non-negative}
    The credit that each client $j$ has at iteration $i$ (as defined in Definition~\ref{def:credit-ij}) is always non-negative.
\end{restatable}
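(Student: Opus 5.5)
The observation follows directly from the closed-form credit expressions in Proposition~\ref{prop:credit-ij-values-early} (for $j\le i<f_j$) and Proposition~\ref{prop:credit-ij-values} (for $i\ge f_j$). A convex combination of nonnegative quantities is nonnegative, and fractional clients have credit equal to the $\mu$-weighted convex combination of the integral expressions. So it suffices to check that each of the integral expressions for groups $A,B,C,D$ is nonnegative. This must hold whenever client $j$ has positive membership in that group, since only those expressions enter the combination.

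\textbf{Case $i<f_j$.}
\begin{itemize}
\item Group $D$: the credit is $R_{j-1}-R_i$. Since $i\ge j>j-1$ and $R$ is non-increasing by Observation~\ref{obs:thresholds}, this is nonnegative.
\item Groups $B$ and $C$: the credits are $\tau+\min\{R_{j-1},\bar\tau\}$ and $\tau+R_{j-1}$. Both are nonnegative because $\tau\ge 0$ by Property~\ref{property:thresholds} and $R_{j-1}\ge 0$, since $R$ is a suffix sum of the nonnegative $\hg_\ell$ (Lemma~\ref{lem:nonneg}).
\item Group $A$: the credit is $1-[\theta-R_{f_j}]^+ + \sum_{i'}\sigma_{i',j}$. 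Here $\sigma\ge 0$ by Property~\ref{property:sigma}, and $[\theta-R_{f_j}]^+\le\theta<1$ by Property~\ref{property:thresholds}. Hence this is at least $1-\theta>0$.
\end{itemize}

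\textbf{Case $i\ge f_j$.}
\begin{itemize}
\item Group $D$: the credit is $1$.
\item Groups $B$ and $C$: the credit is at least $1-\varepsilon=2-\gamma>0$, since $\gamma\le 1.224$.
\item Group $A$: the credit is $1-\sum_{i'=f_j+1}^{i}\delta_{i',j}$. Because $\delta\ge0$, the partial sum is at most the full sum, which by Property~\ref{property:delta} equals $\varepsilon\bigl[1-[\theta-R_{f_j}]^+/\theta\bigr]\le\varepsilon$. So the credit is at least $1-\varepsilon>0$.
\end{itemize}

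The only step that needs any care is making sure the bounds used hold under the actual assignments. Concretely, we need $R_i\ge0$, the monotonicity of $R$, and that Properties~\ref{property:delta} and~\ref{property:sigma} are actually realized; these are guaranteed by Lemmas~\ref{lem:nonneg}, \ref{lem:delta-existence}, \ref{lem:sigma-existence} and Observation~\ref{obs:thresholds}. Everything else is direct inspection.
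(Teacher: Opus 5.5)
Your proof is correct and follows the paper's route. The paper likewise reads non-negativity off the closed-form credits in Propositions~\ref{prop:credit-ij-values-early} and~\ref{prop:credit-ij-values}, checking each integral group expression (the ``$\ge 0$'' claims in Lemmas~\ref{lem:credit-A-before}--\ref{lem:credit-D-before}) and handling fractional clients by convexity.
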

\begin{proof}
For $i \geq f_j$, the claim follows from the proof of Proposition~\ref{prop:credit-ij-values}.
For $i < f_j$, the proof follows by Proposition~\ref{prop:credit-ij-values-early}.
\end{proof}
With these observations, we can now check that the requirements on the clients' credits are satisfied, i.e., we can check the $(r_{j,i})$ constraints.
\begin{lemma}\label{lem:rij-constraints}
    The $(r_{j,i})$ constraints are satisfied for all $1 \leq j \le i \leq n$, i.e. 
    \begin{equation*}
        \text{\emph{credit}}_{i,j} \ge b_{i,j} - \Mcoeffmone \cdot \sum_{p=j}^{i-1} b_{p,j}\,.
    \end{equation*}
\end{lemma}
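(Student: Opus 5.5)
The plan is to reduce the $(r_{j,i})$ constraint to the simplified right-hand side from Observation~\ref{obs:rhs-rij-upperbound}. It then suffices to show $\text{credit}_{i,j}\ge 0$ for $i<f_j$, $\text{credit}_{f_j,j}\ge 1$, and $\text{credit}_{i,j}\ge 1-\Mcoeffmone = 2-\gamma$ for $i>f_j$. The credits themselves come from Propositions~\ref{prop:credit-ij-values-early} and~\ref{prop:credit-ij-values}. Both express a fractionally split client's credit as the $\mu$-weighted convex combination of the integral expressions, and the target bound depends only on $(i,j)$, not on the group. So it is enough to verify each bound group by group for $A,B,C,D$ and then average with the weights $\mu_j^X$.

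\textbf{Case $i<f_j$.} The required bound $\text{credit}_{i,j}\ge 0$ is exactly Observation~\ref{obs:credit-non-negative}. To make this explicit, I would check the four expressions of Proposition~\ref{prop:credit-ij-values-early}:
\begin{itemize}
\item For $A$, use $[\MRttwo-R_{f_j}]^+\le \MRttwo<1$ (Property~\ref{property:thresholds}) and $\sigma\ge 0$.
\item For $B$ and $C$, the expression is $\Mtone$ plus a nonnegative suffix sum.
\item For $D$, use $R_{j-1}\ge R_i$, which holds by monotonicity of $R$ (Observation~\ref{obs:thresholds}).
\end{itemize}

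\textbf{Case $i=f_j$.} Proposition~\ref{prop:credit-ij-values} gives credit exactly $1$ in every group: the $A$-sum over $i'\in[f_j+1,f_j]$ is empty, and the indicator vanishes for $B$ and $C$. This matches $b_{f_j,j}=1$.

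\textbf{Case $i>f_j$.} Groups $B$ and $C$ give exactly $1-\Mcoeffmone$, and group $D$ gives $1\ge 1-\Mcoeffmone$. For group $A$, the credit is $1-\sum_{i'=f_j+1}^{i}\delta_{i',j}$. By Property~\ref{property:delta} the $\delta$'s are nonnegative, vanish for $i'\le f_j$, and sum over $[f_j+1,t_3^\star]$ to $\Mcoeffmone\bigl[1-[\MRttwo-R_{f_j}]^+/\MRttwo\bigr]\le \Mcoeffmone$. We also need $\delta_{i',j}=0$ for $i'>t_3^\star$, which follows from $0\le\delta_{i',j}\le \hg_{i'}=0$ there. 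Hence the partial sum is at most $\Mcoeffmone$, and the credit is at least $1-\Mcoeffmone$. Lemmas~\ref{lem:delta-existence} and~\ref{lem:sigma-existence} guarantee that the $\delta,\sigma$ used here exist.

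Finally, I would combine the cases by convexity. Observation~\ref{obs:mu-sum} ensures that the weights $\mu_j^X$ of every client sum to $1$, so averaging the group-wise bounds yields the bound for split clients (the client $t_1$ and clients in $\ones{t_3^\star}$).

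The substantive work is hidden in the credit propositions, which are assumed here. Within this lemma the main subtle point is the $A$-case for $i>f_j$: we need the partial sums of $\delta$ bounded by $\Mcoeffmone$ for every $i$, not just at the endpoint $t_3^\star$. This holds because the $\delta$'s are nonnegative.
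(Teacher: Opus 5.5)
Your proposal is correct and follows the paper's proof essentially step by step. Nonnegativity of credit (Observation~\ref{obs:credit-non-negative}) handles $i<f_j$; Proposition~\ref{prop:credit-ij-values} gives credit exactly $1$ at $i=f_j$ and at least $1-\Mcoeffmone$ for $i>f_j$ group by group, using Property~\ref{property:delta} for group $A$; split clients then follow by convexity. Your observation that $\delta_{i',j}=0$ for $i'>t_3^\star$ (since $\hg_{i'}=0$ there), together with nonnegativity of the $\delta$'s bounding every partial sum, makes explicit a step the paper leaves implicit.
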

\begin{proof}
    From Observation~\ref{obs:credit-non-negative}, we can see that if the right hand side is $0$ (which happens when $i<f_j$), then the constraint is satisfied. Thus, we only need to check the case $i \geq f_j$. We will split the cases into the different groups $A$, $B$, $C$, and $D$ of clients using Proposition \ref{prop:credit-ij-values}. 
    
    We will have to analyze two cases, when $i=f_j$ so that the right hand side is $1$ (since $b_{f_j,j} = 1$ and $b_{p,j} = 0$ for $p < f_j$), and when $i > f_j$ so that the right hand side is at most $1-\Mcoeffmone$ (since $b_{f_j,j} = 1$ and $b_{i,j} \leq 1$).

    For an integral client $j\in A$, for $i=f_j$, the left-hand side of the lemma's inequality is $1$ from Proposition \ref{prop:credit-ij-values}, since $\sum_{i'=f_j+1}^i \delta_{i',j} = 0$ as $i=f_j$. The right hand side of the lemma's inequality is also $1$ as stated in the first paragraph, which satisfies the constraint. For $i > f_j$, we have that $\sum_{i'=f_j+1}^i \delta_{i',j} \leq \Mcoeffmone \cdot \left[1 -\frac{[\MRttwo-R_{f_j}]^+}{\MRttwo}\right]$ by Property \ref{property:delta}, which itself is at most $\Mcoeffmone$, so that the left-hand side is at least $1 - \Mcoeffmone$. On the other hand, the right hand side is at most $1 - \Mcoeffmone$ and thus the constraint is satisfied.

    For $j\in B$ or $j\in C$, for $i=f_j$, the left-hand side of the lemma's inequality is $1$ from Proposition \ref{prop:credit-ij-values}, since $\ind[i > f_j] = 0$ as $i=f_j$. The right hand side of the lemma's inequality is also $1$ as stated in the first paragraph, which satisfies the constraint. For $i > f_j$, we have that $\ind[i > f_j] = 1$, so that the left-hand side is at least $1 - \Mcoeffmone$. On the other hand, the right hand side is at most $1 - \Mcoeffmone$ and thus the constraint is satisfied.

    For $j\in D$, for both $i=f_j$ and $i > f_j$, the left-hand side of the lemma's inequality is $1$ from Proposition \ref{prop:credit-ij-values}, thus the constraint is satisfied as it is bigger than the right hand side which is at most $1$ at any point. 

    Finally, for fractionally assigned clients, the left-hand side of the lemma's inequality is a convex combination of the left-hand sides for the integral cases, and the right hand side is the same in all cases, so that the constraint is satisfied as well.
\end{proof}

For the rest of the proof, we will also be interested in how much credit has been generated by client $j$ during all timesteps $i$, whose behavior also differs depending on the membership of $j$. Do note that, by the mechanism of the dual, not all this credit is kept by the client: for iterations $i<j$ this credit goes to the clients $k<j$ and after that, it can be that the client sends it to the central pool.
\begin{restatable}{lemma}{obssumofgeneratedcreditcolumn}\label{obs:sum-of-generated-credit-column}
    Let
    \begin{equation*}
        \sum_{i=1}^n g_{i,j} = 
        \ \begin{cases}
            1 - \Mcoeffmone \cdot \left(1 - \frac{[\MRttwo - R_{f_j}]^+}{\MRttwo}\right),
            & \text{if $j \in A$}\,,\\
            1,
            & \text{if $j \in B$}\,,\\
            1,
            & \text{if $j \in C$}\,,\\
            1 - \Mcoeffmone \cdot \left[1 - \frac{R_{j-1}}{\MRttwo}\right]^+.
            & \text{if $j \in D$}\,,\\
        \end{cases}
    \end{equation*}
    where the expression for a fractional split client $j$ is the corresponding convex combination of the expressions for the integral cases, i.e.,
    \begin{equation*}
        \sum_{i=1}^n g_{i,j} = \mu_j^A G_j^A + \mu_j^B G_j^B + \mu_j^C G_j^C + \mu_j^D G_j^D\,.
    \end{equation*}
    with $G^A$, $G^B$, $G^C$, and $G^D$ being the expressions defined in the first four cases for $\sum_{i=1}^n g_{i,j}$.
\end{restatable}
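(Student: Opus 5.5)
The plan is to compute the column sum $\sum_{i=1}^n g_{i,j}$ separately for each integral group, using only the column-wise assignments of Table~\ref{tab:general-case-procedure}. The fractional case then follows at once. By \eqref{formula:fractional-assignments-convex-combination}, $g_{i,j}=\sum_X \mu_j^X g^X_{i,j}$, and the memberships are fixed constants independent of $i$. Summing over $i$ therefore commutes with the convex combination, and we get exactly $\sum_X \mu_j^X G_j^X$. It remains to identify each $G_j^X$ with the claimed expression.

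For groups $B$ and $C$ we have $g_{i,j}=\hg_i$ for all $i$, so the sum is $\sum_i \hg_i = R_0 = 1$. This uses that the advice is correct, so $R_0=1$ (Lemma~\ref{lem:f-constraint}). For group $A$, $g_{i,j}=\hg_i-\delta_{i,j}$. Summing gives $1-\sum_i \delta_{i,j}$. By Property~\ref{property:delta}, $\delta_{i,j}=0$ for $i\le f_j$. Moreover $\delta_{i,j}\le \hg_i=0$ for $i>t_3^\star$, since $R_{t_3^\star}=0$ by Observation~\ref{obs:thresholds}. Hence $\sum_i\delta_{i,j}=\sum_{i=f_j+1}^{t_3^\star}\delta_{i,j}$, which equals $\Mcoeffmone\bigl(1-[\MRttwo-R_{f_j}]^+/\MRttwo\bigr)$ by the same property. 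This gives the stated $G^A$.

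The group $D$ computation is the only real step. From the table,
\[
\sum_i g^D_{i,j} = 1 - \chi_j\,\Mfrac\,\Bigl(\sum_{i<t_1}\hg_i + \hg^A_{t_1}\Bigr).
\]
I would show that the bracket equals exactly $\Mtone$. Indeed, $\sum_{i<t_1}\hg_i+\hg_{t_1}^A = R_0 - R_{t_1} - \hg^B_{t_1}$. If $t_1<n$, Definition~\ref{def:t1-boundary-split} gives $\hg^B_{t_1}=\MRtone-R^{(2)}_{t_1}$ and $R_{t_1}=R^{(2)}_{t_1}$. The bracket is then $1-\MRtone=\Mtone$, so the correction term becomes $\chi_j\,\Mcoeffmone$. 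Recalling $\chi_j=[1-R_{j-1}/\MRttwo]^+$ (the shorthand $\theta$ in the definition of $\chi_j$ is $\MRttwo$), this is the claimed $G^D$.

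The degenerate case $t_1=n$ needs a separate check. There $\hg^A_n=0$ and $\hg^B_n=\hg_n$, so the bracket is $1-\hg_n = R_{n-1}$. This need not equal $\Mtone$, so I would verify that $\chi_j$ vanishes for $D$-clients in this regime. Since $t_1=n$ forces $R_{n-1}\ge\MRtone>\MRttwo$, and $R_{j-1}\ge R_{n-1}$ for all $j\le n$, we get $\chi_j=0$ and the formula holds trivially.

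The main obstacle is this bookkeeping at the boundary index $t_1$: checking that the fractional split makes the bracket exactly $\Mtone$, and handling $t_1=n$ via $\chi_j=0$. The other groups are immediate.
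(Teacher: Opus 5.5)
Your proposal is correct and follows the paper's proof essentially step for step. You compute per-group column sums ($R_0=1$ for $B$ and $C$, Property~\ref{property:delta} for $A$), use linearity for fractional clients, and for $D$ use the boundary-split identity $\sum_{i<t_1}\hat g_i+\hat g^A_{t_1}=1-\bar\tau=\tau$, with $\chi_j=0$ covering the remaining case. The paper splits on $j\le t_1$ versus $j>t_1$ rather than on $t_1=n$ versus $t_1<n$, which is cosmetic. There is one harmless slip: when $t_1=n$ the bracket is $1-\hat g_n=1-R_{n-1}$, not $R_{n-1}$, but you never use its value.
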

\begin{proof}
We analyze each specific case in Lemmas \ref{lem:gen-A}, \ref{lem:gen-B}, \ref{lem:gen-C}, and \ref{lem:gen-D}, and the fractional case is then argued in Lemma \ref{lem:gen-fractional}. All these lemmas and their proofs are deferred to Appendix~\ref{sec:dual-deferred-proofs}.
\end{proof}

Using these observations, we can already prove the objective constraint. 
\begin{lemma}\label{lem:d-i-constraints}
    The $(d_{j})$ constraints are satisfied for all $1 \leq j \leq n$, i.e.:
    \begin{equation*}
        \Mcoeff\cdot \sum_{i=1}^n g_{i,j} + \sum_{i=j}^n (\bar s_{i,j} + t_{i,j}) \leq \rho = \Mrho\,.
    \end{equation*}
\end{lemma}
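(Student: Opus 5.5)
The approach is to verify the $(d_j)$ constraint separately for each integral group $A,B,C,D$ and then extend to split clients by convexity. By \eqref{formula:fractional-assignments-convex-combination}, the left-hand side $\Mcoeff\sum_i g_{i,j}+\sum_{i\ge j}(\bar s_{i,j}+t_{i,j})$ is exactly $\sum_X \mu_j^X(\cdot)^X$, a convex combination of the per-group expressions. The right-hand side $\Mrho$ does not depend on the group, so it suffices to bound each integral case by $\Mrho$. The term $\Mcoeff\sum_i g_{i,j}$ comes directly from Lemma~\ref{obs:sum-of-generated-credit-column}. The sums of $\bar s$ and $t$ are read off Table~\ref{tab:general-case-procedure} using $\sum_{i\ge p}\hg_i=R_{p-1}$. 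Throughout I use Property~\ref{property:coefficients}, in particular $\Mcoeff\Mcoeffmone=\MRttwo$ and $\Mcoeff+1-\MRttwo=\Mrho$. The later diagonal shift of $\bar s$ to $s$ moves mass within a column, so it does not change $\sum_{i\ge j}\bar s_{i,j}$ and can be ignored here.

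\emph{Groups $B$ and $C$.} Here $\sum_i g_{i,j}=1$. Also $\sum_{i\ge j}\bar s_{i,j}=\sum_{i>j}\hg_i+\Mcoeffmone=R_j+\Mcoeffmone$, and $t$ contributes $1-R_{j-1}-\Mtone$ (for $B$ this uses $t_1<j$, hence $R_{j-1}<\MRtone$ by Observation~\ref{obs:thresholds}, so the positive part is inactive). The total is therefore
\[
\Mcoeff+\Mcoeffmone+1-\Mtone-\hg_j\le \Mcoeff+1-\Mcoeffmone-\Mcoeffmone^2=\Mcoeff+1-\MRttwo=\Mrho .
\]

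\emph{Group $D$.} Here $\bar s=0$ and $t$ contributes $1-R_{j-1}$. If $R_{j-1}\ge\MRttwo$, the total is $\Mcoeff+1-R_{j-1}\le\Mrho$. Otherwise the total is
\[
\Mcoeff-\Mcoeff\Mcoeffmone+\Mcoeff\Mcoeffmone R_{j-1}/\MRttwo+1-R_{j-1}=\Mcoeff+1-\MRttwo ,
\]
since $\Mcoeff\Mcoeffmone=\MRttwo$.

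\emph{Group $A$.} Here $t=0$, and $\sum_{i\ge j}\bar s_{i,j}=R_{j-1}-\sum_i\sigma_{i,j}=R_{j-1}-x$ with $x=[\MRttwo-R_{f_j}]^+$ by Property~\ref{property:sigma}. Lemma~\ref{obs:sum-of-generated-credit-column} gives $\Mcoeff\sum_i g_{i,j}=\Mcoeff-\MRttwo+x$. The total is therefore $\Mcoeff-\MRttwo+R_{j-1}\le\Mrho$, because $R_{j-1}\le R_0=1$.

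The main obstacle I expect is bookkeeping rather than inequalities: making sure that the table entries really give the stated column sums in edge cases. The cases to check are the extra $\Mcoeffmone$ entry at $f_j+1$ when $f_j=n$, the split client $t_1$ whose $A$-part uses $\hg^A_{t_1}$, and the $D$-formula at $i=t_1$. For each such edge case I would check that the column total is no larger than in the generic computation, so that every bound above still holds as an inequality.
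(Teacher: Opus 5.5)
Your proof is correct and follows the paper's route: the paper also bounds each integral group separately, combining the generation lemma with the column sums of $\bar s,t$ from Table~\ref{tab:general-case-procedure} (Lemmas~\ref{lem:budget-A}--\ref{lem:budget-D}, using $\gamma\varepsilon=\theta$ and $\gamma+1-\theta=2-\varepsilon^2$), and it handles split clients by the same convexity argument (Lemma~\ref{lem:budget-fractional}). The only difference is bookkeeping in group $B$: the paper bounds the $t$-term by $\bar\tau-R_j$ instead of evaluating it as $1-R_{j-1}-\tau$, and this bound also covers the $B$-part of the split client $t_1$, where $[1-R_{t_1-1}-\tau]^+=0$ and your generic formula does not apply, so there you need $R_{t_1}\le\bar\tau$ instead.
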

\begin{proof}
    The proof follows by Lemmas \ref{lem:budget-A}, \ref{lem:budget-B}, \ref{lem:budget-C} and \ref{lem:budget-D} for the cases that the client $j$ belongs integrally to $A$, $B$, $C$ or $D$ respectively. For a fractionally assigned client $j$, the proof follows from Lemma \ref{lem:budget-fractional}.
\end{proof}

So, the preliminary assignment now satisfies the nonnegativity, generation, client-credit, and objective constraints. It remains to prove that the pool credit is nonnegative at every iteration, which is the lengthiest part of the proof, and then to enforce the zero-diagonal convention.

To do so, it will also be interesting to define a specific credit profile $\hg_i$ that is generated in the second phase for clients in $B$ and $C$, which we will call $\hg'_i$. This will be very useful with the analysis of the $(m_i)$ constraint.
\begin{definition}
We define $\hg'_i = \hg_i$ if $i \neq t_1$ and $\hg'_{t_1} = \hg_{t_1}^B$ (except in the special case $t_1 = n$, where we define $\hg'_{n} = \hat{g}_n^\star$). We also define $R'_i = \sum_{i'=i+1}^n \hg'_{i'}$.
\end{definition}

Although Proposition~\ref{prop:credit-ij-values} gives exact formulas for client credit, doing the same for the pool is a lot more involved. However, we can still analyze the behavior of the credit of the central pool through the iterations, which is what is used to prove the next lemma. Its proof is the most involved in this section.
\begin{lemma}\label{lem:mi-constraints}
    The $(m_{i})$ constraints are satisfied for all $1 \leq i \leq n$:
    \begin{equation*}
        \text{\emph{pool\_credit}}_i \geq 0\,.
    \end{equation*}
\end{lemma}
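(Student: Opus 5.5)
We prove $\text{pool\_credit}_i \ge 0$ by tracking the pool as a running balance: deposits $\bar s_{i,j}$ minus withdrawals $t_{i,j}$. By Definition~\ref{def:pool-credit}, $\text{pool\_credit}_i$ is the cumulative sum of $\Delta\text{pool\_credit}_{i'} = \sum_{j\le i'} (\bar s_{i',j}-t_{i',j})$. The plan is therefore a cumulative (not per-step) inequality: total deposits up to $i$ dominate total withdrawals up to $i$. Withdrawals happen only at the single time $f_j$ for each client $j$, by Table~\ref{tab:general-case-procedure}. The amounts are $t^B_{f_j,j}=[1-R_{j-1}-\tau]^+$, $t^C_{f_j,j}=1-R_{j-1}-\tau$ and $t^D_{f_j,j}=1-R_{j-1}$; these are convexly combined for split clients, and group $A$ withdraws nothing. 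Deposits come in two kinds. First, each client in $A$, $B$ or $C$ deposits a steady stream: roughly $\hg_i$ per step for $i>j$ (minus $\sigma_{i,j}$ for $A$). Second, clients in $B$ and $C$ deposit a one-time bonus $\varepsilon$ at $f_j+1$.

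\textbf{Step 1 (withdrawal sum).} Fix $i$. Write $W_i=\sum_{j:\,f_j\le i} t_{f_j,j}$ and group the terms by $f_j$. Since $R_{j-1}\ge R_{i}$ whenever $j\le f_j\le i$, each $D$-withdrawal is at most $1-R_{j-1}$. The point of $\hg$ being built backwards with $\hg_j^{(2)}=(1-\mu_j^D)\frac{R_j+\tau}{n-j}$ (and the analogous formula below $t_1$) is exactly to make deposits pay for these withdrawals. Concretely, I would show by downward induction that the recursion implies $(n-j)\hg_j = (1-\mu^D_j)(R_j+\tau)$, i.e.\ the stream deposited by non-$D$ clients at step $j$ telescopes against $1-R_{j-1}$.

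\textbf{Step 2 (deposit sum).} Compute $S_i=\sum_{i'\le i}\sum_{j\le i'}\bar s_{i',j}$. For each non-$D$ client $j<i$ the stream contributes $R'_{j}-R'_{i}$, with $\hg'$ replacing $\hg$ to handle the boundary split at $t_1$, and less $\sum\sigma_{\cdot,j}$ for $A$-clients, using Property~\ref{property:sigma}. Clients with $f_j<i$ add the bonuses $\varepsilon$.

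\textbf{Step 3 (case analysis).} I would then split on the position of $i$ relative to $t_1,t_2,t_3^\star$. For $i<t_1$ only $A$-clients and $D$-clients can have withdrawn. $A$ takes nothing, and each $D$-client's withdrawal $1-R_{j-1}\le 1-\bar\tau$ is covered by the streams of the earlier $A$-clients. That covering works because their number and the normalization $\hg_i\approx \frac{1-[\theta-R_{f_i}]^+}{n-i-\frac{\varepsilon}{\tau}z}$ were chosen for it; the correction $\frac{\varepsilon}{\tau}z$ exactly absorbs the $\chi_j$ reductions in $g^D$. For $t_1\le i\le t_2$, $B$-withdrawals are zero or small, since $R_{j-1}\ge \theta$ there. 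In the $C$ range, withdrawals $1-R_{j-1}-\tau$ are offset by the $\varepsilon$ bonuses together with the streams, and this is where Property~\ref{property:function-tau} should enter: it compares $\frac{1-\tau-x}{\tau+x}$ against $[1-x/\theta]^+$ with $x=R_{j-1}$, bounding the ratio of required withdrawal to available per-step deposit.

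\textbf{Main obstacle.} The hardest part is the $C$/$D$ interaction near and after $t_2$. There $D$-clients withdraw the large amount $1-R_{j-1}$ while generating no stream, and the available deposits depend on $z$ and on the fractional memberships $\mu^\star$. I would first try to prove a clean invariant of the form
\[
\text{pool\_credit}_i \ \ge\ \sum_{j\le i,\ j\notin D} (1-\mu_j^D)\bigl(R_j' - R_i'\bigr) \;-\; \sum_{j\le i,\ f_j\le i}\mu_j^D(1-R_{j-1}) \;\ge\; 0,
\]
verifying it by induction on $i$ using the recursion for $\hg$. If the invariant fails at the $t_1$ split, I would fall back to treating client $t_1$ separately via $\alpha$ and Definition~\ref{def:t1-boundary-split}.
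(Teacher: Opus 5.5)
Your plan misplaces both when withdrawals happen and what pays for them, and the invariant you propose is false. Since $\mu_j^D>0$ forces $f_j\ge t_3^\star$ and $t^A\equiv 0$, no $D$-client withdraws before time $t_3^\star$. For $i<t_1$ the pool has therefore only received deposits, and there are no $D$-withdrawals for the $A$-streams to cover.

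For $t_1<i<t_3^\star$ the only withdrawals are the $B/C$ ones, and they are not small. Client $p>t_1$ withdraws $(\mu_p^B+\mu_p^C)(\bar\tau-R_{p-1})$. Even for $p\le t_2$, where $R_{p-1}\ge\theta$, this can be nearly $\bar\tau-\theta=1-3\varepsilon-2\varepsilon^2$, which is close to $1$ when $\gamma$ is near $1$. Moreover, a client with $f_p=p$ withdraws before any bonus has been paid.

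The paper discards the bonuses in this range altogether. It writes $\bar\tau-R_{p-1}=\sum_{j=t_1}^{p-1}\hat g'_j$ and exchanges the order of summation. It then charges to each earlier $B/C$ client $j$ the share $\hat g'_j$ of every later withdrawal, paid out of $j$'s own stream $R_j-R_i$. The resulting inequality $R_i+\tau\le(R_j+\tau)\bigl(n-j-\sum_{p=j+1}^{i}(\mu_p^B+\mu_p^C)\bigr)/(n-j)$ follows from the exact product identity in Claim~\ref{claim:phase-2-ri-product}. Property~\ref{property:function-tau} plays no role there.

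Your invariant does the reverse: it keeps all streams on the right-hand side and drops the $B/C$ withdrawals. Its first inequality therefore demands that the $\varepsilon$-bonuses, plus the bounded surplus of the $A$-clients, pay for all $B/C$ withdrawals. This fails for large $n$, e.g.\ when $f_p=p$ for all $p$. Then $R_p$ decreases linearly from about $\bar\tau$ to $0$ across $(t_1,t_3^\star]$, so $\Theta(n)$ clients withdraw on average about $\bar\tau/2$, which exceeds $\varepsilon$ for every $\gamma\le 1.224$. Relatedly, $(n-j)\hat g_j=(1-\mu_j^D)(R_j+\tau)$ holds only for $t_1<j<n$. That quantity is client $j$'s initial credit $\sum_{q>j}g_{j,q}$, not a pool deposit.

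For $i\ge t_3^\star$, which you rightly flag as the crux, the proposal has no working mechanism. The missing idea is global rather than running accounting. The paper adds $\sum_{j\le i}\text{credit}_{i,j}$, known exactly from Propositions~\ref{prop:credit-ij-values-early} and~\ref{prop:credit-ij-values}, to $\text{pool\_credit}_i$. All $\bar s$- and $t$-terms then cancel, and Lemma~\ref{obs:sum-of-generated-credit-column} turns the constraint into a comparison of total generated credit.

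At $i=t_3^\star$ this reduces to $\sum_{j\le t_3^\star}(\mu_j^B+\mu_j^C)\ind[f_j<t_3^\star]\ge\sum_j\mu_j^D[1-R_{j-1}/\theta]^+\ind[f_j=t_3^\star]$. This is Claim~\ref{clm:b-size2}, after using the admissibility of the frontier input (Property~\ref{property:admissible-frontier-input}) to cut both sides at the last index $p^\star$ with $\mu_{p^\star}^D>0$. For $i>t_3^\star$ the pool credit is non-increasing, so $i=n$ suffices, and the constraint reduces to Claim~\ref{clm:b-size}.

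Both claims apply Claim~\ref{claim:phase-2-ri-product-prime} from $j=t_1-1$, where $R'_{t_1-1}=\bar\tau$. Only here does Property~\ref{property:function-tau} enter, in the form $\frac{R'_{t_1-1}-R'_{\ell-1}}{R'_{\ell-1}+\tau}\ge[1-R'_{\ell-1}/\theta]^+$. In this accounting the $\varepsilon$-bonuses offset the reduced generation $\varepsilon\chi_j$ of the $D$-clients; they do not pay for $B/C$ withdrawals.
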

\begin{proof}    
    For a general case, we can rewrite $\text{pool\_credit}_{i}$ as
    \begin{equation}\label{eq:mi-rewrite}
        \begin{aligned}
            \text{pool\_credit}_{i} &= 
            \sum_{p=1}^i \sum_{j=1}^p \bar s_{p,j} - \sum_{p=1}^i \sum_{j=1}^p t_{p,j}
            = \sum_{j=1}^i \sum_{p=j}^i \bar s_{p,j} - \sum_{j=1}^i t_{f_j, j} \cdot \ind [f_j \leq i]
        \end{aligned}
    \end{equation}
    using the fact that $t_{p,j} \neq 0$ only for $p = f_j$. We will split up our analysis depending on where this $i$ is related to the threshold values.

    \smallskip
    \noindent\textbf{Case 1: $i<t_1$.} For $i < t_1$, since we have $\mu_j^A + \mu_j^D = 1$ by Observation \ref{obs:mu-sum} we can substitute the value of $t_{f_j,j}$: 
    \begin{equation*}
        t_{f_j, j} \cdot \ind [f_j \leq i] = \mu_j^D \cdot (1-R_{j-1}) \cdot \ind [f_j \leq i] = 0
    \end{equation*}
    because $\mu_j^D>0$ only if $f_j \geq t_3^\star \geq t_1 > i$, so in that case $\ind [f_j \leq i] =0$. So, the resulting inequality only has the term $\sum_{j=1}^i \sum_{p=j}^i \bar s_{p,j}$ which is non-negative, and thus the $(m_i)$ constraint for $i<t_1$ is satisfied.

    \smallskip
    \noindent\textbf{Case 2: $i=t_1$.} For $i = t_1$, we have the same except for $j=t_1$ for which we get that $t_{f_{t_1},t_1} = \mu_{t_1}^B [1-R_{t_1-1}-\Mtone]^+ + \mu_{t_1}^D (1-R_{t_1-1})$. For such an index we have $R_{t_1-1}\geq \MRtone$ by Observation \ref{obs:thresholds}, so its contribution to $t$ from $B$ is still $0$. Thus, we still have for all $j\leq t_1$: $t_{f_{j}, j} \cdot \ind[f_j \leq i] = \mu_j^D \cdot (1 - R_{j - 1}) \cdot \ind[f_j \leq i]$. 
    Such a term can now be non-zero, but only if $f_j \geq t_3^\star \geq t_1 = i$, which allows $\mu_j^D > 0$.
    But then $\ind[f_j \leq i] \neq 0$ only if $t_1 = t_2 = t_3^\star$, and we will postpone the analysis for such a case when analyzing directly $t_3^\star$ below in case 4.

    \smallskip
    \noindent\textbf{Case 3: $t_1<i<t_3^\star$.} For $t_1 < i < t_3^\star$, starting from Equation~\ref{eq:mi-rewrite}, we first separate the sums into two parts, one for $p < t_1$ and another for $p \geq t_1$:
    \begin{align*}
        \sum_{j=1}^i \sum_{p=j}^i \bar s_{p,j} - \sum_{j=1}^i t_{f_j, j} \cdot \ind [f_j \leq i] 
        &= \left(\sum_{j=1}^{t_1 - 1} \sum_{p=j}^i \bar s_{p,j} - \sum_{j=1}^{t_1 - 1} t_{f_j, j} \cdot \ind [f_j \leq i]\right) + \sum_{j=t_1}^i \sum_{p=j}^i \bar s_{p,j} - \sum_{j=t_1}^i t_{f_j, j} \cdot \ind [f_j \leq i] \\
        &\geq\sum_{j=t_1}^i \sum_{p=j}^i \bar s_{p,j} - \sum_{j=t_1}^i t_{f_j, j} \cdot \ind [f_j \leq i]\,.
    \end{align*}
    We can drop the first term because for $j < t_1$ we have $t_{f_j, j} \cdot \ind[f_j \leq i] = \mu_j^D \cdot (1 - R_{j - 1}) \cdot \ind[f_j \leq i] = 0$, which is because $\mu_j^D = 0$ unless $f_j \geq t_3^\star$ but $i < t_3^\star$ so $\ind[f_j \leq i] = 0$.
    
    Next, we substitute the values of $\bar s_{i,j}$ and $t_{i,j}$. We first lower-bound the $\bar s$-term. For $t_1\leq j<t_3^\star$ and $p>j$ we have $\bar s_{p,j} \geq (\mu_j^B + \mu_j^C) \cdot (\hg_p + \Mcoeffmone\cdot \ind[p=f_j+1])$. Thus, we have:
    \begin{align*}
        \sum_{j=t_1}^i \sum_{p=j}^i \bar s_{p,j} 
        &\geq \sum_{j=t_1}^{i-1} \sum_{p=j+1}^{i} (\mu_j^B + \mu_j^C) \cdot \left(\hg_{p} + \Mcoeffmone\cdot \ind[p=f_j+1]\right)  \\
        &\geq \sum_{j=t_1}^{i-1} (\mu_j^B + \mu_j^C) \cdot \sum_{p=j+1}^{i} \hg_{p}
        = \sum_{j=t_1}^{i-1} (\mu_j^B + \mu_j^C) \cdot (R_j-R_i)\,.
    \end{align*}
    Then, we upper-bound the $t$-term. We first observe that $t_{f_{t_1}, t_1} \cdot \ind[f_{t_1} \leq i] = (1-R_{t_1-1}) \cdot \mu_{t_1}^D \cdot \ind[f_{t_1} \leq i] = 0$ because $\mu_{t_1}^B [1-R_{t_1-1}-\Mtone]^+ =0$ (as seen in the $i=t_1$ part of the proof) and $\mu_{t_1}^D \cdot \ind[f_{t_1} \leq i]=0$ because $t_1<t_3^\star$. Then, we have:
    \begin{align*}
        \sum_{j=t_1}^i t_{f_j, j} \cdot \ind [f_j \leq i]
        &= \sum_{j=t_1+1}^i \left((\mu_j^B+\mu_j^C)\cdot \left(1-\sum_{p=j}^{n} \hg_{p}-\Mtone\right) + \mu_j^D \cdot (1-R_{j-1})\right) \cdot \ind [f_j \leq i] \\
        &= \sum_{j=t_1+1}^i \left(1- \sum_{p=j}^{n} \hg_{p}-\Mtone \right)\cdot(\mu_j^B+\mu_j^C)\cdot \ind [f_j \leq i]\\
        &\leq \sum_{j=t_1+1}^i \left(1- \sum_{p=j}^{n} \hg_{p}-\Mtone \right)\cdot(\mu_j^B+\mu_j^C) \\
        &= \sum_{j=t_1+1}^i \sum_{p=t_1}^{j-1} \hg'_p \cdot(\mu_j^B+\mu_j^C)
        = \sum_{p=t_1}^{i-1} \hg'_p \sum_{j=p+1}^{i} (\mu_j^B+\mu_j^C)\,,
    \end{align*}
    where we used that $\mu_j^D\cdot \ind[f_j \leq i]=0$ for $i<t_3^\star$, that $1- \sum_{p=j}^{n} \hg_{p}-\Mtone = \MRtone - R_{j-1} > 0$ for $j \geq t_1 + 1$, and that $R'_{t_1 - 1} = R_{t_1} + \hg_{t_1}^B = \MRtone$.

    Thus, the following inequality is sufficient to satisfy the $(m_i)$ constraint in these ranges:
    \begin{equation*}
        \sum_{j=t_1}^{i-1} \left((\mu_j^B + \mu_j^C) \cdot (R_j-R_i) - \hg'_j \sum_{p=j+1}^{i} (\mu_p^B+\mu_p^C) \right) \geq 0\,.
    \end{equation*}
    To prove the inequality, we prove that each $j$-th term is indeed non-negative. Rearranging, this is equivalent to proving:
    \begin{equation}\label{eq:mi-phase2-inequality}
        (R_i+\Mtone) \leq (R_j+\Mtone) - \hg'_j \cdot \frac{\sum_{p=j+1}^{i} (\mu_p^B + \mu_p^C)}{\mu_j^B+\mu_j^C}\,.
    \end{equation}
    Note that $\hg'_{t_1} = (\mu_{t_1}^B + \mu_{t_1}^C) \cdot \frac{R_{t_1}+\Mtone}{n-t_1}$.
    We divided by $\mu_j^B + \mu_j^C$ because we have assumed that this value is greater than $0$. If it is in fact $0$, one can directly check that the inequality (without dividing by $\mu_j^B+\mu_j^C$) is satisfied because $\hg'_j \propto \mu_j^B + \mu_j^C$ for $j \geq t_1$ by the definition of the algorithm.

    Substituting the value of $\hg'_j = (\mu_j^B+\mu_j^C)\cdot \frac{R_j+\Mtone}{n-j}$, and rearranging we get that we want to prove:
    \begin{equation*}
        (R_i+\Mtone) \leq (R_j+\Mtone) \cdot \frac{n-j - \sum_{p=j+1}^{i} (\mu_p^B+\mu_p^C)}{n-j}\,.
    \end{equation*}

    Note that $i < t_3^\star$, so in particular $i < n$.
    We will now use the following claim:
    \begin{restatable}{claim}{claimphasetworiproduct}\label{claim:phase-2-ri-product}
        For $t_1\leq j\leq i \leq \min(t_3^\star, n - 1)$ we have the following equality that relates $R_j$ to $R_i$:
        \begin{equation*}
            R_j+\Mtone = (R_i + \Mtone) \cdot \frac{n-j - \sum_{\ell=j+1}^i \mu_\ell^D \cdot \frac{R_j+\Mtone}{R_{\ell-1}+\Mtone}}{n-i}\,.
        \end{equation*}
    \end{restatable}
    The proof can be found in Section~\ref{sec:dual-deferred-proofs}.
    Using Claim~\ref{claim:phase-2-ri-product}, we have that the following is true:
    \begin{equation*}
        (R_j + \Mtone)\cdot(n-i) = (R_i+\Mtone) \cdot (n-j) - \sum_{\ell=j+1}^i \mu_\ell^D \cdot \frac{(R_j+\Mtone)\cdot (R_i+\Mtone)}{R_{\ell-1}+\Mtone}\,.
    \end{equation*}
    Rearranging gives:
    \begin{align*}
        R_i+\Mtone 
        &= (R_j + \Mtone)\cdot\frac{n-i+\sum_{\ell=j+1}^i \mu_\ell^D \cdot \frac{R_i+\Mtone}{R_{\ell-1}+\Mtone}}{n-j} 
        \leq (R_j + \Mtone)\cdot\frac{n-i+\sum_{\ell=j+1}^i \mu_\ell^D}{n-j}\\
        &= (R_j + \Mtone)\cdot\frac{n-j-\sum_{\ell=j+1}^i (1-\mu_\ell^D)}{n-j} 
        = (R_j + \Mtone)\cdot\frac{n-j-\sum_{\ell=j+1}^i (\mu_\ell^B+\mu_\ell^C)}{n-j}\,,
    \end{align*}
    where we have used that $R_i\leq R_{\ell-1}$ if $\ell\leq i$, and that $1-\mu_\ell^D = \mu_\ell^B + \mu_\ell^C$ by Observation \ref{obs:mu-sum}. This is exactly the inequality we wanted to prove,
    which proves that the $(m_i)$ constraint is satisfied for $t_1 < i < t_3^\star$.
    
    \smallskip
    \noindent\textbf{Case 4: $i=t_3^\star$.} For $i=t_3^\star$, we want to prove that:
    \begin{equation*}
        \text{pool\_credit}_{t_3^\star} = \sum_{p=1}^{t_3^\star} \sum_{j=1}^p \left(\bar s_{p,j} - t_{p,j}\right) \geq 0\,.
    \end{equation*}
    Let us now add, for all $j\leq t_3^\star$, $\text{credit}_{t_3^\star,j}$ which we know the exact value of thanks to Proposition~\ref{prop:credit-ij-values-early} and Proposition~\ref{prop:credit-ij-values}. Also note that if $f_j > t_3^\star$ then $\mu_j^D = 1$. 
    Using the propositions, then we have:
    \begin{align*}
        \sum_{j=1}^{t_3^\star} \text{credit}_{t_3^\star,j} &= \sum_{j=1}^{t_3^\star}\Paren{ \sum_{q=j+1}^{n} g_{j,q} + \sum_{p=j}^{t_3^\star} \left(g_{p,j} - \bar s_{p,j} + t_{p,j}\right)} \\
        &= \sum_{j=1}^{t_3^\star} \text{credit}_{t_3^\star,j} \cdot \ind[f_j \leq t_3^\star] + \sum_{j=1}^{t_3^\star} \text{credit}_{t_3^\star,j} \cdot \ind[f_j > t_3^\star]\\
        &= \sum_{j=1}^{t_3^\star} \Paren{1 - \Mcoeffmone\cdot \left[1-\frac{[\MRttwo-R_{f_j}]^+}{\MRttwo}\right] \cdot \mu_j^A - \Mcoeffmone\cdot \ind[f_j < t_3^\star] \cdot ( \mu_j^B +  \mu_j^C)} \cdot \ind[f_j \leq t_3^\star]\\
        &\quad + \sum_{j=1}^{t_3^\star} \Paren{R_{j-1} - R_{t_3^\star}} \cdot \ind[f_j > t_3^\star]\\
        &= t_3^\star - \Mcoeffmone\cdot \sum_{j=1}^{t_1} \left[1-\frac{[\MRttwo-R_{f_j}]^+}{\MRttwo}\right] \cdot \mu_j^A
         - \Mcoeffmone \cdot \sum_{j=t_1}^{t_3^\star}(\mu_j^B+\mu_j^C)\cdot \ind[f_j<t_3^\star] \\
        &\quad - \sum_{j=1}^{t_3^\star}(1-R_{j-1}) \cdot \ind[f_j > t_3^\star]\,.
    \end{align*}
    
    Adding this to our inequality, since all the $\bar s_{i,j}$ and $t_{i,j}$ terms cancel out, we get that we need to prove:
    \begin{align*}
        \sum_{j=1}^{t_3^\star} \left(\sum_{q=j+1}^{n} g_{j,q} + \sum_{p=j}^{t_3^\star} g_{p,j} \right)&\geq t_3^\star - \Mcoeffmone\cdot \sum_{j=1}^{t_1}\left[1-\frac{[\MRttwo-R_{f_j}]^+}{\MRttwo}\right] \cdot \mu_j^A 
        - \Mcoeffmone \cdot \sum_{j=t_1}^{t_3^\star}(\mu_j^B+\mu_j^C)\cdot \ind[f_j<t_3^\star] \\
        & - \sum_{j=1}^{t_3^\star}(1-R_{j-1}) \cdot \ind[t_3^\star<f_j]\,.
    \end{align*}
    
    Focusing on the left-hand side, notice that $g_{i,j}\leq \hg_i = 0$ for $i>t_3^\star$, and thus, we can state the following:
    \begin{align*}
        \sum_{j=1}^{t_3^\star} &\left(\sum_{q=j+1}^{n} g_{j,q} + \sum_{p=j}^{t_3^\star} g_{p,j} \right) 
        =\sum_{j=1}^n \left(\sum_{q=j+1}^{n} g_{j,q} + \sum_{p=j}^{n} g_{p,j} \right) 
        = \sum_{j=1}^n \sum_{p=1}^{n} g_{p,j}= \\
        &= n - \Mcoeffmone\cdot \sum_{j=1}^{t_1} \mu_j^A \cdot\left[1 -\frac{(\MRttwo-R_{f_j})^+}{\MRttwo}\right] - \Mcoeffmone\cdot \sum_{j=t_1+1}^n \mu_j^D \cdot\left[1 -\frac{R_{j-1}}{\MRttwo}\right]^+
    \end{align*}
    where we have used Lemma~\ref{obs:sum-of-generated-credit-column} to get the correct values. Thus, the previous inequality reduces to:
    \begin{align*}
        n - \Mcoeffmone\cdot &\sum_{j=1}^{t_1} \mu_j^A \cdot\left[1 -\frac{(\MRttwo-R_{f_j})^+}{\MRttwo}\right]- \Mcoeffmone\cdot \sum_{j=t_1+1}^n \mu_j^D \cdot\left[1 -\frac{R_{j-1}}{\MRttwo}\right]^+ \geq \\
        &\geq t_3^\star - \Mcoeffmone\cdot \sum_{j=1}^{t_1}\left[1-\frac{[\MRttwo-R_{f_j}]^+}{\MRttwo}\right] \cdot \mu_j^A 
        - \Mcoeffmone \cdot \sum_{j=t_1}^{t_3^\star}(\mu_j^B+\mu_j^C)\cdot \ind[f_j<t_3^\star] 
        - \sum_{j=1}^{t_3^\star}(1-R_{j-1}) \cdot \ind[t_3^\star<f_j]\,.
    \end{align*}
    Using the fact that $n-t_3^\star \geq \Mcoeffmone\cdot \sum_{j=t_3^\star+1}^n \mu_j^D \cdot\left[1 -\frac{R_{j-1}}{\MRttwo}\right]^+$ because $\Mcoeffmone \leq 1$ and that $\sum_{j=1}^{t_3^\star}(1-R_{j-1})\cdot \ind[t_3^\star<f_j] \geq \Mcoeffmone\cdot \sum_{j=1}^{t_3^\star} \mu_j^D \cdot\left[1 -\frac{R_{j-1}}{\MRttwo}\right]^+ \cdot \ind[t_3^\star<f_j]$ because $(1-R_{j-1})\geq \Mcoeffmone \cdot\left[1 -\frac{R_{j-1}}{\MRttwo}\right]^+$ as $\Mcoeffmone\leq \MRttwo$, if we cancel things out, using also that $f_j < t_3^\star$ implies $\mu_j^D =0$, it is sufficient to prove that:
    \begin{align*}
        \sum_{j=1}^{t_3^\star}(\mu_j^B+\mu_j^C)\cdot \ind[f_j<t_3^\star] \geq 
         \sum_{j=1}^n \mu_j^D \cdot\left[1 -\frac{R_{j-1}}{\MRttwo}\right]^+ \cdot \ind[f_j=t_3^\star]\,.
    \end{align*}
    At this point we use that inputs satisfy the admissibility property defined in Property~\ref{property:admissible-frontier-input}, which is the following:
    Let $S = \Set{j \in [n] \mid f_j = t_3^\star}$. Then $\mu_j^D \in \{0, 1\}$ for all $j$ except at most one of them, and also if $j_1, j_2 \in S$ with $j_1 \leq j_2$ then $\mu_{j_1}^D \geq \mu_{j_2}^D$. Let $p^\star$ be the maximum index in $S$ such that $\mu_{p^\star}^D > 0$, and $0$ if no such index exists. Note that $p^\star \leq t_3^\star$.
    If $p^\star = 0$, then the RHS is zero, so the claim follows trivially.
    Let then $p^\star > 0$.
    For the LHS we have 
    \begin{align*}
    \sum_{j=1}^{t_3^\star}(\mu_j^B+\mu_j^C)\cdot \ind[f_j<t_3^\star]
    &\geq \sum_{j=1}^{p^\star} (\mu_j^B+\mu_j^C)\cdot \ind[f_j<t_3^\star]
    = \sum_{j=1}^{p^\star - 1} (\mu_j^B+\mu_j^C) + (\mu_{p^\star}^B + \mu_{p^\star}^C) \cdot \ind[f_{p^\star} < t_3^\star]\\
    &= \sum_{j=1}^{p^\star - 1} (\mu_j^B+\mu_j^C)
    \end{align*}
    where we used that if $j < p^\star$ and $f_j \geq t_3^\star$ then $\mu_j^D = 1$ and hence $\mu_j^B = \mu_j^C = 0$, and also that $f_{p^\star} = t_3^\star$ since $p^\star \in S$.
    For the RHS we have
    \begin{align*}
    \sum_{j=1}^n \mu_j^D \cdot\left[1 -\frac{R_{j-1}}{\MRttwo}\right]^+ \cdot \ind[f_j=t_3^\star]
    &= \sum_{j=1}^{p^\star} \mu_j^D \cdot\left[1 -\frac{R_{j-1}}{\MRttwo}\right]^+ \cdot \ind[f_j=t_3^\star]
    \leq \sum_{j=1}^{p^\star} \mu_j^D \cdot\left[1 -\frac{R_{j-1}}{\MRttwo}\right]^+\,.
    \end{align*}
    Therefore, it suffices to prove that 
    \[\sum_{j=1}^{p^\star - 1} (\mu_j^B+\mu_j^C) \geq \sum_{j=1}^{p^\star} \mu_j^D \cdot\left[1 -\frac{R_{j-1}}{\MRttwo}\right]^+\,.\]
    We note that if $j \leq t_2$ then $R_{j-1} \geq \MRttwo$ so we can assume that $p^\star \geq t_2 + 1$, and then what we need to prove is by the same reason equivalent to
    \[\sum_{j=1}^{p^\star - 1} (\mu_j^B+\mu_j^C) \geq \sum_{j=t_2+1}^{p^\star} \mu_j^D \cdot\left[1 -\frac{R_{j-1}}{\MRttwo}\right]^+\,.\]
    We use the following two claims.
    
    \begin{restatable}{claim}{claimphasetworiproductprime}\label{claim:phase-2-ri-product-prime}
    For $t_1 - 1\leq j\leq i \leq \min(t_3^\star, n - 1)$ we have the following equality that relates $R'_j$ to $R'_i$:
    \begin{equation*}
        R'_j+\Mtone = (R'_i + \Mtone) \cdot \frac{n-j - \sum_{\ell=j+1}^i (1- \mu_{\ell}^B - \mu_{\ell}^C) \cdot \frac{R'_j+\Mtone}{R'_{\ell-1}+\Mtone}}{n-i}\,.
    \end{equation*}
    \end{restatable}
    We prove it in Section~\ref{sec:dual-deferred-proofs}.
    Then the following claim proves this result.
    \begin{claim}\label{clm:b-size2}
    If $p^\star \geq t_2 + 1$, then
    \[\sum_{j=1}^{p^\star - 1} (\mu_j^B+\mu_j^C) \geq \sum_{j=t_2+1}^{p^\star} \mu_j^D \cdot\left[1 -\frac{R_{j-1}}{\MRttwo}\right]^+\,.\]
    \end{claim}
    \begin{proof}
        We analyze the evolution of the remaining budget during Phase 2.
        By Claim~\ref{claim:phase-2-ri-product-prime} and choosing $j=t_1 - 1$ and $i=p^\star - 1$, we have that:
        \begin{equation*}
            R'_{t_1 - 1} + \Mtone = (R'_{p^\star - 1} + \Mtone) \cdot \frac{n-t_1+1-\sum_{\ell=t_1}^{p^\star - 1} (1-\mu_{\ell}^B-\mu_{\ell}^C) \cdot \frac{R'_{t_1-1}+\Mtone}{R'_{\ell-1}+\Mtone}}{n-p^\star+1}\,.
        \end{equation*}
        For the numerator, we can do the following:
        \begin{align*}
            &n-t_1+1-\sum_{\ell=t_1}^{p^\star-1} (1-\mu_{\ell}^B-\mu_{\ell}^C) \cdot \frac{R'_{t_1-1}+\Mtone}{R'_{\ell-1}+\Mtone}\\
            &= n-p^\star+1 + \sum_{\ell=t_1}^{p^\star-1}(\mu_{\ell}^B+\mu_{\ell}^C)-\sum_{\ell=t_1}^{p^\star-1} (1-\mu_{\ell}^B-\mu_{\ell}^C) \cdot \frac{R'_{t_1-1}-R'_{\ell-1}}{R'_{\ell-1}+\Mtone}\,.
        \end{align*}
        
        Let us prove that $\frac{R'_{t_1 - 1}-R'_{\ell-1}}{R'_{\ell-1}+\Mtone} \geq \left[1-\frac{R'_{\ell-1}}{\MRttwo}\right]^+$ for $t_1 \leq \ell\leq p^\star$. Using the fact that $R'_{t_1 - 1}=\MRtone=1-\Mtone$, we have that the left-hand side is: $\frac{1-\Mtone-R'_{\ell-1}}{R'_{\ell-1}+\Mtone}$. Then it is easily checked that this is lower bounded by $\left[1-\frac{R'_{\ell-1}}{\MRttwo}\right]^+$ for $R'_{\ell - 1} \leq 1-\Mtone$: i.e., that $\frac{1-\Mtone-x}{\Mtone+x} \geq [1-\frac{x}{\MRttwo}]^+$ for $0 \leq x \leq 1-\Mtone$, which is true by Property~\ref{property:function-tau}.
        
        Then we get the following:
        \begin{align*}
            \sum_{\ell=t_1}^{p^\star-1}\mu_\ell^B +\sum_{\ell=t_1}^{p^\star-1}\mu_\ell^C
            &\geq \frac{R'_{t_1 - 1}-R'_{p^\star-1}}{R'_{p^\star-1}+\Mtone} \cdot (n-p^\star+1) + \sum_{\ell=t_1}^{p^\star-1} (1-\mu_{\ell}^B-\mu_{\ell}^C)\cdot\left[1-\frac{R'_{\ell-1}}{\MRttwo}\right]^+\\
            &\geq \frac{R'_{t_1 - 1}-R'_{p^\star-1}}{R'_{p^\star-1}+\Mtone} + \sum_{\ell=t_1}^{p^\star-1} (1-\mu_{\ell}^B-\mu_{\ell}^C)\cdot\left[1-\frac{R'_{\ell-1}}{\MRttwo}\right]^+\\
            &\geq \left[1-\frac{R'_{p^\star-1}}{\MRttwo}\right]^+ + \sum_{\ell=t_1}^{p^\star-1} (1-\mu_{\ell}^B-\mu_{\ell}^C)\cdot\left[1-\frac{R'_{\ell-1}}{\MRttwo}\right]^+\\
            &\geq \sum_{\ell=t_2+1}^{p^\star} \mu_{\ell}^D \cdot\left[1-\frac{R'_{\ell-1}}{\MRttwo}\right]^+\,.
        \end{align*}
        Then, because $R_{\ell - 1} = R'_{\ell - 1}$ for all $\ell \geq t_2 + 1$, the conclusion follows.
    \end{proof}

    \smallskip
    \noindent\textbf{Case 5: $t_3^\star<i$.} For the case $t_3^\star <i\leq n$, let us observe that the $\text{pool\_credit}_i$ is non-increasing with respect to $i$, i.e.:
    \begin{equation*}
        \text{pool\_credit}_i \geq \text{pool\_credit}_{i+1} \geq \dots \geq \text{pool\_credit}_{n}\,.
    \end{equation*}
    This is because the positive $\bar s_{i,j}$ terms are zero for $i>t_3^\star+1$, because from $t_3^\star$ onward $\hg_{i} = 0$ and, since $f_j+1 \leq t_3^\star+1$ for $\mu_j^D<1$, the extra $\Mcoeffmone$ added to $\bar s_{f_j+1,j}$ by elements in $B$ and $C$ has already happened. On the other hand, the negative sum of $t_{i,j}$ terms can have new non-zero elements for increasing $i$ after $t_3^\star+1$. Thus, proving the inequality for $\text{pool\_credit}_{n}$ is sufficient:
    \begin{equation*}
        \text{pool\_credit}_{n} = \sum_{j=1}^n \sum_{p=j}^n \left(\bar s_{p,j} - t_{p,j}\right) \geq 0\,.
    \end{equation*}
    Let us now add, for all $j$, $\text{credit}_{n,j}$, for which we know the exact value thanks to Proposition~\ref{prop:credit-ij-values}. As a refresher, this is:
    \begin{align*}
        \sum_{j=1}^{n} \text{credit}_{n,j} &= \sum_{j=1}^{n}\Paren{ \sum_{q=j+1}^{n} g_{j,q} + \sum_{p=j}^{n} \left(g_{p,j} - \bar s_{p,j} + t_{p,j}\right)} \\
    &= \sum_{j=1}^{n} \Paren{1 - \Mcoeffmone\cdot \left[1-\frac{[\MRttwo-R_{f_j}]^+}{\MRttwo}\right] \cdot \mu_j^A - \Mcoeffmone\cdot \ind[f_j < n] \cdot ( \mu_j^B +  \mu_j^C)}\\
    &= n - \Mcoeffmone\cdot \sum_{j=1}^{t_1} \left[1-\frac{[\MRttwo-R_{f_j}]^+}{\MRttwo}\right] \cdot \mu_j^A - \Mcoeffmone \cdot \sum_{j=t_1}^{t_3^\star}(\mu_j^B+\mu_j^C)\cdot \ind[f_j<n]\,.
    \end{align*}
    Thus, since all the $\bar s_{i,j}$ and $t_{i,j}$ terms cancel out, we get that it is equivalent to prove:
    \begin{equation*}
        \sum_{j=1}^n \left(\sum_{q=j+1}^{n} g_{j,q} + \sum_{p=j}^{n} g_{p,j} \right)\geq n - \Mcoeffmone\cdot \sum_{j=1}^{n}\left[1-\frac{[\MRttwo-R_{f_j}]^+}{\MRttwo}\right] \cdot \mu_j^A- \Mcoeffmone \cdot \sum_{j=t_1}^{t_3^\star}(\mu_j^B+\mu_j^C)\cdot \ind[f_j<n]\,.
    \end{equation*}
    Now, if we focus on the left-hand side, we can reorganize the $g_{i,j}$ terms to see that we in fact have:
    \begin{align*}
        \sum_{j=1}^n &\left(\sum_{q=j+1}^{n} g_{j,q} + \sum_{p=j}^{n} g_{p,j} \right) 
        = \sum_{j=1}^n \sum_{p=1}^{n} g_{p,j}= 
         n - \Mcoeffmone\cdot \sum_{j=1}^{t_1} \mu_j^A \cdot\left[1 -\frac{(\MRttwo-R_{f_j})^+}{\MRttwo}\right] - \Mcoeffmone\cdot \sum_{j=t_1+1}^n \mu_j^D \cdot\left[1 -\frac{R_{j-1}}{\MRttwo}\right]^+
    \end{align*}
    where we have used Lemma~\ref{obs:sum-of-generated-credit-column} to get the correct values. So we end up having to prove the following:
    \begin{align*}
        n - \Mcoeffmone\cdot \sum_{j=1}^{t_1} \mu_j^A \cdot&\left[1 -\frac{(\MRttwo-R_{f_j})^+}{\MRttwo}\right]- \Mcoeffmone\cdot \sum_{j=t_1+1}^n \mu_j^D \cdot\left[1 -\frac{R_{j-1}}{\MRttwo}\right]^+ \\
        &\geq n - \Mcoeffmone\cdot \sum_{j=1}^{t_1} \mu_j^A \cdot\left[1 -\frac{(\MRttwo-R_{f_j})^+}{\MRttwo}\right] - \Mcoeffmone \cdot \sum_{j=t_1}^{t_3^\star}(\mu_j^B+\mu_j^C)\cdot \ind[f_j<n]
    \end{align*}
    which by using the fact that for $j\leq t_1$ we have that $R_{j-1}\geq \MRtone\geq \theta$, we can add their corresponding $\mu_j^D \cdot\left[1 -\frac{R_{j-1}}{\MRttwo}\right]^+$ as they are $0$, and thus, by simplifying this becomes:
    \begin{equation*}
        \sum_{j=1}^n \mu_j^D \cdot\left[1 -\frac{R_{j-1}}{\MRttwo}\right]^+ \leq \sum_{j=t_1}^{t_3^\star}(\mu_j^B+\mu_j^C)\cdot \ind[f_j<n]\,.
    \end{equation*}
    Note that $i > t_3^\star$, so $t_3^\star = n$ is impossible.
    Then we have $(\mu_j^B+\mu_j^C)\cdot \ind[f_j=n] = 0$ because $f_j > t_3^\star$, so equivalently we need to prove that 
    \begin{equation*}
        \sum_{j=1}^n \mu_j^D \cdot\left[1 -\frac{R_{j-1}}{\MRttwo}\right]^+ \leq \sum_{j=t_1}^{t_3^\star}(\mu_j^B+\mu_j^C)\,.
    \end{equation*}
    Through the following claim, this result is true.
    \begin{claim}\label{clm:b-size}
        We have
        \[\sum_{j=t_1}^{t_3^\star}(\mu_j^B+\mu_j^C) \geq \sum_{j=1}^n \mu_j^D \cdot\left[1 -\frac{R_{j-1}}{\MRttwo}\right]^+\,.\]
    \end{claim}

    \begin{proof}
        We analyze the evolution of the remaining budget during Phase 2.
        By Claim~\ref{claim:phase-2-ri-product-prime} and choosing $j=t_1 - 1$ and $i=t_3^\star$, we have that:
        \begin{equation*}
            R'_{t_1 - 1} + \Mtone = (R'_{t_3^\star} + \Mtone) \cdot \frac{n-t_1+1-\sum_{\ell=t_1}^{t_3^\star} (1 - \mu_{\ell}^B - \mu_{\ell}^C) \cdot \frac{R'_{t_1-1}+\Mtone}{R'_{\ell-1}+\Mtone}}{n-t_3^\star}\,.
        \end{equation*}
        For the numerator, we can do the following:
        \begin{align*}
            &n-t_1+1-\sum_{\ell=t_1}^{t_3^\star} (1 - \mu_{\ell}^B - \mu_{\ell}^C) \cdot \frac{R'_{t_1-1}+\Mtone}{R'_{\ell-1}+\Mtone}\\
            &= n-t_3^\star + \sum_{\ell=t_1}^{t_3^\star}(\mu_{\ell}^B+\mu_{\ell}^C)-\sum_{\ell=t_1}^{t_3^\star} (1 - \mu_{\ell}^B - \mu_{\ell}^C) \cdot \frac{R'_{t_1 - 1}-R'_{\ell-1}}{R'_{\ell-1}+\Mtone}\,.
        \end{align*}

        Let us prove that $\frac{R'_{t_1 - 1}-R'_{\ell-1}}{R'_{\ell-1}+\Mtone} \geq \left[1-\frac{R'_{\ell-1}}{\MRttwo}\right]^+$ for $t_1 \leq \ell\leq t_3^\star$. Using the fact that $R'_{t_1 - 1}=\MRtone=1-\Mtone$, we have that the left-hand side is: $\frac{1-\Mtone-R'_{\ell-1}}{R'_{\ell-1}+\Mtone}$. Then it is easily checked that this is lower bounded by $\left[1-\frac{R'_{\ell-1}}{\MRttwo}\right]^+$ for $R'_{\ell - 1} \leq 1-\Mtone$: i.e., that $\frac{1-\Mtone-x}{\Mtone+x} \geq [1-\frac{x}{\MRttwo}]^+$ for $0 \leq x \leq \MRtone$, which is true by Property~\ref{property:function-tau}.
        
        Then, rearranging, we get the following:
        \begin{equation*}
            \sum_{\ell=t_1}^{t_3^\star}(\mu_{\ell}^B+\mu_{\ell}^C) \geq \frac{R'_{t_1 - 1}-R'_{t_3^\star}}{R'_{t_3^\star}+\Mtone} \cdot (n-t_3^\star) + \sum_{\ell=t_1}^{t_3^\star} (1 - \mu_{\ell}^B - \mu_{\ell}^C) \cdot\left[1-\frac{R'_{\ell-1}}{\MRttwo}\right]^+\,.
        \end{equation*}
        Since $R'_{t_1-1}=\MRtone$ and $R'_{t_3^\star}=0$, the first term on the right is equal to $\frac{\MRtone}{\Mtone}(n-t_3^\star)$, and hence is at least $n-t_3^\star$. Therefore the previous bound implies in particular that
        \begin{equation*}
            \sum_{\ell=t_1}^{t_3^\star}(\mu_{\ell}^B+\mu_{\ell}^C) \geq (n-t_3^\star) + \sum_{\ell=t_1}^{t_3^\star} (1 - \mu_{\ell}^B - \mu_{\ell}^C) \cdot\left[1-\frac{R'_{\ell-1}}{\MRttwo}\right]^+\,.
        \end{equation*}
        We are nearly done. First, let us see that for $\ell>t_3^\star$, $R'_{\ell-1}=0$, and thus, $\left[1-\frac{R'_{\ell-1}}{\MRttwo}\right]^+=1$, making:
        \begin{equation*}
            \sum_{\ell=t_3^\star+1}^{n} (1 - \mu_{\ell}^B - \mu_{\ell}^C) \cdot\left[1-\frac{R_{\ell-1}}{\MRttwo}\right]^+=n-t_3^\star\,.
        \end{equation*}
        Then, for $\ell\leq t_1$, we have that $R_{\ell-1}\geq \MRtone\geq \MRttwo$, and thus, 
        \begin{equation*}
            \sum_{\ell=1}^{t_1} (1 - \mu_{\ell}^B - \mu_{\ell}^C) \cdot\left[1-\frac{R_{\ell-1}}{\MRttwo}\right]^+=0\,.
        \end{equation*}
        Also $R_{\ell - 1} = R'_{\ell - 1}$ for all $\ell \geq t_1 + 1$.
        Thus, we have that
        \begin{align*}
            (n-t_3^\star) + \sum_{\ell=t_1+1}^{t_3^\star} (1 - \mu_{\ell}^B - \mu_{\ell}^C) \cdot\left[1-\frac{R'_{\ell-1}}{\MRttwo}\right]^+
            &= \sum_{\ell=1}^{n} (1 - \mu_{\ell}^B - \mu_{\ell}^C) \cdot\left[1-\frac{R_{\ell-1}}{\MRttwo}\right]^+
            \geq \sum_{\ell=1}^{n} \mu_{\ell}^D \cdot\left[1-\frac{R_{\ell-1}}{\MRttwo}\right]^+\,.
        \end{align*}

    \end{proof}

    Thus, we have proven all the different possible cases, and thus, the constraint $(m_i)$ holds.
\end{proof}

It remains to enforce the diagonal convention on the preliminary matrix $\bar s$. The following lemma shows that this can be done without changing any of the preceding calculations or the objective value.

\begin{lemma}[Diagonal repair]\label{lem:diagonal-repair}
Let $\bar s$ be the preliminary sent-credit matrix produced by Table~\ref{tab:general-case-procedure}. Define $s$ by
\begin{equation}\label{eq:diagonal-repair}
    s_{i,j} \coloneqq
    \begin{cases}
        0, & i=j,\\
        \bar s_{j+1,j}+\bar s_{j,j}, & i=j+1 \text{ and } j<n,\\
        \bar s_{i,j}, & \text{otherwise}.
    \end{cases}
\end{equation}
Replacing $\bar s$ by $s$ gives a feasible solution of the factor-revealing dual with the same value of $\rho$.
\end{lemma}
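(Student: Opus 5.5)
We check, constraint by constraint, that moving the diagonal mass $\bar s_{j,j}$ one step forward, to $(j+1,j)$, leaves every constraint of Definition~\ref{def:frdual} valid and does not change the objective. Nonnegativity of $s$ is immediate from Lemma~\ref{lem:nonneg}, and $s_{i,i}=0$ holds by construction. There is a degenerate case $j=n$, where there is no later entry to shift into. There I would argue that $\bar s_{n,n}=0$ already. By Table~\ref{tab:general-case-procedure}, only group $A$ can place mass on the diagonal. If $n\in A$ fractionally, then $t_1=n$, and Definition~\ref{def:t1-boundary-split} sets $\hg_n^A=0$ and $\mu_n^A=0$. Failing that, setting $\bar s_{n,n}$ to $0$ only increases $\text{credit}_{n,n}$ and decreases the $(d_n)$ left-hand side; the only risk is pool credit, treated below.

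\textbf{The $(d_j)$ constraints and $(f)$, $(\gamma_{i,j},\beta_{i,j})$.} The $(d_j)$ left-hand side depends on $s$ only through the column sum $\sum_{i\ge j} s_{i,j}$. This sum is unchanged by the shift, so Lemma~\ref{lem:d-i-constraints} transfers verbatim. The $(f)$ and $(\gamma_{i,j},\beta_{i,j})$ constraints do not involve $s$ at all, so Lemmas~\ref{lem:gamma-beta-constraints} and~\ref{lem:f-constraint} still apply.

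\textbf{The $(r_{j,i})$ constraints.} The credit of client $j$ at iteration $i$ involves $\sum_{p=j}^{i} s_{p,j}$. For $i\ge j+1$ this partial sum is unchanged. For $i=j$ it drops by $\bar s_{j,j}\ge 0$, so $\text{credit}_{j,j}$ can only increase. Hence Lemma~\ref{lem:rij-constraints} continues to hold.

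\textbf{The $(m_i)$ constraints (main obstacle).} The pool credit at iteration $i$ is $\sum_{p\le i}\sum_{j\le p}(s_{p,j}-t_{p,j})$. For every $i\ne j$, this double sum contains either both $(j,j)$ and $(j+1,j)$ or neither, so $\text{pool\_credit}_i$ is unchanged. The exception is the iteration $i=j$ itself, where the pool loses $\bar s_{j,j}$. So the new value is
\[
\text{pool\_credit}_i^{\mathrm{new}}=\text{pool\_credit}_i-\sum_{j:\,j=i}\bar s_{i,i}=\text{pool\_credit}_i-\bar s_{i,i}.
\]
I therefore need the slack $\text{pool\_credit}_i\ge \bar s_{i,i}$, which I would establish by revisiting the cases in the proof of Lemma~\ref{lem:mi-constraints}.

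Only $A$-members have $\bar s_{i,i}=\mu_i^A(\hg_i-\sigma_{i,i})$, so nonzero diagonal mass can occur only for $i\le t_1$. In Cases~1–2 ($i\le t_1$), the $t$-terms vanish, except in the postponed subcase $t_1=t_3^\star$. So the pool credit equals $\sum_{j\le i}\sum_{p=j}^{i}\bar s_{p,j}$, which already contains the term $\bar s_{i,i}$, and the required slack holds trivially. In the postponed subcase $i=t_1=t_3^\star$, I would reuse the Case~4 argument. That argument proved $\text{pool\_credit}_{t_3^\star}\ge 0$ using the $(r_{j,t_3^\star})$ identities. After the shift, those identities include $\text{credit}_{t_3^\star,t_3^\star}$ increased by exactly $\bar s_{t_3^\star,t_3^\star}$, and the same cancellation goes through with the pool side decreased by the same amount. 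The new $(m_{t_3^\star})$ then follows from the unchanged generated-credit inequality of Case~4.

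I expect this last bookkeeping to be the hardest step: one has to check that the Case~4 summation identity is robust to the shift. If that check fails, the fallback is to argue directly that $\mu_{t_1}^A\,\bar s^A_{t_1,t_1}$ is bounded by the $\bar s$ mass already deposited by the other $A$-clients $j<t_1$.
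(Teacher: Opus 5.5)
Most of your proof is correct and matches the paper: nonnegativity, the $(d_j)$, $(f)$ and $(\gamma_{i,j},\beta_{i,j})$ constraints, the $(r_{j,i})$ constraints, the identity that the repaired pool credit at time $i$ is $\text{pool\_credit}_i-\bar s_{i,i}$, and the observation that $\bar s_{i,i}>0$ forces $i\le t_1$ with no $t$-mass withdrawn by time $i$ unless $i=t_1=t_3^\star$.

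The gap is precisely that postponed subcase, and reusing Case~4 does not close it. Case~4 rests on the identity $\text{pool\_credit}_{t_3^\star}=G-\sum_{j\le t_3^\star}\text{credit}_{t_3^\star,j}$, where $G$ is the total generated credit. It proves only $G\ge\sum_j\text{credit}_{t_3^\star,j}$. After the shift, $G$ is unchanged while the credit sum grows by $\bar s_{t_3^\star,t_3^\star}$. So the new $(m_{t_3^\star})$ constraint needs $G-\sum_j\text{credit}_{t_3^\star,j}\ge\bar s_{t_3^\star,t_3^\star}$, i.e.\ a quantified slack in Case~4, which the unchanged inequality does not give. Saying that the cancellation still goes through merely restates $\text{pool\_credit}^{\mathrm{new}}_{t_3^\star}=\text{pool\_credit}_{t_3^\star}-\bar s_{t_3^\star,t_3^\star}$. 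Your fallback is also unproven. By time $t_3^\star$ the pool has already paid the $t$-withdrawals of the $D$-clients with $f_j=t_3^\star$. And if $t_1=1$, there are no other $A$-clients at all.

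The missing idea is that this subcase never carries diagonal mass. If $t_3^\star<n$, then $t_1=t_3^\star$ is impossible. The definition of $t_1$ together with $R^{(2)}_{t_3^\star}=0$ would give
$\bar\tau\le R^{(2)}_{t_3^\star-1}=\hat g^{(2)}_{t_3^\star}=(1-\mu^D_{t_3^\star})\,\tau/(n-t_3^\star)\le\tau$.
This contradicts $\tau<\bar\tau$ from Property~\ref{property:thresholds}. If $t_3^\star=n$, then $i=n$, and you have already shown $\bar s_{n,n}=0$, so the pool credit is unchanged there.

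Consequently $\bar s_{i,i}>0$ implies both $i\le t_1$ and $i<t_3^\star$. Then $\text{pool\_credit}_i$ is a sum of nonnegative $\bar s$-entries containing $\bar s_{i,i}$, so the repaired value is nonnegative. With this observation in place of the Case~4 reuse, your proof becomes exactly the paper's.
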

\begin{proof}
Only the $A$-assignment in Table~\ref{tab:general-case-procedure} can contribute to $\bar s_{j,j}$. Since $\mu_j^A=0$ for $j>t_1$, a positive diagonal entry can occur only for $j\leq t_1$. Moreover, $\bar s_{n,n}=0$: this is immediate if $t_1<n$, while if $t_1=n$, Definition~\ref{def:t1-boundary-split} gives $\alpha=1$ and hence $\mu_n^A=0$. Thus, every diagonal entry that may be positive can be shifted to the following row as in Equation~\eqref{eq:diagonal-repair}.

We first record that the case $t_1=t_3^\star<n$ is impossible. Indeed, the definition of $t_1$ and the initialization $R_{t_3^\star}^{(2)}=0$ give
\begin{equation*}
    \MRtone
    \leq R_{t_1-1}^{(2)}
    = R_{t_3^\star-1}^{(2)}
    =\hg_{t_3^\star}^{(2)}
    =(1-\mu_{t_3^\star}^D)\frac{\Mtone}{n-t_3^\star}
    \leq \frac{\Mtone}{n-t_3^\star}
    \leq \Mtone,
\end{equation*}
contradicting $\Mtone<\MRtone$ from Property~\ref{property:thresholds}.

We next check the pool constraints. For every $i$, all shifts from columns $j<i$ have already re-entered the pool by time $i$, whereas the mass $\bar s_{i,i}$ is delayed until time $i+1$. Consequently,
\begin{equation}\label{eq:repaired-pool-credit}
    \sum_{p=1}^i\sum_{j=1}^p(s_{p,j}-t_{p,j})
    =\text{pool\_credit}_i-\bar s_{i,i}.
\end{equation}
If $\bar s_{i,i}=0$, non-negativity follows from Lemma~\ref{lem:mi-constraints}. Suppose then that $\bar s_{i,i}>0$, so $i\leq t_1$. No $t$-mass has been withdrawn by time $i$ unless possibly $i=t_1=t_3^\star$. To see this, clients $j<t_1$ have only $A$- and $D$-membership; the $A$-assignment has no $t$-mass, while a positive $D$-membership implies $f_j\geq t_3^\star$. At $j=t_1$, the $A$-part again has no $t$-mass, the $D$-part can only withdraw at $f_{t_1}\geq t_3^\star$, and the $B$-part is zero because
\begin{equation*}
    [1-R_{t_1-1}-\Mtone]^+=0
\end{equation*}
by $R_{t_1-1}\geq\MRtone=1-\Mtone$. The exceptional case $i=t_1=t_3^\star<n$ was ruled out above, and if $i=t_1=t_3^\star=n$, then $\bar s_{i,i}=\bar s_{n,n}=0$. Hence, whenever $\bar s_{i,i}>0$, the preliminary pool credit at time $i$ is a sum of nonnegative $\bar s$-entries containing $\bar s_{i,i}$. Equation~\eqref{eq:repaired-pool-credit} is therefore nonnegative.

For a client constraint $(r_{j,i})$, when $i=j$ replacing $\bar s$ by $s$ increases its left-hand side by $\bar s_{j,j}$. When $i\geq j+1$, the cumulative transfer is unchanged because
\begin{equation*}
    \sum_{p=j}^i s_{p,j}=\sum_{p=j}^i \bar s_{p,j}.
\end{equation*}
Thus all client constraints remain satisfied. The same identity with $i=n$ shows that each $(d_j)$ constraint, and hence the value of $\rho$, is unchanged. Finally, $s$ is nonnegative, has zero diagonal by construction, and all constraints not involving $s$ are unaffected. This proves the lemma.
\end{proof}

The proof of Lemma~\ref{lem:general-case-procedure} now follows:

\begin{proof}[Proof of Lemma~\ref{lem:general-case-procedure}]
    Lemmas \ref{lem:f-constraint}, \ref{lem:mi-constraints}, \ref{lem:d-i-constraints}, \ref{lem:rij-constraints}, \ref{lem:gamma-beta-constraints}, and \ref{lem:nonneg} show that the preliminary assignment satisfies all constraints except possibly $\bar s_{i,i}=0$, with objective value $\Mrho$. Lemma~\ref{lem:diagonal-repair} transforms it into a feasible solution satisfying the diagonal convention and preserves the objective value. Thus the lemma holds.
\end{proof}

\newpage
\subsection{Existence of correct advice}\label{sec:good-aux-existence}

This section is the proof of Lemma~\ref{lem:good-tau-existence}. We must show that there exists a set of advice variables $\Gamma$ as defined in Definition~\ref{def:auxiliary-input} and Property~\ref{property:admissible-frontier-input} such that the outcome of the procedure is not a null result. 
It will not be a null result if and only if $t_1(\Gamma)$ is defined ($t_1(\Gamma)\neq \perp$) and $R_0(\Gamma)=1$ for the chosen variables, where $R_0(\Gamma)$ is the value of $R_0$ at the termination of the procedure when $\Gamma$ are the input variables.

First, let us see that the condition that $t_1(\Gamma)\neq \perp$ is redundant. In the procedure, the condition of $t_1=\perp$ in Line~\ref{line:t_1-perp-condition} means that the procedure got to the $i=1$ element in the Phase 2 loop in Line~\ref{line:phase-2-loop} and from Line~\ref{line:R_i-1-smaller-0.79-condition} we know that no $i$ had $R_{i-1}^{(2)}\ge \MRtone$. This last condition is also true for $i=1$, thus $R_0^{(2)}< \MRtone$. Thus, getting some variables that have $R_0=1$ is already a sufficient and necessary condition.

Our proof strategy will be the following. We will first find two different advices $\Gamma_0$ and $\Gamma_1$ such that one satisfies $R_0(\Gamma_0)\leq 1$ and the other $R_0(\Gamma_1)\geq 1$. Then, we will define a path $\Gamma(\lambda)$ that goes from the first set of variables $\Gamma_0$ to the other $\Gamma_1$, parameterized by some real continuous parameter $\lambda$. By proving that the resulting function $R_0(\Gamma(\lambda))$ is continuous, with a direct application of the intermediate value theorem, we will be able to state that there must exist a $\Gamma^\star$ such that $R_0(\Gamma^\star)=1$, implying that there exists a correct advice.

Thus, let us first define the following two points.
\begin{definition}
    Let the first point $\Gamma_0$ be one such that all clients are considered part of $D$:
    \begin{equation*}
        \Gamma_0 = \{t_3^\star=1, \{\mu_j^\star=1, \ \forall j \in \ones{1}\}\}\,.
    \end{equation*}
\end{definition}
\begin{definition}
    Let the second point $\Gamma_1$ be one such that all clients with $f_j<n$ are not considered part of $D$:
    \begin{equation*}
        \Gamma_1 = \{t_3^\star = n, \{\mu_j^\star=1, \ \forall j \in \ones{n}\}, \ \hg_n^\star=1\}\,.
    \end{equation*}
\end{definition}

For these two points, let us state the following observations:
\begin{observation}\label{obs:oracle-R0-gamma-0}
    The total sum of produced credit for $\Gamma_0$ is $R_0(\Gamma_0)\leq 1$.
\end{observation}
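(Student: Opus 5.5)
The plan is to compute $R_0(\Gamma_0)$ directly by running Algorithm~\ref{alg:general-case-procedure-hgi} on the advice $\Gamma_0$. We will show that every client receives full $D$-membership, so the backward profile $\hg^{(2)}$ is identically zero. The procedure then stops at the $t_1$ check, with the value $R_0$ it records at termination equal to $0\le 1$.

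\textbf{Memberships.} Since $n\ge 2$ and $t_3^\star=1<n$, the special branch $i=t_3^\star=n$ of the loop never occurs, so no $\hg_n^\star$ enters. By lower triangularity, $\ones{1}\subseteq\{1\}$. Every client satisfies $f_j\ge j\ge 1=t_3^\star$, so no client has $f_j<t_3^\star$. The first line of the algorithm then assigns the memberships as follows:
\begin{itemize}
\item every client with $f_j>1$ gets $\mu_j^D=1$;
\item if $f_1=1$, client $1$ gets $\mu_1^D=\mu_1^\star=1$ by the choice of $\Gamma_0$.
\end{itemize}
Hence $\mu_j^D=1$ for all $j$. The choice $\mu_1^\star=1$ is admissible in the sense of Property~\ref{property:admissible-frontier-input}, since it is non-increasing and has no fractional entry.

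\textbf{The profile $\hg^{(2)}$.} The algorithm initializes $R^{(2)}_1=0$ and $\hg^{(2)}_i=R^{(2)}_i=0$ for all $i>1$. The loop at line~\ref{line:phase-2-loop} runs only for $i=1$, where it sets $\hg^{(2)}_1=(1-\mu_1^D)\frac{R^{(2)}_1+\Mtone}{n-1}=0$. Thus $R^{(2)}_{i-1}=0$ for every $i$.

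\textbf{The value of $R_0$.} By Property~\ref{property:thresholds}, $\MRtone>0.5>0$, so no $i$ satisfies $R^{(2)}_{i-1}\ge\MRtone$ and $t_1$ is undefined. In that case the algorithm sets $(\hg_i,R_i)\gets(\hg^{(2)}_i,R^{(2)}_i)$ for all $i$ and then returns $\perp$ at line~\ref{line:t_1-perp-condition}. The value of $R_0$ at termination is therefore $R_0(\Gamma_0)=R^{(2)}_0=0\le 1$.

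This is consistent with the earlier remark that $t_1=\perp$ corresponds exactly to $R_0^{(2)}<\MRtone$. The only point needing care is this convention: $R_0(\Gamma)$ must be read as the value recorded at termination even when the output is $\perp$. I expect no real obstacle beyond fixing that convention.
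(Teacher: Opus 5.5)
Your proof is correct and follows essentially the same route as the paper. With $t_3^\star=1$ you have $R_1=0$ and $\mu_1^D=1$, so $\hg_1\propto(1-\mu_1^D)=0$ and $R_0(\Gamma_0)=0\le 1$. Your extra care about $t_1$ being undefined, so that $R_0$ is read as the Phase-2 value $R_0^{(2)}$ recorded before $\perp$ is returned, matches the convention the paper itself adopts in Case~0 of the proof of Lemma~\ref{lem:oracle-continuous-section}.
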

\begin{observation}\label{obs:oracle-R0-gamma-1}
    The total sum of produced credit for $\Gamma_1$ is $R_0(\Gamma_1)\geq 1$.
\end{observation}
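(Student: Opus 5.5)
The plan is to run Algorithm~\ref{alg:general-case-procedure-hgi} by hand on the advice $\Gamma_1$. I expect the procedure to collapse to an almost trivial execution: all the mass is put on $\hg_n$, and the remaining entries are nonnegative. The steps are as follows.

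\textbf{Step 1: the auxiliary profile and the thresholds.} With $t_3^\star=n$ and $\hg_n^\star=1$, the first iteration of the Phase~2 loop (line~\ref{line:phase-2-loop}) falls into the special case $i=t_3^\star=n$, so $\hg_n^{(2)}=1$ and $R_{n-1}^{(2)}=1$.
\begin{itemize}
\item Since $1\ge\MRtone$, the maximum in line~\ref{line:R_i-1-smaller-0.79-condition} is attained at $i=n$. Hence $t_1=n$ is defined, and the procedure does not return $\perp$ at line~\ref{line:t_1-perp-condition}.
\item Since $1\ge\MRttwo$, we likewise get $t_2=n$.
\item The sum defining $z$ then ranges over $\ell$ from $t_2+1=n+1$ to $n$. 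It is empty, so $z=0$.
\end{itemize}

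\textbf{Step 2: the boundary split.} Because $t_1=n$, we use the second branch of Definition~\ref{def:t1-boundary-split}: $\alpha=1$, $\hg_n^A=0$ and $\hg_n^B=\hg_n^\star=1$. Thus $\hg_n=1$ and $R_{n-1}=R_n+\hg_n=1$.

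\textbf{Step 3: the remaining entries.} For every $1\le i<n$, with $z=0$, the last loop sets
\[
\hg_i=(1-\mu_i^D)\,\frac{1-[\MRttwo-R_{f_i}]^+}{n-i}.
\]
Since $f_i\ge i$, the value $R_{f_i}$ is already determined when $\hg_i$ is computed, because the loop runs backwards. We check each factor is nonnegative:
\begin{itemize}
\item $\mu_i^D\in[0,1]$, so $1-\mu_i^D\ge 0$.
\item The numerator is at least $1-\MRttwo>0$ by Property~\ref{property:thresholds}.
\item The denominator satisfies $n-i\ge 1$.
\end{itemize}
Hence $\hg_i\ge 0$ for all $i<n$, and
\[
R_0=R_{n-1}+\sum_{i<n}\hg_i\ \ge\ 1 .
\]

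\textbf{Where care is needed.} There is no real obstacle here. The only points to watch are that the denominator $n-i-\Mfrac\cdot z$ does not vanish or turn negative, which is exactly why it matters that $z=0$ via $t_2=n$, and that the $t_1=n$ branch of the boundary split is the one applied.
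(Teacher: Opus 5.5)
Your proof is correct and follows the paper's one-line argument $R_0(\Gamma_1)\ge \hg_n(\Gamma_1)=\hg_n^\star=1$, which tacitly relies on the other entries $\hg_i$ being nonnegative (Lemma~\ref{lem:nonneg}). You spell out what the paper leaves implicit: the execution gives $t_1=t_2=n$ and $z=0$, the $t_1=n$ branch of the boundary split applies, and each $\hg_i$ with $i<n$ is nonnegative. The one unstated step is that $1-[\MRttwo-R_{f_i}]^+\ge 1-\MRttwo$ needs $R_{f_i}\ge 0$; this follows by backward induction over the same loop.
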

\begin{proof}[Proof of Observation~\ref{obs:oracle-R0-gamma-0} and Observation~\ref{obs:oracle-R0-gamma-1}]
    For $\Gamma_0$, since $t_3^\star=1$, we have $R_1(\Gamma_0)=0$, and thus $R_0(\Gamma_0)=\hg_1$. For this case, since $\mu_1^D=1$ (because either $1\in\ones{1}$ implying $\mu_1^\star=1$ or $f_1>t_3^\star=1$ and thus $\mu_1^D=1$), we have that $\hg_1(\Gamma_0)\propto (1-\mu_1^D)=0$, thus proving that $R_0(\Gamma_0)\leq 1$. For $\Gamma_1$, it is trivial to see as $R_0(\Gamma_1) \geq \hg_n(\Gamma_1) = \hg_n^\star = 1$.
\end{proof}

Let us now define the path that goes from one point to the other. This path must go through each value of $t_3^\star$ from $1$ to $n$, and within each $t_3^\star$, we have to continuously change the values of $\mu_j^D$. 

Let us define the section of the path for a fixed $t_3^\star$:
\begin{definition}
    For a fixed $t_3^\star=\ell < n$, let $\Gamma^{\ell}(\lambda)$ be the path defined with $\lambda\in[0,1]$ as:
    \begin{equation*}
        \Gamma^{\ell}(\lambda) = \{t_3^\star=\ell, \{\mu_j^\star=\left[1-\left[n\lambda - (n-j)\right]^+\right]^+, \ \forall j\in \ones{\ell} \}\}\,.
    \end{equation*}
    For $t_3^\star=\ell=n$, this is:
    \begin{equation*}
        \Gamma^{n}(\lambda) = \{t_3^\star=n, \{\mu_j^\star=1, \ \forall j\in \ones{n} \}, \hg_n^\star = \lambda\}\,.
    \end{equation*}
\end{definition}
And finally, we can define the full path going from $\Gamma_0$ to $\Gamma_1$ as:
\begin{definition}
    For the above defined points $\Gamma_0$ and $\Gamma_1$, let the path $\Gamma(\lambda)$ that joins these points for $\lambda\in[0,n]$ be:
    \begin{equation*}
        \Gamma(\lambda) = \Gamma^{\left\lfloor\lambda\right\rfloor + 1}\left(\lambda - \left\lfloor\lambda\right\rfloor\right)
    \end{equation*}
    and with $\Gamma(0) = \Gamma^1(0)$ and $\Gamma(n) = \Gamma^n(1)$.
\end{definition}

As we need the advice variables to satisfy Property~\ref{property:admissible-frontier-input}, we make the following observation:
\begin{observation}\label{obs:oracle-valid-input}
    For any $\lambda \in [0,n]$, the chosen $\mu_j^\star\in\Gamma(\lambda)$ values are non-increasing in $j$ and have at most one fractional entry, and if $t_3^\star=n$ we have $\mu_j^\star=1$.
\end{observation}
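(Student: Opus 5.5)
The statement to prove is Observation~\ref{obs:oracle-valid-input}. It is a direct check of Property~\ref{property:admissible-frontier-input} along the path. Fix $\lambda\in[0,n]$ and write $\Gamma(\lambda)=\Gamma^{\ell}(\lambda')$ with $\ell=\lfloor\lambda\rfloor+1$ and $\lambda'=\lambda-\lfloor\lambda\rfloor\in[0,1)$. The endpoint conventions $\Gamma(0)=\Gamma^1(0)$ and $\Gamma(n)=\Gamma^n(1)$ are handled in the same way. I split into the two cases $\ell=n$ and $\ell<n$.

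\textbf{Case $\ell=n$.} By definition, $\Gamma^n(\lambda')$ sets $\mu_j^\star=1$ for all $j\in\ones{n}$. This family is constant, so it is non-increasing, it has no fractional entry, and it meets the extra requirement imposed when $t_3^\star=n$. The scalar $\hg_n^\star=\lambda'$ lies in $[0,1]$, as Definition~\ref{def:auxiliary-input} requires.

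\textbf{Case $\ell<n$.} Set $\phi(j)=\bigl[1-[n\lambda'-(n-j)]^+\bigr]^+$ for $j\in\ones{\ell}$. Its value lies in $[0,1]$. The inner quantity $n\lambda'-(n-j)$ is increasing in $j$, and the map $x\mapsto[1-[x]^+]^+$ is non-increasing. Hence $\phi$ is non-increasing in $j$.

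For the single-fractional-entry claim, note that $\phi(j)\in(0,1)$ exactly when $0<n\lambda'-(n-j)<1$, that is, when $j\in(n-n\lambda',\,n-n\lambda'+1)$. This is an open interval of length $1$, so it contains at most one integer. Therefore at most one $j$, and in particular at most one $j\in\ones{\ell}$, receives a fractional value. The condition for $t_3^\star=n$ does not apply here. Together with the monotonicity this is exactly the form given in Property~\ref{property:admissible-frontier-input}: a run of $1$'s, then possibly one value $\lambda\in(0,1)$, then $0$'s.

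\textbf{Expected obstacle.} There is no real obstacle. The only point needing care is that the fractional window has length exactly one, so that only one index can fall strictly inside it.
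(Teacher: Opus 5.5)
Your proof is correct and takes the same route as the paper. The case $t_3^\star=n$ is read off directly from the definition of $\Gamma^n$. For $\ell<n$, monotonicity comes from composing the increasing map $j\mapsto n\lambda-(n-j)$ with the non-increasing map $x\mapsto[1-[x]^+]^+$. Uniqueness of the fractional entry holds because fractionality forces $n\lambda-(n-j)$ into the open interval $(0,1)$, which has length one. Your integer-window formulation in $j$ is simply a restatement of the paper's spacing argument on the values $x_j$.
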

\begin{proof}[Proof of Observation~\ref{obs:oracle-valid-input}]
    First, for the case where $t_3^\star = n$, the path $\Gamma^n(\lambda)$ explicitly sets $\mu_j^\star = 1$ for all $j \in \ones{n}$, satisfying the final property directly.

    For the case $t_3^\star = \ell < n$, the fractional memberships are defined as $\mu_j^\star = \left[1-\left[n\lambda - (n-j)\right]^+\right]^+$. Let $x_j = n\lambda - n + j$. As the index $j$ increases, $x_j$ strictly increases. Consequently, the inner term $[x_j]^+$ is non-decreasing with respect to $j$, which implies that $1 - [x_j]^+$ is non-increasing. Applying the outer non-negativity bound $[\cdot]^+$ preserves this monotonicity, meaning $\mu_j^\star$ is non-increasing in $j$.

    Finally, an entry evaluates to a fractional value (i.e., $0 < \mu_j^\star < 1$) if and only if $0 < [x_j]^+ < 1$, which strictly requires $0 < x_j < 1$. Because distinct indices differ by at least $1$, the distance between terms is at least $x_{j_2} - x_{j_1} = j_2 - j_1 \geq 1$. Since the open interval $(0, 1)$ has a length of exactly $1$, it can capture at most one such $x_j$ for any given $\lambda$. Therefore, there can be at most one fractional entry for any state in the path.
\end{proof}
Thus, using any point $\Gamma$ in the path is a valid input to the procedure.
We will now prove that while traveling through this path, the produced $R_0(\lambda)$ is continuous. First, we prove the continuity for each section $\Gamma^\ell(\lambda)$; this is the most technical proof in this section and we leave it to the end of the section.

\begin{lemma}\label{lem:oracle-continuous-section}
    Any given section $\Gamma^{\ell}(\lambda)$ of the path $\Gamma(\lambda)$ has a continuous value of $R_0(\Gamma^{\ell}(\lambda))$ with respect to $\lambda$.
\end{lemma}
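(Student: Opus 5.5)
We regard $R_0$ as the output of a finite composition of elementary operations on the advice. For $\ell<n$ the frontier weights $\mu_j^\star(\lambda)=\bigl[1-[n\lambda-(n-j)]^+\bigr]^+$ are continuous in $\lambda$. For $\ell=n$ only $\hg_n^\star=\lambda$ varies. So it suffices to show that $R_0$ is a continuous function of the vector $(\mu^D_1,\dots,\mu^D_n)$, with $t_3^\star$ fixed and the $\mu^D$ restricted to the admissible set, together with $\hg_n^\star$ in the case $\ell=n$. We follow Algorithm~\ref{alg:general-case-procedure-hgi} line by line.

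\textbf{Backward pass.} Each $\hg^{(2)}_i$ and $R^{(2)}_{i-1}$ is built backwards from $R^{(2)}_{t_3^\star}=0$ by the recursion $\hg_i^{(2)}=(1-\mu_i^D)(R_i^{(2)}+\Mtone)/(n-i)$. This is a polynomial in the $\mu^D$'s and is continuous; the denominators $n-i$ are positive since $i\le t_3^\star<n$ or $i=n=t_3^\star$. The profile $R^{(2)}_i$ is non-increasing in $i$.

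\textbf{The discrete threshold $t_1$.} The index $t_1=\max\{i:R^{(2)}_{i-1}\ge\MRtone\}$ can jump. We handle it by showing that the final $R_0$ agrees on both sides of a jump. Fix a parameter value at which $R^{(2)}_{t-1}=\MRtone$ exactly, so that $t_1$ switches between $t$ and $t-1$.
\begin{itemize}
\item When $t_1=t$, Definition~\ref{def:t1-boundary-split} gives $\alpha=(\MRtone-R_t^{(2)})/\hg_t^{(2)}=1$. Then $\hg_t^A=0$ and $\hg_t=\hg_t^{(2)}$, so the profile coincides with the pure phase-2 profile up to index $t$.
\item When $t_1=t-1$, the boundary split at $t-1$ has $\alpha=0$, because $R_{t-1}^{(2)}=\MRtone$ makes $\hg^B_{t-1}=0$. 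Then $\hg_{t-1}^A$ equals the generic phase-1 formula $(1-\mu^D_{t-1})(1-[\MRttwo-R_{f_{t-1}}]^+)/(n-t+1-\Mfrac z)$.
\end{itemize}
Hence both sides yield the same sequence $\hg_{t-1},\dots,\hg_1$, and $R_0$ is continuous across the switch. This is exactly what the fractional split at $t_1$ was designed for. We must also check that $\alpha$ itself varies continuously when $t_1$ is locally constant, which requires $\hg^{(2)}_{t_1}>0$. If $\hg^{(2)}_{t_1}$ tends to $0$, then $R^{(2)}_{t_1-1}=R^{(2)}_{t_1}$ and $t_1$ would not have been chosen there, so this case is excluded.

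\textbf{The threshold $t_2$ and the quantity $z$.} The index $t_2$ also jumps, but it enters only through $z=\sum_{\ell>t_2}[1-R^{(2)}_{\ell-1}/\MRttwo]^+\mu^D_\ell$. For $\ell\le t_2$ we have $R^{(2)}_{\ell-1}\ge\MRttwo$, so every such summand vanishes. Therefore $z=\sum_{\ell}[1-R^{(2)}_{\ell-1}/\MRttwo]^+\mu^D_\ell$ over all $\ell$, which is manifestly continuous.

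\textbf{Phase-1 recursion and the main obstacle.} The remaining terms use the quantities $[\MRttwo-R_{f_i}]^+$. Here $f_i$ is fixed, and $R_{f_i}$ is either a phase-2 suffix or already-computed phase-1 data, so these are continuous. The main obstacle is the denominators $n-i-\Mfrac z$ and $n-t_1+\alpha(1-\mu^D_{t_1})-\Mfrac z$, which must stay bounded away from $0$. We would use $\Mfrac\le1$ (Property~\ref{property:coefficients}) and bound $z\le\#\{\ell>t_2:\mu^D_\ell>0\}$. Refining this, the terms with $\ell\le i+1$ or $\ell\le t_1$ vanish because $R\ge\MRtone>\MRttwo$ there. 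This gives $z\le n-t_1\le n-i-1$ and hence denominators at least $1$. With all denominators positive, $R_0$ is a composition of continuous functions on each piece where $(t_1,t_2)$ is constant, and the gluing arguments above extend continuity across the switching points. This proves continuity on each section $\Gamma^\ell$.
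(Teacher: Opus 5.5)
Your skeleton is the paper's: continuity of the Phase-2 recursion, rewriting $z$ over all indices so that $t_2$ drops out, and gluing at a change of $t_1$ via $\alpha=1$ at the old index and $\alpha\to0$ at the new one. The gluing step, however, has a hole.

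\textbf{Non-adjacent jumps of $t_1$.} You assume that when $R^{(2)}_{t-1}$ crosses $\MRtone$ the threshold moves from $t$ to $t-1$. Your limit $\alpha=\hg^B_{t-1}/\hg^{(2)}_{t-1}\to0$ also tacitly requires $\hg^{(2)}_{t-1}$ to stay bounded away from $0$. Both fail when $\mu^D_{t-1}=1$. This is not degenerate: every $j<t_3^\star$ with $f_j>t_3^\star$ has $\mu^D_j=1$. In that case $\hg^{(2)}_{t-1}=0$, and on the lower branch $t_1$ jumps to $k=\max\{j<t:\mu^D_j<1\}$, or becomes undefined if no such $k$ exists.

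The missing argument is the one in the paper's Cases 1.2.b and 2.b. Along a section, on each side of a given $\lambda$ only one coordinate of $(\mu^D,\hg^\star_n)$ moves. For $R^{(2)}_{t-1}$ to move, that coordinate must sit at an index $\ge t$. Hence $\mu^D_j=1$ for $k<j<t$ on both branches, and these $\hg_j$ vanish in Phase 2 and in Phase 1 alike. Your $\alpha\to0$ argument then runs at $k$, where $\hg^{(2)}_k\to(1-\mu^D_k)/(n-k)>0$, using $R^{(2)}_k=R^{(2)}_{t-1}\to\MRtone$ and $\MRtone+\Mtone=1$.

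\textbf{The regime $t_1=\perp$.} You never handle it. The procedure returns $\perp$ there, so you must fix a value for $R_0$; the paper sets $R_0:=R^{(2)}_0$, matching the reduction at the start of Section~\ref{sec:good-aux-existence}. You must then prove continuity at the entry into this regime. This is unavoidable: on all of $\Gamma^1$ one has $R^{(2)}_0=\hg^{(2)}_1\le\Mtone<\MRtone$, so the path begins with $t_1=\perp$.

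With the observation above the check is quick. If no $k$ exists, then $\mu^D_j=1$ for all $j<t$. The lower branch gives $R_0=R^{(2)}_{t-1}\to\MRtone$. The upper branch gives $R_0=R_{t-1}=\MRtone+\hg^A_t\to\MRtone$, because $\alpha\to1$ and all Phase-1 terms vanish.

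\textbf{The boundary denominator.} Your bound is wrong at the boundary index. From $z\le n-t_1$ and $\Mfrac\le1$ you only get $n-t_1+\alpha(1-\mu^D_{t_1})-\Mfrac z\ge\alpha(1-\mu^D_{t_1})\ge0$, not $\ge1$. Use the sharper fact $\Mfrac=1/(2+\Mcoeffmone)\le1/2$, which gives at least $(n-t_1)/2>0$ for $t_1<n$. For $t_1=n$ the split is defined directly with $\alpha=1$, and no such denominator appears.
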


Then, we need to stitch together the paths to prove that the general path is continuous:

\begin{observation}\label{obs:oracle-path-stitched}
    For $1 \leq \ell<n$, the jump from section $\ell$ to section $\ell+1$ is continuous:
    \begin{equation*}
        R_0(\Gamma^{\ell}(1)) = R_0(\Gamma^{\ell+1}(0))\,.
    \end{equation*}
\end{observation}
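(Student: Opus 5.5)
The plan is to show that the two advices $\Gamma^{\ell}(1)$ and $\Gamma^{\ell+1}(0)$ induce exactly the same membership vector $\mu^D$ and the same auxiliary profile $\hg^{(2)}$. Since every later quantity in Algorithm~\ref{alg:general-case-procedure-hgi} is computed deterministically from these two objects, the outputs, and in particular $R_0$, will then coincide.

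First I compute the memberships at the two endpoints. For $\Gamma^{\ell}(1)$ with $\ell<n$, every $j\in\ones{\ell}$ gets
\[
\mu_j^\star=\bigl[1-[n-(n-j)]^+\bigr]^+=[1-j]^+=0 .
\]
Together with the first line of Algorithm~\ref{alg:general-case-procedure-hgi}, this gives $\mu_j^D=1$ if and only if $f_j>\ell$.

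For $\Gamma^{\ell+1}(0)$, consider first the case $\ell+1<n$. Every $j\in\ones{\ell+1}$ gets
\[
\mu_j^\star=\bigl[1-[-(n-j)]^+\bigr]^+=1 ,
\]
so $\mu_j^D=1$ exactly when $f_j\ge \ell+1$, i.e.\ when $f_j>\ell$. When $\ell+1=n$, $\Gamma^n(0)$ sets $\mu_j^\star=1$ on $\ones{n}$ and $\hg_n^\star=0$, which again gives $\mu_j^D=1$ if and only if $f_j>\ell$. Hence the vectors $\mu^D$ agree in all cases.

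Next I compare the backward loop that builds $\hg^{(2)}$. With $t_3^\star=\ell$, all entries $\hg^{(2)}_i$ with $i>\ell$ are $0$ and $R^{(2)}_\ell=0$. With $t_3^\star=\ell+1$, the loop additionally computes $\hg^{(2)}_{\ell+1}$, and I check that it is zero:
\begin{itemize}
\item If $\ell+1<n$, then $\hg^{(2)}_{\ell+1}=(1-\mu^D_{\ell+1})\frac{R^{(2)}_{\ell+1}+\Mtone}{n-\ell-1}$. By lower triangularity $f_{\ell+1}\ge \ell+1>\ell$, so $\mu^D_{\ell+1}=1$ and $\hg^{(2)}_{\ell+1}=0$.
\item If $\ell+1=n$, then $\hg^{(2)}_n=\hg_n^\star=0$.
\end{itemize}
In both runs, therefore, $R^{(2)}_\ell=0$. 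For $i\le\ell$ the recursion $\hg^{(2)}_i=(1-\mu_i^D)\frac{R^{(2)}_i+\Mtone}{n-i}$ has identical inputs, because $i\le \ell<t_3^\star$ rules out the special case $i=t_3^\star=n$. By induction downward in $i$, the whole profile $\hg^{(2)}$ and its suffix sums $R^{(2)}$ agree.

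Finally, the rest of the procedure depends only on $(\mu^D,\hg^{(2)},R^{(2)},f)$. This covers $t_1$, $t_2$, $z=\sum_{\ell'>t_2}[1-R^{(2)}_{\ell'-1}/\MRttwo]^+\mu^D_{\ell'}$, the boundary split of Definition~\ref{def:t1-boundary-split}, and the loop for $i<t_1$, so all of these are identical in the two runs. The one point to check is that $t_3^\star$ enters nowhere else. In the boundary split the case distinction is on $t_1=n$, not on $t_3^\star$, and if $t_1=n$ then $\hg^{(2)}_n>0$; since both profiles agree, this forces $t_3^\star=n$ in both runs and so cannot happen at this junction. If $t_1$ is undefined, the two runs return $\perp$ simultaneously, and the quantity $R_0$ agrees as well. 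Hence $R_0(\Gamma^{\ell}(1))=R_0(\Gamma^{\ell+1}(0))$. The only delicate step is this bookkeeping of where $t_3^\star$ enters the procedure; the rest is direct substitution.
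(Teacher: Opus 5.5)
Your proposal is correct and follows essentially the same route as the paper: both advices induce the same $\mu^D$ vector (all $\mu_j^\star$ collapse to $0$ on $\ones{\ell}$, respectively to $1$ on $\ones{\ell+1}$, with $\hg_n^\star=0$ when $\ell+1=n$), and the extra iteration at $i=\ell+1$ gives $\hg^{(2)}_{\ell+1}=0$, so both runs reach $R^{(2)}_\ell=0$ and then evaluate identically. Your explicit check that $t_3^\star$ enters no later step ($t_1$, $t_2$, $z$, the boundary split, the undefined-$t_1$ case) adds detail that the paper's ``the remaining iterations evaluate identically'' leaves implicit.
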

\begin{proof}[Proof of Observation~\ref{obs:oracle-path-stitched}]
    This is a direct proof by substitution of the value of $\lambda$ and analyzing the execution of the procedure for both points. 
    
    For the first point $\Gamma^\ell(1)$, we substitute $\lambda=1$ into the path definition. For all $j \in \ones{\ell}$, we have:
    \begin{equation*}
        \mu_j^\star = \left[1 - \left[n(1) - (n-j)\right]^+\right]^+ = \left[1 - j\right]^+ = 0\,.
    \end{equation*}
    Thus, for all $j$ with $f_j\leq \ell$ we have $\mu_j^D=0$; and for all $j$ with $f_j>\ell$ we have $\mu_j^D=1$.
    
    For the second point $\Gamma^{\ell+1}(0)$, we substitute $\lambda=0$ into the path definition. For all $j \in \ones{\ell+1}$, we have:
    \begin{equation*}
        \mu_j^\star = \left[1 - \left[n(0) - (n-j)\right]^+\right]^+ = \left[1 - 0\right]^+ = 1\,.
    \end{equation*}
    This is true even in the case $\ell+1=n$, for which we also have that $\hg_n^\star=0$. Thus, just the same as before, for all $j$ with $f_j\leq \ell$ we have $\mu_j^D=0$; and for all $j$ with $f_j>\ell$ we have $\mu_j^D=1$.
    
    During the execution for $\Gamma^\ell(1)$, we have $t_3^\star = \ell$. In the initialization (Phase 3), the procedure sets $\hg_{\ell+1} = 0$, $\mu_{\ell+1}^D=1$, and $R_\ell = 0$. Then, entering Phase 2 at $i = \ell$, it calculates:
    \begin{equation*}
        \hg_\ell = (1 - \mu_\ell^D) \cdot \frac{R_\ell + \Mtone}{n - \ell} = (1 - \mu_\ell^D) \cdot \frac{\Mtone}{n - \ell}
    \end{equation*}
    and it computes the values of $\hg_i$ for $i<\ell$ based on this.
    
    During the execution for $\Gamma^{\ell+1}(0)$, we have $t_3^\star = \ell+1$. The procedure enters Phase 2 directly at $i = \ell+1$, setting $R_{\ell+1}=0$. If $\ell+1=n$, then by definition of $\Gamma^{n}(0)$ we have that $\hg_n=\hg_n^\star=0$. Otherwise, using $\mu_{\ell+1}^D = 1$ because either $f_{\ell+1}>t_3^\star$ or $f_{\ell+1}=t_3^\star$ and thus $\mu_{\ell+1}^\star=1$, it calculates:
    \begin{equation*}
        \hg_{\ell+1} = (1 - 1) \cdot \frac{R_{\ell+1} + \Mtone}{n - (\ell+1)} = 0\,.
    \end{equation*}
    This yields $R_\ell = R_{\ell+1} + \hg_{\ell+1} = 0$. Proceeding to iteration $i = \ell$, the variable $\mu_\ell^D$ remains the same as for $\Gamma^{\ell}(1)$ by the previous observation. Therefore, the calculation evaluates exactly as before:
    \begin{equation*}
        \hg_\ell = (1 - \mu_\ell^D) \cdot \frac{R_\ell + \Mtone}{n - \ell} = (1 - \mu_\ell^D) \cdot\frac{\Mtone}{n - \ell}\,.
    \end{equation*}
    
    Thus, at the end of iteration $i=\ell$, both executions yield $R_\ell = 0$ and $\hg_\ell = (1 - \mu_\ell^D) \cdot \frac{\Mtone}{n-\ell}$. Since the remaining iterations $i < \ell$ will evaluate identically for both states as all values of $\mu_j^D$ are the same, the final generated credit $R_0$ must be identical, which finishes the proof.
\end{proof}

Finally, we prove that the start and end points correspond to the original ones we had before.

\begin{observation}\label{obs:oracle-path-endpoints}
    We have that the two extreme points of $\Gamma(\lambda)$ are $\Gamma_0$ and $\Gamma_1$:
    \begin{equation*}
        \Gamma(0) = \Gamma_0, \ \ \Gamma(n)= \Gamma_1\,.
    \end{equation*}
\end{observation}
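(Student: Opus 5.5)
The plan is to verify the two endpoint identities directly by unwinding the definition of the path. Recall that $\Gamma(\lambda) = \Gamma^{\lfloor\lambda\rfloor+1}(\lambda-\lfloor\lambda\rfloor)$, with the stated conventions $\Gamma(0)=\Gamma^1(0)$ and $\Gamma(n)=\Gamma^n(1)$. Since an advice is determined by $t_3^\star$, the family $\{\mu_j^\star\}_{j\in\ones{t_3^\star}}$, and, when $t_3^\star=n$, the scalar $\hg_n^\star$, it suffices to compare these components.

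For $\Gamma(0)=\Gamma^1(0)$, I would first treat the generic case $1<n$. Here the section formula gives $t_3^\star=1$ and, for every $j\in\ones{1}$,
\[
\mu_j^\star=\bigl[1-[\,0-(n-j)\,]^+\bigr]^+ .
\]
Because $j\le n$, we have $n-j\ge 0$, so the inner bracket vanishes and $\mu_j^\star=1$. This is exactly the definition of $\Gamma_0$.

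For the degenerate case $n=1$, the section $\Gamma^1=\Gamma^n$ sets $\mu_j^\star=1$ and $\hg_n^\star=\lambda=0$. The paper assumes $n\ge 2$ throughout (Observation~\ref{obs:n-not-zero}), so this case can simply be excluded.

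For $\Gamma(n)=\Gamma^n(1)$, the formula for the last section gives $t_3^\star=n$, $\mu_j^\star=1$ for all $j\in\ones{n}$, and $\hg_n^\star=\lambda=1$. This coincides with $\Gamma_1$.

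The only subtle point is that the floor-based formula would evaluate $\Gamma(n)$ as $\Gamma^{n+1}(0)$, which is undefined. This is precisely why the definition fixes $\Gamma(n)$ separately, so I will cite that convention explicitly. With both endpoints identified, no further obstacle remains: the statement is a bookkeeping check.
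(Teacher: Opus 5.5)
Your proposal is correct and follows the paper's proof, which is the same direct substitution: $\Gamma(0)=\Gamma^1(0)$ has $t_3^\star=1$ and $\mu_j^\star=1$ because $n-j\ge 0$, while $\Gamma(n)=\Gamma^n(1)$ has $t_3^\star=n$, $\mu_j^\star=1$ and $\hat g_n^\star=1$. Your two extra remarks go slightly beyond what the paper says explicitly and are both accurate: that $n=1$ is excluded by the standing assumption $n\ge 2$, and that $\Gamma(n)$ must be fixed by convention because the floor formula would give the undefined $\Gamma^{n+1}(0)$.
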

\begin{proof}[Proof of Observation~\ref{obs:oracle-path-endpoints}]
    This is a direct proof by substitution of the value of $\lambda$. For $\lambda=0$, we have:
    \begin{equation*}
        \Gamma(0) = \Gamma^1\left(0\right) = \{t_3^\star=1, \{\mu_j^\star=\left[1-\left[0 - (n-j)\right]^+\right]^+, \ \forall j\in \ones{1} \}\}
    \end{equation*}
    which is clearly $\Gamma_0$ as the middle term is equal to $1$. For $\lambda=n$, we have:
    \begin{equation*}
        \Gamma(n) = \Gamma^n(1) = \{t_3^\star=n, \{\mu_j^\star=1, \ \forall j\in \ones{n} \}, \hg_n^\star = 1\} = \Gamma_1
    \end{equation*}
    which finishes the proof.
\end{proof}

\begin{corollary}\label{cor:oracle-continuity-of-global-path}
    The path $\Gamma(\lambda)$ goes from $\Gamma_0$ (for $\lambda=0$) to $\Gamma_1$ (for $\lambda=n$) and satisfies that $R_0(\Gamma(\lambda))$ is continuous with respect to $\lambda$. 
\end{corollary}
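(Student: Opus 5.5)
The corollary is obtained by assembling the three ingredients just established, so my plan is purely a gluing argument on the reparametrized path. First, I will use Observation~\ref{obs:oracle-path-endpoints} to identify the endpoints: $\Gamma(0)=\Gamma_0$ and $\Gamma(n)=\Gamma_1$. This gives the first part of the statement directly.

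For continuity, I would split $[0,n]$ into the closed pieces $[\ell-1,\ell]$ for $\ell=1,\dots,n$. On the interior $(\ell-1,\ell)$ we have $\lfloor\lambda\rfloor+1=\ell$, so $\Gamma(\lambda)=\Gamma^{\ell}(\lambda-(\ell-1))$. Then $R_0(\Gamma(\lambda))$ is the composition of the affine map $\lambda\mapsto\lambda-\ell+1$ with $R_0(\Gamma^\ell(\cdot))$, and the latter is continuous on $[0,1]$ by Lemma~\ref{lem:oracle-continuous-section}.

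The only subtlety is at the breakpoints $\lambda=\ell$ with $1\le\ell<n$. There, the floor formula assigns $\Gamma(\ell)=\Gamma^{\ell+1}(0)$, while the left limit along piece $\ell$ tends to $R_0(\Gamma^\ell(1))$ by continuity of the section. Observation~\ref{obs:oracle-path-stitched} gives $R_0(\Gamma^\ell(1))=R_0(\Gamma^{\ell+1}(0))$, so left and right limits agree with the value at $\lambda=\ell$. At $\lambda=n$ the convention $\Gamma(n)=\Gamma^n(1)$ matches the left limit along piece $n$, and at $\lambda=0$ the convention $\Gamma(0)=\Gamma^1(0)$ agrees with the right limit.

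Hence the function $\lambda\mapsto R_0(\Gamma(\lambda))$ is defined piecewise by finitely many continuous functions that agree at their common endpoints, so it is continuous on $[0,n]$ by the pasting lemma. I expect no real obstacle here. All the difficulty sits in Lemma~\ref{lem:oracle-continuous-section}, which I take as given, so the only care needed is matching the floor-based indexing to the closed sections at integer $\lambda$.
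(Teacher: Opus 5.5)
Your proposal is correct and follows the same route as the paper. Both arguments get continuity on each section from Lemma~\ref{lem:oracle-continuous-section}, match the values at the integer breakpoints via Observation~\ref{obs:oracle-path-stitched}, and identify the endpoints with Observation~\ref{obs:oracle-path-endpoints}. Your explicit treatment of the floor convention at $\lambda=\ell$ is a bit more careful than the paper's one-line argument: for $\ell<n$, only the $R_0$-value of $\Gamma^\ell(1)$ is needed, not the point itself.
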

\begin{proof}[Proof of Corollary~\ref{cor:oracle-continuity-of-global-path}]
    First, from Lemma~\ref{lem:oracle-continuous-section} and Observation~\ref{obs:oracle-path-stitched} we know that we have a path $\Gamma(\lambda)$ with a continuous $R_0$ with respect to $\lambda$. Using Observation~\ref{obs:oracle-path-endpoints}, we know that this goes from $\Gamma_0$ to $\Gamma_1$, thus proving the corollary.
\end{proof}

Thus, we are now ready for the proof of the main section's lemma.

\begin{proof}[Proof of Lemma~\ref{lem:good-tau-existence}]
    As stated before, we just want to prove that there is a point $\Gamma^\star$ such that $R_0(\Gamma)=1$. By the continuity of $R_0(\Gamma(\lambda))$ proven in Corollary~\ref{cor:oracle-continuity-of-global-path}, we can apply the intermediate value theorem using Observation~\ref{obs:oracle-R0-gamma-0} and Observation~\ref{obs:oracle-R0-gamma-1} to get that there must be such a point. Thus, the proof is complete.
\end{proof}

\vspace{0.25in}

We end this section with proving the deferred proof of Lemma~\ref{lem:oracle-continuous-section}.

\begin{proof}[Proof of Lemma~\ref{lem:oracle-continuous-section}]
    For $\ell < n$, the path $\Gamma^\ell(\lambda)$ defines a continuous progression where the fractional membership variables $\mu_j^\star$ are sequentially decreased from $1$ to $0$.
    For $\ell = n$, the path $\Gamma^n(\lambda)$ fixes $\mu_j^\star = 1$ for all $j \in \ones{n}$ and continuously varies $\hg_n^\star = \lambda$ from $0$ to $1$.
    In both cases, we must show that the bookkeeping value $R_0(\Gamma^\ell(\lambda))$ generated by the execution of Algorithm~\ref{alg:general-case-procedure-hgi} is continuous with respect to $\lambda$.

    Let $R_i^{(2)}(\lambda)$ be the sequence of budget variables generated by purely executing Phase 2.
    Initialized at $R_{\ell}^{(2)} = 0$, for $i \le \ell$:
    \begin{align*}
        \hg_i^{(2)}(\lambda) &= \begin{cases}
            \hg_n^\star(\lambda) & \text{if } i = \ell = n\,, \\
            (1 - \mu_i^D(\lambda)) \cdot \frac{R_i^{(2)}(\lambda) + \Mtone}{n - i} & \text{otherwise}\,.
        \end{cases} \\
        R_{i-1}^{(2)}(\lambda) &= R_i^{(2)}(\lambda) + \hg_i^{(2)}(\lambda)
    \end{align*}
    For this quantity that follows the above formulas, we can use the following claim:
    \begin{claim}\label{claim:oracle-recurrence-R-phase-2}
        Each term $R_i^{(2)}$ can be expressed as:
        \begin{equation*}
            R_{i}^{(2)}(\lambda) = (\hg_n(\lambda) +\Mtone)\cdot\prod_{j=i+1}^{n-1}\left(1+\frac{1-\mu_j^D(\lambda)}{n-j}\right) - \Mtone
        \end{equation*}
        except for $R_n^{(2)}=0$ and where we remind that $\hg_n(\lambda)=0$ if $t_3^\star<n$.
    \end{claim}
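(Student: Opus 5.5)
We prove Claim~\ref{claim:oracle-recurrence-R-phase-2} by backward induction on $i$, starting from the base of the Phase~2 recursion. The key observation is that the recurrence
\[
R_{i-1}^{(2)} = R_i^{(2)} + (1-\mu_i^D)\frac{R_i^{(2)}+\Mtone}{n-i}
\]
becomes multiplicative after shifting by $\Mtone$. Indeed, adding $\Mtone$ to both sides gives
\[
R_{i-1}^{(2)}+\Mtone = \bigl(R_i^{(2)}+\Mtone\bigr)\Bigl(1+\frac{1-\mu_i^D}{n-i}\Bigr)
\]
for every $i<n$ with $i\le t_3^\star$. Iterating this identity from the top index down to $i+1$ produces the stated product. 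All that remains is to check the base value $R_{n-1}^{(2)}+\Mtone=\hg_n+\Mtone$ and to treat indices above $t_3^\star$ consistently.

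For the base case we split according to whether $t_3^\star=n$. If $t_3^\star=n$, then $R_n^{(2)}=0$ and $R_{n-1}^{(2)}=\hg_n^\star=\hg_n$, so $R_{n-1}^{(2)}+\Mtone=\hg_n+\Mtone$. This matches the formula at $i=n-1$, where the product is empty.

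If $t_3^\star=\ell<n$, then $\hg_n=0$ and all $R_i^{(2)}=0$ for $i\ge \ell$. The formula requires $\Mtone\prod_{j=i+1}^{n-1}(1+\frac{1-\mu_j^D}{n-j})-\Mtone=0$ for $\ell\le i\le n-1$. This holds because every factor with $j>\ell$ equals $1$, since $\mu_j^D=1$ for $j>t_3^\star$. To see this, note that $f_j\ge j>t_3^\star$ forces $\mu_j^D=1$ by the assignment in Algorithm~\ref{alg:general-case-procedure-hgi}. So the formula is correct on the whole initialized range.

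In the inductive step we take $i\le \ell$ with $i<n$ and assume the formula for $R_i^{(2)}$. The multiplicative identity above then gives $R_{i-1}^{(2)}+\Mtone$ as the product extended by the factor $j=i$, which is the claimed form with index $i-1$.

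The only delicate point is bookkeeping at the boundary $i=\ell=n$, where $\hg_n^{(2)}$ is given by advice rather than by the recursion. We avoid applying the recursion there and take $R_{n-1}^{(2)}$ as the base instead. With the base case handled this way, the rest of the argument is routine algebra.
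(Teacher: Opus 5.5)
Your proof is correct and follows the same route as the paper's. You shift by $\Mtone$ to make the Phase~2 recurrence multiplicative and unroll it, checking the base separately for $t_3^\star=n$ (where $R_{n-1}^{(2)}=\hg_n^\star$) and $t_3^\star<n$ (where every factor with $j>t_3^\star$ equals $1$). Your justification of $\mu_j^D=1$ for $j>t_3^\star$ via $f_j\ge j$, and your check of the base on the whole range $t_3^\star\le i\le n-1$, are slightly more explicit than the paper's appeal to the path definition.
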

    \begin{proof}[Proof of Claim~\ref{claim:oracle-recurrence-R-phase-2}]
        We can prove this by expanding the recurrence relation defined by the Phase 2 updates. For any index $k < n$, the algorithm computes the budget as:
        \begin{equation*}
            R_{k-1}^{(2)} = R_k^{(2)} + \hg_k^{(2)} = R_k^{(2)} + (1 - \mu_k^D) \cdot \frac{R_k^{(2)} + \Mtone}{n - k}\,.
        \end{equation*}
        Adding $\Mtone$ to both sides and factoring yields:
        \begin{equation*}
            R_{k-1}^{(2)} + \Mtone = (R_k^{(2)} + \Mtone) \left( 1 + \frac{1 - \mu_k^D}{n - k} \right)\,.
        \end{equation*}
        By substituting $k$ with indices from $i+1$ up to $n-1$ and repeatedly applying this relation, we can unroll the recurrence into a product:
        \begin{equation*}
            R_{i}^{(2)} + \Mtone = (R_{n-1}^{(2)} + \Mtone) \prod_{j=i+1}^{n-1} \left( 1 + \frac{1 - \mu_j^D}{n - j} \right)\,.
        \end{equation*}
        Now, we evaluate the base term $R_{n-1}^{(2)}$ by splitting into two cases based on $t_3^\star = \ell$:
        
        \textbf{Case 1: $\ell = n$.} The sequence is initialized at $R_n^{(2)} = 0$, and the generation function sets $\hg_n^{(2)} = \hg_n^\star = \hg_n$. Therefore, $R_{n-1}^{(2)} = R_n^{(2)} + \hg_n^{(2)} = \hg_n$. Substituting $R_{n-1}^{(2)} = \hg_n$ into our product gives exactly the claimed formula.
        
        \textbf{Case 2: $\ell < n$.} The sequence is initialized at $R_\ell^{(2)} = 0$, and by definition, $\hg_n = 0$. Let us evaluate the right-hand side of the claim's formula at $i = \ell$. For all $j > \ell$, the path definition ensures $\mu_j^D = 1$, meaning $1 - \mu_j^D = 0$. Consequently, every multiplier in the product for $j \in (\ell, n-1]$ evaluates exactly to $1$. Using the claim's formula for $i=\ell$ with $\hg_n=0$, we get:
        \begin{equation*}
            R_\ell^{(2)} + \Mtone = (0 + \Mtone) \cdot 1 \implies R_\ell^{(2)} = 0\,.
        \end{equation*}
        This matches the initialization $R_\ell^{(2)} = 0$ perfectly. Because the structural zeros in $1-\mu_j^D$ for $j > \ell$ neutralize the product, the sequence seamlessly backward-propagates from an implicit $R_{n-1}^{(2)} = \hg_n = 0$. 
        
        Thus, for all cases, subtracting $\Mtone$ from both sides of the unrolled recurrence gives the exact mathematical expression in the claim.
    \end{proof}

    We can make the two following observations on this quantity using this claim:
    \begin{observation}\label{obs:oracle-continuous-phase-2}
        The quantity $R_{i}^{(2)}(\lambda)$ is continuous with respect to $\lambda$.
    \end{observation}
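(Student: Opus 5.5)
The statement to prove is Observation~\ref{obs:oracle-continuous-phase-2}: continuity of $R_i^{(2)}(\lambda)$ in $\lambda$ along a fixed section $\Gamma^\ell$. I will read it off the closed form in Claim~\ref{claim:oracle-recurrence-R-phase-2}. There, for $i<n$,
\[
R_i^{(2)}(\lambda)=(\hg_n(\lambda)+\Mtone)\prod_{j=i+1}^{n-1}\Bigl(1+\tfrac{1-\mu_j^D(\lambda)}{n-j}\Bigr)-\Mtone ,
\]
and $R_n^{(2)}\equiv 0$ is constant. So it suffices to show that each ingredient is continuous in $\lambda$. Since the product has finitely many factors and each denominator $n-j\ge 1$ is a fixed positive integer, continuity of the pieces gives continuity of the whole.

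\textbf{The membership weights.} Within the section, $t_3^\star=\ell$ is fixed, so the classification of each client into $f_j>\ell$, $f_j=\ell$ or $f_j<\ell$ does not depend on $\lambda$. Accordingly, $\mu_j^D(\lambda)$ takes one of three forms:
\begin{itemize}
\item the constant $1$ when $f_j>\ell$;
\item the constant $0$ when $f_j<\ell$;
\item $\mu_j^\star(\lambda)=\bigl[1-[n\lambda-(n-j)]^+\bigr]^+$ when $j\in\ones{\ell}$ and $\ell<n$, or the constant $1$ when $\ell=n$.
\end{itemize}
The nontrivial form is a composition of $\lambda\mapsto n\lambda-(n-j)$, which is affine, with $[\cdot]^+$ and $x\mapsto 1-x$, all of which are continuous. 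So each $\mu_j^D$ is continuous in $\lambda$ on $[0,1]$.

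\textbf{The top entry $\hg_n(\lambda)$.} If $\ell<n$, then $\hg_n\equiv 0$. If $\ell=n$, then $\hg_n(\lambda)=\hg_n^\star=\lambda$. Both are continuous.

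\textbf{Conclusion and the one subtle point.} Multiplying and summing these continuous functions yields continuity of $R_i^{(2)}(\lambda)$ for every $i$. The only thing to verify is that Claim~\ref{claim:oracle-recurrence-R-phase-2} really applies uniformly for all $\lambda$ in the section, including $i\ge \ell$ when $\ell<n$. This holds because the claim's proof treats both cases $\ell=n$ and $\ell<n$ and relies only on $\mu_j^D=1$ for $j>\ell$, which is independent of $\lambda$. With that checked, no case distinction depends on $\lambda$, so no discontinuity can arise from branching in the recurrence.
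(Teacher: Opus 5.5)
Your proposal is correct and is essentially the paper's proof: both read continuity off the closed-form product in Claim~\ref{claim:oracle-recurrence-R-phase-2}, using that the denominators $n-j$ are positive and that $\mu_j^D(\lambda)$ and $\hg_n(\lambda)$ vary continuously along the section. Your case-by-case check of the forms of $\mu_j^D$, and your check that the closed form also holds for $i\ge \ell$ when $\ell<n$, spell out details the paper leaves implicit.
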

    \begin{observation}\label{obs:oracle-non-decreasing-phase-2}
        The quantity $R_{i}^{(2)}(\lambda)$ is non-decreasing with respect to $\lambda$.
    \end{observation}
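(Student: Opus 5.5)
The plan is to read monotonicity directly off the closed form in Claim~\ref{claim:oracle-recurrence-R-phase-2},
\[
R_{i}^{(2)}(\lambda) = (\hg_n(\lambda) +\Mtone)\cdot\prod_{j=i+1}^{n-1}\left(1+\frac{1-\mu_j^D(\lambda)}{n-j}\right) - \Mtone ,
\]
together with the dependence of $\hg_n$ and $\mu_j^D$ on $\lambda$ along a fixed section $\Gamma^\ell$. Since $-\Mtone$ is a constant, it suffices to show that the product
\[
P_i(\lambda) := (\hg_n(\lambda)+\Mtone)\prod_{j=i+1}^{n-1}\left(1+\frac{1-\mu_j^D(\lambda)}{n-j}\right)
\]
is non-decreasing in $\lambda$. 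I will first check that every factor of $P_i$ is nonnegative and non-decreasing in $\lambda$. A product of nonnegative non-decreasing functions is non-decreasing, which then gives the observation. The degenerate value $R_n^{(2)}=0$ is constant, and is therefore trivially non-decreasing.

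\textbf{Step 1: the product factors.} For each $j\le n-1$ we have $\mu_j^D\in[0,1]$ and $n-j\ge 1$, so $1+\frac{1-\mu_j^D}{n-j}\ge 1>0$. It remains to check that each factor is non-decreasing in $\lambda$, i.e., that each $\mu_j^D(\lambda)$ is non-increasing.

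In the section $\ell<n$, the only $\lambda$-dependent memberships are those with $j\in\ones{\ell}$. All other memberships are fixed at $1$ (when $f_j>\ell$) or $0$ (when $f_j<\ell$). For $j\in\ones{\ell}$ we have $\mu_j^D=\mu_j^\star=\left[1-\left[n\lambda-(n-j)\right]^+\right]^+$. The inner map $\lambda\mapsto n\lambda-(n-j)$ is increasing, and the maps $x\mapsto[x]^+$ and $x\mapsto[1-x]^+$ are respectively non-decreasing and non-increasing. Composing them shows $\mu_j^\star$ is non-increasing in $\lambda$, so each factor $1+\frac{1-\mu_j^D}{n-j}$ is non-decreasing.

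In the section $\ell=n$, all $\mu_j^D$ are constant: they equal $1$ on $\ones{n}$ and $0$ otherwise. Hence the product factors are constant in this section.

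\textbf{Step 2: the prefactor.} The prefactor $\hg_n(\lambda)+\Mtone$ satisfies $\Mtone\ge0$ by Property~\ref{property:thresholds}.
\begin{itemize}
\item For $\ell<n$, we have $\hg_n\equiv 0$, so the prefactor is constant.
\item For $\ell=n$, we have $\hg_n=\hg_n^\star=\lambda$, which is increasing and nonnegative.
\end{itemize}
In both cases the prefactor is nonnegative and non-decreasing. Combining Steps 1 and 2, $P_i$ is non-decreasing, and hence so is $R_i^{(2)}=P_i-\Mtone$.

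\textbf{Main obstacle.} The only subtle point is that the claim's formula must be valid for all $i$ in the section, including $i\ge\ell$, where the algorithm just sets $R_i^{(2)}=0$. Claim~\ref{claim:oracle-recurrence-R-phase-2} handles this: for $j>\ell$ we have $\mu_j^D=1$, so the corresponding factors equal $1$, and the formula returns $0$ there, which is constant. No further work should be needed beyond this composition-of-monotone-maps argument.
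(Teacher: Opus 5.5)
Your proof is correct and follows essentially the same route as the paper. You read monotonicity off the closed form in Claim~\ref{claim:oracle-recurrence-R-phase-2}: for $\ell=n$ the memberships are constant and only $\hg_n^\star=\lambda$ varies, while for $\ell<n$ each $\mu_j^\star$ is non-increasing in $\lambda$. You are slightly more explicit than the paper in noting that every factor is nonnegative (the product factors are at least $1$ and the prefactor is at least $\Mtone\ge 0$), which is exactly what justifies concluding that the product of these non-decreasing factors is non-decreasing.
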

    \begin{proof}[Proof of Observation~\ref{obs:oracle-continuous-phase-2}]
        Looking at the expression in Claim~\ref{claim:oracle-recurrence-R-phase-2}, since for every index in the product $j$ we have $j\leq n-1$, we note that all denominators in the claim are positive. Thus, since there are no discontinuities with respect to $\mu_j^D$ or with respect to $\hg_n(\lambda)$, we can state that $R_i^{(2)}(\lambda)$ is continuous with respect to any continuous change in $\mu_j^D$ and $\hg_n$ for any $i$ and $j$; and thus, it is also continuous with respect to $\lambda$ as it continuously changes $\mu_j^D$ or $\hg_n$ depending on $\ell$. 
    \end{proof}
    \begin{proof}[Proof of Observation~\ref{obs:oracle-non-decreasing-phase-2}]
        On the other hand, we can also prove that $R_i^{(2)}(\lambda)$ is not only continuous with respect to $\lambda$ but also that it is non-decreasing. 
        For $\ell=n$ this is clear, as $\mu_j^D$ do not change with respect to $\lambda$, and only $\hg_n^\star=\lambda$ changes, which leaves $R_i^{(2)}=c\cdot (\lambda+\Mtone)-\Mtone$ that is increasing with respect to $\lambda$. 
        On the other hand, for $\ell<n$, we have that $\mu_j^D$ are non-increasing with respect to $\lambda$ as they are either fixed or equal to $\mu_j^\star=\left[1-\left[n\lambda - (n-j)\right]^+\right]^+$ which is indeed non-increasing on lambda. 
        The expression in Claim~\ref{claim:oracle-recurrence-R-phase-2} is non-increasing in $\mu_j^D$ (since $\mu_j^D\leq 1$); and thus, the whole thing is non-decreasing in $\lambda$.
    \end{proof}
    
    We now make the case analysis in the proof explicit.

    \smallskip
    \noindent\textbf{Case 0: $R_0^{(2)}(\lambda)<\MRtone$.}
    From Observation~\ref{obs:oracle-continuous-phase-2}, any index $i$ purely in Phase 2 has a continuous $R_{i-1}(\lambda)=R_{i-1}^{(2)}(\lambda)$ with respect to $\lambda$. From this first analysis, this also implies that in the case that $R_0^{(2)}<\MRtone$, then no $t_1$ exists, so only Phase 2 would be run and thus $R_0=R_0^{(2)}$. As we know that any $R_i^{(2)}$ is continuous, this already proves it for the case $R_0^{(2)}<\MRtone$. We assume from now on that $t_1$ does exist.

    Before analyzing Phase 1 assignments and the transition, we need to consider the variable $z$.
    The algorithm defines $t_2 := \max\Set{i \mid R_{i-1}^{(2)} \geq \MRttwo}$ and sets $z\coloneqq \sum_{m=t_2+1}^n [1-R_{m-1}^{(2)}/\MRttwo]^+\mu_m^D$.
    Because $R_{m-1}^{(2)} \ge \MRttwo$ for all $m \le t_2$, we have $\left[1 - \frac{R_{m-1}^{(2)}}{\MRttwo}\right]^+ = 0$.
    Thus, $z$ can be identically written as $z = \sum_{m=1}^n \left[1 - \frac{R_{m-1}^{(2)}}{\MRttwo}\right]^+ \cdot \mu_m^D$ as $t_2\geq 1$.
    The discrete jump associated with $t_2$ changing only adds or removes terms evaluating to $0$, making $z$ continuous with respect to the sequence $R_i^{(2)}$ and thus with respect to $\lambda$ by what we proved earlier.

    \smallskip
    \noindent\textbf{Case 1: $t_1<n$.}
    Next, we analyze the transition from Phase 2 to Phase 1, for now assuming that $t_1<n$. The algorithm defines $t_1$ as the largest index $i$ such that $R_{i-1}^{(2)} = R_i^{(2)}+\hg_i^{(2)} \ge \MRtone$, which implies that it has $R_{i}^{(2)}<\MRtone$, otherwise $t_1=i+1$. For a given $\lambda$ that has a specific value of $t_1(\lambda)$, we need to look at the right and left limits $\lambda^+=\lambda+\varepsilon$ and $\lambda^-=\lambda-\varepsilon$. By Observation~\ref{obs:oracle-non-decreasing-phase-2}, we know that $R_{i-1}^{(2)}(\lambda^-)\leq R_{i-1}^{(2)}(\lambda)\leq R_{i-1}^{(2)}(\lambda^+)$ for $i=t_1(\lambda)$.

    First, let us analyze the case of $\lambda$. For its $t_1\coloneqq t_1(\lambda)$, from Definition~\ref{def:t1-boundary-split} we have that the interpolation factor is $\alpha(\lambda) = \frac{\MRtone-R_{t_1}^{(2)}(\lambda)}{\hg_{t_1}^{(2)}(\lambda)}$, and that its fractional updates are: $\hg_{t_1}^B = \MRtone - R_{t_1}^{(2)}$ and $\hg_{t_1}^A = (1 - \alpha(\lambda))(1 - \mu_{t_1}^D) \cdot \Phi_{t_1}(\alpha(\lambda),\lambda)$, where $\Phi_i(\alpha(\lambda),\lambda)$ is the transition to Phase 1 generation function: 
    $\Phi_i(\alpha,\lambda) = \frac{1-[\MRttwo-R_{f_i}^{(2)}(\lambda)]^+}{n-i+\alpha\cdot(1-\mu_i^D(\lambda))-\Mfrac\cdot z(\lambda)}$, where the proof for using $R_{f_i}^{(2)}$ instead of $R_{f_i}$ is given in the next paragraph. 
    Thus, adding $R_{t_1}$ with the updates in Definition~\ref{def:t1-boundary-split}, the value for $R_{t_1-1}$ is:
    \begin{equation*}
        R_{t_1-1}(\lambda) = \MRtone + (1 - \alpha(\lambda))(1 - \mu_{t_1}^D(\lambda)) \cdot \Phi_{t_1}(\alpha(\lambda),\lambda)\,.
    \end{equation*}
    Some important observations about $\Phi_i(\alpha,\lambda)$ are the following. Note that the denominator is always positive for $i\leq t_1<n$ since $z\leq n-t_2 \leq n-i$ for $i\leq t_1\leq t_2$ and $\alpha \in [0,1]$, and that the function $\Phi_i(\alpha,\lambda)$ is both continuous in $\alpha$ and on $\lambda$, the latter one being because $R_{f_i}^{(2)}(\lambda)$, $\mu_i^D$ and $z(\lambda)$ are all continuous in $\lambda$. 
    Also note that we are using $R_{f_i}^{(2)}$ instead of $R_{f_i}$ because $R_{f_i}=R_{f_i}^{(2)}$ if $R_{f_i}\leq \MRttwo$ as $f_i$ would still be in Phase 2, and if they are different then $f_i$ would be fully in phase $1$, but it would already imply that both $R_{f_i}^{(2)}\geq \MRttwo$ and $R_{f_i}\geq \MRttwo$, thus $[\MRttwo-R_{f_i}^{(2)}(\lambda)]^+=[\MRttwo-R_{f_i}(\lambda)]^+=0$.
    Thus, proving that $\alpha(\lambda)$ is continuous in $\lambda$ proves that $\Phi_i(\alpha(\lambda),\lambda)$ is also continuous in $\lambda$.
    
    \smallskip
    \noindent\textbf{Case 1.1: Right limit $\lambda^+$.}
    Let us then look at the right limit $\lambda^+$, for a sufficiently small $\varepsilon$ such that $R_{t_1(\lambda)}^{(2)}(\lambda^+)<\MRtone$ is still true. This can be done thanks to the continuity from Observation~\ref{obs:oracle-continuous-phase-2}. By the non-decreasingness of $R_i^{(2)}(\lambda)$ from Observation~\ref{obs:oracle-non-decreasing-phase-2}, we know that $R_{t_1(\lambda)-1}^{(2)}(\lambda^+) \geq R_{t_1(\lambda)-1}^{(2)}(\lambda)\geq \MRtone$; implying that $\hg_{t_1}(\lambda^+)>0$ and that $t_1(\lambda^+)=t_1(\lambda)= t_1$.
    The interpolation factor is $\alpha(\lambda^+) = \frac{\MRtone-R_{t_1}^{(2)}(\lambda^+)}{\hg_{t_1}^{(2)}(\lambda^+)}$.
    The fractional updates are $\hg_{t_1}^B(\lambda^+) = \MRtone - R_{t_1}^{(2)}(\lambda^+)$ and $\hg_{t_1}^A(\lambda^+) = (1 - \alpha(\lambda^+))(1 - \mu_{t_1}^D(\lambda^+) ) \cdot \Phi_{t_1}(\alpha(\lambda^+), (\lambda^+))$.
    
    As $\varepsilon\to 0$ meaning $\lambda^+\to\lambda$, we have that $R_{{t_1}-1}^{(2)}(\lambda^+) \to R_{{t_1}-1}^{(2)}(\lambda)$ and thus $\hg_{t_1}^{(2)}(\lambda^+)\to \hg_{t_1}^{(2)}(\lambda)> 0$ by Observation~\ref{obs:oracle-continuous-phase-2}. Thus, $\alpha(\lambda^+) \to \frac{\MRtone-R_{t_1}^{(2)}(\lambda)}{\hg_{t_1}^{(2)}(\lambda)} = \alpha(\lambda)$ implying that $\alpha$ is right-continuous with respect to $\lambda$. Thus, the budget becomes:
    \begin{align*}
        R_{t_1-1}(\lambda^+)&=\MRtone + (1 - \alpha(\lambda^+))(1 - \mu_{t_1}^D(\lambda^+)) \cdot \Phi_{t_1}(\alpha(\lambda^+),\lambda^+) \\
        &\to  \MRtone + (1 - \alpha(\lambda))(1 - \mu_{t_1}^D(\lambda)) \cdot \Phi_{t_1}(\alpha(\lambda),\lambda) = R_{t_1-1}(\lambda)\,,
    \end{align*}
    where we used the continuity of $\Phi_i(\alpha,\lambda)$ with respect to $\alpha$ and $\lambda$. Having proven that the  limit is continuous even at $t_1$, now what is left for this part is to go up to $R_0$. Notice that the assignment formulas for indices at Phase 1 are $\hg_i(\lambda) = (1-\mu_i^D(\lambda))\cdot \Phi_i(0,\lambda)$, one can check that such a statement is correct by substitution and comparison. Since we have already proven that $\Phi_i(0,\lambda)$ is continuous in $\lambda$ and $\alpha$, and here we fixed $\alpha$ to 0, it implies that $\hg_i(\lambda^+)\to (1-\mu_i^D(\lambda))\cdot \Phi_i(0,\lambda)=\hg_i(\lambda)$. Thus, the value for $R_0(\lambda^+)$ is:
    \begin{equation*}
        R_0(\lambda^+) = R_{t_1-1}(\lambda^+) + \sum_{i=1}^{t_1-1}\hg_i(\lambda^+) \to R_{t_1-1}(\lambda) + \sum_{i=1}^{t_1-1}\hg_i(\lambda) = R_0(\lambda)\,,
    \end{equation*}
    which implies that the right limit is indeed equal to the value at $\lambda$.

    \smallskip
    \noindent\textbf{Case 1.2: Left limit $\lambda^-$.}
    Let us now look at the left limit. We split the analysis into two cases, first $R_{t_1(\lambda)-1}^{(2)}>\MRtone$ and then $R_{t_1(\lambda)-1}^{(2)}=\MRtone$. 

    \smallskip
    \noindent\textbf{Case 1.2.a: $R_{t_1(\lambda)-1}^{(2)}>\MRtone$.}
    We first take $R_{t_1(\lambda)-1}^{(2)}>\MRtone$, and do the analysis for a sufficiently small $\varepsilon$ such that $R_{t_1(\lambda)-1}^{(2)}(\lambda^-)\geq \MRtone$ is still true. This can be done thanks to the continuity from Observation~\ref{obs:oracle-continuous-phase-2}. 
    By the non-decreasingness of $R_i^{(2)}(\lambda)$ from Observation~\ref{obs:oracle-non-decreasing-phase-2}, we know that $R_{t_1(\lambda)}^{(2)}(\lambda^-) \leq R_{t_1(\lambda)}^{(2)}(\lambda)<\MRtone$; and, thus, $t_1(\lambda^-)=t_1(\lambda)$. We can see that for this case we can apply exactly the same analysis as the previous right limit; so it can be proven that the left limit is equal to the value at $\lambda$, for the case $R_{t_1(\lambda)-1}^{(2)}>\MRtone$.

    \smallskip
    \noindent\textbf{Case 1.2.b: $R_{t_1(\lambda)-1}^{(2)}=\MRtone$.}
    For the next case, assuming that $R_{t_1(\lambda)-1}^{(2)}(\lambda)=\MRtone$ exactly, if we decrease $\lambda$ by $\varepsilon$, then it could be that we make $R_{t_1(\lambda)-1}^{(2)}(\lambda^-)<\MRtone$ (by Observation~\ref{obs:oracle-non-decreasing-phase-2}), implying that $t_1$ would have to decrease. Thus, this case needs more attention. 
    First, let us assume this does indeed happen, that for all $\varepsilon \in (0,\delta]$ for some arbitrarily small $\delta$ we already have that $R_{t_1(\lambda)-1}^{(2)}(\lambda-\varepsilon)<\MRtone$. If we were to assume the opposite, then the previous analysis would already work.

    Let us find what the value of $t_1^-\coloneqq t_1(\lambda-\varepsilon)$ could be for small $\varepsilon$. Let $k\coloneqq \max\{j<t_1:\mu_j^D(\lambda)<1\}$, if this set is nonempty. Using the product formula for $R_{t_1-1}^{(2)}$, the only $\mu$-entry that changes when moving left and affects $R_{t_1-1}^{(2)}$ lies at an index $\geq t_1$, so all $\mu_j^D$ with $j<t_1$ must be locally constant since $R_{t_1-1}^{(2)}$ does decrease and there can be only one changing value by Observation~\ref{obs:oracle-valid-input}. Hence $k$ is fixed for all small perturbations. Also, for $r\in(k,t_1)$, $\mu_r^D=1$, so $\hg_r^{(2)}=0$, hence $R_k^{(2)}(\lambda')=R_{t_1-1}^{(2)}(\lambda')$ for $\lambda'$ near $\lambda$. Since $R_{t_1-1}^{(2)}(\lambda')\to {\MRtone}$ and $\hg_k^{(2)}>0$ for $\lambda'\to\lambda$, for all sufficiently small $\varepsilon>0$, $0<{\MRtone}-R_{t_1-1}^{(2)}(\lambda-\varepsilon)<\hg_k^{(2)}(\lambda-\varepsilon)$. Therefore $R_k^{(2)}(\lambda-\varepsilon)<{\MRtone}\leq R_{k-1}^{(2)}(\lambda-\varepsilon)$, so $t_1^-=t_1(\lambda-\varepsilon)=k$ for all small enough $\varepsilon$.
    
    \smallskip
    \noindent\textbf{Case 1.2.b.i: $t_1^-\neq \perp$.}
    Then, assuming that we are able to find such a value, i.e., $t_1^-\neq \perp$, we have $R_{t_1^- -1}^{(2)}(\lambda^-)\geq \MRtone$ and $R_{t_1^-}^{(2)}(\lambda^-) < \MRtone$. Because $\hg_\ell = 0$ for all intermediate indices $\ell \in (t_1^-, t_1)$ at both $\lambda$ and $\lambda^-$, the budget remains constant across this gap, meaning $R_{t_1^-}^{(2)}(\lambda^-) = R_{t_1-1}^{(2)}(\lambda^-)$. 
    For this index $t_1^-$, the interpolation factor is $\alpha(\lambda^-) = \frac{\MRtone - R_{t_1^-}^{(2)}(\lambda^-)}{\hg_{t_1^-}^{(2)}(\lambda^-)}$. Let us compute its limit as $\varepsilon \to 0$ (so $\lambda^- \to \lambda$).
    First, the numerator approaches $\MRtone - R_{t_1-1}^{(2)}(\lambda)$. Since we are specifically in the case where $R_{t_1-1}^{(2)}(\lambda) = \MRtone$, the numerator converges to $0$. 
    Second, the denominator $\hg_{t_1^-}^{(2)}(\lambda^-)$ converges to $\hg_{t_1^-}^{(2)}(\lambda) = (1 - \mu_{t_1^-}^D(\lambda)) \frac{R_{t_1^-}^{(2)}(\lambda) + \Mtone}{n - t_1^-}$. Because $\mu_{t_1^-}^D(\lambda) < 1$ and $R_{t_1^-}^{(2)}(\lambda) = \MRtone$, this denominator strictly converges to a positive value. 
    Thus, evaluating the limit yields $\alpha(\lambda^-) \to 0$.

    This implies that the values for the updates are $\hg_{{t_1^-}}^B(\lambda^-) = \MRtone - R_{{t_1^-}}^{(2)}(\lambda^-)$ and $\hg_{{t_1^-}}^A (\lambda^-)= (1 - \alpha(\lambda^-))(1 - \mu_{{t_1^-}}^D(\lambda^-)) \cdot \Phi_{{t_1^-}}(\alpha(\lambda^-),\lambda^-)$. As $\lambda^-\to\lambda$, we have that $\hg_{{t_1^-}}^B(\lambda^-)\to \MRtone - R_{{t_1^-}}^{(2)}(\lambda)=0$, and that $\hg_{{t_1^-}}^A (\lambda^-)\to (1 - 0)(1 - \mu_{{t_1^-}}^D(\lambda)) \cdot \Phi_{{t_1^-}}(0,\lambda)=(1 - \mu_{{t_1^-}}^D(\lambda)) \cdot \Phi_{{t_1^-}}(0,\lambda)=\hg_{t_1^-}(\lambda)$; where we used the fact that $\Phi$ is continuous in both $\alpha$ and $\lambda$, and where we used the definition of $\hg_{t_1^-}(\lambda)$ for $t_1^-$ in Phase 1. As all the $\hg_j$ for $j$ between $t_1^-$ and $t_1$ are $0$ and for $t_1$, we have that $R_{t_1-1}(\lambda^-)\to\MRtone$, we have that $R_{t_1^- -1}(\lambda^-) = R_{t_1-1}(\lambda^-) + \hg_{t_1^-}(\lambda^-) \to \MRtone + \hg_{t_1^-}(\lambda) = R_{t_1^- -1}(\lambda)$. From this point on, we can apply the same argument as for $\lambda^+$ to get that $R_0(\lambda^-)\to R_0(\lambda)$, and thus it is continuous for our cases in which $t_1<n$, $t_1^-$ does exist and where $R_{t_1(\lambda)-1}^{(2)}=\MRtone$ exactly.

    \smallskip
    \noindent\textbf{Case 1.2.b.ii: $t_1^-=\perp$.}
    Now, assuming that $t_1^-=\perp$, this implies that all $j<t_1$ have $\mu_j^D(\lambda-\varepsilon)=1$ for the range of $\varepsilon$ we were considering. This implies that for all $j$, we have that $R_j(\lambda-\varepsilon)=R_j^{(2)}(\lambda-\varepsilon)$ for $\varepsilon\geq 0$ (note that it is also true with equality), which is true because the value of $R_j^{(2)}(\lambda-\varepsilon)\leq \MRtone$. Since $R_0^{(2)}(\lambda)$ is already continuous in $\lambda$, it implies that $R_0^{(2)}(\lambda^-)\to R_0^{(2)}(\lambda)$, which using the previous fact for these values, it implies $R_0(\lambda^-)\to R_0(\lambda)$ which proves what we want.

    \smallskip
    \noindent\textbf{Case 2: $t_1=n$.}
    To finish this transition analysis, we must explicitly analyze the boundary case where $t_1 = n$, which implies that $t_3^\star=n$ and that $R_{n-1}=\hg_n^\star=\lambda\geq \MRtone$.

    \smallskip
    \noindent\textbf{Case 2.a: $\lambda>\MRtone$.}
    If $\lambda > \MRtone$, continuity is easy to prove since we can pick a small enough $\varepsilon$ such that $t_1$ does not change. For such a case, the value for $R_{t_1-1}=\lambda$ is clearly continuous in $\lambda$, and since all other indices $j<n$ are in Phase 1, and all operations in Phase 1 are continuous in $\lambda$, the result of $R_0(\lambda)$ is continuous in $\lambda$.

    \smallskip
    \noindent\textbf{Case 2.b: $\lambda=\MRtone$.}
    For the case of $\lambda=\MRtone$, things are a bit trickier. For the right limit, $\lambda^+=\lambda+\varepsilon$, the same argument as before can apply. It is the left limit that is more difficult. For such a case, $t_1^-=t_1(\lambda^-)$ will not be $n$, but instead some smaller index. Just as before for the case $t_1<n$, for a small enough $\varepsilon$, $t_1^-$ is the biggest index $j<n$ with $\mu_j^D<1$ if we take $\lambda^-=\lambda-\varepsilon$, which simplifies the analysis.

    \smallskip
    \noindent\textbf{Case 2.b.i: $t_1^-=\perp$.}
    If such an index does not exist it both implies $\mu_j^D=1$ for all $j$ and $t_1^-=\perp$. Thus, since $\mu_j^D=1$ we have that $\hg_j=0$ for $j<n$, which implies $R_0 (\lambda^-)= R_{n-1}(\lambda^-)=\lambda^-$ which is indeed continuous on $\lambda$, so $R_0(\lambda^-)= R_{n-1}(\lambda^-)=\lambda^-\to \lambda=\MRtone$. 

    \smallskip
    \noindent\textbf{Case 2.b.ii: $t_1^-\neq \perp$.}
    We now assume that $t_1^-$ does exist. As $R_{t_1^-}(\lambda^-)=R_{n-1}(\lambda^-)=\lambda^-<\MRtone$, we have that $R_{t_1^-}(\lambda^-)=R_{t_1^-}^{(2)}(\lambda^-)$. For such a case, we can compute things just as previously in the case that $t_1^-$ changes for the original $t_1<n$ and we get that the interpolation factor is $\alpha(\lambda^-) = \frac{\MRtone - \lambda^-}{\hat{g}_{t_1^-}^{(2)}}$. 
    As $\lambda^- \to \MRtone$, $\alpha(\lambda^-) \to 0$ since we have that $\hg_{t_1^-}(\lambda')>0$ for $\lambda'\in[\lambda^-,\lambda]$. 
    The fractional updates at $t_1^-$ are $\hat{g}_{t_1^-}^B(\lambda^-) = \MRtone - \lambda^- \to 0$ and $\hat{g}_{t_1^-}^A(\lambda^-) = (1 - \alpha(\lambda^-))(1 - \mu_{t_1^-}^D(\lambda^-)) \cdot \Phi_{t_1^-}(\alpha(\lambda^-),\lambda^-) \to (1 - \mu_{t_1^-}^D(\lambda)) \cdot \Phi_{t_1^-}(0, \lambda)$ using the same argument as every time that $\Phi(\alpha,\lambda)$ is continuous in both $\alpha$ and $\lambda$. 
    The budget becomes $R_{t_1^--1}(\lambda^-) = \lambda^- + \hat{g}_{t_1^-}^B(\lambda^-) + \hat{g}_{t_1^-}^A(\lambda^-) \to \MRtone + (1 - \mu_{t_1^-}^D) \cdot \Phi_{t_1^-}(0,\lambda)$, which we can check that is the same as it is for $R_{t_1^--1}(\lambda)$. Now, using the same repeated arguments in this section, we can get that $R_0(\lambda^-)\to R_0(\lambda)$, and thus, even for this case, the continuity is proven.

    Therefore, since all the possible cases have been looked at, continuity of $R_0(\lambda)$ is proven.
\end{proof}

\bibliographystyle{alpha}
\bibliography{literature}

\newpage
\appendix

\section{The Factor-Revealing Program Bounds the LMP-Approximation Ratio}
\label{sec:proof-lmp-approx-ratio-greedy}
In this section we will prove Theorem \ref{theorem:facility-location} which states that our algorithm is indeed a $\rho$-LMP approximation for facility location if $\rho$ is an upper bound on the optimal value of the factor-revealing program

\theoremfacilitylocation*

For a client $j \in D$, we denote by $\alpha_j^\star$ its final $\alpha_j$ value in Algorithm \ref{alg:dual_generalized}.
For the purposes of the analysis, we order the clients from $1$ to $n=|D|$ in increasing order of $\alpha$-values, $\alpha^\star_1 \le \alpha^\star_2 \le \dots \le \alpha^\star_n$. If $\alpha^\star_k = \alpha^\star_j$ for two clients $j \neq k \in D$ then break ties according to the order of events in which the clients became inactive. Note that with this order if $j \le k$ then $j$ becomes inactive strictly before $k$ or in the same event as $k$.
We denote by $\SS^{(k)}$ the set of open facilities just before the event in which client $k$ becomes inactive.
Moreover, $\SS^{(n+1)} = \SS^\star$ is the set of open facilities at the end of the algorithm. We say a client $j$ becomes inactive in an Event 1 or Event 2 if $j$ is removed from the set of active clients $A$ during this event.
For $1 \le j \leq k \le n$, we set 
\[
    r_{jk} = 
    \begin{cases}
        \alpha^\star_j & \text{$j$ and $k$ become inactive in the same event}, \\
        d(j, \SS^{(k)}) & \text{otherwise}.
    \end{cases}
\]
We also set $r_{j,n+1} = d(j, \SS^\star)$. 

\begin{observation}
\label{obs:r-mono}
We have $r_{j,k} \geq r_{j,k+1}$ for all $1 \leq j \leq k \leq n$.
\end{observation}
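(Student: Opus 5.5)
The claim to prove is Observation~\ref{obs:r-mono}: $r_{j,k}\ge r_{j,k+1}$ for all $1\le j\le k\le n$. I would argue by a case analysis on the events in which $j$, $k$ and $k+1$ become inactive. This mirrors Observation~\ref{obs:r-noninc} and Lemma~\ref{lem:fr-r-mono} from the log-adaptive section, with the $(1-\delta)$ scaling removed. Because of the chosen order, $j\le k\le k+1$ means that $j$ becomes inactive no later than $k$, and $k$ no later than $k+1$. Three configurations are therefore possible.

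\textbf{Case 1: $j$ becomes inactive strictly before $k$.} Then $j$ is also strictly before $k+1$, or $k+1=n+1$. Both values are distances to a current solution: $r_{j,k}=d(j,\SS^{(k)})$ and $r_{j,k+1}=d(j,\SS^{(k+1)})$, where $\SS^{(n+1)}=\SS^\star$. Facilities are only ever added, so $\SS^{(k)}\subseteq\SS^{(k+1)}$, and $d(j,\cdot)$ is non-increasing under inclusion. This gives the inequality.

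\textbf{Case 2: $j$ and $k$ become inactive in the same event, but $k+1$ in a strictly later event (or $k=n$).} Here $r_{j,k}=\alpha^\star_j$ and $r_{j,k+1}=d(j,\SS^{(k+1)})$. This is the only step needing a real argument. Let $\SS'$ be the set of open facilities right after the event in which $j$ becomes inactive; then $\SS'\subseteq\SS^{(k+1)}$. I claim that $\alpha^\star_j\ge d(j,\SS')$. If the event is Event~2, then $\alpha_j=d(i,j)$ for some open facility $i$. If it is Event~1, then $j$ is removed exactly because $\alpha_j\ge d(j,\SS)$ for the updated set $\SS$. In both cases the claim holds, and hence
\[
\alpha^\star_j\ \ge\ d(j,\SS')\ \ge\ d(j,\SS^{(k+1)}).
\]

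\textbf{Case 3: $j$, $k$ and $k+1$ all become inactive in the same event.} Then $r_{j,k}=r_{j,k+1}=\alpha^\star_j$, and the inequality holds with equality.

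The only subtlety I expect is the bookkeeping when several events happen at the same time $\alpha$, or when a client's removal is processed during a tie-broken sequence of events. Because $r_{j,k}$ is defined by whether $j$ and $k$ leave in the \emph{same event}, not at the same time, the case split above is exhaustive. The one fact needed in Case~2, that a removed client satisfies $\alpha_j\ge d(j,\SS)$ right after its event, follows directly from the removal rules of Algorithm~\ref{alg:dual_generalized}.
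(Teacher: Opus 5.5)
Your proof is correct and follows the same route as the paper. The paper uses the same three ingredients: equality when the relevant clients leave in the same event, monotonicity of $\SS^{(k)}$ under the chosen ordering, and the fact that $\alpha^\star_j \ge d(j,\SS)$ once $j$ has become inactive. Your version simply spells out explicitly the case split that the paper compresses into two lines.
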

\begin{proof}
    We have trivially that $\alpha_j^\star \leq \alpha_j^\star$ and $d(j, \SS^{(k)}) \geq d(j, \SS^{(k')})$ for $k < k'$ as $\SS^{(k)}$ is monotone.
    Moreover, $\alpha_j^\star \geq d(j, \SS^{(k)})$ once $j$ is inactive with respect to $\SS^{(k)}$ so the claim follows.
\end{proof}
First we will upper bound the cost of Algorithm \ref{alg:dual_generalized} by the $r$ variables defined above. 
\begin{lemma}
\label{lemma:alg-cost-ufl}
    The cost (with respect to opening cost $\hat{f}$) of a solution output by Algorithm \ref{alg:dual_generalized} is at most
    \[\sum_{j=1}^n \Paren{\sum_{i=j}^{n} \beta(r_{j,i}, r_{j,i+1}) + r_{j,n+1}}\,.\]
\end{lemma}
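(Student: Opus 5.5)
The plan is to mirror the proof of Lemma~\ref{lemma:alg-cost} for the log-adaptive algorithm, in the simpler setting of Algorithm~\ref{alg:dual_generalized} where no metric scaling occurs.

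First, every facility $i\in\SS^\star$ is opened in an Event 1, at which time $\sum_{j}\text{bid}(j,i)=\hat f$. Hence the total opening cost $|\SS^\star|\hat f$ equals the sum over opened facilities of the bids at their opening times. I would charge each such bid to the client that made it. This gives
\[
\cost_j=\sum_{i\in\SS^\star}\text{bid}_{\text{at open}}(j,i)+d(j,\SS^\star),
\]
and it suffices to bound each client's charge separately.

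Second, fix a client $j$ and bound its bids.

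\emph{Facilities opened while $j$ is active.} Consider a facility $i$ opened in an event strictly before the event in which $j$ becomes inactive. After that event $j$ remains active, so $\alpha<d(j,\SS\cup\{i\})\le d(i,j)$. Since $\gamma\ge1$, budget-feasibility gives $\beta(\alpha,d(i,j))\le[\alpha-d(i,j)]^+=0$, so such facilities receive no bid from $j$.

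\emph{Facilities opened after $j$ is inactive.} Let $\SS_0\subsetneq\SS_1\subsetneq\dots\subsetneq\SS_m=\SS^\star$ be the successive sets of open facilities. Define $R_{j,k}=\alpha_j^\star$ if $j$ is active at the start of the event opening $\SS_{k+1}\setminus\SS_k$ (which is then the event in which $j$ freezes), and $R_{j,k}=d(j,\SS_k)$ otherwise. Set $R_{j,m}=d(j,\SS^\star)$. Then:
\begin{itemize}
\item At the freezing event, $j$ bids $\beta(\alpha_j^\star,d(i,j))$.
\item At any later opening, $j$ bids $\beta(d(j,\SS_k),d(i,j))$.
\end{itemize}
Since $d(i,j)\ge d(j,\SS_{k+1})=R_{j,k+1}$ and $\beta$ is non-increasing in its second argument, each bid is at most $\beta(R_{j,k},R_{j,k+1})$.

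\emph{Simultaneous events.} Events that happen at the same time are processed sequentially. If several facilities open inside one "event" of $k$'s inactivation, each bid is still bounded by the corresponding consecutive $R$ pair.

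Third, I would check that $r_{j,i}=R_{j,s(i)}$, where $s(i)$ is the index with $\SS_{s(i)}=\SS^{(i)}$, and that $R_{j,\cdot}$ is non-increasing, by the same argument as Observation~\ref{obs:r-mono}. Grouping the opening indices $k\in[s(i),s(i+1))$ between consecutive client inactivations and applying superadditivity (Definition~\ref{def:admissible-bidding}) gives
\[
\sum_{k=s(i)}^{s(i+1)-1}\beta(R_{j,k},R_{j,k+1})\le\beta(r_{j,i},r_{j,i+1}).
\]
The connection cost is $d(j,\SS^\star)=r_{j,n+1}$. Summing over $j$ gives the claimed bound. Two further points need checking:
\begin{itemize}
\item Indices $k<s(j)$ contribute nothing, by the zero-bid argument above.
\item For $i<j$ there are no terms, since $j$ becomes inactive no earlier than the event associated with index $j$.
\end{itemize}

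The main obstacle is the bookkeeping identity $r_{j,i}=R_{j,s(i)}$ around events where several clients freeze or several facilities open at once. The hardest case is the event in which $j$ itself freezes: there $r_{j,i}=\alpha_j^\star$ must be matched to the first $R$ term. An Event 2 freezing (no new facility) needs separate attention, where $\alpha_j^\star=d(j,\SS)$, so the two definitions agree. I would resolve this by careful case analysis on the event type, using the tie-breaking order fixed above.
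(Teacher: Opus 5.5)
Your proposal is correct and follows essentially the same route as the paper. Both charge each opened facility's $\hat f$ to the bids made at its opening time, bound client $j$'s bid at each opening by $\beta$ of its consecutive connection values (starting from $\alpha_j^\star$), collapse these onto the $r$-grid via superadditivity, and let the connection cost supply $r_{j,n+1}$.

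Your write-up is in fact more explicit than the paper's about the zero bids before $j$ freezes and the Event~2 case $\alpha_j^\star=d(j,\SS^{(j)})$. The only nit is that the appeal to $\gamma\ge 1$ is unnecessary: budget-feasibility of any admissible $\beta$ already gives $\beta(\alpha,d(i,j))=0$ whenever $\alpha<d(i,j)$.
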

\begin{proof}    
For a client $j$, its connection cost is $r_{j,n+1} = d(j,\SS^\star)$. Further, we need to bound the contribution of $j$ to the opening of each facility in $\SS^\star$. When opening a facility $i \in \FF$ we have 
\[
    \sum_{j \in A} \beta(\alpha_j, d(i,j)) + \sum_{j \in D \setminus A} \beta(d(j,\SS),d(i,j)) = \hat{f}.
\]
Let $\emptyset = \SS_0 \subsetneq \SS_1 \subsetneq \ldots \subsetneq \SS_m = \SS^\star$ with $m = |\SS^\star|$ be the sets of opened facilities during the execution of the algorithm. We are interested in the total contributions of client $j$ to facility payments. Name $\ell$ the index when they got connected. Then, their payments (discarding the final connection cost) are:
\begin{equation*}
    \beta(\alpha_j, d(j,\SS_\ell))+  \sum_{i=\ell}^{m-1} \beta(d(j,\SS_i), d(j, \SS_{i+1}))
\end{equation*}
To go from all the distances $d(j,\SS_\ell)$ to just the chosen set $r_{j,i}$, we repeatedly use the 
\emph{superadditivity} of the bidding function as per Definition~\ref{def:admissible-bidding}.
This property implies that, for any two times $p<q$, we can bound $\sum_{\ell=p}^{q-1}\beta(d(j,\SS_\ell), d(j,\SS_{\ell+1}))$ by $\beta(d(j,\SS_p),d(j,\SS_q))$. The proof for the lemma follows directly from this.

First, let us take care of the term with $\alpha_j^\star$. Take the first time $i$ that has $r_{j,i+1}$ not equal to $\alpha_j^\star$. If no such $i$ exists, then all facility payments of $j$ are zero and the bound for this term is trivial. If there are any previous terms $\beta(r_{j,i'},r_{j,i'+1})$ with $i'<i$, both function inputs are the same value and thus all the previous terms evaluate to $0$. Then enumerate all sets $\SS_\ell \subsetneq \ldots \subsetneq \SS_{\ell+k} = \SS^{(i+1)}$. By the \emph{superadditivity}, we can upper bound the terms:
\begin{equation*}
    \beta(\alpha_j^\star, d(j,\SS_\ell)) + \sum_{p=\ell}^{\ell+k-1} \beta(d(j,\SS_p), d(j, \SS_{p+1})) \leq 
    \beta(\alpha_j^\star, d(j, \SS^{(i+1)})) = 
    \beta(r_{j,i}, r_{j,i+1}))
\end{equation*}

Now, let us do the same for the terms that do not include the term $\alpha_j^\star$. For an index $i$ with $r_{j,i}=d(j,\SS^{(i)})$, enumerate all sets $\SS^{(i)} = \SS_\ell \subsetneq \ldots \subsetneq \SS_{\ell+k} = \SS^{(i+1)}$, and let there be $k$ such sets. By the \emph{superadditivity}, we can upper bound the terms:
\begin{equation*}
    \sum_{p=\ell}^{\ell+k-1} \beta(d(j,\SS_p), d(j, \SS_{p+1})) \leq 
    \beta(d(j, \SS^{(i)}), d(j, \SS^{(i+1)})) = 
    \beta(r_{j,i}, r_{j,i+1}))
\end{equation*}
Adding all of these together proves the lemma.
\end{proof}

\begin{lemma}
\label{lem:star-cost}
    Let $\rho \geq 1$ be an upper bound on the optimal value of the factor-revealing program in Definition~\ref{def:frp}. 
    For some facility $i^\star \in \FF$, consider any subset of clients $D^\prime \subseteq D$
    Then
    \[
    \sum_{j \in D^\prime} \Paren{\sum_{i=j}^n \beta(r_{j,i}, r_{j,i+1}) + r_{j,n+1}} \leq \hat{f} + \rho \cdot \sum_{j \in D^\prime} d(j, i^\star)\,.
    \]
\end{lemma}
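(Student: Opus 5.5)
The plan is to mirror the proof of Lemma~\ref{lem:dgood-cost} from the log-adaptive section, now in the simpler setting of Algorithm~\ref{alg:dual_generalized}. Here no restriction of the form $\alpha_j^\star \ge \rho\, d(j,i^\star)$ is needed, since the structural Lemmas~\ref{lem:structure} and~\ref{lem:freezing} are not required. Reindex $D'$ as $1,\dots,|D'|$ in the order inherited from $D$, and reindex $n+1$ as $|D'|+1$. For $j\le k$ in $D'$, use the already defined $r_{j,k}$; set $d_j=d(j,i^\star)$ and take $\hat f$ as the opening cost. We then check every constraint of Definition~\ref{def:frp} except normalization. After that we rescale by $\dsum=\sum_{j\in D'}d(j,i^\star)$, which is positive by \eqref{eq:dist-bound} (the case $D'=\emptyset$ is trivial). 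This is legitimate because all constraints and the objective are homogeneous of degree one, by scale equivariance of $\beta$. Applying the bound $\rho$ to the rescaled point and multiplying back by $\dsum$ then gives the inequality for the subsequence indices.

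The constraints are checked as follows. Monotonicity \eqref{eq:fr-r-mono} is Observation~\ref{obs:r-mono}. For the triangle constraint \eqref{eq:fr-triangle}, two cases arise.
\begin{itemize}
\item If $j$ became inactive strictly before $k$, then $r_{jk}=d(j,i_0)$ for some $i_0\in\SS^{(k)}$. Since an active client never has $\alpha$ exceeding its distance to an open facility, we get $\alpha_k^\star\le d(k,\SS^{(k)})\le d(k,i^\star)+d(i^\star,j)+d(j,i_0)$.
\item If $j$ and $k$ became inactive in the same event, then $r_{jk}=\alpha^\star_j=\alpha_k^\star$, because all active clients share the same $\alpha$.
\end{itemize}
The metric constraints \eqref{eq:fr-r-metric} follow in the same way: when both clients are already inactive, $j$ is at least as close to its own facility as to $\ell$'s facility. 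When one of them becomes inactive at $k$'s event, its $r$ equals the common $\alpha$, which is at least the distance of the earlier-frozen client. When $k=n+1$, it is the direct triangle inequality.

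For the facility constraint \eqref{eq:fr-facility}, consider the moment just before $k$'s event. Facility $i^\star$ is unopened, or else all bids to it vanish. At that moment, the clients $j\ge k$ are active with value $\alpha_k^\star$, and the clients $j<k$ bid $\beta(r_{jk},d_j)$. This holds both for earlier-frozen clients and for those frozen in the same event, where $r_{jk}=\alpha^\star_j$. Bids never exceed $\hat f$, so the constraint holds. Dropping the clients outside $D'$ only decreases the left-hand side, by nonnegativity of $\beta$.

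The remaining point, which I expect to be the main subtlety, is that the factor-revealing objective for the subsequence uses only the consecutive pairs $\beta(r_{j,i},r_{j,q(i)})$ with $q(i)$ the next index in $D'\cup\{n+1\}$, whereas the lemma sums $\beta(r_{j,i},r_{j,i+1})$ over all $i$ from $j$ to $n$. We bridge this with superadditivity together with Observation~\ref{obs:r-mono}: grouping indices $i'\in[i,q(i))$ gives $\sum_{i'}\beta(r_{j,i'},r_{j,i'+1})\le \beta(r_{j,i},r_{j,q(i)})$. The first block, starting at $i=j\in D'$, is handled the same way. This yields the claimed bound with $\hat f+\rho\sum_{j\in D'}d(j,i^\star)$. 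Simultaneous events and ties must be handled with care in the ordering, and we will use the convention stated in the appendix for this.
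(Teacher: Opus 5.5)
Your proposal is correct and follows the same route as the paper's proof. You restrict and reindex to $D'$ and verify \eqref{eq:fr-r-mono}--\eqref{eq:fr-nonneg}: monotonicity via Observation~\ref{obs:r-mono}, the triangle and metric constraints by event timing, and the facility constraint via the bids to $i^\star$ just before $k$'s event. You then rescale by $\dsum$ and collapse the skipped indices with superadditivity.

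One case in \eqref{eq:fr-r-metric} is only implicit in your sketch: the left-hand client $j$ freezes in $k$'s event while $\ell$ froze earlier. There your ``common $\alpha$'' remark points the wrong way. You need the triangle argument $\alpha^\star_j\le d(j,\SS^{(k)})\le d_j+d_\ell+r_{\ell k}$, which is exactly how the paper handles it (via $r_{\ell k}=r_{\ell j}$ and \eqref{eq:fr-triangle}).
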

\begin{proof}
The case $D^\prime = \emptyset$ is immediate, so assume $D^\prime \neq \emptyset$.
First, we will prove that the variables $(r,d,\hat f)$ supported on $D^\prime$ are feasible for the factor-revealing program \ref{def:frp} without the normalization constraint \ref{eq:fr-norm}. Here, we re-index the variables in $D^\prime$ to $1, \dots, |D^\prime|$ and $n+1$ to $|D^\prime|+1$. It is important that we keep the original order of the variables. For $k \in D^\prime$, we set $d_k = d(i^\star,k)$. 
\begin{claim}
    The variables $(r,d,\hat f)$ satisfy constraints \eqref{eq:fr-r-mono} to \eqref{eq:fr-nonneg}.
\end{claim}
\paragraph{Monotonicity \eqref{eq:fr-r-mono}} Holds by Observation \ref{obs:r-mono}.
\paragraph{Triangle constraint \eqref{eq:fr-triangle}} 
We have to show that for $j < k$ we have $r_{kk} \le r_{jk} + d_k + d_j$
By definition $r_{kk} = \alpha^\star_k$. If $k$ and $j$ become inactive in the same event, then $r_{jk} = \alpha^\star_j = \alpha^\star_k = r_{kk}$. Otherwise, $r_{jk} = d(j, \SS^{(k)}) = d(j,i)$ for some facility $i \in \SS^{(k)}$. Since the facility $i$ is open while $k$ is still active,
\[
    r_{kk} = \alpha^\star_k \le d(i,k) \le d(k,i^\star) + d(j,i^\star) + d(j,i) = r_{jk} + d_j + d_k.
\]
\paragraph{Metric inequalities of $r$ \eqref{eq:fr-r-metric}}
We have to show that for $j \neq k < l$ we have $r_{jl} \le r_{kl} + d_j + d_k$. First assume that $j$ becomes inactive strictly before $l$, $r_{jl} = d(j, \SS^{(l)}) = d(j,i_1)$ for some facility $i_1 \in \SS^{(l)}$. Similar, if $k$ becomes inactive strictly before $l$, $r_{kl} = d(k, \SS^{(l)}) = d(k,i_2)$ for some facility $i_2 \in \SS^{(l)}$. Then,
\[
    r_{jl} = d(j,i_1) \le d(j,i_2) \le d(j,i^\star) + d(k, i^\star) + d(k,i_2) = r_{kl} + d_j + d_k.
\]
Now, if $k$ becomes inactive in the same event as $l$ then $r_{kl} = \alpha^\star_k$. However, note that $r_{jl} \le r_{jj} = \alpha^\star_j \le \alpha^\star_{k}$ since $k$ becomes inactive strictly after $j$.
\newline Next, assume that $j$ becomes inactive in the same event as $l$, $r_{jl} = r_{jj} = \alpha^\star_j$. Similar, if $k$ becomes inactive in the same event then,
\[
    r_{jl} = \alpha^\star_j = \alpha^\star_k = r_{kl}.
\]
In the case where $k$ becomes inactive strictly before $j$ we have $r_{kl} = d(k, \SS^{(l)}) = d(k,\SS^{j}) = r_{kj}$ because $j$ and $l$ become inactive in the same event. Now, the inequality follows from the Triangle constraint \eqref{eq:fr-triangle}. 
\paragraph{Facility constraint \eqref{eq:fr-facility}} We have to show that for all $k \in D^\prime$ we have
\[
    \sum_{\substack{j < k \\ j \in D^\prime}} \beta(r_{jk},d_j)
  \;+\;
  \sum_{\substack{j \ge k \\ j \in D^\prime}} \beta(r_{kk},d_j)
  \;\le\; \hat{f}.
\]
To this end consider the event in which $k$ becomes inactive. Just before this event, we have that 
\[
    \text{bid}(j,i^\star) = \beta(\alpha^\star_k ,d(i^\star,j)) = \beta(r_{kk},d(i^\star,j))
\] 
for all active clients $j \in A$. Moreover, 
\[
    \text{bid}(j,i^\star) = \beta(d(j,\SS^{(k)}),d(i^\star,j)) = \beta(r_{jk},d(i^\star,j))
\]
 for all inactive clients $j \in D \setminus A$. Note that if $j$ and $k$ become inactive in the same event then $r_{kk} = r_{jk} = \alpha^\star_k = \alpha^\star_j$. If $i^\star \in \SS^{(k)}$, all bids are zero, because $\alpha^\star_k \le d(j,\SS^{(k)})$ for active clients and $d(j,\SS^{(k)}) \le d(j,i^\star)$ for all clients. If $i^\star$ is not yet open then the inequality is satisfied since the sum of all bids $\sum_{j \in D} \text{bid}(j,i^\star)$ towards an unopened facility $i^\star$ can never exceed $\hat{f}$. 
\newline
\newline
 Next, we will normalize the variables as to satisfy \eqref{eq:fr-norm}. To this end let $\dsum = \sum_{j \in D^\prime} d(j, i^\star) \ge 1$. Since all constraints \eqref{eq:fr-r-mono} to \eqref{eq:fr-nonneg} are scale invariant we have that $(r/\dsum, d/\dsum, \hat{f}/\dsum)$ is now feasible for the factor-revealing program \ref{def:frp}. In order to index according to the original set of clients $D$ define $q(i) = \min \Set{i' \mid i' > i, i' \in D^\prime \cup \{n+1\}}$. Using the $\rho$ is an upper bound on the value of the program we have,
\[
    \frac{\sum_{j \in D^\prime} \Paren{\sum_{\substack{i \in D^\prime\\i \geq j}} \beta\Paren{r_{j,i}, r_{j,q(i)}} + r_{j,n+1}} - \hat{f}}{\dsum} \le \rho.
\]
Finally, because $r_{j,i} \geq r_{j,i+1}$ for all valid $j,i$, by the superadditivity of $\beta$ the above also implies that
\[\sum_{j \in D^\prime} \Paren{\sum_{i=j}^n \beta(r_{j,i}, r_{j,i+1}) + r_{j,n+1}} \leq \hat{f} + \rho \cdot \sum_{j \in D^\prime} d(j, i^\star)\,.\]
\end{proof}
We are ready to prove Theorem \ref{theorem:facility-location}.
\begin{proof}[Proof of Theorem~\ref{theorem:facility-location}]
 Consider an optimal solution of cost $\OPT$ to the original problem with facility cost $f = \frac{\hat{f}}{\rho}$, and let $k^\star$ denote the number of facilities opened in this solution.
    Let $\{i_1^\star, \ldots, i_{k^\star}^\star\}$ be the facilities opened by the optimal solution, and let $\{D_1^\star, \ldots, D_{k^\star}^\star\}$ be such that $D_\ell^\star$ is the set of clients connected to facility $i_\ell^\star$ in the optimal solution.
    Let $C_{\mathrm{SOL}}$ and $F_{\mathrm{SOL}}$ be the connection cost and opening cost (with respect to $f$) of our algorithm, and let $C_{\OPT}$ and $F_{\OPT}$ be defined analogously for $\OPT$.
    By Lemma~\ref{lemma:alg-cost-ufl}, which bounds the cost with respect to $\hat{f}$, and using that $f = \frac{\hat{f}}{\rho}$, we know that 
    \[
        C_{\mathrm{SOL}} + \rho \cdot F_{\mathrm{SOL}} \leq \sum_{j=1}^n \Paren{\sum_{i=j}^{n} \beta(r_{j,i}, r_{j,i+1}) + r_{j,n+1}}\,.
    \]
    For $\ell \in \{1, \ldots, k^\star\}$ consider $(i^\star_\ell, D_\ell)$. By Lemma~\ref{lem:star-cost} we have that
    \[\sum_{j \in D_\ell} \Paren{\sum_{i=j}^n \beta(r_{j,i}, r_{j,i+1}) + r_{j,n+1}} \leq \hat{f} + \rho \cdot \sum_{j \in D_\ell} d(j, i^\star_\ell)\,.\]
    Note that $D_1^\star, \ldots, D^\star_{k^\star}$ partition $D$. Then, summing the above over all $(i_{\ell}^\star, D^\star_{\ell})$,
    \begin{align*}
    C_{\mathrm{SOL}} + \rho \cdot F_{\mathrm{SOL}} & \le \sum_{j=1}^n \Paren{\sum_{i=j}^{n} \beta(r_{j,i}, r_{j,i+1}) + r_{j,n+1}}\\
    &\leq k^\star \cdot \hat{f} + \rho \cdot C_{\OPT}\\
    &= \rho \cdot F_{\OPT} + \rho \cdot C_{\OPT}
    = \rho \cdot \OPT\,.
    \end{align*}
    The theorem follows because $F_{\mathrm{SOL}} = |\SS^\star| \cdot f$.
\end{proof}

\newpage

\section{Deriving the Dual}\label{sec:appendix-deriving-dual}

In this section, we prove that the factor revealing dual program as defined in Definition~\ref{def:frdual} is indeed equivalent to the original one, namely, we prove Lemma~\ref{lemma:mathematical-program-1.99}.
We restate the exact definition of the factor-revealing dual and its lemma here:

\deffrdual*

\lemmamathematicalprogram*

To prove this lemma, we first define an intermediate version of the factor-revealing primal different from the one in Definition~\ref{def:frp}.

Differently from the previous factor revealing, we rewrite the metric inequalities. We define the variable $m_i$ as the smallest distance from the central facility to an already open facility at time $i$; for $i=1$, before any earlier client is available, $m_1$ is only used as a slack variable. 
Using this definition, the metric inequalities can be rewritten as follows:
\begin{equation*}
    \forall \ 1 \le j \le i \le n: \quad r_{j,i} \le m_i + d_j
\end{equation*}
Without enforcing the exact definition of $m_i$, we only need $m_i$ to follow the following two structural constraints:
\begin{align*}
    \forall \ 1 \le j < i \le n:& \quad m_i \le r_{j,i} + d_j \\
    \forall \ 1 \le i < n:& \quad m_i \ge m_{i+1}
\end{align*}
The first comes by an application of the metric inequalities, directly going from central facility to an open facility should be faster than going through a client $j$ that was already connected before time $i$. The second is due to the monotonic non-increasingness of the distances, just like it is the case with $r_{j,i}$ through $i$; opening a new facility can only decrease the distance.

With this defined and using also the $b_{i,j}$ variables defined in the main text, we can now define the intermediate factor-revealing problem:
\begin{definition}[Factor-revealing primal]\label{def:intermediate-frp}
    Fix $n\geq 1$.
    \begin{equation*}
        \begin{aligned}
            \smash{\max_{\substack{\\ \forall b_{i,j}\in\{0,1\} \\ \forall i,j: 1\leq j\leq i \leq n}}} \quad 
            \max \quad 
            & \sum_{j=1}^{n} \smash{\sum_{i=j}^{n} b_{i,j} \cdot (r_{j,i} -\Mcoeff \cdot r_{j,i+1}) - f} \span\span\span\span\\
            \text{\emph{s.t.:}} \quad
                &&&\sum_{i=1}^{n} d_i \le 1 && \text{($\rho$: Limit on distances)}\\
                & \forall \ 1 \le j < i \le n: 
                \quad && m_i \le r_{j,i} + d_j && \text{($s_{i,j}$: Definition of $m_i$)}\\
                & \forall \ 1 \le j \leq i \le n:
                \quad && r_{j,i} \le m_i + d_j && \text{($t_{i,j}$: Metric constraint)}\\
                & \forall \ 1 \le j \le i \le n: 
                \quad &&r_{j,i} \ge r_{j,i+1} && \text{($y_{i,j}$: Monotonicity of $r_{j,i}$)}\\
                & \forall \ 1 \le i < n: 
                \quad && m_i \ge m_{i+1} && \text{($y'_i$: Monotonicity of $m_i$)}\\
                & \forall \ 1\le j <i\le n: 
                \quad && r_{j,i} - \Mcoeff \cdot d_j \le \gamma_{i,j} && \text{($g^a_{i,j}$: Bidding linearization)}\\
                & \forall \ 1\le i\le j\le n: 
                \quad && r_{i,i} -\Mcoeff \cdot d_j \le \beta_{i,j} && \text{($g^b_{i,j}$: Bidding linearization)}\\
                & \forall \ 1\le i\le n: \quad && \sum_{j=1}^{i-1} \gamma_{i,j} + \sum_{j=i}^{n} \beta_{i,j} \le f && \text{($\hg_i$: No overbidding)}\\
                & \forall \ 1\le i, j\le n + 1: && r_{j,i}, d_{j}, m_i, \beta_{i,j}, \gamma_{i,j}, f \ge 0 \span\span
        \end{aligned}
    \end{equation*}
\end{definition}
\begin{lemma}\label{lemma:generalized-factor-revealing-problem-metric-rewrite}
    The result from the factor-revealing program defined in Definition~\ref{def:intermediate-frp}, defined as a maximization over a family of primal LPs, is an upperbound to the one presented in Definition~\ref{def:frp} for the specific bidding function $\beta_{\Mcoeff}(\alpha, d) = [\alpha - \Mcoeff \cdot d]^+$.
\end{lemma}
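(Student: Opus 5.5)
The plan is to show that any feasible solution $(r,d,f)$ of the program in Definition~\ref{def:frp} with $\beta=\beta_\gamma$ can be mapped to a feasible solution of one of the inner LPs of Definition~\ref{def:intermediate-frp} with at least the same objective value. Fix a feasible $(r,d,f)$ of Definition~\ref{def:frp}.

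\textbf{Step 1: choose the binary pattern.} Set
\[
b_{i,j}=\ind[r_{j,i}-\gamma r_{j,i+1}>0].
\]
With this choice,
\[
b_{i,j}(r_{j,i}-\gamma r_{j,i+1})=[r_{j,i}-\gamma r_{j,i+1}]^+=\beta_\gamma(r_{j,i},r_{j,i+1})
\]
holds termwise. There is one mismatch in the objectives: Definition~\ref{def:frp} contains an extra $+r_{j,n+1}$ per client, while the intermediate program does not. I would absorb it by noting that $r_{j,n+1}$ can be taken to be $0$ without loss of generality. Decreasing $r_{j,n+1}$ to $0$ for all $j$ keeps monotonicity and the metric inequalities at index $n+1$. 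It also changes the objective only by $\sum_j\bigl([r_{j,n}-\gamma r_{j,n+1}]^+ + r_{j,n+1}\bigr)\to\sum_j r_{j,n}$, which is at least as large since $\gamma\ge1$. With $r_{j,n+1}=0$, the last bid term is $r_{j,n}$, which matches $b_{n,j}=1$ in the linearized objective.

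\textbf{Step 2: the auxiliary variables.} Set $\gamma_{i,j}=\beta_\gamma(r_{j,i},d_j)$ and $\beta_{i,j}=\beta_\gamma(r_{i,i},d_j)$. The linearization constraints then hold because $[x]^+\ge x$, and the no-overbidding constraint is exactly the facility constraint \eqref{eq:fr-facility}. The normalization $\sum d_i=1$ implies $\sum d_i\le1$. Nonnegativity is inherited.

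\textbf{Step 3: the variables $m_i$ (the main obstacle).} This is the delicate step: the $m_i$ must satisfy
\[
m_i\le r_{j,i}+d_j \ (j<i),\qquad r_{j,i}\le m_i+d_j \ (j\le i),\qquad m_i\ge m_{i+1}.
\]
I would first try $m_i=\min_{j<i}(r_{j,i}+d_j)$ for $i\ge2$. The upper constraints hold by construction. For $j<i$, the metric constraint $r_{j,i}\le r_{k,i}+d_k+d_j$ is exactly \eqref{eq:fr-r-metric}. For $j=i$, it becomes $r_{i,i}\le r_{k,i}+d_k+d_i$, which is \eqref{eq:fr-triangle}. Monotonicity of $m$ needs $\min_{j<i}(r_{j,i}+d_j)\ge\min_{j<i+1}(r_{j,i+1}+d_j)$; this follows because $r_{j,i+1}\le r_{j,i}$ and the second minimum ranges over a superset. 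For $i=1$ there is no $j<1$, so I set $m_1$ large; the constraint $r_{1,1}\le m_1+d_1$ and the monotonicity $m_1\ge m_2$ are then trivially satisfiable, and $m_1$ only acts as slack. If the metric inequality \eqref{eq:fr-r-metric} does not cover the index ranges needed for $j=k$ at some boundary, I would check those cases directly; for example, $j=k$ gives $d_j\ge0$.

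Combining Steps 1 to 3 gives a feasible point of the inner LP for the chosen $b$ whose objective is at least that of the original solution. Taking maxima over solutions and over $b$ then yields the lemma.
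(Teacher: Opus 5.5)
Your proposal is correct and matches the paper's proof step for step. Both arguments set $r_{j,n+1}=0$, justified by $[\alpha-\gamma d]^+ + d\le\alpha$ for $\gamma\ge1$ and $0\le d\le\alpha$; here $d\le\alpha$ is the monotonicity $r_{j,n+1}\le r_{j,n}$, which you need when the bracket vanishes. Both then take $b_{i,j}$ as the positivity indicator of each bid, set $\gamma_{i,j}=\beta_\gamma(r_{j,i},d_j)$ and $\beta_{i,j}=\beta_\gamma(r_{i,i},d_j)$, and define $m_i=\min_{j<i}(r_{j,i}+d_j)$ with $m_1$ chosen large. The only difference is that you map an arbitrary feasible point rather than an optimal one, which is if anything slightly cleaner.
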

\begin{proof}    
We only prove a one-way relation: every optimal solution of Definition~\ref{def:frp} extends to a feasible solution with no smaller objective value to Definition~\ref{def:intermediate-frp} for one specific linear program above.

Take an optimal solution to the original program and name its variables $r_{j,i}'$, $d_j'$ and $f'$. Let us now create the variables $r_{j,i}$, $d_j$ and $f$ by copying the previous ones, except that for $r_{j,n+1}$ we set it to $0$ for all $j$.

Second, let us choose the specific linear program that we will be working with. Let us take $b_{i,j}=\ind[r_{j,i}-\Mcoeff \cdot r_{j,i+1}\ge0]$ for $i<n$, and let us take $b_{n,j}=1$ for all $j$. The expression for the new objective terms with $i<n$ become:
\begin{align*}
    b_{i,j} \cdot (r_{j,i} -\Mcoeff \cdot r_{j,i+1}) &= \ind[r_{j,i}-\Mcoeff \cdot r_{j,i+1}\ge0] \cdot (r_{j,i} -\Mcoeff \cdot r_{j,i+1}) \\
    &= [r_{j,i} -\Mcoeff \cdot r_{j,i+1}]^+ = \beta_{\Mcoeff} (r_{j,i},r_{j,i+1})
\end{align*}
which makes the objective function equal to:
\begin{equation*}
    \sum_{j=1}^{n} \left(\sum_{i=j}^{n-1} \beta_{\Mcoeff}(r_{j,i}, r_{j,i+1}) + \beta_{\Mcoeff}(r_{j,n},0)\right)- f 
\end{equation*}
Knowing that $r_{j,i}=r_{j,i}'$ for all the $r_{j,i}$ shown in this equation and that $f'=f$, subtracting the objective function from the program in Definition~\ref{def:frp} on this equation gives:
\begin{equation*}
    \sum_{j=1}^{n} \left(\beta_{\Mcoeff}(r_{j,n},0)-\beta_{\Mcoeff}(r_{j,n},r'_{j,n+1}) - r'_{j,n+1}\right) = \sum_{j=1}^{n} \left(r_{j,n}-([r_{j,n}-\gamma \cdot r'_{j,n+1}]^+ + r'_{j,n+1})\right) \geq 0
\end{equation*}
where we have substituted the expression of $\beta_{\gamma}(\alpha,d)=[\alpha-\gamma d]^+$ and we have used that $[\alpha-\gamma d]^+ + d \leq \alpha$ for $\gamma \geq 1$ and $0 \leq d \leq \alpha$.
Thus, with these variables we get that the objective value for the program in Definition~\ref{def:intermediate-frp} is an upper bound on the original one found in Definition~\ref{def:frp}.

Third, for $m_i$, we set them as:
\begin{equation*}\label{eq:mi-exact-definition}
    m_i = \min_{1\le j < i} (r_{j,i} + d_j) \qquad \text{for } i \ge 2.
\end{equation*}
Indeed, the inequalities $m_i \le r_{j,i} + d_j$ are immediate from this assignment for $j<i$, and $r_{j,i} \le m_i + d_j$ is also direct since we had $r_{j,i} \le r_{k,i}+d_k + d_j$ for all earlier $k<i$. For the monotonicity constraints $m_i \ge m_{i+1}$ with $i\ge2$, if $\ell$ is the index that attains the minimum at time $i$, then the following holds $m_{i+1} \le r_{\ell,i+1} + d_\ell \le r_{\ell,i} + d_\ell = m_i$. Finally, $m_1$ has no upper-bound constraint, so we choose it large enough to satisfy $r_{1,1}\le m_1+d_1$ and, when $n\ge2$, $m_1\ge m_2$.

Finally, we set the values for $\gamma_{i,j}$ and $\beta_{i,j}$. These will be $\gamma_{i,j} = [r_{j,i}-\Mcoeff \cdot d_j]^+  =\beta_{\Mcoeff}(r_{j,i},d_j)$ and $\beta_{i,j} = [r_{i,i}-\Mcoeff\cdot d_j]^+=\beta_{\Mcoeff}(r_{i,i},d_j)$, which clearly satisfy the $g^a_{i,j}$ and $g^b_{i,j}$ constraints as well as the $\hg_i$ because the original facility constraint \ref{eq:fr-facility} was satisfied. Thus, all $(g)$ constraints are satisfied.

Therefore the feasible region of the original program is contained in the feasible region of the relaxed program, so any upper bound on the relaxed program is also an upper bound on the original program.
\end{proof}

However, the dual represented in Definition~\ref{def:frdual} is not the direct dual one gets from the Linear Programs in Definition~\ref{def:intermediate-frp}, there are still some extra steps. Let us first take this direct dual of the inner program for any $b_{i,j}$, using the notation that is on the right of each inequality to refer to each constraint:
\begin{definition}[Dual of inner program]\label{def:intermediate-dual}
    The linear dual of any inner program of the intermediate factor-revealing program is:
    \begin{equation}\label{eq:direct-dual-lp}
        \begin{aligned}
        \text{minimize}& \quad \rho \\
        \text{subject to:}& 
            &&  \sum_{i=1}^{n} \hat{g}_i \leq 1 && \text{($f$)} \\
            & \forall \ 1 \le i, j \le n: 
            && \hat{g}_i \geq g_{i,j} && \text{($\gamma_{i,j}$, $\beta_{i,j}$)}\\
            & \forall \ 1\le j \le n: 
            && \sum_{q=j+1}^n g_{j,q} + \left(g_{j,j} + t_{j,j} - s_{j,j} \right) - y_{j,j} \geq b_{j,j}
            && \text{($r_{j,j}$)}\\
            & \forall \ 1\le j < i < n: 
            && y_{i-1,j} + \left(g_{i,j} + t_{i,j} - s_{i,j}\right) - y_{i,j} \geq b_{i,j} - \Mcoeff \cdot b_{i-1,j}
            && \text{($r_{j,i}$)}\\
            & \forall \ 1\le j < i = n: 
            && y_{n-1,j} + \left(g_{n,j} + t_{n,j} - s_{n,j}\right) \quad \ \ \ \geq b_{n,j} - \Mcoeff \cdot b_{n-1,j}
            && \text{($r_{j,n}$)}\\
            & \ \ \ i=1: 
            && \ \ \ \qquad \sum_{j=1}^i s_{i,j} - \sum_{j=1}^i t_{i,j} - y_i' \ge 0 && \text{($m_1$)}\\
            & \forall \ 1< i < n: 
            && y_{i-1}' + \sum_{j=1}^i s_{i,j} - \sum_{j=1}^i t_{i,j} - y_i'\ge 0 && \text{($m_i$)}\\
            & \ \ \ 1< i = n: 
            && y_{n-1}' + \sum_{j=1}^n s_{i,j} - \sum_{j=1}^n t_{i,j} \qquad \ge 0 && \text{($m_n$)}\\
            & \forall \ 1\leq j\leq n:
            && \rho \geq \ \Mcoeff\cdot \sum_{i=1}^n g_{i,j} + \sum_{i=j}^n s_{i,j} + \sum_{i=j}^n t_{i,j} && \text{($d_j$)} \\
            & \forall \ 1\leq i\leq n:
            && s_{i,i}=0 && \text{(diagonal convention)}\\
            & \forall \ 1 \le i,j \le n: && \hat{g}_{i}, \ g_{i,j}, \ t_{i,j}, \ s_{i,j}, \ y_{i,j},\ y_i' \geq 0
        \end{aligned}
    \end{equation}
    The variables $s_{i,i}$ are dummy variables fixed to zero so that the summation notation matches Definition~\ref{def:frdual}. We have also abused the notation and called both $g_{i,j}^a$ and $g_{i,j}^b$ as $g_{i,j}$, which is valid because the indices never overlap.
\end{definition}

Now let us prove the following lemma:
\begin{lemma}\label{lem:fr-intermediate-eq-dual}
    The inner program of the factor-revealing dual defined in Definition~\ref{def:frdual} and the linear program in Definition~\ref{def:intermediate-dual} are equivalent.
\end{lemma}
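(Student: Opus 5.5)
The plan is to eliminate the monotonicity multipliers $y_{i,j}$ and $y'_i$ from the program in Definition~\ref{def:intermediate-dual} by telescoping. We show the two directions separately. A feasible solution of Definition~\ref{def:intermediate-dual} maps to one of Definition~\ref{def:frdual} with the same $\rho$, and conversely.

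\textbf{Client constraints.} For a fixed client $j$, we sum the constraints $(r_{j,j}),(r_{j,j+1}),\dots,(r_{j,i})$ of Definition~\ref{def:intermediate-dual}. The $y_{p,j}$ terms telescope, and the left-hand side becomes
\[
\sum_{q=j+1}^n g_{j,q}+\sum_{p=j}^{i}(g_{p,j}+t_{p,j}-s_{p,j})-y_{i,j},
\]
with no $y$ term at all when $i=n$. The right-hand side becomes
\[
\sum_{p=j}^i b_{p,j}-\gamma\sum_{p=j}^{i-1}b_{p,j}=b_{i,j}-(\gamma-1)\sum_{p=j}^{i-1}b_{p,j}.
\]
Since $y_{i,j}\ge 0$, every feasible dual satisfies the $(r_{j,i})$ constraint of Definition~\ref{def:frdual}. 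Conversely, given a solution of Definition~\ref{def:frdual}, we define $y_{i,j}$ for $i<n$ as the slack of the cumulative constraint,
\[
y_{i,j}:=\text{LHS}_{i}-\text{RHS}_{i}\ge 0.
\]
With this choice each individual constraint of Definition~\ref{def:intermediate-dual} holds with equality for $i<n$. The last constraint $(r_{j,n})$ is exactly the cumulative inequality at $i=n$, so it holds as well.

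\textbf{Pool constraints.} We treat $(m_i)$ the same way. Summing $(m_1),\dots,(m_i)$ telescopes the $y'$ terms into
\[
\sum_{p\le i}\sum_{j\le p}(s_{p,j}-t_{p,j})-y'_i\ge 0,
\]
with no $y'$ term when $i=n$. For the converse we set $y'_i$ equal to the cumulative pool credit, which is nonnegative for each $i<n$.

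\textbf{Remaining constraints.} The constraints $(f)$, $(\gamma_{i,j},\beta_{i,j})$, $(d_j)$, the diagonal convention, and nonnegativity do not involve $y$ or $y'$. They are therefore identical in both programs, and the objective $\rho$ is unchanged in both directions.

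The only point needing care is the index bookkeeping in the telescoping. We must check that the first constraint $(r_{j,j})$ carries the initial term $\sum_{q>j}g_{j,q}$ and no $y_{j-1,j}$, and that the boundary $b$-coefficients combine to $-(\gamma-1)$, the factor $\gamma$ coming from the $b_{i-1,j}$ terms. The same boundary check is needed for the edge case $n=1$ and for $(m_1)$. Once this is verified, the equivalence is immediate.
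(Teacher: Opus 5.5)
Your proposal is correct and follows essentially the same route as the paper. In one direction you telescope the $(r_{j,i})$ and $(m_i)$ constraints and drop the nonnegative $y_{i,j}$, $y'_i$; in the other you define $y_{i,j}$, $y'_i$ as the cumulative slacks, so each intermediate constraint holds with equality and the last one follows from the cumulative inequality. One small bookkeeping point: $(r_{n,n})$ in Definition~\ref{def:intermediate-dual} literally carries a $-y_{n,n}$ term for every $n$, not only when $n=1$, and $(m_1)$ carries $-y'_1$ when $n=1$; since these variables appear nowhere else, set them to zero or to the slack.
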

\begin{proof}
    The main difference between the two programs is that in the second there are variables $y_{i,j}$ and $y_i'$. To move to the other representation in Definition~\ref{def:frdual}, we will interpret these variables as slack ones, which carry unused slack from $(r_{j,i})$ to the next timestep $(r_{j,i+1})$ for the case of $y_{i,j}$ or from $(m_i)$ to $(m_{i+1})$ for $y'_i$. We now prove that a feasible solution for the one in Definition~\ref{def:frdual} implies a feasible solution with the same cost in Definition~\ref{def:intermediate-dual}, and the reverse as well.
    
    First, given a feasible solution $(\rho, \hat{g}, g, t, s)$ for Definition~\ref{def:frdual}, we construct a feasible solution for the direct dual in Definition~\ref{def:intermediate-dual} by keeping the same variables and explicitly defining the slack variables $y_{i,j}$ and $y'_i$ as follows:
    \begin{align*}
        y_{i,j} &= \sum_{q=j+1}^n g_{j,q} + \sum_{p=j}^{i} \left(g_{p,j} + t_{p,j} - s_{p,j}\right) - b_{i,j} + \Mcoeffmone \sum_{p=j}^{i-1} b_{p,j} \\
        y'_i &= \sum_{p=1}^i \sum_{j=1}^p s_{p,j} - \sum_{p=1}^i \sum_{j=1}^p t_{p,j}
    \end{align*}
    By the $(r_{j,i})$ and $(m_i)$ constraints of Definition~\ref{def:frdual}, it trivially holds that $y_{i,j} \geq 0$ for all $1 \le j \le i \le n$ and $y'_i \geq 0$ for all $1 \le i \le n$. We now verify that these definitions satisfy the constraints of Definition~\ref{def:intermediate-dual}.
    
    For $i=j$, substituting $y_{j,j}$ in the $(r_{j,j})$ constraint gives:
    \begin{align*}
        \sum_{q=j+1}^n g_{j,q} + (g_{j,j} + t_{j,j} - s_{j,j}) - y_{j,j}
        &=  \sum_{q=j+1}^n g_{j,q} + (g_{j,j} + t_{j,j} - s_{j,j}) \\
        &\quad - \left(\sum_{q=j+1}^n g_{j,q} + (g_{j,j} + t_{j,j} - s_{j,j}) - b_{j,j}\right) \\
        &= b_{j,j}
    \end{align*}
    which satisfies it with equality. For $i>j$, the difference $y_{i-1,j} - y_{i,j}$ algebraically evaluates to:
    \begin{align*}
        y_{i-1,j} - y_{i,j} 
        &= -(g_{i,j} + t_{i,j} - s_{i,j}) + b_{i,j} - b_{i-1,j} - \Mcoeffmone b_{i-1,j} \\
        &= -(g_{i,j} + t_{i,j} - s_{i,j}) + b_{i,j} - \Mcoeff\cdot b_{i-1,j}
    \end{align*}
    Rearranging this expression gives:
    \begin{equation*}
        y_{i-1,j} + (g_{i,j} + t_{i,j} - s_{i,j}) - y_{i,j} = b_{i,j} - \Mcoeff\cdot b_{i-1,j}
    \end{equation*}
    This also fulfills the $(r_{j,i})$ constraint with equality. For $(r_{j,n})$, let us substitute the value of $y_{n-1,j}$:
    \begin{align*}
        b_{n,j} - \Mcoeff b_{n-1,j}
        &\leq y_{n-1,j} + (g_{n,j} + t_{n,j} - s_{n,j}) \\
        &= \sum_{q=j+1}^n g_{j,q} + \sum_{p=j}^{n} \left(g_{p,j} + t_{p,j} - s_{p,j}\right) - b_{n-1,j} + \Mcoeffmone \sum_{p=j}^{n-2} b_{p,j}
    \end{align*}
    Rearranging this expression gives:
    \begin{equation*}
        b_{n,j} - \Mcoeffmone \sum_{p=j}^{n-1} b_{p,j} \leq \sum_{q=j+1}^n g_{j,q} + \sum_{p=j}^{n} \left(g_{p,j} + t_{p,j} - s_{p,j}\right)
    \end{equation*}
    which is satisfied as the feasibility of the first program because of its $(r_{j,i})$ constraint.
    Similarly, we evaluate the $(m_i)$ constraints using the variables $y'_i$. For $i=1$:
    \begin{equation*}
        s_{1,1} - t_{1,1} - y'_1 = s_{1,1} - t_{1,1} - (s_{1,1} - t_{1,1}) = 0
    \end{equation*}
    satisfying the $(m_1)$ constraint.
    For $1 < i \le n$:
    \begin{align*}
        y'_{i-1} - y'_i &= \sum_{p=1}^{i-1} \sum_{j=1}^p (s_{p,j} - t_{p,j}) - \sum_{p=1}^i \sum_{j=1}^p (s_{p,j} - t_{p,j}) \\
        &= - \sum_{j=1}^i (s_{i,j} - t_{i,j})
    \end{align*}
    Thus, $y'_{i-1} + \sum_{j=1}^i (s_{i,j} - t_{i,j}) - y'_i = 0$, satisfying the $(m_i)$ constraint with equality.
    For the final $(m_n)$ constraint, we have $y'_{n-1} + \sum_{j=1}^n (s_{n,j} - t_{n,j}) = \sum_{p=1}^{n} \sum_{j=1}^p (s_{p,j} - t_{p,j})$. Since the feasibility of the first program ensures that this quantity is at least $0$ because of its ($m_{n})$ constraint, the inequality $y'_{n-1} + \sum_{j=1}^n (s_{n,j} - t_{n,j}) \ge 0$ naturally holds. All other constraints that do not involve $y_{i,j}$ and $y'_i$ remain trivially satisfied as they are identical across both programs.
    
    Conversely, suppose we have a feasible solution for the direct dual in Definition~\ref{def:intermediate-dual}. We show that this implies feasibility for Definition~\ref{def:frdual} by telescoping the inequalities.
    For the $(r_{j,i})$ constraints, the direct dual gives:
    \begin{align*}
        \sum_{q=j+1}^n g_{j,q} + (g_{j,j} + t_{j,j} - s_{j,j}) - y_{j,j} &\geq b_{j,j} \\
        y_{k-1,j} + (g_{k,j} + t_{k,j} - s_{k,j}) - y_{k,j} &\geq b_{k,j} - (1+\Mcoeffmone) b_{k-1,j} \quad \text{for } k > j
    \end{align*}
    For $i<n$, summing these inequalities from $k=j$ to $i$, the slack variables $y_{k,j}$ telescope perfectly, and we obtain:
    \begin{align*}
        \sum_{q=j+1}^n g_{j,q} + \sum_{k=j}^i (g_{k,j} + t_{k,j} - s_{k,j}) - y_{i,j} &\geq b_{j,j} + \sum_{k=j+1}^i \big( b_{k,j} - (1+\Mcoeffmone) b_{k-1,j} \big) \\
        &= b_{i,j} - \Mcoeffmone \sum_{k=j}^{i-1} b_{k,j}
    \end{align*}
    Since $y_{i,j} \geq 0$, we can drop $-y_{i,j}$ from the left side and get:
    \begin{equation*}
        \sum_{q=j+1}^n g_{j,q} + \sum_{k=j}^i (g_{k,j} + t_{k,j} - s_{k,j}) \geq b_{i,j} - \Mcoeffmone \sum_{k=j}^{i-1} b_{k,j}
    \end{equation*}
    which is exactly the $(r_{j,i})$ constraint in Definition~\ref{def:frdual}. For $i=n$, we sum the constraints $(r_{j,j})$, $(r_{j,j+1}), \ldots, (r_{j,n-1})$ together with the last constraint $(r_{j,n})$. The $y_{k,j}$ variables still telescope, but no final slack remains, giving directly
    \begin{equation*}
        \sum_{q=j+1}^n g_{j,q} + \sum_{k=j}^n (g_{k,j} + t_{k,j} - s_{k,j}) \geq b_{n,j} - \Mcoeffmone \sum_{k=j}^{n-1} b_{k,j},
    \end{equation*}
    which is the desired constraint for $i=n$.
    
    Applying the same telescoping sum to the $(m_k)$ constraints from $k=1$ to $i$:
    \begin{align*}
        (s_{1,1} - t_{1,1}) - y'_1 &\ge 0 \\
        y'_{k-1} + \sum_{j=1}^k (s_{k,j} - t_{k,j}) - y'_k &\ge 0 \quad \text{for } 1 < k \le i
    \end{align*}
    For $i<n$, summing these yields:
    \begin{equation*}
        \sum_{p=1}^i \sum_{j=1}^p (s_{p,j} - t_{p,j}) - y'_i \geq 0
    \end{equation*}
    Since $y'_i \geq 0$, it follows that $\sum_{p=1}^i \sum_{j=1}^p (s_{p,j} - t_{p,j}) \geq 0$, which is the $(m_i)$ constraint in Definition~\ref{def:frdual}. For $i=n$, summing $(m_1),\ldots,(m_{n-1})$ together with the last constraint $(m_n)$ gives directly
    \begin{equation*}
        \sum_{p=1}^n \sum_{j=1}^p (s_{p,j} - t_{p,j}) \geq 0,
    \end{equation*}
    which is the last $(m_i)$ constraint. This concludes the proof that the two representations are equivalent.

    We have therefore shown that, for every fixed choice of the binary variables $b_{i,j}$, the formulation in Definition~\ref{def:frdual} and \ref{def:intermediate-dual} have the same optimum value. 
\end{proof}

With this, the proof becomes direct, as it is just an application of strong duality.

\begin{proof}[Proof of Lemma~\ref{lemma:mathematical-program-1.99}]
    For every fixed choice of the binary variables $b_{i,j}$, the linear program in Definition~\ref{def:intermediate-dual} is exactly the LP dual of the corresponding primal linear program in Definition~\ref{def:intermediate-frp} by definition. Hence, by strong duality, these two inner programs have the same optimum value. 
    
    By now using Lemma~\ref{lem:fr-intermediate-eq-dual}, we know that the inner program in Definition~\ref{def:frdual} and the one in Definition~\ref{def:intermediate-frp} for one same fixed set of $b_{i,j}$ are equivalent.
    Taking the outer maximum over all feasible choices of $b_{i,j}\in\{0,1\}$ preserves this equality, so the two overall formulations are equivalent.

    Thus, now applying Lemma~\ref{lemma:generalized-factor-revealing-problem-metric-rewrite}, we get to our desired result that the solution of the program from Definition~\ref{def:intermediate-dual} is an upper-bound to the program from Definition~\ref{def:frp}, which proves our lemma.
\end{proof}

\newpage

\section{Dual Fitting Deferred Proofs}\label{sec:dual-deferred-proofs}

This section collects the technical proofs deferred from Section~\ref{sec:195-dual-fitting}. We first derive the formulas for the credit of each client, separating the cases $i \geq f_j$ and $i < f_j$; we then prove the corresponding expressions for total generation and transfer, which are combined to obtain the budget inequalities required in the proof of Lemma~\ref{lem:general-case-procedure}. Throughout, we use the same case distinction as in the main text: we analyze clients in groups $A$, $B$, $C$, and $D$ separately, and handle fractional memberships by writing them as convex combinations of the integral cases. At the end, there are the rest of the remaining proofs that were deferred.

\paragraph{Credit calculation for \texorpdfstring{$i \geq f_j$}{i >=f\_j}.}

Let us restate the definition of credit, and then prove various lemmas about it.

\defcreditij*

\begin{lemma}
    \label{lem:credit-A-after}
    If client $j$ belongs integrally to group $A$, then for $i \geq f_j$
    \begin{equation*}
        \text{\emph{credit}}_{i,j} = 1 - \sum_{i'=f_j+1}^i \delta_{i',j}.
    \end{equation*}
\end{lemma}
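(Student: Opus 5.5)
The plan is a direct computation from Definition~\ref{def:credit-ij}. The identity $\text{credit}_{i,j} = \text{credit}_{i-1,j} + g_{i,j} - \bar s_{i,j} + t_{i,j}$ for $i > j$ will be used, in two steps. First I will compute $\text{credit}_{f_j,j}$ and show it equals $1$. Then I will add the per-step increments for $i > f_j$ and show that each equals $-\delta_{i,j}$. For an integral $A$-client, Table~\ref{tab:general-case-procedure} gives $g_{i,j} = \hg_i - \delta_{i,j}$, $\bar s_{i,j} = \hg_i - \sigma_{i,j}$ for $i \ge j$, and $t_{\cdot,j} = 0$. Since $j \in A$ forces $j \le t_1$, the quantities $R_{j-1} \ge \MRtone$ and Properties~\ref{property:delta} and~\ref{property:sigma} are all available.

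For the increments with $i > f_j$, Property~\ref{property:sigma} gives $\sigma_{i,j} = 0$. Hence $\Delta\text{credit}_{i,j} = (\hg_i - \delta_{i,j}) - \hg_i + 0 = -\delta_{i,j}$. This telescopes to the subtracted sum in the statement, so the whole proof reduces to the base value $\text{credit}_{f_j,j} = 1$.

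For the base value I expand
\[
\text{credit}_{f_j,j} = \sum_{q=j+1}^n g_{j,q} + \sum_{i'=j}^{f_j}\bigl(g_{i',j} - \bar s_{i',j}\bigr).
\]
For $i' \le f_j$ we have $\delta_{i',j} = 0$, so each summand in the second sum equals $\hg_{i'} - (\hg_{i'} - \sigma_{i',j}) = \sigma_{i',j}$. By Property~\ref{property:sigma} the second sum is therefore $[\MRttwo - R_{f_j}]^+$. This also matches Proposition~\ref{prop:credit-ij-values-early} at $i = f_j$, consistent with the form $1 - [\MRttwo-R_{f_j}]^+ + \sum\sigma$. So it remains to show that the row sum $\sum_{q>j} g_{j,q}$, the credit generated at row $j$ by later clients, equals $1 - [\MRttwo - R_{f_j}]^+$. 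Equivalently, I expect to prove the identity behind the algorithm's choice of $\hg_j$.

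This row-sum identity is the main obstacle, because it depends on the memberships of all the later clients $q > j$. The clients $q$ in $B$ or $C$ contribute $\hg_j$ each. Clients $q < t_1$ in $A$ contribute $\hg_j - \delta_{j,q}$, where $\delta_{j,q} = 0$ whenever $j \le f_q$, and $f_q \ge q > j$ always holds, so they contribute $\hg_j$ as well. Clients in $D$ contribute $\hg_j - \chi_q\Mfrac\hg_j$. Mixed clients contribute the corresponding convex combinations. The total is therefore
\[
\hg_j\Bigl((n-j) - \Mfrac\sum_{q>j}\mu_q^D\chi_q\Bigr)
\]
when $j < t_1$. Here $\chi_q = 0$ for $q \le t_2$, since $R_{q-1} \ge \MRttwo$ there, and $R = R^{(2)}$ beyond $t_1$, so the $\chi$-sum is exactly $z$. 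The algorithm sets $\hg_j = (1-\mu_j^D)\frac{1-[\MRttwo-R_{f_j}]^+}{n-j-\Mfrac z}$ with $\mu_j^D = 0$, so the row sum equals $1 - [\MRttwo-R_{f_j}]^+$, as required.

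The boundary case $j = t_1$ is where I expect the real care to be needed. There the row is $\hg_{t_1} = \hg^A_{t_1} + \hg^B_{t_1}$, and the $D$-clients subtract only $\chi_q\Mfrac\hg^A_{t_1}$. I would check directly from Definition~\ref{def:t1-boundary-split} that the extra denominator term $\alpha(1-\mu_{t_1}^D)$, combined with $\hg^B_{t_1} = \MRtone - R_{t_1}$, makes the row sum come out to $1 - [\MRttwo - R_{f_{t_1}}]^+$ again. I would also confirm that an integral $A$-client cannot occur at $j = t_1$ unless $\alpha = 0$, which simplifies this check.
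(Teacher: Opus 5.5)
Your proposal is correct and is essentially the paper's proof. The key step is the same row-sum identity $\sum_{q>j} g_{j,q} = (n-j-\Mfrac\, z)\hg_j = 1-[\MRttwo-R_{f_j}]^+$, and the column sums come from Properties~\ref{property:delta} and~\ref{property:sigma}; splitting this into a base value at $f_j$ plus increments $-\delta_{i,j}$ just repackages the paper's direct summation.

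The boundary case you leave to be checked is vacuous. $\alpha>0$ always: when $t_1<n$ we have $R^{(2)}_{t_1}<\MRtone$, and when $t_1=n$ we have $\alpha=1$. Hence an integral $A$-client always satisfies $j<t_1$, which is exactly what the paper uses.
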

\begin{proof}
    Because $j \in A$, we have $j < t_1$ by how the memberships are assigned in Equation~\ref{eq:membership-assignment}. Also, $\delta_{j,j'}=0$ for every $j' > j$, so every later $A/B/C$-part contributes $\hg_j$ to the initial term, while every later $D$-part contributes $\hg_j \cdot (1-\Mfrac \chi_{j'})$. Hence
    \begin{equation*}
        \sum_{j'=j+1}^n g_{j,j'} = \left(n-j-\Mfrac \cdot \sum_{j'=j+1}^n \mu_{j'}^D \cdot \chi_{j'}\right) \cdot \hg_j =  (n-j-\Mfrac \cdot z) \cdot \hg_j = 1 - [\MRttwo - R_{f_j}]^+.
    \end{equation*}
    where $z$ appears because $z=\sum_{j'=t_2+1}^{n}\chi_{j'} \mu_{j'}^D$ and for $j' \le t_2$ we have $\chi_{j'}=0$, thus having $\sum_{j'=j+1}^n \mu_{j'}^D \cdot \chi_{j'} = z$ since $j \leq t_1 \leq t_2$.
    Moreover,
    \begin{equation*}
        \sum_{i'=j}^{i} g_{i',j} = \sum_{i'=j}^{i} (\hg_{i'}-\delta_{i',j}) = R_{j-1}-R_i-\sum_{i'=f_j+1}^{i} \delta_{i',j},
    \end{equation*}
    \begin{equation*}
        \sum_{i'=j}^{i} t_{i',j} = 0,
        \qquad
        \sum_{i'=j}^{i} \bar s_{i',j} = \sum_{i'=j}^{i} (\hg_{i'}-\sigma_{i',j}) = R_{j-1}-R_i-[\MRttwo-R_{f_j}]^+,
    \end{equation*}
    because $\sigma_{i',j}=0$ for $i'>f_j$ and $\sum_{i'=j}^{f_j} \sigma_{i',j}=[\MRttwo-R_{f_j}]^+$.
    Substituting into the definition of $\text{credit}_{i,j}$ gives the claim.
\end{proof}

\begin{lemma}
\label{lem:credit-B-after}
    If client $j$ belongs integrally to group $B$, then for $i \geq f_j$
    \begin{equation*}
        \text{\emph{credit}}_{i,j} = 1 - \ind[i > f_j] \cdot \Mcoeffmone.
    \end{equation*}
\end{lemma}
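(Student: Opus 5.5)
We verify the claim directly from Definition~\ref{def:credit-ij}, mirroring the computation in Lemma~\ref{lem:credit-A-after}. Since $j$ belongs integrally to $B$, we have $t_1<j\le t_2$ and $\mu_j^B=1$ (the split client $t_1$ is fractional and is handled later). In particular $j>t_1$, so $g_{j,j'}$ for $j'>j$ is not affected by the $D$-correction terms. The $D$-correction is only present for $i\le t_1$, and $g^A$ only subtracts $\delta_{i,j'}$ for rows $i>f_{j'}\ge j'$. Therefore every later client contributes exactly $\hg_j$ to row $j$, giving
\[
\sum_{j'=j+1}^n g_{j,j'}=(n-j)\hg_j=(1-\mu_j^D)\bigl(R_j+\Mtone\bigr)=R_j+\Mtone .
\]
Here I use that for $j>t_1$ the final profile coincides with the Phase~2 profile, i.e.\ $\hg_j=\hg^{(2)}_j$ and $R_j=R_j^{(2)}$. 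One must check that $j<n$ or treat $j=n$ separately; if $j=t_1=n$, $j$ is not integrally in $B$, so $j<n$ holds whenever $j>t_1$.

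Next I sum the column contributions for $p=j,\dots,i$:
\begin{itemize}
\item The generated credit is $\sum_p g_{p,j}=R_{j-1}-R_i$.
\item The transfers are $\bar s_{p,j}=\hg_p$ for $p>j$, plus the extra $\Mcoeffmone$ at $p=f_j+1$, and $\bar s_{j,j}=0$. This gives $\sum_p \bar s_{p,j}=R_j-R_i+\Mcoeffmone\,\ind[i\ge f_j+1]$.
\item The only $t$-entry is at $p=f_j\le i$, namely $[1-R_{j-1}-\Mtone]^+$. Since $j>t_1$, Observation~\ref{obs:thresholds} gives $R_{j-1}\le R_{t_1}<\MRtone=1-\Mtone$ (when $t_1<n$). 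Hence the bracket is positive and equals $1-R_{j-1}-\Mtone$.
\end{itemize}

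Adding these up gives
\[
(R_j+\Mtone)+(R_{j-1}-R_i)+(1-R_{j-1}-\Mtone)-(R_j-R_i)-\Mcoeffmone\,\ind[i>f_j]=1-\Mcoeffmone\,\ind[i>f_j],
\]
as claimed.

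The main obstacle is the bookkeeping rather than any estimate. The delicate points are:
\begin{enumerate}
\item confirming that no $A$- or $D$-type reductions enter the row sum $\sum_{j'>j}g_{j,j'}$ for $j>t_1$;
\item confirming that $\hg_j=(R_j+\Mtone)/(n-j)$ exactly, which uses $\mu_j^D=0$ for an integral $B$ client;
\item checking that the extra $\Mcoeffmone$ transfer at $f_j+1$ lies inside the summation range exactly when $i>f_j$.
\end{enumerate}
For the last point I would also confirm that $f_j<t_3^\star\le n$, so that row $f_j+1$ exists.
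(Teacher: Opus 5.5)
Your computation is the same as the paper's. It uses the row sum $(n-j)\hat g_j=R_j+\tau$, the column sums $R_{j-1}-R_i$ for $g$ and $R_j-R_i$ (plus $\gamma-1$ once $i>f_j$) for $\bar s$, and the single $t$-entry $\bar\tau-R_{j-1}$, which add up to the claim.

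\textbf{Missed case.} You assert that the split client $t_1$ is always fractional, and this is false. If $\alpha=1$ and $\mu^D_{t_1}=0$, then $\mu^A_{t_1}=0$ and $\mu^B_{t_1}=1$, so $t_1$ lies integrally in $B$. Your restriction to $j>t_1$ therefore skips a case the lemma must cover. The paper's proof includes it with the same bookkeeping:
\begin{itemize}
\item $\alpha=1$ gives $\hat g^A_{t_1}=0$, so later $D$-clients receive no correction in row $t_1$.
\item Also $\hat g_{t_1}=\hat g^B_{t_1}=\hat g^{(2)}_{t_1}=(R_{t_1}+\tau)/(n-t_1)$, so the row sum is still $R_{t_1}+\tau$.
\item Finally $R_{t_1-1}=\bar\tau$, so the $t$-entry is $[1-R_{t_1-1}-\tau]^+=0=\bar\tau-R_{t_1-1}$.
\end{itemize}
With these three facts your final sum goes through unchanged.

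\textbf{Minor inaccuracies.}
\begin{itemize}
\item An integral $B$-client may have $f_j=t_3^\star<n$ with $\mu^\star_j=0$, so ``$f_j<t_3^\star$'' is not quite right. Row $f_j+1\le n$ still exists, because $t_3^\star=n$ forces $\mu^\star_j=1$ for all clients with $f_j=n$.
\item The cleanest reason for $j<n$ is that $f_n=n$ always places client $n$ integrally in $D$.
\end{itemize}
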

\begin{proof}
    For an integral $B$-client we have that $j \geq t_1$ from the membership assignment in Equation~\ref{eq:membership-assignment}. 
    At $j=t_1$, we have $\hg_{t_1}^A=0$.
    Hence, for every $j' > j$ we have $g_{j,j'} = \hg_j$ because they belong to $B$, $C$ or $D$ (since $j \geq t_1$ and $t_1$ is the boundary between $A$ and $B$); and thus
    \begin{equation*}
        \sum_{j'=j+1}^n g_{j,j'} = (n-j) \cdot \hg_j = R_j + \Mtone,
    \end{equation*}
    Also,
    \begin{equation*}
        \sum_{i'=j}^{i} g_{i',j} = R_{j-1}-R_i,
        \qquad
        \sum_{i'=j}^{i} t_{i',j} = [1-R_{j-1}-\Mtone]^+.
    \end{equation*}
    Since an integral $B$-client satisfies $R_{j-1} \le \MRtone$, the $t$-term is $\MRtone-R_{j-1}$.
    Finally,
    \begin{equation*}
        \sum_{i'=j}^{i} \bar s_{i',j} =
        \begin{cases}
            R_j-R_{f_j}, & i=f_j,\\
            R_j-R_i+\Mcoeffmone, & i>f_j.
        \end{cases}
    \end{equation*}
    Therefore, if $i=f_j$ then
    \begin{equation*}
        \text{credit}_{i,j} = (R_j+\Mtone) + (R_{j-1}-R_{f_j}) + (\MRtone-R_{j-1}) - (R_j-R_{f_j}) = 1,
    \end{equation*}
    and if $i>f_j$ then
    \begin{equation*}
        \text{credit}_{i,j} = (R_j+\Mtone) + (R_{j-1}-R_i) + (\MRtone-R_{j-1}) - (R_j-R_i+\Mcoeffmone) = 1-\Mcoeffmone.
    \end{equation*}
    This is exactly the stated formula.
\end{proof}

\begin{lemma}
\label{lem:credit-C-after}
    If client $j$ belongs integrally to group $C$, then for $i \geq f_j$
    \begin{equation*}
        \text{\emph{credit}}_{i,j} = 1 - \ind[i > f_j] \cdot \Mcoeffmone.
    \end{equation*}
\end{lemma}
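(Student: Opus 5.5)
The plan is to mirror the proof of Lemma~\ref{lem:credit-B-after} almost verbatim. The $C$-assignment in Table~\ref{tab:general-case-procedure} coincides with the $B$-assignment in $g$ and $\bar s$. The only difference is the $t$-entry, which is $1-R_{j-1}-\Mtone$ without the positive part. We therefore compute each of the four pieces of $\text{credit}_{i,j}$ from Definition~\ref{def:credit-ij} separately and add them.

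\textbf{Step 1: the initial term.} An integral $C$-client satisfies $t_2<j\le t_3^\star$ and $\mu_j^D=0$ by \eqref{eq:membership-assignment}. In particular $j>t_1$, so $j$ lies in the pure Phase~2 region, and its profile entry is $\hg_j=(1-\mu_j^D)\frac{R_j+\Mtone}{n-j}=\frac{R_j+\Mtone}{n-j}$. This uses $j<n$: since $f_j<t_3^\star\le n$ we get $j<n$, and we are not in the special case $i=t_3^\star=n$ of Algorithm~\ref{alg:general-case-procedure-hgi}. Every later client $j'>j>t_1$ has only $B$, $C$ or $D$ membership. By the table, $g_{j,j'}=\hg_j$ for all of them, and in particular the $D$-row entry for $j>t_1$ is $\hg_j$ with no $\chi$-correction. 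Hence
\[
\sum_{j'>j}g_{j,j'}=(n-j)\hg_j=R_j+\Mtone .
\]

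\textbf{Step 2: the column sums up to $i$.} We have $\sum_{i'=j}^i g_{i',j}=R_{j-1}-R_i$. The $t$-column has a single entry at $f_j$, and since $i\ge f_j$ it contributes $1-R_{j-1}-\Mtone$. For $\bar s$, the entries are $\hg_{i'}$ for $i'>j$, with an extra $\Mcoeffmone$ at row $f_j+1$ (and $\bar s_{j,j}=0$). So the partial sum equals $R_j-R_i$ plus $\Mcoeffmone\cdot\ind[i>f_j]$. This requires $f_j+1\le n$, which holds since $f_j<t_3^\star$.

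\textbf{Step 3: combine.} Adding the pieces gives
\[
(R_j+\Mtone)+(R_{j-1}-R_i)+(1-R_{j-1}-\Mtone)-(R_j-R_i)-\ind[i>f_j]\Mcoeffmone=1-\ind[i>f_j]\Mcoeffmone .
\]

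The only real care point is Step~1. There we must confirm that no later client's $D$-part subtracts a $\chi_{j'}\Mfrac$ correction from row $j$. That correction only appears in rows $i\le t_1$, and row $j$ lies strictly above $t_1$, so the correction is absent. Everything else is routine bookkeeping. Nonnegativity of the $C$ $t$-entry is not needed for this identity; it is handled separately in Lemma~\ref{lem:nonneg}.
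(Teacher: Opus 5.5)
Your proposal is correct and follows the paper's proof essentially line by line. You compute $\sum_{j'>j}g_{j,j'}=(n-j)\hg_j=R_j+\Mtone$ and the $g$-, $t$- and $\bar s$-column sums up to $i$, then combine them as in Lemma~\ref{lem:credit-B-after}.

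One small slip: an integral $C$-client need not satisfy $f_j<t_3^\star$, since $f_j=t_3^\star$ is allowed when $\mu_j^\star=0$ (admissibility then forces $t_3^\star<n$). The conclusions you draw from that claim still hold. First, $j<n$, because $j=n$ would force $f_n=t_3^\star=n$ and hence $\mu_n^D=1$. Second, row $f_j+1$ only enters the sum when $i>f_j$, where $f_j+1\le i\le n$ holds automatically.
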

\begin{proof}
    For a $C$-client we again have, through the same arguments as in the previous lemma, that for every $j' > j$ we have $g_{j,j'} = \hg_j$ because they belong to $C$ or $D$ (since $j>t_2\geq t_1$ and $t_2$ is the boundary between $B$ and $C$ and $t_1$ between $A$ and $B$); and thus
    \begin{equation*}
        \sum_{j'=j+1}^n g_{j,j'} = (n-j) \cdot \hg_j = R_j + \Mtone,
        \qquad
        \sum_{i'=j}^{i} g_{i',j} = R_{j-1}-R_i.
    \end{equation*}
    Moreover,
    \begin{equation*}
        \sum_{i'=j}^{i} t_{i',j} = 1-R_{j-1}-\Mtone = \MRtone-R_{j-1},
    \end{equation*}
    and
    \begin{equation*}
        \sum_{i'=j}^{i} \bar s_{i',j} =
        \begin{cases}
            R_j-R_{f_j}, & i=f_j,\\
            R_j-R_i+\Mcoeffmone, & i>f_j.
        \end{cases}
    \end{equation*}
    The same calculation as in the proof of Lemma \ref{lem:credit-B-after} gives $\text{credit}_{i,j}=1$ for $i=f_j$ and $\text{credit}_{i,j}=1-\Mcoeffmone$ for $i>f_j$.
\end{proof}

\begin{lemma}
    \label{lem:credit-D-after}
    If client $j$ belongs integrally to group $D$, then for $i \geq f_j$
    \begin{equation*}
        \text{\emph{credit}}_{i,j} = 1.
    \end{equation*}
\end{lemma}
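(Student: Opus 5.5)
The plan is to compute the three pieces of $\text{credit}_{i,j}$ directly from Table~\ref{tab:general-case-procedure}, in the same way as in Lemmas~\ref{lem:credit-B-after} and~\ref{lem:credit-C-after}. The three pieces are the initial generated credit $\sum_{j'>j} g_{j,j'}$, the generated credit $\sum_{i'=j}^i g_{i',j}$, and the transfers $t$ and $\bar s$. For an integral $D$-client we have $\bar s_{i',j}=0$ for $i'>j$. The diagonal entry $\bar s_{j,j}$ only comes from the $A$-assignment, so it is also zero. The only $t$-entry is $t_{f_j,j}=1-R_{j-1}$, and it is counted because $i\ge f_j$. The claim therefore reduces to the identity
\[
\sum_{j'=j+1}^n g_{j,j'} + \sum_{i'=j}^{i} g_{i',j} + (1-R_{j-1}) = 1 .
\]

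First I use that $\mu_j^D=1$ forces $\hg_j=0$ (by the observation $\hg_j=0\iff\mu_j^D=1$ for $j<n$; the case $j=n$ is handled separately below). Then $g_{j,j'}\le \hg_j=0$ by Lemma~\ref{lem:gamma-beta-constraints}, so the initial term vanishes. Next I need $\sum_{i'=j}^{i} g_{i',j}=R_{j-1}$. Since $i\ge f_j\ge t_3^\star$ whenever $\mu_j^D=1$, and $\hg_{i'}=0$ for $i'>t_3^\star$, this sum equals $\sum_{i'\ge j} g^D_{i',j}$. By the table, $g^D_{i',j}=\hg_{i'}$ when $i'>t_1$. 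When $i'\le t_1$ the entry has the correction $-\chi_j\Mfrac\hg_{i'}$ (or $-\chi_j\Mfrac\hg_{t_1}^A$ at $i'=t_1$).

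The key step is to show that these corrections do not appear in the range $i'\ge j$. If $j>t_1$, every $i'\ge j$ lies beyond $t_1$, so the correction terms are absent. If $j\le t_1$, then $R_{j-1}\ge \MRtone\ge\MRttwo$ by Observation~\ref{obs:thresholds}, so $\chi_j=[1-R_{j-1}/\MRttwo]^+=0$ and again no correction occurs. In both cases $\sum_{i'=j}^{i}g_{i',j}=\sum_{i'\ge j}\hg_{i'}=R_{j-1}$, and adding $1-R_{j-1}$ gives exactly $1$.

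The main obstacle is the boundary bookkeeping. When $j=n$ with $t_3^\star=n$, $\hg_n=\hg_n^\star$ need not be zero; but then the initial sum over $j'>n$ is empty, so it still vanishes and the identity holds. I also have to check that $\bar s^D_{j,j}$ is zero; the table only defines $\bar s^D$ for $i>j$, and every unlisted entry is set to $0$, so this holds. Finally, I need $f_j\ge j$, so that the $t$-term with $i\ge f_j$ falls inside the summation range $i'\ge j$. This holds by lower triangularity of $b$.
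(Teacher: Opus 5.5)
Your proposal is correct and follows the paper's proof essentially step for step. The initial term vanishes because $\hat g_j=0$, with the case $j=t_3^\star=n$ handled separately. The $\chi_j$-corrections are absent, either because $j>t_1$ or because $R_{j-1}\ge\bar\tau>\theta$. The single withdrawal $t_{f_j,j}=1-R_{j-1}$ then closes the identity.

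The only cosmetic difference is that you use $R_i=0$ (from $i\ge f_j\ge t_3^\star$) directly inside the generated sum, whereas the paper first writes that sum as $R_{j-1}-R_i$.
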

\begin{proof}
    If $j=t_3^\star=n$, then $i=f_j=n$, $g_{n,n}=\hg_n^\star$, and $t_{n,n}=1-\hg_n^\star$, so $\text{credit}_{n,n}=1$.
    Assume now that $j \neq t_3^\star=n$.
    Then $j \in D$ implies $\hg_j=0$, hence $\sum_{j'=j+1}^n g_{j,j'}=0$.
    Also, if $j \le t_1$ then $\chi_j=0$, while if $j>t_1$ then the $D$-formula already gives $g_{i',j}=\hg_{i'}$ for all $i' \ge j$.
    Thus, in all cases,
    \begin{equation*}
        \sum_{i'=j}^{i} g_{i',j} = R_{j-1}-R_i.
    \end{equation*}
    Moreover, $\sum_{i'=j}^{i} \bar s_{i',j}=0$ and $\sum_{i'=j}^{i} t_{i',j}=1-R_{j-1}$.
    Finally, $j \in D$ implies $f_j \ge t_3^\star$, so for $i \ge f_j$ we have $R_i=0$.
    Therefore
    \begin{equation*}
        \text{credit}_{i,j} = 0 + (R_{j-1}-R_i) + (1-R_{j-1}) = 1.
    \end{equation*}
\end{proof}

\begin{lemma}\label{lem:credit-fractional-after}
    If client $j$ is fractionally assigned, then for every $i \geq f_j$ we have
    \begin{equation*}
        \text{\emph{credit}}_{i,j}
        = \mu_j^A \text{\emph{credit}}_{i,j}^A 
        + \mu_j^B \text{\emph{credit}}_{i,j}^B 
        + \mu_j^C \text{\emph{credit}}_{i,j}^C
        + \mu_j^D \text{\emph{credit}}_{i,j}^D \ge 0,
    \end{equation*}
    where $\text{credit}_{i,j}^A$, $\text{credit}_{i,j}^B$, $\text{credit}_{i,j}^C$, and $\text{credit}_{i,j}^D$ are the contributions to the credit from groups $A$, $B$, $C$, and $D$ as defined in Lemmas \ref{lem:credit-A-after}, \ref{lem:credit-B-after}, \ref{lem:credit-C-after}, and \ref{lem:credit-D-after}, respectively.
\end{lemma}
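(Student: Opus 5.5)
The credit expression $\text{credit}_{i,j}$ in Definition~\ref{def:credit-ij} is linear in the column entries $g_{\cdot,j}$, $t_{\cdot,j}$, $\bar s_{\cdot,j}$ and in the row entries $g_{j,j'}$, $j'>j$. I will exploit this linearity. By Equation~\eqref{formula:fractional-assignments-convex-combination}, the column entries of a fractional client are exactly $\sum_X \mu_j^X(\cdot)^X$. So the column part of the credit is automatically the convex combination of the group-wise column contributions, using $\sum_X\mu_j^X=1$. The remaining issue is the row term $\sum_{j'>j} g_{j,j'}$. It depends only on $\hg_j$ and on the memberships of the later clients $j'$, not on the memberships of $j$ itself. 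My plan is therefore to show that this row term equals $\sum_X \mu_j^X \cdot(\text{row term used in the integral case }X)$. Once that holds, the identity follows by adding the two parts. Nonnegativity then follows because each integral expression in Lemmas~\ref{lem:credit-A-after}--\ref{lem:credit-D-after} is at least $1-\Mcoeffmone\ge 0$ (using $\sum\delta_{i,j}\le\Mcoeffmone$ from Property~\ref{property:delta}), and a convex combination of nonnegative numbers is nonnegative.

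I will split by which client is fractional. There are only two kinds, per the remark after Equation~\eqref{eq:membership-assignment}.

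\emph{Case (a): the client $t_1$, with $\mu^A,\mu^B,\mu^D$ possibly positive.} Here $\hg_{t_1}=\hg^A_{t_1}+\hg^B_{t_1}$. Every later client is in $B$, $C$ or $D$, and the $D$-formula at row $i=t_1$ subtracts $\chi_{j'}\Mfrac\hg^A_{t_1}$. Hence the row term is $(n-t_1)\hg^B_{t_1}+(n-t_1-\Mfrac z)\hg^A_{t_1}$. I will then check from Definition~\ref{def:t1-boundary-split} that $\hg^B_{t_1}(n-t_1)=\alpha(1-\mu^D)(R_{t_1}+\Mtone)$, which matches $\mu^B$ times the $B$ row term $R_j+\Mtone$ after normalisation. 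Similarly, the choice of denominator $n-t_1+\alpha(1-\mu^D)-\Mfrac z$ in $\hg^A_{t_1}$ should make $(n-t_1-\Mfrac z)\hg^A_{t_1}$ equal to $(1-\alpha)(1-\mu^D)\bigl(1-[\MRttwo-R_{f_j}]^+\bigr)$, i.e.\ $\mu^A$ times the $A$ row value. The factor $\alpha(1-\mu^D)$ in that denominator is exactly what absorbs the cross term. This algebra is the main obstacle. I expect it to require writing $\hg^{(2)}_{t_1}=(1-\mu^D)(R_{t_1}+\Mtone)/(n-t_1)$ and carefully tracking that the integral-case values of $R_{j-1}$ differ between the $A$ and $B$ parts. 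If the identity only holds with the per-group credits evaluated at a common $R$, I will instead directly verify $\text{credit}_{i,t_1}\ge$ the required bound by explicit substitution. The $D$ part contributes row term $0$, consistent with $\hg_j=0$ for pure $D$ and with $\chi_{t_1}=0$.

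\emph{Case (b): a client $j\in\ones{t_3^\star}$ with fractional $\mu_j^D$ and the rest in $A$, $B$ or $C$.} Here $\hg_j$ is already scaled by $(1-\mu_j^D)$ in Algorithm~\ref{alg:general-case-procedure-hgi}. The row term is thus $(1-\mu_j^D)$ times the integral value of the non-$D$ group, plus $\mu_j^D\cdot 0$, which is exactly the convex combination. The column parts then combine linearly as above, since $t_{f_j,j}$, $\bar s$ and $g$ are all defined as the convex combination. The $D$-group value $1$ relies on $R_i=0$ for $i\ge f_j\ge t_3^\star$, which still holds.
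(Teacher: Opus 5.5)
Your case (b) and the nonnegativity argument are fine and agree with the paper's treatment of $j<t_1$, $t_1<j\le t_2$ and $t_2<j\le t_3^\star$. The gap is at the boundary client $j=t_1$, where the identity you plan to verify is false.

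Write $\mu^A=(1-\alpha)(1-\mu^D_{t_1})$ and $\mu^B=\alpha(1-\mu^D_{t_1})$. Definition~\ref{def:t1-boundary-split} gives $\bigl(n-t_1+\mu^B-\tfrac{\varepsilon}{\tau}z\bigr)\hat g^A_{t_1}=\mu^A\bigl(1-[\theta-R_{f_{t_1}}]^+\bigr)$, hence
\[
\Bigl(n-t_1-\tfrac{\varepsilon}{\tau}z\Bigr)\hat g^A_{t_1}=\mu^A\bigl(1-[\theta-R_{f_{t_1}}]^+\bigr)-\mu^B\hat g^A_{t_1}.
\]
So the row term equals the convex combination of the integral row values \emph{minus} $\mu^B\hat g^A_{t_1}$. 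This is strictly positive whenever $0<\alpha<1$ and $\mu^D_{t_1}<1$. The extra $\alpha(1-\mu^D_{t_1})$ in the denominator creates this cross term; it does not absorb one.

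The compensating error sits in the column part, which you treat as automatically matching the integral cases. Lemma~\ref{lem:credit-B-after} uses $t^B_{f_j,j}=\bar\tau-R_{j-1}$, which is valid only when $R_{j-1}\le\bar\tau$. At $t_1$ one has $R_{t_1-1}\ge\bar\tau$, so $t^B_{f_{t_1},t_1}=[1-R_{t_1-1}-\tau]^+=0$, whereas $\bar\tau-R_{t_1-1}=-\hat g^A_{t_1}$. The $B$-portion of the column is therefore $\mu^B\bigl(\hat g^A_{t_1}+\hat g^B_{t_1}-\ind[i>f_{t_1}]\varepsilon\bigr)$, which exceeds the integral-$B$ column by $\mu^B\hat g^A_{t_1}$.

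Thus ``row part is convex, column part is convex, add them'' fails in both halves. Only after adding row and column for the $t_1$ client do the two errors cancel. The $B$-portion then totals $\mu^B\bigl(R_{t_1}+\tau+\hat g^B_{t_1}-\ind[i>f_{t_1}]\varepsilon\bigr)=\mu^B\bigl(1-\ind[i>f_{t_1}]\varepsilon\bigr)$, using $\hat g^B_{t_1}=\bar\tau-R_{t_1}$. The rest uses $1-\mu^A=\mu^B+\mu^D$, $R_{t_1-1}-R_{t_1}=\hat g^A_{t_1}+\hat g^B_{t_1}$, and $R_i=0$ for the $D$-part. This joint computation is exactly the paper's argument for $j=t_1$.

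Your fallback, directly verifying $\text{credit}_{i,t_1}\ge$ the required bound, does not close the gap. The lemma asserts an equality, and the equality is what is used later. Cases 4 and 5 of Lemma~\ref{lem:mi-constraints} write the pool credit as total generation minus $\sum_j\text{credit}_{i,j}$, so they need the credit bounded from above, which a lower bound cannot supply.
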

\begin{proof}
    The integral cases are exactly from Lemmas \ref{lem:credit-A-after} to \ref{lem:credit-D-after}. It remains to prove the formula for the fractional clients.

    If $j<t_1$, then $\mu_j^B=\mu_j^C=0$ and $\mu_j^A+\mu_j^D=1$. Also,
    \begin{equation*}
        \sum_{j'=j+1}^{n} g_{j,j'}=(n-j-\Mfrac\, z)\hg_j=
        \mu_j^A\bigl(1-[\MRttwo-R_{f_j}]^+\bigr),
    \end{equation*}
    and by Equation~\ref{formula:fractional-assignments-convex-combination} the remaining term is the corresponding convex combination of the $A$- and $D$-contributions. Thus
    \begin{equation*}
        \text{credit}_{i,j}=
        \mu_j^A\text{credit}_{i,j}^A+
        \mu_j^D\text{credit}_{i,j}^D.
    \end{equation*}

    If $t_1<j\le t_2$, then $\mu_j^A=\mu_j^C=0$, $\mu_j^B+\mu_j^D=1$, and
    \begin{equation*}
        \sum_{j'=j+1}^{n} g_{j,j'}=(n-j)\hg_j=
        \mu_j^B(R_j+\Mtone),
    \end{equation*}
    so again
    \begin{equation*}
        \text{credit}_{i,j}=
        \mu_j^B\text{credit}_{i,j}^B+
        \mu_j^D\text{credit}_{i,j}^D.
    \end{equation*}

    If $t_2<j\le t_3^\star$, then $\mu_j^A=\mu_j^B=0$, $\mu_j^C+\mu_j^D=1$, and
    \begin{equation*}
        \sum_{j'=j+1}^{n} g_{j,j'}=(n-j)\hg_j=
        \mu_j^C(R_j+\Mtone),
    \end{equation*}
    which gives
    \begin{equation*}
        \text{credit}_{i,j}=
        \mu_j^C\text{credit}_{i,j}^C+
        \mu_j^D\text{credit}_{i,j}^D.
    \end{equation*}

    It remains to consider the boundary client $j=t_1$. Here $\mu_j^C=0$,
    \begin{equation*}
        \mu_j^A=(1-\alpha)(1-\mu_j^D),
        \qquad
        \mu_j^B=\alpha(1-\mu_j^D),
        \qquad
        \hg_j=\hg_j^A+\hg_j^B.
    \end{equation*}
    We have $i\ge f_j$, so
    \begin{equation*}
        \sum_{i'=j}^{i} g_{i',j}=R_{j-1}-R_i-\mu_j^A\sum_{i'=f_j+1}^{i}\delta_{i',j},
        \qquad
        \sum_{i'=j}^{i} t_{i',j}=\mu_j^D(1-R_{j-1}),
    \end{equation*}
    where we used that $[1-R_{j-1}-\Mtone]^+=0$ at $j=t_1$.
    Moreover,
    \begin{equation*}
        \sum_{i'=j}^{i} \bar s_{i',j}
        = \mu_j^A\left(R_{j-1}-R_i-[\MRttwo-R_{f_j}]^+\right)
        + \mu_j^B\left(R_j-R_i+\ind[i>f_j]\cdot\Mcoeffmone\right).
    \end{equation*}
    Using
    \begin{equation*}
        \sum_{j'=j+1}^{n} g_{j,j'}=(n-j-\Mfrac\, z)\hg_j^A+(n-j)\hg_j^B,
    \end{equation*}
    with
    \begin{equation*}
        (n-j-\Mfrac\, z)\hg_j^A=
        \mu_j^A\bigl(1-[\MRttwo-R_{f_j}]^+\bigr)-\mu_j^B\hg_j^A,
        \qquad
        (n-j)\hg_j^B=
        \mu_j^B(R_j+\Mtone),
    \end{equation*}
    we obtain
    \begin{align*}
        \text{credit}_{i,j}
        &= (n-j-\Mfrac\, z)\hg_j^A+(n-j)\hg_j^B + R_{j-1}-R_i - \mu_j^A\sum_{i'=f_j+1}^{i}\delta_{i',j} + \mu_j^D(1-R_{j-1}) \\
        &\qquad - \mu_j^A\left(R_{j-1}-R_i-[\MRttwo-R_{f_j}]^+\right)
        - \mu_j^B\left(R_j-R_i+\ind[i>f_j]\cdot\Mcoeffmone\right) \\
        &= \mu_j^A\left(1-\sum_{i'=f_j+1}^{i}\delta_{i',j}\right)
        + \mu_j^B(R_j+\Mtone+\hg_j^B)
        + \mu_j^D(1-R_i)
        - \mu_j^B\ind[i>f_j]\cdot\Mcoeffmone.
    \end{align*}
    Finally, $\hg_j^B=\MRtone - R_j$, so $R_j+\Mtone+\hg_j^B=1$. Also, if $\mu_j^D>0$, then $f_j\geq t_3^\star$, so $i\ge f_j$ implies $R_i=0$. Therefore
    \begin{equation*}
        \text{credit}_{i,j}
        = \mu_j^A\left(1-\sum_{i'=f_j+1}^{i}\delta_{i',j}\right)
        + \mu_j^B\left(1-\ind[i>f_j]\cdot\Mcoeffmone\right)
        + \mu_j^D\cdot 1,
    \end{equation*}
    which then implies:
    \begin{equation*}
        \text{credit}_{i,j}
        = \mu_j^A\text{credit}_{i,j}^A
        + \mu_j^B\text{credit}_{i,j}^B
        + \mu_j^D\text{credit}_{i,j}^D,
    \end{equation*}
    which is exactly the claimed convex combination given that $\mu_j^C=0$ for $j=t_1$.
\end{proof}

\paragraph{Credit calculation for \texorpdfstring{$i < f_j$}{i < f\_j}.}

Using the same definition of credit, we now treat the case $i < f_j$.

\defcreditij*

\begin{lemma}
    \label{lem:credit-A-before}
    If client $j$ belongs integrally to group $A$, then for $j \leq i < f_j$
    \begin{equation*}
        \text{\emph{credit}}_{i,j} = 1 - [\MRttwo - R_{f_j}]^+ + \sum_{i'=j}^i \sigma_{i',j} \ge 0.
    \end{equation*}
\end{lemma}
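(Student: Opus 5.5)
The plan is to mirror the proof of Lemma~\ref{lem:credit-A-after}: expand the four pieces of $\text{credit}_{i,j}$ from Definition~\ref{def:credit-ij} using the $A$-row of Table~\ref{tab:general-case-procedure}, now with $i<f_j$.

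\textbf{Initial generated credit.} Since $j\in A$ integrally, we have $j<t_1$. Hence $\hg_j=\frac{1-[\MRttwo-R_{f_j}]^+}{n-j-\Mfrac\cdot z}$, because $\mu_j^D=0$. The initial term is
\[
\sum_{j'>j} g_{j,j'} = \Bigl(n-j-\Mfrac\sum_{j'>j}\mu_{j'}^D\chi_{j'}\Bigr)\hg_j .
\]
Exactly as in Lemma~\ref{lem:credit-A-after}, $\chi_{j'}=0$ for $j'\le t_2$. So the inner sum equals $z$, and the initial term collapses to $1-[\MRttwo-R_{f_j}]^+$. One check is needed: for $j'$ with $\mu_{j'}^D>0$ in the range $j<j'\le t_1$, the $D$-formula uses $\hg_{t_1}^A$ rather than $\hg_j$. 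This only matters for $i=t_1$, whereas here the row index is $j<t_1$, so it does not arise.

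\textbf{Client-side terms for $i'\le i<f_j$.} Property~\ref{property:delta} gives $\delta_{i',j}=0$ for $i'\le f_j$, hence $\sum_{i'=j}^i g_{i',j}=R_{j-1}-R_i$. The $A$-part of $t$ vanishes, so $\sum_{i'=j}^i t_{i',j}=0$. For the transfers, $\sum_{i'=j}^i\bar s_{i',j}=\sum_{i'=j}^i(\hg_{i'}-\sigma_{i',j})=R_{j-1}-R_i-\sum_{i'=j}^i\sigma_{i',j}$; this uses that the $A$-row sets $\bar s_{i',j}=\hg_{i'}-\sigma_{i',j}$ for all $i'\ge j$, including the diagonal.

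\textbf{Combining.} Adding these pieces, the $R_{j-1}-R_i$ terms cancel, which yields the claimed identity
\[
\text{credit}_{i,j}=1-[\MRttwo-R_{f_j}]^+ +\sum_{i'=j}^i\sigma_{i',j}.
\]

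\textbf{Nonnegativity.} By Property~\ref{property:sigma}, $\sigma\ge0$. By Observation~\ref{obs:thresholds} and Property~\ref{property:thresholds}, $[\MRttwo-R_{f_j}]^+\le\MRttwo<1$.

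\textbf{Main obstacle.} The only delicate point is justifying the identification of $\sum_{j'>j}\mu_{j'}^D\chi_{j'}$ with $z$. The definition of $z$ uses $R^{(2)}_{\ell-1}$, while $\chi_{j'}$ uses the final $R_{j'-1}$. I would argue that the two agree for $j'>t_2\ge t_1$, since there $R=R^{(2)}$, and that both are zero for $j'\le t_2$, because $R_{j'-1}\ge\MRttwo$ there.
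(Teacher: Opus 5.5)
Your proof is correct and follows the same route as the paper. You expand the pieces of $\text{credit}_{i,j}$ using the $A$-row of Table~\ref{tab:general-case-procedure}, use $\delta_{i',j}=0$, $t_{i',j}=0$ and $\bar s_{i',j}=\hg_{i'}-\sigma_{i',j}$ for $i'<f_j$, and reduce the initial term to $1-[\MRttwo-R_{f_j}]^+$; your explicit justification that $\sum_{j'>j}\mu_{j'}^D\chi_{j'}=z$ (using $R_{j'-1}=R^{(2)}_{j'-1}$ for $j'>t_2\ge t_1$ and $\chi_{j'}=0$ for $j'\le t_2$) spells out a detail the paper only invokes implicitly via Lemma~\ref{lem:credit-A-after}.
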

\begin{proof}
    As in the proof of Lemma~\ref{lem:credit-A-after},
    \begin{equation*}
        \sum_{j'=j+1}^n g_{j,j'} = 1 - [\MRttwo - R_{f_j}]^+.
    \end{equation*}
    Since $i<f_j$, we also have $\delta_{i',j}=0$ and $t_{i',j}=0$ for all $j \le i' \le i$, hence
    \begin{equation*}
        \sum_{i'=j}^{i} g_{i',j} = R_{j-1}-R_i,
        \qquad
        \sum_{i'=j}^{i} t_{i',j}=0,
        \qquad
        \sum_{i'=j}^{i} \bar s_{i',j} = R_{j-1}-R_i-\sum_{i'=j}^{i} \sigma_{i',j}.
    \end{equation*}
    Substituting gives
    \begin{equation*}
        \text{credit}_{i,j} = 1 - [\MRttwo - R_{f_j}]^+ + \sum_{i'=j}^{i} \sigma_{i',j}.
    \end{equation*}
    This is non-negative because $[\MRttwo-R_{f_j}]^+ \le \MRttwo < 1$ and $\sigma_{i',j} \ge 0$.
\end{proof}

\begin{lemma}
    \label{lem:credit-B-before}
    If client $j$ belongs integrally to group $B$, then for $j \leq i < f_j$
    \begin{equation*}
        \text{\emph{credit}}_{i,j} = \Mtone + \min\{R_{j-1},\MRtone\} \ge 0.
    \end{equation*}
\end{lemma}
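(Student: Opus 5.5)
The plan is to compute $\text{credit}_{i,j}$ directly from Definition~\ref{def:credit-ij}, splitting it into the initial term $\sum_{j'>j} g_{j,j'}$ and the cumulative term $\sum_{i'=j}^{i}(g_{i',j}+t_{i',j}-\bar s_{i',j})$. This mirrors the proof of Lemma~\ref{lem:credit-B-after}. For an integral $B$-client we have $t_1\le j\le t_2$. When $j=t_1$ is integral in $B$ we have $\alpha=1$ and $\mu^D_{t_1}=0$, so $\hg_{t_1}^A=0$ and $\hg_{t_1}=\hg^B_{t_1}$. In that case I will take $\hg_j=\hg_j^B=\MRtone-R_j$ separately instead of using the Phase~2 formula.

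\textbf{Initial term.} Every $j'>j$ lies in $B$, $C$ or $D$. In the $D$-column, $g_{j,j'}=\hg_j$ when $j\ge t_1$, except for the $i=t_1$ correction involving $\hg^A_{t_1}$, which vanishes here. So the initial term is $(n-j)\hg_j$.
\begin{itemize}
\item For $t_1<j$, the Phase~2 formula with $\mu_j^D=0$ gives $(n-j)\hg_j=R_j+\Mtone$.
\item For $j=t_1<n$, we get $(n-j)\hg_j=(n-t_1)\hg^B_{t_1}$. This is the one place where the identity needs care: I will instead use $\hg_j=\MRtone-R_j$ and check that the final answer still matches.
\end{itemize}

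\textbf{Cumulative term.} For $i<f_j$ we have $t_{i',j}=0$, since $t$ lives only at $f_j$. The column sums are $\sum_{i'=j}^i g_{i',j}=R_{j-1}-R_i$. The diagonal entry of $\bar s$ is zero for $B$, and the extra $\Mcoeffmone$ sits at row $f_j+1>i$, so $\sum_{i'=j}^i\bar s_{i',j}=R_j-R_i$.

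\textbf{Combining.} The credit becomes $(R_j+\Mtone)+(R_{j-1}-R_i)-(R_j-R_i)=\Mtone+R_{j-1}$.
\begin{itemize}
\item For $j>t_1$, Observation~\ref{obs:thresholds} gives $R_{j-1}\le R_{t_1}<\MRtone$, so $R_{j-1}=\min\{R_{j-1},\MRtone\}$.
\item For $j=t_1$, the same computation gives $(n-t_1)\hg_j+\hg_j=\Mtone+\MRtone$ if one uses $R_{j-1}=R_j+\hg_j=\MRtone$ together with $(n-j)\hg_j=R_j+\Mtone$. Here $R_{j-1}=\MRtone$, so this equals $\Mtone+\min\{R_{j-1},\MRtone\}$.
\end{itemize}
Nonnegativity is immediate.

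\textbf{Main obstacle.} The boundary client $j=t_1$ is the hard step: I have to verify that $(n-t_1)\hg^B_{t_1}=R_{t_1}+\Mtone$ holds exactly when $\alpha=1$. This follows because $\alpha=1$ forces $\hg^B_{t_1}=\hg^{(2)}_{t_1}$, and $\hg^{(2)}_{t_1}$ satisfies the Phase~2 recursion with $\mu^D_{t_1}=0$. The case $t_1=n$ is vacuous, since then $f_j=n=j$ and no $i$ with $i<f_j$ remains.
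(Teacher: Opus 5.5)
Your proposal is correct and follows the paper's proof. You split the credit into the initial term $(n-j)\hg_j = R_j+\Mtone$ and the cumulative term $R_{j-1}-R_i-(R_j-R_i)$, which together give $\Mtone+R_{j-1}$. Your explicit check of the boundary client $j=t_1$, where integral $B$-membership forces $\alpha=1$ and hence $\hg^B_{t_1}=\hg^{(2)}_{t_1}$, fills in a step the paper leaves implicit by citing Lemma~\ref{lem:credit-B-after}.
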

\begin{proof}
    For a client $j$ integrally in $B$, we have that $j \geq t_1$ from the membership assignment in Equation~\ref{eq:membership-assignment}. We have that $R_{j-1} \le \MRtone$, with equality if $j = t_1$. Thus, showing that the credit is $\Mtone + R_{j-1}$ suffices to conclude the claim. As in the proof of Lemma~\ref{lem:credit-B-after},
    \begin{equation*}
        \sum_{j'=j+1}^n g_{j,j'} = R_j+\Mtone,
        \qquad
        \sum_{i'=j}^{i} g_{i',j} = R_{j-1}-R_i,
        \qquad
        \sum_{i'=j}^{i} t_{i',j}=0.
    \end{equation*}
    Because $i<f_j$, the extra $\Mcoeffmone$ has not been sent yet, so
    \begin{equation*}
        \sum_{i'=j}^{i} \bar s_{i',j} = \sum_{i'=j+1}^{i} \hg_{i'} = R_j-R_i.
    \end{equation*}
    Therefore
    \begin{equation*}
        \text{credit}_{i,j} = (R_j+\Mtone) + (R_{j-1}-R_i) - (R_j-R_i) = \Mtone + R_{j-1} \ge 0.
    \end{equation*}
\end{proof}

\begin{lemma}
    \label{lem:credit-C-before}
    If client $j$ belongs integrally to group $C$, then for $j \leq i < f_j$
    \begin{equation*}
        \text{\emph{credit}}_{i,j} = \Mtone + R_{j-1} \ge 0.
    \end{equation*}
\end{lemma}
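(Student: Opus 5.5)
The plan is to mirror the proof of Lemma~\ref{lem:credit-B-before} almost verbatim, since Table~\ref{tab:general-case-procedure} assigns $g$ and $\bar s$ identically for groups $B$ and $C$. The only differences are the $t$-entry and the range of $j$. Fix $j$ integrally in $C$, so $t_2<j\le t_3^\star$ by Equation~\eqref{eq:membership-assignment}, and in particular $j>t_1$. Then every $j'>j$ also has $j'>t_1$. Whatever its membership ($C$ or $D$), its row-$j$ generated credit is therefore $g_{j,j'}=\hg_j$: for $D$ this holds because the $D$-formula gives $\hg_i$ for $i>t_1$.

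Step 1 is the initial credit. I will show
\begin{equation*}
\sum_{j'=j+1}^n g_{j,j'}=(n-j)\hg_j=R_j+\Mtone .
\end{equation*}
The last equality is the Phase~2 recursion $\hg_j=\hg_j^{(2)}=(1-\mu_j^D)\frac{R_j+\Mtone}{n-j}$ with $\mu_j^D=0$, together with $R_j=R_j^{(2)}$ for $j>t_1$. One needs $j<n$ to avoid the $\hg_n^\star$ case. This is automatic here, because $j\le i<f_j\le n$ forces $j<n$.

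Step 2 handles the running sums over $j\le i'\le i$. First, $\sum g_{i',j}=\sum_{i'=j}^i\hg_{i'}=R_{j-1}-R_i$. Second, $t_{i',j}=0$, since the only $t$-entry sits at row $f_j>i$. Third, $\bar s_{i',j}=\hg_{i'}$ for $i'>j$ and $\bar s_{j,j}=0$. The extra $\Mcoeffmone$ term sits at row $f_j+1>i$ and so does not appear. This gives $\sum\bar s_{i',j}=R_j-R_i$.

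Step 3 substitutes these into Definition~\ref{def:credit-ij}:
\begin{equation*}
\text{credit}_{i,j}=(R_j+\Mtone)+(R_{j-1}-R_i)-(R_j-R_i)=\Mtone+R_{j-1}.
\end{equation*}
This is nonnegative because $\Mtone\ge0$ (Property~\ref{property:thresholds}) and $R_{j-1}\ge0$.

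I expect no real obstacle in this argument. The only point needing care is Step 1: confirming that the $D$-members $j'>j$ contribute $\hg_j$ without the $\chi_{j'}$ discount, which holds because $j>t_1$. A second check in Step 1 is that $\hg_j$ equals its Phase~2 value, so that $(n-j)\hg_j=R_j+\Mtone$ holds exactly.
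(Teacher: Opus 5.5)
Your proposal is correct and takes essentially the same route as the paper. The paper says the calculation is identical to the group-$B$ case and records the same four sums: $\sum_{j'>j} g_{j,j'} = R_j+\Mtone$, $\sum_{i'} g_{i',j}=R_{j-1}-R_i$, $\sum_{i'} t_{i',j}=0$ and $\sum_{i'}\bar s_{i',j}=R_j-R_i$. Your extra checks are exactly the details the paper leaves implicit: that $j<n$, that later $D$-members contribute the undiscounted $\hg_j$ because $j>t_1$, and that $R_j=R_j^{(2)}$.
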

\begin{proof}
    The calculation is identical to the one in the proof of Lemma~\ref{lem:credit-B-before}: we have
    \begin{equation*}
        \sum_{j'=j+1}^n g_{j,j'} = R_j+\Mtone,
        \qquad
        \sum_{i'=j}^{i} g_{i',j} = R_{j-1}-R_i,
        \qquad
        \sum_{i'=j}^{i} t_{i',j}=0,
    \end{equation*}
    and, since $i<f_j$,
    \begin{equation*}
        \sum_{i'=j}^{i} \bar s_{i',j} = R_j-R_i.
    \end{equation*}
    Thus
    \begin{equation*}
        \text{credit}_{i,j} = (R_j+\Mtone) + (R_{j-1}-R_i) - (R_j-R_i) = \Mtone + R_{j-1} \ge 0.
    \end{equation*}
\end{proof}

\begin{lemma}
    \label{lem:credit-D-before}
    If client $j$ belongs integrally to group $D$, then for $j \leq i < f_j$
    \begin{equation*}
        \text{\emph{credit}}_{i,j} = R_{j-1}-R_i \ge 0.
    \end{equation*}
\end{lemma}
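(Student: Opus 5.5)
The plan is to compute the three ingredients of $\text{credit}_{i,j}$ from Definition~\ref{def:credit-ij} directly, following the pattern of Lemmas~\ref{lem:credit-B-before} and~\ref{lem:credit-C-before}. These ingredients are the initial term $\sum_{j'>j} g_{j,j'}$, the accumulated generation $\sum_{i'=j}^{i} g_{i',j}$, and the accumulated transfers $\sum_{i'=j}^{i}(t_{i',j}-\bar s_{i',j})$.

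For an integral $D$-client we have $\mu_j^D=1$.

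\textbf{Initial term.} This term vanishes. When $j<n$, $\mu_j^D=1$ forces $\hg_j=0$: this follows from the observation that $\hg_j=0$ iff $\mu_j^D=1$, and directly from the factor $(1-\mu_j^D)$ in both Algorithm~\ref{alg:general-case-procedure-hgi} and the boundary split of Definition~\ref{def:t1-boundary-split}. Since $g_{j,j'}\le \hg_j$ by Lemma~\ref{lem:gamma-beta-constraints} and all entries are nonnegative, the whole row sum is zero. When $j=n$ there is nothing to check, because $i<f_j$ together with $f_n=n$ and $j\le i$ is impossible.

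\textbf{Transfers.} For $i<f_j$ the only possibly nonzero $t$-entry, $t_{f_j,j}$, has not yet appeared. Also $\bar s^D_{i',j}=0$ for $i'>j$ by Table~\ref{tab:general-case-procedure}, and the diagonal entry $\bar s_{j,j}$ comes only from the $A$-assignment, which has zero weight here. So the transfer term is $0$.

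\textbf{Accumulated generation.} This is the main step: we must show $\sum_{i'=j}^{i} g_{i',j}=R_{j-1}-R_i$, i.e.\ that $g_{i',j}=\hg_{i'}$ for every $i'\ge j$. The $D$-row formula gives $g_{i',j}=\hg_{i'}$ for $i'>t_1$ and subtracts $\chi_j\Mfrac\hg_{i'}$ (or $\chi_j\Mfrac\hg^A_{t_1}$) only for $i'\le t_1$. I split into two cases.
\begin{itemize}
\item If $j\le t_1$, then $R_{j-1}\ge \MRtone\ge \MRttwo$ by Observation~\ref{obs:thresholds} and Property~\ref{property:thresholds}. Hence $\chi_j=[1-R_{j-1}/\MRttwo]^+=0$ and the correction disappears.
\item If $j>t_1$, every $i'\ge j$ lies in the range $i'>t_1$, where the formula is already $\hg_{i'}$.
\end{itemize}
In both cases the sum telescopes to $R_{j-1}-R_i$.

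Combining the three pieces gives $\text{credit}_{i,j}=0+(R_{j-1}-R_i)-0=R_{j-1}-R_i$. Nonnegativity follows because $R$ is a suffix sum of the nonnegative $\hg$ (Lemma~\ref{lem:nonneg}), so it is non-increasing and $i\ge j-1$ gives $R_i\le R_{j-1}$ (Observation~\ref{obs:thresholds}).

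The only delicate point is the $\chi_j$ correction at $i'\le t_1$, including the boundary index $i'=t_1$. I expect the case split on $j\le t_1$ versus $j>t_1$ to dispose of it as described.
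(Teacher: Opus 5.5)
Your proof is correct and follows the paper's argument essentially line by line. You use $\hat g_j=0$ to kill the initial row sum, the same split $j\le t_1$ (where $\chi_j=0$) versus $j>t_1$ to get $g_{i',j}=\hat g_{i'}$ for all $i'\ge j$, and the vanishing of the $t$- and $\bar s$-contributions for $i<f_j$; your remarks on the vacuous case $j=n$ and the diagonal entry $\bar s_{j,j}$ simply make explicit what the paper leaves implicit.
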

\begin{proof}
    Because $j \in D$, we have $\hg_j=0$, hence $\sum_{j'=j+1}^n g_{j,j'}=0$.
    Also, if $j \le t_1$ then $\chi_j=0$, while if $j>t_1$ then the $D$-formula gives $g_{i',j}=\hg_{i'}$ for all $i' \ge j$.
    Therefore
    \begin{equation*}
        \sum_{i'=j}^{i} g_{i',j} = R_{j-1}-R_i.
    \end{equation*}
    Since $i<f_j$, we also have $\sum_{i'=j}^{i} t_{i',j}=0$, and for group $D$ we always have $\sum_{i'=j}^{i} \bar s_{i',j}=0$.
    Hence
    \begin{equation*}
        \text{credit}_{i,j} = R_{j-1}-R_i \ge 0.
    \end{equation*}
\end{proof}

\begin{lemma}\label{lem:credit-fractional-before}
    If client $j$ is fractionally assigned, then for every $j \leq i < f_j$ we have
    \begin{equation*}
        \text{\emph{credit}}_{i,j} = 
        \mu_j^A\text{\emph{credit}}_{i,j}^A +
        \mu_j^B\text{\emph{credit}}_{i,j}^B +
        \mu_j^C\text{\emph{credit}}_{i,j}^C +
        \mu_j^D\text{\emph{credit}}_{i,j}^D \ge 0,
    \end{equation*}
    where $\text{credit}_{i,j}^A$, $\text{credit}_{i,j}^B$, $\text{credit}_{i,j}^C$, and $\text{credit}_{i,j}^D$ are the contributions to the credit from groups $A$, $B$, $C$, and $D$ as defined in Lemmas \ref{lem:credit-A-before}, \ref{lem:credit-B-before}, \ref{lem:credit-C-before}, and \ref{lem:credit-D-before}, respectively.
\end{lemma}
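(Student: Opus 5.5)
We mirror the proof of Lemma~\ref{lem:credit-fractional-after}, now in the regime $j\le i<f_j$. Here no $t$-mass has been withdrawn, no $\delta$-correction has been applied, and no extra $\Mcoeffmone$ has been sent. By Equation~\eqref{formula:fractional-assignments-convex-combination}, the quantities $\sum_{i'=j}^i g_{i',j}$, $\sum_{i'=j}^i \bar s_{i',j}$ and $\sum_{i'=j}^i t_{i',j}$ are already the $\mu$-weighted combinations of the group-wise column sums. The one term that is not automatically linear in the memberships is the row term $\sum_{j'=j+1}^n g_{j,j'}$, since it is governed by $\hg_j$. The plan is to show that this row term also splits as $\sum_X \mu_j^X(\text{row term of group }X)$, where the row term is $1-[\MRttwo-R_{f_j}]^+$ for $A$, $R_j+\Mtone$ for $B$ and $C$, and $0$ for $D$. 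Linearity of $\text{credit}_{i,j}$ then gives the convex-combination identity. Nonnegativity follows because each integral expression in Lemmas~\ref{lem:credit-A-before}--\ref{lem:credit-D-before} is nonnegative and the weights are nonnegative and sum to $1$.

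We follow the same case split as in the ``after'' lemma. If $j<t_1$, then $\mu_j^A+\mu_j^D=1$ and $\hg_j=(1-\mu_j^D)(1-[\MRttwo-R_{f_j}]^+)/(n-j-\Mfrac z)$. The row sum equals $(n-j-\Mfrac z)\hg_j$, exactly as in Lemma~\ref{lem:credit-A-after}, because the later $D$-columns subtract $\chi_{j'}\Mfrac\hg_j$ and these corrections sum to $z$. Hence the row sum is $\mu_j^A(1-[\MRttwo-R_{f_j}]^+)$. If $t_1<j\le t_2$ or $t_2<j\le t_3^\star$, the later columns never reduce row $j$, so the row sum is $(n-j)\hg_j=(1-\mu_j^D)(R_j+\Mtone)$, which is $\mu_j^B$ or $\mu_j^C$ times the $B/C$ row term. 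If $j>t_3^\star$, the client is integrally in $D$. The case $j=t_3^\star=n$ cannot occur with $i<f_j$.

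The main obstacle is the boundary client $j=t_1$, which is split among $A$, $B$ and $D$. Reuse the identities from the proof of Lemma~\ref{lem:credit-fractional-after}:
\[
(n-j-\Mfrac z)\hg_j^A=\mu_j^A\bigl(1-[\MRttwo-R_{f_j}]^+\bigr)-\mu_j^B\hg_j^A,\qquad (n-j)\hg_j^B=\mu_j^B(R_j+\Mtone).
\]
Column $j$ contributes $\sum_{i'=j}^i g_{i',j}=R_{j-1}-R_i$, because $\delta=0$ before $f_j$ and $\chi_{t_1}=0$. It contributes no $t$-mass, and its $\bar s$-sum is $\mu_j^A(R_{j-1}-R_i-\sum_{i'=j}^i\sigma_{i',j})+\mu_j^B(R_j-R_i)$. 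Collecting terms, the $B$-part becomes $\mu_j^B(R_j+\Mtone+\hg_j^A+\hg_j^B-\hg_j^A)$. Using $\hg_j^B=\MRtone-R_j$, this equals $\mu_j^B\cdot 1=\mu_j^B(\Mtone+\min\{R_{j-1},\MRtone\})$, since $R_{t_1-1}\ge\MRtone$. The $D$-part is $\mu_j^D(R_{j-1}-R_i)$, and the $A$-part matches Lemma~\ref{lem:credit-A-before}. The step to check carefully is that the stray $-\mu_j^B\hg_j^A$ from the $A$-row cancels against the surplus $R_{j-1}-R_j=\hg_j^A+\hg_j^B$ in the $B$ bookkeeping. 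This cancellation is exactly why the $B$ formula carries $\min\{R_{j-1},\MRtone\}$ rather than $R_{j-1}$.
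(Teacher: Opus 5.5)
Your proposal is correct and follows essentially the same route as the paper. You treat the column terms as linear in the memberships via the convex-combination assignment, and you show that the row term $\sum_{j'>j} g_{j,j'}$ splits as $\mu_j^A(1-[\MRttwo-R_{f_j}]^+)$ for $j<t_1$ and as $(\mu_j^B+\mu_j^C)(R_j+\Mtone)$ for $t_1<j\le t_3^\star$; the boundary client $j=t_1$ is then closed with the same two identities for $\hg_{t_1}^A,\hg_{t_1}^B$ and the fact that $R_j+\Mtone+\hg_j^B=1$, with nonnegativity inherited from the integral lemmas, exactly as in the paper.
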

\begin{proof}
    The integral cases are exactly from Lemmas \ref{lem:credit-A-before} to \ref{lem:credit-D-before}. It remains to prove the formula for the fractional clients.

    If $j<t_1$, then by Observation~\ref{obs:mu-sum} we have $\mu_j^B=\mu_j^C=0$ and $\mu_j^A+\mu_j^D=1$. Moreover,
    \begin{equation*}
        \sum_{j'=j+1}^{n} g_{j,j'}=(n-j-\Mfrac\, z)\hg_j=
        \mu_j^A\bigl(1-[\MRttwo-R_{f_j}]^+\bigr),
    \end{equation*}
    and, by Equation~\ref{formula:fractional-assignments-convex-combination}, the term
    \begin{equation*}
        \sum_{i'=j}^{i} \bigl(g_{i',j}+t_{i',j}-\bar s_{i',j}\bigr)
    \end{equation*}
    is the same convex combination of the $A$- and $D$-contributions. Hence
    \begin{equation*}
        \text{credit}_{i,j}=
        \mu_j^A\text{credit}_{i,j}^A+
        \mu_j^D\text{credit}_{i,j}^D.
    \end{equation*}

    If $t_1<j\le t_2$, then $\mu_j^A=\mu_j^C=0$ and $\mu_j^B+\mu_j^D=1$. Also,
    \begin{equation*}
        \sum_{j'=j+1}^{n} g_{j,j'}=(n-j)\hg_j=
        \mu_j^B(R_j+\Mtone),
    \end{equation*}
    and again the remaining term is the corresponding convex combination of the $B$- and $D$-contributions. Thus
    \begin{equation*}
        \text{credit}_{i,j}=
        \mu_j^B\text{credit}_{i,j}^B+
        \mu_j^D\text{credit}_{i,j}^D.
    \end{equation*}

    If $t_2<j\le t_3^\star$, the proof is identical: now $\mu_j^A=\mu_j^B=0$, $\mu_j^C+\mu_j^D=1$, and
    \begin{equation*}
        \sum_{j'=j+1}^{n} g_{j,j'}=(n-j)\hg_j=
        \mu_j^C(R_j+\Mtone),
    \end{equation*}
    so
    \begin{equation*}
        \text{credit}_{i,j}=
        \mu_j^C\text{credit}_{i,j}^C+
        \mu_j^D\text{credit}_{i,j}^D.
    \end{equation*}

    It remains to consider the boundary client $j=t_1$. Here $\mu_j^C=0$,
    \begin{equation*}
        \mu_j^A=(1-\alpha)(1-\mu_j^D),
        \qquad
        \mu_j^B=\alpha(1-\mu_j^D),
        \qquad
        \hg_j=\hg_j^A+\hg_j^B.
    \end{equation*}
    Since $i<f_j$, we have $\delta_{i',j}=0$ and $t_{i',j}=0$ for $j\le i'\le i$. Therefore
    \begin{equation*}
        \sum_{i'=j}^{i} g_{i',j}=R_{j-1}-R_i,
    \end{equation*}
    while
    \begin{equation*}
        \sum_{i'=j}^{i} \bar s_{i',j}
        =\mu_j^A\left(R_{j-1}-R_i-\sum_{i'=j}^{i}\sigma_{i',j}\right)
        +\mu_j^B(R_j-R_i).
    \end{equation*}
    For the initial term we have
    \begin{equation*}
        \sum_{j'=j+1}^{n} g_{j,j'}=(n-j-\Mfrac\, z)\hg_j^A+(n-j)\hg_j^B.
    \end{equation*}
    Using
    \begin{equation*}
        (n-j-\Mfrac\, z)\hg_j^A=
        \mu_j^A\bigl(1-[\MRttwo-R_{f_j}]^+\bigr)-\mu_j^B\hg_j^A
    \end{equation*}
    and
    \begin{equation*}
        (n-j)\hg_j^B=
        \mu_j^B(R_j+\Mtone),
    \end{equation*}
    we obtain
    \begin{align*}
        \text{credit}_{i,j}
        &= (n-j-\Mfrac\, z)\hg_j^A+(n-j)\hg_j^B + (R_{j-1}-R_i) - \sum_{i'=j}^{i} \bar s_{i',j} \\
        &= \mu_j^A\bigl(1-[\MRttwo-R_{f_j}]^+\bigr) + \mu_j^B(R_j+\Mtone-\hg_j^A) \\
        &\qquad + \mu_j^B(R_{j-1}-R_i) + \mu_j^D(R_{j-1}-R_i) - \mu_j^B(R_j-R_i) + \mu_j^A\sum_{i'=j}^{i}\sigma_{i',j} \\
        &= \mu_j^A\left(1-[\MRttwo-R_{f_j}]^+ + \sum_{i'=j}^{i}\sigma_{i',j}\right)
        + \mu_j^B(R_j+\Mtone+\hg_j^B)
        + \mu_j^D(R_{j-1}-R_i).
    \end{align*}
    Finally, $\hg_j^B=\MRtone - R_j$, so $R_j+\Mtone+\hg_j^B=1$. Hence
    \begin{equation*}
        \text{credit}_{i,j}
        = \mu_j^A\left(1-[\MRttwo-R_{f_j}]^+ + \sum_{i'=j}^{i}\sigma_{i',j}\right)
        + \mu_j^B\cdot 1
        + \mu_j^D(R_{j-1}-R_i),
    \end{equation*}
    which is exactly
    \begin{equation*}
        \text{credit}_{i,j}
        = \mu_j^A\text{credit}_{i,j}^A
        + \mu_j^B\text{credit}_{i,j}^B
        + \mu_j^D\text{credit}_{i,j}^D,
    \end{equation*}
    since for $j=t_1$ we have $\text{credit}_{i,j}^B=\Mtone+\min\{R_{j-1},\MRtone\}=1$ as $R_{j-1}\geq \MRtone$.

    The non-negativity follows immediately, since each of the integral expressions $\text{credit}_{i,j}^A$, $\text{credit}_{i,j}^B$, $\text{credit}_{i,j}^C$, and $\text{credit}_{i,j}^D$ is non-negative.
\end{proof}

\paragraph{Production calculation.}

\begin{lemma}
    \label{lem:gen-A}
    If client $j$ belongs integrally to group $A$, then
    \begin{equation*}
        \sum_{i=1}^n g_{i,j} = 1 - \Mcoeffmone \cdot \left(1 - \frac{[\MRttwo - R_{f_j}]^+}{\MRttwo}\right).
    \end{equation*}
\end{lemma}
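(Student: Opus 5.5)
The plan is to sum the $A$-column of Table~\ref{tab:general-case-procedure} directly: for an integral $A$-client $j$ every entry is $g_{i,j}=g^A_{i,j}=\hg_i-\delta_{i,j}$, for every row $i\in\{1,\dots,n\}$. This includes the rows $i<j$, which enter client $j'$'s initial term for $j'<j$. For those rows the formula reduces to $\hg_i$, because Property~\ref{property:delta} gives $\delta_{i,j}=0$ for $i\le f_j$, and $i<j\le f_j$. We therefore get
\begin{equation*}
    \sum_{i=1}^n g_{i,j} \;=\; \sum_{i=1}^n \hg_i \;-\; \sum_{i=1}^n \delta_{i,j}\,,
\end{equation*}
and the proof reduces to evaluating these two sums separately.

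For the first sum we use that the advice is correct. Algorithm~\ref{alg:general-case-procedure-hgi} did not return $\perp$, so $\sum_{i=1}^n \hg_i = R_0 = 1$ (as recorded in Observation~\ref{obs:thresholds} and used in Lemma~\ref{lem:f-constraint}).

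For the second sum we split the range of $i$ into three parts.
\begin{itemize}
    \item For $i\le f_j$ we have $\delta_{i,j}=0$ by Property~\ref{property:delta}.
    \item For $i>t_3^\star$ we have $\hg_i=0$, since $R_{t_3^\star}=0$ by Observation~\ref{obs:thresholds} and all $\hg_i$ are nonnegative by Lemma~\ref{lem:nonneg}. The bound $0\le\delta_{i,j}\le \hg_i$ then forces $\delta_{i,j}=0$.
    \item The remaining range is $f_j<i\le t_3^\star$. It is nonempty or empty consistently with the formula, because $j\in A$ implies $f_j<t_3^\star$. On this range, Property~\ref{property:delta} prescribes the total as $\sum_{i=f_j+1}^{t_3^\star}\delta_{i,j} = \Mcoeffmone\bigl[1-[\MRttwo-R_{f_j}]^+/\MRttwo\bigr]$.
\end{itemize}
Such $\delta$ values exist by Lemma~\ref{lem:delta-existence}. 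Substituting both sums gives exactly $1-\Mcoeffmone\bigl(1-[\MRttwo-R_{f_j}]^+/\MRttwo\bigr)$.

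There is no real obstacle here. The only points requiring care are the bookkeeping ones: the $A$-formula must be applied to all rows $i$, including $i<j$, rather than only to $i\ge j$, and the identity $R_0=1$ is available only because we are working with correct advice. The step I would double-check is that $\delta_{i,j}$ vanishes outside $(f_j,t_3^\star]$, since Property~\ref{property:delta} only states the sum over that window. The bound $\delta_{i,j}\le\hg_i=0$ for $i>t_3^\star$ closes this gap.
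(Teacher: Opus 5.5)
Your proof is correct and matches the paper's argument. Both sum $g^A_{i,j}=\hg_i-\delta_{i,j}$ over all rows, use $\sum_i\hg_i=R_0=1$ from correct advice, and apply the window-sum condition of Property~\ref{property:delta}; your extra bookkeeping ($\delta_{i,j}=0$ for $i\le f_j$, and for $i>t_3^\star$ via $\delta_{i,j}\le\hg_i=0$) makes explicit what the paper uses implicitly. One side remark is slightly off: an integral $A$-client can have $f_j=t_3^\star$ (when $\mu_j^\star=0$), but then the window is empty and $R_{f_j}=R_{t_3^\star}=0$ makes the prescribed right-hand side of Property~\ref{property:delta} equal to zero, so the conclusion is unaffected.
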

\begin{proof}
    We prove a stronger statement, that for every $j$ with $\mu_j^A > 0$ we have the previous statement but substituting $g_{i,j}^A$ for $g_{i,j}$, where $g_{i,j}^A$ is defined as in the $A$-assignment used in Equation~\ref{formula:fractional-assignments-convex-combination}. 
    If $j \in A$, then
    \begin{equation*}
        \sum_{i=1}^n g_{i,j}^A = \sum_{i=1}^n (\hg_i-\delta_{i,j}) = 1 - \sum_{i=f_j+1}^{t_3^\star} \delta_{i,j}
        = 1 - \Mcoeffmone \cdot \left(1 - \frac{[\MRttwo - R_{f_j}]^+}{\MRttwo}\right)
    \end{equation*}
    by Property~\ref{property:delta}.
\end{proof}

\begin{lemma}
    \label{lem:gen-B}
    If client $j$ belongs integrally to group $B$, then
    \begin{equation*}
        \sum_{i=1}^n g_{i,j} = 1.
    \end{equation*}
\end{lemma}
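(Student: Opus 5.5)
We need to prove Lemma gen-B: for an integral B-client $j$, $\sum_{i=1}^n g_{i,j}=1$. The plan mirrors Lemma~\ref{lem:gen-A}: prove the slightly stronger statement that, for every $j$ with $\mu_j^B>0$, the $B$-assignment column satisfies $\sum_{i=1}^n g^B_{i,j}=1$. By Table~\ref{tab:general-case-procedure}, $g^B_{i,j}=\hg_i$ for every row $i$, with no correction term $\delta$ or $\chi$. Hence
\begin{equation*}
    \sum_{i=1}^n g^B_{i,j}=\sum_{i=1}^n \hg_i = R_0 .
\end{equation*}

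The only remaining input is that the procedure succeeded. Algorithm~\ref{alg:general-case-procedure-hgi} returns $\perp$ unless $R_0=1$, and we are working under correct advice. So $R_0=1$, as also recorded in Observation~\ref{obs:thresholds} and Lemma~\ref{lem:f-constraint}. This gives the claim for integral $B$-clients. The fractional version then follows by linearity, through Equation~\ref{formula:fractional-assignments-convex-combination}, exactly as in Lemma~\ref{lem:gen-fractional}.

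One point needs a quick check: that $g^B_{i,j}$ really means the whole column, including rows $i<j$ and $i=j$. The table entry is unconditional in $i$, and the initial-term usage $\sum_{j'>j} g_{j,j'}=(n-j)\hg_j$ in Lemma~\ref{lem:credit-B-after} confirms this reading. I expect no real obstacle here; the only subtlety is making sure that for the boundary client $t_1$ the $B$-part uses the full $\hg_{t_1}$ rather than $\hg^B_{t_1}$. The table writes $g^B_{i,j}=\hg_i$ with the full profile, so the sum is still $R_0=1$.
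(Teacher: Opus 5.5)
Your proposal is correct and matches the paper's proof essentially verbatim: both prove the stronger statement for the $B$-part $g^B_{i,j}=\hg_i$ (for any $j$ with $\mu_j^B>0$) and conclude $\sum_{i=1}^n g^B_{i,j}=\sum_{i=1}^n\hg_i=R_0=1$, using that the procedure succeeded under correct advice. Your reading of the column as unconditional in $i$, and of the $B$-part at $t_1$ as using the full $\hg_{t_1}$, is also how the paper uses it.
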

\begin{proof}
    We prove a stronger statement, that for every $j$ with $\mu_j^B > 0$ we have the previous statement but substituting $g_{i,j}^B$ for $g_{i,j}$, where $g_{i,j}^B$ is defined as in the $B$-assignment used in Equation~\ref{formula:fractional-assignments-convex-combination}.
    If $j \in B$, then $g_{i,j}^B=\hg_i$ for every $i$, hence $\sum_{i=1}^n g_{i,j}^B=\sum_{i=1}^n \hg_i=1$.
\end{proof}

\begin{lemma}
    \label{lem:gen-C}
    If client $j$ belongs integrally to group $C$, then
    \begin{equation*}
        \sum_{i=1}^n g_{i,j} = 1.
    \end{equation*}
\end{lemma}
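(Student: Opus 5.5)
The plan is to mirror the proof of Lemma~\ref{lem:gen-B} and prove the stronger statement with $g_{i,j}$ replaced by the $C$-part $g^C_{i,j}$: for every $j$ with $\mu_j^C>0$,
\[
\sum_{i=1}^n g^C_{i,j}=1 .
\]
The integral case, and the use of this in the fractional case (Lemma~\ref{lem:gen-fractional}), then follow at once. This is useful because Equation~\eqref{formula:fractional-assignments-convex-combination} makes $g_{i,j}$ a convex combination of the group assignments. Consequently, once each group-part has a fixed column sum, the convex combination claimed in Lemma~\ref{obs:sum-of-generated-credit-column} is automatic.

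First I read off the $C$-row of Table~\ref{tab:general-case-procedure}. It gives $g^C_{i,j}=\hg_i$ for every row index $i$. Unlike the $A$-assignment, there is no $\delta_{i,j}$ correction. Unlike the $D$-assignment, there is no $\chi_j\Mfrac$ correction, even in the rows $i\le t_1$ where $\hg_{t_1}$ is split into $\hg^A_{t_1}+\hg^B_{t_1}$.

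Second, I sum over $i$:
\[
\sum_{i=1}^n g^C_{i,j}=\sum_{i=1}^n\hg_i=R_0 .
\]
Since we are in Section~\ref{sec:general-case-proof}, the advice is correct, so Algorithm~\ref{alg:general-case-procedure-hgi} did not return $\perp$. Hence $R_0=1$, which is also recorded in Observation~\ref{obs:thresholds} and Lemma~\ref{lem:f-constraint}.

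The only point needing care is the boundary row $i=t_1$. There I must check that the $C$-client generates the full $\hg_{t_1}=\hg^A_{t_1}+\hg^B_{t_1}$ and not just $\hg^B_{t_1}$. This holds as the table is written, because $C$-clients satisfy $j>t_2\ge t_1$, so the split at $t_1$ only affects the $D$-formula. I also check that rows $i>t_3^\star$ contribute nothing, since $\hg_i=0$ there; this does not change the sum anyway. No real obstacle is expected, so the whole proof is a one-line computation.
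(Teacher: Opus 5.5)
Your proposal is correct and is the paper's proof: it also proves the stronger statement for the $C$-part whenever $\mu_j^C>0$, reads off $g^C_{i,j}=\hg_i$ from Table~\ref{tab:general-case-procedure}, and concludes $\sum_i g^C_{i,j}=R_0=1$. Your side remark about the boundary row $t_1$ is unnecessary: the table gives the $C$-part the full $\hg_{t_1}$ for every row regardless of the position of $j$, so that check is not really about $j>t_2\ge t_1$. It does no harm, though.
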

\begin{proof}
    We prove a stronger statement, that for every $j$ with $\mu_j^C > 0$ we have the previous statement but substituting $g_{i,j}^C$ for $g_{i,j}$, where $g_{i,j}^C$ is defined as in the $C$-assignment used in Equation~\ref{formula:fractional-assignments-convex-combination}.
    If $j \in C$, then $g_{i,j}^C=\hg_i$ for every $i$, so $\sum_{i=1}^n g_{i,j}^C=\sum_{i=1}^n \hg_i=1$.
\end{proof}

\begin{lemma}\label{lem:gen-D}
    If client $j$ belongs integrally to group $D$, then
    \begin{equation*}
       \sum_{i=1}^n g_{i,j} = 1 - \Mcoeffmone \cdot \left[1 - \frac{R_{j-1}}{\MRttwo}\right]^+.
    \end{equation*}
\end{lemma}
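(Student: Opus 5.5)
The plan is to expand the $D$-row of Table~\ref{tab:general-case-procedure} and sum over $i$, proving as in Lemmas~\ref{lem:gen-A}--\ref{lem:gen-C} the stronger statement for $g^D_{i,j}$ whenever $\mu_j^D>0$, so that the fractional case follows by linearity.

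Summing the three branches of the table gives
\[
\sum_{i=1}^n g^D_{i,j} \;=\; \sum_{i=1}^n \hg_i \;-\; \chi_j\,\Mfrac\Bigl(\sum_{i<t_1}\hg_i+\hg^A_{t_1}\Bigr) \;=\; 1-\chi_j\,\Mfrac\, H_A ,
\]
where $H_A:=\sum_{i<t_1}\hg_i+\hg^A_{t_1}$ is the total Phase-1 mass. Here I use $R_0=1$, which holds because the advice is correct.

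First I split on the position of $j$.

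If $j\le t_2$, then $R_{j-1}\ge R_{t_2-1}\ge\MRttwo$ by Observation~\ref{obs:thresholds}. Hence $\chi_j=0$ and also $[1-R_{j-1}/\MRttwo]^+=0$, so both sides equal $1$.

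If $j>t_2$, the claim reduces to the identity
\[
\chi_j\,\frac{\Mcoeffmone}{\Mtone}\, H_A \;=\; \Mcoeffmone\,\chi_j .
\]
So it suffices to show $H_A=\Mtone$ whenever $t_1$ exists and some $\chi_j>0$. This should follow from $R_0=1$ together with $R'_{t_1-1}=R_{t_1}+\hg^B_{t_1}=\MRtone=1-\Mtone$ (Definition~\ref{def:t1-boundary-split}). Indeed, $1=R_0=R'_{t_1-1}+H_A$, so $H_A=1-\MRtone=\Mtone$.

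Then I check the degenerate case $t_1=n$. There $\hg^A_n=0$ and $\hg_n=\hg^B_n$, so the same identity $R_0=\hg^B_n+\sum_{i<n}\hg_i$ still gives $H_A=1-\hg^B_n$. However, $\hg^B_n=\hg^\star_n$ need not equal $\MRtone$. In that case I would argue that every $j>t_2=n$ is vacuous, and for $j=n$ we have $R_{n-1}=\hg_n^\star\ge\MRtone\ge\MRttwo$. Hence $\chi_j=0$ and we fall back into the first case.

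The main obstacle I expect is matching the table's $\chi_j$, which is defined with the letter $\theta$, to $[1-R_{j-1}/\MRttwo]^+$. Here $\theta$ denotes $\MRttwo$ (the macros coincide), so this matching is just notational and should be harmless. A second obstacle is confirming that the "Phase 1 mass" $H_A$ is exactly what the table subtracts at $i=t_1$, namely $\hg^A_{t_1}$ and not $\hg_{t_1}$. This holds by construction.
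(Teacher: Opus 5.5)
Your proposal is correct and is essentially the paper's proof. You sum the $D$-row to get $1-\chi_j\Mfrac\bigl(\sum_{i<t_1}\hg_i+\hg^A_{t_1}\bigr)$ and observe that $\chi_j=0$ for small $j$. You then use $R_0=1$ together with $\hg^B_{t_1}=\MRtone-R_{t_1}$ to show that the subtracted mass equals $\Mtone$.

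The only difference is cosmetic: you split at $t_2$ (and spell out the vacuous case $t_1=n$), whereas the paper splits at $t_1$.
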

\begin{proof}
    We prove a stronger statement, that for every $j$ with $\mu_j^D > 0$ we have the previous statement but substituting $g_{i,j}^D$ for $g_{i,j}$, where $g_{i,j}^D$ is defined as in the $D$-assignment used in Equation~\ref{formula:fractional-assignments-convex-combination}.
    For $j \in D$, by the $D$-assignment,
    \begin{equation*}
        \sum_{i=1}^n g_{i,j}^D = 1 - \chi_j \cdot \Mfrac \cdot \left(\sum_{i=1}^{t_1-1} \hg_i + \hg_{t_1}^A\right).
    \end{equation*}
    For $j\leq t_1$, the equality is direct because $\chi_j=\left[1 - \frac{R_{j-1}}{\MRttwo}\right]^+=0$, since $R_{j-1}\geq R_{t_1-1}\geq \MRtone > \MRttwo$. For $j>t_1$:
    \begin{equation*}
        \sum_{i=1}^{t_1-1} \hg_i + \hg_{t_1}^A = 1 - R_{t_1 - 1} + (\hg_{t_1} - \hg_{t_1}^B) = 1 - R_{t_1} - (\MRtone - R_{t_1}) = \Mtone\,.
    \end{equation*}
    we obtain
    \begin{equation*}
        \sum_{i=1}^n g_{i,j}^D = 1 - \chi_j \cdot \Mcoeffmone
        = 1 - \Mcoeffmone \cdot \left[1 - \frac{R_{j-1}}{\MRttwo}\right]^+.
    \end{equation*}
\end{proof}

\begin{lemma}\label{lem:gen-fractional}
    If client $j$ belongs fractionally to groups $A,B,C,D$ with coefficients $\mu_j^A, \mu_j^B, \mu_j^C, \mu_j^D$, then
    \begin{equation*}
        \sum_{i=1}^n g_{i,j} = 1 - \Mcoeffmone \cdot \mu_j^A \cdot \left(1 - \frac{[\MRttwo - R_{f_j}]^+}{\MRttwo}\right) - \Mcoeffmone \cdot \mu_j^D \cdot \left[1 - \frac{R_{j-1}}{\MRttwo}\right]^+
    \end{equation*}
\end{lemma}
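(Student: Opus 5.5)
The plan is to reduce the fractional statement to the four ``stronger'' integral statements already proved in Lemmas~\ref{lem:gen-A}, \ref{lem:gen-B}, \ref{lem:gen-C} and \ref{lem:gen-D}. Each of those proofs shows that the group-specific column sum $G_j^X:=\sum_{i=1}^n g_{i,j}^X$ takes the stated value for every client $j$ with $\mu_j^X>0$, not only for integral members of group $X$.

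By the convex-combination rule in Equation~\ref{formula:fractional-assignments-convex-combination}, we have $g_{i,j}=\sum_{X}\mu_j^X g_{i,j}^X$ for every $i$. Summing over $i$ therefore gives
\[
\sum_{i=1}^n g_{i,j}=\sum_{X\in\{A,B,C,D\}}\mu_j^X G_j^X .
\]
Only groups with $\mu_j^X>0$ contribute, and for those groups the stronger lemmas apply. Substituting $G_j^B=G_j^C=1$ together with the expressions for $G_j^A$ and $G_j^D$, and using $\mu_j^A+\mu_j^B+\mu_j^C+\mu_j^D=1$, the constant terms collect to $1$. What remains is exactly the two $\Mcoeffmone$-correction terms weighted by $\mu_j^A$ and $\mu_j^D$.

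The one point that needs care is that the stronger versions really hold verbatim for fractional clients, and this is the main thing I would double-check. For group $A$, the relevant clients are those with $j\le t_1$, including the split client $t_1$. Property~\ref{property:delta} must be applicable to them, and Lemma~\ref{lem:delta-existence} was proved for all $1\le j\le t_1$, so this holds.

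For group $D$, the stronger claim needs a separate look at the boundary index $i=t_1$, where $g^D_{t_1,j}=\hg_{t_1}-\chi_j\Mfrac\hg^A_{t_1}$. The computation
\[
\sum_{i<t_1}\hg_i+\hg^A_{t_1}=\Mtone
\]
uses only $R_{t_1}+\hg^B_{t_1}=\MRtone$ and $R_0=1$. Neither fact depends on $\mu_j$, so the $D$-case goes through for fractional $j$ as well. When $j\le t_1$ we have $\chi_j=0$, and the formula is again consistent.

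With these checks in place, the lemma follows by linearity and requires no further case analysis.
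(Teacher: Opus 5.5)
Your proposal is correct and is essentially the paper's proof: expand $\sum_i g_{i,j}=\sum_X\mu_j^X\sum_i g_{i,j}^X$ via the convex-combination rule, substitute the group-wise column sums from the membership-independent versions of Lemmas~\ref{lem:gen-A}--\ref{lem:gen-D}, and collect the constants using $\sum_X\mu_j^X=1$. Your side checks on the $A$- and $D$-parts are exactly the facts the paper uses implicitly. One small addition: when $t_1=n$, the identity $R_{t_1}+\hat g^B_{t_1}=\bar\tau$ may fail, but then every client has $j\le t_1$, so $\chi_j=0$ and your final remark already covers that case.
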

\begin{proof}
    The proof is a direct calculation using the definition of $g_{i,j}$ as a convex combination of the $g_{i,j}^X$ for $X \in \{A,B,C,D\}$ using the previous lemmas' proofs which used no property of the membership being integral.
    \begin{align*}
        \sum_{i=1}^n g_{i,j}
        &= \sum_{i=1}^n \sum_{X\in\{A,B,C,D\}} \mu_j^X g_{i,j}^X
        = \sum_{X\in\{A,B,C,D\}} \mu_j^X \sum_{i=1}^n g_{i,j}^X \\
        &= \mu_j^A \cdot \left(1 - \Mcoeffmone \cdot \left(1 - \frac{[\MRttwo - R_{f_j}]^+}{\MRttwo}\right)\right) + (\mu_j^B + \mu_j^C) \cdot 1 + \mu_j^D \cdot \left(1 - \Mcoeffmone \cdot \left[1 - \frac{R_{j-1}}{\MRttwo}\right]^+\right)\\
        &= 1 - \Mcoeffmone \cdot \mu_j^A \cdot \left(1 - \frac{[\MRttwo - R_{f_j}]^+}{\MRttwo}\right) - \Mcoeffmone \cdot \mu_j^D \cdot \left[1 - \frac{R_{j-1}}{\MRttwo}\right]^+.
    \end{align*}
\end{proof}

\paragraph{Transfer calculation.}

\begin{lemma}
    \label{lem:tran-A}
    If client $j$ belongs integrally to group $A$, then
    \begin{equation*}
        \sum_{i=1}^n (\bar s_{i,j} + t_{i,j}) = R_{j-1} - [\MRttwo - R_{f_j}]^+.
    \end{equation*}
\end{lemma}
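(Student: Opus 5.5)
The identity follows by summing the $A$-column of Table~\ref{tab:general-case-procedure} directly, much as in the proof of Lemma~\ref{lem:credit-A-after}, but now over the full range of $i$ instead of a prefix ending at some $i \ge f_j$. The first step is to note which entries of column $j$ can be nonzero. For an integral $A$-client the $t$-assignment is $t^A_{f_j,j}=0$, and every other $t$-entry is zero by construction, so the whole $t$-contribution disappears. As for the preliminary sent-credit, the table defines $\bar s^A_{i,j}=\hg_i-\sigma_{i,j}$ for $i \ge j$ and sets every entry with $i<j$ to zero. The left-hand side therefore reduces to $\sum_{i=j}^n(\hg_i-\sigma_{i,j})$.

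Next, the two sums are evaluated separately. We have $\sum_{i=j}^n \hg_i = R_{j-1}-R_n = R_{j-1}$, because $R_n=0$. By Property~\ref{property:sigma}, $\sigma_{i,j}$ vanishes unless $j\le i\le f_j$, and its total over that window is exactly $[\MRttwo-R_{f_j}]^+$. Subtracting gives $R_{j-1}-[\MRttwo-R_{f_j}]^+$, which is the claimed value. The existence of a valid $\sigma$ is already guaranteed by Lemma~\ref{lem:sigma-existence}.

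One point needs care: the lemma concerns the preliminary matrix $\bar s$, which may contain a diagonal term $\bar s_{j,j}=\hg_j-\sigma_{j,j}$, and this term must be kept in the sum. That is consistent with the statement, and it is also harmless later on. By Lemma~\ref{lem:diagonal-repair}, the diagonal repair only moves mass within column $j$, so the column total, and hence the $(d_j)$ budget, stays the same. As in Lemmas~\ref{lem:gen-A}--\ref{lem:gen-D}, I would prove the slightly stronger statement with $\bar s^A,t^A$ in place of $\bar s,t$, valid whenever $\mu_j^A>0$, so that it can be reused for the fractional client at $t_1$.

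There is no real obstacle here. The only thing that could go wrong is the bookkeeping of the summation ranges, namely the lower limit $i=j$ and the fact that $\sigma$ is supported only up to $f_j$, and both are settled directly by Property~\ref{property:sigma}.
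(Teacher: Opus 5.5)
Your proposal is correct and is essentially the paper's proof. The paper also drops the vanishing $t^A$ entries, sums $\bar s^A_{i,j}=\hg_i-\sigma_{i,j}$ over $i\ge j$ to get $R_{j-1}-\sum_{i=j}^{f_j}\sigma_{i,j}$, and concludes by Property~\ref{property:sigma}. Like you, it proves the stronger version for $\bar s^A, t^A$ whenever $\mu_j^A>0$.
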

\begin{proof}
    We prove a stronger statement, that for every $j$ with $\mu_j^A > 0$ we have the previous statement but substituting $\bar s_{i,j}^A$ and $t_{i,j}^A$ for $\bar s_{i,j}$ and $t_{i,j}$, where $\bar s_{i,j}^A$ and $t_{i,j}^A$ are defined as in the $A$-assignment used in Equation~\ref{formula:fractional-assignments-convex-combination}.
    For group $A$ we have $\sum_{i=1}^n t_{i,j}^A=0$, and
    \begin{equation*}
        \sum_{i=1}^n \bar s_{i,j}^A = \sum_{i=j}^n (\hg_i-\sigma_{i,j}) = R_{j-1} - \sum_{i=j}^{f_j} \sigma_{i,j}
    = R_{j-1} - [\MRttwo - R_{f_j}]^+
    \end{equation*}
    by Property~\ref{property:sigma}.
\end{proof}

\begin{lemma}
    \label{lem:tran-B}
    If client $j$ belongs integrally to group $B$, then
    \begin{equation*}
        \sum_{i=1}^n (\bar s_{i,j} + t_{i,j}) \leq \Mttwo.
    \end{equation*}
\end{lemma}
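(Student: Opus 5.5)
The plan is to compute the two sums directly from the $B$-column of Table~\ref{tab:general-case-procedure}, as was done for the credit formulas in Lemma~\ref{lem:credit-B-after}. As in Lemmas~\ref{lem:gen-B} and~\ref{lem:tran-A}, I will prove the stronger statement with $\bar s_{i,j}^B, t_{i,j}^B$ in place of $\bar s_{i,j}, t_{i,j}$, valid for every $j$ with $\mu_j^B>0$. This is what the fractional client $j=t_1$ will need later in Lemma~\ref{lem:budget-fractional}. Such a $j$ satisfies $t_1\le j\le t_2$. The target is $\Mttwo = 1-\Mcoeffmone-\Mcoeffmone^2$, and $\MRtone = 1-2\Mcoeffmone-\Mcoeffmone^2$, so the target equals $\MRtone+\Mcoeffmone$. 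The whole proof reduces to showing that the total transfer is at most $\MRtone+\Mcoeffmone$.

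First the pool transfers. In the $B$-assignment, $\bar s^B_{i,j}=\hg_i$ for every $i>j$, with an extra $\Mcoeffmone$ added at row $f_j+1$. This extra term exists only when $f_j<n$. Since $f_j+1>j$, the base part $\hg_{f_j+1}$ is already counted among the rows $i>j$. Hence
\begin{equation*}
\sum_{i=1}^n \bar s^B_{i,j} = R_j + \Mcoeffmone\cdot\ind[f_j<n] \le R_j+\Mcoeffmone .
\end{equation*}
Second the withdrawals. The only nonzero entry is $t^B_{f_j,j}=[1-R_{j-1}-\Mtone]^+ = [\MRtone-R_{j-1}]^+$.

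Next I split into two cases according to the sign of $\MRtone-R_{j-1}$. If $R_{j-1}\le\MRtone$, the total is at most
\begin{equation*}
R_j+\Mcoeffmone+\MRtone-R_{j-1} = \MRtone+\Mcoeffmone-\hg_j \le \MRtone+\Mcoeffmone = \Mttwo,
\end{equation*}
using $\hg_j\ge 0$ from Lemma~\ref{lem:nonneg}. If $R_{j-1}>\MRtone$, then by Observation~\ref{obs:thresholds} and the monotonicity of $R$ this can only occur at $j=t_1$. There the withdrawal term vanishes, and the total is at most $R_{t_1}+\Mcoeffmone$.

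The step needing the most care is this boundary case $j=t_1$. Observation~\ref{obs:thresholds} gives $R_{t_1}<\MRtone$ when $t_1<n$, so the total is at most $\MRtone+\Mcoeffmone=\Mttwo$. When $t_1=n$ we have $R_n=0$ and $f_n=n$, so the extra $\Mcoeffmone$ is absent and the total is $0$. Finally, because the bound holds for the $B$-assignment of every $j$ with $\mu_j^B>0$, it will combine linearly with the other groups in the fractional budget argument.
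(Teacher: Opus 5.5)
Your proof is correct and follows essentially the paper's argument: a direct computation from the $B$-column of Table~\ref{tab:general-case-procedure}, proved in the stronger form for every $j$ with $\mu_j^B>0$, with $\sum_i \bar s^B_{i,j}\le R_j+\Mcoeffmone$ plus the single withdrawal at $f_j$. The only cosmetic difference is in the withdrawal term. The paper bounds it in one line, $[1-R_{j-1}-\Mtone]^+\le[1-R_j-\Mtone]^+=\MRtone-R_j$, using $R_j\le\MRtone$ for every $B$-client, whereas you split into the cases $R_{j-1}\le\MRtone$ and $R_{j-1}>\MRtone$ (the latter only at $j=t_1$), which amounts to the same thing.
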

\begin{proof}
    We prove a stronger statement, that for every $j$ with $\mu_j^B > 0$ we have the previous statement but substituting $\bar s_{i,j}^B$ and $t_{i,j}^B$ for $\bar s_{i,j}$ and $t_{i,j}$, where $\bar s_{i,j}^B$ and $t_{i,j}^B$ are defined as in the $B$-assignment used in Equation~\ref{formula:fractional-assignments-convex-combination}.
    We have
    \begin{equation*}
        \sum_{i=1}^n \bar s_{i,j}^B \leq \sum_{i=j+1}^n \hg_i + \Mcoeffmone = R_j + \Mcoeffmone,
    \end{equation*}
    where the inequality is only because the bookkeeping entry at $f_j+1$ may lie outside the LP index set when $f_j=n$.
    Also,
    \begin{equation*}
        \sum_{i=1}^n t_{i,j}^B = [1-R_{j-1}-\Mtone]^+ \leq [1-R_{j}-\Mtone]^+ = \MRtone - R_j,
    \end{equation*}
    because $R_{j-1} \ge R_j$ and any $B$-client (whether integrally or partially) satisfies $R_{j} \le \MRtone$.
    Therefore
    \begin{equation*}
        \sum_{i=1}^n (\bar s_{i,j}^B+t_{i,j}^B) \le (R_j+\Mcoeffmone) + (\MRtone-R_j) = \Mttwo.
    \end{equation*}
    because $\Mcoeffmone + \MRtone = \Mttwo$.
\end{proof}

\begin{lemma}
    \label{lem:tran-C}
    If client $j$ belongs integrally to group $C$, then
    \begin{equation*}
        \sum_{i=1}^n (\bar s_{i,j} + t_{i,j}) \leq \Mttwo.
    \end{equation*}
\end{lemma}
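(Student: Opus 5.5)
The plan is to mirror the proof of Lemma~\ref{lem:tran-B} almost verbatim. In fact, I will prove the stronger statement that for every $j$ with $\mu_j^C>0$ the bound holds with $\bar s_{i,j}^C$ and $t_{i,j}^C$ in place of $\bar s_{i,j}$ and $t_{i,j}$. The integral case is then immediate, and this stronger form is what the fractional budget lemma will need.

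\textbf{The $\bar s$-part.} From Table~\ref{tab:general-case-procedure}, the $C$-assignment sets $\bar s_{i,j}^C=\hg_i$ for $i>j$ and adds an extra $\Mcoeffmone$ at row $f_j+1$; the diagonal entry is $0$. This gives
\begin{equation*}
\sum_{i=1}^n \bar s_{i,j}^C \le \sum_{i=j+1}^n \hg_i + \Mcoeffmone = R_j+\Mcoeffmone .
\end{equation*}
The inequality, rather than equality, only covers the case $f_j=n$, where the row $f_j+1$ lies outside the index set.

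\textbf{The $t$-part.} The only nonzero entry is $t_{f_j,j}^C=1-R_{j-1}-\Mtone=\MRtone-R_{j-1}$. Since $R$ is non-increasing (Observation~\ref{obs:thresholds}), we have $R_{j-1}\ge R_j$, so this is at most $\MRtone-R_j$.

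\textbf{Summing.} Adding the two parts gives
\begin{equation*}
\sum_{i=1}^n(\bar s_{i,j}^C+t_{i,j}^C)\le R_j+\Mcoeffmone+\MRtone-R_j=\Mcoeffmone+1-\Mtone .
\end{equation*}
Since $\Mtone=2\Mcoeffmone+\Mcoeffmone^2$ and $\MRttwo=\Mcoeffmone+\Mcoeffmone^2$, the right-hand side equals $1-\MRttwo$, which is the claimed $\Mttwo$.

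\textbf{Main obstacle.} The one point needing care is the sign of $t^C$: unlike the $B$ case, no positive part is taken, so I must confirm $t_{f_j,j}^C\ge0$. For $j>t_2\ge t_1$ we have $R_{j-1}\le R_{t_1}<\MRtone$ by Observation~\ref{obs:thresholds} (and for $t_1=n$ there are no $C$-clients). Hence $t_{f_j,j}^C\ge0$, as already recorded in Lemma~\ref{lem:nonneg}.
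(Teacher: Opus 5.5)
Your proof is correct and is essentially the paper's own argument: prove the stronger per-group bound for any $j$ with $\mu_j^C>0$, bound the $\bar s^C$-sum by $R_j+\varepsilon$ and the single $t^C$-entry by $\bar\tau-R_j$ using monotonicity of $R$, then add. Your last paragraph is correct but not needed here, since the sign of $t_{f_j,j}^C$ plays no role in an upper bound; that check belongs to the nonnegativity lemma.
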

\begin{proof}
    We prove a stronger statement, that for every $j$ with $\mu_j^C > 0$ we have the previous statement but substituting $\bar s_{i,j}^C$ and $t_{i,j}^C$ for $\bar s_{i,j}$ and $t_{i,j}$, where $\bar s_{i,j}^C$ and $t_{i,j}^C$ are defined as in the $C$-assignment used in Equation~\ref{formula:fractional-assignments-convex-combination}.
    Then, exactly as above,
    \begin{equation*}
        \sum_{i=1}^n \bar s_{i,j}^C \le R_j + \Mcoeffmone.
    \end{equation*}
    For group $C$ we also have
    \begin{equation*}
        \sum_{i=1}^n t_{i,j}^C = 1-R_{j-1}-\Mtone = \MRtone-R_{j-1} \le \MRtone-R_j.
    \end{equation*}
    Hence
    \begin{equation*}
        \sum_{i=1}^n (\bar s_{i,j}^C+t_{i,j}^C) \le (R_j+\Mcoeffmone) + (\MRtone-R_j) = \Mttwo.
    \end{equation*}
    because $\Mcoeffmone + \MRtone = \Mttwo$.
\end{proof}

\begin{lemma}
    \label{lem:tran-D}
    If client $j$ belongs integrally to group $D$, then
    \begin{equation*}
        \sum_{i=1}^n (\bar s_{i,j} + t_{i,j}) = 1 - R_{j-1}.
    \end{equation*}
\end{lemma}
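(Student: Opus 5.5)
Following the pattern of Lemmas~\ref{lem:tran-A}--\ref{lem:tran-C}, I will prove the stronger statement: for every client $j$ with $\mu_j^D>0$, the $D$-assignment entries $\bar s_{i,j}^D$ and $t_{i,j}^D$ from Table~\ref{tab:general-case-procedure} satisfy $\sum_{i=1}^n(\bar s_{i,j}^D+t_{i,j}^D)=1-R_{j-1}$. This form is what the fractional budget lemma will need. The proof is a direct reading of the table, with one boundary case to check separately.

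First, the pool-transfer part. For the $D$-assignment we have $\bar s_{i,j}^D=0$ for all $i>j$. The diagonal entry $\bar s_{j,j}$ can only receive mass from the $A$-assignment, as noted in the proof of Lemma~\ref{lem:diagonal-repair}, so $\bar s_{j,j}^D=0$ as well. Entries with $i<j$ are zero by the convention that unlisted entries vanish. Hence $\sum_{i=1}^n\bar s_{i,j}^D=0$.

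Second, the withdrawal part. The table places a single nonzero $t$-entry at row $f_j$, namely $t^D_{f_j,j}=1-R_{j-1}$. Since $f_j$ is well defined by Observation~\ref{obs:b1} and $j\le f_j\le n$, this entry lies inside the LP index range. Therefore $\sum_{i=1}^n t_{i,j}^D=1-R_{j-1}$, which is nonnegative because $R_{j-1}\le R_0=1$. Adding the two parts gives the claim.

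The only step needing care is the degenerate case $j=t_3^\star=n$, where $\hg_n^\star$ may be positive. There $f_n=n$ and $R_{n-1}=\hg_n^\star$, so the formula reads $t_{n,n}=1-\hg_n^\star$. This is exactly the value used in the proof of Lemma~\ref{lem:credit-D-after}, so the statement is consistent in this case too. I do not expect any real obstacle beyond confirming that the table assigns no other $D$-transfers.
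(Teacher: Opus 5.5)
Your proof is correct and follows the paper's argument. Both prove the stronger statement for the $D$-assignment entries by reading the table: $\bar s^D_{i,j}=0$ for all $i$, and the only nonzero withdrawal is $t^D_{f_j,j}=1-R_{j-1}$. Your extra checks, that $j\le f_j\le n$ and the degenerate case $j=t_3^\star=n$, are consistent remarks the paper does not spell out, and the argument does not need them.
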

\begin{proof}
    We prove a stronger statement, that for every $j$ with $\mu_j^D > 0$ we have the previous statement but substituting $\bar s_{i,j}^D$ and $t_{i,j}^D$ for $\bar s_{i,j}$ and $t_{i,j}$, where $\bar s_{i,j}^D$ and $t_{i,j}^D$ are defined as in the $D$-assignment used in Equation~\ref{formula:fractional-assignments-convex-combination}.
    For group $D$ we always have $\bar s_{i,j}^D=0$, and
    \begin{equation*}
        \sum_{i=1}^n t_{i,j}^D = t_{f_j,j}^D = 1-R_{j-1}.
    \end{equation*}
    The result follows.
\end{proof}

\begin{lemma}
    \label{lem:tran-fractional}
    If client $j$ belongs fractionally to groups $A,B,C,D$ with coefficients $\mu_j^A, \mu_j^B, \mu_j^C, \mu_j^D$, then
    \begin{equation*}
        \sum_{i=1}^n (\bar s_{i,j} + t_{i,j}) \leq \mu_j^A \cdot (R_{j-1} - [\MRttwo - R_{f_j}]^+) + (\mu_j^B + \mu_j^C) \cdot (\Mttwo) + \mu_j^D \cdot (1 - R_{j-1}).
    \end{equation*}
\end{lemma}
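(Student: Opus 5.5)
The bound follows by linearity from the per-group lemmas. Definition~\eqref{formula:fractional-assignments-convex-combination} writes $\bar s_{i,j}$ and $t_{i,j}$ as convex combinations $\sum_{X}\mu_j^X \bar s^X_{i,j}$ and $\sum_X \mu_j^X t^X_{i,j}$ over $X\in\{A,B,C,D\}$. So I sum over $i$ and exchange the order of summation, which gives
\begin{equation*}
\sum_{i=1}^n(\bar s_{i,j}+t_{i,j})=\sum_{X\in\{A,B,C,D\}}\mu_j^X\sum_{i=1}^n(\bar s^X_{i,j}+t^X_{i,j}).
\end{equation*}

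Next I apply the strengthened versions already proved inside Lemmas~\ref{lem:tran-A}, \ref{lem:tran-B}, \ref{lem:tran-C} and~\ref{lem:tran-D}. Each of those proofs shows its bound for the $X$-assignment whenever $\mu_j^X>0$, without assuming the membership is integral.
\begin{itemize}
\item For $A$, the $A$-column sum equals $R_{j-1}-[\MRttwo-R_{f_j}]^+$; this uses Property~\ref{property:sigma}, which is imposed whenever $\mu_j^A>0$.
\item For $B$ and $C$, the column sums are at most $\Mttwo$.
\item For $D$, the column sum equals $1-R_{j-1}$.
\end{itemize}
Terms with $\mu_j^X=0$ contribute nothing. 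Multiplying each bound by the nonnegative weight $\mu_j^X$ and summing gives the claimed inequality.

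One point needs care: the $B$ and $C$ bounds must hold for any client with positive $B$- or $C$-weight, not only integral ones. The $B$ proof uses $R_j\le\MRtone$, and for the boundary client $j=t_1$ I check it directly: Observation~\ref{obs:thresholds} gives $R_{t_1}<\MRtone$ when $t_1<n$, and $R_n=0$ otherwise. The $C$ proof uses $t^C_{f_j,j}=\MRtone-R_{j-1}\ge0$, which holds for $j>t_2\ge t_1$ by the monotonicity of $R$. The only other subtlety is an extra $\Mcoeffmone$ entry at index $f_j+1>n$; it is simply omitted, and that only helps the inequality. I therefore expect no real obstacle: the statement follows from linearity plus the per-group bounds.
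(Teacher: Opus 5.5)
Your proof is correct and is essentially the paper's argument. You expand $\bar s_{i,j}$ and $t_{i,j}$ as the convex combinations in \eqref{formula:fractional-assignments-convex-combination}, exchange the sums, and apply the per-group bounds from the proofs of Lemmas~\ref{lem:tran-A}--\ref{lem:tran-D}, which never use integrality. Your extra check that $R_j\le\MRtone$ for the boundary client $t_1$ is a detail the paper leaves implicit. The nonnegativity of $t^C_{f_j,j}$ that you mention is not needed for this upper bound, since the $C$ step only uses $R_{j-1}\ge R_j$.
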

\begin{proof}
    The proof is a direct calculation using the definition of $\bar s_{i,j}$ and $t_{i,j}$ as convex combinations of the $\bar s_{i,j}^X$ and $t_{i,j}^X$ for $X \in \{A,B,C,D\}$ and using the previous lemmas' proofs which did not use integrality of the memberships.
    \begin{align*}
        \sum_{i=1}^n (\bar s_{i,j} + t_{i,j})
        &= \sum_{i=1}^n \sum_{X\in\{A,B,C,D\}} \mu_j^X (\bar s_{i,j}^X + t_{i,j}^X)
        = \sum_{X\in\{A,B,C,D\}} \mu_j^X \sum_{i=1}^n (\bar s_{i,j}^X + t_{i,j}^X) \\
        &\leq \mu_j^A \cdot (R_{j-1} - [\MRttwo - R_{f_j}]^+) + \mu_j^B \cdot (\Mttwo) + \mu_j^C \cdot (\Mttwo) + \mu_j^D \cdot (1 - R_{j-1}),
    \end{align*}
\end{proof}

\paragraph{Budget calculation.}

We use $\MRttwo = \Mcoeffmone + \Mcoeffmone^2$ and $\Mcoeff = \Mcoeffmone + 1$, hence $\MRttwo = \Mcoeffmone \cdot \Mcoeff$ and $\Mcoeff + 1 - \MRttwo = 2- \Mcoeffmone^2 = \Mrho$, as seen in Property~\ref{property:coefficients}.

\begin{lemma}
    \label{lem:budget-A}
    If client $j$ belongs integrally to group $A$, then
    \begin{equation*}
        \Mcoeff \cdot \sum_{i=1}^n g_{i,j} + \sum_{i=1}^n (\bar s_{i,j} + t_{i,j}) \leq \Mrho.
    \end{equation*}
\end{lemma}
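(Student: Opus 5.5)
Combine the generation formula of Lemma~\ref{lem:gen-A} with the transfer formula of Lemma~\ref{lem:tran-A}. Both were proved for the $A$-assignment of any client with $\mu_j^A>0$, and for an integral $A$-client these are the actual column values. Write $x:=[\MRttwo-R_{f_j}]^+$, so $x\in[0,\MRttwo]$. The left-hand side then equals
\[
\Mcoeff\Bigl(1-\Mcoeffmone\bigl(1-\tfrac{x}{\MRttwo}\bigr)\Bigr) + R_{j-1} - x .
\]
Property~\ref{property:coefficients} gives $\Mcoeff\Mcoeffmone=\MRttwo$, hence $\Mcoeff\Mcoeffmone/\MRttwo = 1$. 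The expression therefore simplifies to
\[
\Mcoeff - \MRttwo + x + R_{j-1} - x = \Mcoeff - \MRttwo + R_{j-1}.
\]
The $x$-terms cancel exactly, and this is precisely why the $\delta$-budget in Property~\ref{property:delta} was chosen with the factor $1-x/\MRttwo$.

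Next I use $R_{j-1}\le R_0 = 1$. This holds by Observation~\ref{obs:thresholds}, since $R$ is non-increasing and $R_0=1$ whenever the procedure succeeds. It gives the bound $\Mcoeff+1-\MRttwo$, which equals $\Mrho$ by the second identity in Property~\ref{property:coefficients}. That completes the argument.

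The only step needing care is checking that the objective sums run over the same index ranges as the lemmas. The $(d_j)$ constraint uses $\sum_{i=j}^n(\bar s_{i,j}+t_{i,j})$, while the statement uses $\sum_{i=1}^n$. These agree because $\bar s^A_{i,j}$ is defined only for $i\ge j$ and $t^A\equiv 0$. The bound is also insensitive to the later diagonal repair of Lemma~\ref{lem:diagonal-repair}, which preserves column sums. No real obstacle is expected; the content is just the cancellation of $x$.
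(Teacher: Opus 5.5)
Your proposal is correct and follows the paper's proof. Like the paper, you combine Lemmas~\ref{lem:gen-A} and~\ref{lem:tran-A}, cancel the $[\theta-R_{f_j}]^+$ terms using $\gamma(\gamma-1)=\theta$, and conclude with $R_{j-1}\le 1$ and $\gamma+1-\theta=2-(\gamma-1)^2$; your remarks on the summation range and the diagonal repair are accurate but not needed for the lemma as stated.
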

\begin{proof}
    By Lemmas \ref{lem:gen-A} and \ref{lem:tran-A},
    \begin{align*}
        \Mcoeff \cdot \sum_{i=1}^n g_{i,j} + \sum_{i=1}^n (\bar s_{i,j} + t_{i,j})
        &= \Mcoeff \cdot \left(1 - \Mcoeffmone \cdot \left(1 - \frac{[\MRttwo - R_{f_j}]^+}{\MRttwo}\right)\right) + R_{j-1} - [\MRttwo - R_{f_j}]^+ \\
        &= \Mcoeff - \MRttwo + R_{j-1} \\
        &\le \Mcoeff + 1 - \MRttwo = \Mrho,
    \end{align*}
    using $\Mcoeff \cdot \Mcoeffmone = \MRttwo$, $R_{j-1} \le 1$ and $\Mcoeff + 1 - \MRttwo = \Mrho$ which follow from the definitions of $\Mcoeff$ and $\MRttwo$ as reminded above.
\end{proof}

\begin{lemma}
    \label{lem:budget-B}
    If client $j$ belongs integrally to group $B$, then
    \begin{equation*}
        \Mcoeff \cdot \sum_{i=1}^n g_{i,j} + \sum_{i=1}^n (\bar s_{i,j} + t_{i,j}) \leq \Mrho.
    \end{equation*}
\end{lemma}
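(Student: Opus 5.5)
This is a direct combination of Lemma~\ref{lem:gen-B} and Lemma~\ref{lem:tran-B}, mirroring the argument of Lemma~\ref{lem:budget-A}. For an integral $B$-client, Lemma~\ref{lem:gen-B} gives $\sum_{i=1}^n g_{i,j} = 1$, so the generation part of the $(d_j)$ left-hand side equals exactly $\Mcoeff$. Lemma~\ref{lem:tran-B} bounds the transfer part:
\[
\sum_{i=1}^n (\bar s_{i,j}+t_{i,j}) \le \Mttwo .
\]

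Adding the two bounds gives
\[
\Mcoeff\cdot\sum_{i=1}^n g_{i,j} + \sum_{i=1}^n(\bar s_{i,j}+t_{i,j}) \le \Mcoeff + 1 - \MRttwo .
\]
By Property~\ref{property:coefficients}, $\Mcoeff + 1 - \MRttwo = \Mrho$, which is the required bound.

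The computation itself is trivial. The only point to check is that the hypotheses of Lemma~\ref{lem:tran-B} hold for an integral $B$-client, namely that $R_j \le \MRtone$. This follows from $j \ge t_1$ together with Observation~\ref{obs:thresholds}: for $t_1<n$ we have $R_{t_1}<\MRtone$, and $R$ is non-increasing. The degenerate case $t_1=n$ is immediate since then $R_n=0$.

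One bookkeeping subtlety is already handled inside Lemma~\ref{lem:tran-B}. When $f_j=n$, the extra $\Mcoeffmone$ entry at row $f_j+1$ falls outside the index range, and this only helps the inequality. So no new obstacle arises, and the expected difficulty of this lemma is essentially nil.
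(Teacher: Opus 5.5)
Your proposal is correct and matches the paper's proof: combine Lemma~\ref{lem:gen-B} (total generation equal to $1$) with Lemma~\ref{lem:tran-B} (total transfers at most $\Mttwo$), then use $\Mcoeff + 1 - \MRttwo = \Mrho$ from Property~\ref{property:coefficients}. Your extra check that $R_j \le \MRtone$ for $j \ge t_1$ is valid; it only re-verifies a hypothesis already used inside Lemma~\ref{lem:tran-B}.
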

\begin{proof}
    By Lemmas \ref{lem:gen-B} and \ref{lem:tran-B},
    \begin{equation*}
        \Mcoeff \cdot \sum_{i=1}^n g_{i,j} + \sum_{i=1}^n (\bar s_{i,j} + t_{i,j}) \le \Mcoeff + \Mttwo = \Mrho.
    \end{equation*}
    using $\Mcoeff + \Mttwo = \Mrho$ which follows from the definitions of $\Mcoeff$ and $\MRttwo$ as reminded above.
\end{proof}

\begin{lemma}
    \label{lem:budget-C}
    If client $j$ belongs integrally to group $C$, then
    \begin{equation*}
        \Mcoeff \cdot \sum_{i=1}^n g_{i,j} + \sum_{i=1}^n (\bar s_{i,j} + t_{i,j}) \leq \Mrho.
    \end{equation*}
\end{lemma}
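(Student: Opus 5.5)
The bound follows the pattern of Lemma~\ref{lem:budget-B}. We combine the generation formula from Lemma~\ref{lem:gen-C} with the transfer bound from Lemma~\ref{lem:tran-C}. For a client $j$ that belongs integrally to group $C$, the $C$-assignment in Table~\ref{tab:general-case-procedure} gives $g_{i,j}=\hg_i$ for every $i$. Hence, by Lemma~\ref{lem:gen-C} together with $R_0=1$ (guaranteed because the advice is correct), we have $\sum_{i=1}^n g_{i,j}=1$. Multiplying by $\Mcoeff$ gives a contribution of exactly $\Mcoeff$ to the left-hand side of the $(d_j)$ constraint.

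Next we bound the transfer term using Lemma~\ref{lem:tran-C}, which gives $\sum_{i=1}^n(\bar s_{i,j}+t_{i,j})\le \Mttwo$. Recall where this comes from. The sent credit $\bar s$ totals at most $R_j+\Mcoeffmone$: the entries $\hg_i$ for $i>j$ sum to $R_j$, plus the single extra $\Mcoeffmone$ at row $f_j+1$. The withdrawal $t_{f_j,j}=\MRtone-R_{j-1}$ is at most $\MRtone-R_j$ by the monotonicity of $R$. The $R_j$ terms cancel, and $\Mcoeffmone+\MRtone=\Mttwo$ because $\MRtone=1-\Mtone=1-2\Mcoeffmone-\Mcoeffmone^2$ and $\MRttwo=\Mcoeffmone+\Mcoeffmone^2$.

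Adding the two contributions gives a left-hand side of at most $\Mcoeff+1-\MRttwo$, which equals $\Mrho$ by Property~\ref{property:coefficients}. This completes the bound.

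The one point to check carefully is that the $C$-withdrawal $1-R_{j-1}-\Mtone$ is nonnegative, as the table assumes by writing it without $[\cdot]^+$. For $j>t_2\ge t_1$ we have $R_{j-1}\le R_{t_1}<\MRtone$ by Observation~\ref{obs:thresholds}. This nonnegativity is already covered by Lemma~\ref{lem:nonneg}. No other obstacle is expected, since the argument is a direct substitution.
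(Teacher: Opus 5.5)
Your proof is correct and is the same as the paper's: you combine Lemma~\ref{lem:gen-C} (total generation exactly $1$, using $R_0=1$) with Lemma~\ref{lem:tran-C} (total transfer at most $1-\theta$), then conclude with $\gamma+1-\theta=2-(\gamma-1)^2$ from Property~\ref{property:coefficients}. The nonnegativity check on the $C$-withdrawal is harmless, but it is not needed for this upper bound.
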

\begin{proof}
    By Lemmas \ref{lem:gen-C} and \ref{lem:tran-C},
    \begin{equation*}
        \Mcoeff \cdot \sum_{i=1}^n g_{i,j} + \sum_{i=1}^n (\bar s_{i,j} + t_{i,j}) \le \Mcoeff + \Mttwo = \Mrho.
    \end{equation*}
    using $\Mcoeff + \Mttwo = \Mrho$ which follows from the definitions of $\Mcoeff$ and $\MRttwo$ as reminded above.
\end{proof}

\begin{lemma}
    \label{lem:budget-D}
    If client $j$ belongs integrally to group $D$, then
    \begin{equation*}
        \Mcoeff \cdot \sum_{i=1}^n g_{i,j} + \sum_{i=1}^n (\bar s_{i,j} + t_{i,j}) \leq \Mrho.
    \end{equation*}
\end{lemma}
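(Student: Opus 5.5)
The plan is to mirror the proofs of Lemmas~\ref{lem:budget-A}--\ref{lem:budget-C}: add $\Mcoeff$ times the generation total of Lemma~\ref{lem:gen-D} to the transfer total of Lemma~\ref{lem:tran-D}, and bound the result by $\Mrho$. For an integral $D$-client this gives
\begin{equation*}
    \Mcoeff \cdot \sum_{i=1}^n g_{i,j} + \sum_{i=1}^n (\bar s_{i,j} + t_{i,j})
    = \Mcoeff - \Mcoeff\Mcoeffmone\left[1-\frac{R_{j-1}}{\MRttwo}\right]^+ + 1 - R_{j-1}.
\end{equation*}
Using $\Mcoeff\Mcoeffmone = \MRttwo$ from Property~\ref{property:coefficients}, the middle term equals $[\MRttwo - R_{j-1}]^+$. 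It then suffices to show
\begin{equation*}
    \Mcoeff + 1 - R_{j-1} - [\MRttwo - R_{j-1}]^+ \le \Mcoeff + 1 - \MRttwo = \Mrho,
\end{equation*}
which is equivalent to $R_{j-1} + [\MRttwo - R_{j-1}]^+ \ge \MRttwo$.

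I would split into two cases on $R_{j-1}$. If $R_{j-1} \ge \MRttwo$, the left side is at least $R_{j-1} \ge \MRttwo$. If $R_{j-1} < \MRttwo$, the left side equals $\MRttwo$ exactly. Both cases use only $R_{j-1} \ge 0$, which follows from non-negativity of $\hg$ (Lemma~\ref{lem:nonneg}).

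The one point needing care is the degenerate client $j = t_3^\star = n$, where $\hg_n = \hg_n^\star$ may be positive and $t_{n,n} = 1 - \hg_n^\star$. I would check that Lemmas~\ref{lem:gen-D} and~\ref{lem:tran-D} still apply there. Lemma~\ref{lem:tran-D} gives transfer $1 - R_{n-1} = 1 - \hg_n^\star$. The generation is $\hg_n^\star$ if $n \le t_1$, and $1 - \Mcoeffmone[1 - R_{n-1}/\MRttwo]^+$ via the $D$-formula otherwise. In either sub-case the same inequality closes, because the generation $\hg_n^\star \le 1$ is at most the formula value. I expect no real obstacle: the statement is the $D$-analogue of the previous budget lemmas, and the only step needing thought is confirming the generation formula in this boundary case.
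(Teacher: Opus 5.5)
Your proof is correct and follows the paper's argument: add $\Mcoeff$ times Lemma~\ref{lem:gen-D} to Lemma~\ref{lem:tran-D}, use $\Mcoeff\Mcoeffmone=\MRttwo$, and split on whether $R_{j-1}\ge\MRttwo$. The separate check for $j=t_3^\star=n$ is unnecessary, since Lemma~\ref{lem:gen-D} already covers $j\le t_1$ via $\chi_j=0$. Your claim there that the generation is $\hg_n^\star$ when $t_1=n$ is wrong but harmless: it conflates the diagonal entry $g_{n,n}=\hg_n^\star$ with the column sum, which equals $\sum_i\hg_i=1$ and so agrees with the formula.
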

\begin{proof}
    By Lemmas \ref{lem:gen-D} and \ref{lem:tran-D},
    \begin{align*}
        \Mcoeff \cdot \sum_{i=1}^n g_{i,j} + \sum_{i=1}^n (\bar s_{i,j} + t_{i,j})
        &= \Mcoeff \cdot \left(1 - \Mcoeffmone \cdot \left[1 - \frac{R_{j-1}}{\MRttwo}\right]^+\right) + 1 - R_{j-1} \\
        &= \Mcoeffpone - R_{j-1} - \MRttwo \cdot \left[1 - \frac{R_{j-1}}{\MRttwo}\right]^+.
    \end{align*}
    If $R_{j-1} \ge \MRttwo$, this is at most $\Mcoeffpone-\MRttwo = \Mrho$.
    Otherwise, the bracket is $1-R_{j-1}/\MRttwo$, and the value is exactly $\Mcoeffpone-\MRttwo = \Mrho$.
    We have used $\Mcoeff \cdot \Mcoeffmone = \MRttwo$ and $\Mcoeffpone - \MRttwo = \Mrho$ which follow from the definitions of $\Mcoeff$ and $\MRttwo$ as reminded above.
\end{proof}

\begin{lemma}
    \label{lem:budget-fractional}
    If client $j$ belongs fractionally to groups $A,B,C,D$ with coefficients $\mu_j^A, \mu_j^B, \mu_j^C, \mu_j^D$, then
    \begin{equation*}
        \Mcoeff \cdot \sum_{i=1}^n g_{i,j} + \sum_{i=1}^n (\bar s_{i,j} + t_{i,j}) \leq \Mrho\,.
    \end{equation*}
\end{lemma}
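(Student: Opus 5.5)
The plan is to reduce the fractional case to the integral ones by linearity. Lemma~\ref{lem:gen-fractional} writes $\sum_i g_{i,j}$ as the convex combination
\[
\sum_{X\in\{A,B,C,D\}}\mu_j^X G_j^X,
\]
with $G_j^X$ the integral generation values. Lemma~\ref{lem:tran-fractional} bounds $\sum_i(\bar s_{i,j}+t_{i,j})$ by the same convex combination of the integral transfer bounds. Multiplying the first identity by $\Mcoeff$ and adding the second, the left-hand side of the $(d_j)$ constraint is at most
\[
\sum_{X}\mu_j^X\Big(\Mcoeff G_j^X + T_j^X\Big),
\]
where $T_j^X$ denotes the transfer bound for group $X$ from Lemmas~\ref{lem:tran-A}--\ref{lem:tran-D}.

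Each bracket $\Mcoeff G_j^X+T_j^X$ is then bounded by $\Mrho$ exactly as in Lemmas~\ref{lem:budget-A}--\ref{lem:budget-D}. The key point is that the proofs of Lemmas~\ref{lem:gen-A}--\ref{lem:gen-D} and \ref{lem:tran-A}--\ref{lem:tran-D} were stated in their strong form: they hold for the per-group assignments $g^X_{i,j},\bar s^X_{i,j},t^X_{i,j}$ whenever $\mu_j^X>0$, not only for integral membership. It follows that the budget computations carry over. Concretely:
\begin{itemize}
\item For $A$ the bracket equals $\Mcoeff-\MRttwo+R_{j-1}\le\Mrho$, using $R_{j-1}\le 1$.
\item For $B$ and $C$ the bracket is at most $\Mcoeff+\Mttwo=\Mrho$.
\item For $D$ the bracket equals $\Mcoeffpone-R_{j-1}-\MRttwo[1-R_{j-1}/\MRttwo]^+\le\Mrho$.
\end{itemize}
Since $\sum_X\mu_j^X=1$, the convex combination is at most $\Mrho$.

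The one step needing care is checking that the hypotheses used in the per-group lemmas still hold for a fractional client. For the $B$ part at $j=t_1$, the transfer bound needs $R_j\le\MRtone$. This holds because $R_{t_1}<\MRtone$ when $t_1<n$, and in the case $t_1=n$ we have $R_n=0$. For the $A$ part, only the existence of $\delta,\sigma$ from Lemmas~\ref{lem:delta-existence} and \ref{lem:sigma-existence} is needed, and these were shown for all $j\le t_1$. For the $D$ part, only the formula for $g^D$ is used, which does not depend on integrality. I expect this verification, mainly at the boundary client $t_1$, to be the only obstacle; the rest is routine algebra.
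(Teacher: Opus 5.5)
Your proposal is correct and follows the paper's proof. The paper likewise writes the left-hand side as $\sum_X \mu_j^X\bigl(\gamma\sum_i g^X_{i,j}+\sum_i(\bar s^X_{i,j}+t^X_{i,j})\bigr)$ using the per-group forms of the generation and transfer lemmas, bounds each bracket by $\Mrho$ as in Lemmas~\ref{lem:budget-A}--\ref{lem:budget-D}, and concludes from $\sum_X\mu_j^X=1$. Your explicit boundary checks at $t_1$ just spell out what the paper summarizes as ``the integral analysis did not use integrality of the memberships.''
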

\begin{proof}
    The proof is a direct calculation using Lemmas \ref{lem:gen-fractional} and \ref{lem:tran-fractional}, and then repeating the same arguments as with the previous integral lemmas to conclude that the resulting expression is at most $\Mrho$.
    \begin{align*}
        \Mcoeff \sum_i g_{i,j} + \sum_i(\bar s_{i,j}+t_{i,j})
        &= \sum_X \mu_j^X \left(\Mcoeff \sum_i g_{i,j}^X + \sum_i(\bar s_{i,j}^X+t_{i,j}^X)\right) \\
        &\le \sum_X \mu_j^X (\Mrho)
        = \Mrho.
    \end{align*}
    where the inequality is by the previous lemmas' analysis in the integral case, which did not use integrality of the memberships and has the same expressions as the value from $\Mcoeff \sum_i g_{i,j}^X + \sum_i(\bar s_{i,j}^X+t_{i,j}^X)$ for $X \in \{A,B,C,D\}$; from where the inequality follows by linearity.
\end{proof}

\paragraph{Other proofs.}

\begin{proof}[Proof of Lemma~\ref{lem:nonneg}]
We restate the lemma here:
\lemnonneg*
For $\hg_i$, let us go phase by phase.
In Phase 3 and Phase 2 the claim is trivial.
Next, note that $z \leq n - t_2$.
If $t_1 < n$, for the boundary assignment for $t_1$, we have that $\hg_{t_1}^A \geq 0$ because $n - t_1 \geq \Mfrac \cdot z$ and that $\hg_{t_1}^B \geq 0$ because $R_{t_1} < \MRtone$. 
Finally, in Phase 1 the claim follows again because $n - i \geq \Mfrac \cdot z$ for $i < t_1$.

For $g_{i,j}$, let us go group by group.
For group A clients, we have $g_{i,j} = \hg_i - \delta_{i,j}$ where $\delta_{i,j} \leq \hg_i$ by definition, so $g_{i,j} \geq 0$. For group B and C clients, we have $g_{i,j} = \hg_i$ so it follows from the non-negativity of $\hg_i$ that $g_{i,j} \geq 0$. For group D clients, we have $g_{i,j} \geq \hg_i\cdot(1 - \chi_j \Mfrac)$, and since $\chi_j \leq 1$ and $\Mfrac \leq 1$ because $\Mcoeffmone \leq \Mtone$, we have $g_{i,j} \geq 0$.
For non-integral clients, it follows from the convex combinations assignment of $g_{i,j}$.

For $t_{i,j}$, observe that it is $0$ for all $i$ except for $i=f_j$. We now explore only the case $i=f_j$ as otherwise it is clear that it is nonnegative. For group A clients, we have that $t_{f_j,j}$ is $0$, thus being non-negative. For group B clients, we have that $t_{f_j,j}$ is $[1-R_{j-1}-\Mtone]^+$ which is non-negative by definition. For group C clients, we have that $t_{f_j,j}$ is $1 - (R_{j-1}+\Mtone)$, but since $R_{j-1} \leq \MRtone$ because $j > t_1$, we have that $t_{f_j,j} \geq 1 - (\MRtone + \Mtone) = 0$.
Finally, for group D clients, we have that $t_{f_j,j} = 1-R_{j-1}$, and since $R_{j-1} \leq 1$, we have that $t_{f_j,j} \geq 0$.
For non-integral clients, it follows from the convex combinations assignment of $t_{i,j}$.

For $\bar s_{i,j}$, for group A clients, we have $\bar s_{i,j} = \hg_i - \sigma_{i,j}$ where $\sigma_{i,j} \leq \hg_i$ by definition, so $\bar s_{i,j} \geq 0$. For group B and C clients, we have $\bar s_{i,j} = \hg_i$ (with $+\Mcoeffmone$ for $i=f_j+1$ or $0$ for $i=j$) so it follows from the non-negativity of $\hg_i$ that $\bar s_{i,j} \geq 0$. For group D clients, we have $\bar s_{i,j} = 0$, so it is non-negative.
For non-integral clients, we have that $\bar s_{i,j}$ is a convex combination of non-negative values, so it is non-negative.
\end{proof}

\begin{proof}[Proof of Claim~\ref{claim:phase-2-ri-product}]
    We restate the Claim here:
    \claimphasetworiproduct*
    The case $j = i$ is trivial. Consider next the case $j = i - 1$. Then
    \begin{align*}
    R_{i-1} + \Mtone
    &= R_i + \Mtone + \hg_i\\
    &= R_i + \Mtone + (1 - \mu_i^D) \cdot \frac{R_i + \Mtone}{n - i}\\
    &= \Paren{R_i + \Mtone} \cdot \frac{n - (i - 1) - \mu_i^D}{n - i}\,,
    \end{align*}
    matching the claim. For general $t_1 \leq j \leq i \leq \min(t_3^\star, n-1)$ with $i - j \geq 2$, we prove the claim by induction on $i - j$, using as base case the case $j=i-1$ above. We have
    \begin{align*}
    R_{j} + \Mtone
    &= \Paren{R_{i - 1} + \Mtone} \cdot \frac{n - j - \sum_{\ell = j + 1}^{i - 1} \mu_{\ell}^D \cdot \frac{R_j + \Mtone}{R_{\ell - 1} + \Mtone}}{n - (i - 1)}\,.
    \end{align*}
    By applying the result for the case $j = i-1$, we have
    \begin{align*}
    R_{j} + \Mtone
    &= \Paren{R_{i} + \Mtone} \cdot \frac{n - (i - 1) - \mu_i^D}{n - i} \cdot \frac{n - j - \sum_{\ell = j + 1}^{i - 1} \mu_{\ell}^D \cdot \frac{R_j + \Mtone}{R_{\ell - 1} + \Mtone}}{n - (i - 1)}\\
    &= \Paren{R_{i} + \Mtone} \cdot \Paren{1 - \frac{\mu_i^D}{n - (i - 1)}} \cdot \frac{n - j - \sum_{\ell = j + 1}^{i - 1} \mu_{\ell}^D \cdot \frac{R_j + \Mtone}{R_{\ell - 1} + \Mtone}}{n - i}\,,
    \end{align*}
    and it remains to prove that
    \begin{equation*}\Paren{1 - \frac{\mu_i^D}{n - (i - 1)}} \Paren{n - j - \sum_{\ell = j + 1}^{i - 1} \mu_{\ell}^D \cdot \frac{R_j + \Mtone}{R_{\ell - 1} + \Mtone}} = n - j - \sum_{\ell = j + 1}^{i} \mu_{\ell}^D \cdot \frac{R_j + \Mtone}{R_{\ell - 1} + \Mtone}\,.\end{equation*}
    Denote $\zeta = n - j - \sum_{\ell = j + 1}^{i-1} \mu_{\ell}^D \cdot \frac{R_j + \Mtone}{R_{\ell - 1} + \Mtone}$. Then we need to prove that
    \begin{equation*}\Paren{1 - \frac{\mu_i^D}{n - (i - 1)}} \cdot \zeta = \zeta - \mu_i^D \cdot \frac{R_j + \Mtone}{R_{i - 1} + \Mtone}\,,\end{equation*}
    equivalent to
    \begin{equation*}\frac{\mu_i^D}{n - (i - 1)} \cdot \zeta = \mu_i^D \cdot \frac{R_j + \Mtone}{R_{i - 1} + \Mtone}\,.\end{equation*}
    If $\mu_i^D = 0$ the claim is trivial. Otherwise we divide by $\mu_i^D$, so we need to prove that
    \begin{equation*}\frac{ n - j - \sum_{\ell = j + 1}^{i-1} \mu_{\ell}^D \cdot \frac{R_j + \Mtone}{R_{\ell - 1} + \Mtone}}{n - (i - 1)} = \frac{R_{j}+\Mtone}{R_{i-1}+\Mtone}\,.\end{equation*}
    But this is equivalent to the statement of the claim we are doing induction on for $j$ and $i-1$:
    \begin{equation*}R_{j}+\Mtone = \Paren{R_{i-1}+\Mtone} \cdot \frac{n - j - \sum_{\ell=j+1}^{i-1} \mu_\ell^D \cdot \frac{R_j+\Mtone}{R_{\ell-1}+\Mtone}}{n - (i - 1)}\,.\end{equation*}
    Hence, the claim is proved.
\end{proof}

\begin{proof}[Proof of Claim~\ref{claim:phase-2-ri-product-prime}]
    We restate the claim here:
    \claimphasetworiproductprime*
    If $t_1=n$, then $j=i=n-1$ and the claim is trivial.
    Else, the proof is identical to the proof of Claim~\ref{claim:phase-2-ri-product}, using that $\hg'_{t_1} = (\mu_{t_1}^B + \mu_{t_1}^C) \cdot \frac{R_{t_1}+\Mtone}{n-t_1}$ and that $\hg_i = \hg'_i$ and $R_i = R'_i$ for $i > t_1$.
\end{proof}

\newpage 

\section{$k$-Median Algorithm Opening $k + O(\log n)$ Centers}
\label{sec:merging}

In this section we expand on the ideas in Section~\ref{sec:logad}, based on the techniques developed in~\cite{MR4928455-Cohen25}, in order to obtain an algorithm for $k$-median that opens at most $k + O(\log n)$ centers.
As in their paper, this requires us to develop a robust version of the Log-Adaptive Algorithm, and to slightly adapt their merging procedure to our setting.
We remark that the definitions and notation in this section follow closely those in~\cite{MR4928455-Cohen25}.

First, we generalize the notions of $1$-openability and $2$-openability in Definitions~\ref{def:openable-mod-1} and~\ref{def:openable-mod-2} to approximate notions that we call $(1,\eta)$-openability and $(2,\eta)$-openability.
Note that these definitions of openable only depend on $\theta$ and $\SS$.
We will choose $\eta$ throughout the analysis as $c \frac{\delta \epsilon}{n} 2^{-n \cdot |\FF|}$ for a small enough absolute constant $c > 0$.

\begin{definition}[$(1, \eta)$-openable facility]
\label{def:openable-mod-eta-1}
Consider a time $\theta$ and a set $\SS$ of open facilities.
Let $I = \{j \in D \mid \theta \geq (1-\delta) d(j, \SS)\}$ and $A = D \setminus I$.
A facility $i\in \FF \setminus \SS$ is $(1,\eta)$-\emph{openable} w.r.t.\ $(\SS,\theta)$ if
satisfying:
\[
\sum_{j\in A}\beta(\theta, (1-\delta)\dist(i,j))
\;+\;
\sum_{j\in I}\beta((1-\delta)\dist(j,\SS), (1-\delta)\dist(i,j))
\;\ge\; \hat f - n \eta.
\]
\end{definition}

\begin{definition}[$(2, \eta)$-openable facility]
\label{def:openable-mod-eta-2}
Consider a time $\theta$ and a set $\SS$ of open facilities.
Let $I = \{j \in D \mid \theta \geq (1-\delta) d(j, \SS)\}$ and $A = D \setminus I$.
A facility $i\in \FF\setminus \SS$ is $(2,\eta)$-\emph{openable} w.r.t.\ $(\SS,\theta)$ if there exist bids $(\tau_j)_{j\in A}$
satisfying:
\begin{enumerate}
\item[\textbf{(i)}] (\textbf{Only nearby clients may increase}) 
\[
\tau_j=\theta \qquad \forall j\in A\setminus \Ball{i}{\epsilon\theta}.
\]
\item[\textbf{(ii)}] (\textbf{Nearby clients increase only slightly})
\[
\theta \le \tau_j \leq \min\Bigl\{(1-\delta)\dist(j,\SS),\, (1+\epsilon^2) \theta \Bigr\}
\qquad \forall j\in A\cap \Ball{i}{\epsilon\theta}.
\]
\item[\textbf{(iii)}] (\textbf{Facility $i$ is paid for in the $(1-\delta/2)$-scaled metric})
\[
\sum_{j\in A}\beta(\tau_j, (1-\delta/2)\dist(i,j))
\;+\;
\sum_{j\in I}\beta((1-\delta)\dist(j,\SS), (1-\delta/2)\dist(i,j))
\;\ge\; \hat f - n \eta.
\]
\item[\textbf{(iv)}] (\textbf{No overbidding for regular facilities after updating to $\tau$})
\[
\forall i_0\in \FF:\qquad
\sum_{j\in A}\beta(\tau_j,\dist(i_0,j))
\;+\;
\sum_{j\in I}\beta((1-\delta)  \dist(j,\SS), \dist(i_0,j))
\;\le\; \hat f.
\]
\item[\textbf{(v)}] (\textbf{Triangle inequality})
\[
\tau_j \le \tau_k + \dist(j,k)
\qquad \forall j,k\in A.
\]
\end{enumerate}
\end{definition}

As in \cite{MR4928455-Cohen25}, we also allow the introduction of ``free facilities'' whose opening costs are not paid.
Let $\SS_{\mathrm{free}} \subseteq \SS$ be the multiset of free facilities that have been opened.
For each free copy $\tilde{i} \in \SS_{\mathrm{free}}$, we also define a parameter $u(\tilde{i})$ which represents the distance between the free facility $\tilde{i}$ and its original copy $i$.
For any point $x$ in the metric space, the distance is given by $d(\tilde{i}, x) = d(i, x) + u(\tilde{i})$.
Also let $\SS_{\mathrm{reg}} = \SS \setminus \SS_{\mathrm{free}}$ be the set of open regular facilities.

For $\nu \in \{1, 2\}$, we define $(\nu,\eta)$-valid sequences.
\begin{definition}[$(\nu, \eta)$-valid sequence]
\label{def:valid-sequence-eta}
Consider a time $\theta$ and let $\SS$ be a set of opened facilities. 
A sequence $(h_1, \ldots, h_{\ell})$ of facilities is $(\nu,\eta)$-valid if the following conditions hold.
\begin{itemize}
    \item \textbf{(Validity)} Each $h_t$ is either free or $(\nu,\eta)$-openable w.r.t. the time $\theta$ and a superset $\SS' \supseteq \SS \cup \{h_1, \ldots, h_{t-1}\}$.
    \item \textbf{(Maximality)} The sequence is maximal, i.e., there is no $(\nu,0)$-openable facility w.r.t. the time $\theta$ and $\SS \cup \{h_1, \ldots, h_{\ell}\}$.
\end{itemize}
\end{definition}

We note that Algorithm~\ref{alg:log-adaptive-greedy} generates $(\nu,0)$-valid sequences:

\begin{lemma}
\label{lem:log-valid-1}
Any sequence generated in Stage 1 of Algorithm~\ref{alg:log-adaptive-greedy} is $(1,0)$-valid.
\end{lemma}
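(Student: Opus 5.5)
We check the two conditions of Definition~\ref{def:valid-sequence-eta} with $\nu=1$ and $\eta=0$ for the sequence $(h_1,\ldots,h_\ell)$ of facilities opened in one execution of Stage~1. Here $\theta$ is the current phase value and $\SS$ is the set of open facilities at the start of Stage~1. No free facilities arise in Algorithm~\ref{alg:log-adaptive-greedy}, so every $h_t$ must be shown to be $(1,0)$-openable.

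\textbf{Validity.} Fix $t$ and take $\SS' = \SS\cup\{h_1,\ldots,h_{t-1}\}$, which is the set of open facilities when $h_t$ is opened. The algorithm opened $h_t$ because it was 1-openable with respect to the algorithm state $(\SS',A,\theta)$ (Definition~\ref{def:openable-mod-1}). We must match the algorithm's active/inactive partition $(A,I)$ with the partition in Definition~\ref{def:openable-mod-eta-1}, namely $I'=\{j:\theta\ge(1-\delta)d(j,\SS')\}$. In both partitions every active client has $\alpha_j=\theta$ during Stage~1; this needs the remark after Algorithm~\ref{alg:log-adaptive-greedy} and the fact that Stage~2 of earlier phases only raised the values of clients that became inactive immediately.

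The two partitions may still differ. An algorithm-inactive client $j$ always satisfies $\alpha_j^\star\ge(1-\delta)d(j,\SS'')$ for the set $\SS''$ open when it froze. Since $\alpha_j^\star\le\theta$ and distances only decrease, we get $\theta\ge(1-\delta)d(j,\SS')$, so $I\subseteq I'$. For the reverse inclusion, take $j\in A$ with $\theta\ge(1-\delta)d(j,\SS')$. Such a client would have been marked served right after the last opening, or at the end of Stage~3 of the previous phase. The delicate case is a client that reached $(1-\delta)d(j,\SS)$ only through the phase boundary; it is handled by the Stage~3 rule, which removes $j$ when $\alpha_j=(1-\delta)d(j,\SS)$. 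Hence $A=D\setminus I'$, and the sum in Definition~\ref{def:openable-mod-1} coincides with the one in Definition~\ref{def:openable-mod-eta-1}, giving $\ge \hat f = \hat f - n\cdot 0$.

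Should the marking rules leave a mismatch, the fallback is to compare the bids term by term. For $j\in I'\cap A$ we have $(1-\delta)d(j,\SS')\le\theta$, so monotonicity of $\beta$ in its first argument moves the inequality in the wrong direction. In that case we must use that such $j$ cannot exist: by the invariant above, $A\cap I'=\emptyset$ holds at every atomic step. Establishing this invariant is the main obstacle, specifically across the Stage~3 update, where $\alpha_j$ is capped at $(1-\delta)d(j,\SS)$ and the client is removed exactly when the cap is attained.

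\textbf{Maximality.} Stage~1 terminates only when no unopened facility is 1-openable with respect to the current state. By the identification of partitions above, this is the same as having no $(1,0)$-openable facility with respect to $\theta$ and $\SS\cup\{h_1,\ldots,h_\ell\}$, which is exactly the maximality condition.
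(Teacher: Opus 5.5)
Your proposal is correct and follows the paper's proof, which reads both properties off the Stage~1 loop. Each opened facility is $(1,0)$-openable with respect to the current open set, and the loop halts only when none remains. Your check that the algorithm's set of unserved clients equals $\{j : \theta < (1-\delta)d(j,\SS)\}$ during Stage~1 is something the paper leaves implicit; your argument for it is sound, so the hedging ``fallback'' paragraph is unnecessary.
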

\begin{proof}
Both properties follow from the definition of the algorithm.
For the first property, a facility $h_t$ is added only if it is $(1,0)$-openable w.r.t. the time $\theta$ and $\SS \cup \{h_1, \ldots, h_{t - 1}\}$.
For the second property, the loop in Stage 1 continues as long as there exists a $(1,0)$-openable facility w.r.t. $\theta$ and the set of facilities opened so far.
\end{proof}

\begin{lemma}
\label{lem:log-valid-2}
Any sequence generated in Stage 2 of Algorithm~\ref{alg:log-adaptive-greedy} is $(2,0)$-valid.
\end{lemma}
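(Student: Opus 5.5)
We will mirror the proof of Lemma~\ref{lem:log-valid-1}, checking the two defining properties of a $(2,0)$-valid sequence (Definition~\ref{def:valid-sequence-eta}) directly from the description of Stage 2 of Algorithm~\ref{alg:log-adaptive-greedy}. Let $\SS$ be the set of open facilities at the start of Stage 2 and $\theta$ the current time. Let $(h_1,\dots,h_\ell)$ be the facilities opened in Stage 2, in order.

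\textbf{Validity.} Fix $h_t$. The algorithm opens $h_t$ only if it is 2-openable (Definition~\ref{def:openable-mod-2}) with respect to the current state $(\SS\cup\{h_1,\dots,h_{t-1}\},A,\theta)$. We must show this implies $(2,0)$-openability (Definition~\ref{def:openable-mod-eta-2}) with respect to $(\SS\cup\{h_1,\dots,h_{t-1}\},\theta)$. The main point is that Definition~\ref{def:openable-mod-eta-2} fixes the partition
\[
I=\{j : \theta\ge(1-\delta)d(j,\SS')\}, \qquad A=D\setminus I,
\]
whereas the algorithm's active set is defined by its history. We therefore argue that, during Stage 2, the algorithm's active set coincides with this $A$.

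For the inclusion in one direction: any client that is still active after Stage 1 or a Stage 2 step satisfies $\theta=\alpha_j<(1-\delta)d(j,\SS')$. This uses the Remark after the algorithm: a client whose $\tau_j$ was raised above $\theta$ becomes inactive immediately.

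For the other direction: every inactive client satisfies $(1-\delta)d(j,\SS')\le\alpha_j^\star$, and we need this to be at most $\theta$. This holds for clients frozen in the current Stage 1 and for those frozen at earlier times. For a client frozen in Stage 2 with $\tau_j>\theta$, it holds because $d(j,\SS')\le\epsilon\theta$.

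This step is the one we expect to be the only real obstacle. A client frozen in Stage 3 of a previous phase satisfies $(1-\delta)d(j,\SS)=\alpha_j^\star\le\theta$, so it too lies in $I$. Once the two sets agree, conditions (i)--(v) transfer verbatim with $\eta=0$, with $\tau_j=\alpha_j=\theta$ outside the ball, and we may take $\SS'=\SS\cup\{h_1,\dots,h_{t-1}\}$ exactly.

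\textbf{Maximality.} The Stage 2 loop terminates only when no unopened facility is 2-openable with respect to $(\SS\cup\{h_1,\dots,h_\ell\},A,\theta)$. By the same identification of the active set, this is exactly the statement that no facility is $(2,0)$-openable, which gives the maximality condition.
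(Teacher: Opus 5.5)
Your proof is correct and takes the same route as the paper, which only observes that each $h_t$ is opened when it is openable with respect to $\theta$ and the current open set, and that the Stage 2 loop stops only when no openable facility remains. Your additional step, identifying the algorithm's history-dependent active set with $\{j : \theta < (1-\delta)d(j,\SS')\}$ via the Remark on raised bids and the Stage 3 freezing rule, is a detail the paper's one-line argument leaves implicit; it tightens the same argument rather than changing it.
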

\begin{proof}
Both properties follow from the definition of the algorithm.
For the first property, a facility $h_t$ is added only if it is $(2,0)$-openable w.r.t. the time $\theta$ and $\SS \cup \{h_1, \ldots, h_{t - 1}\}$.
For the second property, the loop in Stage 2 continues as long as there exists a $(2,0)$-openable facility w.r.t. $\theta$ and the set of facilities opened so far.
\end{proof}

For simplicity of notation, let $\nu : \mathbb{N} \to \{1,2\}$ satisfy $\nu(x) = 1$ for odd integers and $\nu(x)=2$ for even integers.
We define now an $\eta$-valid sequence of sequences, which consists of $(\nu,\eta)$-valid sequences and in addition the first sequence is required to be of type $1$ and the sequences are required to alternate between type $1$ and type $2$.

\begin{definition}
\label{def:valid-seq-of-seq}
For a sequence of sequences $\mathcal{H} = (H_1, \ldots, H_L)$, we say it consists of $\eta$-valid sequences if $H_{t}$ is $(\nu(t), \eta)$-valid w.r.t. $\theta = (1+\epsilon^2)^{\lfloor (t-1)/2 \rfloor}$ and $\SS = \cup_{q \in [t - 1]} H_q$, for all $1 \leq t \leq L$.
\end{definition}
Furthermore, we say that $\mathcal{H}$ is a solution if $L$ is even and $A = \emptyset$ at the end, meaning that at time $\theta=(1+\epsilon^2)^{|\mathcal{H}| / 2}$, the set of inactive clients $I = \{j \in D \mid \theta \geq (1-\delta) d(j, \SS)\}$ is equal to the set of all clients.
Otherwise, we say that $\mathcal{H}$ is a partial solution.

We prove that the following approximation guarantee holds for solutions $\mathcal{H}$:

\begin{theorem}
\label{theorem:alg-lmp-robust}
Let $\gamma \geq 1$ and let $\beta(\alpha,d) = [\alpha - \gamma \cdot d]^+$. Let $\rho \geq 1.01 \gamma$ be an upper bound on the optimal value of the factor-revealing program in Definition~\ref{def:frp} over all $n \geq 1$.
Let $\delta > 0$ be small enough compared to $1/\gamma$ and let $\eps > 0$ be small enough compared to $1/\gamma$ and $\delta$,
and let $\OPT$ be the optimal cost to the facility location problem with facility cost $f = \frac{\hat{f}}{\rho + 6 \delta}$ for $\hat f > n \eta$.
If $\mathcal{H}$ is an $\eta$-valid solution with respect to these parameters, then the set $\SS^\star$ of facilities opened by $\mathcal{H}$ satisfies
\[\sum_{j \in D} d(j, \SS^\star) \leq (1+8\delta) \rho \cdot (\OPT - |\SS^\star_{\mathrm{reg}}| \cdot f) + 2 |\SS^\star_{\mathrm{reg}}| \cdot n \eta\,.\]
\end{theorem}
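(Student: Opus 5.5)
We plan to mirror the proof of Theorem~\ref{theorem:alg-lmp}, using the solution $\mathcal{H}$ in place of an actual execution of Algorithm~\ref{alg:log-adaptive-greedy}. The $\eta$-slack in openability is paid for by the additive term $2|\SS^\star_{\mathrm{reg}}| n\eta$. The free facilities are paid for by the extra $6\delta$ in the facility cost $f=\hat f/(\rho+6\delta)$. Some of that extra is also used to absorb the perturbations needed to reapply the structural lemmas.

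\textbf{Step 1 (simulated execution).} From $\mathcal{H}=(H_1,\dots,H_L)$ we read off an execution. Phase $t$ uses $\theta=(1+\eps^2)^{\lfloor (t-1)/2\rfloor}$. Active and inactive sets are given by $I=\{j:\theta\ge(1-\delta)d(j,\SS)\}$. For the type-2 openings we take $\alpha$-values equal to the witnesses $\tau_j$ of Definition~\ref{def:openable-mod-eta-2}. We then define $\alpha^\star_j$, the ordering of clients, the sets $\SS^{(k)}$ and the variables $r_{jk}$ exactly as in Section~\ref{sec:logad}. Free facilities are treated as ordinary points of an extended metric, with $d(\tilde i,x)=d(i,x)+u(\tilde i)$. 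This keeps all the triangle-inequality arguments of Lemmas~\ref{lem:fr-r-mono}, \ref{lem:fr-triangle} and~\ref{lem:fr-r-metric} valid.

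\textbf{Step 2 (cost).} We redo Lemma~\ref{lemma:alg-cost}. Each regular facility opened in $\mathcal{H}$ collects bids of at least $\hat f-n\eta$, so its cost $\hat f$ is bounded by the client contributions $\beta(R_{j,k},R_{j,k+1})$ plus $n\eta$. Free facilities cost nothing, but their opening still lowers distances. Superadditivity of $\beta$ therefore still telescopes the contributions. This gives
\[
(1-\delta)C_{\mathrm{SOL}}+|\SS^\star_{\mathrm{reg}}|(\hat f-n\eta)\le\sum_j\Big(\sum_{i\ge j}\beta(r_{j,i},r_{j,i+1})+r_{j,n+1}\Big).
\]

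\textbf{Step 3 (factor-revealing feasibility).} For each optimal facility $i^\star$ with clients $D^\star$, we restrict to $\Dgood$ and want Lemma~\ref{lem:dgood-cost}, with $\hat f$ possibly replaced by $(1+O(\delta))\hat f$. Monotonicity, the triangle constraints and the metric constraints carry over verbatim. The difficult part is the facility constraint, Lemma~\ref{lem:fr-facility}, which rests on three ingredients:
\begin{itemize}
\item the no-overbidding Lemma~\ref{lem:overbidding},
\item the clustering Lemma~\ref{lem:structure},
\item the freezing Lemma~\ref{lem:freezing}.
\end{itemize}
In the robust setting, maximality only excludes $(\nu,0)$-openable facilities, while the openings themselves are merely $\eta$-openable. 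Free facilities can also appear anywhere.

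We intend to re-prove no-overbidding with $\hat f$ replaced by $\hat f+O(\delta)\hat f$, via the same minimal-$\tau^{(i)}$ argument. That argument only uses maximality of the preceding type-2 sequence, which still holds. We would then rerun Lemma~\ref{lem:structure} with the $(1,0)$-maximality of the preceding type-1 sequence. The inequality $Q<\hat f\le P$ is robust to an $n\eta$ perturbation, because of the $\eps^2\theta$ margins. Here we use that $\eta\ll \delta\eps/n$ and that $\hat f>n\eta$.

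Free facilities can break these arguments. To handle them, we plan to show that a free copy only helps: it lowers $d(j,\SS)$, and hence decreases bids. The one exception is the step $r_{jk}\le\alpha^\star$ coming from Stage~3. There we charge any violation to scaling $\hat f$ up by a $(1+6\delta/\rho)$ factor. This is exactly the gap between $\rho\cdot f$ and $\hat f=(\rho+6\delta)f$.

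\textbf{Step 4 (summing).} Summing over optimal facilities as in the proof of Theorem~\ref{theorem:alg-lmp} gives
\[
(1-\delta)C_{\mathrm{SOL}}+|\SS^\star_{\mathrm{reg}}|\hat f\le k^\star\hat f+\rho C_{\OPT}+|\SS^\star_{\mathrm{reg}}|n\eta .
\]
We then use $k^\star\hat f=(\rho+6\delta)F_{\OPT}\le(1+6\delta)\rho F_{\OPT}$, and $\hat f\ge\rho f$ on the left-hand side. Dividing by $1-\delta$ yields the factor $(1+8\delta)\rho$ together with the additive $2|\SS^\star_{\mathrm{reg}}|n\eta$.

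We expect Step~3, namely robust no-overbidding and freezing in the presence of free facilities and $\eta$-slack, to be the main obstacle.
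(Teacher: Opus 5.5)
You have a genuine gap in Step~3, and it is not where you place it. A $(\nu,\eta)$-valid sequence only requires each $h_t$ to be openable with respect to some \emph{superset} $\tilde\SS\supseteq\SS\cup\{h_1,\dots,h_{t-1}\}$, and your plan never accounts for this.

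For a type-2 opening, condition (iv) of Definition~\ref{def:openable-mod-eta-2} then only bounds $\sum_{j\in\tilde A}\beta(\tau_j,d(i^\star,j))+\sum_{j\in\tilde I}\beta((1-\delta)d(j,\tilde\SS),d(i^\star,j))$. Consider the case of Lemma~\ref{lem:fr-facility} where $k$ freezes in a type-2 step with $\alpha^\star_k>\theta$. There you need to control $\beta(r_{jk},d_j)$ with $r_{jk}=(1-\delta)d(j,\SS^{(k)})\ge(1-\delta)d(j,\tilde\SS^{(k)})$, or with $r_{jk}=\theta$ for clients active for $\SS^{(k)}$ but inactive for $\tilde\SS^{(k)}$. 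So the inequality goes the wrong way, and the facility constraint for your variables $r_{jk}$ does not follow from (iv).

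The paper handles this in four steps:
\begin{itemize}
\item[(a)] It measures bids against $\tilde\SS^{(k)}$ via new variables $\zeta_{jk},Z_{jk}$.
\item[(b)] It introduces the event \specialevent\ and modified variables $r^\star$. These equal $\zeta_{jk}$ for clients within $\theta/3$ of the opened facility $i_0^\star$, equal $\min(r_{jk},\max(\zeta_{jk},d(j,i^\star)))$ for the others, and satisfy $r^\star_{j,n+1}=0$. With these, (iv) does give the facility constraint.
\item[(c)] It re-proves the cost bound for $r^\star$ in Lemma~\ref{lem:r-star-upper-bound}. This needs Lemma~\ref{lem:superset-far}: by minimality of the chosen superset, $\tilde\SS^{(\klast)}$ stays at distance at least $\theta$ from $i_0^\star$.
\item[(d)] It loses $\delta\theta\le 6\delta\, d(j,i^\star)$ for the far clients.
\end{itemize}
That loss is the actual source of $\rho+6\delta$. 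It multiplies the connection term $\sum_{j\in\Dgood}d(j,i^\star)$ in the per-star bound, while the facility term stays exactly $\hat f$. It has nothing to do with paying for free facilities, which cost nothing and are excluded from $F_{\mathrm{SOL}}=|\SS^\star_{\mathrm{reg}}|f$, nor with Stage~3.

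Your proposed remedy, inflating $\hat f$ by a $(1+O(\delta))$ factor in no-overbidding and in the per-star bound, cannot give an LMP statement. Summing a bound of the form $(1+c)\hat f+\rho\sum d(j,i^\star)$ over the stars puts $(1+c)k^\star\hat f$ on the right against $|\SS^\star_{\mathrm{reg}}|\hat f$ on the left. This leaves an error $c(\rho+6\delta)F_{\mathrm{SOL}}$ that the LMP form cannot absorb.

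The same issue appears in your Step~4 arithmetic. Replacing $|\SS^\star_{\mathrm{reg}}|\hat f$ by $\rho F_{\mathrm{SOL}}$ gives $(1+6\delta)\rho\OPT-\rho F_{\mathrm{SOL}}$, which exceeds $(1+6\delta)\rho(\OPT-F_{\mathrm{SOL}})$ by $6\delta\rho F_{\mathrm{SOL}}$. You must keep $(\rho+6\delta)F_{\mathrm{SOL}}$ on the left and use $\rho C_{\OPT}\le(\rho+6\delta)C_{\OPT}$ on the right.

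The paper proves no-overbidding with exactly $\hat f$, with respect to every superset, via the maximality argument you describe. Also, in the robust clustering lemma the $\eps^2\theta$ margins are not enough when all active clients bid zero to $i_1$. There the paper uses $d(i_1,j)\ge 1$ for an inactive bidder to obtain the $n\eta$ slack.
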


We will prove Theorem~\ref{theorem:alg-lmp-robust} in Section~\ref{sec:logad-robust}.
The proof of the theorem follows the same set of steps as that of Theorem~\ref{theorem:alg-lmp}, but becomes more technically intricate because of the more general definition of $(\nu,\eta)$-openable.
An important technical challenge comes from the fact that a facility may be opened with respect to a \emph{superset} of the set of open facilities; this ends up making it more challenging to embed a solution into a feasible solution for the factor-revealing program, and requires us to use a more complicated assignment of values to the variables $r$ in the factor-revealing program.

\subsection{Merging Procedure}

Next, we will start with some definitions required in the merging procedure.

We define now two procedures $\mathrm{CompleteSolution}_{\hat f,u}$ and $\mathrm{CompleteSequence}_{\hat f,u}$, that are extensively used in the merging algorithm.
These procedures are defined analogously to the same procedures in~\cite{MR4928455-Cohen25}:

\begin{itemize}
    \item $\mathrm{CompleteSolution}_{\hat f,u}(\mathcal{H})$ takes a partial $\eta$-valid solution $\mathcal{H}$ as input and returns a solution by running Algorithm~\ref{alg:log-adaptive-greedy} using opening cost $\hat f$ and distance $u$ for free facilities already opened.
    If $|\mathcal{H}|$ is odd, it starts at Stage 2 with $\theta = (1+\eps^2)^{\lfloor (|\mathcal{H}| - 1) / 2 \rfloor}$, and if $|\mathcal{H}|$ is even, it starts at Stage 1 with $\theta = (1+\eps^2)^{|\mathcal{H}| / 2}$.
    Note that $\mathrm{CompleteSolution}$ can only terminate after a Stage 3 in Algorithm~\ref{alg:log-adaptive-greedy}, so the returned solution has even length.
    \item $\mathrm{CompleteSequence}_{\hat f,u}(\mathcal{H}=(H_1, \ldots, H_{p-1}),H_p = (h_1, \ldots, h_\ell))$ where $\mathcal{H}$ is a partial $\eta$-valid solution and $H_p$ is a $(\nu(p), \eta)$-valid sequence, except that it may not be maximal.
    If $p$ is odd, $\mathrm{CompleteSequence}$ runs Stage 1 in Algorithm~\ref{alg:log-adaptive-greedy} using opening cost $\hat f$ and distance $u$ for free facilities, and if $p$ is even it runs Stage 2 in Algorithm~\ref{alg:log-adaptive-greedy} using opening cost $\hat f$ and distance $u$ for free facilities, to return a maximal sequence that contains $H_p$ as a prefix.
\end{itemize}

The definitions of $\mathrm{CompleteSolution}$ and $\mathrm{CompleteSequence}$ imply that the solutions and the sequences generated by them are valid:

\begin{observation}
\label{obs:alg-valid}
Any output of $\mathrm{CompleteSolution}_{\hat f,u}(\mathcal{H})$ on a valid input is an $\eta$-valid solution,
and any output of $\mathrm{CompleteSequence}_{\hat f,u}(\mathcal{H}=(H_1, \ldots, H_{p-1}),H_p = (h_1, \ldots, h_\ell))$ on a valid input produces a $(\nu(p),\eta)$-valid partial sequence that contains $H_p$ as a prefix.
\end{observation}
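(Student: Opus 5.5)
The statement to prove is Observation~\ref{obs:alg-valid}: the outputs of $\completesolution$ and $\completesequence$ on valid inputs are valid. The plan is to unfold Definitions~\ref{def:valid-sequence-eta} and~\ref{def:valid-seq-of-seq} and check each condition directly against what Algorithm~\ref{alg:log-adaptive-greedy} does. This follows the pattern of Lemmas~\ref{lem:log-valid-1} and~\ref{lem:log-valid-2}.

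\textbf{$\completesequence$.} I would start here, since $\completesolution$ is built from it. Let the input be $(\mathcal{H},H_p)$ and let $\nu=\nu(p)$. The output is $H_p$ followed by facilities $h_{\ell+1},\dots,h_{\ell'}$ added by Stage~$\nu$ at time $\theta=(1+\eps^2)^{\lfloor (p-1)/2\rfloor}$.
\begin{itemize}
\item \emph{Validity on the prefix $H_p$.} The prefix entries satisfy validity by hypothesis: each is free or $(\nu,\eta)$-openable with respect to some superset of the earlier facilities.
\item \emph{Validity on the appended facilities.} Each $h_t$ with $t>\ell$ is added only when it is $\nu$-openable, i.e.\ $(\nu,0)$-openable, with respect to $\SS\cup\{h_1,\dots,h_{t-1}\}$. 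Since $(\nu,0)$-openability implies $(\nu,\eta)$-openability (the threshold $\hat f-n\eta\le\hat f$ is weaker, and the other conditions coincide), and the set itself is an admissible superset, validity holds.
\item \emph{Maximality.} The Stage~$\nu$ loop runs until no $(\nu,0)$-openable facility remains with respect to the final set. This is exactly the maximality condition.
\item \emph{Termination.} The loop stops because each iteration opens a new facility from the finite set $\FF$.
\end{itemize}

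\textbf{One definitional point to check.} Definitions~\ref{def:openable-mod-eta-1} and~\ref{def:openable-mod-eta-2} take the active set to be $A=\{j:\theta<(1-\delta)d(j,\SS)\}$. The algorithm's own active set $A$ must coincide with this at every step. It does: clients are removed exactly when $\alpha_j\ge(1-\delta)d(j,\SS)$, and in Stage~2 any client whose value was raised above $\theta$ becomes inactive immediately (see the remark following the algorithm). Free facilities enter only through the modified distance $d(\tilde i,\cdot)$, which the procedures use consistently.

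\textbf{$\completesolution$.} This procedure alternates Stage~1, Stage~2 and Stage~3 starting from $\mathcal{H}$, so I would argue by induction on the number of phases.
\begin{itemize}
\item Each Stage~1 run appends a $(1,\eta)$-valid sequence at time $(1+\eps^2)^{\lfloor (t-1)/2\rfloor}$, and each Stage~2 run appends a $(2,\eta)$-valid sequence, by the argument above applied with empty prefix.
\item Stage~3 multiplies $\theta$ by $1+\eps^2$ after every even-indexed sequence. This matches the time schedule in Definition~\ref{def:valid-seq-of-seq}.
\item When $|\mathcal{H}|$ is odd, the first appended block completes position $|\mathcal{H}|+1$, which is even, so the type alternation is preserved.
\end{itemize}
The procedure stops only after a Stage~3 with $A=\emptyset$, so the final length is even and every client is inactive; the output is therefore a solution. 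It does stop: $\theta$ grows geometrically, and once $\theta\ge(1-\delta)\Maxdist$ every client becomes inactive, because at least one facility has been opened by then (otherwise all $d(j,\SS)$ are infinite and Stage~1 would open one once bids suffice).

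\textbf{Main obstacle.} None of these steps is deep. The only place needing care is the claim that the algorithm's active set agrees with the set $A$ defined from $(\theta,\SS)$ in the $\eta$-openability definitions. I would handle this with the Stage~2 freezing remark together with the fact that clients deactivated in Stage~3 satisfy $\alpha_j=(1-\delta)d(j,\SS)\le\theta$.
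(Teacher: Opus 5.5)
Your proposal is correct and follows the paper's route. The paper treats this observation as an immediate consequence of how \completesol and \completesequence execute Stages~1--3 of Algorithm~\ref{alg:log-adaptive-greedy}, just as in Lemmas~\ref{lem:log-valid-1} and~\ref{lem:log-valid-2}. Your checks that the algorithm's active set coincides with $\{j:\theta<(1-\delta)d(j,\SS)\}$ and that $(\nu,0)$-openability implies $(\nu,\eta)$-openability are exactly what that appeal to the definitions rests on.

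The only slip is in your termination remark, which the statement does not need since it concerns outputs only. At $\theta=(1-\delta)\Maxdist$ no facility need have opened yet, because the bids are $(1-\delta)[\Maxdist-\gamma d(i,j)]^+$ and $\hat f$ may be large. Free copies can also lie farther than $\Maxdist$. Your qualitative argument is still enough: bids grow without bound until some facility opens, after which all $d(j,\SS)$ are finite.
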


We will also need a non-overbidding property to be satisfied at the beginning of each sequence in a solution generated by $\mathrm{CompleteSolution}$:

\begin{lemma}
\label{lem:seq-overbidding}
Let $\mathcal{H} = (H_1, \ldots, H_{p})$ be an $\eta$-valid partial solution.
Suppose that for $\SS' = \bigcup_{H \in \mathcal{H}} H$, $I' = \{j \in D \mid \theta \geq (1-\delta) d(j, \SS')\}$, and $A' = D \setminus I'$ it holds for all $i_0 \in \FF$ that
    \[\sum_{j\in A'}\beta(\theta,\dist(i_0,j))
\;+\;
\sum_{j\in I'}\beta((1-\delta)  \dist(j,\SS'), \dist(i_0,j))
\;\le\; \hat f.\]
Then, for all $H_t$ with $t > p$ in the output of $\mathrm{CompleteSolution}(\mathcal{H})_{\hat f, u}$, for $\SS = \SS' \cup \bigcup_{q=p+1}^t H_q$, $I = \{j \in D \mid \theta \geq (1-\delta) d(j, \SS)\}$, and $A = D \setminus I$, we have for all $i_0 \in \FF$ that 
    \[\sum_{j\in A}\beta(\theta,\dist(i_0,j))
\;+\;
\sum_{j\in I}\beta((1-\delta)  \dist(j,\SS), \dist(i_0,j))
\;\le\; \hat f.\]
\end{lemma}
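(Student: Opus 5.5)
The argument is an induction over the atomic steps performed by $\mathrm{CompleteSolution}_{\hat f,u}(\mathcal{H})$, in the spirit of Lemma~\ref{lem:overbidding}. The invariant is the non-overbidding inequality
\[\sum_{j\in A}\beta(\theta,\dist(i_0,j))+\sum_{j\in I}\beta((1-\delta)\dist(j,\SS),\dist(i_0,j))\le\hat f\qquad\text{for all } i_0\in\FF,\]
where $A,I$ are defined from the current $(\SS,\theta)$ as in Definition~\ref{def:openable-mod-eta-1}. The base case is exactly the hypothesis on $\SS'$. We then show the invariant survives each atomic step: a Stage 1 opening, a Stage 2 opening, and the Stage 3 increase of $\theta$. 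The conclusion for every $H_t$ with $t>p$ follows, since each such $\SS$ is the open set after some completed block of steps. Free facilities need no separate care: they enter only through $\dist(j,\SS)$, and opening one only decreases these distances.

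\textbf{Stage 1 and Stage 2 openings.} Fix $\theta$ and add a facility $h$ (regular or free) to $\SS$.

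\begin{itemize}
\item Every $\dist(j,\SS)$ weakly decreases, so each inactive bid $\beta((1-\delta)\dist(j,\SS),\cdot)$ weakly decreases, by monotonicity of $\beta$ in its first argument.
\item A client that moves from $A$ to $I$ satisfies $\theta\ge(1-\delta)\dist(j,\SS\cup\{h\})$, so its bid changes from $\beta(\theta,\cdot)$ to something no larger.
\item Stage 2 temporarily raises some $\tau_j$. Since $A$ and $I$ are defined only from $\theta$ and $\SS$, we must check that afterwards every $\tau$-raised client is classified as inactive. This holds because $\dist(j,h)\le\epsilon\theta\le\theta/(1-\delta)$.
\end{itemize}

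So in both stages the left-hand side does not increase for any $i_0$.

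\textbf{Stage 3.} This is the main obstacle. $\mathrm{CompleteSolution}$ raises $\theta$ to $(1+\eps^2)\theta$ only after Stage 2 has found no $(2,0)$-openable facility. I will reproduce the contradiction argument of Lemma~\ref{lem:overbidding}.

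\begin{enumerate}
\item Suppose the invariant fails at $(1+\eps^2)\theta$. Let $\FF_0$ be the set of violating facilities. For each $i_0\in\FF_0$, use continuity of $\beta$ to pick the least $\tau^{(i_0)}\in[\theta,(1+\eps^2)\theta]$ at which the bids $\min(\tau^{(i_0)},(1-\delta)\dist(j,\SS))$ exactly pay $\hat f$.
\item Take $i\in\FF_0$ minimizing $\tau^{(i)}$.
\item Set
\[
\tau_j=\min\bigl(\tau^{(i)},\,(1-\delta)\dist(j,\SS),\,(1+\epsilon)\theta-\dist(i,j)\bigr)\ \text{for } j\in B(i,\epsilon\theta),\qquad \tau_j=\theta \text{ otherwise}.
\]
\item Show this $\tau$ witnesses that $i$ is $(2,0)$-openable at the end of Stage 2, contradicting maximality.
\end{enumerate}

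The verification of (i)–(v) follows the earlier proof: monotonicity, the choice $\delta\gg\eps$, and minimality of $\tau^{(i)}$ for the no-overbidding condition (iv). Two points need re-checking in this setting:

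\begin{itemize}
\item $\mathrm{CompleteSolution}$ may start in Stage 2 when $|\mathcal{H}|$ is odd. The invariant at that entry point is still supplied by the hypothesis.
\item Violating facilities could a priori be already open. This is ruled out as before: for open facilities all bids vanish, since inactive clients are connected at least as closely and active ones satisfy $(1+\eps^2)\theta<(1-\delta)\dist(j,\SS)$.
\end{itemize}

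Finally, for any $i_0\notin\FF_0$ the invariant at the new $\theta$ holds by definition of $\FF_0$, which completes the induction.
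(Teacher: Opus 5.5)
Your proposal is correct and takes essentially the paper's route. The paper derives this lemma from Lemma~\ref{lem:overbidding-robust} by the same induction over atomic steps: openings (regular or free) only decrease the left-hand side, and a violation after raising $\theta$ to $(1+\eps^2)\theta$ would yield, via the minimal-$\tau^{(i)}$ construction, a $(2,0)$-openable facility contradicting maximality of the preceding Stage~2 sequence. The only difference is that the paper anchors the induction at the start of the whole execution ($\SS=\emptyset$, $\theta=1$), whereas you anchor it directly at $\SS'$ via the stated hypothesis, which is equally valid.
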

\begin{proof}
The claim is captured by Lemma~\ref{lem:overbidding-robust}, which proves that the property holds even after the opening of each facility in each sequence.
\end{proof}

We now discuss the merging procedure itself.
Informally, the merging procedure takes two $\eta$-valid solutions $\mathcal{H}$ and $\mathcal{H}'$ that open less and more than $k$ regular facilities, respectively, and merges them to obtain an $\eta$-valid solution that opens exactly $k$ regular facilities.
This merging procedure is essentially identical to that in~\cite{MR4928455-Cohen25}, except that (1) it uses \emph{our} definitions of $\completesol$ and $\completesequence$ above and (2) it has a slightly different initialization.

Let us describe our initialization.
We perform bisection search on $\hat f$ in the interval $[\delta/n, 4\gamma n\Maxdist]$ and run $\completesol_{\hat f, u}(\emptyset)$ with empty $u$.
Let us start by analyzing the endpoints of this interval:
\begin{itemize}
    \item $\hat f = \delta/n$: At $\hat f = \delta/n$ every client $j \in D$ satisfies $d(j, \SS^\star) \leq \frac{1+\eps^2}{1-\delta} (\gamma d(j, F) + \delta/n)$ because each client may open its closest facility by itself by the time $\theta \geq \gamma d(j, F) + \delta/n$.
    Then the resulting solution is a $\frac{1+\epsilon^2}{1-\delta}(\gamma+\delta)$-approximation, which by our choice of parameters is also a $\rho$-approximation.
    If this solution opens at most $k$ centers, we are done.
    Else, it gives a solution that opens more than $k$ centers.
    \item $\hat f = 4 \gamma n \Maxdist$: At $\hat f = 4 \gamma n \Maxdist$ the algorithm opens exactly one facility, because when first facility is opened then every other client becomes immediately inactive at that step.
    If $k=1$, we are done.
    Else, it gives a solution that opens less than $k$ centers.
\end{itemize}
Then we perform bisection search until we find two $\hat f > \hat f'$ such that $|\hat f - \hat f'| \leq \eta$ and the corresponding solutions $\mathcal{H}$ and $\mathcal{H}'$ open strictly less and strictly more than $k$ facilities, respectively.
(If one happens to open exactly $k$ facilities, we return it immediately.)

Then, as in~\cite{MR4928455-Cohen25}, the procedure maintains two solutions $\mathcal{H}$ and $\mathcal{H}'$ whose number of opened regular facilities sandwich $k$.
It also ensures that $\mathcal{H}$ is $\eta$-valid w.r.t. some $\hat f, u$ and $\mathcal{H}'$ is $\eta$-valid w.r.t. some $\hat f', u'$ such that $\hat f, u$ and $\hat f', u'$ differ in at most one ``entry'', by at most $\eta$: either they are identical, or $\hat f \neq \hat f'$ and $|\hat f - \hat f'| \leq \eta$, or there exists a single free facility $\tilde{i}$ with $u(\tilde i) \neq u'(\tilde i)$  and $|u(\tilde i) - u'(\tilde i)| \leq \eta$.
The procedure modifies $\mathcal{H}$ and $\mathcal{H}'$ in steps, such that after the $p$-th step of the procedure either it has found a solution that opens exactly $k$ facilities or it ensures that the first $p$ sequences of $\mathcal{H}$ and $\mathcal{H}'$ are identical.
Also, each step of the procedure opens at most three new free facilities.

For $\hat f \leq 4\gamma n \Maxdist$, at the latest when $\theta$ reaches $6 \gamma \Maxdist$ a facility will open and every client will become inactive, so $\theta$ is upper bounded by $(1+\eps^2) 6 \gamma \Maxdist$.
Because $\theta$ starts at $1$ and increases exponentially in the number of sequences (specifically, it is multiplied by $(1+\eps^2)$ every two sequences), this implies that the number of sequences in any $\eta$-valid solution produced by our procedure with $\hat f \leq 4 \gamma n \Maxdist \leq \poly(n)$  is $O(\log n / \eps^2)$.
This implies that the procedure finds a solution within $O(\log n / \eps^2)$ steps, because otherwise it would reach the contradiction that $\mathcal{H}$ and $\mathcal{H}'$ are identical but open different numbers of regular facilities.

\begin{proof}[Proof sketch of Theorem~\ref{thm:k-median}]
The merging procedure generates an $\eta$-valid solution that opens exactly $k$ regular facilities and at most $O(\log n/ \eps^2)$ free facilities. 
Then, by Theorem~\ref{theorem:alg-lmp-robust},
\begin{align*}
\sum_{j \in D} d(j, \SS^\star)
&\leq (1+8\delta) \rho \cdot (\OPT_f - |\SS^\star_{\mathrm{reg}}| \cdot f) + 2 |\Sr^\star| \cdot n \eta\\
&= (1+8\delta) \rho \cdot \left( \OPT_f - f \cdot k \right)  + 2 |\Sr^\star| \cdot n \eta\\
&\leq \Paren{(1+8\delta)\rho + 2 |\Sr^\star| \cdot n \eta} \cdot \sopt\\
&\leq (1+9\delta)\rho \cdot \sopt\,,
\end{align*}
where $\sopt$ is the cost of the optimal $k$-median solution.
We used above that $\sopt \geq \OPT_f - f \cdot k$, that $\sopt \geq 1$ (because every distance is at least $1$), and also that $\eta = c \frac{\delta \epsilon}{n} 2^{-n \cdot |\FF|}$.
The statement of Theorem~\ref{thm:k-median} is captured by renaming $\delta \gets 9\delta$ and noticing that we only choose $\eps$ to be a constant factor smaller than $\delta$.

Strictly speaking, $\SS^\star$ may not be a valid solution to the $k$-median problem because it also contains free facilities that are not members of $\FF$.
However, each free facility $\tilde{i}$ is a copy of a regular facility $i \in \FF$, so we consider in fact the solution that for each free facility $\tilde{i} \in \SS^\star$ opens the regular facility $i$ that it is a copy of.
This cannot increase the cost of the solution: for all $j \in D$, we have $d(j, i) = d(j, \tilde{i}) - u(\tilde{i}) \leq d(j, \tilde{i})$.
Hence, this solution also has cost upper bounded by $(1+9\delta)\rho \cdot \sopt$.
The number of regular facilities opened is still bounded by $k + O(\log n)$.
\end{proof}

\subsection{Merging Procedure Analysis}

The analysis of the merging procedure depends on the following two lemmas that are essential to argue that, during the merging procedure, the sequences produced are maximal.
An analogous result is implicit in~\cite{MR4928455-Cohen25}.

\begin{lemma}
\label{lem:pay-to-open-1}
Suppose $\hat f > 0$.
Let $H_p = (h_1, \ldots, h_\ell)$ be a $(1,\eta)$-valid sequence with respect to time $\theta$ and open facilities $\SS'$, except that it may not be maximal.
For $I' = \{j \in D \mid \theta \geq (1-\delta) d(j, \SS')\}$, $A' = D \setminus I'$, suppose that for all $i_0 \in \FF$,
    \[\sum_{j\in A'}\beta(\theta,\dist(i_0,j))
\;+\;
\sum_{j\in I'}\beta((1-\delta)  \dist(j,\SS'), \dist(i_0,j))
\;\le\; \hat f.\]
Let $\SS = \SS' \cup \{h_1, \ldots, h_{\ell}\}$, and let $I = \{j \in D \mid \theta \geq (1-\delta) d(j, \SS)\}$ and $A = D \setminus I$.
Suppose furthermore that
\[\sum_{j \in A} \beta(\theta, (1-\delta) d(i, j)) + \sum_{j \in I}\beta((1-\delta)d(j,\SS), (1-\delta)d(j, i)) \geq \hat{f}\]
for some $i \in \FF$. Then there exists some $i^\star \in \FF\setminus \SS$ that is $(1,0)$-openable with respect to $\theta$ and $\SS$.
\end{lemma}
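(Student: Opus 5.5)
The key point is that the hypothesis says facility $i$ meets the payment threshold of Definition~\ref{def:openable-mod-eta-1} with $\eta=0$ with respect to $(\SS,\theta)$. So either $i\notin\SS$ and we may take $i^\star=i$, or $i\in\SS$ and we must reach a contradiction. The proof therefore reduces to showing that $i$ cannot already be open.

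\textbf{Step 1: reduce to the case $i \in \SS$.} The sets $I$ and $A$ in the statement coincide with those in Definition~\ref{def:openable-mod-eta-1} for $(\SS,\theta)$. Hence the displayed inequality is exactly $(1,0)$-openability of $i$ whenever $i\in\FF\setminus\SS$, and then $i^\star=i$ works. It remains to handle $i\in\SS$.

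\textbf{Step 2: if $i \in \SS$, the left-hand side vanishes.} Suppose $i\in\SS$ (regular, since $i\in\FF$).
\begin{itemize}
\item For $j\in A$ we have $\theta<(1-\delta)d(j,\SS)\le(1-\delta)d(i,j)$. Since $\gamma\ge1$, this gives $\beta(\theta,(1-\delta)d(i,j))=[\theta-\gamma(1-\delta)d(i,j)]^+=0$.
\item For $j\in I$ we have $d(j,\SS)\le d(j,i)$. By monotonicity of $\beta$ in its first argument and budget feasibility ($\beta(x,x)\le 0$), we get $\beta((1-\delta)d(j,\SS),(1-\delta)d(j,i))\le\beta((1-\delta)d(j,i),(1-\delta)d(j,i))=0$.
\end{itemize}
Thus the whole sum is $0$, which contradicts $\hat f>0$. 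This is the same argument used in the proof of Lemma~\ref{lem:structure} for an open $i_1$.

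\textbf{Step 3: account for free facilities.} $\SS$ may contain free copies $\tilde i$, but $d(j,\SS)$ is still a minimum over all of $\SS$. Since $i\in\FF$ itself lies in $\SS$ in this case, the bound $d(j,\SS)\le d(j,i)$ holds verbatim.

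\textbf{Main obstacle.} The only delicate point is reconciling this simple argument with the lemma's extra hypotheses: the no-overbidding assumption at $\SS'$ and the validity of $H_p$. If the intended conclusion is stronger (for instance, that $i^\star$ is not already in $H_p$), I would not rely on those hypotheses unless needed.

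Where they would help is in ruling out the borderline case where $i$ reaches exactly $\hat f$ only because of the $n\eta$ slack used earlier in the sequence. Even there the case split above seems to suffice, because openability in Definition~\ref{def:openable-mod-eta-1} depends only on $(\SS,\theta)$. I would therefore present Steps 1–3 as the proof and remark that the no-overbidding hypothesis is not needed for this version.
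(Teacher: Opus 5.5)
Your proof is correct and is essentially the paper's argument. If $i\in\SS$, every bid in the hypothesis vanishes: active clients satisfy $\theta<(1-\delta)d(j,\SS)\le(1-\delta)d(i,j)$, and inactive ones have $d(j,\SS)\le d(i,j)$. This contradicts $\hat f>0$, so $i\notin\SS$ and $i$ itself is $(1,0)$-openable. As you suspected, the paper's proof also never uses the no-overbidding hypothesis or the validity of $H_p$; those hypotheses matter only where the lemma is applied in the merging analysis.
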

\begin{proof}
We start by noting that $i \in \SS$ is impossible unless $\hat{f}=0$, because then $\theta \leq (1-\delta) d(j,\SS) \leq (1-\delta)d(i,j)$ for all $j \in A$ and $d(j, \SS) \leq d(i,j)$ for all $j \in [n]$,  which implies that the left-hand side of the budget assumption is zero.
Therefore $i \notin \SS$.
Then $i$ itself is $(1,0)$-openable, because the condition in Definition~\ref{def:openable-mod-eta-1} is identical to the second inequality in our hypothesis.
\end{proof}

\begin{lemma}
\label{lem:pay-to-open-2}
Suppose $\hat f > 0$.
Let $H_p = (h_1, \ldots, h_\ell)$ be a $(2,\eta)$-valid sequence with respect to time $\theta$ and open facilities $\SS'$, except that it may not be maximal.
For $I' = \{j \in D \mid \theta \geq (1-\delta) d(j, \SS')\}$, $A' = D \setminus I'$, suppose that for all $i_0 \in \FF$,
    \[\sum_{j\in A'}\beta(\theta,\dist(i_0,j))
\;+\;
\sum_{j\in I'}\beta((1-\delta)  \dist(j,\SS'), \dist(i_0,j))
\;\le\; \hat f.\]
Let $\SS = \SS' \cup \{h_1, \ldots, h_{\ell}\}$, and let $I = \{j \in D \mid \theta \geq (1-\delta) d(j, \SS)\}$ and $A = D \setminus I$.
Suppose furthermore that there exist values $(\tau_j)_{j \in A}$ such that $\tau_j = \theta$ for $j \in A \setminus B(i, \epsilon \theta)$, $\theta \leq \tau_j \leq \min((1-\delta)d(j,\SS), (1+\epsilon^2)\theta)$ for $j \in A \cap B(i, \epsilon \theta)$, $\tau_j \leq \tau_k + d(j, k)$ for all $j, k \in A$, and
\[\sum_{j \in A} \beta(\tau_j, (1-\delta/2) d(i, j)) + \sum_{j \in I}\beta((1-\delta)d(j,\SS), (1-\delta/2)d(j, i)) \geq \hat{f}\]
for some $i \in \FF$. Then there exists some $i^\star \in \FF\setminus \SS$ that is $(2,0)$-openable with respect to $\theta$ and $\SS$.
\end{lemma}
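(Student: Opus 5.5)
The plan follows the template of Lemma~\ref{lem:pay-to-open-1}: first rule out that $i$ is already open, then exhibit a $(2,0)$-openable facility. The complication is condition \textbf{(iv)}, the no-overbidding requirement for the bids $\tau$. Because of it, $i$ itself need not be $(2,0)$-openable, and I would instead pass to a facility that is tight first along a continuous raising of the bids, as in the Stage~3 part of the proof of Lemma~\ref{lem:overbidding}.

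\emph{Step 1: $i\notin\SS$.} Suppose $i\in\SS$. For $j\in A$ we have $\tau_j\le(1-\delta)d(j,\SS)\le(1-\delta)d(i,j)\le \gamma(1-\delta/2)d(i,j)$, so $\beta(\tau_j,(1-\delta/2)d(i,j))=0$. For $j\in I$ we have $(1-\delta)d(j,\SS)\le(1-\delta)d(i,j)\le\gamma(1-\delta/2)d(i,j)$, so those bids vanish as well. The left-hand side is then $0<\hat f$, a contradiction.

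\emph{Step 2: no-overbidding at $\tau=\theta$.} Take the all-$\theta$ bids on $A$ with respect to $\SS$. Since $A\subseteq A'$, $I'\subseteq I$, and $d(j,\SS)\le d(j,\SS')$, monotonicity of $\beta$ gives that these bids satisfy \textbf{(iv)} for every $i_0$. For $j\in A'\cap I$ we use $(1-\delta)d(j,\SS)\le\theta$, which holds by the definition of $I$. Hence the hypothesis for $\SS'$ transfers to $\SS$, and the all-$\theta$ vector satisfies \textbf{(i)}, \textbf{(ii)}, \textbf{(iv)} and \textbf{(v)}.

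\emph{Step 3: choose the first tight facility.} For $\lambda\in[0,1]$ set $\tau_j(\lambda)=\theta+\lambda(\tau_j-\theta)$. Each such vector still satisfies \textbf{(i)}, \textbf{(ii)} and \textbf{(v)}: the triangle inequality is preserved under convex combination with the constant vector, since $\theta\le\theta+d(j,k)$. If \textbf{(iv)} holds at $\lambda=1$, then $i$ itself is $(2,0)$-openable and we are done. Otherwise, by continuity, let $\lambda^\star$ be the largest $\lambda$ such that \textbf{(iv)} holds for all facilities. At $\lambda^\star$ some $i_0$ satisfies
\[
\sum_{j\in A}\beta\bigl(\tau_j(\lambda^\star),d(i_0,j)\bigr)+\sum_{j\in I}\beta\bigl((1-\delta)d(j,\SS),d(i_0,j)\bigr)=\hat f .
\]
This $i_0$ is not in $\SS$, by the argument of Step~1 with $d$ in place of $(1-\delta/2)d$. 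Moreover $i_0$ meets condition \textbf{(iii)} with the bids $\tau(\lambda^\star)$, because $\beta(\cdot,(1-\delta/2)d)\ge\beta(\cdot,d)$.

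\emph{Step 4 (the main obstacle).} Conditions \textbf{(i)} and \textbf{(ii)} are centred at $i$, not at $i_0$: the raised clients lie in $B(i,\epsilon\theta)$ and not necessarily in $B(i_0,\epsilon\theta)$. I would resolve this as the proof of Lemma~\ref{lem:overbidding} does. Set the bids for $i_0$ to $\tau'_j=\min\{\tau_j(\lambda^\star),(1+\epsilon)\theta-d(j,i_0)\}$ on $B(i_0,\epsilon\theta)\cap A$ and $\tau'_j=\theta$ elsewhere. Condition \textbf{(iv)} survives by monotonicity, and \textbf{(v)} holds by the same case analysis as in that proof. For \textbf{(iii)}, I would repeat the three estimates there, using $\gamma\ge\gamma(1-\delta/2)+\epsilon$ and $\epsilon^2\theta\le\gamma\delta d/2$ for clients at distance at least $\epsilon\theta$, to show that replacing $\tau(\lambda^\star)$ by $\tau'$ and $d$ by $(1-\delta/2)d$ loses nothing. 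Together these steps show that $i_0$ is $(2,0)$-openable with respect to $\theta$ and $\SS$.
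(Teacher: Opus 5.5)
Your proof is correct, but it takes a different route from the paper's.

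\textbf{The paper's route.} The paper argues extremally. It fixes a pair $(i,\tau)$ satisfying the hypotheses with $\tau_{\max}=\max_{j\in A}\tau_j$ minimal, and supposes (iv) fails at some $i_0$. It then shaves a small $\zeta$ off every raised bid, which is possible because the violation is strict. Next it caps the bids at $\alpha'_j=\min(\tau'_{\max},(1-\delta)d(j,\SS))$ and relocalizes them around $i_0$. This gives an admissible pair with strictly smaller maximum, a contradiction.

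\textbf{Your route.} You move continuously along $\tau(\lambda)$ from the all-$\theta$ bids to $\tau$, and take the first facility that becomes tight. Your Step~2 shows the all-$\theta$ bids satisfy (iv); this is the paper's ``non-zero active bid'' transfer. Tightness gives exactly $\hat f$ in the unscaled metric, so you need neither strictness nor the $\zeta$-perturbation. You also never need a minimizer of $\tau_{\max}$ to exist, a compactness point the paper uses without comment.

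\textbf{What each buys.} Your relocalized bids $\tau'$ are pointwise below $\tau(\lambda^\star)$, and hence below the given $\tau$. So (iv) is inherited directly. Your Step~4 estimates for (iii) are the same as the paper's. The paper's extremal argument instead certifies, in one stroke, a payer whose bid vector has minimal maximum.

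\textbf{One small correction on (v).} The case analysis is not literally the one in Lemma~\ref{lem:overbidding}, which relies on the cap $\tau^{(i)}$ being a constant. In your setting, the case $\tau'_k=\tau_k(\lambda^\star)$ instead uses the triangle inequality for $\tau(\lambda^\star)$ that you verified in Step~3. With that substitution the remaining cases go through as you claim.
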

\begin{proof}
We start by noting that $i \in \SS$ is impossible unless $\hat f=0$, because then $\tau_j \leq  (1-\delta) d(i,j)$ for all $j \in A$ and $d(j, \SS) \leq d(j, i)$ for all $j \in [n]$, which implies that the left-hand side of the budget assumption is zero.

Fix $(\tau_j)_{j \in A}$ such that $\tau_{\max} = \max_{j \in A} \tau_j$ (with $\tau_{\max} = \theta$ if $A = \emptyset$) is minimized and the assumptions are satisfied with respect to some $i \in \FF\setminus \SS$.
We will prove that $i$ is $(2,0)$-openable.
The first three constraints and the fifth constraint are satisfied by assumption.

\textbf{Fourth constraint for $i_0 \in \FF\setminus \SS$.}
For the fourth constraint, suppose for the sake of contradiction that there exists some $i_0 \in \FF\setminus \SS$ such that
\[\sum_{j \in A} \beta(\tau_j, d(i_0, j)) + \sum_{j \in I}\beta((1-\delta)d(j,\SS), d(i_0, j)) > \hat{f}\,.\]

\textbf{Existence of non-zero active bid.}
Let us prove that there exists $j \in A$ such that $\tau_j > \theta$ and $\beta(\tau_{j}, d(i_0, {j})) > 0$.
We have that
\[\sum_{j \in A'} \beta(\theta, d(i_0, j)) + \sum_{j \in I'}\beta((1-\delta)d(j,\SS'), d(i_0, j)) \leq \hat{f}\,,\]
so because $d(j, \SS) \leq d(j, \SS')$ also 
\[\sum_{j \in A'} \beta(\theta, d(i_0, j)) + \sum_{j \in I'}\beta((1-\delta)d(j,\SS), d(i_0, j)) \leq \hat{f}\,.\]
Let us now consider the clients $j \in A' \cap I$.
Because $j \in I$, we have $\theta \geq (1-\delta) d(j,\SS)$, so we also have $\beta(\theta, d(i_0, j)) \geq \beta((1-\delta)d(j,\SS), d(i_0, j))$.
Hence, overall, we also have that 
\[\sum_{j \in A} \beta(\theta, d(i_0, j)) + \sum_{j \in I}\beta((1-\delta)d(j,\SS), d(i_0, j)) \leq \hat{f}\,.\]
Then, for this inequality to be violated when switching from $\theta$ to $(\tau_j)_{j \in A}$, there must exist some $j \in A$ such that $\tau_{j} > \theta$ and $\beta(\tau_{j}, d(i_0, {j})) > 0$.

\textbf{Finalizing fourth constraint for $i_0 \in \FF\setminus \SS$.}
For some $\zeta > 0$ small enough, consider a new set of values $\{\tau'_j\}_{j \in A}$ such that $\tau'_j = \tau_j - \zeta$ if $\tau_j > \theta$ and $\tau'_j = \tau_j$ otherwise.
Then we still have $\tau'_j \geq \theta$ for all $j \in A$ and 
\[\sum_{j \in A} \beta(\tau'_j, d(i_0, j)) + \sum_{j \in I}\beta((1-\delta)d(j,\SS), d(i_0, j)) > \hat{f}\,.\]
Let us prove now that this leads to a contradiction.
Let $\tau'_{\max} = \max_{j \in A} \tau'_j$. We have $\theta < \tau'_{\max} < \tau_{\max}$.
Let $\alpha'_j = \min(\tau'_{\max}, (1-\delta)d(j,\SS))$, and note that $\alpha'_j \geq \tau'_{j}$ for all $j \in A$, because if $\alpha'_j = (1-\delta) d(j,\SS)$ we have $(1-\delta) d(j,\SS) \geq \tau_j \geq \tau'_j$.
Then we also have
\[\sum_{j \in A} \beta(\alpha'_j, d(i_0, j)) + \sum_{j \in I}\beta((1-\delta)d(j,\SS), d(i_0, j)) > \hat{f}\,.\]
Let $\tilde{\tau}_j = \min(\alpha'_j, (1+\epsilon) \theta - d(i_0, j))$ for $j \in A \cap B(i_0, \epsilon \theta)$ and $\tilde{\tau}_j = \theta$ for $j \in A \setminus B(i_0, \epsilon \theta)$.
We claim that 
\begin{align*}
&\sum_{j \in A} \beta(\tilde\tau_j, (1-\delta/2) d(i_0, j)) + \sum_{j \in I}\beta((1-\delta)d(j,\SS), (1-\delta/2)d(j, i_0))\\
=& \sum_{j \in A \cap B(i_0, \eps\theta)}  \beta(\min(\alpha'_j, (1+\epsilon) \theta - d(i_0, j)), (1-\delta/2) d(i_0, j)) + \sum_{j \in A \setminus B(i_0, \eps\theta)}  \beta(\theta, (1-\delta/2) d(i_0, j))\\
&\quad + \sum_{j \in I}\beta((1-\delta)d(j,\SS), (1-\delta/2)d(j, i_0))\\
\geq& \sum_{j \in A \cap B(i_0, \eps\theta)}  \beta(\alpha'_j, d(i_0, j)) + \sum_{j \in A \setminus B(i_0, \eps\theta)}  \beta(\alpha'_j, d(i_0, j)) + \sum_{j \in I}\beta((1-\delta)d(j,\SS), d(j, i_0))\,. 
\end{align*}
For the inequality above, for the first sum, the claim is trivial if $\alpha_j'$ is the minimizer, so let us assume that $\alpha_j' \geq (1+\eps)\theta - d(i_0, j)$. Note that $\alpha_j' \leq (1+\eps^2)\theta$.
Then, using that $d(i_0, j) \leq \epsilon \theta$ we have
\begin{align*}
(1+\epsilon) \theta - d(i_0, j)-(\gamma (1-\delta/2)) d(i_0, j)
&= (1+\epsilon) \theta - (1-\eps)d(i_0, j)-(\gamma (1-\delta/2) + \eps) d(i_0, j)\\
&\geq (1+\eps^2) \theta - (\gamma (1-\delta/2) + \eps)d(i_0,j)\\
&\geq (1+\eps^2)\theta - \gamma d(i_0, j)\\
&\geq \alpha_j' - \gamma d(i_0, j)\,,
\end{align*}
where we used that $\gamma \geq \gamma (1-\delta/2) + \eps$ for $\delta \gg \eps$.

For the second sum, using that $d(i_0, j) > \epsilon \theta$ we have
\[\theta - \gamma (1 - \delta/2) d(i_0, j) \geq (1+\eps^2) \theta - \gamma d(i_0, j) \geq \alpha_j' - \gamma d(i_0, j)\,,\]
where we used that $\gamma \delta d(i_0, j) / 2 \geq \gamma \delta \eps \theta / 2 \geq \eps^2 \theta$ for $\delta \gg \eps$ and $\alpha_j' \leq (1+\eps^2) \theta$.
The third sum follows trivially.
Then
\[\sum_{j \in A} \beta(\tilde\tau_j, (1-\delta/2) d(i_0, j)) + \sum_{j \in I}\beta((1-\delta)d(j,\SS), (1-\delta/2)d(j, i_0)) \geq \hat{f}\,.\]

We prove now that the other assumptions of the current lemma are also satisfied for $i_0$ and $\{\tilde\tau\}_{j \in A}$, thus reaching a contradiction because $\max_{j \in A}\tilde\tau_j \leq \tau'_{\max} < \tau_{\max}$.
We have by definition $\tilde\tau_j = \theta$ for $j \in A \setminus B(i_0, \eps \theta)$ and $\theta \leq \tilde\tau_j \leq \min((1-\delta)d(j,\SS),(1+\eps^2)\theta)$ for $j \in A \cap B(i_0, \eps \theta)$.

It remains to check the triangle inequalities $\tilde\tau_j \leq \tilde\tau_k + d(j, k)$.
The triangle inequalities between $j, k$ with $\tilde\tau_k \geq \tilde\tau_j$ are trivially satisfied, so we assume $\tilde\tau_j > \tilde\tau_k \geq \theta$.
Then $\tilde\tau_j = \min(\alpha'_j, (1+\epsilon) \theta - d(i_0, j))$, where $d(i_0, j) \leq \eps \theta$.
First, consider the case $d(i_0, k) > \eps \theta$, which implies $\tilde\tau_k = \theta$.
Then
\[\tilde\tau_j \leq (1+\eps)\theta - d(i_0, j) \leq (1+\eps)\theta - (d(i_0, k) - d(j, k)) \leq \theta + d(j, k) = \tilde\tau_k + d(j, k)\,.\]
Second, consider the case $d(i_0, k) \leq \eps \theta$, which implies $\tilde\tau_k = \min(\alpha'_k, (1+\epsilon) \theta - d(i_0, k))$, where also recall that $\alpha'_k = \min(\tau'_{\max}, (1-\delta)d(k, \SS))$.
If $\tilde\tau_k = \tau'_{\max}$, we are done.
If $\tilde\tau_k = (1-\delta)d(k, \SS)$, we have 
\[\tilde\tau_j \leq (1-\delta) d(j,\SS) \leq (1-\delta)(d(j,k)+d(k,\SS)) \leq \tilde\tau_k + d(j,k)\,.\]
Finally, if $\tilde\tau_k = (1+\epsilon) \theta - d(i_0, k)$, we have
\[\tilde\tau_j \leq (1+\epsilon) \theta - d(i_0, j) \leq (1+\epsilon) \theta - (d(i_0, k) - d(j, k)) = \tilde\tau_k + d(j,k)\,.\]
Hence, we satisfy all assumptions of the current lemma and we reach a contradiction.

\textbf{Fourth constraint for $i_0 \in \SS$.}
Finally, the remaining possibility is that there exists some $i_0 \in \SS$ for which the fourth constraint is violated.
However, in that case, we note that $\tau_j \leq (1-\delta) d(j, \SS) \leq d(i_0, j)$, so the left-hand side of the fourth constraint is zero, so it is impossible for the constraint to be violated.
\end{proof}

Given Lemmas~\ref{lem:pay-to-open-1} and~\ref{lem:pay-to-open-2}, the analysis of the merging procedure is nearly identical to that in~\cite{MR4928455-Cohen25}.
To prove that the merging procedure maintains the maximality of the sequences produced they (implicitly) use the fact that a ``payable'' facility is an ``openable'' facility.
This property holds in our case by Lemmas~\ref{lem:pay-to-open-1} and~\ref{lem:pay-to-open-2}.
The only remaining claim whose proof we need to change is that of Claim $2$ in~\cite{MR4928455-Cohen25}, which proves that if a sequence $H'_t$ is $(\nu, 0)$-valid with respect to some $\hat f', u'$, then it is also $(\nu, \eta)$-valid with respect to $\hat f, u$ that are $\eta$-close to $\hat f', u'$.
We prove below a version of this claim that applies to our case, by starting with a verbatim copy of their proof and adapting it to our setting.
(This proof will also use Lemmas~\ref{lem:pay-to-open-1} and~\ref{lem:pay-to-open-2}.)
The rest of the analysis is identical up to cosmetic changes.

\begin{claim}[See Claim $2$ in~\cite{MR4928455-Cohen25}]
\label{claim:param-ident}
    Let $H'_p$ be a $(\nu(p), 0)$-valid sequence with respect to some $\hat f', u'$, and suppose that the non-overbidding inequality in Lemma~\ref{lem:seq-overbidding} is satisfied after its execution with respect to $\hat f', u'$.
    Let $\hat f, u$ satisfy that either $\hat f, u$ and $\hat f', u'$ are identical, or $\hat f > \hat f'$ with $|\hat f - \hat f'| \leq \eta$, or there exists a single free facility $\tilde{i}$ with $u(\tilde i) < u'(\tilde i)$  and $|u(\tilde i) - u'(\tilde i)| \leq \eta$.
    Then $H'_p$ is a $(\nu(p),\eta)$-valid sequence with respect to $\hat f, u$.
    \label{claim:1ststep_etavalid}
\end{claim}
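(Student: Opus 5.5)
The plan is to check the two parts of Definition~\ref{def:valid-sequence-eta} separately for $H'_p=(h_1,\dots,h_\ell)$ under the new parameters $\hat f,u$. Validity is the easy part. Maximality is where Lemmas~\ref{lem:pay-to-open-1} and~\ref{lem:pay-to-open-2} are needed.

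\textbf{Validity.} Free facilities remain free, so only the regular $h_t$ need checking. Such an $h_t$ is $(\nu(p),0)$-openable with respect to $\hat f',u'$ and some superset $\SS'\supseteq \SS\cup\{h_1,\dots,h_{t-1}\}$. We keep the same superset and, in the type-2 case, the same bids $\tau_j$.

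In the case $\hat f>\hat f'$ with $\hat f-\hat f'\le\eta$, the distances do not change. The payment inequality (Definition~\ref{def:openable-mod-eta-1}, or condition (iii) of Definition~\ref{def:openable-mod-eta-2}) gives left-hand side $\ge \hat f'\ge \hat f-\eta\ge \hat f-n\eta$. The no-overbidding condition (iv) only gets easier, since its right-hand side increases from $\hat f'$ to $\hat f$. Conditions (i), (ii) and (v) do not involve $\hat f$.

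In the case where one free facility $\tilde i$ has $u(\tilde i)<u'(\tilde i)$ with a gap of at most $\eta$, every $d(j,\SS')$ can decrease by at most $\eta$. This may change which clients are active, so the sets $A,I$ can shift. I would handle this as follows.
\begin{itemize}
\item Each inactive bid $\beta((1-\delta)d(j,\SS'),\cdot)$ is $1$-Lipschitz in its first argument (true for $\beta=[\alpha-\gamma d]^+$), so it drops by at most $\eta$ per client. Over all clients the payment side loses at most $n\eta$, which the $\hat f-n\eta$ slack absorbs.
\item A client moving from $A$ to $I$ has $(1-\delta)d(j,\SS')\le\theta$ under the new $u$. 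Its bid therefore does not increase relative to $\theta$ or $\tau_j$, so condition (iv) still holds.
\item For condition (ii), if $\tau_j>(1-\delta)d(j,\SS')$ under the new $u$, I would lower $\tau_j$ to this cap. The cost is at most $\eta$ per client in the payment, and the cap preserves the triangle inequality (v) exactly as in the proof of Lemma~\ref{lem:overbidding}.
\end{itemize}
This is the bookkeeping that Claim~2 of \cite{MR4928455-Cohen25} does, and I would follow it verbatim.

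\textbf{Maximality.} This is the main obstacle. Suppose, for contradiction, that some facility $i$ is $(\nu(p),0)$-openable with respect to $\hat f,u$ and $\SS\cup H'_p$. It then satisfies the payment condition with right-hand side $\hat f$. Since $\hat f\ge\hat f'$ and the distances under $u$ are no larger than under $u'$, this gives a payable facility with respect to $\hat f',u'$.

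The subtle point is that payability under $u$ need not transfer back to $u'$, because lowering $u$ lowers distances and hence raises bids. I would first try to argue directly with the case split. If only $\hat f$ changed, the payment inequality transfers immediately. If $u$ changed, the only affected terms involve clients whose nearest facility is $\tilde i$. Here I would use that $h$ being $(\nu,0)$-openable under $\hat f,u$ with payment exactly reaching $\hat f$ is still payable up to $n\eta$ under $u'$. I would then need exact payability, and I would obtain it by taking the maximality hypothesis of $H'_p$ with respect to $\hat f',u'$ in the form that no facility is payable, which Lemmas~\ref{lem:pay-to-open-1} and~\ref{lem:pay-to-open-2} make equivalent to not openable.

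With payability in hand, apply Lemma~\ref{lem:pay-to-open-1} (for $\nu=1$) or Lemma~\ref{lem:pay-to-open-2} (for $\nu=2$) under $\hat f',u'$. Their hypothesis, the no-overbidding inequality, holds by the assumption taken from Lemma~\ref{lem:seq-overbidding}. The lemmas then produce a $(\nu,0)$-openable facility with respect to $\hat f',u'$, which contradicts the maximality of $H'_p$.

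If the $n\eta$ slack from the $u$ change cannot be avoided, the fallback is a genericity argument. Because $\eta=c\frac{\delta\eps}{n}2^{-n|\FF|}$ is tiny, I would argue that the finitely many payment sums cannot land in the window $(\hat f-n\eta,\hat f)$ unless they do so at both parameter values.
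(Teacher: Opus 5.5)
Your validity argument follows the paper's: same superset, bids capped as $\tau_j=\min(\tau'_j,(1-\delta)d(j,\SS))$, and the per-client loss of at most $\eta$ absorbed by the $n\eta$ slack. Your maximality argument for the case where only $\hat f$ changes also matches the paper. The gap is maximality in the free-facility case. Your first sentence there is correct: payability under $(\hat f,u)$ does give payability under $(\hat f',u')$. You then retract it, claiming that lowering $u(\tilde i)$ ``lowers distances and hence raises bids''. That direction is wrong, and the argument never recovers from it. Getting ``exact payability'' from the fact that no facility is payable is a non sequitur. The genericity fallback is also invalid: the payment sums depend continuously on real-valued data, and the size of $\eta$ does nothing to keep them out of $(\hat f-n\eta,\hat f)$.

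The missing observation is one you already used for condition (iv). Decreasing $u(\tilde i)$ leaves $d(i,j)$ unchanged for every regular $i\in\FF$ and can only decrease $d(j,\SS)$. So it can only move clients from $A$ to $I$, replacing $\beta(\theta,\cdot)$ by the no larger $\beta((1-\delta)d(j,\SS),\cdot)$ because $(1-\delta)d(j,\SS)\le\theta$ on $I$. It can also only lower the budgets of already inactive clients. Hence every payment sum toward a regular facility is non-increasing as $u$ decreases. A facility meeting the second hypothesis of Lemma~\ref{lem:pay-to-open-1} under $(\hat f,u)$ therefore meets it exactly, with no loss, under $(\hat f,u')=(\hat f',u')$, and the lemma contradicts maximality of $H'_p$. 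For $\nu=2$ you must also transport the bid vector to the larger active set $A'\supseteq A$:
\begin{itemize}
\item set $\tilde\tau_k=\theta$ for $k\in A'\setminus A$;
\item the caps $\tilde\tau_j\le(1-\delta)d'(j,\SS)$ hold because $d\le d'$;
\item for $j\in A$ and $k\in A'\setminus A$ the triangle inequality holds via $\tilde\tau_j\le(1-\delta)d(j,\SS)\le(1-\delta)\bigl(d(j,k)+d(k,\SS)\bigr)\le\theta+d(j,k)$.
\end{itemize}
Then Lemma~\ref{lem:pay-to-open-2} applies. This is the paper's argument, stated in contrapositive form.
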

\begin{proof}
Suppose first that $\hat f\neq \hat f'$ is the difference parameter.
Because each facility in $H'_p$ was $(\nu(p),0)$-openable with respect to parameters $\hat f', u'$, then it is also $(\nu(p),\eta)$-openable with respect to parameters $\hat f, u$ because the only difference is that $\hat f$ is larger than $\hat f'$ by at most $\eta$.
Furthermore, $H'_p$ is maximal with respect to $\hat f, u$ by the following argument based on Lemmas~\ref{lem:pay-to-open-1} and~\ref{lem:pay-to-open-2}: Since $H'_p$ was maximal with respect to $\hat f', u'$, it means the assumptions of Lemmas~\ref{lem:pay-to-open-1} (if $\nu(p)=1$) or~\ref{lem:pay-to-open-2} (if $\nu(p)=2$) are not satisfied with respect to $\hat f', u'$.
The first assumption is satisfied by the no-overbidding hypothesis, so the second assumption must be the one not satisfied.
However, their second assumptions become even stronger for $\hat f$ larger than $\hat f'$, so they are still not satisfied, which implies that there still does not exist any $(\nu(p),0)$-openable facility, because otherwise these assumptions would trivially be satisfied.

Next, consider the case when $u(\tilde i) \neq u'(\tilde i)$ is the difference parameter.
Then, by the same argument as above, we also have that each facility in $H'_p$ is $(\nu(p),\eta)$-openable with respect to parameters $\hat f, u$.
To see this more formally, let $H'_p = \langle h_1, \dots, h_{\ell} \rangle$ and consider $h_t$ for some $t \in [\ell]$, which is $(\nu(p), 0)$-openable with respect to $\hat f', u', \theta = (1 + \eps^2)^{\lfloor \frac{p - 1}{2} \rfloor}$ and some superset $\SS \supseteq (\cup_{q \in [p-1]} H_q) \cup \{ h_1, \dots, h_{t-1} \}$.
When we switch from $\hat f', u'$ to $\hat f, u$, because the only difference is $u(\tilde i) < u'(\tilde i)$, the set of active clients $A'$ for $\hat f', u'$ is a superset of the set of active clients $A$ for $\hat f, u$.
Also, let $d'(\cdot, \cdot)$ and $d(\cdot, \cdot)$ be the distances with respect to $u'$ and $u$, respectively. 
For any $h_t \in H'_p$, we have that $h_t$ was $(\nu(p),0)$-openable with respect to the former setting with bids $(\tau'_j)_{j \in A'}$.
Then let $(\tau_j)_{j \in A'}$ be the new bids defined as $\tau_j = \min(\tau'_j, (1-\delta)d(j, \SS))$ for $j \in A'$ (these bids are only used on $A$, but we define them on the superset $A'$ for the purposes of the analysis). 
Let us verify that $h_t$ is $(\nu(p),\eta)$-openable with respect to $\hat f, u$ with the bids $(\tau_j)_{j \in A}$.
We need to check the corresponding conditions in Definitions~\ref{def:openable-mod-eta-1} (if $\nu(p)=1$) and~\ref{def:openable-mod-eta-2} (if $\nu(p)=2$).
We focus for simplicity on the case $\nu(p)=2$, which is the more intricate of the two.
The case $\nu(p)=1$ follows by similar arguments.

\begin{itemize}
    \item $\tau_j = \theta$ for every $j \in A - B(h_t, \eps \theta)$: Since it was true for $A' \supseteq A$, it remains true. 
    \item $\theta \leq \tau_j \leq \min\{(1-\delta)d(j,\SS),(1+\eps^2)\theta\}$ for every $j \in A \cap B(h_t, \eps \theta)$: Fix $j \in A \cap B(h_t, \eps \theta)$. 
    Since $A \subseteq A'$, $\tau'_j \leq \min\{(1-\delta)d'(j,\SS),(1+\eps^2)\theta\}$. Since $\tau_j$ is defined as $\min(\tau'_j, (1-\delta)d(j,\SS))$, the condition is satisfied.
    \item $
    \sum_{j \in A} \beta(\tau_j, (1-\delta/2)d(h_t,j)) + \sum_{j \in I} \beta((1-\delta)d(j,\SS), (1-\delta/2)d(h_t,j)) \geq \hat f - n \eta$: Let us argue that the contribution of each client $j$ to the left-hand side changes by at most $\eta$ between $(A', d', \tau')$ and $(A, d, \tau)$.
    We know that 
    \[\sum_{j \in A'} \beta(\tau'_j, (1-\delta/2)d(h_t,j)) + \sum_{j \in I'} \beta((1-\delta)d'(j,\SS), (1-\delta/2)d(h_t,j)) \geq \hat f\,.\]
    For $j \in A'$, note that $\tau_j < \tau'_j$ implies $\tau_j = (1-\delta)d(j, \SS) \geq (1-\delta)(d'(j, \SS) - \eta) \geq \tau'_j - \eta$, so $\tau_j \geq \tau'_j - \eta$ in any case.
    Then 
    \[\sum_{j \in A'} \beta(\tau_j + \eta, (1-\delta/2)d(h_t,j)) + \sum_{j \in I'} \beta((1-\delta)d'(j,\SS), (1-\delta/2)d(h_t,j)) \geq \hat f\,.\]
    For $j \in I'$, note that $d(j, \SS)$ and $d'(j, \SS)$ only differ by at most $\eta$.
    So also 
    \[\sum_{j \in A'} \beta(\tau_j + \eta, (1-\delta/2)d(h_t,j)) + \sum_{j \in I'} \beta((1-\delta)(d(j,\SS) + \eta), (1-\delta/2)d(h_t,j)) \geq \hat f\,.\]
    Finally, for $j \in A' \cap I$, $(1-\delta)d(j, \SS) \geq \tau_j$, so $\beta(\tau_j + \eta, (1-\delta/2)d(h_t, j)) \leq \beta((1-\delta)d(j,\SS)+\eta, (1-\delta/2)d(h_t, j))$, so
    \[\sum_{j \in A} \beta(\tau_j + \eta, (1-\delta/2)d(h_t,j)) + \sum_{j \in I} \beta((1-\delta) d(j,\SS) + \eta, (1-\delta/2) d(h_t,j)) \geq \hat f\,,\]
    from which it follows that
    \[\sum_{j \in A} \beta(\tau_j, (1-\delta/2)d(h_t,j)) +  \sum_{j \in I} \beta((1-\delta)d(j,\SS), (1-\delta/2)d(h_t,j)) \geq \hat f - n \eta\,.\]
    \item For every regular facility $i_0$, $
    \sum_{j \in A} \beta(\tau_j, d(i_0,j)) + \sum_{j \in I} \beta((1-\delta)d(j,\SS),d(i_0,j)) \leq \hat f$: We have
    \[
    \sum_{j \in A'} \beta(\tau'_j, d'(i_0,j)) + \sum_{j \in I'} \beta((1-\delta)d'(j,\SS), d'(i_0,j))
    \leq \hat f.
    \]
    We have trivially that $d'(j, \SS) \geq d(j, \SS)$ and $\tau_j' \geq \tau_j$.
    Because $i_0$ is a regular facility, we also have $d'(i_0, j) = d(i_0, j)$.
    Also, for $j \in A' \cap I$, because $j \in I$ implies that $\tau_j' \geq \theta \geq (1-\delta)d(j, \SS)$, we have $\beta(\tau'_j, d(i_0, j)) \geq \beta((1-\delta) d(j,\SS), d(i_0, j))$, so we also have
    \[
    \sum_{j \in A} \beta(\tau_j, d(i_0,j)) + \sum_{j \in I} \beta((1-\delta)d(j,\SS), d(i_0,j))
    \leq \hat f.
    \]
    \item $\tau_j \leq \tau_k + d(j, k)$ for every $j, k \in A$: For $j, k \in A'$, we had $\tau'_j \leq \tau'_k + d(j,k)$.
    Recall that $\tau_j = \min(\tau_j', (1-\delta)d(j,\SS))$ for all $j \in A$.
    Suppose $\tau_k < \tau'_k$, as otherwise the inequality holds trivially.
    Then $\tau_k = (1-\delta) d(k, \SS)$, and also $\tau_j \leq (1-\delta) d(j, \SS)$, so
    \[\tau_j \leq (1-\delta) (d(j, k) + d(k, \SS)) = \tau_k + (1-\delta) d(j, k) \leq \tau_k + d(j, k)\,.\]
\end{itemize}
Furthermore, $H'_p$ is maximal w.r.t. $\hat f, u$ by the following argument based on Lemmas~\ref{lem:pay-to-open-1} and~\ref{lem:pay-to-open-2}: Since $H'_p$ was maximal w.r.t. $\hat f', u'$, it means the assumptions of Lemmas~\ref{lem:pay-to-open-1} (if $\nu(p)=1$) or~\ref{lem:pay-to-open-2} (if $\nu(p)=2$) are not satisfied w.r.t. $\hat f', u'$.
We know that their first assumption (about $\leq \hat{f}$) is satisfied by the no-overbidding hypothesis, so the second assumption must be the one not satisfied.
Let us argue that these second assumptions are still not satisfied w.r.t. $\hat f, u$:
\begin{itemize}
    \item If $\nu(p)=1$, for Lemma~\ref{lem:pay-to-open-1}, we know that for all regular facilities $i \in \FF\setminus \SS$
    \[\sum_{j \in A'} \beta(\theta, (1-\delta)d(i,j)) + \sum_{j \in I'} \beta((1-\delta)d'(j, \SS), (1-\delta)d(j,i)) < \hat{f}\,.\]
    We have $d'(j, \SS) \geq d(j, \SS)$, and also for $j \in A' \cap I$ we have that $\theta \geq (1-\delta) d(j, \SS)$, so it easily follows that 
    \[\sum_{j \in A} \beta(\theta, (1-\delta)d(i,j)) + \sum_{j \in I} \beta((1-\delta)d(j, \SS), (1-\delta)d(j,i)) < \hat{f}\,.\]
    \item If $\nu(p)=2$, for Lemma~\ref{lem:pay-to-open-2}, we know that for all regular facilities $i \in \FF\setminus \SS$ and all $\{\tilde{\tau}_j\}_{j \in A'}$ such that $\tilde{\tau}_j = \theta$ for $j \in A' \setminus B(i, \epsilon \theta)$, $\theta \leq \tilde{\tau}_j \leq \min((1-\delta)d'(j,\SS), (1+\epsilon^2)\theta)$ for $j \in A' \cap B(i, \epsilon \theta)$, $\tilde{\tau}_j \leq \tilde{\tau}_k + d(j, k)$ for all $j, k \in A'$, we have
    \[\sum_{j \in A'} \beta(\tilde\tau_j, (1-\delta/2)d(i,j)) + \sum_{j \in I'} \beta((1-\delta)d'(j, \SS), (1-\delta/2)d(j,i)) < \hat{f}\,.\]
    We will proceed by contradiction. 
    Suppose there exists a regular facility $i \in \FF\setminus \SS$ and some $\{\tilde{\tau}_j\}_{j \in A}$ such that $\tilde{\tau}_j = \theta$ for $j \in A \setminus B(i, \epsilon \theta)$, $\theta \leq \tilde{\tau}_j \leq \min((1-\delta)d(j,\SS), (1+\epsilon^2)\theta)$ for $j \in A \cap B(i, \epsilon \theta)$, and $\tilde{\tau}_j \leq \tilde{\tau}_k + d(j, k)$ for all $j, k \in A$, with
    \[\sum_{j \in A} \beta(\tilde\tau_j, (1-\delta/2)d(i,j)) + \sum_{j \in I} \beta((1-\delta)d(j, \SS), (1-\delta/2)d(j,i)) \geq \hat{f}\,.\]
    Let us extend $\tilde\tau_j = \theta$ for all $j \in A' \setminus A$, which trivially satisfies $\tilde\tau_j \leq \min((1-\delta) d'(j, \SS), (1+\eps^2)\theta)$ for $j \in A' \cap B(i, \eps \theta)$.
    Note also that $d(j, \SS) \leq d'(j, \SS)$.
    Furthermore, for $j \in A' \setminus A$, we have $\theta \geq (1-\delta)d(j,\SS)$, so
    \[\sum_{j \in A'} \beta(\tilde\tau_j, (1-\delta/2)d(i,j)) + \sum_{j \in I'} \beta((1-\delta)d'(j, \SS), (1-\delta/2)d(j,i)) \geq \hat{f}\,.\]
    To reach a contradiction, it remains to prove that $\tilde\tau_j \leq \tilde\tau_k + d(j,k)$ for all $j, k \in A'$.
    We already know this for $j, k \in A$, and the claim is also trivial if $j \in A' \setminus A$ because $\tilde\tau_j = \theta \leq \tilde\tau_k$.
    So let us consider the case $j \in A$ and $k \in A' \setminus A$.
    Note that in this case $\tilde\tau_j \leq (1-\delta) d(j,\SS)$ and $\tilde\tau_k = \theta \geq (1-\delta) d(k, \SS)$.
    Then $\tilde\tau_j \leq (1-\delta) (d(j,k)+d(k,\SS)) \leq \tilde\tau_k + d(j,k)$, so we reach a contradiction.
\end{itemize}
Therefore, Lemmas~\ref{lem:pay-to-open-1} and~\ref{lem:pay-to-open-2} are still not satisfied, which implies that there still does not exist any $(\nu(p),0)$-openable facility, because otherwise these assumptions would trivially be satisfied.
\end{proof}

\newpage

\section{Proof of Theorem~\ref{theorem:alg-lmp-robust}}
\label{sec:logad-robust}

The goal of this section is to prove Theorem~\ref{theorem:alg-lmp-robust}.
We will prove the theorem in the following sections.
The proof is concluded in Section~\ref{subsec:lmp-proof-robust}.
We note that this section depends on the definitions in Section~\ref{sec:merging} that appear up to statement of Theorem~\ref{theorem:alg-lmp-robust}.

\subsection{Preliminaries}
Throughout the analysis, we will assume that $\delta > 0$ is small enough compared to $1/\gamma$ and $\epsilon > 0$ is small enough compared to $1/\gamma$ and $\delta$.

\paragraph{Execution of a solution.}
As in~\cite{MR4928455-Cohen25}, we associate an $\eta$-valid solution $\mathcal{H} = \{ H_1, \dots, H_L \}$ with an algorithm.
We define the {\em execution} of $\mathcal{H}$ as follows: Start with $\SS = \emptyset$ and $\theta = 1$, iterate over $p = 1, \dots, L$, and for each $p$ open the facilities in $H_p$ in the order in which they appear.
When facility $i$ is opened, it is added to $\SS$.
After each sequence $H_p$ with $p$ even, $\theta$ is increased to $(1+\eps^2) \theta$.
At any step, the set of open facilities $\SS$ can be partitioned to be the set of the open free facilities $\Sf$ and the set of the open regular facilities $\Sr$. 
We define an \emph{atomic step} in the algorithm as the opening of a facility or the increase of $\theta$ to $(1+\eps^2)\theta$.

We note that in a $(\nu,\eta)$-valid sequence it is only required for a regular facility to be $(\nu,\eta)$-openable with respect to a \emph{superset} of the set of open facilities $\SS$.
For the purposes of analysis, we will fix for each regular facility an arbitrary \emph{minimal} superset of $\SS$ with respect to which it was opened and a corresponding choice of bids $\tau_j$ in case the sequence is opened in a sequence of type $2$.
In particular, if the facility is $(\nu,\eta)$-openable with respect to $\SS$, then the superset will be $\SS$.

\paragraph{Active and inactive clients.}
At a step with time $\theta$ with open facilities $\SS$, we let\linebreak $I_{\SS} = \Set{j \in D \mid \theta \geq (1-\delta) d(j, \SS)}$ and $A_{\SS} = D \setminus I_{\SS}$ denote the sets of inactive clients and active clients, respectively. We drop the subscript $\SS$ when it is clear from context.
We say that a client $j$ \emph{becomes inactive} at an atomic step if $j \in I_{\SS}$ for the new set of open facilities $\SS$ and the $\theta$ at the end of that atomic step and if it never became inactive before.
We refer to such a client as inactive in all future steps, and otherwise as active.

For a client $j$, we denote by $\alpha_j^\star$ its ``value'' when it becomes inactive: Specifically, say that client $j$ becomes inactive at an atomic step in which a new facility was opened.
If this happened in a sequence of type $1$, or if the facility was free, we let $\alpha_j^\star = \theta$.
If this happened in a sequence of type $2$, if $j$ was inactive w.r.t. the superset w.r.t. which this facility was opened we let again $\alpha_j^\star = \theta$; otherwise we let $\alpha_j^\star = \tau_j$ as determined by the definition of $(2,\eta)$-openable.
Finally, if client $j$ becomes inactive due to the increase of $\theta$ to $(1+\eps^2)\theta$, we let $\alpha_j^\star = (1-\delta) d(j, \SS)$ for the set of open facilities at that step.

For the purposes of the analysis, we order the clients from $1$ to $n$ by the order of the atomic steps in which they become inactive. 
In case two clients become inactive in the same atomic step, we order them non-decreasingly by their $\alpha^\star$ value.

\paragraph{Variables $\SS^{(k)}$, $r_{jk}$ and $\zeta_{jk}$.}
We denote by $\SS^{(k)}$ the set of open facilities at the beginning of the atomic step in which client $k$ becomes inactive.
We also denote by $\tilde{\SS}^{(k)}$ the superset of $\SS^{(k)}$ with respect to which a facility is opened at this atomic step. (We let $\tilde{\SS}^{(k)} = \SS^{(k)}$ if this atomic step is one in which $\theta$ increases to $(1+\eps^2)\theta$ or if the opened facility is free.)
We denote by $\SS^{(n+1)} = \SS^\star$ the set of open facilities at the end of the algorithm.

\begin{observation}
\label{obs:s-mono-robust}
We have $\SS^{(k)} \subseteq \SS^{(k+1)}$ for all $1 \leq k \leq n$.
\end{observation}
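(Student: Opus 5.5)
The statement is Observation~\ref{obs:s-mono-robust}: $\SS^{(k)} \subseteq \SS^{(k+1)}$ for all $1 \le k \le n$. The plan is to argue directly from the definition of the execution of $\mathcal{H}$ and the ordering of clients. Two facts drive it. First, in the execution the set $\SS$ of open facilities only grows: each atomic step either adds a facility to $\SS$ or multiplies $\theta$ by $(1+\eps^2)$, and nothing is ever removed. Second, clients are indexed by the order of the atomic steps in which they become inactive, with ties inside one atomic step broken by $\alpha^\star$. So if $k$ becomes inactive at atomic step $s_k$, then $k+1$ becomes inactive at some step $s_{k+1} \ge s_k$ (in the usual enumeration order of atomic steps).

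The case split is on whether $s_{k+1} = s_k$ or $s_{k+1} > s_k$.

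\textbf{Case $s_{k+1} = s_k$.} Here $\SS^{(k)}$ and $\SS^{(k+1)}$ are by definition the set of open facilities at the beginning of the same atomic step, so they are equal.

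\textbf{Case $s_{k+1} > s_k$.} The set at the beginning of $s_{k+1}$ is obtained from the set at the beginning of $s_k$ by applying the intervening atomic steps, each of which adds at most one facility. By induction on the number of intervening steps, $\SS^{(k)} \subseteq \SS^{(k+1)}$.

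The case $k = n$ needs its own check, since $\SS^{(n+1)} = \SS^\star$ is the final set. It is a superset of every intermediate set by the same monotonicity, so $\SS^{(n)} \subseteq \SS^\star$.

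There is no real obstacle here. The only point to be careful about is not confusing $\SS^{(k)}$ with the superset $\tilde{\SS}^{(k)}$ used for openability. The observation concerns only the actually opened sets $\SS^{(k)}$, so supersets play no role. Free facilities are also added to $\SS$ and never removed, so the multiset aspect of $\Sf$ does not affect containment.
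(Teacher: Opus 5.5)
Your proposal is correct and uses the same two ingredients as the paper: the set of open facilities only grows during the execution, and clients are indexed by the order of the atomic steps in which they become inactive (with $k=n$ handled separately via $\SS^{(n+1)}=\SS^\star$). The paper argues by contradiction (a strict reverse inclusion would put $k$'s atomic step after $k+1$'s), while you argue directly, but it is the same argument.
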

\begin{proof}
    If $k=n$, the claim follows because $\SS^{(n+1)} = \SS^\star$, so we assume $k < n$.
    Because $\SS$ is monotone, we have either $\SS^{(k)} \subseteq \SS^{(k+1)}$ or $\SS^{(k)} \supsetneq \SS^{(k+1)}$.
    Suppose $k$ and $k+1$ become inactive in different atomic steps, as otherwise $\SS^{(k)} = \SS^{(k+1)}$.
    Then we argue that $\SS^{(k)} \supsetneq \SS^{(k+1)}$ cannot be the case.
    If that is the case, because $\SS$ is increased monotonically by the algorithm, it means that the atomic step in which $k$ became inactive was after the atomic step in which $k+1$ became inactive.
    But this violates the order that we defined, so it must then be the case that $\SS^{(k)} \subseteq \SS^{(k+1)}$.
\end{proof}

For $1 \le j \leq k \le n$, we set 
\[
    r_{jk} =
    \begin{cases}
        \alpha^\star_j &
        \begin{array}[t]{@{}l@{}}
            \text{if $j$ is active w.r.t. $\SS^{(k)}$ at the beginning of the atomic step}\\
            \text{in which $k$ becomes inactive},
        \end{array}
        \\[2mm]
       (1-\delta)~d(j, \SS^{(k)}) & \text{otherwise},
    \end{cases}
\]
and
\[
    \zeta_{jk} = 
    \begin{cases}
        \tau_j^{(k)} &
        \begin{array}[t]{@{}l@{}}
            \text{if $j$ is active w.r.t. $\tilde{\SS}^{(k)}$ at the beginning of the atomic step}\\
            \text{in which $k$ becomes inactive},
        \end{array}
        \\[2mm]
       (1-\delta)~d(j, \tilde{\SS}^{(k)}) & \text{otherwise}.
    \end{cases}
\]
where $\tau_j^{(k)} = \alpha_j^\star$ if $j$ becomes inactive in the same atomic step in which $k$ becomes inactive and $\theta$ otherwise.
We also let $r_{j,n+1} = \zeta_{j,n+1} = (1-\delta)d(j, \SS^\star)$. 

Let us prove some simple properties of these variables:

\begin{observation}
\label{obs:r-mono-robust}
We have $r_{jk} \geq r_{jk+1}$ for all $1 \leq j \leq k \leq n$.
\end{observation}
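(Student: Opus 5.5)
The plan is to mirror the proofs of Observation~\ref{obs:r-noninc} and Lemma~\ref{lem:fr-r-mono}. The argument splits on whether $j$ is active or inactive w.r.t.\ $\SS^{(k)}$ and $\SS^{(k+1)}$ at the beginning of the atomic steps in which $k$ and $k+1$ become inactive. Two facts drive it. First, Observation~\ref{obs:s-mono-robust} gives $\SS^{(k)}\subseteq\SS^{(k+1)}$, so $d(j,\SS^{(k)})\ge d(j,\SS^{(k+1)})$. Second, activity is monotone: once $j\in I_{\SS}$ at some step, it stays inactive for every later step.

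\textbf{Case 1: $j$ is inactive w.r.t.\ $\SS^{(k)}$ at the beginning of $k$'s step.} Then $r_{jk}=(1-\delta)d(j,\SS^{(k)})$. Since $k+1$ becomes inactive in the same or a later atomic step, $j$ is also inactive at the beginning of $k+1$'s step. Hence $r_{j,k+1}=(1-\delta)d(j,\SS^{(k+1)})\le r_{jk}$ by monotonicity of $\SS^{(k)}$. This also covers $k=n$ in this case, since $r_{j,n+1}=(1-\delta)d(j,\SS^\star)$.

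\textbf{Case 2: $j$ is active w.r.t.\ $\SS^{(k)}$ at the beginning of $k$'s step.} Then $r_{jk}=\alpha_j^\star$. If $j$ is still active at the beginning of $k+1$'s step, then $r_{j,k+1}=\alpha_j^\star$ as well, and the inequality is an equality. Otherwise $r_{j,k+1}=(1-\delta)d(j,\SS^{(k+1)})$, and $j$ became inactive at some atomic step no later than $k+1$'s. Say it had open set $\SS'$ after that step, with $\SS'\subseteq\SS^{(k+1)}$. It then suffices to show
\[
\alpha_j^\star\ \ge\ (1-\delta)\,d(j,\SS').
\]
The case $k=n$ is handled identically, with $\SS^\star$ in place of $\SS^{(k+1)}$.

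The main obstacle is verifying this inequality from the definition of $\alpha_j^\star$ in the robust setting, going through each way $j$ can become inactive.
\begin{itemize}
\item If $\alpha_j^\star=\theta$ (type~1 step, free facility, or $j$ already inactive w.r.t.\ the superset), it follows directly from $j\in I_{\SS'}$, i.e.\ $\theta\ge(1-\delta)d(j,\SS')$.
\item If the step is a $\theta$-increase, then $\alpha_j^\star=(1-\delta)d(j,\SS)$ with $\SS=\SS'$, so the inequality holds with equality.
\item If the step is type~2 with $\alpha_j^\star=\tau_j\ge\theta$, I still need $\tau_j\ge(1-\delta)d(j,\SS')$. Since $j$ becomes inactive at this step w.r.t.\ $\SS'$ and the new $\theta$, we have $\theta\ge(1-\delta)d(j,\SS')$, and therefore $\tau_j\ge\theta\ge(1-\delta)d(j,\SS')$ by condition (ii) of Definition~\ref{def:openable-mod-eta-2}.
\end{itemize}
Note that "inactive" here is defined through $\theta$ rather than $\tau$, unlike in the non-robust algorithm. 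This is precisely what makes the type~2 subcase go through, and I expect it to be the only point needing care; the rest is bookkeeping.
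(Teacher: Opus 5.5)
Your proof is correct and follows the same route as the paper's. Both arguments use monotonicity within each branch of the definition of $r_{jk}$ (via Observation~\ref{obs:s-mono-robust}), together with a check at the single switch point that $\alpha_j^\star \ge (1-\delta)d(j,\cdot)$, split by how $j$ became inactive: type~1 or free facility, type~2 with $\tau_j\ge\theta$, or a $\theta$-increase. One small citation fix: $\tau_j\ge\theta$ follows from condition~(i) of Definition~\ref{def:openable-mod-eta-2} when $j\notin B(i,\epsilon\theta)$, and from condition~(ii) only when $j\in B(i,\epsilon\theta)$.
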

\begin{proof}
    We have trivially that $\alpha_j^\star \leq \alpha_j^\star$ and $(1-\delta) d(j, \SS^{(k)}) \geq (1-\delta) d(j, \SS^{(k')})$ for $k < k'$ as $\SS^{(k)}$ is monotone. Now consider the client $\ell$ such that $j$ is active w.r.t. $\SS^{(\ell)}$ and inactive w.r.t. $\SS^{(\ell+1)}$. Then, $r_{j\ell} = \alpha^\star_j$ and $r_{j\ell+1} = (1-\delta)d(j,\SS^{(\ell+1)})$. Moreover, if $\ell \neq j$ then $j$ and $\ell$ become inactive in the same atomic step. We consider different cases depending on whether the atomic step happens during the execution of a sequence of type $1$, type $2$, or when $\theta$ increases to $(1+\eps^2)\theta$.
    \begin{enumerate}
        \item If $j$ becomes inactive in a type $1$ sequence, then $\alpha_j^\star = \theta \ge (1-\delta)d(j,\SS^{(\ell+1)})$.
        \item Consider the case where $j$ becomes inactive in a type $2$ sequence. Then $\alpha^\star_j \ge \theta$ and  $\theta  \ge (1-\delta)d(j,\SS^{(\ell+1)})$.
        \item If $j$ becomes inactive when $\theta$ increases, then we set $\alpha^\star_j = (1-\delta)d(j,\SS^{(\ell)}) \geq (1-\delta)d(j,\SS^{(\ell+1)})$.
    \end{enumerate}   
\end{proof}

\begin{observation}
\label{obs:tau}
If $\tau_j > \theta$ at any point in the execution of a sequence of type $2$, then $\tau_j = \alpha_j^\star$.
\end{observation}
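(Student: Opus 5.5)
The claim is Observation~\ref{obs:tau}: if $\tau_j>\theta$ at some point during a type-$2$ sequence, then $\tau_j=\alpha_j^\star$. The plan is to unpack the definition of $(2,\eta)$-openability together with the way $\alpha_j^\star$ was defined in the execution of a solution. Fix the regular facility $h$ opened at that atomic step, the fixed minimal superset $\tilde\SS\supseteq\SS$ with respect to which it is $(2,\eta)$-openable, and the fixed bids $(\tau_j)_{j\in A_{\tilde\SS}}$. Since $\tau_j$ is only defined for $j\in A_{\tilde\SS}$, the client $j$ is active with respect to $\tilde\SS$, and hence also with respect to $\SS\subseteq\tilde\SS$. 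So $j$ has not become inactive before this atomic step.

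The first step is to show that $j$ becomes inactive exactly at this atomic step. By condition (i), $\tau_j>\theta$ forces $j\in\Ball{h}{\epsilon\theta}$, so $d(j,h)\le\epsilon\theta$. After $h$ is opened, the set of open facilities $\SS\cup\{h\}$ satisfies
\[
(1-\delta)\,d(j,\SS\cup\{h\})\le(1-\delta)\,\epsilon\theta\le\theta .
\]
Therefore $j\in I_{\SS\cup\{h\}}$ at the end of this atomic step. Since $j$ was active with respect to $\SS$ at the beginning of the step, $j$ becomes inactive at precisely this step.

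The second step is to read off $\alpha_j^\star$. The step opens a regular facility inside a type-$2$ sequence, and $j$ was active with respect to the superset $\tilde\SS$. By the definition of $\alpha_j^\star$ in the execution, the case ``inactive w.r.t.\ the superset'' does not apply, so $\alpha_j^\star=\tau_j$.

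The only real obstacle is bookkeeping: making sure that ``$\tau_j$'' in the statement refers to the fixed bid choice attached to that facility and superset. One also has to rule out the free-facility case, which has no bids at all, and the case that $j$ had already become inactive earlier. The latter is excluded because $\tau_j$ is defined only on $A_{\tilde\SS}\subseteq A_{\SS}$ and clients never become active again.
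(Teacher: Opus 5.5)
Your proof is correct and follows the same route as the paper's: condition (i) of $(2,\eta)$-openability forces $d(j,h)\le\epsilon\theta$, so $j$ becomes inactive exactly when $h$ is opened. Since $j$ is active with respect to the superset, the definition of $\alpha_j^\star$ then gives $\alpha_j^\star=\tau_j$. Your extra bookkeeping (activity w.r.t.\ $\tilde\SS$ implying activity w.r.t.\ $\SS$, excluding free facilities, and ruling out earlier inactivity) only makes explicit what the paper's one-line proof leaves implicit.
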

\begin{proof}
If $\tau_j > \theta$, then for the facility $i$ that is opened in that atomic step we must have $d(j, i) \leq \epsilon \theta$, so $j$ becomes inactive when $i$ is opened and hence $\alpha_j^\star = \tau_j$.
\end{proof}

\begin{observation}
\label{obs:rzdom}
$\zeta_{jk} \leq r_{jk}$ for all $1 \leq j \leq k \leq n$.
\end{observation}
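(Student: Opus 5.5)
The plan is to prove $\zeta_{jk}\le r_{jk}$ by a case analysis on how $j$ relates to the two facility sets $\SS^{(k)}\subseteq \tilde{\SS}^{(k)}$ at the beginning of the atomic step in which $k$ becomes inactive. The two facts used throughout are that a client's status with respect to these sets is determined by comparing $\theta$ with $(1-\delta)d(j,\cdot)$, and that enlarging the facility set can only decrease $d(j,\cdot)$. In particular, if $j$ is active w.r.t.\ $\tilde{\SS}^{(k)}$ then $j$ is also active w.r.t.\ $\SS^{(k)}$, because $d(j,\SS^{(k)})\ge d(j,\tilde{\SS}^{(k)})$. This leaves three cases.

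\emph{Case 1: $j$ is inactive w.r.t.\ $\SS^{(k)}$.} Then $j$ is also inactive w.r.t.\ $\tilde{\SS}^{(k)}$, so $\zeta_{jk}=(1-\delta)d(j,\tilde{\SS}^{(k)})\le (1-\delta)d(j,\SS^{(k)})=r_{jk}$ by monotonicity of distances in the facility set.

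\emph{Case 2: $j$ is active w.r.t.\ $\SS^{(k)}$ but inactive w.r.t.\ $\tilde{\SS}^{(k)}$.} Then $r_{jk}=\alpha_j^\star$ and $\zeta_{jk}=(1-\delta)d(j,\tilde{\SS}^{(k)})\le\theta$. It suffices to show $\alpha_j^\star\ge\theta$ for the current $\theta$. Since $j$ is still active at the beginning of this step, it becomes inactive at this step or later. We check each way this can happen:
\begin{itemize}
\item if $j$ becomes inactive in a type~1 step or via a free facility, then $\alpha^\star_j$ equals the then-current $\theta$;
\item if it happens in a type~2 step, then $\alpha^\star_j\in\{\theta,\tau_j\}$ with $\tau_j\ge\theta$;
\item if it happens when $\theta$ increases, then $\alpha^\star_j=(1-\delta)d(j,\SS)>\theta_{\text{old}}$, because $j$ was active just before.
\end{itemize}
In every case $\alpha^\star_j$ is at least the $\theta$ in force when $j$ was last active. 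Since $\theta$ is non-decreasing over the execution, this gives $\alpha_j^\star\ge\theta$.

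\emph{Case 3: $j$ is active w.r.t.\ $\tilde{\SS}^{(k)}$.} Then $r_{jk}=\alpha_j^\star$ and $\zeta_{jk}=\tau_j^{(k)}$.
\begin{itemize}
\item If $j$ does not become inactive in this step, then $\tau^{(k)}_j=\theta\le\alpha^\star_j$ by the argument of Case 2.
\item If $j$ becomes inactive in this step, then $\tau^{(k)}_j=\alpha^\star_j$ and equality holds.
\end{itemize}

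The main obstacle is Case 2, namely justifying that $\alpha_j^\star\ge\theta$ for a client that is still active. This needs care in the robust execution, where "becomes inactive" is defined through $I_\SS$ and $\alpha^\star_j$ is set differently according to the type of step. I would settle it by checking each of the three definitions of $\alpha^\star_j$ against the current $\theta$, as done above, in the same spirit as Observation~\ref{obs:r-mono-robust}.
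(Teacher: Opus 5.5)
Your proof is correct and follows essentially the same route as the paper. You use the same three-way split according to whether $j$ is active with respect to $\SS^{(k)}$ and/or $\tilde{\SS}^{(k)}$: monotonicity of $d(j,\cdot)$ in the facility set handles the first case, and the inequality $\alpha^\star_j \ge \theta \ge (1-\delta)d(j,\tilde{\SS}^{(k)})$ handles the middle one. The differences are cosmetic: you absorb the paper's separate $\theta$-increase case (where $\tilde{\SS}^{(k)}=\SS^{(k)}$) into your Cases 1 and 3, and you spell out why $\alpha^\star_j\ge\theta$, which the paper only asserts (since $j\le k$, a client active at the start of $k$'s step must become inactive in that step, so your first sub-case of Case 3 is vacuous, though harmless).
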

\begin{proof}
If $k$ becomes inactive in an atomic step in which $\theta$ increases to $(1+\eps^2)\theta$, then $\SS^{(k)} = \tilde{\SS}^{(k)}$ and $\zeta_{jk} = r_{jk}$ as in this case both branches in the definitions of $\zeta$ and $r$ coincide.
Else, suppose this is not the case.
If $j$ is inactive with respect to $\SS^{(k)}$, then it is also inactive with respect to $\tilde{\SS}^{(k)}$, and the result holds because $(1-\delta) d(j, \tilde{\SS}^{(k)}) \leq (1-\delta) d(j, \SS^{(k)})$.
If $j$ is active with respect to $\SS^{(k)}$ but inactive with respect to $\tilde{\SS}^{(k)}$, then the result holds because $\alpha_j^\star \geq \theta$ but $\theta \geq (1-\delta)d(j, \tilde{\SS}^{(k)})$.
Finally, if $j$ is active with respect to both $\SS^{(k)}$ and $\tilde{\SS}^{(k)}$, then we have if $\tau_j^{(k)} = \theta$ that $\alpha^\star_j \geq \theta$, and otherwise we have by~\ref{obs:tau} that $j$ becomes inactive in the same atomic step as $k$, and hence $\tau_j^{(k)} = \alpha_j^\star$.
\end{proof}

\paragraph{Variables $\SS_k$, $R_{jk}$, and $Z_{jk}$.}
Let $\SS_0 \subsetneq \SS_1 \subsetneq \ldots \subsetneq \SS_m = \SS^\star$ with $m = |\SS^\star|$ be the sets of opened facilities during the execution of the algorithm.
Also let $\tilde{\SS}_k \supseteq \SS_k$ be the superset of facilities with respect to which the $(k+1)$-th facility is opened. (We let $\tilde{\SS}_k = \SS_k$ if the opened facility is free.)
Let $s(j)$ be the index $k$ such that $\SS_k = \SS^{(j)}$ and $s(n+1)=m$.

For $1 \le j \leq n$ and $0 \leq k \le m - 1$ such that $s(j) \leq k$, we set 
\[
    R_{jk} =
    \begin{cases}
        r_{j,j} & \text{if $k = s(j)$}\\
        \alpha^\star_j &
        \begin{array}[t]{@{}l@{}}
            \text{if $j$ is active w.r.t. $\SS_k$ at the beginning of the atomic step}\\
            \text{in which facility $\SS_{k+1}\setminus \SS_k$ is opened},
        \end{array}
        \\[2mm]
        (1-\delta)d(j,\SS_k) & \text{otherwise},
    \end{cases}
\]
and
\[
    Z_{jk} =
    \begin{cases}
        \tau_{j,k} &
        \begin{array}[t]{@{}l@{}}
            \text{if $j$ is active w.r.t. $\tilde \SS_k$ at the beginning of the atomic step}\\
            \text{in which facility $\SS_{k+1}\setminus \SS_k$ is opened},
        \end{array}
        \\[2mm]
        (1-\delta)d(j,\tilde{\SS}_k) & \text{otherwise},
    \end{cases}
\]
where $\tau_{j,k} = \alpha_j^\star$ if $j$ becomes inactive in the same atomic step in which facility $k+1$ is opened and $\theta$ otherwise.
We also define $R_{j,m}= Z_{j,m} = (1-\delta) d(j, \SS^\star)$ for all $j$.
By the same reasoning as in Observations~\ref{obs:r-mono-robust} and~\ref{obs:rzdom}, we also have $R_{j,k} \geq R_{j,k+1}$ and $Z_{j,k} \leq R_{j,k}$ for all valid $j, k$.

We note that the special case $k = s(j)$ in the definition of $R$ is used such that $r_{j,k} = R_{j,s(k)}$ is satisfied even in the special case when $k$ becomes inactive in an atomic step in which $\theta$ increases.
In that case, it is possible for $j$ to be active w.r.t. $\SS^{(k)}$ (resp. $\tilde{\SS}^{(k)}$) at the beginning of the atomic step in which $k$ becomes inactive but for it to be inactive w.r.t. $\SS^{(k)}$ (resp. $\tilde{\SS}^{(k)}$) at the beginning of the atomic step in which the next facility is opened.

\textbf{Candidate facilities.}
Our analysis will also use the notion of a candidate facility:
\begin{definition}
We say that a regular facility $i$ at some point of the algorithm is a candidate if, for some superset $\tilde \SS$ of the set of open facilities $\SS$ and sets of active and inactive clients with respect to it $\tilde A=A_{\tilde \SS}$, $\tilde I = I_{\tilde \SS}$,
\begin{align*}
    \sum_{j\in \tilde A}\beta((1+\eps^2) \theta, (1- \delta/2)\dist(i,j))
    \;+\;
    \sum_{j\in \tilde I}\beta((1-\delta)\dist(j,\tilde \SS), (1-\delta/2)\dist(i,j))
    \;\ge\; \hat f - n \eta.
\end{align*}   
\end{definition}

In fact, if a facility $i$ is a candidate, then it is also a candidate with respect to the set of open facilities $\SS$ directly.
Specifically, because $\tilde A \subseteq A$, $d(j, \tilde \SS) \leq d(j,\SS)$ and for $j \in A \cap \tilde I$ we have $(1-\delta)d(j,\tilde \SS) \leq \theta$, then the fact that $i$ is a candidate with respect to $\tilde \SS$ also implies that
\begin{align*}
    \sum_{j\in A}\beta((1+\eps^2) \theta, (1- \delta/2)\dist(i,j))
    \;+\;
    \sum_{j\in I}\beta((1-\delta)\dist(j,\SS), (1-\delta/2)\dist(i,j))
    \;\ge\; \hat f - n \eta.
\end{align*}

\subsection{Algorithm Properties}

First, we prove that at every step of the algorithm the clients do not overbid toward any facility.

\begin{lemma}[Robust no overbidding]
    \label{lem:overbidding-robust}
    At any point of the execution of the sequence $\mathcal{H} = \{ H_1, \dots, H_L \}$ with time $\theta$ and any superset $\tilde \SS$ of the set of open facilities $\SS$, with $\tilde I = \{j \in D \mid \theta \geq (1-\delta) d(j, \tilde \SS)\}$, $\tilde A = D \setminus \tilde I$, we have for all $i_0 \in \FF$ that 
    \[\sum_{j\in \tilde A}\beta(\theta,\dist(i_0,j))
\;+\;
\sum_{j\in \tilde I}\beta((1-\delta)  \dist(j,\tilde \SS), \dist(i_0,j))
\;\le\; \hat f.\]
\end{lemma}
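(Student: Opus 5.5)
The plan is to mirror the proof of Lemma~\ref{lem:overbidding}: induct over atomic steps of the execution of $\mathcal{H}$, showing the inequality holds for \emph{every} superset $\tilde\SS\supseteq\SS$ simultaneously. First I would reduce to $\tilde\SS=\SS$ by a monotonicity argument. Enlarging $\SS$ to $\tilde\SS$ moves some clients from $A$ to $\tilde I$, and for such $j$ we have $\theta\ge(1-\delta)d(j,\tilde\SS)$, so $\beta((1-\delta)d(j,\tilde\SS),d(i_0,j))\le\beta(\theta,d(i_0,j))$. For clients already in $I$, $d(j,\tilde\SS)\le d(j,\SS)$ and $\beta$ is monotone in its first argument. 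Hence the left-hand side for $\tilde\SS$ is at most that for $\SS$, and it suffices to prove the invariant for the actual open set $\SS$. The base case $\theta=1\le d(i,j)$ gives zero bids, exactly as before. Free facilities are handled by the same reasoning, since they only shrink $d(j,\SS)$; the statement quantifies over $i_0\in\FF$, which are regular facilities.

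For a facility-opening atomic step, whether in a type-1 or type-2 sequence and whether free or regular, $\theta$ is unchanged and $\SS$ grows. By the same monotonicity argument as above, the left-hand side can only decrease. Note that the inactive set here is defined purely by $\theta\ge(1-\delta)d(j,\SS)$, with no reference to $\tau$ values. This makes the robust setting cleaner than Lemma~\ref{lem:overbidding}.

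The main obstacle is the $\theta\to(1+\eps^2)\theta$ step. In the robust setting, this step occurs only after an even-indexed sequence $H_t$ that is $(2,\eta)$-valid, and hence maximal: no facility is $(2,0)$-openable with respect to $\theta$ and the current $\SS$. I would argue by contradiction, following Stage~3 of Lemma~\ref{lem:overbidding}. Let $\FF_0$ be the set of violated facilities. For each $i_0\in\FF_0$, define the continuous threshold $\tau^{(i_0)}\in[\theta,(1+\eps^2)\theta]$ using $\alpha'_{i_0,j}=\min(\tau^{(i_0)},(1-\delta)d(j,\SS))$, with the starting inequality supplied by the inductive hypothesis at time $\theta$. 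Then pick $i\in\FF_0$ minimizing $\tau^{(i)}$, and set bids $\tau_j=\min(\tau^{(i)},(1-\delta)d(j,\SS),(1+\eps)\theta-d(i,j))$ on the ball $B(i,\eps\theta)$ and $\tau_j=\theta$ elsewhere. Conditions (i), (ii), (iii) and (v) then follow exactly as in that proof, with (iii) holding even with $\hat f$ in place of $\hat f-n\eta$. Condition (iv) follows from the minimality of $\tau^{(i)}$ for $i_0\in\FF_0$, and from the post-step bound for $i_0\notin\FF_0$.

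Thus $i$ is $(2,0)$-openable, which contradicts maximality. I must also check that $i\notin\SS$; if $i\in\SS$, all bids to it vanish, which contradicts $\hat f>n\eta\ge0$. Alternatively, I can invoke Lemma~\ref{lem:pay-to-open-2} directly: its hypotheses are exactly the pre-step invariant together with the payment inequality just derived, and it yields a $(2,0)$-openable facility.
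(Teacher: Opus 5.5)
Your proposal is correct and follows the paper's proof essentially step for step. It uses the same reduction from $\tilde\SS$ to $\SS$ and the same monotonicity argument for facility-opening steps. For the increase of $\theta$, it reaches the same contradiction with maximality of the preceding type-2 sequence, via the minimal threshold $\tau^{(i)}$ and the bids $\min\bigl(\tau^{(i)},(1-\delta)d(j,\SS),(1+\eps)\theta-d(j,i)\bigr)$.

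Your alternative ending through Lemma~\ref{lem:pay-to-open-2} is also valid. That lemma's proof uses only its no-overbidding hypothesis at the start of the sequence, not the present lemma. Your bids satisfy its conditions (i), (ii), (v) and the payment inequality by construction, and $\hat f>n\eta>0$. This route lets you skip the minimal choice of $i$ and the direct verification of (iv).
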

\begin{proof}
Note that $\tilde A \subseteq A$ and $\tilde I \supseteq I$.
We have $d(j, \tilde \SS) \leq d(j,\SS)$ and for $j \in A \cap \tilde I$ we have $(1-\delta)d(j,\tilde \SS) \leq \theta$,
so
\begin{align*}
\sum_{j\in \tilde A}\beta(\theta,\dist(i_0,j)) + \sum_{j\in \tilde I}\beta((1-\delta)  \dist(j,\tilde \SS), \dist(i_0,j))
\leq \sum_{j\in A}\beta(\theta,\dist(i_0,j)) + \sum_{j\in I}\beta((1-\delta)  \dist(j,\SS), \dist(i_0,j)),
\end{align*}
so it suffices to prove that
\begin{equation}
\label{eq:invariant}
    \sum_{j\in A}\beta(\theta,\dist(i_0,j))
\;+\;
\sum_{j\in I}\beta((1-\delta)  \dist(j,\SS), \dist(i_0,j))
\;\le\; \hat f.
\end{equation}
The inequality holds at the beginning of the execution because $\SS = \emptyset$ so $A=D$ and $\theta = 1 \le d(i_0,j)$ for every client $j$. 
For every $H_p \in \mathcal{H}$, we will show that the invariant still holds after opening a facility $h_t \in H_p$ and after raising $\theta$ to $(1+\epsilon^2)\theta$ at the end of executing $H_p$ if $p$ is even.

\paragraph{Type 1 and type 2}
Opening up a facility $h_t$ decreases the left hand side since $d(j,\SS)$ is monotonically decreasing with $\SS$. The only other change to the left hand side that can happen is that a client $j$ becomes inactive. In that case, however, $\theta \ge (1-\delta)d(j,\SS\cup \{h_t\})$. Thus, the left hand side decreases by the monotonicity of $\beta$.

\paragraph{Increasing $\theta$}
Next, consider a sequence $H_p \in \mathcal{H}$ with even $p$.
Let $A, I$ be the sets of active and inactive clients after raising $\theta$ to $(1+\eps^2)\theta$.
Assume for contradiction that after raising $\theta$ to $(1+\eps^2)\theta$ the invariant is violated for some facility, and let $\FF_0$ be the set of all such facilities, such that for all $i \in \FF_0$
\[
    \sum_{j \in A} \beta((1+\epsilon^2)\theta, d(i,j)) + \sum_{j \in I} \beta((1-\delta)d(j, \SS), d(i,j)) > \hat f.
\]
Note that none of the facilities $i \in \FF_0$ are open because otherwise there would be no client with a positive bid to $i$ (for clients $j \in A$ we would have $(1+\eps^2)\theta < (1-\delta)d(j,\SS) \leq (1-\delta)d(j, i)$, and for clients $j \in I$ we would have $(1-\delta)d(j,\SS) \leq (1-\delta)d(j,i) \leq d(j, i)$).

Also let $A' \supseteq A$ and $I' \subseteq I$ be the active and inactive sets before the raising of $\theta$, i.e., at the end of the execution of the previous sequence of type $2$.
For each facility $i_0 \in \FF_0$ compute $\theta \le \tau^{(i_0)} \le (1+\eps^2)\theta$ such that for $\alpha^\prime_{i_0,j} = \min\bigl(\tau^{(i_0)},\,(1-\delta)\dist(j,\SS)\bigr)$ the invariant holds with equality
\[
    \sum_{j \in A'} \beta(\alpha^\prime_{i_0,j}, d(i_0,j)) + \sum_{j \in I'} \beta((1-\delta)d(j, \SS), d(i_0,j)) = \hat f.
\]
Let us prove that such $\tau^{(i_0)}$ exists: We know by the invariant before this atomic step that
\[\sum_{j \in A'} \beta(\theta, d(i_0,j)) + \sum_{j \in I'} \beta((1-\delta)d(j, \SS), d(i_0,j)) \leq \hat f,\]
so it also holds for $\alpha^\prime_{i_0,j}$ with the choice $\tau^{(i_0)} = \theta$.
We also show that the invariant is violated for $\alpha^\prime_{i_0,j}$ with the choice $\tau^{(i_0)} = (1+\eps^2)\theta$, and then the existence of $\tau^{(i_0)}$ follows by continuity.
Recall that
\[\sum_{j \in A} \beta((1+\epsilon^2)\theta, d(i_0,j)) + \sum_{j \in I} \beta((1-\delta)d(j, \SS), d(i_0,j)) > \hat f.\]
Then, for $j \in A$ we have $(1+\eps^2)\theta < (1-\delta)d(j,\SS)$ and so $\alpha^\prime_{i_0,j} = (1+\eps^2)\theta$, and for $j \in A' \cap I$ we have $(1+\eps^2)\theta \geq (1-\delta) d(j,\SS)$ and so $\alpha^\prime_{i_0,j} = (1-\delta)d(j,\SS)$, so also
\[\sum_{j \in A'} \beta(\alpha^\prime_{i_0,j}, d(i_0,j)) + \sum_{j \in I'} \beta((1-\delta)d(j, \SS), d(i_0,j)) > \hat f.\]
Therefore $\tau^{(i_0)}$ exists by continuity.

Now select the facility $i \in \FF_0$ that minimizes $\tau^{(i)}$. This implies that $\alpha^\prime_{i,j} \leq \alpha^\prime_{i_0,j}$ for all $i_0 \in \FF_0$ and $j \in D$, so
\begin{equation}
    \label{eq:minimal-facility-robust}
    \sum_{j \in A'} \beta(\alpha^\prime_{i,j}, d(i_0,j)) + \sum_{j \in I'} \beta((1-\delta)d(j, \SS), d(i_0,j)) \le \hat f\,.
\end{equation}
From now on, let $\alpha_j^\prime = \alpha_{i,j}^\prime$.
We will show that $i$ was $(2, 0)$-openable with bids $\tau_j = \min\bigl(\tau^{(i)},\,(1-\delta)\dist(j,\SS), (1+\epsilon) \theta - d(j,i)\bigr)$ for all $j \in A' \cap B(i,\epsilon \theta)$ and $\tau_j = \theta$ for all $j \in A'\setminus B(i,\epsilon \theta)$. This is a contradiction to the maximality of the $(2,\eta)$-valid sequence $H_p$.
Let us check the conditions one by one which are satisfied just as in the proof of Lemma \ref{lem:overbidding},
\begin{enumerate}
        \item[\textbf{(i)} - \textbf{(ii)}] Clearly holds by the definition of $\tau$.
        \item[\textbf{(iii)}]
        We have that 
        \begin{align*}
            \hat f & = \sum_{j \in A'} \beta(\alpha_j^\prime, d(i,j)) + \sum_{j \in I'} \beta((1-\delta)d(j, \SS), d(i,j)) \\ 
            & \le \sum_{j \in A'} \beta(\tau_j, (1- \delta/2)\dist(i,j)) + \sum_{j \in I'} \beta((1-\delta)d(j, \SS), (1- \delta/2)\dist(i,j)).
        \end{align*}
        Let us verify that the inequality holds. First, consider $j \in B(i,\epsilon \theta)$. In the case where $\tau_j = \alpha^\prime_j$ we have by the monotonicity of $\beta$ that $\beta(\alpha_j^\prime, d(i,j)) \le \beta(\tau_j,  (1-\delta/2) d(i,j))$. In the other case when $\tau_j = (1+\epsilon) \theta - d(i,j)$ observe that
        \begin{align*}
            \beta(\alpha_j^\prime, d(i,j)) & \le \pos{(1+\epsilon^2)\theta - \gamma  d(i,j)} \\ & \le \pos{(1+\epsilon)\theta - d(i,j) - \gamma (1-\delta/2) d(i,j)}
            \\ & = \beta(\tau_j,  (1-\delta/2) d(i,j))
        \end{align*}
        because using that $\gamma \geq \gamma (1-\delta/2) + \eps$ for $\delta \gg \eps$ we have
    \begin{align*}
        (1+\epsilon) \theta - d(i, j)-(\gamma (1-\delta/2)) d(i, j)
        &= (1+\epsilon) \theta - (1-\eps)d(i, j)-(\gamma (1-\delta/2) + \eps) d(i, j)\\
        &\geq (1+\eps^2) \theta - (\gamma (1-\delta/2) + \eps)d(i,j)\\
        &\geq (1+\eps^2)\theta - \gamma d(i, j)\\
        &\geq \alpha_j' - \gamma d(i, j)\,.
    \end{align*}
        For $j \in A' \setminus B(i,\epsilon \theta)$, using that $\eps^2 \theta \le \epsilon \gamma \delta /2~\theta \le \gamma \delta/2~ d(i,j)$,
        \begin{align*}
            \beta(\alpha_j^\prime, d(i,j)) = \pos{\alpha_j^\prime - \gamma  d(i,j)} \le \pos{\theta - \gamma (1-\delta/2) d(i,j)} = \beta(\tau_j,  (1-\delta/2) d(i,j)).
        \end{align*}
        \item[\textbf{(iv)}] Assume that the no-overbidding inequality is violated for some facility $i_0 \in \FF_0$,
        \begin{align*}
            \hat f & < \sum_{j\in A'}\beta(\tau_j,\dist(i_0,j))
                  \;+\;
                  \sum_{j\in I'}\beta((1-\delta)  \dist(j,\SS), \dist(i_0,j))
                  \\ & \le \sum_{j\in A'} \beta(\alpha^\prime_j,\dist(i_0,j))
                  \;+\;
                  \sum_{j\in I'}\beta((1-\delta)  \dist(j,\SS), \dist(i_0,j)).
        \end{align*}
        The last inequality holds by the monotonicity of $\beta$.
        We arrive at a contradiction to the minimal choice of $i$. In particular, a contradiction to (\ref{eq:minimal-facility-robust}). 
        On the other hand, for $i_0 \in \FF \setminus \FF_0$, the claim follows because we have 
        \[
\sum_{j \in A} \beta((1+\epsilon^2)\theta, d(i_0,j)) + \sum_{j \in I} \beta((1-\delta)d(j, \SS), d(i_0,j)) \leq \hat f
    \]
        and we have $(1+\epsilon^2)\theta \geq \tau_j$ and for $j \in A' \cap I$ we have $(1-\delta)d(j, \SS) \geq \tau_j$ so also
        \[
\sum_{j \in A'} \beta(\tau_j, d(i_0,j)) + \sum_{j \in I'} \beta((1-\delta)d(j, \SS), d(i_0,j)) \leq \hat f\,.
    \]
        \item[\textbf{(v)}] We have to show that
              \[
                  \tau_j \le \tau_k + \dist(j,k)
                  \qquad \forall j,k\in A'
              \]
              which is clearly true if $\tau_j = \theta$ or $\tau_k = \tau^{(i)}$. Moreover, if $\tau_k = (1+\epsilon)\theta - d(i,k)$ then
              \[
                \tau_j \leq (1+\epsilon) \theta - d(i, j) \leq (1+\epsilon) \theta - (d(i, k) - d(j, k)) = \tau_k + d(j,k)\,.
            \]
            From now on we assume that $j \in B(i,\epsilon \theta)$. 
            Consider the two remaining cases where either $\tau_k =\theta$ and $k \in A' \setminus B(i,\epsilon \theta)$ or $\tau_k = (1-\delta)d(k,\SS) = (1-\delta)d(k,i_0)$ for some open facility $i_0$. In the first case,
              \[
                \tau_j \le (1+\epsilon) \theta - d(j,i) \le (1+\epsilon) \theta - (d(k,i) - d(j,k)) \le \theta + d(j,k) = \tau_k + d(j,k).
              \]
              In the second case, 
              \[
                \tau_j \le (1-\delta)d(j,\SS) \le (1-\delta)d(j,i_0) \le (1-\delta) (d(i_0,k) + d(j,k)) \le \tau_k + d(j,k).
              \]
    \end{enumerate}
\end{proof}

Next, we will show that after the execution of sequence of type $1$, all facilities that are candidates during the execution of the following sequence of type $2$ are either far apart or close together. This structural property will be crucial to prove feasibility for the factor revealing program.

\begin{lemma}
\label{lem:structure-robust}
    Assume $\hat{f} > n \eta$.
    Suppose two facilities $i_1$ and $i_2$ are candidates at the same step during the execution of a sequence of type $2$.
    Then either
    $d(i_1,i_2)\le \delta \theta$ or $d(i_1,i_2)\ge \theta / \gamma$.
\end{lemma}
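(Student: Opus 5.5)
We follow the proof of Lemma~\ref{lem:structure} and track the extra slack $n\eta$ and the superset issue. Fix the step of the type-$2$ sequence, with time $\theta$ and open set $\SS$. Suppose for contradiction that $\delta\theta < d(i_1,i_2) < \theta/\gamma$. Set $N_1 = A\cap B(i_1,\eps\theta)$ and $N_2 = A\cap B(i_2,\eps\theta)$. Since $\delta\gg\eps$, these are disjoint. Assume without loss of generality $|N_1|\le|N_2|$.

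First, by the remark after the definition of candidate, candidacy with respect to a superset implies candidacy with respect to $\SS$ itself. So
\[
P := \sum_{j\in A}\beta((1+\eps^2)\theta,(1-\delta/2)d(i_1,j)) + \sum_{j\in I}\beta((1-\delta)d(j,\SS),(1-\delta/2)d(i_1,j)) \ge \hat f - n\eta .
\]
Let $Q$ be the analogous $(1-\delta)$-scaled quantity with active bids $\theta$, exactly as in Lemma~\ref{lem:structure}.

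Second, we show $Q<\hat f$. Because the preceding type-$1$ sequence is $(1,\eta)$-valid, its maximality clause says no facility is $(1,0)$-openable with respect to $\theta$ and $\SS'$, the set at the end of that sequence. The monotonicity argument of Lemma~\ref{lem:structure} then bounds $Q$ by the type-$1$ quantity at $\SS'$:
\begin{itemize}
\item for $j\in I'$, use $d(j,\SS)\le d(j,\SS')$;
\item for $j\in A'\cap I$, use $(1-\delta)d(j,\SS)\le\theta$.
\end{itemize}
The second bullet needs care here, because in the robust setting a client may become inactive with respect to the actual $\SS$ only via $\theta\ge(1-\delta)d(j,\SS)$. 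This is in fact simpler, since inactivity is now defined directly by $\theta$. If $i_1\notin\SS'$, the type-$1$ quantity is therefore $<\hat f$. If $i_1\in\SS$, then $Q=0<\hat f$.

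Third, we compare $P$ and $Q$ clientwise, using the same three cases ($N_1$, $N_2$, the rest). This gives
\[
Q - P \;\ge\; \eps^2\theta\,(|N_2|-|N_1|) \;\ge\; 0 ,
\]
hence $Q\ge P$.

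The main obstacle is that we only know $P\ge \hat f - n\eta$, so $Q\ge P$ no longer contradicts $Q<\hat f$ directly. The $\eta$-slack has to be absorbed somewhere. First I would sharpen the maximality step: if $Q\ge \hat f - n\eta$ then $i_1$ would be $(1,\eta)$-openable, but maximality only rules out $(1,0)$-openable facilities. So instead I would extract a strict margin from the comparison in the third step.
\begin{itemize}
\item If $N_2\neq\emptyset$, the $N_2$ terms give a gain of at least $(\gamma\delta/2)(\delta-\eps)\theta - \eps^2\theta$ per client beyond what cancels $N_1$.
\item If $N_1=N_2=\emptyset$, the remaining clients with positive bids have $d(i_1,j)\ge\eps\theta$ and each gains at least $\Omega(\gamma\delta\eps\theta)$. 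Some such client exists, since $P\ge\hat f-n\eta>0$ by the hypothesis $\hat f>n\eta$.
\end{itemize}
Because $\theta\ge1$ and $\eta=c\,\delta\eps/n\cdot 2^{-n|\FF|}$ is tiny, either gain exceeds $n\eta$. This yields $Q\ge P+n\eta\ge\hat f$, the desired contradiction. The case needing the most checking is when only clients of $N_1$ and $N_2$ bid positively with $|N_1|=|N_2|$; there the $N_2$ surplus $\Omega(\delta^2\theta)$ supplies the margin.
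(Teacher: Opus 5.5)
Your outline follows the paper's proof: the same $P$ and $Q$, the same use of type-$1$ maximality to get $Q<\hat f$, and the same plan of upgrading $Q\ge P$ to $Q\ge P+n\eta$. The margin extraction, however, has a hole in the case $N_1=N_2=\emptyset$.

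There you claim that some client of $A\setminus(N_1\cup N_2)$ bids positively towards $i_1$ because $P\ge \hat f-n\eta>0$. This does not follow, since $P$ also contains the inactive sum. It can happen that every active client has $\beta((1+\eps^2)\theta,(1-\delta/2)d(i_1,j))=0$, while $P>0$ comes entirely from clients $j\in I$.

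Your bound $\Omega(\gamma\delta\eps\theta)$ does not cover such clients either. Nothing prevents an inactive client from lying within $\eps\theta$ of $i_1$, so the inequality $d(i_1,j)\ge\eps\theta$ fails for them. Moreover, your third step only shows that inactive $Q$-terms dominate inactive $P$-terms, with no strict margin. In this case you therefore have no surplus at all, and the contradiction with $Q<\hat f$ is never reached.

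The missing step is the one the paper adds at the end. Pick $j\in I$ with $[(1-\delta)d(j,\SS)-\gamma(1-\delta/2)d(i_1,j)]^+>0$. Then the corresponding $Q$-term $[(1-\delta)d(j,\SS)-\gamma(1-\delta)d(i_1,j)]^+$ is also positive and exceeds it by exactly $(\gamma\delta/2)\,d(i_1,j)$. This is at least $\gamma\delta/2\ge n\eta$, because every client--facility distance is at least $1$ by \eqref{eq:dist-bound}.

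With this case added, you get $Q\ge P+n\eta\ge\hat f$ in every case. Your $N_2\neq\emptyset$ case and your active-remainder case are fine using $\theta\ge 1$. The paper instead uses $d(i_1,j)\ge 1$ uniformly to get a margin of $n\eta$ per positively bidding client, but your lump-sum margins for the active cases work equally well.
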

\begin{proof}
If $i_1 = i_2$ then the claim is immediate so we assume $i_1 \neq i_2$.
Let $N_1 = A \cap B(i_1, \epsilon \theta)$ and $N_2 = A \cap B(i_2, \epsilon \theta)$.
Let us assume without loss of generality that $|N_1| \le |N_2|$.
Let us also assume for the sake of contradiction that $\delta \theta < d(i_1, i_2) < \theta/\gamma$, which implies that $N_1$ and $N_2$ are disjoint as $\delta \gg \eps$.

Let 
\[P = \sum_{j\in A}\beta((1+\eps^2) \theta, (1- \delta/2)\dist(i_1,j)) +
\sum_{j\in I}\beta((1-\delta)\dist(j,\SS), (1-\delta/2)\dist(i_1,j))\]
and 
\[Q = \sum_{j\in A}\beta(\theta, (1-\delta)\dist(i_1,j)) +\sum_{j\in I}\beta((1-\delta)\dist(j,\SS), (1-\delta)\dist(i_1,j))\,.\]
Because $i_1$ is a candidate, $P \geq \hat f - n \eta$.
We also claim that $Q < \hat f$.
If $i_1 \not\in \SS$, we claim that this is true because otherwise facility $i_1$ would have been $(1,0)$-openable during the execution of the previous sequence of type $1$, and the sequence would not have been maximal.
To see this, let us inspect the state of the algorithm at the beginning of the execution of the sequence of type $2$.
Let $\SS' \subseteq \SS$ be the set of open facilities at the beginning of the execution of the sequence of type $2$, and let $A', I'$ be the set of active and inactive clients at that time.
Note that $A' \supseteq A$ and $I' \subseteq I$.
For $j \in I'$ we use $(1-\delta) d(j,\SS) \leq (1-\delta) d(j, \SS')$, and for $j \in A' \cap I$ we use $(1-\delta) d(j,\SS) \leq \theta$.
The last inequality is immediate if $\alpha_j^\star = \theta$, and if $\alpha_j^\star > \theta$ then $j$ must have been $\eps \theta$-close to some facility in $\SS$ and hence $d(j,\SS) \leq \eps\theta$.
Then
\begin{align*}
Q &= \sum_{j\in A}\beta(\theta, (1- \delta)\dist(i_1,j)) +
\sum_{j\in I}\beta((1-\delta)\dist(j,\SS), (1-\delta)\dist(i_1,j))\\
&\leq \sum_{j\in A'}\beta(\theta, (1- \delta)\dist(i_1,j)) +
\sum_{j\in I'}\beta((1-\delta)\dist(j,\SS'), (1-\delta)\dist(i_1,j))\,,
\end{align*}
so the last term would have been at least $\hat f$ if $Q \geq \hat f$, and then facility $i_1$ would have been $(1,0)$-openable in the execution of the previous sequence of type $1$.
On the other hand, if $i_1 \in \SS$, the claim is true simply because $\theta < (1-\delta) d(i_1, j)$ for $j \in A$ and $d(j, \SS) \leq d(i_1, j)$ for $j \in I$, so the left-hand side would be zero.
Hence $Q < \hat f$.

We will prove that $Q \geq P + n \eta$, which is impossible because $Q < \hat f \leq P + n \eta$, and thus we reach the desired contradiction.
Let us start by analyzing the active terms in $P$ and $Q$.
We consider three cases:
\begin{itemize}
    \item \textbf{Case $j \in N_1$.} 
    Then, because $d(i_1, j) \leq \eps \theta$, we have that $\beta((1+\eps^2)\theta, (1-\delta/2)d(i_1,j))$ and $\beta(\theta,(1-\delta)d(i_1,j))$ are both positive and
    \begin{align*}
    &[\theta - \gamma(1-\delta)d(i_1,j)]^+ - [(1+\eps^2)\theta - \gamma(1-\delta/2)d(i_1,j)]^+\\
    =& -\eps^2 \theta + (\gamma \delta / 2) d(i_1, j)
    \geq -\eps^2 \theta + n \eta\,.
    \end{align*}
    \item \textbf{Case $j \in N_2$.}
    Then $\delta \theta - \eps \theta \leq d(i_1, j) \leq \theta / \gamma + \eps \theta$, and we have again that $\beta((1+\eps^2)\theta, (1-\delta/2)d(i_1,j))$ and $\beta(\theta,(1-\delta)d(i_1,j))$ are both positive and
    \begin{align*}
    &[\theta - \gamma(1-\delta)d(i_1,j)]^+ - [(1+\eps^2)\theta - \gamma(1-\delta/2)d(i_1,j)]^+\\
    =& -\eps^2 \theta + (\gamma \delta / 2) d(i_1, j)
    \geq -\eps^2 \theta + (\gamma \delta / 2)(\delta-\eps)\theta
    \geq \eps^2 \theta + n\eta\,,
    \end{align*}
    where we used $\delta \gg \eps$.
    \item \textbf{Case $j \in A \setminus (N_1 \cup N_2)$.}
    Then $d(i_1, j) \geq \eps \theta$.
    First, we claim that $\beta((1+\eps^2)\theta, (1-\delta/2)d(i_1,j))$ is positive only if $\beta(\theta,(1-\delta)d(i_1,j))$ is positive:
    Suppose
    \[(1+\eps^2)\theta - \gamma(1-\delta/2)d(i_1, j) \geq 0\,.\]
    Then
    \begin{align*}
    \theta - \gamma (1-\delta) d(i_1, j)
    \geq \frac{\gamma(1-\delta/2)}{1+\eps^2}d(i_1,j)-\gamma(1-\delta)d(i_1,j) \geq 0\,,
    \end{align*}
    where we used $\delta \gg \eps$.
    So it suffices to consider the case when both are positive.
    In that case,
    \begin{align*}
    &[\theta - \gamma(1-\delta)d(i_1,j)]^+ - [(1+\eps^2)\theta - \gamma(1-\delta/2)d(i_1,j)]^+\\
    =& -\eps^2 \theta + (\gamma \delta / 2) d(i_1, j)
    \geq -\eps^2 \theta + (\gamma \delta / 2) \eps \theta
    \geq n \eta\,,
    \end{align*}
    where we used again $\delta \gg \eps$.
\end{itemize}
Summing up these contributions over all $j \in A$, and using that $|N_1| \leq |N_2|$, we get that $Q \geq P$.
In fact, if there exists some $j \in A$ with non-zero contribution to $P$, summing up this contributions gives the stronger bound $P \leq Q - n \eta$.
Note that the inactive term of $P$ is trivially upper bounded by the inactive term of $Q$.
Therefore we reach the desired contradiction as long as there exists some $j \in A$ with non-zero contribution to $P$.

Finally, consider the special case when all clients $j \in A$ have zero contribution to $P$: $\beta((1+\eps^2)\theta, (1-\delta/2)d(i_1, j)) = 0$.
Because $P \geq \hat f - n \eta > 0$, there exists some $j \in I$ such that $\beta((1-\delta)d(j,\SS), (1-\delta/2)d(i_1, j)) > 0$, and then also $\beta((1-\delta)d(j,\SS), (1-\delta)d(i_1, j)) > 0$, i.e., the contribution of $j$ to $Q$ is also positive and at least as large.
Then, in fact, the contribution of $j$ to $Q$ is larger by at least $n \eta$:
\[[(1-\delta)d(j,\SS) - \gamma(1-\delta)d(i_1,j)]^+ - [(1-\delta)d(j,\SS) - \gamma(1-\delta/2)d(i_1,j)]^+ \geq (\gamma\delta/2)d(i_1,j) \geq n\eta\]
because $d(i_1,j) \geq 1$ as discussed in Section~\ref{sec:prelim}.
Thus, we reach again the desired contradiction, and this concludes the proof.
\end{proof}

We also have the following analogue of the structural result in Lemma~\ref{lem:freezing}.

\begin{lemma}
    \label{lem:freezing-robust}
    Assume $\hat{f} > n \eta$ and $\rho \geq 1.01 \gamma$.
    Consider a regular facility $i^\star$ and a subset of clients $D^\star \subseteq [n]$ such that for all $j \in D^\star$ it holds that $\alpha^\star_j \geq \rho \cdot d(j, i^\star)$.
    Suppose that a client $j \in D^\star$ becomes inactive in the execution of a sequence of type $2$ with $\alpha_j^\star > \theta$ and a regular facility $i_0$ is opened.
    Suppose also that facility $i^\star$ was a candidate at the beginning of that atomic step.
    Then, for the facility $i_0$ that was opened at that time, we have $d(i^\star, i_0) \leq \delta \theta$.
    Furthermore, all other clients in $D^\star$  that were active before this atomic step become inactive in this atomic step.
\end{lemma}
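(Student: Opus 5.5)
I will mirror the proof of Lemma~\ref{lem:freezing}, replacing Lemma~\ref{lem:structure} by its robust counterpart Lemma~\ref{lem:structure-robust}. The hypothesis $\hat f > n\eta$ is exactly what Lemma~\ref{lem:structure-robust} needs.

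\textbf{Step 1: both facilities are candidates.} The facility $i_0$ is regular and opened in a type-$2$ sequence. It is therefore $(2,\eta)$-openable with respect to a superset $\tilde\SS\supseteq\SS$ with bids $\tau_j\le(1+\eps^2)\theta$. By monotonicity of $\beta$, condition (iii) of Definition~\ref{def:openable-mod-eta-2} gives
\[
\sum_{j\in\tilde A}\beta((1+\eps^2)\theta,(1-\delta/2)d(i_0,j))+\sum_{j\in\tilde I}\beta((1-\delta)d(j,\tilde\SS),(1-\delta/2)d(i_0,j))\ge \hat f-n\eta .
\]
So $i_0$ is a candidate at the beginning of that atomic step. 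By assumption $i^\star$ is a candidate as well.

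\textbf{Step 2: $i_0$ is close to $i^\star$.} Lemma~\ref{lem:structure-robust} gives $d(i^\star,i_0)\le\delta\theta$ or $d(i^\star,i_0)\ge\theta/\gamma$. Since $\alpha_j^\star>\theta$, the definition of $\alpha_j^\star$ means $j$ was active with respect to $\tilde\SS$ and $\alpha_j^\star=\tau_j>\theta$. Condition (i) then forces $j\in B(i_0,\eps\theta)$, and condition (ii) gives $\alpha_j^\star\le(1+\eps^2)\theta$. Combined with $\alpha_j^\star\ge\rho\, d(j,i^\star)$, this yields $d(j,i^\star)\le(1+\eps^2)\theta/\rho$.

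Suppose instead $d(i^\star,i_0)\ge\theta/\gamma$. Then $d(j,i^\star)\ge\theta/\gamma-\eps\theta$. This contradicts the previous bound since $\rho\ge1.01\gamma$ and $\eps\ll1/\gamma$. Hence $d(i^\star,i_0)\le\delta\theta$.

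\textbf{Step 3: the other clients of $D^\star$ become inactive now.} Take $k\in D^\star$ active before this step. Suppose $k$ stays active after $i_0$ opens, so $\theta<(1-\delta)d(k,\SS\cup\{i_0\})\le(1-\delta)d(k,i_0)$.

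\emph{Upper bound on $\alpha_k^\star$.} Once $i_0$ is open, $k$ becomes inactive no later than the first $\theta$-increase to a value at least $(1-\delta)d(k,i_0)$. Its recorded value is $\alpha_k^\star=\theta'\le(1+\eps^2)(1-\delta)d(k,i_0)$ in all cases. In a Stage-3 increase, $\alpha_k^\star=(1-\delta)d(k,\SS)$. If $k$ is frozen when another facility opens, $\alpha_k^\star$ is at most the current $\theta'$ or its $\tau$, which is at most $(1-\delta)d(k,\SS)\le(1-\delta)d(k,i_0)$.

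\emph{Lower bound on $\alpha_k^\star$.} We have $\alpha_k^\star\ge\rho\, d(k,i^\star)\ge\rho(d(k,i_0)-\delta\theta)$. Using $\theta<(1-\delta)d(k,i_0)$, this is at least $\rho(1-\delta(1-\delta))d(k,i_0)$.

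The two bounds contradict each other for $\rho\ge1.01$ and $\delta\gg\eps$.

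\textbf{Main obstacle.} The step needing the most care is the bookkeeping in the robust setting. The opening is taken with respect to a superset $\tilde\SS$, and activity is judged by $I_\SS$ rather than by the algorithm's frozen values. So I must check two things. First, a client $k$ counted as active with respect to $\SS$ but inactive with respect to $\tilde\SS$ still "becomes inactive" in this atomic step; it does, because $\tilde\SS\supseteq\SS\cup\{i_0\}$ is only used for bids, and its status is decided by $\SS\cup\{i_0\}$, where the above argument applies. Second, in Step~2, $j$ really received $\tau_j>\theta$ from the $\tilde\SS$-bids, which holds by the definition of $\alpha_j^\star$.
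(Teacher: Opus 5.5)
Your proposal is correct and follows the paper's proof essentially step for step. Lemma~\ref{lem:structure-robust} gives the dichotomy, the client $j\in B(i_0,\eps\theta)$ with $\alpha_j^\star>\theta$ rules out $d(i^\star,i_0)\ge\theta/\gamma$, and the squeeze $\rho(1-\delta(1-\delta))d(k,i_0)\le\alpha_k^\star\le(1+\eps^2)(1-\delta)d(k,i_0)$ handles the remaining clients; your explicit check that $i_0$ is a candidate, via condition (iii) and $\tau_j\le(1+\eps^2)\theta$, fills in a step the paper only asserts. One small wording slip is in your last paragraph: being inactive with respect to the bid superset $\tilde\SS$ does not by itself make $k$ become inactive, since its status is decided by $\SS\cup\{i_0\}$, but this is harmless because your Step~3 argument already covers every $k\in D^\star$ that is active with respect to $\SS$.
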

\begin{proof}
    Both $i^\star$ and $i_0$ are candidates, so by Lemma~\ref{lem:structure-robust} we have that either $d(i^\star,i_0) \leq \delta \theta$ or $d(i^\star,i_0) \geq \theta/\gamma$.
    Note that, because $\alpha_j^\star > \theta$, we have $d(j, i_0) \leq \epsilon \theta$.
    In the execution of a sequence of type 2 we have $\alpha^\star_j \leq (1+\eps^2) \theta$, so combining with $\alpha_j^\star \geq \rho \cdot d(j, i^\star)$ we get that $(1+\eps^2) \theta \geq \rho \cdot d(j, i^\star)$.

    If $d(i^\star, i_0) \geq \theta/\gamma$, we have that $d(j, i^\star) \geq d(i^\star, i_0) - d(j, i_0) \geq \theta/\gamma - \epsilon\theta \geq 0.999 \cdot \theta/\gamma$ (using $1/\gamma \gg \epsilon$), which violates the assumption that $d(j, i^\star) \leq (1+\eps^2) \theta/\rho$.

    Hence, it must be the case that $d(i^\star, i_0) \leq \delta \theta$.
    Suppose some $k \in D^\star$ that is active before this atomic step does not become inactive at this atomic step.
    This implies that $\theta < (1-\delta) d(k, i_0)$.
    However, after $i_0$ is opened, this client will become inactive at the latest when its $\theta$ value increases to at least $(1-\delta) d(k, i_0)$.
    Then, because $\theta$ increases in multiples of $1+\eps^2$, we have that $\alpha^\star_k \leq (1+\eps^2)(1-\delta)d(k, i_0)$.
    On the other hand,
    \[\alpha^\star_k \geq \rho \cdot d(k, i^\star) \geq \rho \cdot (d(k, i_0) - d(i^\star, i_0)) \geq \rho \cdot d(k, i_0) - \rho \delta \theta \geq \rho \Paren{1 - \delta(1-\delta)} d(k, i_0)\,.\]
    Putting the two together, we get that
    \[\rho \Paren{1 - \delta(1-\delta)} d(k, i_0) \leq (1+\eps^2)(1-\delta)d(k, i_0)\,,\]
    which is a contradiction for $\rho \geq 1.01$ for $\delta \gg \eps > 0$ small enough.
\end{proof}

\subsection{Algorithm Cost}

We now bound the cost of the algorithm in terms of the variables $Z$ and $R$.

\begin{lemma}
\label{lemma:alg-cost-robust}
    The cost (with respect to opening cost $\hat{f}$) of a solution output by our execution is at most
    \[\sum_{j=1}^n \Paren{\sum_{i=s(j)}^{m-1} \beta(Z_{j,i}, R_{j,i+1}) + R_{j, m}} + \delta \cdot \sum_{j=1}^n d(j,\SS^\star) + |\Sr^\star| n \eta\,.\]
\end{lemma}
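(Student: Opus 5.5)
We follow the proof of Lemma~\ref{lemma:alg-cost}, adapting it to the robust setting. We split the cost into connection cost and opening payments. For the connection cost, we write $d(j,\SS^\star) = R_{j,m} + \delta\, d(j,\SS^\star)$, which produces the terms $R_{j,m}$ and $\delta\sum_j d(j,\SS^\star)$.

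For the opening cost, free facilities cost nothing. Each regular facility $\SS_{k+1}\setminus\SS_k$ was opened as $(\nu,\eta)$-openable with respect to the fixed minimal superset $\tilde\SS_k$ (with bids $\tau_j$ if $\nu=2$). Its defining inequality therefore gives
\[
\hat f \le n\eta + \sum_{j\in \tilde A}\beta(\cdot,\cdot) + \sum_{j\in\tilde I}\beta(\cdot,\cdot).
\]
Here the first argument is $\theta$ or $\tau_j$ for active clients and $(1-\delta)d(j,\tilde\SS_k)$ for inactive clients. The second argument is $(1-\delta)d(i,j)$ or $(1-\delta/2)d(i,j)$. Summing over regular facilities produces the additive term $|\Sr^\star|\, n\eta$. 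It then remains to bound each client's contribution to the opening of facility $k+1$ by $\beta(Z_{j,k},R_{j,k+1})$.

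To bound a single contribution, we consider cases on $j$.
\begin{itemize}
\item If $k< s(j)$, meaning $j$ is still active after the opening, then $j$ is active with respect to $\SS_{k+1}$. Its bid is $\theta$, or its $\tau_j=\theta$ because $\tau_j>\theta$ would force $j$ to freeze (Observation~\ref{obs:tau}). We have $\theta<(1-\delta)d(j,\SS_{k+1})\le(1-\delta)d(i,j)$, so the contribution is zero.
\item If $k\ge s(j)$, we compare the first argument with $Z_{j,k}$.
  \begin{itemize}
  \item If $j\in\tilde A$ and $j$ becomes inactive in this step, its bid value is $\tau_j$ (type 2) or $\theta$ (type 1 or free). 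This equals $\alpha_j^\star=\tau_{j,k}=Z_{j,k}$ by the definition of $\alpha_j^\star$.
  \item If $j\in\tilde A$ and $j$ is not freezing, the bid is $\theta=Z_{j,k}$.
  \item If $j\in\tilde I$, the first argument is $(1-\delta)d(j,\tilde\SS_k)=Z_{j,k}$.
  \end{itemize}
  For the second argument, $(1-\delta)d(i,j)$ and $(1-\delta/2)d(i,j)$ are both at least $(1-\delta)d(j,\SS_{k+1})=R_{j,k+1}$. Monotonicity of $\beta$ then gives the bound.
\end{itemize}

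Summing the contributions and adding back the connection cost yields the stated bound.

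The main obstacle is the bookkeeping at the boundary $k=s(j)$, where the special branch $R_{j,s(j)}=r_{j,j}$ appears. We must check that the $Z$-definition, which uses activity with respect to $\tilde\SS_k$ rather than $\SS_k$, matches the actual bid. In particular, a client that is active with respect to $\SS_k$ but inactive with respect to $\tilde\SS_k$ bids $(1-\delta)d(j,\tilde\SS_k)$, and this is exactly the ``otherwise'' branch of $Z$. A second point to verify is that clients becoming inactive in a Stage-3 $\theta$-increase contribute nothing beyond what $Z$ captures, since no facility is opened in that step.
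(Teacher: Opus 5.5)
Your proposal is correct and follows essentially the same argument as the paper. You split off $R_{j,m}+\delta\, d(j,\SS^\star)$, charge $n\eta$ per regular facility, show that contributions before $j$'s inactivation step vanish, and bound later contributions by $\beta(Z_{j,k},R_{j,k+1})$ via monotonicity; treating type~1 directly with $\tilde\SS_k$, rather than first reducing to $\tilde\SS_k=\SS_k$ as the paper does, is harmless.

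The only thing to tidy up is the case $k<s(j)$. There you should also cover clients in $\tilde I$, i.e., active w.r.t.\ $\SS_k$ but inactive w.r.t.\ $\tilde\SS_k$. Their first argument is $(1-\delta)d(j,\tilde\SS_k)\le\theta<(1-\delta)d(j,i)$, so their contribution is zero as well.
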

\begin{proof}
    The $j$-th client has connection cost $d(j, \SS^\star)$.
    We also need to calculate the contribution of $j$ to the opening of each facility in $\SS^\star$.    
    We note that, by definition, if $i$ is a free facility then its opening cost is zero.
    Otherwise, it costs $\hat{f}$ to open a regular facility, and we know that at the atomic step when it was opened we had
\[
\sum_{j\in A}\beta(\theta, (1-\delta)\dist(i,j))
\;+\;
\sum_{j\in I}\beta((1-\delta)\dist(j,\SS), (1-\delta)\dist(i,j))
\;\ge\; \hat f - n \eta.
\]
    if it was opened in a sequence of type $1$ (by Definition~\ref{def:openable-mod-eta-1})
    or 
\[
\sum_{j\in A}\beta(\tau_j, (1-\delta/2)\dist(i,j))
\;+\;
\sum_{j\in I}\beta((1-\delta)\dist(j,\SS), (1-\delta/2)\dist(i,j))
\;\ge\; \hat f - n \eta.
\]
  if it was opened in a sequence of type $2$ (by Definition~\ref{def:openable-mod-eta-2}).
  Therefore, up to an extra cost of $|\Sr^\star| n \eta$, it will suffice to bound these left-hand sides.

    \textbf{Type $1$ sequences.}
    Suppose that $i$ is opened in a sequence of type $1$ with respect to a set $\SS$ of open facilities at the beginning of that atomic step, and with active and inactive sets $A = A_{\SS}$ and $I=I_{\SS}$, respectively.
    (We observe that, if $i$ is $(1,\eta)$-openable w.r.t. a superset $\tilde{\SS}$, then it is also $(1,\eta)$-openable w.r.t. $\SS$ (because $d(j, \SS)$ is monotone in $\SS$ and by Lemma~\ref{lem:overbidding-robust} there is no overbidding), so it suffices to consider $\tilde{\SS} = \SS$, and $\zeta_{jk} = r_{jk}$ for all $j,k$.)
    Then, if $j \in A$ then the contribution is $\beta(\theta, (1-\delta)d(i,j))$ and if $j \in I$ then the contribution is $\beta((1-\delta)d(j,\SS), (1-\delta)d(i, j))$.
    
    We note that if $i$ is opened before the atomic step in which $j$ becomes inactive, we have that the contribution is zero, because $j$ is active after the opening of $i$ and hence $\theta < (1-\delta)d(j,\SS \cup \{i\})$.

    We now bound the contribution of client $j$ to the opening of the facility $\SS_{k+1} \setminus \SS_{k}$, for $k \geq s(j)$, by $\beta(Z_{j,k}, R_{j,k+1})$.
    This follows by the definitions of $Z$ and $R$.
    Specifically, if $j \in A$, then $Z_{j,s(j)} = \alpha^\star_j = \theta$, matching the above.
    If $j \in I$ but $k = s(j)$, which can happen if $j$ becomes inactive in an atomic step when $\theta$ increases, we have $Z_{j,s(j)} = \zeta_{j,j} = (1-\delta)d(j, \SS)$, again matching the above.
    For $k > s(j)$ we have by definition $Z_{j, k} = (1-\delta)d(j, \SS)$, matching the above.
    Finally, for all $k \geq s(j)$, $R_{j,k+1} = (1-\delta) d(j, \SS_{k+1})$, where $d(j, \SS_{k+1})$ is a lower bound on the distance from $j$ to the facility, so together the above suffice to upper bound the contribution by $\beta(Z_{j,k}, R_{j,k+1})$.

    \textbf{Type $2$ sequences.}
    Suppose that $i$ is opened in a sequence of type $2$ with respect to a superset $\tilde{\SS}$ of the set $\SS$ of open facilities at the beginning of that atomic step, and with active and inactive sets with respect to $\tilde{\SS}$ denoted as $\tilde{A} = A_{\tilde{\SS}}$ and $\tilde{I}=I_{\tilde{\SS}}$, respectively.
    If $j \in \tilde{A}$ then the contribution is $\beta(\tau_j, (1-\delta/2)d(j,i))$ and if $j \in \tilde{I}$ then the contribution is $\beta((1-\delta)d(j,\tilde \SS), (1-\delta/2) d(j, i))$.

    Suppose $i$ is opened before the atomic step in which $j$ becomes inactive.
    We argue again that the contribution is zero in this case.
    We have again that $j$ is active after the opening of $i$ and hence $\theta < (1-\delta) d(j, \SS \cup\{i\})$.
    Consider two cases.
    If $j \in \tilde{A}$ then we must have $\tau_j = \theta$, because otherwise by Observation~\ref{obs:tau} we have that $j$ also becomes inactive in the same atomic step.
    But $\theta < (1-\delta) d(j,\SS \cup \{i\})$, so then $\beta(\tau_j, (1-\delta/2)d(j,i))= 0$.
    If $j \in \tilde{I}$ then we have $\theta \geq (1-\delta)d(j,\tilde \SS)$ but $\theta < (1-\delta) d(j, \SS \cup \{i\}) \leq (1-\delta) d(j, i)$, so $\beta((1-\delta)d(j,\tilde \SS), (1-\delta/2) d(j, i)) = 0$ too.

    We now bound the contribution of client $j$ to the opening of the facility $\SS_{k+1} \setminus \SS_{k}$, for $k \geq s(j)$, by $\beta(Z_{j,k}, R_{j,k+1})$.
    This follows by the definitions of $Z$ and $R$.
    Specifically, if $j \in \tilde{A}$, then $Z_{j,s(j)} = \alpha_j^\star = \tau_j$, matching the above.
    If $j \in \tilde{I}$ but $k = s(j)$, we have $Z_{j,s(j)} = (1-\delta)d(j, \tilde{\SS})$, again matching the above.
    For $k > s(j)$ we have by definition $Z_{j, k} = (1-\delta)d(j, \tilde{\SS})$, matching the above.
    Finally, for all $k \geq s(j)$, $R_{j,k+1} = (1-\delta) d(j, \SS_{k+1})$, where $(1-\delta)d(j, \SS_{k+1})$ is a lower bound on $(1-\delta/2)$ times the distance from $j$ to the facility, so together the above suffice to upper bound the contribution by $\beta(Z_{j,k}, R_{j,k+1})$.

    \textbf{Putting things together.}
    The costs above only ensure that each facility is paid $\hat{f} - n \eta$, so we need to add $|\Sr^\star| n \eta$ to the final cost.
    Then the total opening costs are upper bounded by
    \[\sum_{j=1}^n \sum_{i=s(j)}^{m-1} \beta(Z_{j,i}, R_{j, i+1}) + |\Sr^\star|n\eta\,.\]
    Using that $R_{j,m} = (1-\delta)d(j, \SS^\star)$, we also have that the connection cost satisfies $d(j, \SS^\star) = R_{j,m} + \delta d(j, \SS^\star)$.
    Then the total cost is upper bounded by
    \[\sum_{j=1}^n \Paren{\sum_{i=s(j)}^{m-1} \beta(Z_{j,i}, R_{j,i+1}) + R_{j, m}} + \delta \cdot \sum_{j=1}^n d(j,\SS^\star) + |\Sr^\star| n \eta\,.\]
\end{proof}

\subsection{Factor-Revealing Program Feasibility}
\label{sec:fr-feasiblity-robust}

In this subsection we consider some $i^\star \in \FF$ and a subset of clients $\Dgood \subseteq D$ such that, for all $j \in \Dgood$, $\alpha_j^\star \geq \rho \cdot d(j, i^\star)$.
Recall $q(i) = \min \Set{i' \mid i' > i, i' \in \Dgood \cup \{n+1\}}$.
We will assume throughout that $\hat{f} > n \eta$ and $\rho \geq 1.01 \gamma$.

We define a new set of variables $r^\star_{jk}$ for $j \in \Dgood$, $k \in \Dgood \cup \{n + 1\}$ and $k \geq j$, that we will use with the role of $r$ in the factor revealing program in Definition~\ref{def:frp}.
Importantly, we show that the cost of the algorithm associated with clients $j \in \Dgood$ is upper bounded (up to a small error) by the objective of the factor revealing program with $r^\star$ variables.
We proceed to define $r^\star$, which will depend on the definition of an event \specialevent.

\begin{definition}
Let $\klast \in \Dgood$ be the client with the largest index in $\Dgood$. Suppose that:
\begin{itemize}
    \item The atomic step in which $\klast$ becomes inactive corresponds to a sequence of type $2$.
    \item $\alpha^\star_{\klast} > \theta$ for the $\theta$ at the atomic step in which $\klast$ becomes inactive.
    \item In the atomic step in which $\klast$ becomes inactive, a regular facility $i_0^\star$ is opened that is $(2,\eta)$-openable only with respect to a strict superset $\tilde{\SS}^{(\klast)}$ of $\SS^{(\klast)}$.
    \item $i^\star$ is a candidate with respect to $\SS^{(\klast)}$.
\end{itemize}
Then we say that the event \specialevent\ happens.
In that case, we also denote $\kfirst \in \Dgood$ to be the client with the smallest index in $\Dgood$ that becomes inactive in the same atomic step as $\klast$ and has $\alpha^\star_{\kfirst} > \theta$, where $\theta$ is the time of the atomic step in which $\klast$ becomes inactive.
\end{definition}

We can now define $r^\star$ for $j \in \Dgood$ and $k \in \Dgood \cup \{n + 1\}$ with $k \geq j$:
\[
    r^\star_{jk} = 
    \begin{cases}
        0 & \text{if \specialevent\ and } k=n+1\\
       \zeta_{jk} & \text{if \specialevent\ and } k \geq \kfirst \text{ and } d(j,i_0^\star) \leq \theta/3,\\
       \min(r_{j,k},\max(\zeta_{jk}, d(j, i^\star))) & \text{if \specialevent\ and } k \geq \kfirst \text{ and } d(j,i_0^\star) > \theta/3,\\
        r_{jk} & \text{otherwise}.
    \end{cases}
\]

We note that \specialevent\ captures the conditions of Lemma~\ref{lem:freezing-robust}.
In the analysis below we will make use of the following property when \specialevent\ happens:

\begin{lemma}
\label{lem:superset-far}
In the event \specialevent, we have for all $i \in \tilde{\SS}^{\klast}$ that $d(i_0^\star, i) \geq \theta$.
\end{lemma}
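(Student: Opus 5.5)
The plan is to use a single client, namely $\klast$ itself, as a witness. Condition (ii) of Definition~\ref{def:openable-mod-eta-2} then forces every facility of the superset to be far from that client, while the client is very close to $i_0^\star$. The minimality of the chosen superset will not be needed.

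\textbf{Step 1: $\klast$ is a raised client for the superset.} In the event \specialevent, the facility $i_0^\star$ is opened at the atomic step in which $\klast$ becomes inactive. It is $(2,\eta)$-openable with respect to $\tilde{\SS}^{(\klast)}$, with bids $(\tau_j)_{j\in A_{\tilde{\SS}^{(\klast)}}}$. By the definition of $\alpha^\star$ for sequences of type $2$, we have $\alpha^\star_{\klast}=\theta$ whenever $\klast$ is inactive with respect to the superset. Since \specialevent\ gives $\alpha^\star_{\klast}>\theta$, the client $\klast$ must be active with respect to $\tilde{\SS}^{(\klast)}$, and $\alpha^\star_{\klast}=\tau_{\klast}>\theta$. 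Condition (i) of Definition~\ref{def:openable-mod-eta-2} then gives $\klast\in \Ball{i_0^\star}{\epsilon\theta}$, i.e.\ $d(\klast,i_0^\star)\le\epsilon\theta$; this is also the content of Observation~\ref{obs:tau}.

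\textbf{Step 2: every facility of the superset is far from $\klast$.} Condition (ii) of Definition~\ref{def:openable-mod-eta-2} is stated relative to the set with respect to which the facility is opened, here $\tilde{\SS}^{(\klast)}$. It gives $\tau_{\klast}\le(1-\delta)\,d(\klast,\tilde{\SS}^{(\klast)})$. Hence for every $i\in\tilde{\SS}^{(\klast)}$ we get $\theta<\tau_{\klast}\le(1-\delta)\,d(\klast,i)$, so $d(\klast,i)>\theta/(1-\delta)$.

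\textbf{Step 3: triangle inequality.} For every $i\in\tilde{\SS}^{(\klast)}$,
\[
d(i_0^\star,i)\ \ge\ d(\klast,i)-d(\klast,i_0^\star)\ >\ \frac{\theta}{1-\delta}-\epsilon\theta\ \ge\ (1+\delta-\epsilon)\theta\ \ge\ \theta,
\]
using $\delta\gg\epsilon$. Free facilities need no separate treatment: their distances are only inflated by $u$, and the same inequalities apply verbatim.

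\textbf{Main point to check.} The only delicate step is Step 1: I must confirm that in the robust setting condition (ii) is indeed taken with respect to the superset $\tilde{\SS}^{(\klast)}$, and that $\klast$ belongs to its active set $\tilde A$. Both follow from how $\alpha^\star$ and the fixed superset and bids were defined in the preliminaries of this section, so I expect no real obstacle.
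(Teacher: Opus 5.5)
Your proof is correct, but it takes a different route from the paper's.

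The paper argues by contradiction through the hypothesis that $i_0^\star$ is $(2,\eta)$-openable \emph{only} with respect to a strict superset. Suppose some $i\in\tilde{\SS}^{(\klast)}$ had $d(i_0^\star,i)<\theta$. Then every client in $\Ball{i_0^\star}{\epsilon\theta}$ satisfies $(1-\delta)d(j,\tilde{\SS}^{(\klast)})\le\theta$, so every client active with respect to the superset bids exactly $\theta$. The paper then re-verifies the openability conditions with $\tau_j=\theta$ relative to $\SS^{(\klast)}$ itself, using Lemma~\ref{lem:overbidding-robust} for the no-overbidding condition. This contradicts the choice of a minimal superset.

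You instead use the other hypothesis of \specialevent, namely $\alpha^\star_{\klast}>\theta$. It yields a single witness client that is both within $\epsilon\theta$ of $i_0^\star$ and active with respect to the superset, and you conclude directly by the triangle inequality. The triangle inequality is legitimate for free copies as well, since $u(\tilde i)$ appears on both sides.

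What your route buys:
\begin{itemize}
\item It is shorter.
\item It avoids re-checking openability and the appeal to no-overbidding.
\item It gives the slightly stronger bound $d(i_0^\star,i)>(1+\delta-\epsilon)\theta$.
\item It shows that the strict-superset (minimality) hypothesis is not needed for this particular lemma.
\end{itemize}

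A small simplification: Step~2 does not need condition (ii). The fact that $\klast$ is active with respect to $\tilde{\SS}^{(\klast)}$ already means $\theta<(1-\delta)\,d(\klast,\tilde{\SS}^{(\klast)})$.
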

\begin{proof}
Suppose there exists some $i \in \tilde{\SS}^{\klast}$ such that $d(i_0^\star, i) < \theta$.
Then, for every $j \in B(i_0^\star, \epsilon \theta)$ we have $d(j, \tilde{\SS}^{\klast}) \leq d(j, i) \leq d(j, i_0^\star) + d(i, i_0^\star) \leq \epsilon \theta + \theta$, so $(1-\delta) d(j, \tilde{\SS}^{\klast}) \leq \theta$ and all such $j$ are either inactive or have $\tau_j = \theta$.

Using the above, we prove that it is impossible for $i_0^\star$ to be $(2,\eta)$-openable only with respect to a strict superset of $\SS^{\klast}$.
Because $i_0^\star$ is $(2,\eta)$-openable with respect to $\tilde{\SS}^{\klast}$ and all clients are either inactive or have $\tau_j=\theta$, we get
\[\sum_{j \in \tilde A} \beta(\theta, (1-\delta/2)d(i_0^\star,j)) + \sum_{j \in \tilde I} \beta((1-\delta)d(j,\tilde{\SS}^{\klast}), (1-\delta/2)d(i_0^\star,j)) \geq \hat{f} - n \eta\,.\]
Note that $\tilde A \subseteq A$ and $\tilde I \supseteq I$.
We have $(1-\delta) d(j, \tilde \SS^{\klast}) \leq (1-\delta) d(j, \SS^{\klast})$ and $(1-\delta)d(j, \tilde{\SS}^{\klast}) \leq \theta$ for all $j \in \tilde{I}$, and this then implies
\[\sum_{j \in A} \beta(\theta, (1-\delta/2)d(i_0^\star,j)) + \sum_{j \in I} \beta((1-\delta)d(j,\SS^{\klast}), (1-\delta/2)d(i_0^\star,j)) \geq \hat{f} - n \eta\,.\]
Thus, $i_0^\star$ is $(2,\eta)$-openable with respect to $\SS$, as it satisfies all constraints with $\tau_j = \theta$ for all $j \in A$:
The first two and the fifth constraints are trivial, the third constraint is the inequality we obtained above, and the fourth constraint follows by Lemma~\ref{lem:overbidding-robust}.
\end{proof}

We can now state the relation between the cost of the algorithm associated with clients $j \in \Dgood$ is upper bounded (up to a small error) and the objective of the factor revealing program with $r^\star$ variables.

\begin{lemma}
\label{lem:r-star-upper-bound}
We have 
\[\sum_{j \in \Dgood} \Paren{\sum_{i=s(j)}^{m-1} \beta(Z_{j,i}, R_{j,i+1}) + R_{j, m}} \leq \sum_{j \in \Dgood} \Paren{\sum_{\substack{i=j\\i \in \Dgood}}^n \beta(r^\star_{j,i}, r^\star_{j,q(i)}) + r^\star_{j,n+1}} + 6 \delta \sum_{j \in \Dgood} d(j, i^\star)\,.\]
\end{lemma}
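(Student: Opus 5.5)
The plan is to handle the non-\specialevent\ case first and then treat \specialevent\ as a local modification affecting only the final block of indices $k \ge \kfirst$.

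\textbf{Non-\specialevent\ case.} Here $r^\star = r$ on $\Dgood$. For each $j\in\Dgood$, group the indices $i\in\{s(j),\dots,m-1\}$ into consecutive blocks $[s(k), s(q(k)))$ with $k\in\Dgood$, $k\ge j$. Within a block I will use three facts.
\begin{itemize}
\item $\beta$ is monotone non-decreasing in its first argument, and $Z_{j,i}\le R_{j,i}$ (the analogue of Observation~\ref{obs:rzdom}). So each term $\beta(Z_{j,i},R_{j,i+1})$ is at most $\beta(R_{j,i},R_{j,i+1})$.
\item The sequence $R_{j,i}$ is non-increasing.
\item By superadditivity, the block sum is at most $\beta(R_{j,s(k)},R_{j,s(q(k))})$.
\end{itemize}
It then remains to check the identity $R_{j,s(k)}=r_{j,k}$. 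This should hold exactly as in the proof of Lemma~\ref{lemma:alg-cost}, using the special case $k=s(j)$ in the definition of $R$. The term $R_{j,m}=r_{j,n+1}$ is immediate. In this case no $6\delta$ slack is needed.

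\textbf{\specialevent\ case.} The blocks with $q(k)\le\kfirst$ are unchanged, except that the last block now ends at $r^\star_{j,\kfirst}$. All clients $k\in\Dgood$ with $k\ge\kfirst$ become inactive in the same atomic step, by Lemma~\ref{lem:freezing-robust}. So the indices $i\ge s(\kfirst)$ amount to one atomic step (the opening of $i_0^\star$ with respect to $\tilde\SS^{(\klast)}$), followed by all later openings, and then the final connection cost $R_{j,m}$. The key point is to retain $Z_{j,s(\kfirst)}=\zeta_{j,\kfirst}$ rather than relaxing it to $r$: since $i_0^\star$ was opened with respect to a strict superset, the bid in this step was computed from $\zeta$.

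The remaining contribution is bounded by
\[
\beta(\zeta_{j,\kfirst},R_{j,s(\kfirst)+1}) + \sum_{\text{later } i}\beta(R_{j,i},R_{j,i+1}) + R_{j,m}
\le \beta(\zeta_{j,\kfirst},x) + x
\le \zeta_{j,\kfirst},
\]
for any intermediate value $x$. The first inequality uses superadditivity, and the second uses $\beta(a,b)+b\le a$ for $\gamma\ge1$. The target side, with $r^\star_{j,n+1}=0$, contains $\beta(r^\star_{j,\kfirst},0)=r^\star_{j,\kfirst}$ after telescoping the $r^\star$ terms. So the task reduces to showing $\zeta_{j,\kfirst}\lesssim r^\star_{j,\kfirst}+6\delta\,d(j,i^\star)$, and that the block just before $\kfirst$ is compatible with the modified endpoint.

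The two subcases follow the definition of $r^\star$.
\begin{itemize}
\item If $d(j,i_0^\star)\le\theta/3$, then $r^\star=\zeta$ and the inequality is exact.
\item If $d(j,i_0^\star)>\theta/3$, then $r^\star=\min(r,\max(\zeta,d(j,i^\star)))$. When the minimum is attained by $r$ there is nothing to lose. Otherwise $r^\star\ge\zeta$, again with no loss.
\end{itemize}

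\textbf{Main obstacle.} The hard part is the block ending at $\kfirst$. There, the original algorithm's sum telescopes to the true endpoint $r_{j,\kfirst}$, or through $Z$ to $\zeta$, whereas the target uses $\beta(r^\star_{j,k'},r^\star_{j,\kfirst})$ with the smaller endpoint $r^\star_{j,\kfirst}$. Since $\beta$ is decreasing in its second argument, this is favourable, but the replacement of $r_{j,\kfirst}$ by $\zeta$ is not. The $6\delta$ error arises here, for clients far from $i_0^\star$ whose distance to $\tilde\SS^{(\klast)}$ is charged against $d(j,i^\star)$. To control it I would use three ingredients:
\begin{itemize}
\item Lemma~\ref{lem:superset-far}, which says every facility of $\tilde\SS^{(\klast)}$ is at least $\theta$ away from $i_0^\star$;
\item the bound $d(i^\star,i_0^\star)\le\delta\theta$ from Lemma~\ref{lem:freezing-robust};
\item the condition $\alpha_j^\star\ge\rho\,d(j,i^\star)$.
\end{itemize}
Together these should show that for $d(j,i_0^\star)>\theta/3$, any loss from using $\max(\zeta,d(j,i^\star))$ in place of $(1-\delta)d(j,\cdot)$ is at most $O(\delta)\,d(j,i^\star)$. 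Working out this constant, with $6\delta$ as the target, is where I expect the real work to lie.
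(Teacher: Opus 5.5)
There is a genuine gap in the \specialevent\ case. It sits at the step $\beta(\zeta_{j,\kfirst},x)+x\le\zeta_{j,\kfirst}$, which you claim for any intermediate value $x$.

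The inequality $\beta(a,b)+b\le a$ requires $b\le a$. Here $x=R_{j,s(\kfirst)+1}\le(1-\delta)d(j,i_0^\star)$ is the \emph{actual} connection distance after $i_0^\star$ opens. By contrast, $\zeta_{j,\kfirst}$ is either $\tau_j$ or $(1-\delta)d(j,\tilde\SS^{(\klast)})$, and the latter is measured against a superset containing facilities that are never opened. Nothing prevents $R_{j,s(\kfirst)+1}>\zeta_{j,\kfirst}$, for instance when $j$ is much closer to a facility of $\tilde\SS^{(\klast)}\setminus\SS^{(\klast)}$ than to $\SS^{(\klast)}\cup\{i_0^\star\}$. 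In that case $\beta(\zeta_{j,\kfirst},x)=0$ and the tail equals $x$, which exceeds $\zeta_{j,\kfirst}$. The correct bound on the tail is $\max(\zeta_{j,\kfirst},R_{j,s(\kfirst)+1})$.

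Your reduced target $\zeta_{j,\kfirst}\le r^\star_{j,\kfirst}+6\delta\,d(j,i^\star)$ actually holds with zero slack: since $\zeta\le r$, both branches of $r^\star$ are at least $\zeta$. That is a sign the real content has been lost. Also, the block ending at $\kfirst$ that you flag as the main obstacle is harmless. There $r^\star_{j,\kfirst}\le r_{j,\kfirst}$, and $\beta$ is non-increasing in its second argument, so the replacement only helps.

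What you actually need is $R_{j,s(\kfirst)+1}\le r^\star_{j,\kfirst}+6\delta\,d(j,i^\star)$, and the ingredients you list enter differently from where you place them.

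\textbf{Near case, $d(j,i_0^\star)\le\theta/3$.} Lemma~\ref{lem:superset-far} is what makes this case lossless. It gives $d(j,\tilde\SS^{(\klast)})\ge 2\theta/3$, hence $\zeta_{j,\kfirst}\ge\min\{\theta,(1-\delta)2\theta/3\}\ge\theta/3\ge R_{j,s(\kfirst)+1}$. So the tail really does collapse to $\zeta_{j,\kfirst}=r^\star_{j,\kfirst}$. You skipped this check when you called the case ``exact''.

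\textbf{Far case, $d(j,i_0^\star)>\theta/3$.} If the minimum in $r^\star$ is attained by $r$, then $R_{j,s(\kfirst)+1}\le r_{j,\kfirst}=r^\star_{j,\kfirst}$ and nothing is lost. Otherwise $r^\star_{j,\kfirst}\ge d(j,i^\star)$, while $R_{j,s(\kfirst)+1}\le d(j,i_0^\star)\le d(j,i^\star)+\delta\theta$ by Lemma~\ref{lem:freezing-robust}. Moreover, $\theta<3d(j,i_0^\star)\le 3d(j,i^\star)+3\delta\theta$ gives $\theta\le 6\,d(j,i^\star)$ for $\delta\le 1/6$. This subcase is the only source of the $6\delta$ loss.

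The hypothesis $\alpha_j^\star\ge\rho\,d(j,i^\star)$ is used only through Lemma~\ref{lem:freezing-robust}.
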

\begin{proof}
We will prove the inequality term by term for each $j \in \Dgood$, i.e., 
\[\sum_{i=s(j)}^{m-1} \beta(Z_{j,i}, R_{j,i+1}) + R_{j, m} \leq \sum_{\substack{i=j\\i \in \Dgood}}^n \beta(r^\star_{j,i}, r^\star_{j,q(i)}) + r^\star_{j,n+1} + 6 \delta d(j, i^\star)\,.\]
If the event \specialevent\ does not happen, then the bound follows by superadditivity of $\beta$, since $Z_{j,i} \leq R_{j,i}$ for all $j,i$, $R_{j,s(j)} = r_{j,j} = r^\star_{j,j}$, $R_{j,m} = r^\star_{j,n+1}$, and $R_{j, i} \geq R_{j,i+1}$ for all valid $j, i$.
Suppose then that \specialevent\ happens.
Then, using that $Z_{j,i} \leq R_{j,i}$ for all $j, i$,
\[\sum_{i=s(j)}^{m-1} \beta(Z_{j,i}, R_{j,i+1}) + R_{j, m}
\leq \sum_{i=s(j)}^{s(\klast) - 1} \beta(R_{j,i}, R_{j,i+1}) + \sum_{i=s(\klast)}^{m-1} \beta(Z_{j,i}, R_{j,i+1}) + R_{j,m}\,.\]
If $j \geq \kfirst$, we note that $s(j) = s(\klast)$ and the first term is equal to $0$.
Otherwise, for the first term, because $j \in \Dgood$ and $\klast \in \Dgood$, and because $R_{j,s(j)} = r_{j,j}$, $R_{j,s(\klast)} = r_{j,\klast}$, and $R_{j,k} \geq R_{j,k+1}$ for all $j,k$, we can use the superadditivity of $\beta$ to get
\begin{align*}
\sum_{i=s(j)}^{s(\klast) - 1} \beta(R_{j,i}, R_{j,i+1})
&\leq \sum_{\substack{i=j\\i \in \Dgood}}^{\klast - 1} \beta(r_{j,i}, r_{j,q(i)})
\leq \sum_{\substack{i=j\\i \in \Dgood}}^{\klast - 1} \beta(r^\star_{j,i}, r^\star_{j,q(i)})\,,
\end{align*}
where in the last inequality we used that $r^\star_{j,i} = r_{j,i}$ for $i < \kfirst$ and $r_{j,i} = r_{j,q(i)}$ for $\kfirst < q(i) \leq \klast - 1$ (hence in this latter case $\beta(r_{j,i}, r_{j,q(i)}) =0$), and then for the last term $r^\star_{j,\klast} \leq r_{j,\klast}$, which gives an upper bound.
Then we have
\begin{align*}
&\sum_{i=s(j)}^{m-1} \beta(Z_{j,i}, R_{j,i+1}) + R_{j, m}\\
\leq& \sum_{\substack{i=j\\i \in \Dgood}}^{\klast - 1} \beta(r^\star_{j,i}, r^\star_{j,q(i)}) + \beta(Z_{j,s(\klast)}, R_{j,s(\klast)+1}) + \sum_{i=s(\klast) + 1}^{m-1} \beta(R_{j,i}, R_{j,i+1}) + R_{j,m}\\
\leq& \sum_{\substack{i=j\\i \in \Dgood}}^{\klast - 1} \beta(r^\star_{j,i}, r^\star_{j,q(i)}) + \beta(Z_{j,s(\klast)}, R_{j,s(\klast)+1}) + \beta(R_{j,s(\klast)+1}, R_{j,m}) + R_{j,m}\,.
\end{align*}
Also note that $Z_{j,s(\klast)} = \zeta_{j,\klast}$.
For the last three terms, let us consider two cases.

\textbf{Case $d(j, i_0^\star) \leq \theta/3$.}
In this case $r^\star_{j, \klast} = \zeta_{j, \klast}$.
If $j$ is active with respect to $\tilde{\SS}^{(\klast)}$, then $\zeta_{j, \klast} \geq \theta$.
Else, by Lemma~\ref{lem:superset-far}, we have for every $i \in \tilde{\SS}^{(\klast)}$ that $d(i, i_0^\star) \geq \theta$, so $d(j, i) \geq d(i, i_0^\star) - d(j, i_0^\star) \geq 2\theta/3$, so also $\zeta_{j, \klast} \geq (1-\delta)2\theta/3$.
Then in all cases, because $i_0^\star$ is then added to the set of open facilities, we also have $R_{j,m} \leq R_{j,s(\klast)+1} \leq d(j, i_0^\star) \leq \theta/3 \leq \zeta_{j, \klast}$, where we use that any $j \in \Dgood$ must be inactive after the atomic step.
Then, because $\zeta_{j, \klast} \geq R_{j,s(\klast)+1} \geq R_{j,m}$, we can use the superadditivity of $\beta$ to conclude
\begin{align*}
\sum_{i=s(j)}^{m-1} \beta(Z_{j,i}, R_{j,i+1}) + R_{j, m}
&\leq \sum_{\substack{i=j\\i \in \Dgood}}^{\klast - 1} \beta(r^\star_{j,i}, r^\star_{j,q(i)}) + r^\star_{j,\klast}\\
&= \sum_{\substack{i=j\\i \in \Dgood}}^n \beta(r^\star_{j,i}, r^\star_{j,q(i)}) + r^\star_{j,n+1}\,,
\end{align*}
where we also used that $r^\star_{j,n+1}=0$.

\textbf{Case $d(j, i_0^\star) > \theta/3$.} In this other case $r^\star_{j, \klast} = \min(r_{j, \klast}, \max(\zeta_{j, \klast}, d(j, i^\star)))$.
Because $i_0^\star$ is then added to the set of open facilities, we also have $R_{j,m} \leq R_{j,s(\klast)+1} \leq d(j, i_0^\star) \leq d(j, i^\star) + d(i_0^\star, i^\star)$, where we use that any $j \in \Dgood$ must be inactive after the atomic step.
We have by Lemma~\ref{lem:freezing-robust} that $d(i^\star, i_0^\star) \leq \delta \theta$, so then the upper bound above becomes $d(j, i^\star) + \delta \theta$.
We also note that if $r^\star_{j, \klast} \geq R_{j, s(\klast) + 1}$ then we can apply the superadditivity of $\beta$ and the result follows as in the previous case, so we assume that $r^\star_{j, \klast} < R_{j, s(\klast) + 1}$, which also implies that $r^\star_{j, \klast} < r_{j,\klast}$ as $R_{j, s(\klast) + 1} \leq R_{j, s(\klast)} = r_{j,\klast}$.
Then $r^\star_{j, \klast} = \max(\zeta_{j,\klast}, d(j, i^\star)) \geq d(j, i^\star) \geq R_{j,s(\klast)+1} - \delta \theta$.
Then, also using that $r^\star_{j, \klast} \geq \zeta_{j,\klast}$, we conclude
\begin{align*}
\sum_{i=s(j)}^{m-1} \beta(Z_{j,i}, R_{j,i+1}) + R_{j, m}
&\leq \sum_{\substack{i=j\\i \in \Dgood}}^{\klast - 1} \beta(r^\star_{j,i}, r^\star_{j,q(i)}) + \beta(r^\star_{j,\klast}, R_{j,s(\klast)+1}) + \beta(R_{j,s(\klast)+1}, R_{j,m}) + R_{j,m}\\
&= \sum_{\substack{i=j\\i \in \Dgood}}^{\klast - 1} \beta(r^\star_{j,i}, r^\star_{j,q(i)}) + \beta(R_{j,s(\klast)+1}, R_{j,m}) + R_{j,m},
\end{align*}
where we dropped the term $\beta(r^\star_{j,\klast}, R_{j,s(\klast)+1})$ because it equals zero as $r^\star_{j,\klast} < R_{j,s(\klast)+1}$.
Continuing, because $R_{j,m}=\beta(R_{j,m},0)$, by using the superadditivity of $\beta$,
\begin{align*}
\sum_{i=s(j)}^{m-1} \beta(Z_{j,i}, R_{j,i+1}) + R_{j, m}
&\leq \sum_{\substack{i=j\\i \in \Dgood}}^{\klast - 1} \beta(r^\star_{j,i}, r^\star_{j,q(i)}) + R_{j,s(\klast)+1}\\
&\leq \sum_{\substack{i=j\\i \in \Dgood}}^{\klast - 1} \beta(r^\star_{j,i}, r^\star_{j,q(i)}) + r^\star_{j,\klast} + \delta \theta\\
&= \sum_{\substack{i=j\\i \in \Dgood}}^n \beta(r^\star_{j,i}, r^\star_{j,q(i)}) + r^\star_{j,n+1} + \delta \theta\,,
\end{align*}
where in the last step we used that $r^\star_{j,n+1}=0$. We also have that $\theta \leq 3 d(j, i_0^\star) \leq 3 (d(j, i^\star) + d(i^\star,i_0^\star)) \leq 3d(j,i^\star) + 3 \delta \theta$, which implies $\theta \leq 6 d(j, i^\star)$, so we get
\[\sum_{i=s(j)}^{m-1} \beta(Z_{j,i}, R_{j,i+1}) + R_{j, m} \leq \sum_{\substack{i=j\\i \in \Dgood}}^n \beta(r^\star_{j,i}, r^\star_{j,q(i)}) + r^\star_{j,n+1} + 6 \delta d(j, i^\star)\,.\]
\end{proof}

In the rest of this section we will show that $\{r^\star_{jk} \mid j \leq k, j \in \Dgood, k \in \Dgood \cup \{n + 1\}\}, \{d(j, i^\star) \mid j \in \Dgood\}, \hat f$ is feasible for the factor revealing program in Definition~\ref{def:frp}, except for the normalization constraint (\ref{eq:fr-norm}).

First we will prove that the facility constraints (\ref{eq:fr-facility}) hold.
\begin{lemma}
\label{lem:fr-facility-robust}
    Consider the central facility $i^\star$ and let $k \in \Dgood$. Then,
    \[
        \sum_{\substack{j \ge k\\ j \in \Dgood}} \beta(\alpha^\star_k, d(i^\star,j)) + \sum_{\substack{j < k\\ j \in \Dgood}} \beta(r^\star_{jk}, d(i^\star,j)) \le \hat f.
    \]
\end{lemma}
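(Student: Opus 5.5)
I will follow the structure of the proof of Lemma~\ref{lem:fr-facility}, with two extra ingredients: the robust no-overbidding invariant of Lemma~\ref{lem:overbidding-robust}, which holds for any superset $\tilde\SS$, and a separate treatment of the modified variables $r^\star$ when \specialevent\ happens. First I reduce to the case $r^\star_{jk}\le r_{jk}$. For $k=n+1$ there is nothing to prove. In every branch of the definition of $r^\star$, monotonicity of $\beta$ in its first argument lets me replace $r^\star_{jk}$ by any larger quantity. Observation~\ref{obs:rzdom} gives $\zeta_{jk}\le r_{jk}$, and the third branch is capped by $r_{jk}$ via the $\min$. So whenever I can prove the inequality with $r_{jk}$ in place of $r^\star_{jk}$, I am done.

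Next I split on the atomic step in which $k$ becomes inactive, exactly as before. If it is a type~1 step, a step that opens a free facility, or a step where $\theta$ increases, then $\alpha_k^\star$ is $\theta$, or equals $(1-\delta)d(k,\SS)$ after the raise. Applying Lemma~\ref{lem:overbidding-robust} with $\tilde\SS=\SS^{(k)}$ (or with the post-step state in the raise case) and $i_0=i^\star$ gives the bound. The argument uses the ordering by $\alpha^\star$ within one atomic step and the fact that $r_{jk}=\alpha_j^\star$ for earlier clients inactivated in the same step. For type~2 steps with $\alpha_k^\star=\theta$, the same argument applies, since every $j\ge k$ in that step has $\alpha^\star_j\ge\theta$ and bids at most $\beta(\alpha_k^\star,\cdot)$.

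The remaining case is a type~2 step with $\alpha_k^\star>\theta$, opening a regular $i_0$ with respect to a superset $\tilde\SS^{(k)}$. I argue by contradiction, as in Lemma~\ref{lem:fr-facility}. If the sum exceeds $\hat f$, then $i^\star$ is a candidate. Lemma~\ref{lem:freezing-robust} then shows that every active client of $\Dgood$ becomes inactive in this step and that $d(i^\star,i_0)\le\delta\theta$. I then want to invoke constraint (iv) of Definition~\ref{def:openable-mod-eta-2} at $i^\star$. That constraint is phrased with the active and inactive sets of $\tilde\SS^{(k)}$ and bids $\tau_j$, or $(1-\delta)d(j,\tilde\SS^{(k)})$ for inactive $j$. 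These are exactly the $\zeta_{jk}$, not the $r_{jk}$.

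So I will instead bound directly with $\zeta$: (iv) gives
\[\sum_{j\ge k,\,j\in\Dgood}\beta(\alpha_k^\star,d(i^\star,j))+\sum_{j<k,\,j\in\Dgood}\beta(\zeta_{jk},d(i^\star,j))\le\hat f .\]
When \specialevent\ does not happen, $\tilde\SS^{(k)}=\SS^{(k)}$ in this situation, or $k<\kfirst$ and the step differs, so $\zeta_{jk}=r_{jk}=r^\star_{jk}$. When \specialevent\ happens and $k\ge\kfirst$, I must show $\beta(r^\star_{jk},d_j)\le\beta(\zeta_{jk},d_j)$. This is immediate in the branch $r^\star=\zeta$. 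In the third branch, $r^\star_{jk}$ exceeds $\zeta_{jk}$ only when it equals something at most $d(j,i^\star)$. Then $\beta(r^\star_{jk},d(j,i^\star))\le[d_j-\gamma d_j]^+=0$ because $\gamma\ge1$, so the term vanishes.

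\textbf{Main obstacle.} The delicate part is this last case: checking that the clients $j$ with $j<k$ that are active with respect to $\tilde\SS^{(k)}$ have $\zeta_{jk}=\tau_j$ matching (iv). I also have to handle $k<\kfirst$ within the same atomic step, where $r^\star=r$ but the comparison must still go through $\zeta\le r$ together with Lemma~\ref{lem:freezing-robust}.
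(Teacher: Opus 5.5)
Your proposal is correct and follows the paper's proof essentially step for step. The robust no-overbidding invariant handles type-1, free-facility, $\theta$-raise and $\alpha_k^\star=\theta$ steps. For $\alpha_k^\star>\theta$ you use the candidate property, Lemma~\ref{lem:freezing-robust}, constraint (iv) at $i^\star$ in the superset $\tilde\SS^{(k)}$, and a comparison of $r^\star_{jk}$ with $\zeta_{jk}$. Your observation that $r^\star_{jk}>\zeta_{jk}$ forces $r^\star_{jk}\le d(j,i^\star)$, and hence a zero bid, is a slightly more uniform version of the paper's split on $d(j,i_0^\star)\lessgtr\theta/3$.

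The $k<\kfirst$ obstacle you flag is vacuous: such a $k$ in the step of $\klast$ has $\alpha_k^\star=\theta$ and is covered by the easy case. The one detail to make explicit when applying (iv) is that freezing puts every $j\ge k$ of $\Dgood$ in the current step with $\alpha_j^\star\ge\alpha_k^\star>\theta$, so $j\in\tilde A$ and $\tau_j=\alpha_j^\star$.
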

\begin{proof}
    Consider the phase when $k$ goes inactive. 
    $k$ goes inactive either in an atomic step in a sequence of type $1$, type $2$, or when $\theta$ increases.
    \paragraph{Type $1$ sequence}
    Consider the state $(\SS=\SS^{(k)},\theta)$ of the algorithm just before the atomic step in which $k$ becomes inactive.
    If $k$ becomes inactive in a type $1$ sequence then $\alpha^\star_k = \theta$. Thus, by Lemma \ref{lem:overbidding-robust},
    \begin{align*}
        \hat f & \ge \sum_{j \in A \cap \Dgood} \beta(\theta, d(i^\star,j)) + \sum_{j \in I \cap \Dgood} \beta((1-\delta)d(j, \SS), d(i^\star,j)) 
        \\ & \ge \sum_{\substack{j \ge k\\ j \in \Dgood}} \beta(\alpha^\star_k, d(i^\star,j)) + \sum_{\substack{j < k\\ j \in \Dgood}} \beta(r^\star_{jk}, d(i^\star,j)),
    \end{align*}
    where we use that if $j < k$ but $j \in A$ then we have $r^\star_{jk} = \alpha_j^\star = \theta$.
    \paragraph{Type $2$ sequence}
    Again, consider the state $(\SS= \SS^{(k)},\theta)$ of the algorithm just before the atomic step in which $k$ becomes inactive.
    
    If $k$ becomes inactive in a step in which a free facility is opened, we have essentially the same case as in the type $1$ sequence and the same argument works.
    
    The case $\alpha_k^\star = \theta$ is also easy:
    We have $r_{jk}^\star = r_{jk}$, as even if \specialevent\ happens it only affects $k \geq \kfirst$ with $\alpha^\star_{\kfirst} > \theta$.
    Note that $r_{jk} = \alpha_j^\star = \theta$ for all $j < k$ with $j \in A$, as $\alpha_j^\star \geq \theta$ and $\alpha_j^\star \leq \alpha_k^\star$ (because if $j < k$ and $j \in A$ then $j$ must become inactive in the same atomic step as $k$).
    Then by Lemma \ref{lem:overbidding-robust},
    \begin{align*}
        \hat f & \ge \sum_{j \in A \cap \Dgood} \beta(\theta, d(i^\star,j)) + \sum_{j \in I\cap \Dgood} \beta((1-\delta)d(j, \SS), d(i^\star,j)) 
        \\ & \ge \sum_{\substack{j \ge k\\ j \in \Dgood}} \beta(\alpha^\star_k, d(i^\star,j)) + \sum_{\substack{j < k\\ j \in \Dgood}} \beta(r^\star_{jk}, d(i^\star,j)),
    \end{align*}
    where we use that if $j < k$ but $j \in A$ then we have $r^\star_{jk} = \alpha_j^\star = \theta$.
    
    Else, we have $\theta < \alpha_k^\star \leq (1+\eps^2) \theta$. Assume for contradiction that
    \[
        \sum_{\substack{j \ge k\\ j \in \Dgood}} \beta(\alpha^\star_k, d(i^\star,j)) + \sum_{\substack{j < k\\ j \in \Dgood}} \beta(r^\star_{jk}, d(i^\star,j)) >  \hat f.
    \]
    If $j < k$, we claim that $r^\star_{jk} \leq (1-\delta)d(j,\SS)$ if $j \in I$ and $r^\star_{jk} \leq \alpha^\star_j \leq (1+\epsilon^2)\theta$ if $j \in A$.
    These hold because $r^\star_{jk} \leq r_{jk}$, and then for $j \in I$ we have $r_{jk} = (1-\delta) d(j,\SS)$ and for $j \in A$ we have $r_{jk} = \alpha_j^\star$.
    Then, $i^\star$ is a candidate since
    \begin{align*}
        & \sum_{j\in A\cap \Dgood}\beta((1+\epsilon^2)\theta, (1- \delta/2)d(i^\star,j)) + \sum_{j\in I\cap \Dgood}\beta((1-\delta)\dist(j,\SS), (1-\delta/2)d(i^\star,j)) \\
        \ge & \sum_{\substack{j \ge k\\ j \in \Dgood}} \beta(\alpha^\star_k, d(i^\star,j)) + \sum_{\substack{j < k\\ j \in \Dgood}} \beta(r^\star_{jk}, d(i^\star,j)) > \hat f
    \end{align*}
    and all omitted terms are non-negative.
    Here, the inequality holds by the properties of $\beta$, monotonicity and the scaling.
    So $k \in \Dgood$ becomes inactive in a sequence of type $2$ while $i^\star$ is a candidate in a step in which a regular facility is opened. Thus, by Lemma \ref{lem:freezing-robust} we know that all clients in $A \cap \Dgood$ become inactive in the same atomic step as $k$.
    Let $i^\star_0$ be the facility that is opened in this atomic step with respect to a superset $\tilde{\SS} \supseteq \SS^{(k)}$.

    Note that, because $\alpha_k^\star > \theta$ we have $k \in \tilde{A}$.
    Let $\tilde{I} = \{j \in D \mid \theta \geq (1-\delta)d(j,\tilde \SS)\}$ and $\tilde{A} = D \setminus \tilde{I}$.
    We have
    \begin{align*}
    \hat f &\ge \sum_{j \in \tilde A\cap \Dgood} \beta(\tau_j, d(i^\star,j)) + \sum_{j \in \tilde  I\cap \Dgood} \beta((1-\delta)d(j, \tilde  \SS), d(i^\star,j))\,.
    \end{align*}
    We also have that $\alpha_j^\star \geq \alpha_k^\star > \theta$ for all $j \geq k$ in $\Dgood$ since all clients become inactive in the same atomic step and are hence ordered by $\alpha^\star$. Because of this, all these $j$ are also in $\tilde A$ and satisfy $\tau_j = \alpha_j^\star$.
    Also note that, for $j < k$ such that $j \in \tilde A$, we have $r^\star_{j,k} = r_{j,k} = \zeta_{j,k} = \alpha_j^\star$.
    Then
    \[\hat f \geq \sum_{\substack{j \ge k\\ j \in \Dgood}} \beta(\alpha^\star_k, d(i^\star,j)) + \sum_{\substack{j < k\\ j \in \tilde A \cap \Dgood}} \beta(r^\star_{j,k}, d(i^\star,j)) + \sum_{\substack{j < k\\ j \in \tilde I \cap \Dgood}} \beta((1-\delta)d(j, \tilde \SS), d(i^\star,j))\,.\]

    Let us reason now about the $r^\star_{jk}$ in the case $j \in \tilde I$.
    If $\tilde{\SS} = \SS$, then we have that $r^\star_{jk} = r_{jk} = (1-\delta) d(j, \SS)$ and the bound will follow easily, as shown below.
    Else, $\tilde{\SS}$ is a strict superset of $\SS$ and event \specialevent\ happens.
    In this case also $k \geq \kfirst$ since $\alpha_k^\star > \theta$.
    Consider then two cases.
    First, suppose $d(j, i_0^\star) \leq \theta / 3$.
    Then $r_{j,k}^\star = \zeta_{j,k} = (1-\delta)d(j, \tilde \SS)$.
    Second, suppose $d(j, i_0^\star) > \theta / 3$.
    Then $r_{j,k}^\star \leq \max((1-\delta) d(j, \tilde \SS), d(j,i^\star))$.
    The only worry is the case $d(j, i^\star) \geq (1-\delta) d(j,\tilde \SS)$.
    But then $\beta((1-\delta)d(j,\tilde \SS), d(j, i^\star)) = 0$ and $\beta(r^\star_{jk}, d(j, i^\star)) \leq \beta(d(j,i^\star), d(j,i^\star))=0$, so we have trivially that $\beta((1-\delta) d(j,\tilde{\SS}), d(i^\star, j)) \geq \beta(r^\star_{jk}, d(i^\star, j))$.
    Hence, in all these cases,
    \begin{align*}
    \hat f
    &\geq \sum_{\substack{j \ge k\\ j \in \Dgood}} \beta(\alpha^\star_k, d(i^\star,j)) + \sum_{\substack{j < k\\ j \in \Dgood}} \beta(r^\star_{jk}, d(i^\star,j))\,.
    \end{align*}
    \paragraph{Increase of $\theta$}
    In this case $k$ becomes inactive when we increase $\theta$ to $(1+\eps^2)\theta$.
    Let $\SS$ be the set of open facilities at the end of that atomic step, and note that $\SS=\SS^{(k)}$ since no facility is added.
    Let $A$ and $I$ be the sets of active and inactive clients w.r.t. $\SS$ and $(1+\eps^2)\theta$.
    By Lemma \ref{lem:overbidding-robust},
    \begin{align*}
    \hat f
    &\ge \sum_{j \in A\cap \Dgood} \beta((1+\eps^2)\theta, d(i^\star,j)) + \sum_{j \in I\cap \Dgood} \beta((1-\delta)d(j, \SS), d(i^\star,j))\\
    &\ge \sum_{\substack{j \ge k\\ j \in \Dgood}} \beta(\alpha^\star_k,  d(i^\star,j)) + \sum_{\substack{j < k\\ j \in \Dgood}} \beta(r^\star_{jk}, d(i^\star,j)).
    \end{align*}
    Let us verify that the last inequality holds. For $j \geq k$, if $j \in A$ then $(1+\eps^2)\theta \geq \alpha_k^\star$, and if $j \in I$ then $j$ became inactive in the same atomic step as $k$ so $(1-\delta)d(j,\SS) = \alpha_j^\star \geq \alpha_k^\star$, where we used that in this case the clients are ordered by $\alpha^\star$.
    Next, if $j < k$ then $j \in I$ since $k \in I$. 
    If $j$ became inactive in the same atomic step as $k$, we have $r_{jk} = \alpha_j^\star = (1-\delta)d(j, \SS)$.
    Else, by definition $r^\star_{jk} = (1-\delta)d(j, \SS)$.
\end{proof}

\begin{lemma}
\label{lem:fr-r-mono-robust}
    For all $1 \le j \le k \le n$ with $j, k \in \Dgood$ we have $r^\star_{j,k} \ge r^\star_{j,q(k)}$.
\end{lemma}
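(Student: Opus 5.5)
The plan is a case analysis on whether \specialevent\ happens and on where $k$ and $q(k)$ sit relative to $\kfirst$. If \specialevent\ does not happen, then $r^\star = r$ on all relevant indices. The claim then follows from Observation~\ref{obs:r-mono-robust} iterated along $k, k+1, \dots, q(k)$, since $r_{j,k} \ge r_{j,k+1} \ge \dots \ge r_{j,q(k)}$. The same argument covers the case where \specialevent\ happens but $q(k) < \kfirst$, because there both $r^\star_{j,k} = r_{j,k}$ and $r^\star_{j,q(k)} = r_{j,q(k)}$.

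Now assume \specialevent\ happens and $q(k) \ge \kfirst$. If $q(k) = n+1$, then $r^\star_{j,n+1} = 0$ and the claim is immediate from non-negativity of all entries of $r^\star$ ($\zeta$, $r$ and distances are all non-negative). So the remaining cases have $q(k) \in \Dgood$ with $\kfirst \le q(k) \le \klast$. Every client of $\Dgood$ with index in $[\kfirst,\klast]$ becomes inactive in the same atomic step as $\klast$, so $\SS^{(k')}$, $\tilde\SS^{(k')}$ and $\theta$ are the same for all $k' \in [\kfirst, \klast]$.

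First take $k < \kfirst \le q(k)$. Then $r^\star_{j,k} = r_{j,k} \ge r_{j,q(k)}$ by Observation~\ref{obs:r-mono-robust}. In both modified branches we have $r^\star_{j,q(k)} \le r_{j,q(k)}$: the second branch gives $\zeta_{j,q(k)} \le r_{j,q(k)}$ by Observation~\ref{obs:rzdom}, and the third branch is a minimum with $r_{j,q(k)}$.

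Next take $\kfirst \le k < q(k) \le \klast$, so that both indices lie in the frozen block. The branch used (chosen by $d(j,i_0^\star)$ versus $\theta/3$) is the same for both indices, since it does not depend on $k$. The main step is to check that $\zeta_{j,k} \ge \zeta_{j,q(k)}$ and $r_{j,k} \ge r_{j,q(k)}$ in this block; the third-branch expression $\min(r, \max(\zeta, d(j,i^\star)))$ is monotone in both $r$ and $\zeta$, so these two inequalities finish that case. For $r$ this is Observation~\ref{obs:r-mono-robust}. For $\zeta$, note that activity with respect to $\tilde\SS^{(k)} = \tilde\SS^{(q(k))}$ at the beginning of the step is the same for both indices, so it suffices to compare $\tau_j^{(k)}$ with $\tau_j^{(q(k))}$. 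The value $\tau_j^{(\cdot)}$ is $\alpha_j^\star$ or $\theta$ according to whether $j$ becomes inactive in that same atomic step, and this condition does not change between $k$ and $q(k)$. Hence $\zeta_{j,k} = \zeta_{j,q(k)}$, and in fact $r^\star_{j,k} = r^\star_{j,q(k)}$.

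I expect the delicate point to be this last identification. One has to make sure the superset $\tilde\SS^{(k)}$ fixed for the analysis is really shared by all clients freezing in that step; it is, since it is attached to the opened facility. One also has to check that the boundary between $j = k$ and $j < k$ causes no trouble: when $j = k$, $r_{j,j} = \alpha_j^\star$, and this is consistent with $\zeta_{j,j} = \tau_j^{(j)} = \alpha_j^\star$ for an active $j$. If this breaks anywhere, the fallback is to go back to the definitions of $\zeta$ and $\tau$ and redo the comparison directly there.
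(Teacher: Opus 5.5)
Your proposal is correct and uses the same case split as the paper. The cases are: no \specialevent{} or $q(k)<\kfirst$, handled by monotonicity of $r$; the boundary case $k<\kfirst\le q(k)$ (which forces $q(k)=\kfirst$), handled by $r^\star_{j,\kfirst}\le r_{j,\kfirst}$; equality $r^\star_{j,k}=r^\star_{j,q(k)}$ inside the same-atomic-step block $[\kfirst,\klast]$; and $r^\star_{j,n+1}=0$ at the end. The only difference is that you cite Observations~\ref{obs:r-mono-robust} and~\ref{obs:rzdom}, whereas the paper re-derives the monotonicity of $r$ directly from its definition.
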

\begin{proof}
    Let us consider first the case when \specialevent\ does not happen, or it does happen but $q(k) < \kfirst$.
    We consider different cases depending on when $j$ becomes inactive. First, consider the case where $j$ becomes inactive strictly before $k$. Then $r^\star_{jk} = (1-\delta)d(j,\SS^{(k)}) \ge (1-\delta)d(j,\SS^{(q(k))}) = r^\star_{j q(k)}$ since $d(j, \SS)$ is monotonically decreasing with $\SS$ and $\SS^{(k)} \subseteq \SS^{(q(k))}$. Next, consider the case where $j$ becomes inactive together with $k$ but strictly before $q(k)$ then $r^\star_{jk} = \alpha^\star_j \ge (1-\delta)d(j,\SS^{(q(k))}) = r^\star_{jq(k)}$. In the remaining case when $j$, $k$ and $q(k)$ become inactive at the same time we have $r^\star_{jk} = r_{j q(k)} = \alpha^\star_j$.  

    Next, suppose \specialevent\ happens and $q(k) \geq \kfirst$.
    First, consider the border case $q(k) = \kfirst$.
    Then we note that $r^\star_{j,k} = r_{j, k}$, $r_{j, k} \geq r_{j, \kfirst}$, and $r_{j, \kfirst} \geq r^\star_{j,\kfirst}$, so $r^\star_{j,k} \geq r^\star_{j,\kfirst}$ is true.
    For $\kfirst \leq k \leq \klast - 1$, we have $r^\star_{j,k} = r^\star_{j,q(k)}$ because the clients become inactive in the same atomic step, so the claim is also true.
    Finally, $r^\star_{j,n+1}=0$, so $r^\star_{j,\klast} \geq r^\star_{j,n+1}$ is also true.
\end{proof}

\begin{lemma}
\label{lem:fr-triangle-robust}
    For all $1 \le j < k \le n$ with $j, k \in \Dgood$ we have $\alpha^\star_k \le r^\star_{jk} + d_j + d_k$.
\end{lemma}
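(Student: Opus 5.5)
The plan is to reduce to Lemma~\ref{lem:fr-triangle} and to the non-robust argument whenever $r^\star_{jk}=r_{jk}$. We then handle the modified entries created by \specialevent\ separately. Here $d_j=d(j,i^\star)$ and $d_k=d(k,i^\star)$.

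\textbf{Step 1: the case $r^\star_{jk}=r_{jk}$.} This covers the case where \specialevent\ does not happen, as well as $k<\kfirst$. We redo the proof of Lemma~\ref{lem:fr-triangle} in the robust setting.

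If $j$ becomes inactive strictly before $k$, then $r_{jk}=(1-\delta)d(j,i_0)$ for some $i_0\in\SS^{(k)}$. Moreover $\alpha^\star_k\le(1-\delta)d(k,\SS^{(k)})$, because $k$ is active w.r.t.\ $\SS^{(k)}$ at the beginning of that atomic step. This holds for type 1 steps ($\theta$), for type 2 steps ($\tau_k\le(1-\delta)d(k,\SS)$, or $\theta$ if $k$ is inactive w.r.t.\ the superset), and for $\theta$-increase steps. The triangle inequality through $i^\star$ then gives the claim.

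If $j$ and $k$ become inactive in the same atomic step, then $r_{jk}=\alpha^\star_j$, and we split by step type. For a type 1 step or a free facility, both values equal $\theta$. For a $\theta$-increase, we argue as in Lemma~\ref{lem:fr-triangle} with the facility $j$ connects to. For a type 2 step, the values are $\tau$'s or $\theta$ (Observation~\ref{obs:tau}), so condition~(v) of Definition~\ref{def:openable-mod-eta-2} gives $\alpha^\star_k\le\alpha^\star_j+d(j,k)$ whenever both lie in $\tilde A$. If $k\in\tilde A$ but $j\in\tilde I$, then $\alpha^\star_j=\theta$ while $\tau_k\le(1-\delta)d(k,\tilde\SS)\le(1-\delta)(d(k,j)+d(j,\tilde\SS))\le d(j,k)+\theta$. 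If $k$ is inactive w.r.t.\ $\tilde\SS$, then $\alpha^\star_k=\theta\le\alpha^\star_j$.

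\textbf{Step 2: \specialevent\ happens and $k\ge\kfirst$.} Then $k$ becomes inactive in the \klast\ step, $\alpha^\star_k>\theta$, and $d(k,i_0^\star)\le\eps\theta$. Since $\alpha^\star_k\ge\rho d_k$ and $\alpha^\star_k\le(1+\eps^2)\theta$, Lemma~\ref{lem:freezing-robust} gives $d(i^\star,i_0^\star)\le\delta\theta$. We split on the definition of $r^\star_{jk}$.

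\emph{Case $d(j,i_0^\star)\le\theta/3$.} Here $r^\star_{jk}=\zeta_{jk}$. If $j\in\tilde A$, then $\zeta_{jk}$ is $\tau_j$ or $\theta$ and we use (v) as in Step 1. Otherwise $\zeta_{jk}=(1-\delta)d(j,\tilde\SS^{(k)})$, and since $k\in\tilde A$ we have $\tau_k\le(1-\delta)d(k,\tilde\SS)\le\zeta_{jk}+d(j,k)$.

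\emph{Case $d(j,i_0^\star)>\theta/3$.} Here $r^\star_{jk}=\min(r_{jk},\max(\zeta_{jk},d_j))$. If the minimum is $r_{jk}$, Step 1 applies. If it is $\zeta_{jk}$, we are in the previous subcase. So the remaining case is $r^\star_{jk}\ge d_j$. Using $d_j\ge d(j,i_0^\star)-\delta\theta\ge\theta/3-\delta\theta$, we get $r^\star_{jk}+d_j+d_k\ge 2d_j\ge(2/3-2\delta)\theta$. This does not reach $\alpha^\star_k\approx\theta$, so we need one more input. We would use $d_j\ge d(j,i_0^\star)-\delta\theta$ together with $d_k\ge d(k,i^\star)$, where $d(k,i^\star)\ge\alpha^\star_k/\rho$ (as $k\in\Dgood$), so
\[
r^\star_{jk}+d_j+d_k\ \ge\ 2\Big(\tfrac{\theta}{3}-\delta\theta\Big)+\tfrac{\alpha^\star_k}{\rho}.
\]
For $\rho\le2$ and $\alpha^\star_k\le(1+\eps^2)\theta$ this is at least $\alpha^\star_k$ once $\delta,\eps$ are small, since $2/3+1/2>1$. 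If this estimate proves too tight, the fallback is to note $\alpha^\star_j\ge\rho d_j$ together with the monotonicity of $\tau$ within the same step.

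\textbf{Main obstacle.} The hardest part is this last subcase, where $r^\star_{jk}$ was artificially lowered to $d(j,i^\star)$ and no metric relation links it to the facilities defining $\alpha^\star_k$. It must be absorbed by the $\Dgood$ lower bound $\alpha^\star_k\ge\rho d_k$ together with the closeness $d(i^\star,i_0^\star)\le\delta\theta$, and I expect the constant bookkeeping there to be delicate.
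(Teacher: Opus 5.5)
Your Step~1 and the first subcase of Step~2 are sound, but the last subcase of Step~2, where $r^\star_{jk}=d_j$, does not go through. The inequality $d_k\ge\alpha^\star_k/\rho$ is reversed: $k\in\Dgood$ means $\alpha^\star_k\ge\rho\,d_k$, which is an \emph{upper} bound on $d_k$. In fact, here $d_k\le d(k,i_0^\star)+d(i_0^\star,i^\star)\le(\eps+\delta)\theta$, so $d_k$ is negligible. From $d_j\ge\theta/3-\delta\theta$ alone you only get $r^\star_{jk}+d_j+d_k$ of about $2\theta/3$, which falls short of $\alpha^\star_k>\theta$. No bookkeeping of constants fixes this, and $\rho\le 2$ is not a hypothesis anyway (only $\rho\ge1.01\gamma$). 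The fallback does not help either: $\alpha^\star_j\ge\rho d_j$ again bounds $d_j$ from above, and $j$ need not become inactive in the same step as $k$.

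The missing observation is that $r^\star_{jk}\ge\zeta_{jk}$ in every branch of the definition under \specialevent. This holds because $r_{jk}\ge\zeta_{jk}$ by Observation~\ref{obs:rzdom}, and $\max(\zeta_{jk},d_j)\ge\zeta_{jk}$.

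\begin{itemize}
\item If $j$ is inactive w.r.t.\ $\tilde\SS^{(k)}$, your first-subcase bound $\alpha^\star_k=\tau_k\le(1-\delta)d(k,\tilde\SS^{(k)})\le\zeta_{jk}+d_j+d_k$ applies. It never used $d(j,i_0^\star)\le\theta/3$, so it closes the remaining case immediately.
\item If $j$ is active w.r.t.\ $\tilde\SS^{(k)}$, then $r_{jk}=\zeta_{jk}=\alpha^\star_j$. Hence $r^\star_{jk}=\alpha^\star_j$ in both branches, and condition~(v) applies.
\end{itemize}

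This is exactly the paper's argument. For \specialevent\ with $k\ge\kfirst$, it splits only on whether $j$ is active or inactive w.r.t.\ $\tilde\SS^{(k)}$. It makes no case distinction on $d(j,i_0^\star)$ and does not appeal to Lemma~\ref{lem:freezing-robust}.
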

\begin{proof}
    Let us consider first the case when \specialevent\ does not happen, or it does happen but $k < \kfirst$.
    First, consider the case when $j$ becomes inactive strictly before $k$.
    Then $r^\star_{jk} = (1-\delta)d(j,\SS^{(k)}) = (1-\delta)d(j,i_0)$ for some facility $i_0 \in \SS^{(k)}$.
    If the client $k$ becomes inactive either in a type $1$ sequence or type $2$ sequence at the opening of a free facility or with $k$ inactive with respect to the relevant superset, then $\alpha_k^\star = \theta \leq (1-\delta) d(k,\SS^{(k)}) \leq (1-\delta)d(k, i_0)$.
    Else, in a type $2$ sequence at the opening of a regular facility with $k$ active with respect to the relevant superset, $\alpha_k^\star = \tau_k \leq (1-\delta)d(k,\SS^{(k)}) \leq (1-\delta) d(k, i_0)$.
    Finally, when $\theta$ increases to $(1+\eps^2)\theta$, $\alpha_k^\star = (1-\delta)d(k,\SS) \leq (1-\delta) d(k, i_0)$.
    Then, by triangle inequality,
    \[
        \alpha^\star_k \le (1-\delta)d(k,i_0) \le (1-\delta)(d(k, i^\star) + d(i^\star, j) + d(j, i_0)) \le r^\star_{jk} + d_j + d_k.
    \]
    In the remaining case $k$ and $j$ become inactive at the same time and $r^\star_{jk} = \alpha^\star_j$.
    If $k$ and $j$ become inactive in the execution of a sequence of type $1$, or of a sequence of type $2$ when a free facility is opened, then $\alpha^\star_k = \alpha^\star_j = \theta$.
    Else, if $k$ and $j$ become inactive in the execution of a sequence of type $2$ then we consider a few cases: If $k$ and $j$ are active with respect to $\tilde{\SS}^{(k)}$, then $\alpha^\star_k \le \alpha^\star_j + d(k,j) \le \alpha^\star_j + d_k + d_j$ by condition $(v)$ in Definition~\ref{def:openable-mod-eta-2}.
    If $k$ is active with respect to $\tilde{\SS}^{(k)}$ but $j$ is not, then $\alpha_k^\star \leq (1-\delta)d(k, \tilde{\SS}^{(k)})$ by (i) and (ii) in Definition~\ref{def:openable-mod-eta-2} together with $\theta < (1-\delta)d(k, \tilde{\SS}^{(k)})$, and $\alpha_j^\star = \theta \geq (1-\delta)d(j, \tilde{\SS}^{(k)})$, so
    \[
        \alpha^\star_k \le (1-\delta)d(k,\tilde \SS^{(k)}) \le (1-\delta)(d(k, i^\star) + d(i^\star, j) + d(j, \tilde \SS^{(k)})) \le r^\star_{jk} + d_j + d_k.
    \]
    Lastly, if $k$ is inactive with respect to $\tilde{\SS}^{(k)}$ then $\alpha^\star_k = \theta$, and we have $r^\star_{jk} = \alpha^\star_j \geq \theta$ so the inequality is trivially satisfied.
    
    Otherwise, $k$ and $j$ become inactive when $\theta$ increases to $(1+\eps^2)\theta$ and 
    \[
        \alpha^\star_k = (1-\delta)d(k,\SS^{(k)}) \le (1-\delta)d(k,i_0) \le (1-\delta)(d(k,i^\star) + d(i^\star,j) + d(j,i_0)) \le \alpha^\star_j + d_k + d_j
    \]
    where $i_0$ is the facility that $j$ connects to, $\alpha^\star_j = (1-\delta)d(j,\SS^{(k)}) = (1-\delta)d(j,i_0)$. 

    Next, suppose \specialevent\ happens and $k \geq \kfirst$.
    Because $\alpha_k^\star > \theta$, we know that $k$ is active w.r.t. the superset $\tilde{\SS}^{(k)}$.
    Suppose first that $j$ is inactive w.r.t. $\tilde{\SS}^{(k)}$.
    Then $r^\star_{jk} \geq (1-\delta) d(j, \tilde{\SS}^{(k)})$.
    We also have $\alpha_k^\star \leq (1-\delta) d(k, \tilde{\SS}^{(k)})$ by condition (ii) in Definition~\ref{def:openable-mod-eta-2}, so
    \[\alpha_k^\star \leq (1-\delta) (d(j,i^\star) + d(k,i^\star) + d(j,\tilde{\SS}^{(k)})) \leq r^\star_{jk} + d_j + d_k\,.\]
    Else, $j$ is active w.r.t. $\tilde{\SS}^{(k)}$.
    Then $r^\star_{jk} = \alpha^\star_j$, and the claim follows because of the triangle inequality condition in the definition of $(2,\eta)$-openable: $\alpha^\star_k \leq \alpha^\star_j + d(j, k) \leq r^\star_{jk} + d(j, i^\star) + d(k, i^\star)$.
\end{proof}

\begin{lemma}
\label{lem:fr-r-metric-robust}
    For all $1 \le j,\ell < k \le n + 1$ with $j, \ell, k \in \Dgood \cup \{n+1\}$ we have $r^\star_{jk} \le r^\star_{\ell k} + d_j + d_\ell$.
\end{lemma}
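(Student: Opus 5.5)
We split the proof of Lemma~\ref{lem:fr-r-metric-robust} according to whether \specialevent\ happens and, if it does, whether $k \geq \kfirst$.

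\textbf{Case 1: $r^\star_{\cdot k} = r_{\cdot k}$.} If \specialevent\ does not happen, or it happens but $k<\kfirst$ and $k \neq n+1$, then $r^\star_{\cdot k} = r_{\cdot k}$. In this case we follow the proof of Lemma~\ref{lem:fr-r-metric} essentially verbatim.
\begin{itemize}
\item If $j$ and $\ell$ both became inactive strictly before $k$, then both $r$-values are $(1-\delta)$ times distances to $\SS^{(k)}$. The triangle inequality through $i^\star$ gives the claim.
\item If $\ell$ becomes inactive with $k$ and $j$ strictly before, we split on how $j$ became inactive. In a type~$2$ atomic step with a regular facility opened w.r.t.\ a superset $\tilde\SS$, we use $(1-\delta)d(j,\SS^{(k)})\le \tau_j$ when $j\in\tilde A$, together with condition~(v) of Definition~\ref{def:openable-mod-eta-2}. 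If instead $j\in\tilde I$, then $(1-\delta)d(j,\SS^{(k)})$ may exceed $\theta$, so we must argue through $\tilde\SS$. Since $r_{jk}$ is measured against $\SS^{(k)}$, which contains the facility opened when $j$ froze, we may use $r_{jk}\le \alpha_j^\star$-type bounds, i.e.\ $r_{jk}\le r_{jj}$ by Observation~\ref{obs:r-mono-robust}. Then $\alpha^\star_j\le\alpha^\star_\ell$ holds when $j$ froze in a type~$1$ or $\theta$-increase step.
\item The case where $j$ becomes inactive together with $k$ reduces, exactly as before, to Lemma~\ref{lem:fr-triangle-robust} via $r^\star_{\ell k}=r^\star_{\ell j}$ when $\ell<j$, or to ordering by $\alpha^\star$ when $\ell\ge j$.
\item For $k=n+1$ without \specialevent, the distance-to-$\SS^\star$ argument applies.
\end{itemize}

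\textbf{Case 2: \specialevent\ with $k=n+1$.} Both sides are zero, so the inequality is trivial.

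\textbf{Case 3: \specialevent\ with $\kfirst\le k\le\klast$.} All indices $\ge\kfirst$ freeze in the same atomic step, so it suffices to compare against the superset $\tilde\SS=\tilde\SS^{(\klast)}$. By Observation~\ref{obs:rzdom}, the quantities $\zeta_{jk}$ satisfy $\zeta_{jk}\le r_{jk}$. They are either $\tau$-values of clients active w.r.t.\ $\tilde\SS$, or $(1-\delta)d(\cdot,\tilde\SS)$. For these we have a metric inequality $\zeta_{jk}\le\zeta_{\ell k}+d_j+d_\ell$, proved like Lemma~\ref{lem:fr-triangle-robust}: use condition~(v) when both clients are active, and when one is inactive use $\tau\le(1-\delta)d(\cdot,\tilde\SS)$ from~(ii) and the triangle inequality through $i^\star$.

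We then split into the four subcases according to whether $d(j,i_0^\star)$ and $d(\ell,i_0^\star)$ are at most $\theta/3$.
\begin{itemize}
\item \emph{Both close:} $r^\star=\zeta$ for both, and the $\zeta$ inequality applies directly.
\item \emph{Both far:} The map $x\mapsto\min(r_{jk},\max(x,d_j))$ behaves well. We have $\max(\zeta_{jk},d_j)\le\max(\zeta_{\ell k},d_\ell)+d_j+d_\ell$. Taking the minimum with $r_{jk}$ and using $r_{jk}\le r_{\ell k}+d_j+d_\ell$ (from Case~1, since $r$ satisfies the metric inequality) gives $r^\star_{jk}\le r^\star_{\ell k}+d_j+d_\ell$.
\item \emph{Mixed:} If $j$ is far and $\ell$ close, then $r^\star_{jk}\le\max(\zeta_{jk},d_j)\le\zeta_{\ell k}+d_j+d_\ell$, because $d_j\le d_j+d_\ell+\zeta_{\ell k}$. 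If $j$ is close and $\ell$ far, then $r^\star_{jk}=\zeta_{jk}\le\zeta_{\ell k}+d_j+d_\ell$. When the minimum for $\ell$ is attained at $r_{\ell k}$, we instead use $\zeta_{jk}\le r_{jk}\le r_{\ell k}+d_j+d_\ell$.
\end{itemize}

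The main obstacle is the Case~1 subcase in which $j$ froze in a type~$2$ step while inactive only w.r.t.\ a strict superset. There the bound $r_{jk}\le\alpha^\star_j$ must be checked carefully through Observation~\ref{obs:r-mono-robust} and the definition of $\alpha^\star_j=\theta$ in that situation.
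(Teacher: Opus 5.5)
Your top-level split matches the paper's, and the proof goes through, but one step needs repair: the subcase you flag as the main obstacle is handled with the wrong tool. The split is: no \specialevent\ or $k<\kfirst$ handled as in Lemma~\ref{lem:fr-r-metric}; $k=n+1$ under \specialevent\ trivial since all $r^\star_{\cdot,n+1}=0$; $\kfirst\le k\le\klast$ handled through $\tilde\SS=\tilde\SS^{(\klast)}$.

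\emph{The subcase where $j$ freezes strictly before $\ell$.} Condition~(v) of Definition~\ref{def:openable-mod-eta-2} only relates clients that are active with respect to the superset used at $j$'s atomic step. The client $\ell$, which freezes later, may be active w.r.t.\ the actual open set but inactive w.r.t.\ that superset, so there need not be any $\tau_\ell$ to compare with. Conversely, your worry that $(1-\delta)d(j,\SS^{(k)})$ may exceed $\theta$ when $j\in\tilde I$ is unfounded. Becoming inactive is defined with respect to the actual open set and the time $\theta'$ at the end of the atomic step, so $(1-\delta)d(j,\SS^{(k)})\le\theta'$. Any $\ell$ that freezes at a strictly later atomic step has $\alpha^\star_\ell\ge\theta'$. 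This single bound, $r_{jk}\le\theta'\le\alpha^\star_\ell$, covers type~$1$ steps, type~$2$ steps (free or regular, $j\in\tilde A$ or $j\in\tilde I$) and $\theta$-increase steps uniformly. It is what the paper uses, and it makes condition~(v) unnecessary here.

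\emph{The range $\kfirst\le k\le\klast$.} Here your route differs from the paper's. The paper splits directly on whether $j$ and $\ell$ are active w.r.t.\ $\tilde\SS$. It uses $r^\star=\alpha^\star=\tau$ for active clients, $r^\star_{jk}\le\max((1-\delta)d(j,\tilde\SS),d_j)$ for inactive $j$, and $r^\star_{\ell k}\ge(1-\delta)d(\ell,\tilde\SS)$ for inactive $\ell$. You instead factor the argument into two parts:
\begin{itemize}
\item a metric inequality for $\zeta$, which is exactly those four active/inactive cases; the case $j\in\tilde I$, $\ell\in\tilde A$ needs $(1-\delta)d(j,\tilde\SS)\le\theta\le\tau_\ell$, not~(ii);
\item a propagation through the close/far branches of the definition of $r^\star$.
\end{itemize}

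That separation is clean, since it isolates the only place the openability conditions enter. However, your ``both far'' step also relies on the metric inequality for the unmodified $r$ at $k\ge\kfirst$, which you do not justify. It does hold: your Case~1 argument does not use the event, and Lemma~\ref{lem:fr-triangle-robust} transfers from $r^\star$ to $r$ because $r^\star\le r$. But it is avoidable. By Observation~\ref{obs:rzdom}, both branches satisfy $\zeta_{jk}\le r^\star_{jk}\le\max(\zeta_{jk},d_j)$, so
\[
r^\star_{jk}\le\max(\zeta_{jk},d_j)\le\zeta_{\ell k}+d_j+d_\ell\le r^\star_{\ell k}+d_j+d_\ell .
\]
This collapses your four close/far subcases into one line, and it is in effect what the paper's case analysis computes.
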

\begin{proof}
    Let us consider first the case when \specialevent\ does not happen, or it does happen but $k < \kfirst$.
    If $k = n + 1$, we have $r^\star_{jk} = (1-\delta)d(j, \SS^\star)$ and $r^\star_{\ell k} = (1-\delta) d(\ell, \SS^\star)$, and then 
    \[r^\star_{jk} \leq (1-\delta)(d(j, i^\star) + d(\ell, i^\star) + d(\ell, \SS^\star)) \leq r^\star_{\ell k} + d_j + d_\ell\,,\]
    so assume $k \leq n$.
    
    First, consider the case where $j$ becomes inactive strictly before $k$. Then $r^\star_{jk} = (1-\delta)d(j,\SS^{(k)}) = (1-\delta)d(j,i_1)$ for some open facility $i_1$. If $\ell$ becomes inactive strictly before $k$, then $r^\star_{\ell k} = (1-\delta)d(\ell,\SS^{(k)}) = (1-\delta)d(\ell,i_2)$ for some open facility $i_2$, and because $i_1$ minimizes $(1-\delta)d(j,\SS^{(k)})$, 
    \[
        r^\star_{jk} = (1-\delta)d(j,i_1) \le (1-\delta)d(j,i_2) \le (1-\delta)(d(j,i^\star) + d(i^\star,\ell) + d(\ell,i_2)) \le r^\star_{\ell k} + d_j + d_\ell.
    \]
    Now if $\ell$ becomes inactive with $k$ then $r^\star_{\ell k} = \alpha^\star_\ell$. 
    Assume $j$ becomes inactive in the execution of a sequence of type $2$ when opening a regular facility $i_0$ and when it is active with respect to the superset with respect to which $i_0$ is opened.
    Suppose $i_0$ opens with bids $(\tau_j)_{j \in A}$.
    Let $\theta_0$ be the time at the atomic step in which $i_0$ is opened.
    Because $j$ becomes inactive when $i_0$ is opened, we must have $(1-\delta) d(j, i_0) \leq \theta_0$.
    Because $i_0$ is still open at the time when $k$ becomes inactive, we also have $r^\star_{jk} = (1-\delta)d(j, \SS^{(k)}) \leq (1-\delta)d(j,i_0)$.
    Hence $r^\star_{jk} \leq \theta_0$, but we must have $\alpha^\star_\ell \geq \theta_0$, so $r^\star_{jk} \leq r^\star_{\ell k}$.
     
    Otherwise $j$ becomes inactive in the execution of a sequence of type $1$, or of type $2$ when a free facility was opened or $j$ was inactive with respect to the relevant superset, or when $\theta$ increases to $(1+\eps^2)\theta$, and
    \[
        r^\star_{jk} = (1-\delta)d(j,\SS^{(k)}) \le \alpha^\star_j \le \alpha^\star_\ell = r^\star_{\ell k}
    \]
    since $\ell$ becomes inactive strictly after $j$: this implies that in a type $1$ sequence or in a type $2$ sequence in which $j$ was inactive with respect to the relevant superset $\alpha^\star_j$ is equal to the $\theta$ at that step and when $\theta$ increases $\alpha^\star_j$ is equal to at most $(1+\eps^2)\theta$ for the $\theta$ at that step, but then $\alpha^\star_\ell$ is in both cases at least as large.
    
    Next, consider the case where $j$ becomes inactive with $k$. Then $r^\star_{jk} = \alpha^\star_j$. If $\ell \ge j$ then $j$, $\ell$ and $k$ become inactive in the same atomic step and $r^\star_{\ell k} = \alpha^\star_\ell$. By the ordering of $\alpha^\star$ we know that 
    \[
        r^\star_{jk} = \alpha^\star_j \le \alpha^\star_\ell = r^\star_{\ell k}
    \]
    Thus, we can assume that $\ell < j$. Observe that $r^\star_{\ell k} = r^\star_{\ell j}$ since $k$ and $j$ become inactive in the same atomic step. Notice that now the proof follows from Lemma $\ref{lem:fr-triangle-robust}$ as $r^\star_{jk} = \alpha^\star_{j}$ and $r^\star_{\ell k} = r^\star_{\ell j}$.

    Next, suppose \specialevent\ happens and $k \geq \kfirst$.
    We also note that, in the case $k=n+1$ the claim holds trivially, because if \specialevent\ happens then $r^\star_{j,n+1} = 0$ for all $j$, so assume $k \leq n$.
    
    \textbf{Case 1.} If both $j$ and $\ell$ are active in the superset $\tilde{\SS}^{(k)}$, we have $r^\star_{jk} = \alpha_j^\star$ and $r^\star_{\ell k} = \alpha_\ell^\star$, and then the inequality $\alpha_j^\star \leq \alpha_\ell^\star + d_j + d_\ell$ follows because of the triangle inequality condition in the definition of $(2,\eta)$-openable: $\alpha^\star_j \leq \alpha^\star_\ell + d(j, \ell) \leq \alpha^\star_\ell + d(j, i^\star) + d(\ell, i^\star)$.
    
    \textbf{Case 2.} If $j$ is active in the superset $\tilde{\SS}^{(k)}$ but $\ell$ is not, we have $r^\star_{jk} = \alpha_j^\star$ and $r^\star_{\ell k} \geq (1-\delta)d(\ell, \tilde{\SS}^{(k)})$.
    Since $j$ is active with respect to $\tilde{\SS}^{(k)}$, we have $\alpha_j^\star \leq (1-\delta)d(j, \tilde{\SS}^{(k)})$ by (i) and (ii) in Definition~\ref{def:openable-mod-eta-2} together with $\theta < (1-\delta)d(j, \tilde{\SS}^{(k)})$.
    Then 
    \[r^\star_{jk} \leq (1-\delta)(d(j, i^\star) + d(\ell, i^\star) + d(\ell, \tilde \SS^{(k)})) \leq r^\star_{\ell k} + d(j, i^\star) + d(\ell, i^\star)\,.\]
    
    \textbf{Case 3.} If $j$ is inactive in the superset but $\ell$ is active, we have $r_{jk}^\star \leq \max((1-\delta)d(j, \tilde \SS^{(k)}), d(i^\star, j))$ and $r^\star_{\ell k} = \alpha^\star_\ell$.
    If $r^\star_{jk} \leq (1-\delta) d(j, \tilde \SS^{(k)})$, we also note that because $j$ is inactive we have $\theta \geq (1-\delta) d(j, \tilde \SS^{(k)})$ and $\alpha^\star_\ell \geq \theta$, so $\alpha^\star_\ell \geq (1-\delta) d(j, \tilde \SS^{(k)})$, so we trivially have $r^\star_{jk} \leq r^\star_{\ell k}$.
    Else $r^\star_{jk} \leq d(i^\star, j)$, and the claim follows trivially as the right-hand side contains a term $d(i^\star, j)$.

    \textbf{Case 4.} If $j$ and $\ell$ are both inactive in the superset, we use again that 
    \[r_{jk}^\star \leq \max((1-\delta)d(j, \tilde \SS^{(k)}), d(i^\star, j)).\]
    If $r^\star_{jk} \leq (1-\delta) d(j, \tilde \SS^{(k)})$, we also note that $r^\star_{\ell k } \geq (1-\delta) d(\ell, \tilde \SS^{(k)})$, so
    \[r^\star_{jk} \leq (1-\delta)(d(j, i^\star) + d(\ell, i^\star) + d(\ell, \tilde \SS^{(k)})) \leq r^\star_{\ell k} + d(j, i^\star) + d(\ell, i^\star)\,.\]
    Else $r^\star_{jk} \leq d(i^\star, j)$, and the claim follows trivially as the right-hand side contains a term $d(i^\star, j)$.

\end{proof}

\subsection{Proof of Theorem~\ref{theorem:alg-lmp-robust}}
\label{subsec:lmp-proof-robust}
Before we prove the theorem, we will need the following gluing result that bounds the cost of the algorithm in terms of the upper bound on the factor-revealing program.

\begin{lemma}
\label{lem:dgood-cost-robust}
    Assume $\hat{f} > n \eta$ and let $\rho \geq 1.01 \gamma$ be an upper bound on the optimal value of the factor-revealing program in Definition~\ref{def:frp}.
    For some $i^\star \in \FF$, consider any subset of clients $\Dgood \subseteq D$ such that, for all $j \in \Dgood$, $\alpha_j^\star \geq \rho \cdot d(j, i^\star)$.
    Then
    \[\sum_{j \in \Dgood} \Paren{\sum_{i=s(j)}^{m-1} \beta(Z_{j,i}, R_{j,i+1}) + R_{j, m}} \leq \hat{f} + (\rho+6\delta) \sum_{j \in \Dgood} d(j, i^\star)\,.\]
\end{lemma}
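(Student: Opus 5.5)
The plan mirrors the proof of Lemma~\ref{lem:dgood-cost}, with $r^\star$ in place of $r$ and Lemma~\ref{lem:r-star-upper-bound} accounting for the extra $6\delta$ term. If $\Dgood=\emptyset$ the claim is trivial, so assume otherwise.

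\textbf{Step 1 (cost in terms of $r^\star$).} First apply Lemma~\ref{lem:r-star-upper-bound}. It bounds the left-hand side by
\[\sum_{j\in\Dgood}\Paren{\sum_{i\in\Dgood,\, i\ge j}\beta(r^\star_{j,i},r^\star_{j,q(i)})+r^\star_{j,n+1}}+6\delta\sum_{j\in\Dgood}d(j,i^\star).\]
It then remains to show that the first sum is at most $\hat f+\rho\sum_{j\in\Dgood}d(j,i^\star)$.

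\textbf{Step 2 (feasibility).} Reindex $\Dgood$ as $1,\dots,|\Dgood|$, preserving order, and send $n+1$ to $|\Dgood|+1$. Set the program variables as follows:
\begin{itemize}
\item the distances are $d_j=d(j,i^\star)$;
\item the facility cost is $f=\hat f$;
\item the $r$-variables are $r^\star_{jk}$, with the consecutive index pairs $(i,q(i))$ becoming consecutive indices.
\end{itemize}
On the diagonal we need $r^\star_{kk}=\alpha^\star_k$. This holds because $r^\star_{kk}=r_{kk}=\alpha^\star_k$ outside \specialevent. Inside \specialevent, for $k\ge\kfirst$, client $k$ is active with respect to $\tilde\SS^{(k)}$, so $\zeta_{kk}=\alpha^\star_k=r_{kk}$, and the min/max formula also gives $\alpha^\star_k$ because $\alpha^\star_k\ge\rho\, d(k,i^\star)\ge d(k,i^\star)$.

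With this identification, the constraints follow from the preceding lemmas:
\begin{itemize}
\item the facility constraint \eqref{eq:fr-facility} is Lemma~\ref{lem:fr-facility-robust};
\item monotonicity \eqref{eq:fr-r-mono} is Lemma~\ref{lem:fr-r-mono-robust};
\item the triangle constraint \eqref{eq:fr-triangle} is Lemma~\ref{lem:fr-triangle-robust};
\item the metric inequalities \eqref{eq:fr-r-metric} are Lemma~\ref{lem:fr-r-metric-robust};
\item nonnegativity holds by construction.
\end{itemize}
The only missing constraint is the normalization \eqref{eq:fr-norm}.

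\textbf{Step 3 (scaling).} Let $\dsum=\sum_{j\in\Dgood}d(j,i^\star)$. This is at least $1>0$ by \eqref{eq:dist-bound}. Every constraint other than \eqref{eq:fr-norm} is homogeneous, and $\beta$ is scale equivariant, so dividing all variables by $\dsum$ gives a fully feasible point. Since $\rho$ upper bounds the program value, that point has objective at most $\rho$. Multiplying back by $\dsum$ yields the required bound on the first sum, and combining with Step 1 gives the claimed inequality with constant $\rho+6\delta$.

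The main obstacle is the bookkeeping in Step 2. The lemmas are stated in the original indexing with $q(\cdot)$, and the diagonal entries $r^\star_{kk}$ must coincide with the $\alpha^\star_k$ that appear in Lemmas~\ref{lem:fr-facility-robust} and~\ref{lem:fr-triangle-robust}. That diagonal check under \specialevent\ is the only new point to verify; once it holds, the remaining steps are routine.
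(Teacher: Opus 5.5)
Your proposal is correct and follows the paper's proof essentially verbatim. You establish feasibility (apart from normalization) of the reindexed $r^\star$, $d(j,i^\star)$, $\hat f$ via Lemmas~\ref{lem:fr-facility-robust}--\ref{lem:fr-r-metric-robust}, rescale by $\dsum$, and use Lemma~\ref{lem:r-star-upper-bound} to absorb the $6\delta$ term; you apply that lemma first where the paper applies it last, which makes no difference. Your explicit check that $r^\star_{kk}=\alpha^\star_k$ under \specialevent{} fills in what the paper only asserts as an identification. The min/max branch yields $\alpha^\star_k$ regardless of $d(k,i^\star)$, since $r_{kk}=\zeta_{kk}=\alpha^\star_k$ there, and the case $k<\kfirst$ falls under the ``otherwise'' branch.
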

\begin{proof}
If $\Dgood = \emptyset$, the claim is trivial, so assume $\Dgood \neq \emptyset$.
We know that the variables $\{r^\star_{jk} \mid j \leq k, j \in \Dgood, k \in \Dgood \cup \{n + 1\}\}$, $\{d(j, i^\star) \mid j \in \Dgood\}$, $\hat f$ (see Section~\ref{sec:fr-feasiblity-robust}), with the clients reindexed to $1, \ldots, |\Dgood|$  in the same order as the original indices and $n+1$ reindexed to $|\Dgood|+1$, satisfy the constraints of the factor-revealing program in Definition~\ref{def:frp} (note that we identify $r^\star_{jj} = \alpha^\star_j$), except for the normalization constraint (\ref{eq:fr-norm}): this follows by Lemma~\ref{lem:fr-facility-robust}, Lemma~\ref{lem:fr-r-mono-robust}, Lemma~\ref{lem:fr-triangle-robust}, and Lemma~\ref{lem:fr-r-metric-robust}, and because all these variables are non-negative.

Let $\dsum = \sum_{j \in \Dgood} d(j, i^\star)$, which we know is strictly positive because $d(j, i) \geq 1$ for all $j \in D$, $i \in \FF$.
Then, because the factor-revealing program is scale-invariant (except for the normalization constraint), it follows that $\{r^\star_{jk} / \dsum \mid j \leq k, j \in \Dgood, k \in \Dgood \cup \{n + 1\}\}$, $\{d(j, i^\star) / \dsum \mid j \in \Dgood\}$, $\hat f / \dsum$ satisfy all the constraints of the factor-revealing program in Definition~\ref{def:frp}.

Let $q(i) = \min \Set{i' \mid i' > i, i' \in \Dgood \cup \{n+1\}}$.
Then the upper bound $\rho$ on the value of the factor revealing program implies that
\[\sum_{j \in \Dgood} \Paren{\sum_{\substack{i \in \Dgood\\i \geq j}} \beta\Paren{\frac{r^\star_{j,i}}{\dsum}, \frac{r^\star_{j,q(i)}}{\dsum}} + \frac{r^\star_{j,n+1}}{\dsum} } \leq \frac{\hat{f}}{\dsum} + \rho\,,\]
so multiplying by $\dsum$, 
\[\sum_{j \in \Dgood} \Paren{\sum_{\substack{i \in \Dgood\\i \geq j}} \beta\Paren{r^\star_{j,i}, r^\star_{j,q(i)}} + r^\star_{j,n+1}} \leq \hat{f} + \rho \sum_{j \in \Dgood} d(j, i^\star)\,.\]
Finally, by Lemma~\ref{lem:r-star-upper-bound}, the above also implies that
\[\sum_{j \in \Dgood} \Paren{\sum_{i=s(j)}^{m-1} \beta(Z_{j,i}, R_{j,i+1}) + R_{j, m}}  \leq \hat{f} + (\rho + 6 \delta) \cdot \sum_{j \in \Dgood} d(j, i^\star)\,.\]
\end{proof}

We are ready to prove the main theorem of this section.

\begin{proof}[Proof of Theorem~\ref{theorem:alg-lmp-robust}]
    Consider an optimal solution of cost $\OPT$ to the original problem with facility cost $f = \frac{\hat{f}}{\rho + 6 \delta}$, and let $k^\star$ denote the number of facilities opened in this solution.
    Let $\{i_1^\star, \ldots, i_{k^\star}^\star\}$ be the facilities opened by the optimal solution, and let $\{D_1^\star, \ldots, D_{k^\star}^\star\}$ such that $D_\ell^\star$ is the set of clients connected to facility $i_\ell^\star$ in the optimal solution.
    
    Let $C_{\mathrm{SOL}}$ and $F_{\mathrm{SOL}}$ be the connection cost and opening cost (with respect to $f$) of our algorithm, and let $C_{\OPT}$ and $F_{\OPT}$ be defined analogously for $\OPT$.
    By Lemma~\ref{lemma:alg-cost-robust}, which bounds the cost with respect to $\hat{f}$, and using that $f = \frac{\hat{f}}{\rho+6\delta}$, we know that 
    \[C_{\mathrm{SOL}} + (\rho + 6 \delta) \cdot F_{\mathrm{SOL}} \leq \sum_{j=1}^n \Paren{\sum_{i=s(j)}^{m-1} \beta(Z_{j,i}, R_{j,i+1}) + R_{j, m}} + \delta \cdot \sum_{j=1}^n d(j,\SS^\star) + |\Sr^\star| n \eta\,.\]
    Note that $\sum_{j=1}^n d(j, \SS^\star) = C_{\mathrm{SOL}}$, so we can rewrite
    \[(1-\delta) \cdot C_{\mathrm{SOL}} + (\rho + 6 \delta) \cdot F_{\mathrm{SOL}} \leq \sum_{j=1}^n \Paren{\sum_{i=s(j)}^{m-1} \beta(Z_{j,i}, R_{j,i+1}) + R_{j, m}} + |\Sr^\star| n \eta\,.\]

    Fix now some $\ell \in \{1, \ldots, k^\star\}$ and set $i^\star = i^\star_\ell$ and $D^\star = D^\star_{\ell}$.
    Define the subset of $D^\star$
    \[\Dgood = \Set{j \in D^\star \Bigg| \alpha^\star_j \geq \rho \cdot d(j, i^\star)}\,.\]
    For the clients in $\Dgood$, we have by Lemma~\ref{lem:dgood-cost-robust} that
    \[\sum_{j \in \Dgood} \Paren{\sum_{i=s(j)}^{m-1} \beta(Z_{j,i}, R_{j,i+1}) + R_{j, m}} \leq \hat{f} + (\rho+6\delta) \sum_{j \in \Dgood} d(j, i^\star)\,.\]
    Furthermore, for the clients $j \in D^\star \setminus \Dgood$, because $R_{j,s(j)} = \alpha_j^\star$, $R_{j,m} = \beta(R_{j,m}, 0)$, $Z_{j,i} \leq R_{j,i}$ for all valid $j,i$, and $R_{j,i} \geq R_{j,i+1}$ for all valid $j,i$, we have by the superadditivity of $\beta$ that
    \[\sum_{j \in D^\star \setminus \Dgood} \Paren{\sum_{i=s(j)}^{m-1} \beta(Z_{j,i}, R_{j,i+1}) + R_{j, m}} \leq \sum_{j \in D^\star \setminus \Dgood} \alpha^\star_j \leq \rho \sum_{j \in D^\star \setminus \Dgood} d(j, i^\star)\,.\]
    Overall, then,
    \[\sum_{j \in D^\star} \Paren{\sum_{i=s(j)}^{m-1} \beta(Z_{j,i}, R_{j,i+1}) + R_{j, m}} \leq \hat{f} + (\rho + 6 \delta) \sum_{j \in D^\star} d(j, i^\star)\,.\]
    
    Note that $D_1^\star, \ldots, D^\star_{k^\star}$ partition $[n]$. Then, summing the above over all $(i_{\ell}^\star, D^\star_{\ell})$,
    \begin{align*}
    \sum_{j=1}^n \Paren{\sum_{i=s(j)}^{m-1} \beta(Z_{j,i}, R_{j,i+1}) + R_{j, m}}
    &\leq k^\star \cdot \hat{f} + (\rho + 6 \delta) \cdot C_{\OPT}\\
    &= (\rho + 6 \delta) \cdot F_{\OPT} + (\rho+6\delta) \cdot C_{\OPT}\\
    &= (\rho + 6 \delta) \cdot \OPT\,.
    \end{align*}
    Hence, 
    \[(1-\delta) \cdot C_{\mathrm{SOL}} + (\rho + 6 \delta) \cdot F_{\mathrm{SOL}} \leq (\rho + 6 \delta) \cdot \OPT + |\Sr^\star| n \eta\,,\]
    so 
    \[C_{\mathrm{SOL}} \leq \frac{\rho + 6 \delta}{1-\delta} \Paren{\OPT - F_{\mathrm{SOL}}} + 2 |\Sr^\star| n \eta \leq (1+8\delta) \rho \Paren{\OPT - F_{\mathrm{SOL}}} + 2|\Sr^\star| n \eta\]
    for $\delta > 0$ smaller than $0.125$ and $\rho \geq 1$.
    Then the conclusion follows because $F_{\mathrm{SOL}} = |\SS^\star_{\mathrm{reg}}| \cdot f$.
\end{proof}

\end{document}